\newcommand{\Q}{\mathbb{Q}}
\newcommand{\Exp}{\mathrm{Exp}}
\newcommand{\Err}{\mathrm{Err}}
\newcommand{\convweak}{\overset{d}{\to}}
\newcommand{\vech}{{\rm vech}}
\newcommand{\Trunc}{{\rm Trunc}}
\newcommand{\PlugIn}{{\rm PlugIn}}
\newcommand{\TG}{{\rm TG}}
\newcommand{\Miscov}{{\rm PCMR}}
\newcommand{\sel}{{\rm sel}}
\newcommand{\pair}{{\rm pair}}
\newcommand{\STG}{{\rm STG}}
\newcommand{\carve}{{\rm carve}}
\newcommand{\PCER}{{\rm PCER}}
\newcommand{\appropto}{\mathrel{\vcenter{
			\offinterlineskip\halign{\hfil$##$\cr
				\propto\cr\noalign{\kern2pt}\sim\cr\noalign{\kern-2pt}}}}}
\title{ {\bf Inference for location and height of peaks of a standardized field after selection} }
\author{
	Alden Green$^{1}$, Jonathan Taylor$^{2}$\\[0.5em]
	$^{1}$Department of Applied Mathematics and Statistics, Johns Hopkins University\\
	$^{2}$Department of Statistics, Stanford University 
}
\begin{document}
	
	\maketitle
	\RaggedRight
	
	\begin{abstract}
		Peak inference concerns the use of local maxima (``peaks'') of a noisy random field to detect and localize regions where underlying signal is present. We propose a peak inference method that first subjects observed peaks to a significance test of the null hypothesis that no signal is present, and then uses the peaks that are declared significant to construct post-selectively valid confidence regions for the location and height of nearby true peaks. We analyze the performance of this method in a smooth signal plus constant variance noise model under a high-curvature asymptotic assumption, and prove that it asymptotically controls both the number of false discoveries, and the number of confidence regions that do not contain a true peak, relative to the number of points at which inference is conducted. An important intermediate theoretical result uses the Kac-Rice formula to derive a novel approximation to the intensity function of a point process that counts local maxima, which is second-order accurate under the alternative, nearby high-curvature true peaks. 
	\end{abstract}
	
	\section{Introduction}
	
	Peak inference~\citep{friston1991comparing,worsley1992three,chumbley2010topological,schwartzman2011multiple,cheng2017multiple} refers to a class
	of statistical methods designed for multiple testing problems with a spatial or temporal component, in which the goal is to detect and localize regions where an
	underlying signal is non-zero. This is a fundamental goal in neuroimaging analyses, where such regions correspond to areas of the brain that activate in response to an external stimulus. It is also of interest in scientific applications such as astroimaging~\citep{perone2004false} and climate modeling~\citep{sommerfeld2018confidence}.
	
	It has been argued that the classical approach of conducting
	inference simultaneously across all points is ill-suited for identifying regions that contain signal~\citep{chumbley2009false, chumbley2010topological}. In contrast, peak inference methods treat the data as discrete observations of an
	underlying smooth process, and conduct inference only at
	local maxima (``peaks'') of this process, rather than at every
	observed location or time point. Intuitively, the location of peaks identify promising areas where signal may be present,  while peak height estimates the strength of the signal which in many problems is a useful measure of effect size. The task of the statistician is to conduct formal inference based on the location and height of peaks. 
	
	Previous work on peak inference has primarily dealt with \emph{detection}: does a given peak, observed in the data,  indicate the presence of underlying signal? This detection problem is well-studied
	in the neuroimaging literature, with previous work showing how to calibrate tests conducted at
	peaks to control the FWER~\citep{friston1991comparing,worsley1992three,taylor2007detecting} or FDR~\citep{chumbley2010topological,schwartzman2011multiple,cheng2017multiple} under a null hypothesis that no true signal is present at an observed peak. The statistical foundation of this work is Random Field Theory (RFT)~\citep{adler2010applications},
	which provides a mathematical framework to quantify the probability
	of observing peaks of a certain height, or clusters of a certain size,
	under the null. These methods use the Kac-Rice formula
	\citep{rice1945mathematical,adler2007random} applied to critical points of centered smooth
	Gaussian and related processes (which of course include peaks, i.e. local maxima).
	
	Having using peaks to identify regions likely to contain underlying signal, an obvious next question is \emph{localization}: where exactly is the signal? In this article, we consider this problem in a smooth signal plus constant variance noise
	model in which there is a well-defined notion of ground truth: the
	local maxima of the signal, which we will call \emph{true
		peaks}. While one can imagine different instantiations of the localization task, we will take the perspective that these true peaks are (perhaps implicitly) the targets of a peak inference method. This is the same perspective as adopted by~\citet{davenport2022confidence}, but they consider a setting where all observed peaks basically correspond to true peaks, and there is no need for selection. We consider the more realistic setting where peaks must be screened to control the number of false discoveries. Concretely, we 
		look to use only those peaks declared significant by a peak detection method  to estimate and construct confidence regions for the
		location and height of nearby true peaks.

	%Broadly speaking, peak inference \blue{(References)} refers to a class
	%of statistical methods designed for multiple testing problems that
	%have a spatial or temporal component in which the units of inference are
	%loosely thought of as peaks in an underlying signal corrupted by noise. Informally, in
	%a detection setting, researchers
	%must make decisions about the presence or absence of peaks based on features
	%of the noisy observations. As a result, it is natural to think of discoveries (both
	%true and false) to be peaks themselves. The detection problem is well-studied
	%in the neuroimaging literature, starting with the work of Friston and Worsley \blue{References}
	%and co-authors.	
	
	\subsection{Summary of contributions}
	
	We propose a straightforward, three-step procedure for peak detection followed by
	localization. The first two steps constitute a peak
	detection algorithm in the spirit of existing RFT methods~\citep{cheng2017multiple}. First, the
	random field is thresholded at a preliminary level to identify a
	candidate set of interesting peaks. Second, a formal significance
	test, calibrated using an asymptotic RFT approximation, is conducted
	at each peak larger than this threshold, to control for the probability of false
	discovery. 
	
	The third and novel step is the localization step: for each
	observed peak at which the null is rejected, we produce a confidence ellipse for
	the location and a confidence interval for the height of a nearby
	true peak. Of course, for these confidence regions to be valid one must account for the fact that inference is conducted only at peaks that pass the detection step. We account for this by using the conditional distribution of peaks given selection to calibrate our inferences, so that this localization step is an example of post-selection inference~\citep{taylor2015statistical},
	
	Our primary contribution
	is to provide rigorous theoretical guarantees for each step in this
	procedure, and for the procedure as a whole. We work within a smooth signal plus constant variance noise model and derive
	our results in an asymptotic regime where the signal's curvature
	around its maxima grows large. Our main results can be summarized as follows:
	\begin{itemize}
		\item \emph{Null discovery control}. The peak detection portion of our method controls the asymptotic probability of falsely discovering a peak at which no signal is present.
		\item \emph{Consistency of point estimates}. Most peaks at which inference is conducted are either rejected by our peak detection significance test, or in fact asymptotically consistently estimate a true peak.
		\item \emph{Conditional coverage}. Each confidence region produced
		by our method attains asymptotically nominal
		coverage, conditional on the event of selection. 
		\item \emph{Marginal coverage}.  Finally, we introduce and analyze an overall miscoverage
		rate -- the expected number of regions that fail to cover the location or height of a nearby true peak, relative to the total number of points at which inference is conducted -- and show that our method controls miscoverage
		at the nominal level, providing a guarantee on the marginal
		performance of our procedure.
	\end{itemize}
	Figure~\ref{fig:illustration} illustrates our method, and the notion of overall miscoverage it asymptotically controls, in a cartoon example. 
	
		\begin{figure}[htbp]
		\centering
		\begin{subfigure}[t]{0.44\linewidth}
			\centering
			\includegraphics[width=\linewidth]{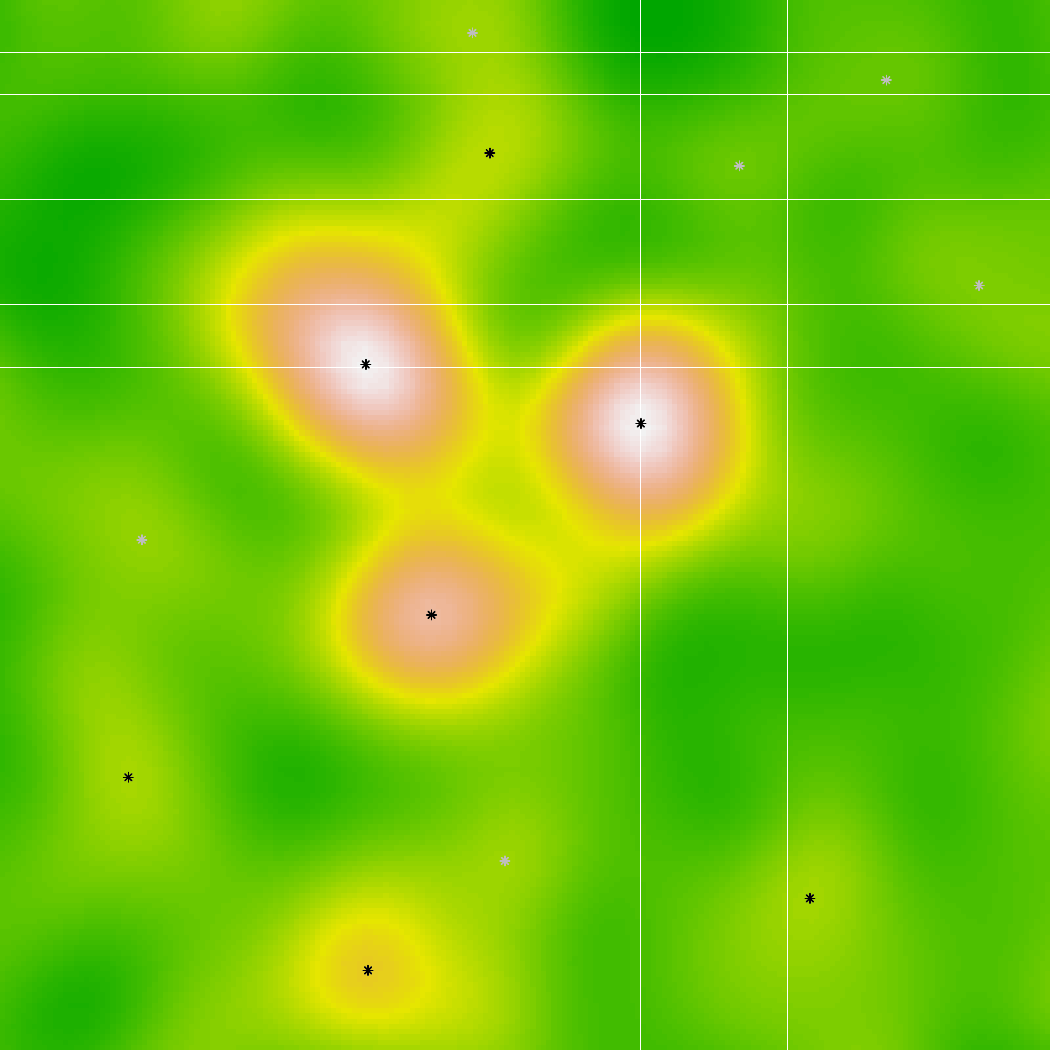}
			\caption{Field and pre-thresholded peaks.}
			\label{fig:illustration-1}
		\end{subfigure}\hfill
		\begin{subfigure}[t]{0.44\linewidth}
			\centering
			\includegraphics[width=\linewidth]{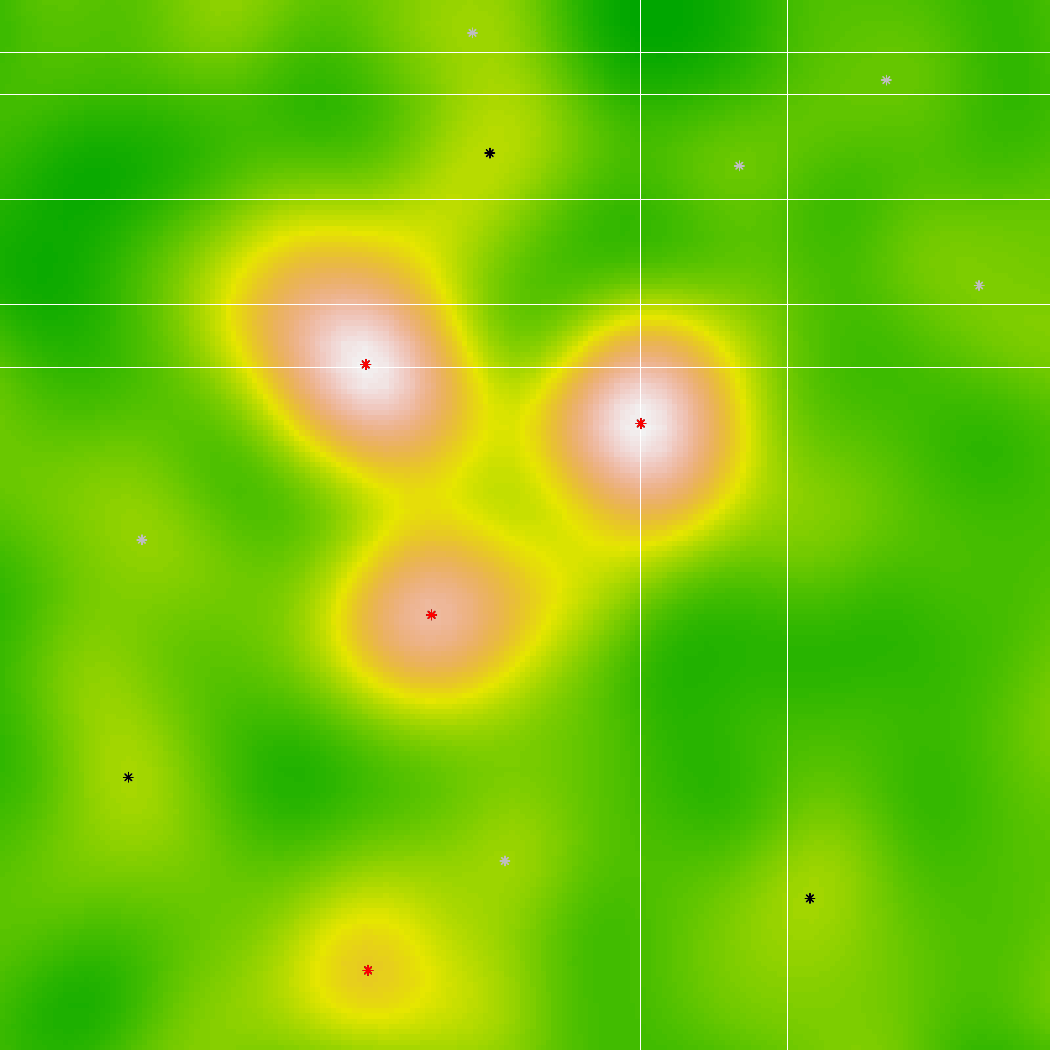}
			\caption{Peaks discovered by TG test.}
			\label{fig:illustration-2}
		\end{subfigure}\hfill
		
		\begin{subfigure}[t]{0.44\linewidth}
			\centering
			\includegraphics[width=\linewidth]{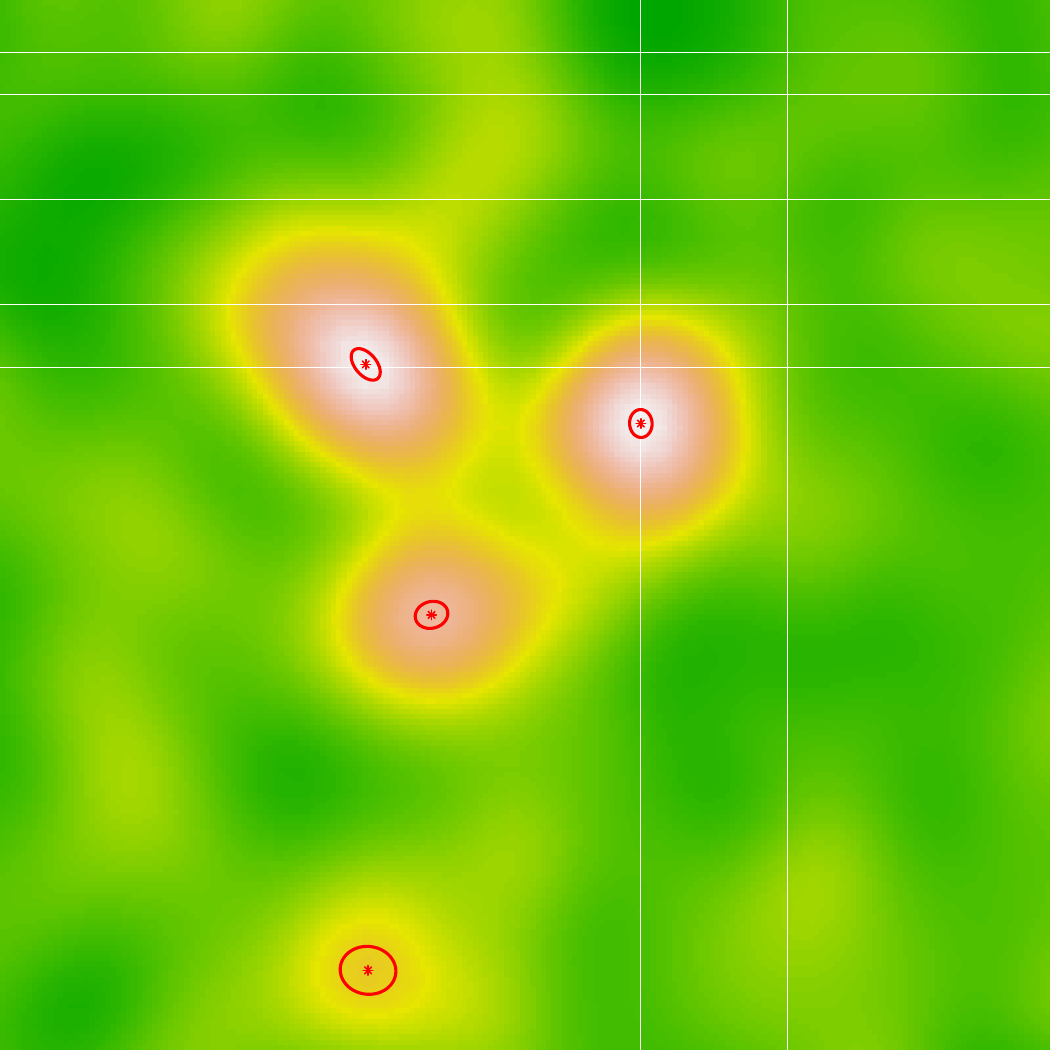}
			\caption{Confidence regions for location.}
			\label{fig:illustration-4}
		\end{subfigure}\hfill
		\begin{subfigure}[t]{0.44\linewidth}
			\centering
			\includegraphics[width=\linewidth]{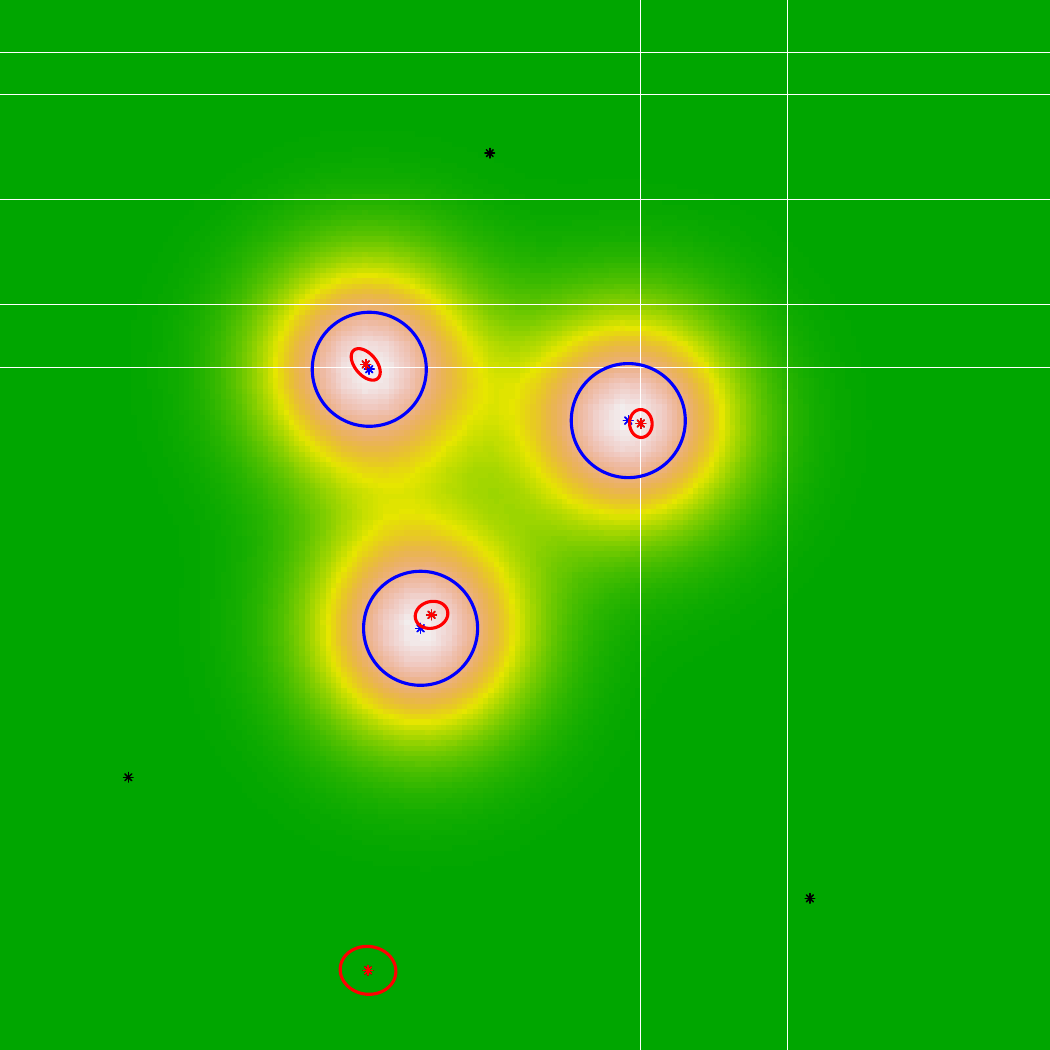}
			\caption{Confidence regions superimposed on the signal, with $\epsilon_n$ balls around true peaks.}
			\label{fig:illustration-5}
		\end{subfigure}
		
		\caption{ {\small Illustration of our method for peak detection and localization. The top left, top right and bottom left panels show Steps 1-3 of our method, respectively. The bottom right panel superimposes confidence ellipsoids for location over the true signal, which is sparse with 3 true peaks. Out of $13$ observed peaks (colored grey), $7$ survive the initial pre-thresholding step (black), of which $4$ are declared significant (red). One of these is a false discovery, in the sense of not being within distance $\varepsilon_n$ (defined in Section~\ref{subsec:peak-estimation}) of any true peak. Out of the remaining $3$ true discoveries, one of the corresponding confidence ellipsoids fails to cover the nearby true peak. Thus the miscoverage proportion for location -- the number of confidence regions that do not cover a true peak, out of the total number of points at which inference is conducted -- is $2/7$.} }
		\label{fig:illustration}
	\end{figure}
	
	These theoretical guarantees do not require that peaks be detected with asymptotic power one, and so they apply even in the challenging (but realistic) case where a true peak may or may not be discovered from data. In this regime, peaks that are discovered will be subject to selection pressure.  For instance, the height of a peak which survives a preliminary thresholding step will be a biased estimate for the value of the signal at that point~\citep{davenport2020selective}. Indeed, our significance test rejects the null only if the height is above some larger cutoff, introducing further bias. We will show that selection also affects the distribuition of the location, though in a more subtle way. Overall, our analysis precisely quantifies the various effects of selection, and our methods carefully construct confidence regions that \emph{account for selection pressure} in order to achieve the desired conditional and marginal (mis)coverage.
	
	In neuroimaging applications effect sizes are often weak and inferential methods derived from limiting approximations may have poor behavior in practice~(\cite{davenport2022confidence}, see also the discussion at the end of \cite{worsley2007RFT}). To address this, previous work on signal detection using RFT has derived corrections to asymptotic distributions of maxima that achieve higher-order accuracy under the null~\citep{taylor2005validity,taylor2007detecting,cheng2015distribution,cheng2017multiple}. In this work, we develop \emph{second-order accurate} approximations to post-selection peak distributions, under a smooth, high-curvature alternative. We demonstrate via simulation that the corrections used to achieve asymptotic second-order accuracy lead to improvements in empirical performance.
		
	Finally, although our theoretical guarantees account for selection pressure, when the probability of selection is small and selection pressure is strong, the post-selection peak inference problem becomes inherently challenging. This leads to a breakdown in coverage for peak location, and very wide intervals for peak height. To address this, we propose \emph{randomized peak inference}, which applies peak thresholding and detection to a randomized version of the data, and then uses the full un-randomized data for post-selection inference. Inspired by data splitting~\citep{cox1975note,kriegeskorte2009circular,kriegeskorte2010everything}, this randomization ensures that some information is always left for post-selection inference of peaks. We demonstrate empirically that under strong selection pressure, randomization improves coverage for a true peak's location, and leads to smaller intervals for the true peak's height.
	
	\subsection{Previous work}
	
	There is a long line of existing literature on signal
	detection using maxima, with tests calibrated via RFT~(e.g. \citet{friston1991comparing,worsley1992three,worsley1996unified,worsley2004unified,taylor2007detecting}), but this work does not typically consider what happens under the alternative, nearby true peaks. Some notable exceptions include~\citet{schwartzman2011multiple,cheng2017multiple} who consider
	FDR analysis of the {detection problem} described above, and~\citet{davenport2022confidence} who construct confidence regions for the location of true peaks
	in a regime where detection is essentially deterministic. Our work 
	addresses what lies between these two scenarios: \emph{how should
		we localize signals when detection is non-trivial?}
		
	We properly handle this intermediate case by conditioning on the event that a peak is detected, thus placing our problem squarely within the modern tradition of post-selection, or conditional, inference (e.g. \citet{lockhart2014significance,taylor2015statistical,lee2016exact,tibshirani2016exact}). Our problem is different from the ones considered in these previous works in a number of ways, but a particularly important difference is that our smooth signal plus noise model is not an exponential family, and as a result, the effect of selection on the distribution of peaks is in certain ways quite different than the selection effect in more well-studied post-selection inference problems. 
	
	To elaborate on this point slightly, consider the file drawer problem which served
	as a canonical post-selection inference problem in~\citet{fithian2014optimal}. In the file drawer problem, researchers decide to
	report a confidence interval for a population mean $\theta$ only if a sample mean $\bar{X}$ exceeds some threshold $t$. This problem has a close feel to inference for the height of a true peak $\mu_{t^*}$, given that a nearby observed peak $\hat{t}$ has height $Y_{\hat{t}}$ exceeding a significance cutoff $u$. (This notation will be formally defined in Section~\ref{sec:peak-detection-and-inference}.) However there no obvious analogy to inference for peak location
	in the file drawer problem. Indeed, we will see that the limiting distribution of the
	location is Gaussian, in contrast to the usual Truncated Gaussian distributions
	that appear in the file-drawer problem, and (in one form or another) in most previous work on post-selection inference. Even for the height, the distribution of $Y_{\hat{t}}$ is only asymptotically Truncated Gaussian, and higher-order asymptotic accuracy requires corrections that have no counterpart in the file drawer problem.

	Our approach adapts the well-studied Kac-Rice formula~\blue{(Rice 1945; Adler and Taylor 2007)} for the intensity of  local
	maxima of null smooth Gaussian processes to include the possibility of a growing signal.  Concretely, we develop a local expansion of this intensity nearby true peaks, and subsequently use this to obtain a second-order accurate approximation to the conditional distribution of observed peaks. Many connections can be drawn between these results and the classical higher-order theory of maximum likelihood estimation. In particular, our asymptotic expansions will be reminiscent of a number of classical higher-order accurate formulae for the limiting distribution of a maximum likelihood estimator~\citep{barndorff1979edgeworth,cox1980local,barndorff1983formula,skovgaard1985second,skovgaard1990density}. Moreover, the limiting variance of peak location in our problem depends on the observed curvature near the peak, and the difference between this observed curvature and its marginal expectation depends on the strength of the local selection
	effect. The relevance of observed curvature suggests connections to
	the literature on 2nd order efficiency and statistical curvature~\citep{efron1975defining,efron1978assessing}. We comment on these connections in more detail in Section~\ref{sec:peak-intensity-local-expansion}. 
	
	Finally, there have been some recent proposals for signal localization via formal statistical inference that do not target peaks or use asymptotic approximations to peak distributions.~\citet{sommerfeld2018confidence,bowring2019spatial,bowring2021confidence,maullin-sapey2024spatial} use RFT to conduct asymptotic inference for upper-level sets rather than peaks, while~\citet{spector2025controlled,gablenz2025catch} consider the same general problem of signal localization, but propose different methods that are not based on RFT.

	\subsection{Outline}
	In Section~\ref{sec:peak-detection-and-inference} we propose a method for peak detection and post-selection inference derived using tools from RFT, and summarize the theoretical properties of the method. Sections~\ref{sec:peak-intensity-local-expansion}-\ref{sec:peak-distribution} contain our main results. In Section~\ref{sec:peak-intensity-local-expansion} we give an asymptotic expansion of the distribution of local maxima in the neighborhood of true peaks. In Section~\ref{sec:peak-estimation} this result is used to establish that local maxima selected by our peak detection tend to successfully localize signal, by consistently estimating a true peak. In Section~\ref{sec:peak-distribution} we use the asymptotic distribution of local maxima to derive confidence regions for the true height and location of a nearby peak. Section~\ref{sec:randomized-peak-inference} proposes the randomized variant of our method, which has better asymptotic coverage for location and narrower confidence intervals for height. Section~\ref{sec:experiments} contains experiments with simulated data, and we conclude with some discussion in Section~\ref{sec:discussion}.
	
	\section{Peak detection and post-selection inference via Random Field Theory}
	\label{sec:peak-detection-and-inference}
	We begin with a concrete proposal for peak detection and localization via post-selection inference in a smooth signal plus constant variance noise model. In explaining this proposal we also provide a roadmap for the theory that follows, and give informal statements of our main theoretical results along with pointers to their formal counterparts.
	
	\subsection{Signal-plus-noise model}
	\label{subsec:model}
	Consider a sequence $(Y_n)_{n \in \mathbb{N}}$ of random fields, each distributed according to 
	\begin{equation}
		\label{eqn:model}
		Y_{n,t} = \mu_{n,t} + \epsilon_{n,t}, \quad t \in \R^d,
	\end{equation}
	where the signal $\mu_n \in C^2(\Rd)$, and the noise $\epsilon_n$ is a Gaussian process with mean zero and covariance kernel $\Cov[\epsilon_{n,t}, \epsilon_{n,s}] = K(t,s)$ not depending on $n$. We observe the restriction of $Y_{n,t}$ to a compact domain $\mc{T}_n \subset \Rd$, which is assumed to have non-empty interior. The \emph{true peaks} of the signal are strict interior local maxima of $\mu_n$,
	\begin{equation*}
		T^*_n = \{t \in \mathrm{int}(\mc{T}_n): \nabla \mu_{n,t} = 0,\nabla^2 \mu_{n,t} \prec 0\}.
	\end{equation*}
	Our targets of inference are the locations $T_n^*$ and heights $\mu_{T_{n}^*} = \{\mu_{n,t^*}: t^* \in T_n^*\}$ of true peaks. Throughout, we will assume that $K(\cdot,\cdot)$ is known and that the field has been standardized, meaning $K(t,t) = 1$ for all $t \in \Rd$. We will also assume that $K(\cdot,\cdot) \in C^5(\Rd \times \Rd)$, which -- combined with the smoothness assumptions placed on the signal --  implies that the field $Y_n$ is almost surely $C^2(\mc{T}_n)$. Furthermore, to avoid pathological situations we will always assume that
	\begin{equation}
		\label{eqn:non-degenerate}
		\begin{aligned}
			& \sigma_{1}^2 := \inf_{t \in \Rd} \lambda_{\min}(\Lambda_t) > 0, \quad \textrm{where} \quad \Lambda_t := \Cov[\nabla \epsilon_t], \quad \textrm{and} \\
			& \tau_{2}^2 :=  \inf_{t \in \Rd} \inf_{x \in \S^{d - 1}} \Var[x'\nabla^2 \epsilon_t x|\epsilon_t,\nabla \epsilon_t] > 0.
		\end{aligned}
	\end{equation}
	As a result $\epsilon_n$ is almost surely nowhere flat, and so almost surely there are a finite number of observed peaks within $\mc{T}_n$.  Moreover, the Hessian $\nabla^2 \epsilon_{n,t}$ is not a deterministic function of $\epsilon_{n,t},\nabla \epsilon_{n,t}$. Throughout we will drop notational dependence on $n$ whenever convenient.
	
	Our theoretical results are asymptotic in nature, and rely on the curvature of $\mu$ growing around true peaks $t^* \in T^*$ as $n \to \infty$. One measure of the curvature of $\mu$ at a peak $t^* \in T^*$ is the minimum eigenvalue of the negative Hessian $\lambda_{t^*} := \lambda_{\min}(-\nabla^2\mu_{t^* })$. We denote $\delta_{t^*} := 1/\lambda_{t^*}$ and assume throughout that $\delta_n := \sup_{t^* \in T^*} \delta_{t^*} \to 0$. We allow for the domain $\mc{T}_n$ to grow with $n$, since this is sometimes a natural asymptotic assumption in applications of RFT, but all of our asymptotic results hold even if $\mc{T}_n$ is fixed.
	
	It is worth making several comments on this model and asymptotic setup. Models such as the smooth signal plus noise model are typical in both theory and application of peak inference~\citep{schwartzman2011multiple,cheng2017multiple,davenport2022confidence}. For example, in neuroimaging applications it is accepted that neighboring voxels are spatially autocorrelated~\citep{worsley1992three,friston1996detecting}, and that signal is spatially smooth~\citep{chumbley2009false} and moreover additional smoothness is often introduced by a pre-processing step that smooths the data in order to increase SNR~\citep{worsley1992three,smith2009threshold}. When the data are pre-processed by smoothing, the mean function $\mu$ -- and hence the targets of inference $T^*$ and $\mu_{T^*}$ --  can depend on the smoothing algorithm. Henceforth we simply accept this, and do not attempt to infer on some underlying ground-truth; for some work that addresses how smoothing affects the type I error of peak detection methods, see~\citet{schwartzman2011multiple}.  
	
	Model~\eqref{eqn:model} restricts the noise distribution to be marginally stationary, since the variance of $\epsilon_{t}$ does not depend on $t$. Constant error variance is a common assumption in applications of RFT, where inferential procedures are often applied to standardized or ``test statistic'' fields~\citep{worsley1992three,siegmund1995testing,worsley1996unified,schwartzman2011multiple}. Most of the calculations that follow can be generalized to allow for non-constant error variance. However the resulting expressions are more complex and many of the implications for peak estimation and inference will change, and so we defer this to future work. 
	
	 There are two important concrete examples where the high-curvature asymptotic Assumption~\ref{asmp:curvature-asymptotics} is satisfied. The first is \emph{i.i.d asymptotics}, in which $Y_n = \frac{1}{\sqrt{n}}\sum_{i = 1}^{n}Y^{(i)}$ is $\sqrt{n}$ times the sample mean of $n$ i.i.d random fields $Y^{(i)} \sim N(\mu^{(1)},C)$ defined on a fixed domain $\mc{T}$, so that if $\min_{t^* \in T^*} \lambda_{\min}(-\nabla^2\mu^{(1)}_{t^*}) > c > 0$ then $\delta_n \to 0$ at rate $1/\sqrt{n}$. The second is \emph{strong signal asymptotics}, where $\mu_n = \sum_{j} a_{j,n} \phi_{t_j^*}(t)$ is modeled as a superposition of unimodal functions, each  $\phi_{t_j^*}(\cdot) \in C^2(\Rd)$ being compactly supported with a unique local maximum at $t_j^{*}$, and $\inf_{j} a_{j,n} \to \infty$~\citep{schwartzman2011multiple,cheng2017multiple}. In this case $\delta_n \to 0$ at rate $1/(\inf_{j} a_{j,n})$. 
	
	\subsection{Peak detection via the Truncated Gaussian test}
	\label{subsec:peak-detection-via-TG-test}
	The first high-level objective of peak inference is \emph{detection}, which can be formalized as a test of the null hypothesis $$H_{0,t}: \mu_s = 0 \; \textrm{at all} \; s \in B(t,r) \; \textrm{for some} \;r > 0.$$ 
	\citet{schwartzman2011multiple,cheng2017multiple} propose tests of $H_{0,t}$ that use RFT to derive p-values which properly account for selection of peaks $t \in \wh{T} = \{t \in T: \nabla Y_t = 0, \nabla^2 Y_t \prec 0\}$. We consider a slight variant of the method of~\citet{cheng2017multiple} that we call \emph{peak detection via the Truncated Gaussian (TG) test}. Our method first screens peaks via a thresholding step, restricting attention to
	\begin{equation}
		\label{eqn:selected-peaks}
		\wh{T}_{v} = \{t \in \mc{T}: Y_t > v, \nabla Y_t = 0,\nabla^2 Y_t \prec 0\}.
	\end{equation}
	Pre-thresholding in this manner is common in neuroimaging analyses~\citep{chumbley2009false,zhang2009cluster,chumbley2010topological,cheng2017multiple}, in part because it helps identify only those signals of practical, rather than purely statistical, significance.
	
	The next step of the method is to perform a hypothesis test at each pre-thresholded peak, rejecting $H_{0,t}$ if the value of the field $Y_t > u$ for some significance threshold $u > v$. This test is calibrated using RFT, which provides an exact formula -- given later in~\eqref{eqn:density-height-null} -- for the conditional density of $Y_t|t \in \wh{T}_v$ under $H_{0,t}$. Asymptotic analysis of this formula shows that as the pre-threshold $v \to \infty$, this conditional distribution approaches a Truncated Gaussian limit with a shifted mean, having asymptotic survival function\footnote{More precisely, if $H_{0,t}$ is correct then $|\P(Y_{t} > u|t \in \wh{T_v}) - \S_v(u,d/v)| = O(1/v^2)$; see Lemma~\ref{lem:survival-height-null}.}
	\begin{equation}
		\label{eqn:tg-null}
		\S_v(u,d/v) := \frac{\Psi(u - d/v)}{\Psi(v - d/v)}.
	\end{equation}
	Thus we calibrate the TG test by setting the threshold $u$ equal to the $(1 - \alpha)$th quantile of this distribution: that is, we set $u = u_{\TG}(\alpha,v)$ where $\S_v(u_{\TG}(\alpha,v),d/v) = \alpha$. Our first result states that this procedure asymptotically controls the \emph{null per-comparison error rate (null-PCER)}\footnote{Throughout the notation $N(S)$ refers to the cardinality of a set $S$.}
	\begin{equation}
		\label{eqn:null-pcer}
		\PCER_0(u,v) := \frac{\E[N(t \in \wh{T}_{u}: \textrm{$H_{0,t}$ is true} )]}{\E[N(\wh{T}_{v})]}.
	\end{equation}
	In words, this is the expected number of peaks at which the null is falsely rejected, over the expected number of peaks at which a hypothesis test is conducted.
	\begin{result}[Proposition~\ref{prop:null-false-positive-rate}, informal]
		\label{result:null-fpr}
		Fix $\alpha \in (0,1)$ and let $v_n \to \infty$. Under the regularity conditions of Section~\ref{subsec:assumptions},
		\begin{equation*}
			\lim_{n \to \infty}{\rm PCER}_0(u_{\TG}(\alpha,v_n),v_n) \leq \alpha.
		\end{equation*}
	\end{result}
	In fact the rate of convergence is second-order: $\PCER_0(u_{\TG}(\alpha,v_n),v_n) \leq \alpha + O(v_n^{-2})$. 
	
	To be clear,~\citet{schwartzman2011multiple,cheng2017multiple} have already shown that peak detection via RFT
	can successfully control type I error in testing $H_{0,t}$, building (in the latter case) on~\citet{cheng2015distribution}, who study approximations to the normalized overshoot distribution of $Y_t|t \in \wh{T}_v$ under high-threshold asymptotics. These approximations are both more accurate and more complicated than~\eqref{eqn:tg-null}.
	The novelty in Result~\ref{result:null-fpr} comes from recognizing that second-order accuracy can be obtained by applying a simple mean-shift (the $d/v$ term in~\eqref{eqn:tg-null}) in the Truncated Gaussian reference distribution. 
	
	PCER is less stringent than FWER or FDR, which are the other typical criteria used in calibrating peak detection methods. We believe that a suitable choice of significance cutoff $u$ would control FWER or FDR, and that this could be shown using analysis similar to that of~\citet{schwartzman2011multiple,cheng2017multiple}, but do not pursue this direction further. 
	
	\subsection{Peak estimation and rates of convergence}
	\label{subsec:peak-estimation}
	The output of a peak detection method is a set of \emph{discoveries} $\wh{T}_u = \{t \in \wh{T}_v: Y_t > u\}$. Discoveries can be used to estimate the location of true peaks, most basically by viewing $\wh{T}_u$ themselves as point estimates of true peaks. To measure the accuracy of these estimates, we introduce the
	\emph{$\varepsilon$-per-comparison error rate}
	($\varepsilon$-PCER): 
	\begin{equation}
		\label{eqn:epsilon-pcer}
		{\rm PCER}(\varepsilon,u,v) := \frac{\E[N(t \in \wh{T}_u: {\rm dist}(t,T^*) > \varepsilon)]}{\E[N(\wh{T}_v)]}.
	\end{equation}  
	In words, this is the expected number of points
	declared significant despite being distance $\varepsilon$
	from any true peak, over the expected number of points at which inference is conducted. The $\varepsilon$-PCER is a qualitatively different criterion than null-PCER that is stricter when $\varepsilon$ is small. Nevertheless, our second result shows that if the reciprocal curvature $\delta_n \to 0$, then peak detection via the TG test controls $\varepsilon$-PCER at rate $\varepsilon_n = \wt{O}(\delta_n)$.
	
	% To show that peak detection via the TG test controls $\varepsilon$-PCER, we use the high-curvature asymptotics $\delta_n \to 0$ -- along with additional regularity conditions stated in Sections~\ref{subsec:assumptions} and \ref{sec:peak-estimation} -- to deduce that in non-null regions the number of $\wh{T}_v$ nearby $T^*$ will be much larger than the number $\wh{T}_v$ far away from $T^*$ (see Propositions~\ref{prop:expectation-counting-process} and~\ref{prop:high-gradient-peaks}). Along with our upper bound on null PCER, this implies control of $\varepsilon_n$-PCER in an asymptotic setup where the threshold $v_n \to \infty$ as $n \to \infty$, the inverse curvature $\delta_n \to 0$, and $\varepsilon_n \to 0$ as $n \to \infty$. 
	\begin{result}[Theorem~\ref{thm:epsilon-fpr}, informal]
		\label{result:epsilon-fpr}
		Fix $\alpha \in (0,1)$, let $v_n \to \infty$ and define $\varepsilon_n = \delta_n\sqrt{6 \sigma_1^2 \log \delta_n^{-1}}$. Under the signal-plus-noise model of Section~\ref{subsec:model} with high-curvature asymptotics $\delta_n \to 0$, and assuming the regularity conditions of Section~\ref{subsec:assumptions} and~\ref{sec:peak-estimation}, 
		\begin{equation*}
			\lim_{n \to \infty} {\rm PCER}(\varepsilon_n,u_{\TG}(\alpha,v_n),v_n) \leq \alpha.
		\end{equation*}
	\end{result}
	% Assuming $v_n \to \infty$ and setting $u = u_{\TG}(\alpha,v_n)$ means that the TG test controls null type I error, as we have just discussed. The high-curvature asymptotics $\delta_n \to 0$ -- along with additional regularity conditions stated in Sections~\ref{subsec:assumptions} and \ref{sec:peak-estimation} -- implies that in non-null regions the number of observed peaks close to true peaks will be much larger than the number of observed peaks far away from true peaks (see Propositions~\ref{prop:expectation-counting-process} and~\ref{prop:high-gradient-peaks}). Combined, these observations lead to Result~\ref{result:epsilon-fpr}.
	
	In other words, most peaks selected by pre-thresholding are either rejected by the TG test, or in fact consistently estimate a true peak $t^* \in T^*$ at an $\varepsilon_n$-rate of convergence. We will refer to such discoveries as \emph{$\varepsilon_n$-consistent discoveries}, and say that a true peak $t^*$ is \emph{$\varepsilon_n$-consistently discovered} by an observed peak $\hat{t}$ (or more colloquially, that $t^*$ is ``nearby'' $\hat{t}$) if the distance between them is at most $\varepsilon_n$.
	
	\subsection{Post-selection inference for height and location}
	\label{sec:informal:inference}
	The final objective of a peak inference procedure is to \emph{localize} the signal by producing confidence regions for the location of a nearby true peak, if one happens to exist. To achieve this, we derive the asymptotic distribution of $\hat{t} \in \wh{T}_u$, given that it is the unique $\varepsilon_n$-consistent discovery of a particular $t^* \in T^*$. Since we are also interested in constructing confidence intervals for the height $\mu_{t^*}$, we also examine the distribution of $\wh{Y} := Y_{\hat{t}}$ under the same event. In this analysis, RFT again plays a central role: we calculate these limiting distributions through asymptotic expansion of an exact Kac-Rice formula~\citep{rice1945mathematical,adler2007random} for the peak intensity function.

    The informal results stated below rely on some key quantities used throughout. Let $H_{t|y}, G_{t|y}$ denote the symmetric, positive semi-definite matrices
 	\begin{equation}
 		\label{eqn:deterministic-hessian}
 		H_{t|y} :=  -\nabla^2 \mu_t + K_{21}(t,t)(\Lambda_t^{-1}\nabla\mu_t) + (y - \mu_t)\Lambda_t , \quad G_{t|y} := (-\nabla^2 \mu_{t})\Lambda_{t}^{-1}(-\nabla^2\mu_{t}) + (y - \mu_{t})(-\nabla^2\mu_{t})
 	\end{equation} 
        with $K_{21}(t,t)$ denoting the covariance\footnote{For readers familiar with
        RFT in the non-stationary setting as in~\citet{adler2007random}, these terms describes the Levi-Civita
        connection of the metric induced by the random field $\epsilon_t$. Thus, the first two terms in $H_{t|y}$
        are nothing but the Riemannian Hessian of $\mu$ at $t$. We have decided to work in fixed coordinates
        here to ease exposition somewhat.} between $\nabla^2  Y_t$ and $\nabla Y_t$. Notice that at points $t^* \in T^*$ the gradient $\nabla \mu_{t^*}  = 0$ and so $H_{t^*|y} = -\nabla^2 \mu_{t^*} + (y - \mu_{t^*}) \Lambda_{t^*}$. 	Finally, we define $\bar{u}_{t^*} := \max\{u,\mu_t^*\}, \bar{H}_{t^*} = H_{t^*|\bar{u}_{t^*}}$ and $\bar{G}_{t^*} := G_{t^*|\bar{u}_{t^*}}$.
 
		The quantities $u_{t^*}$ and $\bar{H}_{t^*}$ have natural interpretations: roughly speaking, $\bar{u}_{t^*}$ is the asymptotic equivalent of $\wh{Y}$ conditional on $\hat{t} \in \wh{T}_u$ consistently discovering $t^*$;  while $\bar{H}_{t^*}$ is the asymptotic equivalent of the Hessian $-\nabla^2 Y_t$ conditional on the same event. (See Section~\ref{subsec:pf-pivot-hessian} for a more precise claim to this effect.)

	\begin{result}[Theorem~\ref{thm:approximate-joint-distribution}, informal]
		\label{result:approximate-joint-distribution}

		Under the smooth signal-plus-noise model of Section~\ref{subsec:model} with high-curvature asymptotics $\delta_n \to 0$, and the regularity conditions of Section~\ref{subsec:assumptions}, given that $\hat{t} \in \wh{T}_u$ is the unique $\varepsilon_n$-consistent discovery of $t^* \in T^*$, the following approximations hold as $n \to \infty$:
		\begin{itemize}
			\item The conditional density of the height $\wh{Y}$ is about
			\begin{equation*}
				p(y) \approx \1(y > u) \cdot \frac{1}{\sqrt{2\pi}} \frac{\exp\big(-\frac{1}{2}(y - \mu_{t^*} - \frac{1}{2}\tr(\bar{H}_{t^*}^{-1}\Lambda_{t^*}))^2\big) }{\Psi\big(u - \mu_{t^*} - \frac{1}{2}\tr(\bar{H}_{t^*}^{-1}\Lambda_{t^*})\big)}.
			\end{equation*}
			\item The conditional density of the location $\hat{t}$ given $\wh{Y} = y$ is about
			\begin{equation*}
				p(t|y) \approx \frac{\sqrt{\det(G_{t^*|y})}}{\sqrt{(2\pi)^d}} \cdot \exp\Big(-\frac{1}{2}(t - t)'G_{t^*|y} (t - t^*)\Big).
			\end{equation*}
		\end{itemize}
	\end{result}
	The densities $p(y)$ and $p(t|y)$ are examples of post-selection distributions, since they condition (among other things) on the event that a peak $\hat{t} \in \wh{T}$ nearby $t^*$ is also a discovery, i.e. that it has height $\wh{Y} > u$. In each case, the approximations are accurate up to second-order relative error. We defer a more detailed interpretation of these post-selection distributions to Section~\ref{subsec:selection-effect}, where we compare them to the marginal distribution of the height and location of a peak $\hat{t} \in \wh{T}$ nearby $t^*$, which does not condition on selection. This comparison reveals several effects of selection, some of which (to the best of our knowledge) have not previously been observed in work on post-selection inference. 
	
	On a more practical level, we can construct asymptotically pivotal quantites and valid confidence regions by taking the limiting distributions in Result~\ref{result:approximate-joint-distribution} and plugging in estimates for nuisance parameters. Concretely, to infer the height $\mu_{t^*}$ we propose to use the \emph{Truncated Gaussian (TG)} pivot
	\begin{equation}
		\label{eqn:tg-pivot}
		\wh{\S}_{\mu_{t^*}}(\wh{Y}) := \frac{\Psi(\wh{Y} - \mu_{t^*} - \frac{1}{2}\tr(\wh{H}^{-1}\Lambda_{\hat{t}}))}{\Psi(u - \mu_{t^*} - \frac{1}{2}\tr(\wh{H}^{-1}\Lambda_{\hat{t}}))},
	\end{equation}
	and to infer the location $t^*$ we propose to use the (studentized)~\emph{Wald} pivot 
	\begin{equation}
		\label{eqn:wald-pivot}
		\wh{W}_{t^*}(\hat{t}) = (\hat{t} - t^*)'\wh{H} \Lambda_{\hat{t}}^{-1} \wh{H}(\hat{t} - t^*),
	\end{equation}
	where we have written $\wh{H} = -\nabla^2 Y_{\hat{t}}$ for the Hessian of the field at $\hat{t}$. The TG pivot is asymptotically pivotal, and the Wald pivot is also asymptotically pivotal provided that $u - \mu_{t^*}$ -- which controls the probability of discovery -- not be too large.
	\begin{result}[Theorem~\ref{thm:pivot}, informal]
		\label{result:pivot}
		Under the conditions of Result~\ref{result:approximate-joint-distribution},
		$
		\wh{\S}_{\mu}(\wh{Y}) \convweak {\rm Unif}(0,1).
		$
		If additionally $(u - \mu_{t^*})_{+}\delta_n \to 0$, then
		$
		\wh{W}_{t^*}(\hat{t}) \convweak \chi_d^2.
		$
	\end{result}
	As an immediate consequence, the confidence regions
	\begin{equation}
		\label{eqn:confidence-regions}
		I_{\hat{t}} := \big\{\mu \in \R: \frac{\alpha}{2} \leq \wh{\S}_{\mu}(\wh{Y}) \leq 1 - \frac{\alpha}{2}\big\}, \quad C_{\hat{t}} := \big\{t \in \R^d: \wh{W}_{t}(\hat{t}) \leq q_{\chi_d^2}(1 - \alpha)\big\},
	\end{equation}
	have asymptotically nominal conditional coverage: given that $\hat{t}$ is the unique $\varepsilon_n$-consistent discovery of $t^*$, $C_{\hat{t}}$ will contain $t^*$, and $I_{\hat{t}}$ will contain $\mu_{t^*}$, each with probability approaching $(1 - \alpha)$. Notice that achieving conditional coverage is conceptually distinct from controlling either null- or $\varepsilon$-PCER, and that conditional coverage can be achieved even when $u \neq u_{\TG}$.

	\subsection{Overall coverage guarantees}
	In addition to providing conditional coverage, our overall method for peak detection and localization by post-selection inference -- summarized for the reader's convenience in Algorithm~\ref{alg:post-selective-inference-after-peak-detection-via-TG-test} -- also controls a marginal notion of miscoverage, which we call the \emph{per-comparison miscoverage rate} (PCMR). To define PCMR let $t^*(t) := \argmin_{t^* \in T^*} \|t - t^*\|$ denote the true peak closest to $t \in \mc{T}$: then the PCMR for location and height respectively is
	\begin{equation}
		\label{eqn:miscoverage}
		{\rm PCMR}_{T^*}(u,v) := \frac{\E[N(t \in \wh{T}_{u}: t^*(t) \not\in C_t)]}{\E[N(\wh{T}_v)]}, \quad\textrm{and}\quad {\rm PCMR}_{\mu_{T^*}}(u,v) := \frac{\E[N(t \in \wh{T}_{u}: \mu_{t^*(t)} \not\in I_t)]}{\E[N(\wh{T}_v)]}.
	\end{equation} 
	In other words, ${\rm PCMR}_{T^*}$ is the expected number of confidence regions which fail to contain the location of a nearby true peak, over the expected number of points at which inference is conducte, and an analogous interpretation holds for ${\rm PCMR}_{\mu_{T^*}}(u,v)$.
	
	\begin{result}[Theorem~\ref{thm:miscoverage}, informal]
		\label{result:miscoverage}
		Under the conditions of Result~\ref{result:epsilon-fpr}, 
		\begin{equation}
			\lim_{n \to \infty }{\rm PCMR}_{\mu_{T^*}}(u_{\TG}(\alpha,v_n),v_n) \leq \alpha. 
		\end{equation}
		If additionally $\sup_{t^* \in T^*}(u - \mu_{t^*})_{+}\delta_n \to 0$, then
		\begin{equation}
			\lim_{n \to \infty }{\rm PCMR}_{{T^*}}(u_{\TG}(\alpha,v_n),v_n) \leq \alpha. 
		\end{equation}
	\end{result}
	These overall bounds on marginal miscoverage are implied by the various results described above. The logic is simple. Among {null selections}, only an $100\alpha\%$ (in expectation) will be declared significant. Among $\varepsilon_n$-consistent discoveries, only a $100\alpha\%$ (in expectation) will produce confidence regions that do not cover the truth. And under suitable regularity conditions the contribution of all other discoveries is negligible. Thus, the overall PCMR is at most $\alpha$.
	
	\begin{algorithm}[t]
		\caption{Peak detection and localization via post-selection inference}
		\label{alg:post-selective-inference-after-peak-detection-via-TG-test}
		\begin{algorithmic}[1]
			\REQUIRE Field $Y$, gradient covariance $\Cov[\nabla \epsilon_t]$, pre-threshold $v \in \mathbb{R}$, significance level $\alpha \in (0,1)$.
			\STATE Select peaks $\wh{T}_v$ by pre-thresholding.
			\STATE At each peak $t \in \wh{T}_v$, test the null hypothesis $H_{0,t}$ using the TG test, making discoveries $\wh{T}_{u_{\TG}(\alpha,v)}$.
			\STATE For each significant peak $\hat{t} \in \wh{T}_{u_{\TG}(\alpha,v)}$, return a confidence interval $I_{{\hat{t}}}$ for the height, and a confidence ellipsoid $C_{\hat{t}}$ for the location, of a nearby true peak.
		\end{algorithmic}
	\end{algorithm}
	
	\section{Local expansion of intensity via the Kac-Rice formula}
	\label{sec:peak-intensity-local-expansion}
	Our various criteria for error are stated in terms of expectations involving the point process
	\begin{equation*}
		N_{\mc{I}}(\mc{B}) = N(t \in \wh{T} \cap \mc{B}: Y_t \in \mc{I}),
	\end{equation*}
	which counts the number of local maxima in $\mc{B} \subseteq \mc{T}$ with height in $\mc{I} \subseteq \R$. The main result of this section, Theorem~\ref{thm:approximate-joint-intensity}, establishes an approximation $\bar{\rho}(t,y)$ of the intensity $\rho(t,y)$ of this point process that is locally accurate around true peaks $t^* \in T^*$ up to second-order relative error. This approximation is subsequently used to derive both the asymptotics of $\varepsilon_n$-PCER (in Section~\ref{sec:peak-estimation}), and the asymptotic post-selection distributions summarized in Result~\ref{result:approximate-joint-distribution} (in Section~\ref{sec:peak-distribution}). Thus our guarantees on consistent peak estimation, and conditional and marginal asymptotic miscoverage, all ultimately flow from Theorem~\ref{thm:approximate-joint-intensity}.
	
	\subsection{Preliminaries}
	\label{subsec:assumptions}
	
	\paragraph{Regularity conditions and asymptotics.}
	In addition to the basic signal plus noise model of Section~\ref{subsec:model}, we will work under certain regularity conditions. These conditions assert that the signal is smooth and well-conditioned in some fixed radius ball around each true peak, and that all peaks belong strictly to the interior of $\mc{T}$. 
	\begin{assumption}
		\label{asmp:signal-holder}
		Define $\mc{T}_{r}^* := \bigcup_{t^* \in T^*} B_d(t^*,r)$. There exists a constant $c_1 > 0$ such that the mean $\mu \in C^{4}(\mc{T}_{c_1}^*)$ for all $n \in \mathbb{N}$.
	\end{assumption} 
	\begin{assumption}
		\label{asmp:covariance-holder}
		The covariance $K(\cdot,\cdot) \in C^{5}(\Rd \times \Rd)$. Moreover, there exists a constant $C_1 < \infty$ such that $\sup_{t \in \Rd} \|K(t,\cdot)\|_{C_5(\Rd)} < C_1$. 
	\end{assumption}
	\begin{assumption}
		\label{asmp:well-conditioned}
		There exists a constant $c_2 > 0$ and a constant $C_2 < \infty$ such that for all $n \in \mathbb{N}$,
		\begin{equation}
			\label{eqn:well-conditioned}
			\sup_{t^* \in T^*} \lambda_{\max}(-\nabla^2 \mu_{t^*}) \cdot \delta_{t^*} \leq C_2, \quad \sup_{t^* \in T^*} \sup_{t \in B(t^*,c_2)} (\|\nabla^3 \mu_{t}\|, \|\nabla^4\mu_t\|) \cdot \delta_{t^*}  \leq C_2.
		\end{equation}
	\end{assumption}
	\begin{assumption}
		\label{asmp:signal-boundary}
		There exists a constant $c_3 > 0$ such that ${\rm dist}(T^{\ast},\partial \mc{T}) > c_3$ for all $n \in \mathbb{N}$.
	\end{assumption}
	Additionally, as a reminder, our local expansions will be asymptotically accurate as the curvature of $\mu$ grows around peaks $t^*$.
	\begin{assumption}
		\label{asmp:curvature-asymptotics}
		Let $\delta_n := \sup_{t^* \in T^*} \delta_{t^*}$. Assume $\delta_n \to 0$.
	\end{assumption}
	Assumptions~\ref{asmp:signal-holder}-\ref{asmp:curvature-asymptotics}
	are comparable to classical assumptions made in the asymptotic
	analysis of M-estimators~\citep{vandervaart2000asymptotic}, but they allow
	for the signal to have multiple peaks and for the domain
	$\mc{T}$ to change with $n$. The amount of smoothness assumed
	on $\mu$ and $C$ in Assumptions~\ref{asmp:signal-holder}
	and~\ref{asmp:covariance-holder} is used to obtain
	second-order accuracy. Assumption~\ref{asmp:well-conditioned}
	ensures that $\mu$ is well-conditioned in a neighborhood of
	each peak. Note that an implication of
	Assumption~\ref{asmp:well-conditioned} is that all true peaks
	are well-separated, meaning there exists some constant $c >
	0$ such that $\inf_{t^* \neq s^*: t^*, s^* \in T^*} \|t^* - s^*\| > c$.
	
	Finally, in order to more concisely state our upper bound on the relative error between $\bar{\rho}(t,y)$ and $\rho(t, y)$, we make an additional technical assumption.
	\begin{assumption}
		\label{asmp:hessian-curvatures}
		There exists a constant $C_3 < \infty$ such that for all $n \in \mathbb{N}$, 
		\begin{equation}
			\label{eqn:hessian-curvatures}
			\sup_{t^* \in T^*} \lambda_{\max}\big(\bar{H}_{t^*}\big)\delta_n \leq C_3.
		\end{equation}
	\end{assumption}
	This assumption is made strictly for ease of exposition, since it means that the single asymptotic parameter $\delta_n$ will ultimately govern asymptotic error. Note that~\eqref{eqn:hessian-curvatures} is violated if $\sup_{t^* \in T^*} (u - \mu_{t^*})\delta_n \to \infty$ i.e. if the threshold is too large relative to the signal; however the TG test would be extremely unlikely to detect such a peak in the first place. We emphasize that all subsequent local expansions are correct even if Assumption~\ref{asmp:hessian-curvatures} does not hold, the upper bounds on error simply become more complicated. 
	\paragraph{Notation.}
	It will be useful to introduce some notation to more compactly write the local expansion $\bar{\rho}(t,y)$. This local expansion will depend in part on higher-order derivatives of $K(s,t)$. These are encoded in the tensors $K_{ij}(s,t) := \Cov[\nabla^i\epsilon_s,\nabla^j\epsilon_t]$, which (written in Euclidean coordinates) are $(i + j)$-arrays, having entries $[K_{ij}(s,t)]_{ab} = \Cov[D^aY_s,D^bY_t]$ for multi-indices $a,b$.  We use the following compact notation for array-vector composition: if $A$ is a $k$-array and $v \in \Rd$ is a vector, then $A(v)$ will denote the $(k - 1)$-array with entries
	\begin{equation*}
		A(v)_{a_1,\ldots,a_{k - 1}} = \sum_{j = 1}^{d} A_{a_1,\ldots,a_{k - 1},j} v_j,
	\end{equation*}
	Additionally, $A(v,v) = (A(v))(v)$, $A(v,v,v) = ((A(v))(v))(v)$, and so forth. We let $\dot{A}_t$ denote the derivative of an array-valued mapping $A_t: \mc{T} \to \R^{d \times d \times \cdots \times d}$, which has entries $[\dot{A}_t]_{i_1i_2\dots i_k m} = [D^m A_t]_{i_1i_2\dots i_k}$. 
        
We will use $C$ to represent a constant that may change from line to line, and that may depend on the covariance $K(t,s)$ and the dimension $d$ (and sometimes, on the level $\alpha$ used in our inferential procedures), but that does not depend on $n$. We will use the phrase ``for all $n$ sufficiently large'' to mean ``for all $n \geq C$ where $C$ is constant that may depend on $K(t,s), d$ and $\alpha$.'' Finally, for convenience we will let $c_0 = \min\{c_1,c_2,c_3\}$ and $C_0 = \max\{C_1,C_2,C_3,C_4\}$. 
	
	\subsection{Kac-Rice formula}
	In developing an approximation to the intensity of $N_u(\mc{B})$, our starting place will be the Kac-Rice theorem as stated in~\citet{adler2007random}, which is valid under the smoothness conditions of Section~\ref{subsec:model} and gives the exact formula 
	\begin{equation}
		\label{eqn:kac-rice}
		\E[N_u(\mc{B})] = \int_{\mc{B}} \E[\det(-\nabla^2 Y_t) \cdot \1(Y_t > u) \cdot \1(\nabla^2 Y_t \prec 0)|\nabla Y_t = 0,Y_t = y] \cdot f_{\nabla Y_t}(0) \,dt.
	\end{equation}
	Here we have adopted the notational shorthand $N_u(\mc{B}) = N_{(u,\infty)}(\mc{B})$, and write $f_{\nabla Y_t}(\cdot)$ for the Gaussian density of $\nabla Y_t$. Since we are interested in the joint distribution of location and height, it is useful to rewrite this as $\E[N_u(\mc{B})] = \int_{\mc{B}} \int_{u}^{\infty} \rho(t,y) \,dy \,dt$ where the intensity function
	\begin{equation}
		\begin{aligned}
			\label{eqn:kac-rice-intensity}
			\rho(t,y) 
			& =  \E[\det(-\nabla^2 Y_t) \cdot \1(\nabla^2 Y_t \prec 0)|\nabla Y_t = 0,Y_t = y] \cdot f_{\nabla Y_t,Y_t}(0,y)  \\
			& =  \E[\det(-\nabla^2 Y_t) \cdot \1(\nabla^2 Y_t \prec 0)|\nabla Y_t = 0,Y_t = y] \cdot f_{\nabla Y_t}(0) \cdot f_{Y_t}(y),
		\end{aligned}
	\end{equation}
	and in the second line we have used the assumption that $\epsilon_t$ has constant variance -- and thus $\Cov[\nabla \epsilon_t,\epsilon_t] = 0$ and $\nabla \epsilon_t$ and $\epsilon_t$ are independent -- to factorize $f_{\nabla Y_t, Y_t}(0,y) = f_{\nabla Y_t}(0) \cdot f_{Y_t}(y)$. 
	\subsection{Local expansion of intensity}
	Our approximation $\bar{\rho}(t,y)$ to $\rho(t,y)$ is a second-order accurate local expansion of~\eqref{eqn:kac-rice-intensity} around a true peak $t = t^*$, and around the limiting height $y = \bar{u}_{t^*}$.%\footnote{We use the phrase ``local expansion'' as an alternative to ``Taylor expansion,'' to mean that we take only those terms in a first-order Taylor expansion that are of a larger magnitude than the second-order remainder term.} 
	To define the distance at which this expansion is valid, we introduce the notation
	\begin{equation}
		\label{eqn:local-parameters}
		\varepsilon_n = \delta_n\sqrt{6\sigma_1^2\log(\lambda_n)}, \; \Delta_n := \sqrt{6\log(\lambda_n)},
	\end{equation}
	where we recall $\lambda_n = 1/\delta_n$. (The choice of constant $6$ in the definitions of $\varepsilon_n$ and $\Delta_n$ is arbitrary, and the results should continue to hold for any constant $>4$.) The first-order terms in the local expansion are
	\begin{equation}
		\label{eqn:intensity-first-order}
		\begin{aligned}
			T_{01}^{\rho}(y) & := T_{01}^{\det}(y) = (y - \bar{u}_{t^*})\tr\big(\bar{H}_{t^*}^{-1}\Lambda_{t^*}) \\
			T_{10}^{\rho}(h) & := T_{10}^{\det}(h) + T_{10}^{\nabla Y}(h)= \tr\big(\bar{H}_{t^*}^{-1} \dot{H}_{t^*|\bar{u}_{t^*}}(h)\big) - \frac{1}{2}\tr\big(\Lambda_{t^*}^{-1}\dot{\Lambda}_{t^*}(h)\big)  \\
			T_{30}^{\rho}(h) & := T_{30}^{\nabla Y}(h)= h' \nabla^2 \mu_{t^*} \Lambda_{t^*}^{-1} \{\nabla^3\mu_{t^*}(h,h)\} - h'\nabla^2 \mu_{t^*} \Lambda_{t^*}^{-1}\{\dot{\Lambda}_{t^*}(\Lambda_{t^*}^{-1}h)\} h\\ 
			T_{21}^{\rho}(h,y) & := T_{21}^{Y}(h,y) = \frac{1}{2}(y - \bar{u}_{t^*})h'\nabla^2\mu_{t^*}h,
		\end{aligned}
	\end{equation}
	where $\dot{H}_{t^*|\bar{u}_{t^*}}(h) = -\nabla^3\mu_{t^*}(h) + (\bar{u} - \mu_{t^*})\dot{\Lambda}_{t^*}(h) - \Gamma_{t^*}(\nabla^2\mu_{t^*}(h))$ denotes the derivative of the mapping $t \mapsto H_{t|\bar{u}_{t^*}}$ evaluated at $t = t^*$. The multi-index notation in~\eqref{eqn:intensity-first-order} suggests terms in a Taylor expansion, while the superscripts indicate correspondence to terms in~\eqref{eqn:kac-rice}. The relative error incurred by the approximation is determined by
	\begin{equation*}
		\begin{aligned}
			\Err_{Y}(h,y) & := C\Big(\big(|y - \bar{u}_{t^*}| + |\bar{u}_{t^*} - \mu_{t^*}|\big) \lambda_n \|h\|^3 + \lambda_n^2\|h\|^4 + |y - \bar{u}_{t^*}|^2 \lambda_n^2\|h\|^4\Big) \\
			\Err_{\det}(h,y) & := C\Big(\big(|y - \bar{u}_{t^*}|^2 + 1\big)\delta_{n}^{2} + \|h\|^2\Big) \\
			\Err_{\nabla Y}(h) & := C\Big(\|h\|^2 + \|h\|^4 \lambda_n^2\Big).
		\end{aligned}
	\end{equation*}
	\begin{theorem}
		\label{thm:approximate-joint-intensity}
		Under Assumptions~\ref{asmp:signal-holder}-\ref{asmp:hessian-curvatures}, for all $n \in \mathbb{N}$ sufficiently large the following statement holds: at all $t^* \in T^*$, $h \in B_{d}(0,\varepsilon_n), y \in \bar{u}_{t^*} \pm \Delta_n$,
		\begin{align*}
			\frac{\Big|\rho(t^* + h,y) - \bar{\rho}(t^* + h,y)\Big|}{\bar{\rho}(t^* + h,y)} \leq C\Big(\Err_{\det}(h,y) + \Err_{Y}(h,y) + \Err_{\nabla Y}(h)\Big) := \Err_{\rho}(h,y).
		\end{align*}
		where
		\begin{equation}
			\begin{aligned}
				\label{eqn:approximate-joint-intensity}
				\bar{\rho}(t^* + h,y) & := \\
				\1(y > u) & \cdot  \frac{\det(\bar{H}_{t^*})\big(1 + T_{01}^{\rho}(y) + T_{10}^{\rho}(h) - \frac{1}{2}T_{30}^{\rho}(h) + T_{21}^{\rho}(h,y)\big)}{\sqrt{(2\pi)^{d + 1}\det(\Lambda_{t^*})}} \exp\Big(-\frac{(y - \mu_{t^*})^2}{2}\Big) \cdot \exp\bigg(-\frac{h' \bar{G}_{t^*}h}{2}\bigg).
			\end{aligned}
		\end{equation}	
	\end{theorem}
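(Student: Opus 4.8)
The starting point is the exact Kac--Rice intensity~\eqref{eqn:kac-rice-intensity}, which by the constant-variance factorization already recorded there splits into three factors: the conditional expected determinant $A(t,y) := \E[\det(-\nabla^2 Y_t)\,\1(-\nabla^2 Y_t \succ 0)|\nabla Y_t = 0, Y_t = y]$, the gradient density $B(t) := f_{\nabla Y_t}(0)$, and the height density $D(t,y) := f_{Y_t}(y)$. I would write the target $\bar\rho$ of~\eqref{eqn:approximate-joint-intensity} as a matching product $\bar\rho = \bar A\,\bar B\,\bar D$ and prove a factor-by-factor relative bound, $|A-\bar A|/\bar A \le C\,\Err_{\det}$, $|B-\bar B|/\bar B \le C\,\Err_{\nabla Y}$, $|D-\bar D|/\bar D \le C\,\Err_{Y}$, uniformly over $t = t^*+h$ with $t^*\in T^*$, $\|h\|\le\varepsilon_n$ and $|y-\bar{u}_{t^*}|\le\Delta_n$. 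Since on this set each such relative error, and each of the retained prefactor corrections $T_{01}^\rho, T_{10}^\rho, -\tfrac12 T_{30}^\rho, T_{21}^\rho$, is $o(1)$ (indeed $O(\delta_n^2\,\mathrm{polylog}\,\lambda_n)$ and $O(\delta_n\,\mathrm{polylog}\,\lambda_n)$ respectively, by Assumptions~\ref{asmp:well-conditioned} and~\ref{asmp:hessian-curvatures}), the product of the three expansions collapses to $\bar\rho$ with all cross-terms absorbed into the error, which gives the theorem.

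\textbf{The two density factors.}
These are the routine part. Constant variance gives $D(t,y) = (2\pi)^{-1/2}\exp(-\tfrac12(y-\mu_t)^2)$ and $B(t) = (2\pi)^{-d/2}(\det\Lambda_t)^{-1/2}\exp(-\tfrac12\nabla\mu_t'\Lambda_t^{-1}\nabla\mu_t)$. Writing $t = t^*+h$ and using $\nabla\mu_{t^*}=0$, I would Taylor expand $\mu_{t^*+h} = \mu_{t^*}+\tfrac12 h'\nabla^2\mu_{t^*}h + O(\lambda_n\|h\|^3)$, $\nabla\mu_{t^*+h} = \nabla^2\mu_{t^*}h + \tfrac12\nabla^3\mu_{t^*}(h,h) + O(\lambda_n\|h\|^3)$, $\Lambda_{t^*+h}^{-1} = \Lambda_{t^*}^{-1} - \Lambda_{t^*}^{-1}\dot{\Lambda}_{t^*}(h)\Lambda_{t^*}^{-1} + O(\|h\|^2)$, and $(\det\Lambda_{t^*+h})^{-1/2} = (\det\Lambda_{t^*})^{-1/2}(1 - \tfrac12\tr(\Lambda_{t^*}^{-1}\dot{\Lambda}_{t^*}(h)) + O(\|h\|^2))$, all constants uniform by Assumptions~\ref{asmp:signal-holder}--\ref{asmp:covariance-holder}. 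Substituting, the quadratic-in-$h$ part of $-\tfrac12\nabla\mu_t'\Lambda_t^{-1}\nabla\mu_t$, namely $-\tfrac12 h'(-\nabla^2\mu_{t^*})\Lambda_{t^*}^{-1}(-\nabla^2\mu_{t^*})h$, combines with the part of $-\tfrac12(y-\mu_t)^2$ that uses $\bar{u}_{t^*}$ in place of $y$, namely $-\tfrac12(\bar{u}_{t^*}-\mu_{t^*})h'(-\nabla^2\mu_{t^*})h$, into exactly $-\tfrac12 h'\bar{G}_{t^*}h$; this is kept in exponential form because on $\|h\|\le\varepsilon_n$ it is of order $\log\lambda_n$, hence not small. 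The genuinely small pieces are pulled out as first-order prefactor corrections: the leftover $+\tfrac12(y-\bar{u}_{t^*})h'\nabla^2\mu_{t^*}h = T_{21}^\rho(h,y)$ from $-\tfrac12(y-\mu_t)^2$, the cubic-in-$h$ part of $-\tfrac12\nabla\mu_t'\Lambda_t^{-1}\nabla\mu_t$ which equals $-\tfrac12 T_{30}^\rho(h)$, and the Jacobian contribution $-\tfrac12\tr(\Lambda_{t^*}^{-1}\dot{\Lambda}_{t^*}(h)) = T_{10}^{\nabla Y}(h)$. Everything discarded here --- quadratic-in-$h$ corrections to $(\det\Lambda_t)^{-1/2}$, quartic-in-$h$ exponent terms, $|y-\mu_{t^*}|$ times cubic Taylor remainders of $\mu$, and the squares of the retained corrections --- is of order $\delta_n^2\,\mathrm{polylog}\,\lambda_n$ and is controlled by $\Err_{\nabla Y}(h) = C(\|h\|^2+\lambda_n^2\|h\|^4)$ and $\Err_{Y}(h,y) = C((|y-\bar{u}_{t^*}|+|\bar{u}_{t^*}-\mu_{t^*}|)\lambda_n\|h\|^3+\lambda_n^2\|h\|^4+|y-\bar{u}_{t^*}|^2\lambda_n^2\|h\|^4)$.

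\textbf{The determinant factor --- the main obstacle.}
This is the substantive step. First one identifies the conditional law: given $\nabla Y_t=0, Y_t=y$, the Hessian $\nabla^2 Y_t$ is Gaussian with mean $\nabla^2\mu_t + \Cov[\nabla^2\epsilon_t,\epsilon_t](y-\mu_t) - K_{21}(t,t)\Lambda_t^{-1}\nabla\mu_t$ and covariance $\Var[\nabla^2\epsilon_t|\epsilon_t,\nabla\epsilon_t]$; differentiating $K(t,t)\equiv 1$ twice yields the constant-variance identity $\Cov[\nabla^2\epsilon_t,\epsilon_t] = -\Lambda_t$, so the conditional mean of $-\nabla^2 Y_t$ is precisely $H_{t|y}$, while the conditional covariance is nondegenerate and $O(1)$ (uniformly, by~\eqref{eqn:non-degenerate} and Assumption~\ref{asmp:covariance-holder}) and, being Gaussian, does not depend on $y$. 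Thus $-\nabla^2 Y_t = H_{t|y} + Z$ with $Z$ a mean-zero $O(1)$ Gaussian matrix, whereas $H_{t^*+h|y}$ has every eigenvalue $\asymp\lambda_n$ on the relevant set: bounded below because $\lambda_{\min}(H_{t^*|y})\ge\lambda_{t^*}-\Delta_n\lambda_{\max}(\Lambda_{t^*})\gtrsim\lambda_n$ and $\|\dot{H}_{t^*|\bar{u}_{t^*}}(h)\| \lesssim \lambda_n\varepsilon_n = o(\lambda_n)$, and above precisely because of Assumption~\ref{asmp:hessian-curvatures} --- which is what lets the single parameter $\delta_n$ govern the error. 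Two consequences follow. (i) $\P(-\nabla^2 Y_t\not\succ 0|\nabla Y_t=0,Y_t=y) \le \P(\|Z\|\gtrsim\lambda_n)$ is super-polynomially small, so dropping the cone indicator changes $A(t,y)$ negligibly relative to $\det(\bar{H}_{t^*})\asymp\lambda_n^d$. (ii) The Wick/Isserlis expansion of $\E[\det(H_{t|y}+Z)]$ --- a sum over even-size index subsets $S$ of the determinant of the $(d-|S|)$-principal submatrix of $H_{t|y}$ times a $(|S|/2)$-fold product of entries of the $O(1)$ conditional covariance --- has leading term $\det(H_{t|y})\gtrsim\lambda_n^d$ with every further term smaller by a factor $\lambda_n^{-2}$, whence $A(t,y) = \det(H_{t|y})(1+O(\delta_n^2(1+(y-\bar{u}_{t^*})^2)))$. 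Finally I would Taylor expand $\det(H_{t^*+h|y})$ about $(h,y)=(0,\bar{u}_{t^*})$: from $H_{t^*+h|y} = \bar{H}_{t^*} + (y-\bar{u}_{t^*})\Lambda_{t^*} + \dot{H}_{t^*|\bar{u}_{t^*}}(h) + O(\lambda_n\|h\|^2)$ and $\det(\bar{H}_{t^*}+E) = \det(\bar{H}_{t^*})\det(I+\bar{H}_{t^*}^{-1}E)$ with $\|\bar{H}_{t^*}^{-1}E\| = o(1)$, one reads off the linear-in-$y$ term $T_{01}^\rho(y)=(y-\bar{u}_{t^*})\tr(\bar{H}_{t^*}^{-1}\Lambda_{t^*})$ and the linear-in-$h$ term $T_{10}^{\det}(h) = \tr(\bar{H}_{t^*}^{-1}\dot{H}_{t^*|\bar{u}_{t^*}}(h))$, with quadratic-in-$h$ remainder $O(\|h\|^2)$ and quadratic-in-$y$ remainder $O((y-\bar{u}_{t^*})^2\delta_n^2)$ --- exactly the content of $\Err_{\det}(h,y)=C((|y-\bar{u}_{t^*}|^2+1)\delta_n^2+\|h\|^2)$.

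\textbf{Assembling.}
Multiplying $\bar{A}\,\bar{B}\,\bar{D}$ produces the leading constant $\det(\bar{H}_{t^*})/\sqrt{(2\pi)^{d+1}\det\Lambda_{t^*}}$, the Gaussian exponentials $\exp(-\tfrac12(y-\mu_{t^*})^2)$ and $\exp(-\tfrac12 h'\bar{G}_{t^*}h)$, and the product $(1+T_{01}^\rho)(1+T_{10}^{\det}+T_{10}^{\nabla Y})(1-\tfrac12 T_{30}^\rho)(1+T_{21}^\rho) = 1+T_{01}^\rho+T_{10}^\rho-\tfrac12 T_{30}^\rho+T_{21}^\rho + O(\delta_n^2\,\mathrm{polylog}\,\lambda_n)$, which is $\bar\rho$ up to the error; combining with the three factor-wise bounds gives $|\rho-\bar\rho|/\bar\rho \le C(\Err_{\det}+\Err_{Y}+\Err_{\nabla Y})$. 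The crux throughout is the determinant factor: extracting its conditional law cleanly via the constant-variance identity, handling the cone indicator, and making the $\E[\det]$-versus-$\det(\E[\cdot])$ comparison quantitative with the correct power of $\delta_n$ and dependence on $|y-\bar{u}_{t^*}|$ --- all of which rests on the uniform eigenvalue scaling $H_{t^*+h|y}\asymp\lambda_n$ that Assumption~\ref{asmp:hessian-curvatures} is there to ensure.
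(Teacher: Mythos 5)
Your plan follows the paper's proof essentially step for step: the same factorization of the Kac--Rice intensity into the conditional expected determinant, the gradient density, and the height density; the same Taylor expansions of $\mu$, $\nabla\mu$, $\Lambda_t^{-1}$ and $\det\Lambda_t$ around $t^*$ for the two density factors (these are exactly Lemma~\ref{lem:asymptotic-expansion-density}); and the same identification of the conditional law of $-\nabla^2 Y_t$ as $H_{t|y}$ plus an $O(1)$ mean-zero Gaussian residual (the paper's projection/residual decomposition in Section~\ref{subsec:hessian-asymptotics}). The one place you depart from the paper's route is the expected-determinant step (the paper's Lemma~\ref{lem:approximation-peak-intensity}). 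The paper truncates the Gaussian integral to $\|R\|\le\xi_n$, Taylor-expands $\det(H_{t|y}+R)$ about $\bar H_{t^*}$ inside the truncated region, kills the linear-in-$R$ term by symmetry of $\mathcal{R}_{t^*}$, and controls the truncation tails via Borell--TIS and a H\"older argument; you instead drop the positive-definiteness indicator up front with a tail bound and then compute $\E[\det(H_{t|y}+Z)]$ via a Wick/Isserlis expansion, reading off $\det(H_{t|y})$ as the leading term with the $|S|=2$ corrections smaller by a factor $\lambda_n^{-2}$, and only then Taylor-expand $\det(H_{t^*+h|y})$ in $(h,y)$. Both produce the same relative error $\Err_{\det}(h,y)$ (your attribution of the factor $(1+|y-\bar u_{t^*}|^2)$ to the Wick step rather than the Taylor-in-$(h,y)$ step is slightly off, since the Wick step alone gives a clean $O(\delta_n^2)$, but this only overcounts the error and does not affect the bound). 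The paper's truncation route has the minor advantage that the same Borell--TIS/truncation machinery is reused throughout the appendix (e.g.\ in Proposition~\ref{prop:no-more-than-one-process-peak-per-signal-peak}); the Wick route is cleaner algebraically for the expectation but still needs the tail bound plus a H\"older/moment argument to control $\E[\det(H_{t|y}+Z)\,\1(H_{t|y}+Z\not\succ 0)]$, which is the same ingredient under a different name. The assembly of the three factor-wise bounds into $\Err_\rho$ is routine bookkeeping and matches the paper's Section~\ref{subsec:pf-approximate-joint-intensity-2}.
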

	Theorem~\ref{thm:approximate-joint-intensity} is proved in Section~\ref{sec:pfs-peak-intensity-local-expansion}. Roughly speaking, the approximation $\bar{\rho}(t,y)$ is derived by local expansion of each of the three terms in~\eqref{eqn:kac-rice} -- the expectation of the Hessian determinant, the gradient density $f_{\nabla Y_t}(0)$ and the height density $f_{Y_t}(y)$ -- about $t = t^*, y = \bar{u}_{t^*}$, followed by a careful analysis of the relative magnitude of all terms in the resulting expansions. Several remarks are in order.
	
	\begin{remark}The terms in~\eqref{eqn:intensity-first-order} are labeled ``first-order'' because they are each $\wt{O}(\delta_n)$, whereas the approximation is accurate up to second-order relative error, meaning\footnote{A more careful analysis removes the $\log(\lambda_n)$ factor, see Theorem~\ref{thm:approximate-joint-distribution}.}
	$$
	\sup_{h \in B_d(0,\varepsilon_n), y \in \bar{u}_{t^*} \pm \Delta_n} \Err_{\rho}(h,y) = \wt{O}\Big( \delta_n^2 + (\bar{u}_{t^*} - \mu_{t^*}) \delta_{n}^2\Big).
 	$$
	More precisely, under negligible or moderate selection pressure (meaning $u \ll \mu_{t^*}$ or $u = \mu_{t^*} + O(1)$ respectively) the error is $\wt{O}(\delta_n^2)$ and the approximation is second-order accurate; under strong selection pressure ($u \gg \mu_{t^*}$ but $u/\mu_{t^*} \to 1$) the error is $\wt{O}((\bar{u}_{t^*} - \mu_{t^*}) \delta_{n}^2)$ and the approximation is between first- and second-order accurate; and under very strong selection pressure ($u/\mu_{t^*} \to C > 1$) the error is in fact also $\wt{O}(\delta_n)$. In a slight abuse of terminology we will sometimes simply summarize these different cases by saying that $\bar{\rho}(t,y)$ is ``second-order accurate.''
	\end{remark}
	
	\begin{remark}
		There exist many second-order accurate approximations to the distribution of a maximum likelihood estimator, either unconditionally or conditional on an ancillary statistic (see~\citep{barndorff1979edgeworth,cox1980local,barndorff1983formula,skovgaard1985second,skovgaard1990density} among many others). To a certain extent $\bar{\rho}(t,y)$ resembles some of these approximations, though it is not a direct consequence of any of them. Some distinctive features of $\bar{\rho}(t,y)$ are: (i) it is correct even when $Y$ is not a well-specified likelihood; (ii) it approximates the intensity of local maxima of height at least $u$, and thus accounts for selection and the possibility of multiple peaks; (iii) it is accurate up to second-order relative error, and valid even when $u = C\mu_{t^*}$ for $C > 1$ which corresponds to a large-deviations event, (iv) it is derived via expansion of a Kac-Rice formula. As far as we know, the Kac-Rice formula has not seen much use in higher-order parametric statistical theory, with the notable exception of~\citep{skovgaard1990density} who rederives the Kac-Rice formula and elucidates a connection to Barndorff-Nielsen's $p^*$-formula~\citep{barndorff1983formula}. Indeed we believe that many classical results in higher-order parametric statistics could be recovered from the Kac-Rice perspective, but this is out of the scope of the present work.
	\end{remark}
	
	\begin{remark}
		Inspection of the proof of Theorem~\ref{thm:approximate-joint-intensity} shows that the leading-order asymptotic behavior of $\rho(t,y)$ is entirely determined by the density term $f_{\nabla Y_t}(0) \cdot f_{Y_t}(y)$ in~\eqref{eqn:kac-rice}. This is because under high-curvature asymptotics the observed Hessian $-\nabla^2 Y_t$ is close to deterministic, and the ``determinant term'' $\E[(\det(-\nabla^2Y_t))\cdot\1(\nabla^2Y_t\prec0)|Y_t = y,\nabla Y_t = 0]$ is close to $\det(\bar{H}_{t^*})$. The fluctuations of the Hessian about its deterministic limit contribute to $\bar{\rho}(t,y)$ only through $T_{01}^{\rho}(y) = T_{01}^{\det}(y)$, which is the first-order term in a Taylor expansion of the determinant term about $\det(\bar{H}_{t^*})$. 
	\end{remark}
	
	% \ag{Comment on higher-order approximations}
	
	\subsection{Distributional effect of selection: insights and connections}
	\label{subsec:selection-effect}

	Looking ahead, later in Section~\ref{sec:peak-distribution} we will show that up to a constant of proportionality $\bar{\rho}(t,y)$ gives the asymptotic post-selection \emph{density} of the height and location of a peak nearby $t^*$. This conditional density is defined on the event that (i) there is a unique $\hat{t} \in \wh{T}$ within distance $\varepsilon_n$ of $t^*$, (ii) the peak $\hat{t}$ is discovered, meaning $\wh{Y}$ survives the threshold $u$, and (iii) $\wh{Y} \in [\bar{u}_{t^*} - \Delta_n, \bar{u}_{t^*} + \Delta_n]$. It is instructive to compare this to the density conditional only (i); indeed with probability tending to one there is a unique $\hat{t} \in \wh{T}$ within distance $\varepsilon_n$ of $t^*$ with, so that it is only very slightly incorrect to think of this as the asymptotic \emph{marginal density} of $(\hat{t},\wh{Y})$. We now make a detailed comparison between conditional and marginal densities, revealing various effects of selection on the distribution of $(\hat{t},\wh{Y})$, and drawing connections where possible to related previous work on RFT and conditional inference. 
	
	\paragraph{Leading-order effect of selection: truncation and Goldilocks precision.}
	By ignoring the first-order terms in~\eqref{eqn:approximate-joint-intensity}, we obtain a simpler approximation to the intensity that is only first-order accurate, but nonetheless asymptotically valid, and which, when truncated to $y \in (u,\infty)$ reads
	\begin{equation*}
		\rho(t^* + h,y) \1(y > u) \propto (1 + \wt{O}(\delta_n)) \cdot \1(y > u) \exp\Big(-\frac{(y - \mu_{t^*})^2}{2}\Big) \cdot \exp\bigg(-\frac{h' G_{t^*|\bar{u}_{t^*}}h}{2}\bigg).
	\end{equation*}
	The aforementioned marginal density is obtained by setting $u = -\infty$ in the above expression, while taking $u > -\infty$ gives the density that conditions on selection. We draw the following conclusions.
	\begin{itemize}
		\item To a leading order $\wh{Y}$ and $\hat{t}$ are independent, both marginally and conditionally on selection.
		\item The leading-order effect of selection on the distribution of $\wh{Y}$ is to truncate its limiting Gaussian distribution to $(u,\infty)$. This is intuitive and will be familiar to readers with a background in post-selection inference, as it is exactly analogous to the selection effect in the ``file drawer'' problem of~\citet{fithian2014optimal}. 
		\item The leading-order effect of selection on the distribution of the location $\hat{t}$ is different: conditioning on selection preserves asymptotic Normality and unbiasedness of $\hat{t}$ but changes its asymptotic precision, from 
		\begin{equation*}
			(-\nabla^2 \mu_{t^*})\Lambda_{t^*}^{-1}(-\nabla^2\mu_{t^*}) \quad \textrm{to} \quad \bar{G}_{t^*} = G_{t^*|\bar{u}_{t^*}} = (-\nabla^2\mu_{t^*})\Lambda_{t^*}^{-1}(-\nabla^2\mu_{t^*}) + (\bar{u}_{t^*} - \mu_{t^*})(-\nabla^2\mu_{t^*}).
		\end{equation*}
		In particular we can see that conditioning on selection has the (surprising and welcome) effect of \emph{increasing} the precision with which $\hat{t}$ estimates $t^*$. Intuitively, this is because selection biases the observed Hessian $\wh{H}$ upwards, to be nearer to $\bar{H}_{t^*} = \E[-\nabla^2 Y_{t^*}|Y_{t^*} = \bar{u}_{t^*}]$ than to its marginal limit $-\nabla^2\mu_{t^*}$, and this upwards bias in curvature increases asymptotic precision. 
	\end{itemize} 
	Less happily, the bias in $\wh{H}$ also poses challenges for inference. When selection is guaranteed, the asymptotic precision of the location is $(-\nabla^2 \mu_{t^*})\Lambda_{t^*}^{-1}(-\nabla^2\mu_{t^*})$~(as shown originally in \citet{amemiya1985advanced,davenport2022confidence}), and moreover this is consistently estimated by the sandwich estimator $\wh{H}\Lambda_{\hat{t}}^{-1}\wh{H}$. However, under selection pressure this sandwich estimator has the wrong asymptotic limit: $\bar{H}_{t^*}\Lambda_{t^*}^{-1}\bar{H}_{t^*}$ rather than $\bar{G}_{t^*}$. We refer to the correct asymptotic precision $\bar{G}_{t^*}$ as the ``Goldilocks'' precision, since routine algebra establishes 
	\begin{equation*}
		\bar{G}_{t^*} =  \bar{H}_{t^*}\Lambda_{t^*}^{-1}(-\nabla^2\mu_{t^*}),
	\end{equation*}
	revealing that $\bar{G}_{t^*}$ is a perfect compromise between the two incorrect sandwich forms of precision, i.e.
        \begin{equation*}
          (-\nabla^2 \mu_{t^*}) \Lambda_{t^*}^{-1} (-\nabla^2 \mu_{t^*}) \preceq \bar{G}_{t^*} \preceq \bar{H}_{t^*} \Lambda_{t^*}^{-1} \bar{H}_{t^*}.
          \end{equation*}
	% It is worth emphasizing that the asymptotic unbiasedness and Normality of $\hat{t}$ under selection are heavily, if implicitly, dependent on the assumption that $\epsilon_t$ has constant variance. If this assumption is dropped it is no longer the case that $\wh{Y}$ and $\hat{t}$ are independent to a leading-order, and conditioning on selection can change both the asymptotic location and shape of the distribution of $\hat{t}$.
	
	\begin{remark} Conditional distributions where conditioning preserves asymptotic Normality and unbiasedness, but changes the limiting precision, are unusual in the study of either post-selection conditional inference or RFT. However, there are close parallels between these phenomena and those seen in a distinct conditional inference problem: maximum likelihood inference in curved exponential families~\citep{efron1975defining}. In this latter context, a classical line of work~\citep{fisher1925theory,efron1978assessing} recommends conditioning on an (approximately) ancillary statistic, and shows that while the MLE is conditionally asymptotically Normal and unbiased, the expected Fisher information is not the right measure of conditional precision. We condition on $\wh{Y} > u$ for an entirely different purpose, but the effect is similar: after conditioning, the expected Hessian $-\nabla^2\mu_{t^*}$ alone does not determine the precision of $\hat{t}$. 
	% TODO for later: explain how our results relate to known results re: conditional inference in well-specified curved Gaussian family.
	\end{remark}
	
	\paragraph{First-order effect of selection: mean-shift and conditional Goldilocks.} We now turn to interpreting the first-order terms in~\eqref{eqn:intensity-first-order}. 
	\begin{itemize}
		\item Both $T_{10}^{\rho}(h)$ and $T_{30}^{\rho}(h)$ are first-order corrections to the distribution of $\hat{t}$. While these terms must be accounted for to produce a second-order accurate approximation to the distribution of $\hat{t}$, a fortunate symmetry means these terms have no effect on the asymptotic distribution of the Wald pivot $\wh{W}_{t^*}(\hat{t})$, and thus, no effect on the accuracy of our asymptotic inferences. 
		\item The term $T_{10}^{\rho}(y) = T_{10}^{\det}(y)$ is a first-order correction to the distribution of $\wh{Y}$, that originates from the determinant term in~\eqref{eqn:kac-rice}, and which roughly speaking reflects upwards bias due to evaluating at a local maximum. This kind of bias will be familiar to readers familiar with RFT, since it is present even under the null $\mu \equiv 0$. This bias manifests itself in a mean shift of size $\tr(\bar{H}_{t^*}^{-1} \Lambda_{t^*})$ under both null and alternative (with $\bar{H}_{t^*} = v \Lambda_{t^*}$ under the null when thresholding at $v$).
		\item Finally, the term $T_{21}^{\rho}(h,y) = T_{21}^{Y}(h,y)$ term is a first-order correction to the distribution of both $\hat{t}$ and $\wh{Y}$ that originates from the height density $f_{Y_t}(y)$ in~\eqref{eqn:kac-rice}. Roughly speaking, it reflects downwards bias in $\wh{Y}$ due to evaluating $Y$ at a point $\hat{t} \neq t^*$ where $\mu_{\hat{t}} < \mu_{t^*}$. This kind of bias is not present under the null $\mu \equiv 0$, and manifests as an additional shift in the mean of $Y_{\hat{t}}$ near points $t^* \in T^*$. It also reflects a change in the conditional variance of $\hat{t}|\wh{Y}$, which can be positive or negative depending on whether $\wh{Y} > \bar{u}_{t^*}$ or $\wh{Y} < \bar{u}_{t^*}$.  
	\end{itemize}
	In obtaining local expansions to the post-selection densities $p(y)$ and $p(t|y)$ we show precisely that 
  at $t^* \in T^*$ the total effect of these  first-order terms is to shift the (pre-selection) mean of $\wh{Y}$ from 
	$$\mu_{t^*} \quad \textrm{to} \quad \mu_{t^*} + \frac{1}{2}\tr(\bar{H}_{t^*}^{-1}\Lambda_{t^*})$$
	and to change the conditional precision of $\hat{t}|\wh{Y} = y$ from
	$$
	\bar{G}_{t^*} \quad \textrm{to} \quad G_{t^*|y} = \bar{G}_{t^*} + (y - \bar{u}_{t^*}) (-\nabla^2\mu_{t^*}).
	$$
	The change in the conditional precision of $\hat{t}|\wh{Y} = y$ reflects the fact that $\hat{t}$ and $\wh{Y}$ are not second-order independent, due to the $T_{21}^{\rho}(h,y)$ term. % For points in null region, the usual bias from RFT is approximated by a mean shift of $d/v$.
	
	\section{Peak detection and estimation}
	\label{sec:peak-estimation}
	In this section, we combine the results of Section~\ref{sec:peak-intensity-local-expansion} with a more classical RFT analysis under the null $H_{0,t}$ to show that the peak detection part of our method -- i.e application of the TG test to peaks selected by pre-thresholding -- controls $\varepsilon_n$-PCER. We copy the definition of $\varepsilon$-PCER here for convenience:
	\begin{equation*}
		{\rm PCER}(\varepsilon,u,v) = \frac{\E[N(t \in \wh{T}_u: {\rm dist}(t,T^*) > \varepsilon)]}{\E[N(\wh{T}_v)]}.
	\end{equation*}
	To prove control of $\varepsilon_n$-PCER, we first show that the null distribution of $Y_t|t \in \wh{Y}_v$ is asymptotically Truncated Gaussian, and so the TG test controls the expected number of null discoveries. On the other hand, the local expansion $\bar{\rho}(t,y)$ of Section~\ref{sec:peak-intensity-local-expansion}  can be used to derive a lower bound on the expected number of $\varepsilon_n$-consistent discoveries. Finally, under the following assumptions the expected number of all other discoveries -- peaks that are neither null nor $\varepsilon_n$-consistent -- is negligible in comparison.
	
	\begin{assumption}
		\label{asmp:well-separated}
		Define the \emph{null region} $\mc{T}_0 = \{t: \exists r > 0 \; \textrm{such that} \; \mu_s = 0 \; \forall s \in B_d(t,r)\}$, and let 
		$$g_n(\varepsilon) := \inf_{t \in \mc{T}: d(t,T^*) > \varepsilon, \; d(t,\mc{T}_0) > 0} \|\nabla \mu_{t}\|.$$ 
		Assume that for any fixed $\varepsilon > 0$, $g_n(\varepsilon) \to \infty$, $\log \lambda_n/\{g_n(\varepsilon)\}^2 \to 0$, and $\log|\mc{T}_n|/\{g_n(\varepsilon)\}^2 \to 0$.  % This assumption ensures
		% signals are sufficiently ``steep'' as they approach $\mc{T}_0$.
	\end{assumption} 
	\begin{assumption}
		\label{asmp:bounded-hessian}
		There exists a constant $C_4$ such that $\sup_{t \in \mc{T}} \lambda_{\max}(-\nabla^2\mu_t) \leq C_4 \lambda_n$ for all $n \in \mathbb{N}$. 
	\end{assumption}	
	\begin{assumption}
		\label{asmp:interior-global-max}
		The global maximum of $\mu$ is achieved at an interior critical point: $\sup_{t \in \mc{T}} \mu_t = \max_{t^* \in T^*}\mu_{t^{*}}$ for all $n \in \mathbb{N}$.  % In other words, $t^{**}$ are all interior critical points.
	\end{assumption}
	
	\begin{theorem}
		\label{thm:epsilon-fpr}
		Fix $\alpha \in (0,1)$, and suppose $v_n \to \infty$. Under Assumptions~\ref{asmp:signal-holder}-\ref{asmp:interior-global-max}, for all $n \in \mathbb{N}$ sufficiently large, 
		\begin{equation*}
			{\rm PCER}(\varepsilon_n,u_{\TG}(\alpha,v_n),v_n) \leq \alpha + C\Big(\frac{1}{v_n^2} + \delta_n^2 + \sup_{t^* \in T^*}(\bar{u}_{t^*} - \mu_{t^*}) \delta_n^2\Big).
		\end{equation*}
	\end{theorem}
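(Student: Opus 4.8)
The plan is to bound the numerator $\E[N(t\in\wh{T}_u:\mathrm{dist}(t,T^*)>\varepsilon_n)]$ of $\PCER(\varepsilon_n,u_{\TG}(\alpha,v_n),v_n)$ by splitting the integration region in the Kac--Rice formula~\eqref{eqn:kac-rice} into three pieces, bounding each, and dividing by $\E[N(\wh{T}_v)]$. Fix a small constant $c_0>0$ (at most $\min\{c_1,c_2,c_3\}$, shrunk further below) and partition the domain into the null region $\mc{T}_0$, the near-peak annuli $\mc{A}_n:=\bigcup_{t^*\in T^*}\{t:\varepsilon_n<\|t-t^*\|<c_0\}$, and the far region $\mc{F}_n:=\mc{T}_n\setminus(\mc{T}_0\cup\mc{T}^*_{c_0})$. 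Because true peaks are uniformly well-separated, lie strictly inside $\mc{T}_n$ (Assumption~\ref{asmp:signal-boundary}), and are each surrounded by a fixed-radius ball on which $\mu$ is not identically zero, for all $n$ large enough that $\varepsilon_n<c_0$ these three sets cover $\{t:\mathrm{dist}(t,T^*)>\varepsilon_n\}$, so the numerator is at most $\E[N(\wh{T}_u\cap\mc{T}_0)]+\E[N(\wh{T}_u\cap\mc{A}_n)]+\E[N(\wh{T}_u\cap\mc{F}_n)]$.

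\emph{Null region.} For $t\in\mc{T}_0$ the null $H_{0,t}$ holds, and by the Kac--Rice formula the ratio $\int_u^\infty\rho(t,y)\,dy\big/\int_v^\infty\rho(t,y)\,dy$ is exactly $\P(Y_t>u\mid t\in\wh{T}_v)$ under $H_{0,t}$. By Lemma~\ref{lem:survival-height-null} this equals $\S_v(u,d/v)+O(v^{-2})$, which at $u=u_{\TG}(\alpha,v_n)$ is $\alpha+O(v_n^{-2})$ uniformly in $t$. Integrating against $\rho_v(t):=\int_v^\infty\rho(t,y)\,dy$ gives $\E[N(\wh{T}_u\cap\mc{T}_0)]\le(\alpha+Cv_n^{-2})\int_{\mc{T}_0}\rho_v(t)\,dt\le(\alpha+Cv_n^{-2})\,\E[N(\wh{T}_v)]$, supplying the $\alpha+Cv_n^{-2}$ part of the bound. (If $\mu\equiv0$ then $\mc{T}_0=\mc{T}_n$ and this is the entire argument.)

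\emph{Far region.} On $\mc{F}_n$, Assumption~\ref{asmp:well-separated} gives $\|\nabla\mu_t\|\ge g_n(c_0)\to\infty$, and Assumption~\ref{asmp:interior-global-max} gives $\mu_t\le\mu^*:=\max_{t^*\in T^*}\mu_{t^*}$. Bounding the Kac--Rice integrand by $\E[|\det(-\nabla^2 Y_t)|\mid\nabla Y_t=0,Y_t=y]\cdot f_{\nabla Y_t}(0)\cdot\varphi(y-\mu_t)$, the Gaussian gradient density is $\le C\exp(-c\|\nabla\mu_t\|^2)$ (Assumption~\ref{asmp:covariance-holder} bounds the spectrum of $\Lambda_t$), while the conditional determinant expectation is dominated by $(\|\nabla^2\mu_t\|+\|\nabla\mu_t\|+|y-\mu_t|+1)^{d}$ up to constants (it is the determinant of the conditional-mean Hessian in~\eqref{eqn:deterministic-hessian} plus a bounded-variance Gaussian fluctuation, using Assumptions~\ref{asmp:bounded-hessian} and~\ref{asmp:covariance-holder}). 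The factor $\|\nabla\mu_t\|^{O(1)}\exp(-c\|\nabla\mu_t\|^2)$ is $\le\exp(-\tfrac{c}{2}g_n(c_0)^2)$; integrating $y$ over $(u,\infty)$ against $\varphi(y-\mu_t)$, splitting $(u-\mu_t)=(u-\mu^*)_++(\mu^*-\mu_t)$ with both terms non-negative and absorbing the $(\mu^*-\mu_t)$-dependence into Gaussian moments, and using $(u-\mu^*)_+\le C/\delta_n$ from Assumption~\ref{asmp:hessian-curvatures} together with $\|\nabla^2\mu_t\|\le C_4\lambda_n$, yields $\int_u^\infty\rho(t,y)\,dy\le C\,\mathrm{poly}(\lambda_n)\exp(-\tfrac{c}{2}g_n(c_0)^2)\exp(-\tfrac12(u-\mu^*)_+^2)$ for all $t\in\mc{F}_n$, hence $\E[N(\wh{T}_u\cap\mc{F}_n)]\le C|\mc{T}_n|\,\mathrm{poly}(\lambda_n)\exp(-\tfrac{c}{2}g_n(c_0)^2)\exp(-\tfrac12(u-\mu^*)_+^2)$. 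The key point is that $\E[N(\wh{T}_v)]$ is at least the contribution near a highest peak $t^*_{\max}$ (with $\mu_{t^*_{\max}}=\mu^*$), which by the lower bound in the next step is $\gtrsim\mathrm{poly}(\lambda_n)^{-1}\exp(-\tfrac12(u-\mu^*)_+^2)$ — the same exponential factor — so the ratio is $\le C|\mc{T}_n|\,\mathrm{poly}(\lambda_n)\exp(-\tfrac{c}{2}g_n(c_0)^2)\to0$ by Assumption~\ref{asmp:well-separated}.

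\emph{Near-peak annuli — the main obstacle.} Fix $t^*\in T^*$. For $\varepsilon_n<\|h\|<c_0$, a Taylor expansion of $\nabla\mu$ about the critical point $t^*$ and Assumption~\ref{asmp:well-conditioned} give $\|\nabla\mu_{t^*+h}\|\ge\tfrac12\lambda_n\|h\|$ and $\mu_{t^*+h}\le\mu_{t^*}-\tfrac14\lambda_n\|h\|^2$ (shrink $c_0$ if needed), so the Kac--Rice integrand at $t^*+h$ is $\le\mathrm{poly}(\lambda_n,|y|)\exp(-c\lambda_n^2\|h\|^2)\varphi(y-\mu_{t^*+h})$; integrating $y$ over $(u,\infty)$ and $h$ over the annulus gives $\int_{\varepsilon_n<\|h\|<c_0}\int_u^\infty\rho(t^*+h,y)\,dy\,dh\le\mathrm{poly}(\lambda_n)\exp(-c\lambda_n^2\varepsilon_n^2)\exp(-\tfrac12(\bar u_{t^*}-\mu_{t^*})^2)$. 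On the other hand, the contribution of the ball $\{\|h\|\le\varepsilon_n\}$ to $\E[N(\wh{T}_v)]$, namely $\int_{\|h\|\le\varepsilon_n}\int_v^\infty\rho(t^*+h,y)\,dy\,dh$, is evaluated with the local expansion $\bar\rho$ of Theorem~\ref{thm:approximate-joint-intensity} (restricting $y$ to $(\bar u_{t^*}\pm\Delta_n)\cap(v,\infty)$, on which $\bar\rho$ is valid and which contains a fixed-mass neighbourhood of the Gaussian factor $\exp(-(y-\mu_{t^*})^2/2)$): since $\varepsilon_n$ is many standard deviations of the $N(0,\bar G_{t^*}^{-1})$ location law, this integral is, up to the relative error $\Err_\rho=\wt{O}(\delta_n^2+(\bar u_{t^*}-\mu_{t^*})\delta_n^2)$, a constant multiple of $\det(\bar H_{t^*})\{(2\pi)^{d+1}\det\Lambda_{t^*}\det\bar G_{t^*}\}^{-1/2}\exp(-\tfrac12(\bar u_{t^*}-\mu_{t^*})^2)\ge\mathrm{poly}(\lambda_n)^{-1}\exp(-\tfrac12(\bar u_{t^*}-\mu_{t^*})^2)$, the final inequality using $\bar G_{t^*}=\bar H_{t^*}\Lambda_{t^*}^{-1}(-\nabla^2\mu_{t^*})$ to cancel the determinants together with $\bar H_{t^*}\succeq-\nabla^2\mu_{t^*}$. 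The choice $\varepsilon_n=\delta_n\sqrt{6\sigma_1^2\log\lambda_n}$ (any constant $>4$ in place of $6$) is exactly what makes $\mathrm{poly}(\lambda_n)\exp(-c\lambda_n^2\varepsilon_n^2)=o(\delta_n^2)$, so the annulus mass is $o(\delta_n^2)$ times the near-$t^*$ denominator; summing over the uniformly separated $t^*\in T^*$ and combining with the local-expansion error $\Err_\rho$ (which perturbs the leading $\alpha$ contribution by a factor $1+\wt{O}(\delta_n^2+\sup_{t^*}(\bar u_{t^*}-\mu_{t^*})\delta_n^2)$) yields the $C(\delta_n^2+\sup_{t^*}(\bar u_{t^*}-\mu_{t^*})\delta_n^2)$ term, and assembling the three pieces gives the stated bound. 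The hard step is this last one: the annulus $\varepsilon_n<\|h\|<c_0$ falls outside the range of validity of Theorem~\ref{thm:approximate-joint-intensity}, so its mass must be controlled by the separate gradient-density estimate above and then balanced against a sharp lower bound on the near-$t^*$ contribution to $\E[N(\wh{T}_v)]$; both degrade like $\exp(-\tfrac12(\bar u_{t^*}-\mu_{t^*})^2)$ under selection pressure, so their ratio is governed only by the Gaussian tail at radius $\varepsilon_n$ and by $\Err_\rho$, and tracking these two quantities is what pins down the constant $6$ (any $>4$) and the exact $\delta_n^2+(\bar u_{t^*}-\mu_{t^*})\delta_n^2$ error rate.
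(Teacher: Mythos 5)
Your proof follows essentially the same decomposition as the paper's: you split $\{t:\mathrm{dist}(t,T^*)>\varepsilon_n\}$ into the null region, a near-peak annulus, and a far region, which corresponds exactly to the paper's $\mc{T}_0$, $\mc{G}_{\varepsilon_n}^0$, and $\mc{G}_{\varepsilon_n}^1$, and each piece is handled with the same ingredients (Lemma~\ref{lem:survival-height-null} for the null, the Gaussian gradient-density decay against the local-expansion lower bound of Theorem~\ref{thm:approximate-joint-intensity}/Proposition~\ref{prop:expectation-counting-process} for the annulus, and Assumptions~\ref{asmp:well-separated}--\ref{asmp:interior-global-max} for the far region). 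The assembly and error accounting match the paper's proof via Propositions~\ref{prop:null-false-positive-rate}, \ref{prop:expectation-counting-process}, and \ref{prop:high-gradient-peaks}, so this is the same argument with slightly different bookkeeping.
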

	To make the argument outlined above rigorous, we partition $\mc{T}$ into three regions:
	\begin{enumerate}
		\item the \emph{null region} $\mc{T}_0$ defined in Assumption~\ref{asmp:well-separated},
		\item the \emph{$\varepsilon$-consistent discovery region} $\mc{T}_{\varepsilon_n}^{*} = \bigcup_{t^* \in T^*} B(t^*,\varepsilon_n)$, and 
		\item the \emph{high-gradient region} $\mc{G}_{\varepsilon_n} = \mc{T}\setminus(\mc{T}_0 \cup \mc{T}_{\varepsilon_n}^{*})$,\footnote{We call $\mc{G}_{\varepsilon_n}$ the high-gradient region because under Assumptions~\ref{asmp:well-conditioned} and~\ref{asmp:well-separated}, $g_n(\varepsilon_n) \to \infty$.}
	\end{enumerate}
	and write $\varepsilon_n$-PCER in terms of the expected number of thresholded peaks, and discoveries, in each region:
	\begin{equation*}
		{\rm PCER}(\varepsilon_n,u,v) := \frac{\E[N_{u}(\mc{T})] - \E[N_{u}(\mc{T}_{\varepsilon_n}^{*})]}{\E[N_{v}(\mc{T})]} = \frac{\E[N_{u}(\mc{G}_{\varepsilon_n})] + \E[N_{u}(\mc{T}_0)]}{\E[N_{v}(\mc{T}_{\varepsilon_n}^{*})] + \E[N_{v}(\mc{G}_{\varepsilon_n})] + \E[N_{v}(\mc{T}_0)]}.
	\end{equation*}
	In the subsequent sections we derive bounds on each expectation in the expression above, finally establishing Theorem~\ref{thm:epsilon-fpr} in Section~\ref{subsec:pf-thm-epsilon-pcer}. We comment more on Assumptions~\ref{asmp:well-separated}-\ref{asmp:interior-global-max} in Section~\ref{subsec:high-gradient-discoveries}.

	\subsection{Null per-comparison error rate}
	\label{subsec:null-pcer}
	Recall that $\PCER_0(u,v)$ is the expected number of null discoveries over the expected number of peaks selected by pre-thresholding. Peak detection via the TG test asymptotically controls the null-PCER as the pre-threshold $v_n \to \infty$. 
	\begin{proposition}
		\label{prop:null-false-positive-rate}
		Fix $\alpha \in (0,1)$, and suppose $v_n \to \infty$. For all $n \in \mathbb{N}$ sufficiently large,
		\begin{equation*}
			\PCER_0(u_{\TG}(\alpha,v_n),v_n) \leq \alpha + \frac{C}{v_n^2}.
		\end{equation*}
	\end{proposition}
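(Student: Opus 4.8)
The plan is to deduce the bound directly from the exact Kac--Rice formula~\eqref{eqn:kac-rice-intensity} together with the uniform second-order Truncated Gaussian approximation to the null overshoot distribution recorded in the footnote (Lemma~\ref{lem:survival-height-null}). First I would put the numerator of $\PCER_0$ in Kac--Rice form. Writing $\mc{T}_0$ for the null region of Assumption~\ref{asmp:well-separated}, the event ``$H_{0,t}$ is true'' holds exactly when $t \in \mc{T}_0$, and since $u_{\TG}(\alpha,v) > v$ (because $\S_v(v,d/v) = 1 > \alpha$ and $\S_v$ is decreasing in its first argument) every null discovery is a pre-thresholded peak, so the numerator equals $\E[N_u(\mc{T}_0)]$. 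As $N_v(\mc{T}_0) \le N_v(\mc{T})$ pointwise, the denominator satisfies $\E[N_v(\mc{T})] \ge \E[N_v(\mc{T}_0)]$, which is strictly positive (nondegeneracy~\eqref{eqn:non-degenerate} forces $\rho(t,y) > 0$ everywhere; and if $\mc{T}_0$ is empty then the numerator vanishes and there is nothing to prove, so assume $\mc{T}_0 \ne \emptyset$, hence of positive Lebesgue measure since it is open). Consequently
\begin{equation*}
  \PCER_0(u,v) \;=\; \frac{\E[N_u(\mc{T}_0)]}{\E[N_v(\mc{T})]} \;\le\; \frac{\E[N_u(\mc{T}_0)]}{\E[N_v(\mc{T}_0)]}.
\end{equation*}

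Next I would recognize the right-hand side as a weighted average of conditional survival probabilities. By Kac--Rice, $\E[N_u(\mc{T}_0)] = \int_{\mc{T}_0}\int_u^\infty \rho(t,y)\,dy\,dt$ (and likewise with $v$ in place of $u$), and for each interior $t$ the Palm distribution of the height of a pre-thresholded peak located at $t$ has density $\rho(t,y)\1(y>v)/\!\int_v^\infty \rho(t,y')\,dy'$ --- this is the exact conditional density~\eqref{eqn:density-height-null}. Hence, with the nonnegative weights $w(t) := \int_v^\infty \rho(t,y)\,dy$,
\begin{equation*}
  \frac{\E[N_u(\mc{T}_0)]}{\E[N_v(\mc{T}_0)]} \;=\; \frac{\int_{\mc{T}_0} \P\!\big(Y_t > u \mid t \in \wh{T}_v\big)\, w(t)\,dt}{\int_{\mc{T}_0} w(t)\,dt},
\end{equation*}
a convex combination of the values $\P(Y_t > u \mid t \in \wh{T}_v)$ over $t \in \mc{T}_0$. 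It therefore suffices to bound these conditional survival probabilities uniformly over the null region, and this is precisely Lemma~\ref{lem:survival-height-null}: for all $v$ large enough and all $t \in \mc{T}_0$, $|\P(Y_t > u \mid t \in \wh{T}_v) - \S_v(u,d/v)| \le C/v^2$ with $C$ depending only on $K$ and $d$ (the uniformity in $t$ coming from the uniform $C^5$ control of $K$ in Assumption~\ref{asmp:covariance-holder} and the nondegeneracy~\eqref{eqn:non-degenerate}). Taking $u = u_{\TG}(\alpha,v_n)$, by construction $\S_{v_n}(u_{\TG}(\alpha,v_n),d/v_n) = \alpha$, so every term of the convex combination is at most $\alpha + C/v_n^2$; hence so is the average, and so is $\PCER_0(u_{\TG}(\alpha,v_n),v_n)$, which is the claim.

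Everything above is routine; the substantive ingredient is Lemma~\ref{lem:survival-height-null}, and that is where I expect the real work to be. To prove it one starts from the exact null intensity $\rho_0(t,y) = \E[\det(-\nabla^2 Y_t)\1(\nabla^2 Y_t \prec 0)\mid \nabla Y_t = 0, Y_t = y]\,f_{\nabla Y_t}(0)\,f_{Y_t}(y)$ (taking $\mu \equiv 0$ near $t$) and analyzes $\int_a^\infty \rho_0(t,y)\,dy$ by a Mills-ratio/Laplace argument as $a \to \infty$. The conditional mean of $-\nabla^2 Y_t$ given $Y_t = y$ is $y\Lambda_t$, so the determinant expectation has leading behavior $\det(y\Lambda_t) = y^d\det(\Lambda_t)$ and the integrand is $\propto y^d e^{-y^2/2}$ to leading order; since $y^d e^{-y^2/2}$ is, near $y \approx v$, proportional to a Gaussian density shifted by $\approx d/v$, forming the ratio of the tail integral started at $u$ to the one started at $v$ reproduces $\S_v(u,d/v) = \Psi(u - d/v)/\Psi(v - d/v)$. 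The delicate step is showing that the Hessian fluctuations about $y\Lambda_t$, the subleading terms of the determinant expansion, the factor $\1(\nabla^2 Y_t \prec 0)$, and the spatial variation of $\Lambda_t$ all enter only at relative order $1/v^2$, uniformly over $t$ in the null region --- essentially a simpler, null-case version of the local intensity expansion behind Theorem~\ref{thm:approximate-joint-intensity}.
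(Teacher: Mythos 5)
Your argument is correct and follows the paper's own proof almost exactly: both reduce $\PCER_0$ to the ratio $\E[N_u(\mc{T}_0)]/\E[N_v(\mc{T}_0)]$, both factor the Kac--Rice integral through the Palm survival probability $\P(Y_t > u \mid t\in\wh{T}_v)$ weighted by $\int_v^\infty \rho(t,y)\,dy$ (your ``convex combination'' phrasing is the same integral identity the paper writes out), and both defer the content to Lemma~\ref{lem:survival-height-null}, whose proof strategy you also sketch accurately. The only nit is that the constant $C$ in Lemma~\ref{lem:survival-height-null} depends on $\alpha$ (through the bound $u_{\TG}(\alpha,v_n)\le v_n + B_\alpha/v_n$), not just on $K$ and $d$ as you state, though this is consistent with the paper's stated conventions for $C$ and does not affect the argument.
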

	Proposition~\ref{prop:null-false-positive-rate} is proved in Sections~\ref{subsec:pf-null-false-positive-rate}-\ref{subsec:pf-jacobian-null}. The proof begins from the exact distribution of $Y_t|t \in \wh{T}_v$; formally speaking this is a \emph{Palm distribution} with density~\citep{adler2010applications,cheng2017multiple}
	\begin{equation}
		\label{eqn:density-height-null}
		p_v(y|t) = \frac{\E[\det(-\nabla^2 Y_t) \cdot \1(\nabla^2 Y_t \prec 0)|\nabla Y_t = 0, Y_t = y]}{\E[\det(-\nabla^2 Y_t) \cdot \1(Y_t > v, \nabla^2 Y_t \prec 0)| \nabla Y_t = 0]} \cdot f_{Y_t}(y) \cdot \1(y > v).
	\end{equation}
	Careful analysis of the determinant term above shows that the survival function of this Palm distribution agrees with the mean-shifted TG survival function $\S_v(u,d/v)$ up to second-order relative error, at which point integrating over $t \in \mc{T}_0$ yields the claim of Proposition~\ref{prop:null-false-positive-rate}. 
	
	Previous work~\citep{schwartzman2011multiple,taylor2016inference,cheng2017multiple} has used either the exact distribution~\eqref{eqn:density-height-null}, or exponentially accurate approximations, to calibrate tests for $H_{0,t}$ or the global null $H_0: \mu \equiv 0$. Indeed our mean-shifted TG approximation can be obtained by extracting the first few terms from the asymptotic approximation of~\citet{cheng2015distribution}. The accuracy of the resulting approximation is a middle ground between~\citet{cheng2015distribution}, and the classic $\Exp(1)$ approximation~\citep{adler2010geometry} of the normalized overshoot distribution of $Y_t$. The insight here is that second-order accuracy can be obtained by a simple mean shift in the reference distribution. One reason this insight is valuable is that it also applies -- locally, near high-curvature peaks -- under the alternative, where using the exact distribution~\eqref{eqn:density-height-null} for inference appears more challenging.
	
	\subsection{$\varepsilon_n$-consistent discoveries, and power}
	The results of Section~\ref{sec:peak-intensity-local-expansion} can be used to lower bound $\E[N_u(\mc{T}_{\varepsilon_n}^{*})]$, which is the total expected number of $\varepsilon_n$-consistent discoveries. In fact, we prove a slightly stronger result. Integrating $\bar{\rho}(t,y)$ over $t \in \mc{B}_{t^*} := B_d(t^*,\varepsilon_n)$ and $y \in \mc{I}_{t^*} := (\bar{u}_{t^*} \pm \Delta_n) \cap (u,\infty)$ gives a second-order accurate asymptotic approximation of $\E[N_{\mc{I}_{t^*}}(\mc{B}_{t^*})]$, which is the expected number of $\varepsilon_n$-consistent discoveries of $t^*$ that also have height falling within $\Delta_n$ of $\bar{u}_{t^*}$.  Of course $\E[N_u(\mc{B}_{t^*})] \geq \E[N_{\mc{I}_{t^*}}(\mc{B}_{t^*})]$, and some additional work shows that in fact the difference between the two is asymptotically negligible. To lighten notation, we let $\mc{S}_{t^*} := \mc{B}_{t^*} \times \mc{I}_{t^*}$, $N(\mc{S}_{t^*}) := N_{\mc{I}_{t^*}}(\mc{B}_{t^*})$, and 
	\begin{align*}
		\Err_{\rho}(\mc{S}_{t^*}) & := C \Big( |\bar{u}_{t^*} - \mu_{t^*}| \delta_n^2 + \delta_n^2\Big). \\
		\wb{\Trunc}_{\rho}(\mc{I}_{t^*}) & := C\delta_n^2.
	\end{align*}
	\begin{proposition}
		\label{prop:expectation-counting-process}
		Under the assumptions of Theorem~\ref{thm:approximate-joint-intensity}, for all $n \in \mathbb{N}$ sufficiently large:
		\begin{equation}
			\begin{aligned}
				\label{eqn:expectation-counting-process}
				\frac{|\E[N(\mc{S}_{t^*})] - \bar{\E}[N(\mc{S}_{t^*})]|}{\bar{\E}[N(\mc{S}_{t^*})]} 
				& \leq \Err_{\rho}(\mc{S}_{t^*}) + \wb{\Trunc}_{\rho}(\mc{I}_{t^*}) := \Err_{\E[N(\mc{S}_{t^*})]}, \quad \textrm{and} \\
				\frac{|\E[N_u(\mc{B}_{t^*}) - \E[N(\mc{S}_{t^*})]|}{\bar{\E}[N(\mc{S}_{t^*})]} 
				& \leq \wb{\Trunc}_{\rho}(\mc{I}_{t^*}), 
			\end{aligned}
		\end{equation}
		where
		\begin{equation}
			\label{eqn:approximate-number-true-discoveries}
			\bar{\E}[N(\mc{S}_{t^*})] := \sqrt{\frac{\det(\bar{H}_{t^*})}{\det(-\nabla^2\mu_{t^*})}}\exp\Big(-\frac{1}{2}(\bar{u}_{t^*} - \mu_{t^*})\tr(\bar{H}_{t^*}^{-1}\Lambda_{t^*})\Big)\Psi\Big(u - \mu_{t^*} - \frac{1}{2}\tr(\bar{H}_{t^*}^{-1}\Lambda_{t^*})\Big).
		\end{equation}
		Thus, $\E[N_{u}(\mc{T}_{\varepsilon_n}^*)] \geq (1 - C \delta_n^2 - C |\sup_{t^*} \bar{u}_{t^*} - \mu_{t^*}| \delta_n^2) \cdot \sum_{t^* \in T^*} \bar{\E}[N(\mc{S}_{t^*})]$.
	\end{proposition}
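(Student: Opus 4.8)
The plan is to fix $t^* \in T^*$, prove both relative-error bounds in~\eqref{eqn:expectation-counting-process}, and then obtain the final lower bound on $\E[N_u(\mc{T}_{\varepsilon_n}^*)]$ by summing over the (well-separated) true peaks.

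\emph{From the counting process to an integral of $\bar\rho$.} By the Kac--Rice formula, $\E[N(\mc{S}_{t^*})] = \int_{B_d(0,\varepsilon_n)}\int_{\mc{I}_{t^*}}\rho(t^*+h,y)\,dy\,dh$. Since $\mc{S}_{t^*} = B_d(0,\varepsilon_n)\times\mc{I}_{t^*}$ lies entirely inside the range of validity of Theorem~\ref{thm:approximate-joint-intensity}, we substitute $\rho(t^*+h,y) = \bar\rho(t^*+h,y)\big(1 + O(\Err_\rho(h,y))\big)$, and the remark following Theorem~\ref{thm:approximate-joint-intensity} (in its sharpened, log-free form) bounds $\sup_{(h,y)\in\mc{S}_{t^*}}\Err_\rho(h,y) \le C\big(\delta_n^2 + |\bar u_{t^*}-\mu_{t^*}|\delta_n^2\big) = \Err_\rho(\mc{S}_{t^*})$. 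Hence $\E[N(\mc{S}_{t^*})] = \big(1 + O(\Err_\rho(\mc{S}_{t^*}))\big)\int_{B_d(0,\varepsilon_n)}\int_{\mc{I}_{t^*}}\bar\rho(t^*+h,y)\,dy\,dh$, so the first bound in~\eqref{eqn:expectation-counting-process} reduces to showing that this last integral equals $\bar{\E}[N(\mc{S}_{t^*})]$ up to relative error $\wb{\Trunc}_\rho(\mc{I}_{t^*}) = C\delta_n^2$.

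\emph{Evaluating the Gaussian integral of $\bar\rho$.} First, the two odd-in-$h$ corrections $T_{10}^{\rho}(h)$ (linear) and $T_{30}^{\rho}(h)$ (cubic) integrate to zero against the even Gaussian $\exp(-h'\bar G_{t^*}h/2)$ over the symmetric ball. Next, extend the domain of integration to $\R^d\times(u,\infty)$: the tail error is a Gaussian-tail estimate in $h$, using $\bar G_{t^*} \succeq (-\nabla^2\mu_{t^*})\Lambda_{t^*}^{-1}(-\nabla^2\mu_{t^*}) \succeq c\lambda_n^2 I$ by Assumptions~\ref{asmp:covariance-holder}--\ref{asmp:well-conditioned}, together with a standard Gaussian-tail estimate in $y$ beyond $\bar u_{t^*}\pm\Delta_n$; the calibration of $\varepsilon_n,\Delta_n$ in~\eqref{eqn:local-parameters} makes both contributions $O(\delta_n^2)$. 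The leading term of the extended integral is $\sqrt{\det(\bar H_{t^*})/\det(-\nabla^2\mu_{t^*})}\,\Psi(u-\mu_{t^*})$, the determinant prefactor collapsing because the ``Goldilocks'' identity $\bar G_{t^*} = \bar H_{t^*}\Lambda_{t^*}^{-1}(-\nabla^2\mu_{t^*})$ gives $\det\bar G_{t^*} = \det\bar H_{t^*}\det(-\nabla^2\mu_{t^*})/\det\Lambda_{t^*}$. For the surviving first-order terms, integrating $T_{21}^{\rho}(h,y) = \tfrac12(y-\bar u_{t^*})h'\nabla^2\mu_{t^*}h$ over $h$ and using $(-\nabla^2\mu_{t^*})\bar G_{t^*}^{-1} = \Lambda_{t^*}\bar H_{t^*}^{-1}$ (again from the Goldilocks identity), the $T_{01}^{\rho}$ and $T_{21}^{\rho}$ contributions combine into the single factor $1 + \tfrac12\tr(\bar H_{t^*}^{-1}\Lambda_{t^*})(y-\bar u_{t^*})$; a one-line Taylor expansion then shows $\int_u^\infty\big(1+\tfrac12\tr(\bar H_{t^*}^{-1}\Lambda_{t^*})(y-\bar u_{t^*})\big)\exp(-(y-\mu_{t^*})^2/2)\,dy$ equals $\sqrt{2\pi}\,\exp\!\big(-\tfrac12(\bar u_{t^*}-\mu_{t^*})\tr(\bar H_{t^*}^{-1}\Lambda_{t^*})\big)\Psi\!\big(u-\mu_{t^*}-\tfrac12\tr(\bar H_{t^*}^{-1}\Lambda_{t^*})\big)$ up to relative error $\big(\tr(\bar H_{t^*}^{-1}\Lambda_{t^*})\big)^2 = O(\delta_n^2)$. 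Assembling these pieces reproduces~\eqref{eqn:approximate-number-true-discoveries} up to $O(\delta_n^2)$ relative error, which with the previous step establishes the first bound in~\eqref{eqn:expectation-counting-process}.

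\emph{The truncation in height (the main obstacle), and summing over peaks.} For the second bound, $N_u(\mc{B}_{t^*}) \ge N(\mc{S}_{t^*})$ pointwise, so $0 \le \E[N_u(\mc{B}_{t^*})] - \E[N(\mc{S}_{t^*})] = \int_{B_d(0,\varepsilon_n)}\int_{(u,\infty)\setminus\mc{I}_{t^*}}\rho(t^*+h,y)\,dy\,dh$, an integral over the region where $|y-\mu_{t^*}|$ exceeds (essentially) $\Delta_n$, where Theorem~\ref{thm:approximate-joint-intensity} no longer applies; here one must estimate the Kac--Rice integrand directly. Conditionally on $\{\nabla Y_t=0, Y_t=y\}$ the Hessian $-\nabla^2 Y_t$ is Gaussian with mean $H_{t|y}$ from~\eqref{eqn:deterministic-hessian} and $O(1)$ fluctuations, so $\E[\det(-\nabla^2 Y_t)\1(\nabla^2 Y_t\prec 0)\mid\nabla Y_t=0,Y_t=y] \le C\|H_{t|y}\|^d \le C(\lambda_n + |y-\mu_t|)^d$ for $t\in\mc{B}_{t^*}$, while $f_{Y_t}(y) = \tfrac{1}{\sqrt{2\pi}}\exp(-(y-\mu_t)^2/2)$ with $|\mu_t-\mu_{t^*}| = o(1)$; integrating $f_{\nabla Y_t}(0)$ over $t\in\mc{B}_{t^*}$ by a Laplace-type estimate (using $\nabla\mu_{t^*+h}\approx(\nabla^2\mu_{t^*})h$) contributes $\lesssim 1/\det(-\nabla^2\mu_{t^*})$, and the remaining $y$-integral is a Gaussian tail beyond $\Delta_n$, of order $O(\lambda_n^{-3})$. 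Since $\lambda_n^d/\det(-\nabla^2\mu_{t^*}) = \Theta(1)$ by Assumption~\ref{asmp:well-conditioned} and the $\Psi$-type factors cancel against $\bar{\E}[N(\mc{S}_{t^*})]$, the whole quantity is $O(\delta_n^3)\cdot\bar{\E}[N(\mc{S}_{t^*})] \le \wb{\Trunc}_\rho(\mc{I}_{t^*})\,\bar{\E}[N(\mc{S}_{t^*})]$; keeping the constants in $\Delta_n$ sharp and treating the negligible, moderate, and strong selection regimes uniformly is the main bookkeeping burden. Finally, the true peaks are separated by a fixed constant (a consequence of Assumption~\ref{asmp:well-conditioned}), so for $n$ large the balls $\{B_d(t^*,\varepsilon_n)\}_{t^*\in T^*}$ are pairwise disjoint and $\E[N_u(\mc{T}_{\varepsilon_n}^*)] = \sum_{t^*}\E[N_u(\mc{B}_{t^*})] \ge \sum_{t^*}\E[N(\mc{S}_{t^*})] \ge \big(1 - C\delta_n^2 - C\sup_{s^*}|\bar u_{s^*}-\mu_{s^*}|\delta_n^2\big)\sum_{t^*}\bar{\E}[N(\mc{S}_{t^*})]$, applying the first bound of~\eqref{eqn:expectation-counting-process} uniformly in $t^*$.
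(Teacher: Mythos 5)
You follow essentially the same architecture as the paper's proof: marginalize the intensity against its local expansion, exploit oddness of $T_{10}^{\rho}(h)$ and $T_{30}^{\rho}(h)$ and the Goldilocks identity $\bar{G}_{t^*}=\bar{H}_{t^*}\Lambda_{t^*}^{-1}(-\nabla^2\mu_{t^*})$ to collapse the Gaussian integral in $h$, re-exponentiate the residual first-order term in $y$ to produce the mean-shifted $\Psi$, then handle the height truncation by direct Kac--Rice estimates and sum over the well-separated peaks. The main computations you do work out (the $T_{01}^{\rho}+T_{21}^{\rho}$ contribution combining into $1+\tfrac12(y-\bar u_{t^*})\tr(\bar H_{t^*}^{-1}\Lambda_{t^*})$ matches the paper's $\tilde\rho(y)$ exactly).

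There is, however, a real gap in the first step. You invoke ``the sharpened, log-free form'' to claim $\sup_{(h,y)\in\mc{S}_{t^*}}\Err_\rho(h,y)\le C(\delta_n^2+|\bar u_{t^*}-\mu_{t^*}|\delta_n^2)$, but no such pointwise sup bound holds: at the boundary $\|h\|\approx\varepsilon_n$, $|y-\bar u_{t^*}|\approx\Delta_n$, the terms $\|h\|^2$, $\lambda_n^2\|h\|^4$, $|y-\bar u_{t^*}|^2\delta_n^2$ all carry $\log\lambda_n$ factors (this is what $\wt O(\cdot)$ is hiding in the Remark you cite, whose footnote only promises a log-free form in Theorem~\ref{thm:approximate-joint-distribution}, i.e.\ for the integrated density, not the pointwise intensity). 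The paper eliminates the logs not by sharpening the sup but by writing $\int|\rho-\bar\rho|\le\int\Err_\rho(h,y)\,\bar\rho(h,y)\,dh\,dy$ and exploiting that under the Gaussian weight $\bar\rho$, the effective scale of $\|h\|$ is $\delta_n$ and of $|y-\bar u_{t^*}|$ is $O(1)$, so that e.g.\ $\int\lambda_n^2\|h\|^4\,\bar\rho\,dh\,dy=O(\delta_n^2)\bar{\E}[N(\mc{S}_{t^*})]$ with no log factor; this is exactly the role of Lemma~\ref{lem:height-intensity} and its $\Err_\rho(\mc{B}_{t^*},y)$. Pulling the sup out front, as you do, would give $\Err_{\E[N(\mc{S}_{t^*})]}$ multiplied by a power of $\log\lambda_n$, which is weaker than the proposition as stated.

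On the truncation bound, your sketch of the magnitude is right, but note the lower truncation requires a case split: $\mc{I}_-^c$ is nonempty only when $u<\mu_{t^*}$, and there the argument works because $\bar{\E}[N(\mc{S}_{t^*})]$ is itself bounded below by a constant so the absolute $O(\delta_n^2)$ tail estimate becomes relative. You correctly flagged this as the ``bookkeeping burden'' but it deserves an explicit case distinction, as in the paper.
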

	The proof of Proposition~\ref{prop:expectation-counting-process} is given in Section~\ref{subsec:pf-expectation-counting-process}. 
	\begin{remark} As mentioned previously, under the high-curvature asymptotic assumptions of Theorem~\ref{thm:approximate-joint-intensity}, with probability tending to one there will be a unique $\hat{t} \in \wh{T}$ that $\varepsilon_n$-consistently estimates a given $t^*$. Equation~\eqref{eqn:approximate-number-true-discoveries} can thus be interpreted as the asymptotic probability that the TG test, applied at $\hat{t}$, correctly rejects the null $H_{0,t}$; in other words, as an asymptotic approximation to power. Some routine calculations show that if $(\bar{u}_{t^*} - \mu_{t^*})\delta_n \to 0$ then
		\begin{equation}
			\label{eqn:approximate-number-true-discoveries-2}
			\bar{\E}[N(\mc{S}_{t^*})] = 
			\big(1 + O(|\bar{u}_{t^*} - \mu_{t^*}|^2 \delta_n^2)\big) \cdot \Psi\Big(u - \mu_{t^*} - \frac{1}{2} \tr(\bar{H}_{t^*}^{-1}\Lambda_{t^*}\big)\Big).
		\end{equation}
		Once again, we see that an up to second-order accurate approximation -- this time to asymptotic power -- is given by applying a  first-order mean-shift to a reference distribution, this time to $\Psi(u - \mu_{t^*}) = \P(Y_{t^*} > u)$.  % Perhaps most importantly, our results allow for $u - \mu_{t^*} \not\to \infty$ or even $u - \mu_{t^*} \to -\infty$, and thus all of our ultimate statistical guarantees hold when power is asymptotically less than one or even vanishing. 
	\end{remark}

	\subsection{High-gradient discoveries}
	\label{subsec:high-gradient-discoveries}
	Finally, the expected number of high-gradient discoveries is small relative to the expected number of $\varepsilon_n$-discoveries, under the regularity conditions of Section~\ref{subsec:assumptions} and Assumptions~\ref{asmp:well-separated}-\ref{asmp:interior-global-max}.
	
	\begin{proposition}
		\label{prop:high-gradient-peaks}
		Under Assumptions~\ref{asmp:signal-holder}-\ref{asmp:interior-global-max}, for all $n \in \mathbb{N}$ sufficiently large:
		\begin{equation}
			\label{eqn:high-gradient-peaks}
			\frac{\E[N_{u}(\mc{G}_{\varepsilon_n})]}{\sum_{t^* \in T^*}\bar{\E}[N(\mc{S}_{t^*})] } \leq C \delta_n^2.
		\end{equation}
	\end{proposition}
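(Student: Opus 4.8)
The plan is to bound $\E[N_u(\mc{G}_{\varepsilon_n})]$ directly from the Kac--Rice formula~\eqref{eqn:kac-rice} and to compare it, region by region, against the explicit lower bound $\sum_{t^*\in T^*}\bar{\E}[N(\mc{S}_{t^*})]$ supplied by Proposition~\ref{prop:expectation-counting-process}. The mechanism is that on $\mc{G}_{\varepsilon_n}$ the constraint $\nabla Y_t = 0$ forces $\nabla\epsilon_t = -\nabla\mu_t$, a tail event for the Gaussian vector $\nabla\epsilon_t$, so that the density factor $f_{\nabla Y_t}(0)$ is exponentially small in $\|\nabla\mu_t\|^2$.

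First I would bound the integrand of~\eqref{eqn:kac-rice}. On $\{\nabla^2 Y_t \prec 0\}$ we have $\det(-\nabla^2 Y_t) \le \lambda_{\max}(-\nabla^2 Y_t)^d \le (\lambda_{\max}(-\nabla^2\mu_t) + \|\nabla^2\epsilon_t\|_{\mathrm{op}})^d$, so Assumption~\ref{asmp:bounded-hessian}, the independence $\epsilon_t \ind \nabla\epsilon_t$ (constant variance) -- which keeps $\nabla^2\epsilon_t \mid \{\nabla Y_t = 0,\, Y_t = y\}$ Gaussian with bounded conditional covariance and conditional mean of norm $O(\|\nabla\mu_t\| + |y - \mu_t|)$ -- and a routine Gaussian moment computation give
\begin{equation*}
\E\big[\det(-\nabla^2 Y_t)\1(Y_t > u)\1(\nabla^2 Y_t \prec 0)\,\big|\,\nabla Y_t = 0\big] \le C\big(1 + \lambda_n^d + \|\nabla\mu_t\|^d + (u - \mu_t)_+^d\big)\,\Psi(u - \mu_t),
\end{equation*}
where it is essential to retain the factor $\Psi(u - \mu_t) = \P(Y_t > u \mid \nabla Y_t = 0)$: discarding it produces a bound too weak to close the argument, since Assumption~\ref{asmp:well-separated} only forces $g_n$ to dominate $\log\lambda_n$, not $\lambda_n$. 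Combining with the Gaussian tail $f_{\nabla Y_t}(0) \le C\exp(-\tfrac12 \nabla\mu_t'\Lambda_t^{-1}\nabla\mu_t)$ yields
\begin{equation*}
\E[N_u(\mc{G}_{\varepsilon_n})] \le C\int_{\mc{G}_{\varepsilon_n}} \big(\lambda_n^d + \|\nabla\mu_t\|^d + 1\big)\,\Psi(u - \mu_t)\,\exp\!\big(-\tfrac12 \nabla\mu_t'\Lambda_t^{-1}\nabla\mu_t\big)\,dt.
\end{equation*}

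Next I would split $\mc{G}_{\varepsilon_n} = \mc{A}_n \cup \mc{F}_n$, where $\mc{A}_n := \bigcup_{t^*\in T^*}\{t : \varepsilon_n < \|t - t^*\| < c_0\}$ is a union of disjoint annuli around the true peaks (disjointness by the well-separation implied by Assumption~\ref{asmp:well-conditioned}) and $\mc{F}_n := \mc{G}_{\varepsilon_n}\setminus\mc{A}_n$. On $\mc{F}_n$ one has $\|\nabla\mu_t\| \ge g_n(c_0)$, so the integral over $\mc{F}_n$ is at most $C\,|\mc{T}_n|\,\lambda_n^d\,\Psi(u - \sup_s\mu_s)\,\exp(-c\,g_n(c_0)^2)$, which is $o(\delta_n^2)\,\Psi(u - \sup_s\mu_s)$ by the conditions $\log\lambda_n = o(g_n(c_0)^2)$ and $\log|\mc{T}_n| = o(g_n(c_0)^2)$ of Assumption~\ref{asmp:well-separated}. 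On each annulus I would change variables to $h = t - t^*$ and use Assumptions~\ref{asmp:signal-holder},~\ref{asmp:well-conditioned},~\ref{asmp:bounded-hessian} to Taylor-expand $\nabla\mu_{t^*+h} = \nabla^2\mu_{t^*}h + O(\lambda_n\|h\|^2)$ and $\Lambda_{t^*+h}^{-1} = \Lambda_{t^*}^{-1} + O(\|h\|)$, giving $\nabla\mu_{t^*+h}'\Lambda_{t^*+h}^{-1}\nabla\mu_{t^*+h} \ge c_\star\,\lambda_n^2\|h\|^2$ for $\|h\| < c_0$ (after shrinking $c_0$), together with $\|\nabla\mu_{t^*+h}\| \le C\lambda_n\|h\|$, $(u - \mu_{t^*+h})_+ \le C\lambda_n$, and $\Psi(u - \mu_{t^*+h}) \le \Psi(u - \mu_{t^*})$ (as $\mu$ is strictly concave on $B(t^*,c_0)$). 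Rescaling $s = \lambda_n h$ converts the annulus integral into $\lambda_n^{-d}\,\Psi(u - \mu_{t^*})\cdot\mathrm{poly}(\log\lambda_n)\cdot\P(\chi^2_d \ge c_\star\,\lambda_n^2\varepsilon_n^2)$; since $\lambda_n^2\varepsilon_n^2 \asymp \log\lambda_n$ with the constant fixed in~\eqref{eqn:local-parameters}, the chi-squared tail furnishes an extra $\lambda_n^{-2}$ (up to $\log$ factors), so after multiplying by the $\lambda_n^d$ from the determinant term, $\E[N_u(\{\varepsilon_n < \|t - t^*\| < c_0\})] \le C\,\delta_n^2\,\Psi(u - \mu_{t^*})$.

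To conclude, I would lower-bound the denominator peak by peak. From~\eqref{eqn:approximate-number-true-discoveries}, using $\sqrt{\det\bar{H}_{t^*}/\det(-\nabla^2\mu_{t^*})} \ge 1$ and $(\bar{u}_{t^*} - \mu_{t^*})\tr(\bar{H}_{t^*}^{-1}\Lambda_{t^*}) \le C$ -- which follows from~\eqref{eqn:non-degenerate} and Assumptions~\ref{asmp:well-conditioned} and~\ref{asmp:hessian-curvatures} -- one gets $\bar{\E}[N(\mc{S}_{t^*})] \ge c\,\Psi(u - \mu_{t^*})$ for each $t^*$. Hence $\E[N_u(\mc{A}_n)] = \sum_{t^*}\E[N_u(\{\varepsilon_n < \|t - t^*\| < c_0\})] \le C\,\delta_n^2\sum_{t^*}\Psi(u - \mu_{t^*}) \le C\,\delta_n^2\sum_{t^*}\bar{\E}[N(\mc{S}_{t^*})]$, while for $\mc{F}_n$, writing $t^*_0$ for a true peak attaining $\sup_s\mu_s$ (which exists by Assumption~\ref{asmp:interior-global-max}), $\Psi(u - \sup_s\mu_s) = \Psi(u - \mu_{t^*_0}) \le C\,\bar{\E}[N(\mc{S}_{t^*_0})] \le C\sum_{t^*}\bar{\E}[N(\mc{S}_{t^*})]$. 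Adding the two contributions proves~\eqref{eqn:high-gradient-peaks}. I expect the annular region $\mc{A}_n$ to be the main obstacle: it lies outside the radius-$\varepsilon_n$ neighbourhood where the local expansion $\bar{\rho}$ of Theorem~\ref{thm:approximate-joint-intensity} is valid, so only the crude Kac--Rice bound is available there, and extracting the extra $\delta_n^2$ requires simultaneously keeping the $\Psi(u - \mu_t)$ factor (so it cancels against the denominator) and exploiting the precise scaling of $\varepsilon_n$ against the Gaussian tail of $\nabla\epsilon_t$.
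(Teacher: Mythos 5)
Your overall strategy matches the paper's proof: Kac--Rice upper bound on the intensity integrated over $y > u$, split into fixed-radius annuli around $T^*$ and a far region, Gaussian tail bounds on $f_{\nabla Y_t}(0)$, and a per-peak lower bound $\bar{\E}[N(\mc{S}_{t^*})] \geq c\,\Psi(u-\mu_{t^*})$. Your treatment of the annulus $\mc{A}_n$ is essentially the paper's argument, and your lower bound on the denominator is correct (it follows cleanly from $\bar{H}_{t^*}\succeq-\nabla^2\mu_{t^*}$ and $\bar{H}_{t^*}\succeq(\bar{u}_{t^*}-\mu_{t^*})\Lambda_{t^*}$, without even needing Assumptions~\ref{asmp:well-conditioned} or~\ref{asmp:hessian-curvatures}).

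However, there is a genuine gap in the far region $\mc{F}_n$. Your first display correctly contains a $(u-\mu_t)_+^d$ term, but in the very next display you silently drop it, writing $\E[N_u(\mc{G}_{\varepsilon_n})] \leq C\int(\lambda_n^d+\|\nabla\mu_t\|^d+1)\Psi(u-\mu_t)\exp(\cdots)\,dt$. That simplification is fine on $\mc{A}_n$, where Taylor expansion around a peak together with Assumption~\ref{asmp:hessian-curvatures} forces $(u-\mu_t)_+\leq C\lambda_n$; but on $\mc{F}_n$ there is no such control on $\mu_t$ (it can be arbitrarily negative), so $(u-\mu_t)_+^d$ can be $\gg\lambda_n^d$. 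The only universal bound available is $(u-\mu_t)_+^d\Psi(u-\mu_t)\leq C_d$, but that discards the $\Psi(u-\mu_t)$ factor entirely, and your $\mc{F}_n$ argument depends on retaining $\Psi(u-\sup_s\mu_s)$ so it cancels against the denominator. Under strong selection pressure (which is permitted: Assumption~\ref{asmp:hessian-curvatures} allows $u-\mu_{t^{**}}\asymp\lambda_n$, hence $\Psi(u-\mu_{t^{**}})\asymp e^{-c\lambda_n^2}$), Assumption~\ref{asmp:well-separated} only guarantees that $\exp(-c\,g_n^2)$ beats $|\mc{T}_n|$ and polynomials in $\lambda_n$, not $\exp(c\lambda_n^2)$. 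So the orphaned term $C_d\int_{\mc{F}_n}f_{\nabla Y_t}(0)\,dt$ is not bounded by $\delta_n^2\Psi(u-\mu_{t^{**}})$, and the argument does not close.

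The paper's fix retains $(u-\mu_t)_+^d$ throughout and handles it via a case split: if $u-\mu_t<\sqrt{d}$, the term is $\leq d^{d/2}$ and is absorbed into $\lambda_n^d$; if $u-\mu_t>\sqrt{d}$, the map $m\mapsto\big((u-m)_+^d+\lambda_n^d\big)\Psi(u-m)$ is increasing in $m$ on $\{m<u-\sqrt{d}\}$ (by the Mills-ratio inequality), and thus its value at $\mu_t$ is bounded by its value at $\mu_{t^{**}}=\sup_s\mu_s$ (Assumption~\ref{asmp:interior-global-max}). This simultaneously transfers the $(u-\mu_t)_+^d$ factor and the $\Psi(u-\mu_t)$ factor into the corresponding quantities at $t^{**}$, where $(u-\mu_{t^{**}})_+^d\leq C\lambda_n^d$ again by Assumption~\ref{asmp:hessian-curvatures}. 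You would need to insert this monotonicity step (or an equivalent) to make the $\mc{F}_n$ bound go through.
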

	
	Assumption~\ref{asmp:well-separated} resembles a typical condition made in the analysis of M-estimators (see e.g. \citet{vandervaart2000asymptotic} Theorem 5.9), but is weaker in that it allows for the possibility of multiple peaks and a non-empty null region $\mc{T}_0$. Assumption~\ref{asmp:bounded-hessian} prevents the signal from being dramatically curvier away from a true peak, implying a uniform upper bound on the determinant term in the Kac-Rice formula.
	Assumption~\ref{asmp:interior-global-max} ensures that the global maximum of $\mu_t$ is achieved by an interior local maximum and not a boundary point with large gradient.  Notice that these assumptions do not rule out the global null hypothesis $\mu \equiv 0$; they simply assert that any peaks that are present are asymptotically well-separated from the null.
	
	\subsection{Proof of Theorem~\ref{thm:epsilon-fpr}}
	\label{subsec:pf-thm-epsilon-pcer}
	Propositions~\ref{prop:null-false-positive-rate}-\ref{prop:high-gradient-peaks} immediately imply an upper bound on the $\varepsilon_n$-PCER: letting $u_{\TG} \equiv u_{\TG}(\alpha,v_n)$,
	\begin{equation}
		\begin{aligned}
			\frac{\E[N_{u_{\TG}}(\mc{G}_{\varepsilon_n})] + \E[N_{u_{\TG}}(\mc{T}_0)]}{\E[N_{u_{\TG}}(\mc{T}_{\varepsilon_n}^{*})] + \E[N_{v_n}(\mc{T}_0)]} 
			& \leq C\frac{\sum_{t^* \in T^*} \bar{\E}[N(\mc{S}_{t^*})]\delta_n^2 + (1 + C/v_n^2) \alpha \cdot \E[N_{v_n}(\mc{T}_0)]}{(1 - C \delta_n^2 - C (\sup_{t^* \in T^*} \bar{u}_{t^*} - \mu_{t^*}) \delta_n^2) \cdot \sum_{t^* \in T^*} \bar{\E}[N(\mc{S}_{t^*})] + \E[N_{v_n}(\mc{T}_0)]} \\
			& \leq \alpha + C\Big(\frac{1}{v_n^2} + \delta_n^2 + \sup_{t^* \in T^*}(\bar{u}_{t^*} - \mu_{t^*}) \delta_n^2\Big),
		\end{aligned}
	\end{equation}
	for all $n \in \mathbb{N}$ large enough that $C(\delta_n^2 + \sup_{t^* \in T^*}(\bar{u}_{t^*} - \mu_{t^*}) \delta_n) \leq 1/2$. This is the claim of Theorem~\ref{thm:epsilon-fpr}.
	
	%Traditionally the nominal level $\alpha$ is chosen to balance the number of total discoveries with the rate of false positives. Theorem~\ref{thm:epsilon-fpr} shows that the choice of $\alpha$ also affects the estimation quality of selected peaks. The rate at which $\varepsilon_n$ converges to $0$ depends inversely on $u$: up to log factors,
	%\begin{equation*}
	%	\varepsilon_n \asymp \delta_{t^*}^{G} \asymp \sqrt{\frac{\delta_{t^*}}{\lambda_{t^*} + (\bar{u}_{t^*} - \mu_{t^*})}}.
	%\end{equation*}
	% As result, decreasing $\alpha$ and thus increasing the significance threshold $u_{\TG}$ leads to fewer true peaks being discovered; but those true peaks that are discovered are estimated more accurately. 
	
	\section{Post-selection inference with $\varepsilon_n$-consistent discoveries}
	\label{sec:peak-distribution}

	We now turn to localizing peaks via post-selection inference; more explicitly, inference that is valid conditional on $\hat{t} \in \wh{T}_u$ being the unique $\varepsilon_n$-consistent discovery of a true peak $t^* \in T^*$, with height $\wh{Y}$ within $\pm \Delta_n$ of $\bar{u}_{t^*}$. The conditional distribution of such a peak is given by
	\begin{equation}
		\label{eqn:distribution-after-selection}
		\Q^{\mc{S}_{t^*}}(\mc{A}) := \P\Big((\hat{t},\wh{Y}) \in \mc{A}\big|N(\mc{S}_{t^*}) = 1\Big) = \frac{\P(N(\mc{A}) = 1, N(\mc{S}_{t^*}) = 1)}{\P(N(\mc{S}_{t^*}) = 1)}, \quad \mc{A} \subseteq \mc{S}_{t^*}.
	\end{equation}
	The smoothness assumptions of~Section~\ref{subsec:model} suffice to guarantee that $(\hat{t},\wh{Y})$ are continuous under $\Q^{\mc{S}_{t^*}}$. The joint density of $(\hat{t},\wh{Y})$, marginal density of $\wh{Y}$, and conditional density of $\hat{t}|\wh{Y} = y$ are respectively (writing $\nu_d$ for the Lebesgue measure of $B_d(0,1)$)
	\begin{equation}
		\label{eqn:density-after-selection}
		p(t,y) := \lim_{r \to 0}\frac{1}{2|r|^{d + 1} \nu_d}\Q^{\mc{S}_{t^*}}\Big(B_d(t,r) \times [y - r, y + r]\Big), \quad p(y) = \int_{\mc{B}_{t^*}} p(t,y) \,dt, \quad p(t|y) = \frac{p(t,y)}{p(y)}, \quad (t,y) \in \mc{S}_{t^*}
	\end{equation}
	\begin{theorem}
		\label{thm:approximate-joint-distribution}
		Under the assumptions of Theorem~\ref{thm:approximate-joint-intensity}, for all $n \in \mathbb{N}$ sufficiently large the following statements hold at all $t^* \in T^*, h \in B_d(0,\varepsilon_n), y \in \bar{u}_{t^*} \pm \Delta_n$:
		\begin{itemize}
			\item The joint density of $(\hat{t},\wh{Y})$ given $N(\mc{S}_{t^*}) = 1$ satisfies
			\begin{equation}
				\label{eqn:joint-density-error}
				\begin{aligned}
					\frac{\Big|p(t^* + h,y) - \bar{p}(t^* + h,y)\Big|}{\bar{p}(t^* + h,y)} 
					& \leq C\Big(\Err_{\rho}(h,y) + \Err_{\E[N(\mc{S}_{t^*})]}\Big) := \Err_{p}(h,y),
				\end{aligned}
			\end{equation}
			where
			\begin{equation}
				\label{eqn:approximate-joint-density}
				\begin{aligned}
					\bar{p}(t^* + h,y) & := \frac{\bar{\rho}(t^* + h,y)}{\bar{\E}[N(\mc{S}_{t^*})]}.
				\end{aligned}
			\end{equation}
			\item The marginal density of $\wh{Y}$ given $N(\mc{S}_{t^*}) = 1$ satisfies
			\begin{equation}
				\label{eqn:density-height-error}
				\begin{aligned}
					\frac{|p(y) - \bar{p}(y)|}{p(y)} & \leq C\Big(\Err_{\rho}(y) + \Err_{\E[N(\mc{S}_{t^*})]}\Big) := \Err_{p}(y),
				\end{aligned}
			\end{equation}
			where $\Err_{\rho}(y)$ is defined in~\eqref{eqn:height-intensity}, and
			\begin{equation}
				\label{eqn:approximate-marginal-density-height}
				\begin{aligned}
					\bar{p}(y) & := \1(y > u) \cdot \frac{1}{\sqrt{2\pi}}\frac{\exp\big(-\frac{1}{2}(y - \mu_{t^*} - \frac{1}{2}\tr(\bar{H}_{t^*}^{-1}\Lambda_{t^*}))^2\big) }{\Psi\big(u - \mu_{t^*} - \frac{1}{2}\tr(\bar{H}_{t^*}^{-1}\Lambda_{t^*})\big)}.
				\end{aligned}
			\end{equation}
			\item The conditional density of $\hat{t}$ given $N(\mc{S}_{t^*}) = 1$ and $\wh{Y} = y$ satisfies
			\begin{equation}
				\label{eqn:density-location-error}
				\frac{\Big|p(t^* + h|y) - \bar{p}(t^* + h|y)\Big|}{\bar{p}(t^* + h|y)} \leq C\Big(\Err_{\rho}(h,y) + \Err_{\rho}(y)\Big) := \Err_{p}(h|y),
			\end{equation}
			where
			\begin{equation}
				\label{eqn:approximate-conditional-density-location}
				\begin{aligned}
					\bar{p}(t^* + h|y) & := \Big(1 + T_{10}^{\rho}(h) - \frac{1}{2}T_{30}^{\rho}(h)\Big) \cdot \frac{\sqrt{\det(G_{t^*|y})}}{\sqrt{(2\pi)^d}} \cdot \exp\Big(-\frac{1}{2}h'G_{t^*|y} h\Big).
				\end{aligned}
			\end{equation}
		\end{itemize}
	\end{theorem}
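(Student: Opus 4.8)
The plan is to express each of the three densities in~\eqref{eqn:density-after-selection} as a ratio of quantities already under control --- the intensity $\rho$ (Theorem~\ref{thm:approximate-joint-intensity}) and the expected count $\E[N(\mc{S}_{t^*})]$ (Proposition~\ref{prop:expectation-counting-process}) --- and then to carry out the $h$-integration and Gaussian bookkeeping that identify the explicit forms $\bar{p}(y)$ and $\bar{p}(t|y)$. The starting point is a point-process identity: since $\{N(\mc{S}_{t^*}) = 1\}$ together with $\mc{A}\subseteq\mc{S}_{t^*}$ force $N(\mc{A})\in\{0,1\}$, the definition~\eqref{eqn:distribution-after-selection} and the Kac--Rice formula $\E[N(\mc{A})] = \int_{\mc{A}}\rho$ give
\[
\Q^{\mc{S}_{t^*}}(\mc{A}) \;=\; \frac{\E[N(\mc{A})] - \E\!\big[N(\mc{A})\1(N(\mc{S}_{t^*})\ge 2)\big]}{\E[N(\mc{S}_{t^*})] - \E\!\big[N(\mc{S}_{t^*})\1(N(\mc{S}_{t^*})\ge 2)\big]}.
\]
Taking $\mc{A} = B_d(t,r)\times[y-r,y+r]$, dividing by its Lebesgue measure and letting $r\to 0$, this yields $p(t,y) = \rho(t,y)/\E[N(\mc{S}_{t^*})]$, $p(y) = \int_{\mc{B}_{t^*}}\rho(t^*+h,y)\,dh \,/\, \E[N(\mc{S}_{t^*})]$ and $p(t|y) = \rho(t,y) \,/\, \int_{\mc{B}_{t^*}}\rho(t^*+h,y)\,dh$, each up to the multiplicity corrections above; note in particular that the normaliser $\E[N(\mc{S}_{t^*})]$ cancels in $p(t|y)$, which is exactly why the error~\eqref{eqn:density-location-error} carries no $\Err_{\E[N(\mc{S}_{t^*})]}$ term.

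The multiplicity corrections are killed by the high-curvature geometry around $t^*$. On $B_d(t^*,c)$ for a suitable fixed radius $c\le c_0$, Assumption~\ref{asmp:well-conditioned} gives $-\nabla^2\mu_t\succeq\tfrac12\lambda_n I$ for all large $n$, while $\sup_{t\in B_d(t^*,c)}\|\nabla^2\epsilon_t\|$ concentrates at scale $O(\sqrt{\log\lambda_n}) = o(\lambda_n)$ by Borell--TIS (using Assumption~\ref{asmp:covariance-holder} and the non-degeneracy~\eqref{eqn:non-degenerate}); hence $Y$ is strictly concave on $B_d(t^*,c)\supseteq B_d(t^*,\varepsilon_n)$ --- so that $N(\mc{S}_{t^*})\le 1$ --- outside an event of probability $e^{-\Theta(\lambda_n^2)}$, uniformly over $t^*\in T^*$. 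Equivalently, a two-point Kac--Rice expansion bounds the second-order intensity by $\iint_{\mc{S}_{t^*}\times\mc{S}_{t^*}}\rho_2\lesssim e^{-c'\lambda_n^2}\E[N(\mc{S}_{t^*})]$ and $\int_{\mc{S}_{t^*}}\rho_2(t,y;\cdot)\lesssim e^{-c'\lambda_n^2}\rho(t,y)$, since a second critical point of the near-concave field near $t^*$ is a large-deviation event for $\nabla\epsilon$. Either way every correction term is $O(e^{-c''\lambda_n^2})$ \emph{relative} to the corresponding main term, hence negligible against all of $\Err_{\rho}$ and $\Err_{\E[N(\mc{S}_{t^*})]}$. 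I expect this uniqueness/concentration estimate, together with the determinant bookkeeping below, to be the principal obstacle.

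Given the reduction, the joint-density bound~\eqref{eqn:joint-density-error} is immediate: $p(t^*+h,y)/\bar{p}(t^*+h,y) = \big[\rho(t^*+h,y)/\bar{\rho}(t^*+h,y)\big]\cdot\big[\bar{\E}[N(\mc{S}_{t^*})]/\E[N(\mc{S}_{t^*})]\big]\cdot\big(1+O(e^{-c\lambda_n^2})\big)$, whose bracketed factors have relative errors $\Err_{\rho}(h,y)$ and $\Err_{\E[N(\mc{S}_{t^*})]}$ by Theorem~\ref{thm:approximate-joint-intensity} and Proposition~\ref{prop:expectation-counting-process}. For the marginal density I would integrate $\bar{\rho}(t^*+h,y)$ over $h\in B_d(0,\varepsilon_n)$ using: (i) $T_{10}^{\rho}(h)$ and $T_{30}^{\rho}(h)$ are linear and cubic in $h$, hence odd, and integrate to $0$ against the centered weight $\exp(-\tfrac12 h'\bar{G}_{t^*}h)$; (ii) $\big(1+T_{21}^{\rho}(h,y)\big)\exp(-\tfrac12 h'\bar{G}_{t^*}h) = \exp(-\tfrac12 h'G_{t^*|y}h)\big(1+O(T_{21}^{\rho}(h,y)^2)\big)$, since $G_{t^*|y} = \bar{G}_{t^*}+(y-\bar{u}_{t^*})(-\nabla^2\mu_{t^*})$ and $\sup|T_{21}^{\rho}| = \wt{O}(\delta_n)$ on the relevant ranges; (iii) replacing $B_d(0,\varepsilon_n)$ by $\R^d$ costs only an exponentially small error because $\lambda_{\min}(G_{t^*|y})\varepsilon_n^2\gtrsim\log\lambda_n$. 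The resulting Gaussian integral equals $\sqrt{(2\pi)^d/\det G_{t^*|y}}$, and the identities $\bar{G}_{t^*} = \bar{H}_{t^*}\Lambda_{t^*}^{-1}(-\nabla^2\mu_{t^*})$ --- so $\det\bar{G}_{t^*} = \det\bar{H}_{t^*}\det(-\nabla^2\mu_{t^*})/\det\Lambda_{t^*}$ --- and $\det G_{t^*|y}/\det\bar{G}_{t^*} = 1+(y-\bar{u}_{t^*})\tr(\bar{H}_{t^*}^{-1}\Lambda_{t^*})+O(|y-\bar{u}_{t^*}|^2\delta_n^2)$ let me collapse the surviving $y$-only factors --- $\big(1+T_{01}^{\rho}(y)\big)$, $1/\sqrt{\det G_{t^*|y}}$ and $\exp(-\tfrac12(y-\mu_{t^*})^2)$ --- into the truncated Gaussian $\bar{p}(y)$ with mean shifted by $\tfrac12\tr(\bar{H}_{t^*}^{-1}\Lambda_{t^*})$, while at the same time confirming $\int_{\mc{B}_{t^*}}\bar{\rho}(t^*+h,y)\,dh = \bar{\E}[N(\mc{S}_{t^*})]\,\bar{p}(y)$ up to absorbed error against~\eqref{eqn:approximate-number-true-discoveries}. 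The bound~\eqref{eqn:density-height-error} then follows by integrating the pointwise estimate $\Err_p(h,y)$ --- which, maximised over $\|h\|\le\varepsilon_n$, is the $\Err_{\rho}(y)$ of~\eqref{eqn:height-intensity} --- against $\bar{p}(t,y)$ and adding $\Err_{\E[N(\mc{S}_{t^*})]}$. Finally, for~\eqref{eqn:density-location-error} I divide the joint density by the marginal: the normaliser and every $y$-only factor cancel (in particular $1+T_{01}^{\rho}(y)$), leaving exactly $\big(1+T_{10}^{\rho}(h)-\tfrac12 T_{30}^{\rho}(h)\big)$ times the $G_{t^*|y}$-Gaussian, i.e.~\eqref{eqn:approximate-conditional-density-location}, with relative error $\Err_{\rho}(h,y)+\Err_{\rho}(y)$.
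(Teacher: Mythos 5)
Your proposal is correct and follows essentially the same route as the paper's proof. The reduction of $p(t,y)$ to $\rho(t,y)/\E[N(\mc{S}_{t^*})]$ up to an $e^{-c\lambda_n^2}$ relative error is exactly what the paper's Proposition~\ref{prop:no-more-than-one-process-peak-per-signal-peak} establishes --- the paper proves it by bounding, via Kac--Rice and a H\"older split, the expectation of $N(\mc{A})$ times the indicator that the residual Hessian exceeds $\tfrac12\lambda_n$ somewhere on $\mc{B}_{t^*}$, which is the first of your two suggested mechanisms --- and your downstream algebra (odd first-order terms vanish, $T_{21}^{\rho}$ is absorbed by completing the square to $G_{t^*|y}$, truncation to $\mc{B}_{t^*}^c$ costs $O(\delta_n^2)$) matches the paper's Lemma~\ref{lem:height-intensity}. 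One small caution on the final step: the $y$-only factors in $\bar{\rho}(t^*+h,y)/\bar{\rho}(y)$ do not literally cancel --- the numerator has $\exp(-\tfrac12(y-\mu_{t^*})^2)(1+T_{01}^{\rho}(y))$ while the denominator's marginal carries the mean-shift $\exp(-\tfrac12(y-\mu_{t^*}-\tfrac12\tr(\bar{H}_{t^*}^{-1}\Lambda_{t^*}))^2)$ and the factor $\sqrt{\det G_{t^*|y}}$, and these reconcile only up to $O(\{T_{01}^{\rho}(y)\}^2)$ errors. The paper handles this by passing through an intermediate normalizer $\tilde{\rho}(y)$ carrying $(1+\tfrac12 T_{01}^{\rho}(y))$ with \emph{no} mean-shift, and then Taylor-expanding the ratio; you should make that bookkeeping explicit so that the stated error $\Err_{\rho}(h,y)+\Err_{\rho}(y)$ is actually delivered rather than asserted.
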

	The proof of Theorem~\ref{thm:approximate-joint-distribution} is given in Section~\ref{subsec:pf-approximate-joint-distribution}. Intuitively, in the high-curvature limit $\delta_n \to 0$, we are very likely to observe exactly one peak in $\mc{B}_{t^*}$ (see Proposition~\ref{prop:no-more-than-one-process-peak-per-signal-peak}). Consequently, the post-selection density $p(t,y)$ of this unique peak is exponentially well approximated  by $\rho(t,y)/\E[N(\mc{S}_{t^*})]$, which is the intensity of the expected number of such peaks, normalized to integrate to one. Thus the local expansions of Theorem~\ref{thm:approximate-joint-distribution} follow from Theorem~\ref{thm:approximate-joint-intensity} and Proposition~\ref{prop:expectation-counting-process}. We refer back to Section~\ref{subsec:selection-effect} for more explanation and interpretation of these asymptotic formulae.
	
	\begin{remark}
		Calculations similar to those used to prove Theorem~\ref{thm:approximate-joint-distribution} also lead to a second-order accurate local expansion of the post-selection density of $\hat{t}$ that does not condition on $\wh{Y}$:
		\begin{equation*}
			\bar{p}(t^* + h) := \Big(1 + T_{10}^{\rho}(h) - \frac{1}{2}T_{30}^{\rho}(h)\Big) \cdot \frac{\sqrt{\det(\bar{G}_{t^*})}}{\sqrt{(2\pi)^d}} \cdot \exp\Big(-\frac{1}{2}h'\bar{G}_{t^*} h\Big).
		\end{equation*}
		Comparing this to~\eqref{eqn:approximate-conditional-density-location} shows that the effect of conditioning on $\wh{Y} = y$ is to change the post-selection precision from $\bar{G}_{t^*} = G_{t^*|\bar{u}_{t^*}}$ to $G_{t^*|y} = G_{t^*|\bar{u}_{t^*}} + (y  - \bar{u}_{t^*}) (-\nabla^2 \mu_{t^*})$, which is a first-order change in the precision. However this will not affect our approach to subsequent inference: both $\bar{G}_{t^*}$ and $G_{t^*|y}$ depend on nuisance parameters, so for inference we must plug in an estimate of precision, and for either $\bar{G}_{t^*}$ or $G_{t^*|y}$ we will use the estimated precision $\wh{G} = \wh{H}\Lambda_{\hat{t}}^{-1}\wh{H}$. Henceforth we will deal only with the conditional local expansion $\bar{p}(t^* + h|y)$. 
	\end{remark}

	\subsection{Pivotal quantities for height and location}
	Theorem~\ref{thm:approximate-joint-distribution} implies that under $\Q^{\mc{S}_{t^*}}$, 
	\begin{equation*}
		\bar{\S}_{\mu_{t^*}}(\wh{Y}) := \frac{\Psi\big(\wh{Y} - \mu_{t^*} - \frac{1}{2}\tr(\bar{H}_{t^*}^{-1}\Lambda_{t^*})\big)}{\Psi\big(u - \mu_{t^*} - \frac{1}{2}\tr(\bar{H}_{t^*}^{-1}\Lambda_{t^*})\big)} \overset{d}{\to} \mathrm{Unif}(0,1),
	\end{equation*}
	while additionally conditional on $\wh{Y}  = y$,
	\begin{equation*}
		\bar{W}_{t^*}(y,\hat{t}) := (\hat{t} - t^*)'G_{t^*|y}(\hat{t} - t^*) \overset{d}{\to} \chi_d^2,
	\end{equation*}
	In each case the rate of convergence is up to second-order. However, neither $\bar{\S}_{\mu_{t^*}}$ nor $\bar{W}_{t^*}$ can be directly used for inference as they depend on unknown nuisance parameters. To conduct inference for the height, we will plug in the estimates $\wh{H} = -\nabla^2 Y_{\hat{t}}$ for $\bar{H}_{t^*}$ and $\Lambda_{\hat{t}}$ for $\Lambda_{t^*}$, resulting in the asymptotic \emph{Truncated Gaussian (TG)} pivot
	\begin{equation*}
		\wh{\S}_{\mu_{t^*}}(\wh{Y}) = \wh{\S}_{\mu_{t^*}}(\wh{Y},\hat{t},\wh{H}) := \frac{\Psi(\wh{Y} - \mu_{t^*} - \frac{1}{2}\tr(\wh{H}^{-1}\Lambda_{\hat{t}}))}{\Psi(u - \mu_{t^*} - \frac{1}{2}\tr(\wh{H}^{-1}\Lambda_{\hat{t}}))}.
	\end{equation*}
	To conduct inference for the location we will plug in $\wh{H}$ for both $\bar{H}_{t^*}$ and $-\nabla^2\mu_{t^*}$, resulting in the asymptotic \emph{Wald} pivot
	\begin{equation*}
		\wh{W}_{t^*}(\hat{t}) = {W}_{t^*}(\hat{t},\wh{H}) := (\hat{t} - t^*)'\wh{G}(\hat{t} - t^*), \quad \wh{G} := \wh{H} \Lambda_{\hat{t}}^{-1} \wh{H}.
	\end{equation*}
	\begin{theorem}
		\label{thm:pivot}
		Fix $\alpha \in (0,1)$. Under the assumptions of Theorem~\ref{thm:approximate-joint-distribution}, for all $n \in \mathbb{N}$ sufficiently large: 
		\begin{equation}
			\label{eqn:height-pivot}
			\Big|\Q^{\mc{S}_{t^*}}\big(\wh{\S}_{\mu_{t^*}}(\wh{Y}) \leq \alpha\big) - \alpha\Big| \leq C\big(|\bar{u}_{t^*} - \mu_{t^*}| + \Delta_n + 1\big)\varepsilon_n \delta_n := \Err_{\S}.
		\end{equation}
		If additionally $(\bar{u}_{t^*} - \mu_{t^*})\delta_n \to 0$, then
		\begin{equation}
			\label{eqn:pivot-location}
			\begin{aligned}
			& \Big|\Q^{\mc{S}_{t^*}}\Big(\wh{W}_{t^*}(\hat{t}) \leq q_{\chi_d^2(\alpha)}|\wh{Y} = y\Big)  - \alpha + A\cdot\frac{(y - \bar{u}_{t^*}) + (\bar{u}_{t^*} - \mu_{t^*})}{2}\tr(\bar{H}_{t^*}^{-1}\Lambda_{t^*})\Big| \\
			& \leq C \Big(\delta_n^2(1 + |y - \bar{u}_{t^*}| ^2+ |\bar{u}_{t^*} - \mu_{t^*}|^2)\Big) := \Err_W(y),
			\end{aligned}
		\end{equation}
		where $q_{\chi_d^2}(\alpha)$ is the $\alpha$th quantile of $\chi_d^2$ distribution, and $A := \alpha - \frac{1}{d}\E[\|Z\|^2 \cdot \1\{\|Z\| \leq q_{\chi_d^2}(\alpha)\}], Z \sim N(0,1)$ is a positive constant.
	\end{theorem}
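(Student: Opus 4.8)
Both bounds follow from Theorem~\ref{thm:approximate-joint-distribution}, combined with three ingredients: (i) each ``oracle'' pivot is \emph{exactly} distributed as its reference law under the corresponding idealized density $\bar p$; (ii) a perturbation bound for the cost of replacing the unknown nuisance quantities by their plug-in estimates; and (iii) tail bounds showing the mass of the relevant laws outside the windows $B_d(0,\varepsilon_n)$ and $\bar{u}_{t^*}\pm\Delta_n$ -- on which Theorem~\ref{thm:approximate-joint-distribution} is informative -- is polynomially small in $\lambda_n$, hence negligible against the stated errors.

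\emph{Height pivot.} The density $\bar p(y)$ in~\eqref{eqn:approximate-marginal-density-height} is exactly that of $N(m_{t^*},1)$ conditioned on exceeding $u$, where $m_{t^*}:=\mu_{t^*}+\tfrac12\tr(\bar{H}_{t^*}^{-1}\Lambda_{t^*})$; for such a law the probability integral transform gives that $\bar{\S}_{\mu_{t^*}}(\wh{Y})=\Psi(\wh{Y}-m_{t^*})/\Psi(u-m_{t^*})$ is exactly $\mathrm{Unif}(0,1)$. Feeding the relative-error bound~\eqref{eqn:density-height-error} on $\bar{u}_{t^*}\pm\Delta_n$, together with a bound on the $\bar p$-mass outside that window (equivalently, on the normalization discrepancy between $p$ and $\bar p$), into this identity controls $|\Q^{\mc{S}_{t^*}}(\bar{\S}_{\mu_{t^*}}(\wh{Y})\le\alpha)-\alpha|$ at order $\wt{O}((1+|\bar{u}_{t^*}-\mu_{t^*}|)\delta_n^2)$, which is dominated by the stated error. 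Passing from $\bar{\S}_{\mu_{t^*}}$ to $\wh{\S}_{\mu_{t^*}}$ amounts to bounding the change in the mean-shift when $(\bar{H}_{t^*},\Lambda_{t^*})$ is replaced by $(\wh{H},\Lambda_{\hat{t}})$: writing $\wh{H}^{-1}-\bar{H}_{t^*}^{-1}=-\wh{H}^{-1}(\wh{H}-\bar{H}_{t^*})\bar{H}_{t^*}^{-1}$, using the conditional concentration of $\wh{H}$ around $H_{t^*|\wh{Y}}$ so that $\wh{H}-\bar{H}_{t^*}=(\wh{Y}-\bar{u}_{t^*})\Lambda_{t^*}+O(\varepsilon_n\lambda_n)+O_P(1)$ on $\{N(\mc{S}_{t^*})=1\}$ (on which $\|\hat{t}-t^*\|\le\varepsilon_n$), together with $\|\bar{H}_{t^*}^{-1}\|,\|\wh{H}^{-1}\|=O(\delta_n)$ and $\|\Lambda_{\hat{t}}-\Lambda_{t^*}\|=O(\varepsilon_n)$, shows the mean-shift changes by $O((|\bar{u}_{t^*}-\mu_{t^*}|+\Delta_n+1)\varepsilon_n\delta_n)$. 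A mean-value estimate for $m\mapsto\Psi(a-m)/\Psi(b-m)$ -- whose derivative is controlled by the hazard $\phi/\Psi$ at $u-m$, which on the truncation window is $O(|\bar{u}_{t^*}-\mu_{t^*}|+\Delta_n+1)$ -- then yields $|\wh{\S}_{\mu_{t^*}}(\wh{Y})-\bar{\S}_{\mu_{t^*}}(\wh{Y})|\le C(|\bar{u}_{t^*}-\mu_{t^*}|+\Delta_n+1)\varepsilon_n\delta_n$, hence~\eqref{eqn:height-pivot}.

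\emph{Location pivot.} Starting from $\bar p(t^*+h|y)$ in~\eqref{eqn:approximate-conditional-density-location}, note the prefactor $1+T_{10}^{\rho}(h)-\tfrac12 T_{30}^{\rho}(h)$ is an odd polynomial in $h$ (linear plus cubic), while $B_d(0,\varepsilon_n)$ and the ellipsoids $\{h:h'G_{t^*|y}h\le q\}$ are centrally symmetric, so this prefactor integrates to zero against $e^{-h'G_{t^*|y}h/2}$ over any such set; thus, under the idealized law, $\bar{W}_{t^*}(y,\hat{t})=\|G_{t^*|y}^{1/2}(\hat{t}-t^*)\|^2$ is a $\chi_d^2$ variable truncated at a level $\asymp\varepsilon_n^2\lambda_{\min}(G_{t^*|y})\gtrsim\log\lambda_n\to\infty$, and~\eqref{eqn:density-location-error} plus this truncation give $|\Q^{\mc{S}_{t^*}}(\bar{W}_{t^*}(y,\hat{t})\le q_{\chi_d^2}(\alpha)\,|\,\wh{Y}=y)-\alpha|=\Err_W(y)$. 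The substitution of $\wh{G}=\wh{H}\Lambda_{\hat{t}}^{-1}\wh{H}$ for $G_{t^*|y}$ produces the bias: using $\wh{H}\approx H_{t^*|y}$, $\Lambda_{\hat{t}}\approx\Lambda_{t^*}$ and the identity $G_{t^*|y}=H_{t^*|y}\Lambda_{t^*}^{-1}(-\nabla^2\mu_{t^*})$ gives $\wh{G}=G_{t^*|y}+(y-\mu_{t^*})H_{t^*|y}+(\text{lower order})$, so with $Z:=G_{t^*|y}^{1/2}(\hat{t}-t^*)\approx N(0,I_d)$,
\begin{equation*}
\wh{W}_{t^*}(\hat{t})=Z'\big(I_d+(y-\mu_{t^*})B\big)Z,\qquad B:=G_{t^*|y}^{-1/2}H_{t^*|y}G_{t^*|y}^{-1/2},\qquad \tr B=\tr\big((-\nabla^2\mu_{t^*})^{-1}\Lambda_{t^*}\big).
\end{equation*}
Since $(y-\mu_{t^*})\|B\|=O(\Delta_n\delta_n)\to0$, a first-order Taylor expansion in $(y-\mu_{t^*})$ -- decomposing $Z=R\Theta$ and averaging over the uniform direction $\Theta$, so $\E[\Theta'B\Theta]=\tr B/d$ -- yields $\Q^{\mc{S}_{t^*}}(\wh{W}_{t^*}(\hat{t})\le q_{\chi_d^2}(\alpha)\,|\,\wh{Y}=y)=\alpha-(y-\mu_{t^*})\tfrac{q_{\chi_d^2}(\alpha)}{d}f_{\chi_d^2}(q_{\chi_d^2}(\alpha))\tr B+\Err_W(y)$. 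The gamma identities $\tfrac{q}{d}f_{\chi_d^2}(q)=f_{\chi_{d+2}^2}(q)$ and $\E[\|Z\|^2\1\{\|Z\|^2\le q\}]=dF_{\chi_{d+2}^2}(q)$, with the recursion $F_{\chi_{d+2}^2}(q)=F_{\chi_d^2}(q)-2f_{\chi_{d+2}^2}(q)$, show $\tfrac{q_{\chi_d^2}(\alpha)}{d}f_{\chi_d^2}(q_{\chi_d^2}(\alpha))=\tfrac12 A$ with $A$ the constant in the statement, and $A=2f_{\chi_{d+2}^2}(q_{\chi_d^2}(\alpha))>0$; replacing $\tr B$ by $\tr(\bar{H}_{t^*}^{-1}\Lambda_{t^*})$ (a first-order change, absorbed into $\Err_W(y)$) and writing $y-\mu_{t^*}=(y-\bar{u}_{t^*})+(\bar{u}_{t^*}-\mu_{t^*})$ gives~\eqref{eqn:pivot-location}.

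\emph{Main obstacle.} The crux is the plug-in control of $\wh{H}=-\nabla^2 Y_{\hat{t}}$: one must show that, conditionally on selection (and on $\wh{Y}=y$, for the location part), the observed Hessian at the random selected location $\hat{t}$ concentrates on the deterministic $H_{t^*|y}$ at the right order -- in particular, once sandwiched between vectors $\hat{t}-t^*$ of magnitude $O_P(\delta_n)$, the Hessian fluctuations contribute only $O_P(\delta_n^2)$ to $\wh{W}_{t^*}(\hat{t})$, and analogously to $\tfrac12\tr(\wh{H}^{-1}\Lambda_{\hat{t}})$. This requires a concentration statement for the field's second derivatives at $\hat{t}$, obtainable from the Kac-Rice calculus behind Theorem~\ref{thm:approximate-joint-intensity} -- the conditional law of $\nabla^2 Y_t$ given $(Y_t,\nabla Y_t)$ is precisely what drives its determinant term -- together with Assumption~\ref{asmp:hessian-curvatures} and the uniform smoothness of $K$. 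A second, bookkeeping-heavy point is verifying that the off-window tail contributions are genuinely dominated by the stated errors; this is where the calibration of $\varepsilon_n,\Delta_n$ (the constant ``$6$'', or any constant $>4$) enters, ensuring the relevant Gaussian and $\chi_d^2$ tails decay faster than $\delta_n^2$ and beat the polynomial prefactors.
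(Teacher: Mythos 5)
Your overall strategy matches the paper's, and your explicit calculations are all correct: the oracle pivot $\bar{\S}_{\mu_{t^*}}$ is exactly uniform under $\bar{p}(y)$; the first-order prefactor $T_{10}^{\rho}(h)-\tfrac12 T_{30}^{\rho}(h)$ is odd in $h$ and vanishes by symmetry; the plug-in bias in $\wh{G}$ equals $(y-\mu_{t^*})H_{t^*|y}$ to leading order; and your identification of the constant $A$ as $2f_{\chi_{d+2}^2}(q_{\chi_d^2}(\alpha))$ agrees with the paper's expression $\alpha-\frac1d\E[\|Z\|^2\1\{\|Z\|^2\le q_{\chi_d^2}(\alpha)\}]$ via the gamma identities you cite. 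The height-pivot argument -- oracle pivotality via Lemma~\ref{lem:height-pivot}, plus a mean-value/hazard bound for the effect of swapping $\tr(\bar{H}_{t^*}^{-1}\Lambda_{t^*})$ for $\tr(\wh{H}^{-1}\Lambda_{\hat{t}})$ via Lemma~\ref{lem:gaussian-survival-function-perturbation} -- is essentially what the paper does in Lemmas~\ref{lem:height-pivot} and~\ref{lem:height-pivot-2}.

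The gap is the one you flag as the ``main obstacle,'' but it is larger than you suggest, and it is the core content of the paper's proof. Your argument for the location pivot treats $\wh{H}$ as a deterministic perturbation of $H_{t^*|y}$, then expands $\wh{W}=Z'(I_d+(y-\mu_{t^*})B)Z$ and averages over the uniform direction. This elides two things. First, the acceptance region $\{\hat t:\wh{W}_{t^*}(\hat t)\le q\}=\{h:h'\wh{G}h\le q\}$ is a \emph{random} ellipsoid (it depends on $\wh{H}$, hence on the residual Hessian $\wh{R}$ and on $\hat t$ itself), so the oddness-of-the-prefactor argument does not apply directly; the paper handles this by the studentization $\wh{Z}=\Lambda_{\hat t}^{-1/2}\wh{H}(\hat t - t^*)$, which maps the event to the deterministic ball $\{\|\wh{Z}\|\le\sqrt q\}$ and requires a genuine change-of-variables argument (Lemma~\ref{lem:approximate-density-studentized-peak}). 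Second, to control this change of variables and the accompanying errors to second order, one needs the \emph{joint} conditional law of $(\wh{R},\hat t)$ given $\wh{Y}$; these are not independent, and in fact are first-order correlated through the $T_{12}^{R}(h,R)$ term in Lemma~\ref{lem:approximate-joint-distribution-2}. The paper's Lemmas~\ref{lem:approximate-palm-density-hessian}--\ref{lem:approximate-density-studentized-peak} build exactly this second-order expansion of the Palm density of $\wh{R}$ given $(\hat t,\wh{Y})$, the joint density of $(\wh{R},\hat t)$ given $\wh{Y}$, and the density of the studentized peak. Your sketch ``the Hessian fluctuations contribute only $O_P(\delta_n^2)$'' is the right conclusion, but to obtain it at the stated accuracy (and in particular to isolate the leading bias term $A\cdot\frac{(y-\mu_{t^*})}{2}\tr(\bar{H}_{t^*}^{-1}\Lambda_{t^*})$ with an $O(\delta_n^2(\cdot))$ remainder rather than a cruder $o(1)$) one needs this joint-density machinery, not a pointwise concentration of $\wh{H}$.
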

	
	The proof of Theorem~\ref{thm:pivot} is given in Sections~\ref{subsec:pf-pivot-hessian}-\ref{subsec:pf-approximate-density-studentized-peak}. Theorem~\ref{thm:pivot} implies that $\wh{\S}_{\mu_{t^*}}(\wh{Y})$ is up to nearly \emph{second-order pivotal} for $\mu_{t^*}$. On the other hand, the Wald-pivot $\wh{W}_{t^*}(\hat{t})$ is at most first-order pivotal for $t^*$, due to the term
	\begin{equation*}
		A\cdot\frac{(y - \bar{u}_{t^*}) + (\bar{u}_{t^*} - \mu_{t^*})}{2}\tr(\bar{H}_{t^*}^{-1}\Lambda_{t^*}) = \wt{O}(|\bar{u} - \mu_{t^*}|\delta_n).
	\end{equation*} 
	The failure of $\wh{W}_{t^*}(\hat{t})$ to be second-order pivotal is caused by plugging in a biased estimate $\wh{H}$ for the nuisance parameter $-\nabla^2 \mu_{t^*}$ in estimating the precision $G_{t^*|y}$. 
	
	\subsection{Conditional coverage}
	\label{subsec:coverage}
	Confidence intervals for peak height, and confidence ellipsoids for peak location, are constructed at each discovery $\hat{t} \in \wh{T}_u$ by inverting tests based on the TG and Wald pivots: 
	\begin{equation}
		\label{eqn:confidence-regions}
		I_{\hat{t}} := \Big\{\mu \in \R: \frac{\alpha}{2} \leq \wh{\S}_{\mu}(\wh{Y}) \leq 1 - \frac{\alpha}{2}\Big\}, \quad C_{\hat{t}} := \Big\{t \in \Rd: \wh{W}_{t}(\hat{t}) \leq q_{\chi_d^2}(1 - \alpha)\Big\}.
	\end{equation}
	These confidence regions have asymptotic $(1 - \alpha)$ coverage under the conditions of Theorem~\ref{thm:pivot}.
	\begin{corollary}
		\label{cor:conditional-coverage}
		Under the conditions of Theorem~\ref{thm:pivot},
		\begin{equation*}
			\begin{aligned}
			\big|\Q^{\mc{S}_{t^*}}(\mu_{t^*} \in I_{\hat{t}}) - (1 - \alpha)\big| & \leq \Err_{\S}, \quad \textrm{and} \quad \\ \Big|\Q^{\mc{S}_{t^*}}(t^* \in C_{\hat{t}}) - (1 - \alpha)\Big| & \leq A\cdot \frac{(\E^{\mc{S}_{t^*}}[\wh{Y}]  - \mu_{t^*})}{2}\tr(\bar{H}_{t^*}^{-1}\Lambda_{t^*}) + C\delta_n^2\big(1 + (\bar{u}_{t^*} - \mu_{t^*})^2\big),
			\end{aligned}
		\end{equation*}
			where $\E^{\mc{S}_{t^*}}[\wh{Y}]$ denotes the expectation of $\wh{Y}$ under $\Q^{\mc{S}_{t^*}}$. 
	\end{corollary}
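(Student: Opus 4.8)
The plan is to derive both coverage statements by inverting the pivotal limits of Theorem~\ref{thm:pivot}; once that theorem is in hand the remaining work is essentially bookkeeping, with one slightly delicate moment estimate. Consider the height first. By the definition of $I_{\hat{t}}$ in~\eqref{eqn:confidence-regions}, the event $\mu_{t^*}\in I_{\hat{t}}$ is exactly $\{\tfrac{\alpha}{2}\le\wh{\S}_{\mu_{t^*}}(\wh{Y})\le 1-\tfrac{\alpha}{2}\}$, so
\[
\Q^{\mc{S}_{t^*}}(\mu_{t^*}\in I_{\hat{t}}) = \Q^{\mc{S}_{t^*}}\!\big(\wh{\S}_{\mu_{t^*}}(\wh{Y})\le 1-\tfrac{\alpha}{2}\big) - \Q^{\mc{S}_{t^*}}\!\big(\wh{\S}_{\mu_{t^*}}(\wh{Y})<\tfrac{\alpha}{2}\big).
\]
Under $\Q^{\mc{S}_{t^*}}$ the triple $(\hat{t},\wh{Y},\wh{H})$ is continuously distributed by the smoothness and non-degeneracy conditions of Section~\ref{subsec:model}, and $\wh{\S}_{\mu_{t^*}}(\wh{Y})$ is a non-constant smooth function of it; hence its law has no atoms and the strict inequality may be weakened. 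Applying~\eqref{eqn:height-pivot} once at level $\tfrac{\alpha}{2}$ and once at level $1-\tfrac{\alpha}{2}$ -- legitimate because the constant hidden in $\Err_{\S}$ is permitted to depend on the level -- gives $\Q^{\mc{S}_{t^*}}(\wh{\S}_{\mu_{t^*}}(\wh{Y})\le 1-\tfrac{\alpha}{2}) = 1-\tfrac{\alpha}{2}+O(\Err_{\S})$ and $\Q^{\mc{S}_{t^*}}(\wh{\S}_{\mu_{t^*}}(\wh{Y})\le\tfrac{\alpha}{2}) = \tfrac{\alpha}{2}+O(\Err_{\S})$; subtracting and absorbing the factor of two into the constant yields the first bound, $|\Q^{\mc{S}_{t^*}}(\mu_{t^*}\in I_{\hat{t}})-(1-\alpha)|\le\Err_{\S}$.

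For the location, $t^*\in C_{\hat{t}}$ is the event $\{\wh{W}_{t^*}(\hat{t})\le q_{\chi_d^2}(1-\alpha)\}$, and since $\wh{W}_{t^*}(\hat{t})$ is a function of $(\hat{t},\wh{H})$ alone I would condition on $\wh{Y}$ and integrate, using the tower property under $\Q^{\mc{S}_{t^*}}$:
\[
\Q^{\mc{S}_{t^*}}(t^*\in C_{\hat{t}}) = \int_{\mc{I}_{t^*}} \Q^{\mc{S}_{t^*}}\!\big(\wh{W}_{t^*}(\hat{t})\le q_{\chi_d^2}(1-\alpha)\,\big|\,\wh{Y}=y\big)\,p(y)\,dy .
\]
Substituting the level $1-\alpha$ into~\eqref{eqn:pivot-location}, the integrand equals $(1-\alpha)-A\cdot\tfrac{(y-\bar{u}_{t^*})+(\bar{u}_{t^*}-\mu_{t^*})}{2}\tr(\bar{H}_{t^*}^{-1}\Lambda_{t^*})+O(\Err_W(y))$, with $A$ the positive constant of Theorem~\ref{thm:pivot}, read at level $1-\alpha$. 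Integrating against $p(y)$, the affine bias term telescopes, since $(\E^{\mc{S}_{t^*}}[\wh{Y}]-\bar{u}_{t^*})+(\bar{u}_{t^*}-\mu_{t^*})=\E^{\mc{S}_{t^*}}[\wh{Y}]-\mu_{t^*}$, and produces exactly the leading term $A\cdot\tfrac{\E^{\mc{S}_{t^*}}[\wh{Y}]-\mu_{t^*}}{2}\tr(\bar{H}_{t^*}^{-1}\Lambda_{t^*})$ of the claimed bound (no absolute value is needed because $\wh{Y}$ is biased upward, so that $\E^{\mc{S}_{t^*}}[\wh{Y}]\ge\mu_{t^*}$ up to a term absorbed into the remainder, and $A>0$).

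The only genuinely delicate point is to bound the integrated remainder $\int_{\mc{I}_{t^*}}\Err_W(y)\,p(y)\,dy = C\delta_n^2\big(1+\E^{\mc{S}_{t^*}}[(\wh{Y}-\bar{u}_{t^*})^2]+(\bar{u}_{t^*}-\mu_{t^*})^2\big)$ by $C\delta_n^2(1+(\bar{u}_{t^*}-\mu_{t^*})^2)$, i.e.\ to show $\E^{\mc{S}_{t^*}}[(\wh{Y}-\bar{u}_{t^*})^2]=O(1)$. The crude bound $|\wh{Y}-\bar{u}_{t^*}|\le\Delta_n$ available under $\Q^{\mc{S}_{t^*}}$ only yields $O(\Delta_n^2)=O(\log\lambda_n)$ and hence a spurious logarithmic factor; to remove it I would use the marginal-density estimate~\eqref{eqn:density-height-error}, which identifies $p(y)$ up to a $(1+\Err_{p}(y))$ factor with a Gaussian truncated to $(u,\infty)$ and centred at $\mu_{t^*}+\tfrac12\tr(\bar{H}_{t^*}^{-1}\Lambda_{t^*})$, and then bound the second moment of that truncated Gaussian about $\bar{u}_{t^*}=\max\{u,\mu_{t^*}\}$ by an absolute constant. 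This splits into the case $u\le\mu_{t^*}$ (where $\bar{u}_{t^*}=\mu_{t^*}$, the mean shift is $O(\delta_n)$, and the second moment is plainly $O(1)$) and the case $u>\mu_{t^*}$ (where one uses the hypothesis $(\bar{u}_{t^*}-\mu_{t^*})\delta_n\to 0$ together with the tail identity $\int_{z>c}(z-c)^2\varphi(z)\,dz=O(c^{-2})\Psi(c)$ for large $c$ to get $\E^{\mc{S}_{t^*}}[(\wh{Y}-u)^2]=O(1)$, indeed $o(1)$ when $u-\mu_{t^*}\to\infty$). Combining the bias term with this remainder bound gives Corollary~\ref{cor:conditional-coverage}; if one is content with the weaker conclusion carrying an extra $\log\lambda_n$ factor, i.e.\ $\wt{O}(\delta_n^2)$, the crude moment bound already suffices and this last step can be skipped.
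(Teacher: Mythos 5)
Your proposal follows the same route as the paper: the height bound falls out by inverting Theorem~\ref{thm:pivot} at both tails, and the location bound is obtained by conditioning on $\wh{Y}$, applying~\eqref{eqn:pivot-location}, telescoping the affine bias against $\E^{\mc{S}_{t^*}}[\wh{Y}]$, and controlling $\int \Err_W(y)\,p(y)\,dy$ via the marginal-density approximation $\bar{p}(y)$ of Theorem~\ref{thm:approximate-joint-distribution} together with the truncated-Gaussian moment bound~\eqref{eqn:normalized-overshoot-moments}. You correctly flag the key delicacy (avoiding a spurious $\log\lambda_n$ from the crude $|\wh{Y}-\bar{u}_{t^*}|\le\Delta_n$ bound), and your side remark about $\E^{\mc{S}_{t^*}}[\wh{Y}]\ge\mu_{t^*}$ explains why the corollary's leading term appears without an absolute value.
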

	Corollary~\ref{cor:conditional-coverage} implies that  $I_{\hat{t}}$ has up to nearly second-order accurate coverage, as $\Err_{\S} = \wt{O}(\delta_n^2 + (\bar{u}_{t^*} - \mu_{t^*})\delta_n^2)$.  Under negligible selection pressure ($\mu_{t^*} \gg u$) coverage of the location $t^*$ is also nearly second-order accurate: in this case $\E^{\mc{S}_{t^*}}[\wh{Y}] - \mu_{t^*} = O(\delta_n)$ and Corollary~\ref{cor:conditional-coverage} implies $\Q^{\mc{S}_{t^*}}(t^* \in C_{\hat{t}}) - (1 - \alpha) = \wt{O}(\delta_n^2)$. However if selection pressure is not negligible, then coverage of the location is at best nearly first-order accurate; and under very strong selection pressure, $C_{\hat{t}}$ does not achieve nominal coverage, even asymptotically.
	
	Clearly, an important and challenging part of conducting valid inference for peaks involves dealing with nuisance parameters, particularly $-\nabla^2\mu_{t^*}$. In more traditional post-selection inference problems, a common strategy~\citep{fithian2014optimal,lee2016exact} is to condition on a sufficient statistic for the nuisance parameters, so that the resulting conditional distributions do not depend on the nuisance. This approach is effective when the data generating process belongs to a well-specified exponential family. However in our setting this strategy fails: the minimal sufficient statistic for the nuisance $-\nabla^2\mu_{t^*}$ is the entire field $Y$ itself, and any inference conducted conditional on $Y$ is trivial. Instead, in Section~\ref{sec:randomized-peak-inference} we pursue an approach based on auxiliary randomization, which leaves behind enough information after selection to accurately estimate $-\nabla^2 \mu_{t^*}$.

	\subsection{Marginal coverage}
	We measure marginal (mis)coverage of the overall method using per-comparison miscoverage rate (PCMR), defined in~\eqref{eqn:miscoverage} and copied here for convenience: recalling that $t^*(t) := \argmin_{t^* \in T^*} \|t^* - t\|$ denotes the true peak closest to a point $t$,
	\begin{equation*}
		{\rm PCMR}_{T^*}(u,v) := \frac{\E[N(t \in \wh{T}_{u}: t^*(t) \not\in C_t)]}{\E[N(\wh{T}_v)]}, \quad\textrm{and}\quad {\rm PCMR}_{\mu_{T^*}}(u,v) := \frac{\E[N(t \in \wh{T}_{u}: \mu_{t^*(t)} \not\in I_t)]}{\E[N(\wh{T}_v)]}.
	\end{equation*} 
	In other words, PCMR measures the expected number of confidence regions that fail to cover the location or height of the nearest peak, compared to the expected number of peaks at which inference is conducted.
	
	The conditional coverage guarantees of Corollary~\ref{cor:conditional-coverage} hold for any threshold $u$ satisfying the conditions of Theorem~\ref{thm:approximate-joint-distribution}, and do not specifically require that $u = u_{\TG}$ be chosen to calibrate the TG test for significance. Taking $u = u_{\TG}$ results in asymptotic marginal coverage: Theorem~\ref{thm:epsilon-fpr} implies that at most $100\alpha\%$ of selected peaks are falsely declared significant without being unique $\varepsilon_n$-consistent estimates of some $t^* \in T^*$, while Corollary~\ref{cor:conditional-coverage} implies that at most $100\alpha\%$-percent of $\varepsilon_n$-consistent estimate have corresponding confidence regions that fail to cover the truth. Together, these imply asymptotic control of PCMR at level
	\begin{equation*}
		\wb{{\rm PCMR}}(\alpha,v_n) := \frac{\sum_{t^* \in T^*}\P(N(\mc{S}_{t^*}) = 1) \cdot \alpha + \alpha \cdot \E[N_{v_n}(\mc{T}_0)]}{\sum_{t^* \in T^*}\P(N_{v_n}(\mc{B}_{t^*}) = 1) + \E[N_{v_n}(\mc{T}_0)]} \leq \alpha.
	\end{equation*}
	
	\begin{theorem}
		\label{thm:miscoverage}
		Fix $\alpha \in (0,1)$, and suppose $v_n \to \infty$. Under Assumptions~\ref{asmp:signal-holder}-\ref{asmp:interior-global-max}, for all $n \in \mathbb{N}$ sufficiently large:
		\begin{equation}
			\label{eqn:miscoverage-height}
			{\rm PCMR}_{\mu_{T^*}} \leq \wb{{\rm PCMR}}(\alpha,v_n)+ C\Big(\frac{1}{v_n^2} + \big(\sup_{t^* \in T^*}(\bar{u}_{t^*} - \mu_{t^*}) + \Delta_n + 1\big)\varepsilon_n \delta_n\Big).
		\end{equation} 
		If additionally $\sup_{t^* \in T^*}(\bar{u}_{t^*} - \mu_{t^*})\delta_n \to 0$, then
		\begin{equation}
			\label{eqn:miscoverage-location}
			{\rm PCMR}_{T^*}(\alpha,v_n) \leq \wb{{\rm PCMR}}(\alpha,v_n) + C\Big(\frac{1}{v_n^2} + \delta_n^2 + A \cdot \sup_{t^* \in T^*} (\E^{\mc{S}_{t^*}}[\wh{Y}]  - \mu_{t^*}) \delta_n + \sup_{t^* \in T^*} (\bar{u}_{t^*} - \mu_{t^*})^2\delta_n^2\Big).
		\end{equation}
	\end{theorem}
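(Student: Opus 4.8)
The plan is to combine the three region-wise estimates already in hand — Proposition~\ref{prop:null-false-positive-rate} (null region), Proposition~\ref{prop:high-gradient-peaks} (high-gradient region), and the conditional coverage of Corollary~\ref{cor:conditional-coverage} ($\varepsilon_n$-consistent region) — along the same partition $\mc{T} = \mc{T}_0 \cup \mc{T}_{\varepsilon_n}^{*} \cup \mc{G}_{\varepsilon_n}$ used to prove Theorem~\ref{thm:epsilon-fpr}. Since Assumption~\ref{asmp:well-conditioned} forces true peaks to be separated by a fixed constant, for $n$ large $\mc{T}_{\varepsilon_n}^{*} = \bigsqcup_{t^* \in T^*} B_d(t^*,\varepsilon_n)$ is a disjoint union on which $t^*(t) = t^*$ throughout $B_d(t^*,\varepsilon_n)$. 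I will give the argument for ${\rm PCMR}_{\mu_{T^*}}$; the argument for ${\rm PCMR}_{T^*}$ is word-for-word identical after replacing $I_{\hat t}$ by $C_{\hat t}$ and the height-coverage bound of Corollary~\ref{cor:conditional-coverage} by its location-coverage bound, and it is precisely the latter that forces the additional hypothesis $\sup_{t^*}(\bar{u}_{t^*} - \mu_{t^*})\delta_n \to 0$, this being the condition of Theorem~\ref{thm:pivot} (hence of Corollary~\ref{cor:conditional-coverage}) under which $\wh{W}_{t^*}(\hat t)$ is asymptotically $\chi_d^2$.

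First I would bound the numerator of ${\rm PCMR}_{\mu_{T^*}}$ by splitting over the three regions and, in $\mc{T}_0$ and $\mc{G}_{\varepsilon_n}$, overbounding \emph{miscovering} discoveries by \emph{all} discoveries: with $u = u_{\TG}(\alpha,v_n)$,
\begin{equation*}
\E[N(t\in\wh{T}_u:\mu_{t^*(t)}\notin I_t)] \le \E[N_u(\mc{T}_0)] + \E[N_u(\mc{G}_{\varepsilon_n})] + \sum_{t^*\in T^*}\E\big[N(t\in\wh{T}_u\cap B_d(t^*,\varepsilon_n):\mu_{t^*}\notin I_t)\big].
\end{equation*}
The null term is handled by the Palm-survival analysis behind Proposition~\ref{prop:null-false-positive-rate} (cf.\ Lemma~\ref{lem:survival-height-null}), which in fact gives the localized bound $\E[N_u(\mc{T}_0)] \le (1+C/v_n^2)\alpha\,\E[N_{v_n}(\mc{T}_0)]$ already used in the proof of Theorem~\ref{thm:epsilon-fpr}; the high-gradient term is $\le C\delta_n^2\sum_{t^*}\bar{\E}[N(\mc{S}_{t^*})]$ by Proposition~\ref{prop:high-gradient-peaks}. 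For each ball I would then split its discoveries according to whether their height lies in $\mc{I}_{t^*} = (\bar{u}_{t^*}\pm\Delta_n)\cap(u,\infty)$. Discoveries with height $>u$ but outside $\mc{I}_{t^*}$ number exactly $N_u(\mc{B}_{t^*}) - N(\mc{S}_{t^*})$, whose expectation is $\le \wb{\Trunc}_{\rho}(\mc{I}_{t^*})\,\bar{\E}[N(\mc{S}_{t^*})] = C\delta_n^2\,\bar{\E}[N(\mc{S}_{t^*})]$ by the second bound of Proposition~\ref{prop:expectation-counting-process}. Among discoveries with height in $\mc{I}_{t^*}$, the configuration $\{N(\mc{S}_{t^*})\ge 2\}$ contributes at most $\E[N(\mc{S}_{t^*})\1\{N(\mc{S}_{t^*})\ge 2\}]$, which is exponentially small relative to $\bar{\E}[N(\mc{S}_{t^*})]$ by the high-curvature point-process estimate (Proposition~\ref{prop:no-more-than-one-process-peak-per-signal-peak}), while on $\{N(\mc{S}_{t^*}) = 1\}$ the contribution equals $\P(N(\mc{S}_{t^*}) = 1)\,\Q^{\mc{S}_{t^*}}(\mu_{t^*}\notin I_{\hat t}) \le (\alpha + \Err_{\S})\,\P(N(\mc{S}_{t^*}) = 1)$ by Corollary~\ref{cor:conditional-coverage}.

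Next I would assemble the ratio. Lower bounding the denominator by $\E[N(\wh{T}_{v_n})] \ge \E[N_{v_n}(\mc{T}_0)] + \sum_{t^*}\E[N_{v_n}(\mc{B}_{t^*})] \ge \E[N_{v_n}(\mc{T}_0)] + \sum_{t^*}\P(N_{v_n}(\mc{B}_{t^*}) = 1)$ and dividing, the $\alpha$-weighted main terms $\alpha\E[N_{v_n}(\mc{T}_0)] + \alpha\sum_{t^*}\P(N(\mc{S}_{t^*}) = 1)$ over this denominator are exactly $\wb{{\rm PCMR}}(\alpha,v_n)$. Each leftover term divides cleanly: the $C/v_n^2$ factor from $\mc{T}_0$ gives $O(1/v_n^2)$; the $\Err_{\S}$ slack gives $\wt{O}\big((\sup_{t^*}(\bar{u}_{t^*} - \mu_{t^*}) + \Delta_n + 1)\varepsilon_n\delta_n\big)$; and the $C\delta_n^2$ factors from $\mc{G}_{\varepsilon_n}$ and from the ``extra-height'' discoveries give $O(\delta_n^2)$, using Propositions~\ref{prop:expectation-counting-process} and~\ref{prop:no-more-than-one-process-peak-per-signal-peak} to pass from $\bar{\E}[N(\mc{S}_{t^*})]$ to $\P(N_{v_n}(\mc{B}_{t^*}) = 1)$ in the denominator (they agree up to a $(1+o(1))$ factor and an exponentially small term). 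This is~\eqref{eqn:miscoverage-height}. For~\eqref{eqn:miscoverage-location} the only change is that on $\{N(\mc{S}_{t^*}) = 1\}$ one uses $\Q^{\mc{S}_{t^*}}(t^*\notin C_{\hat t}) \le \alpha + A\,\frac{1}{2}(\E^{\mc{S}_{t^*}}[\wh{Y}] - \mu_{t^*})\tr(\bar{H}_{t^*}^{-1}\Lambda_{t^*}) + C\delta_n^2(1+(\bar{u}_{t^*}-\mu_{t^*})^2)$, together with $\tr(\bar{H}_{t^*}^{-1}\Lambda_{t^*}) \le C\delta_n$ (since $\bar{H}_{t^*}\succeq -\nabla^2\mu_{t^*}$ and $\lambda_{\min}(-\nabla^2\mu_{t^*}) = \delta_{t^*}^{-1}\ge\delta_n^{-1}$), which reproduces the stated error.

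The step I expect to be the main obstacle is the matching in the $\varepsilon_n$-consistent region: the numerator of PCMR counts every discovery in $\wh{T}_u$ against its nearest true peak through the unconditional event $\{\mu_{t^*(t)}\notin I_t\}$, whereas Corollary~\ref{cor:conditional-coverage} speaks only to the \emph{single} peak governed by $N(\mc{S}_{t^*}) = 1$. Bridging this gap requires discarding (i) discoveries inside $B_d(t^*,\varepsilon_n)$ whose height escapes the window $\mc{I}_{t^*}$ and (ii) the configuration $N(\mc{S}_{t^*})\ge 2$; both are controlled by the high-curvature point-process machinery (Propositions~\ref{prop:expectation-counting-process} and~\ref{prop:no-more-than-one-process-peak-per-signal-peak}), but one must check that this negligibility survives division by a denominator $\P(N_{v_n}(\mc{B}_{t^*}) = 1)$ that can itself be small when selection pressure is strong — which is exactly why every intermediate estimate is phrased as a \emph{relative} bound against $\bar{\E}[N(\mc{S}_{t^*})]$ rather than an absolute one.
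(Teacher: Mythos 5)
Your proposal follows the paper's own proof essentially step-for-step: same partition $\mc{T} = \mc{T}_0 \cup \mc{T}_{\varepsilon_n}^* \cup \mc{G}_{\varepsilon_n}$, same overbounding of miscoverage by total discoveries in $\mc{T}_0$ and $\mc{G}_{\varepsilon_n}$ (Propositions~\ref{prop:null-false-positive-rate} and~\ref{prop:high-gradient-peaks}), same refinement inside each $\mc{B}_{t^*}$ — discard discoveries with height outside $\mc{I}_{t^*}$ via the second display of Proposition~\ref{prop:expectation-counting-process}, discard $\{N(\mc{S}_{t^*}) \geq 2\}$ via Proposition~\ref{prop:no-more-than-one-process-peak-per-signal-peak}, and on $\{N(\mc{S}_{t^*}) = 1\}$ invoke Corollary~\ref{cor:conditional-coverage} — then divide by the denominator lower bound $\E[N_{v_n}(\mc{T}_0)] + \sum_{t^*} \P(N_{v_n}(\mc{B}_{t^*}) = 1)$. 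Your explicit retention of the $\P(N(\mc{S}_{t^*}) = 1)$ weight on the $\{N(\mc{S}_{t^*}) = 1\}$ term is in fact slightly more careful than the paper's display in Section~\ref{subsec:pf-miscoverage}, which omits that factor (it reads as an equality $\E[V(\mc{S}_{t^*})\1(N(\mc{S}_{t^*})=1)] = \Q^{\mc{S}_{t^*}}(t^* \notin C_{\hat t})$ where it should read $\P(N(\mc{S}_{t^*})=1)\cdot\Q^{\mc{S}_{t^*}}(\cdot)$, and also writes $\geq 1$ where $\geq 2$ is meant); your version is what is actually needed to match the numerator against the denominator. No gaps — this is a correct reconstruction of the intended argument.
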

	For the proof of Theorem~\ref{thm:miscoverage} see Section~\ref{subsec:pf-miscoverage}. When there is asymptotically negligible selection pressure -- i.e. $\min_{t^* \in T^*} \mu_{t^*} - u_{\TG}(\alpha,v_n) \to \infty$ --  then $\inf_{t^* \in T^*}\P(N(\mc{S}_{t^*}) = 1)/\P(N_{v}(\mc{B}_{t^*}) = 1) \to 1$ and $\wb{{\rm PCMR}}(\alpha,v_n)\to \alpha$. This corresponds to a limiting regime where there is exactly one selected peak $\hat{t} \in \wh{T}_v$ that consistently estimates each $t^* \in T^*$, and that is declared significant with probability tending to one; in this case limiting miscoverage will be exactly the nominal level $\alpha$. Otherwise $\wb{{\rm PCMR}}(\alpha,v_n) < \alpha$ and the overall procedure will be conservative, since some consistent selections $\hat{t} \in \wh{T}_v$ may not be declared significant by the TG test.
	
	\section{Peak inference with a randomized field}
	\label{sec:randomized-peak-inference}
	
	There are at least two issues with our non-randomized method for selective peak inference. First, the confidence intervals for the height are based on a Truncated Gaussian distribution, and may therefore be quite wide, particularly if there is strong selection pressure~\citep{kivaranovic2021length}. Second, the confidence regions for location may not have nominal coverage, even asymptotically. Both problems become more dramatic as the significance threshold $u$ increases. One way of understanding this is through the idea of left over (Fisher) information~\citep{fithian2014optimal}.  Intuitively, as $u$ increases and selection pressure grows, there is less information left over for $\mu_{t^*}$, and consequently less information for the Hessian $\nabla^2 \mu_{t^*}$ as well. Less information for $\mu_{t^*}$ leads to wider intervals for the height; less information for $\nabla^2\mu_{t^*}$ leads to less accurate estimates of nuisance parameters, a less pivotal Wald statistic, and ultimately confidence regions with less accurate coverage.
	
	A conceptually simple way of achieving valid inferences after selection is \emph{data splitting}~\citep{cox1975note,wasserman2009high,kriegeskorte2009circular,kriegeskorte2010everything}, which always sets aside some information for inference. However, in our generic problem setup there is only replicate which cannot further be split. A separate issue is that in post-model-selection inference it has been observed that data splitting can be ineffecient~\citep{fithian2014optimal}. We propose an alternative method for randomized peak inference that addresses both issues. Roughly speaking the method works as follows: first, synthetic randomization is injected in a way that is designed to mimic the effects of data splitting, similar to~\citet{tian2018selective,rasines2023splitting,leiner2025data}. The peak detection portion of Algorithm~\ref{alg:post-selective-inference-after-peak-detection-via-TG-test} -- i.e. pre-thresholding followed by significance testing using the TG test -- is then applied to this randomized field. Finally, post-selection inference for height and location is conducted using the resulting discoveries, in a way that is designed to use all of the information left over for inference, along the lines of \emph{data carving}~\citep{fithian2014optimal,tian2018selective}. 
	
	As in the non-randomized setting, our method for peak inference after randomized selection will be based on asymptotic expansions of the density of (randomized) peaks. We derive these expansions heuristically in Section~\ref{sec:randomized-peak-inference-analysis}, but do not provide explicit upper bounds on the relative error, nor theoretical guarantees on coverage. Instead we confirm experimentally in Section~\ref{sec:experiments} that the method achieves close to nominal coverage, both conditionally and marginally. 
	
	\subsection{Randomized peak detection}
	Our randomized method for peak detection works as follows. First we sample $\omega \sim N(0, C)$, independently of $Y$. We then use $\omega$ to ``split'' the field $Y$ into two parts,
	$$
	Y_t^{\sel} := Y_t + \sqrt{\gamma} \cdot \omega_t, \quad  Y_{t}^{\inf} := Y_t - \frac{1}{\sqrt{\gamma}}\cdot\omega_t,
	$$ 
	where $\gamma \in (0,\infty)$ is a user-determined parameter controlling the degree of randomization. Notice that by construction, the fields $Y_t^{\sel}$ and $Y_t^{\inf}$ are independent. We produce a set of candidate peaks for inference by first pre-thresholding the peaks of $Y_t^{\sel}$, 
	$$
	\wc{T}_v := \{t \in \mc{T}: Y_{t}^{\sel} > v, \nabla Y_t^{\sel} = 0, \nabla^2 Y_t^{\sel} \prec 0\},
	$$
	and then applying the TG test for peak detection to thresholded peaks $\check{t} \in \wc{T}_v$, resulting in discoveries 
	$$
	\wc{T}_{u} = \{t \in \mc{T}: Y_{t}^{\sel} > u, \nabla Y_t^{\sel} = 0, \nabla^2 Y_t^{\sel} \prec 0\}.
	$$
	Just as in the non-randomized setting, the randomized peaks detected by the TG test tend to consistently discover true peaks. Concretely, the results of Section~\ref{sec:peak-estimation} imply that applying the TG test to each $\check{t} \in \wc{T}_v$, with significance threshold $u_{\TG}(\alpha,v) \cdot \sigma_{\gamma}$ asymptotically controls the $\varepsilon_n^{\gamma}$-PCER, where $\sigma_{\gamma} := \sqrt{1 + \gamma}$, $\varepsilon_n^{\gamma} := \delta_n^{\gamma} \sqrt{6\sigma_1^2 \log(1/\delta_{n}^{\gamma})}$ and $\delta_n^{\gamma} = \delta_n\cdot \sigma_{\gamma}$. 
	
	Randomization guarantees that at least $1/\Var[Y_t^{\inf}] = \gamma/(1 + \gamma) := \pi$ fraction of the total information is left over for inference, regardless of the degree of selection pressure. But there is a tradeoff: more randomization increases the information left over for inference, but can also result in less power for discovering true peaks (at the same nominal level $\alpha$).
	
	\subsection{Post-selection peak inference via data carving}
	
	Our high-level goal for inference after randomized peak detection remains the same as in the non-randomized setting: we would like to construct confidence regions for the location and height of some $t^* \in T^*$ that is nearby a discovery $\check{t} \in \wc{T}_{u}$, given that such a true peak happens to exist. One way of achieving this, in the spirit of data splitting, is to only use $Y^{\inf}$ to form the regions. Since $Y^{\inf}$ and $Y^{\sel}$ are independent, the distribution of peaks of $Y^{\inf}$ is unaffected by selection, and (second-order accurate) confidence regions can be derived from the marginal distribution of peaks of $Y^{\inf}$. We give a concrete method along these lines in Section~\ref{subsec:split}, which we term selective peak inference via \emph{data splitting} (despite the fact that in our setup there is only one replicate $Y$.)
	
	We now propose an alternative that is more similar in spirit to \emph{data carving}~\citep{fithian2014optimal,tian2018selective}. As in data splitting, in data carving only $Y^{\sel}$ is used for peak detection. However, data carving forms confidence regions using the full data $Y$, incorporating information from both $Y^{\sel}$ and $Y^{\inf}$. Intuitively, by using all the information left after selection for conditional inference, data carving leads to more powerful inferences and tighter confidence regions than data splitting, while maintaining the same asympotic coverage guarantees. We demonstrate this experimentally in Section~\ref{sec:experiments}.
	
	\paragraph{Conditional distribution after randomized selection.}
	To construct confidence regions we examine the post-randomized selection distribution of a full-data peak $\hat{t} \in \wh{T}$ about a true peak $t^* \in T^*$. Roughly speaking, this is the conditional distribution given that there is a unique randomized peak $\check{t} \in \wc{T}$ that $\varepsilon_n^{\gamma}$-consistently discovers $t^*$.\footnote{More precisely, the distribution conditions on the event that (i) there is a unique randomized peak $\check{t} \in \wc{T}$ that $\varepsilon_n^{\gamma}$-consistently discovers $t^*$; (ii) there is a unique full-data peak $\hat{t} \in \wh{T}$ that $\varepsilon_n$-consistently discoveres $t^*$; (iii) $Y_{\check{t}}^{\sel} \in \bar{u}_{t^*}^{\gamma} \pm \Delta_n$, (iv) $\wh{Y} \in \bar{u}_{t^*}^{\gamma} \pm \Delta_n$ where $\bar{u}_{t^*}^{\gamma} = (1 - \pi) \bar{u}_{t^*} + \pi \mu_{t^*}$ . However (ii) - (iv) occur with high probability as $\delta_n \to 0$ and thus do not have a large effect on the conditional distribution. See Section~\ref{subsec:carve-densities} for more details.} Heuristic calculations carried out in Section~\ref{sec:randomized-peak-inference-analysis} suggest that a second-order accurate approximation to the post-randomized selection density of $\wh{Y}$ is
	\begin{equation}
		\label{eqn:density-height-carve}
		\bar{p}^{\carve}(y):\propto\Psi\Big(\frac{u - y - \frac{\gamma}{2}\tr(\bar{H}_{t^*}^{-1}\Lambda_{t^*})}{\sqrt{\gamma}}\Big) \cdot \frac{1}{\sqrt{2\pi}}\exp\Big(-\frac{1}{2}(y - \mu_{t^*} - \frac{1}{2}\tr(\bar{H}_{t^*}^{-1}\Lambda_{t^*}))^2\Big),
	\end{equation}
	while a second-order accurate approximaton to density of $\hat{t}|\wh{Y} = y$ is
	\begin{equation}
		\label{eqn:density-location-carve}
		\bar{p}^{\carve}(t^* + h|y) := (1 + T_{10}^{\carve}(h) + T_{30}^{\carve}(h)) \cdot \frac{\sqrt{\det(G_{t^*|y})}}{\sqrt{2\pi}} \exp\Big(-\frac{1}{2}h'G_{t^*|y}h\Big).
	\end{equation}
	(The first-order terms $T_{10}^{\carve}(h),T_{30}^{\carve}(h)$ are defined in Section~\ref{sec:randomized-peak-inference-analysis}.) We call these calculations heuristic as we do not provide formal upper bounds on the error; we expect that they could be made fully rigorous using similar techniques to those used to prove the results of Sections~\ref{sec:peak-intensity-local-expansion}-\ref{sec:peak-distribution}, but leave this to future work.
	
	Examining these asymptotic densities reveals the effect of randomization on the post-selection distribution of height and location.  Comparing~\eqref{eqn:density-height-carve} to~\eqref{eqn:approximate-marginal-density-height} shows that the effect of randomization on the post-selection density of $\wh{Y}$ is to replace the indicator $\1(y > u)$ by the ``soft-truncation'' Gaussian survival  function.\footnote{Here we are borrowing terminology from~\citet{panigrahi2023approximate}, who refer to likelihoods with a similar functional form to~\eqref{eqn:density-height-carve} as ``soft-truncated'' likelihoods.} This is a familiar consequence of randomization in post-selection inference problems~\citep{tian2018selective,panigrahi2023approximate}. We call the asymptotic post-randomized selection distribution of $\wh{Y}$ a \emph{soft Truncated Gaussian} (soft TG) distribution. To a leading order, the soft TG asymptotic limit matches the distribution of $Y_{t^*}|Y_{t^*}^{\sel} > u$, as might be expected. Our analysis suggests that once again second-order accuracy is achieved through applying a first-order mean-shift to this reference distribution.
	
	On the other hand~\eqref{eqn:density-location-carve} suggests that the post-randomized selection distribution of $\hat{t}|\wh{Y} = y$ has the same conditional Goldilocks precision matrix as under non-randomized selection. However under randomized selection $\wh{Y}$ has a different deterministic limit -- $\bar{u}_{t^*}^{\gamma} = \pi \bar{u}_{t^*} + (1 - \pi)\mu_{t^*}$ rather than $\bar{u}_{t^*}$ -- and thus under randomized selection the conditional Goldilocks matrix is closer to 
	\begin{equation*}
		G_{t^*|\bar{u}_{t^*}^{\gamma}} = (-\nabla^2\mu_{t^*}) \Lambda_{t^*}(-\nabla^2\mu_{t^*})+ (1 - \pi) (\bar{u}_{t^*} - \mu_{t^*}) \Lambda_{t^*},
	\end{equation*}
	than to $\bar{G}_{t^*}$. Since $(-\nabla^2\mu_{t^*}) \Lambda_{t^*}(-\nabla^2\mu_{t^*}) \prec G_{t^*|\bar{u}_{t^*}^{\gamma}} \prec \bar{G}_{t^*}$, we see that randomized selection increases precision for the location, but by a smaller amount than non-randomized selection. On the other hand, after randomized selection we can use $Y^{\inf}$ to construct an estimate of precision that is asymptotically unbiased even under strong selection pressure. 
	
	\paragraph{Pivots and confidence regions.}
	To construct approximate pivots, we use the limiting distributions suggested by~\eqref{eqn:density-height-carve} and~\eqref{eqn:density-location-carve}, and then plug in estimates for nuisance parameters. For the height, this leads to the \emph{soft TG} pivot:
	\begin{equation}
		\label{eqn:height-pivot-carve}
		\wh{\STG}_{\mu_{t^*}}(y) := \frac{\int_{-\infty}^{y} \Psi\Big(\frac{u - z - \frac{\gamma}{2}\tr(\wh{H}^{-1}\Lambda_{\hat{t}})}{\sqrt{\gamma}}\Big) \cdot \frac{1}{\sqrt{2\pi}}\exp\Big(-\frac{1}{2}(z - \mu_{t^*} - \frac{1}{2}\tr(\wh{H}^{-1}\Lambda_{\hat{t}}))^2 \Big) \,dz}{\int_{-\infty}^{\infty} \Psi\Big(\frac{u - z - \frac{\gamma}{2}\tr(\wh{H}^{-1}\Lambda_{\hat{t}})}{\sqrt{\gamma}}\Big) \cdot \frac{1}{\sqrt{2\pi}}\exp\Big(-\frac{1}{2}(z - \mu_{t^*} - \frac{1}{2}\tr(\wh{H}^{-1}\Lambda_{\hat{t}}))^2 \Big) \,dz}.
	\end{equation}
	(This is not available in closed-form but can be calculated numerically.) For the location, we again use a \emph{Wald}-type pivot; but now plug in the estimate $\wh{H}^{\inf} := -\nabla^2 Y_{\hat{t}}^{\inf}$ for the nuisance parameter $-\nabla^2\mu_{t^*}$:
	\begin{equation}
		\label{eqn:location-pivot-carve}
		\wh{W}_{t^*}^{\carve}(t) := (t - t^*)'\wh{H}\Lambda_{\hat{t}}^{-1}\wh{H}^{\inf}(t - t^*).
	\end{equation}
	As in the non-randomized setting, we can construct confidence regions by inverting tests based on the soft TG and Wald pivots, resulting in 
	\begin{equation}
		\label{eqn:confidence-regions-carve}
		\begin{aligned}
			I_{\hat{t}}^{{\rm carve}} & := \Big\{\mu: \frac{\alpha}{2} \leq \wh{\STG}_{\mu}(\wh{Y}) \leq 1 - \frac{\alpha}{2}\Big\}, \quad\textrm{and}\quad C_{\hat{t}}^{{\rm carve}} := \Big\{t: \wh{W}_{t}^{\carve}(\hat{t}) \leq q_{\chi_d^2}(1 - \alpha) \Big\}. \\
		\end{aligned}
	\end{equation}
	 For convenience, we summarize this method for peak inference via data carving in Algorithm~\ref{alg:carve}. 
	 
	We expect that randomization should improve post-selection inference for both height and location, but in different ways. For the height, we expect that $I_{\hat{t}}^{{\rm carve}}$ should have asymptotic up to second-order accurate coverage, which is the same order of accuracy achieved by the non-randomized $I_{\hat{t}}$. However we expect the post randomized-selection intervals to be significantly shorter under strong selection pressure, as is the case in more traditional post-selection inference problems~\citep{tian2018selective,panigrahi2023approximate,rasines2023splitting}. For the location, the distribution of $\wh{H}^{\inf}$ is only weakly affected by selection -- since $\wh{H}^{\inf}$ depends on $Y^{\sel}$ only through the location $\hat{t}$ at which it is evaluated -- and in particular $\wh{H}^{\inf}$ should be an asymptotically unbiased estimate of $-\nabla^2\mu_{t^*}$, with relative error converging to $0$ in probability.  As a result we expect that $C_{\hat{t}}^{\carve}$ should have asymptotically up to second-order accurate coverage, even under non-negligible selection pressure, which improves on the coverage of the non-randomized $C_{\hat{t}}$. In Section~\ref{sec:experiments} we verify both of these improvements experimentally.
	
	\begin{algorithm}[t]
		\caption{Selective peak inference via data carving}
		\label{alg:carve}
		\begin{algorithmic}[1]
			\REQUIRE Field $Y$, gradient covariance $\Cov[\nabla \epsilon_t]$, pre-threshold $v \in \mathbb{R}$, significance level $\alpha \in (0,1)$, randomization level $\gamma > 0$.
			\STATE Select peaks $\wc{T}_v$ by pre-thresholding $Y^{\sel}$.
			\STATE At each peak $t \in \wc{T}_v$, test the null hypothesis $\mu_{t} = 0$ by applying the TG test to $Y^{\sel}$ with significance threshold $\sigma_{\gamma} \cdot u_{\TG}(\alpha,v)$, making discoveries $\wc{T}_{u_{\TG}(\alpha,v)}$.
			\STATE For each discovery $\check{t} \in \wc{T}_{u_{\TG}(\alpha,v)}$, let $\hat{t} := \argmin_{t \in \wh{T}} \|t - \check{t}\|$ be the nearest peak in $Y$. Compute a confidence region $I_{\hat{t}}^{\carve}$ for the height, and a confidence ellipsoid $C_{\hat{t}}^{\carve}$ for the location, using~\eqref{eqn:height-pivot-carve}-\eqref{eqn:confidence-regions-carve}.
		\end{algorithmic}
	\end{algorithm}
	
	\section{Experiments}
	\label{sec:experiments}
	
	We conduct some experiments on simulated data to first empirically evaluate the finite-curvature accuracy of our asymptotic theory, and then to compare the performance of our various proposals for selective peak inference.%, before working through a real data example.
	
	\subsection{Validation of asymptotic theory}
	\label{subsec:experiment-1}
	To evaluate the finite-curvature accuracy of the asymptotic theory of Section~\ref{sec:peak-distribution}, we draw $50000$ independent replicates of a two-dimensional random field $Y \sim N(\mu,C)$ defined over domain $\mc{T} = [-1,1]^2$, with signal and covariance kernel 
	\begin{equation*}
		\mu_t = \mu_{0} \cdot K(0,t), \quad K(s,t) = \exp\Big(-\frac{\|s - t\|^2}{2 \cdot 0.15^2}\Big),
	\end{equation*}
	so that $T^* = \{0\}$ and $\mu_{t^*} = \mu_0$. We retain only those replicates where there is a single peak $\hat{t} \in \wh{T}_u$ within distance $\varepsilon_n$ of $t^* = 0$. This is repeated for $\mu_0 \in \{3,4,5,\cdots,11\}$ to study the effect of stronger signal/higher curvature,  and thresholds $u \in \{\mu_0 - 2,\mu_0,\mu_0 + 2\}$, to compare results under weak, moderate, and strong selection pressure. The results are displayed in Figure~\ref{fig:experiment-1-2d}.
	
	For the location, we compare how often the quantities
	\begin{align*}
		\chi_d^2\big((\hat{t} - t^*)'\bar{G}_{t^*}(\hat{t} - t^*)\big), \quad  \chi_d^2\big((\hat{t} - t^*)'\nabla^2\mu_{t^*}\Lambda_{t^*}^{-1}\nabla^2\mu_{t^*}(\hat{t} - t^*)\big), \quad \textrm{and} \quad \chi_d^2\big((\hat{t} - t^*)'\bar{H}_{t^*}\Lambda_{t^*}^{-1}\bar{H}_{t^*}(\hat{t} - t^*)\big)
	\end{align*}
	fall below the particular choice $\alpha = 0.2$, to evaluate how close empirical precision is to the asymptotically correct $\bar{G}_{t^*}$, as opposed to the sandwich precision with either the marginal Hessian $-\nabla^2\mu_{t^*}$ or the conditional Hessian $\bar{H}_{t^*}$. As predicted by the asymptotic theory, the precision is increasingly close to $\bar{G}_{t^*}$ as $\mu_0$ is increased, across different choices of $u$. In contrast, neither of the sandwich precision matrices are correct under moderate or strong selection pressure. 
	
	For the height, we compare the distribution of
	\begin{align*}
		\bar{\S}_{\mu_{t^*}}(\wh{Y}) = \frac{\Psi(\wh{Y} - \mu_{t^*} - \frac{1}{2}\tr(\bar{H}_{t^*}^{-1}\Lambda_{t^*}))}{\Psi(u  - \mu_{t^*} - \frac{1}{2}\tr(\bar{H}_{t^*}^{-1}\Lambda_{t^*}))}, \quad \frac{\Psi(\wh{Y} - \mu_{t^*})}{\Psi(u  - \mu_{t^*})}, \quad \textrm{and} \quad \Psi(\wh{Y} - \mu_{t^*}).
	\end{align*}
  	The asymptotically correct $\bar{\S}_{\mu_{t^*}}(\wh{Y})$ is well calibrated for all $\mu_0$, while the naive choice $\Psi(\wh{Y} - \mu_{t^*})$, which does not account for selection, is not calibrated under moderate or strong selection pressure. Additionally, at lower signal strength and under weak and moderate selection pressure, the first-order mean-shift correction $\frac{1}{2}\tr(\bar{H}_{t^*}^{-1}\Lambda_{t^*})$ noticeably improves calibration.
	
	Additional experiments presented in Section~\ref{sec:additional-experiments} demonstrate that these conclusions are robust to the choice of $\alpha$, and are qualitatively similar in a one-dimensional setting.
	
	\begin{figure}[htbp]
		\centering
		\begin{subfigure}[t]{0.32\linewidth}
			\centering
			\includegraphics[width=\linewidth]{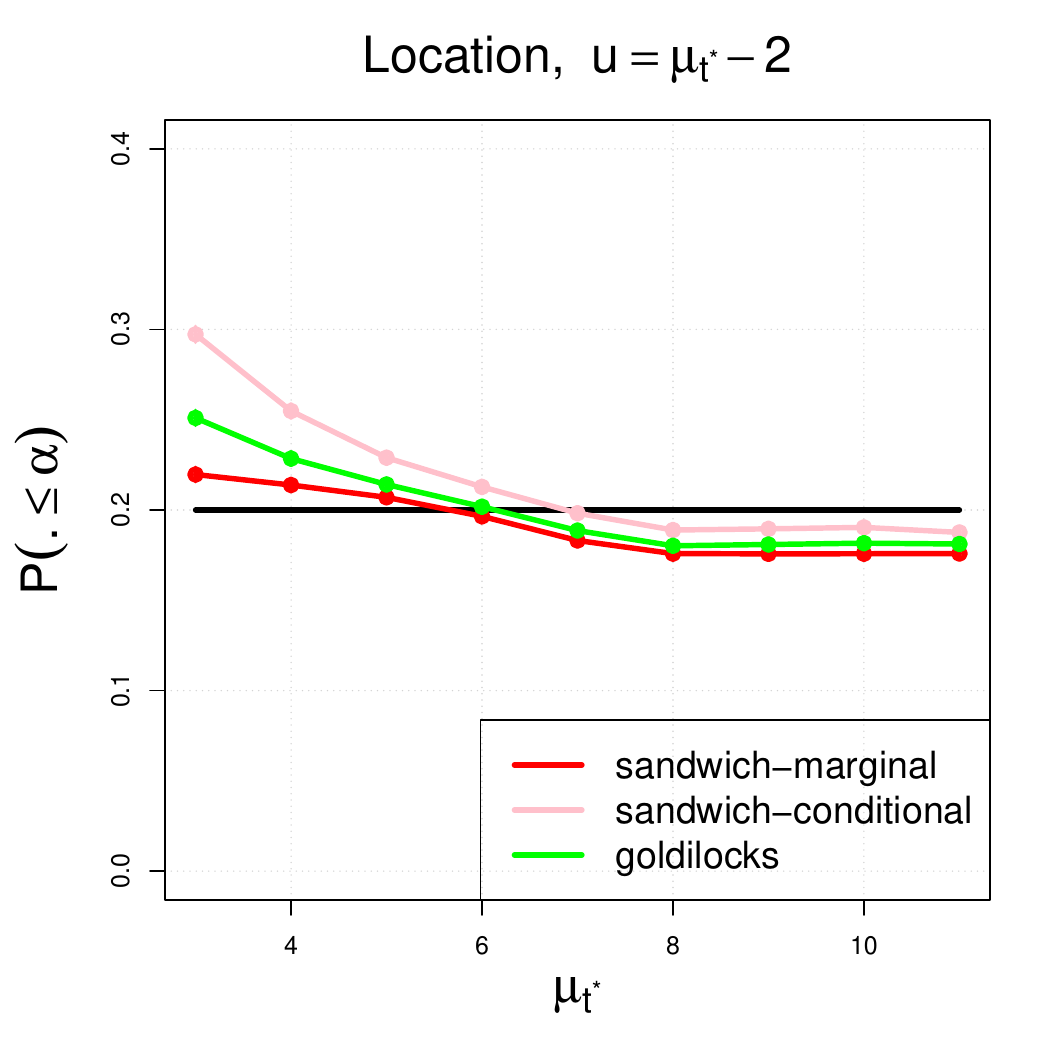}
			\label{fig:experiment-1-2d-4}
		\end{subfigure}\hfill
		\begin{subfigure}[t]{0.32\linewidth}
			\centering
			\includegraphics[width=\linewidth]{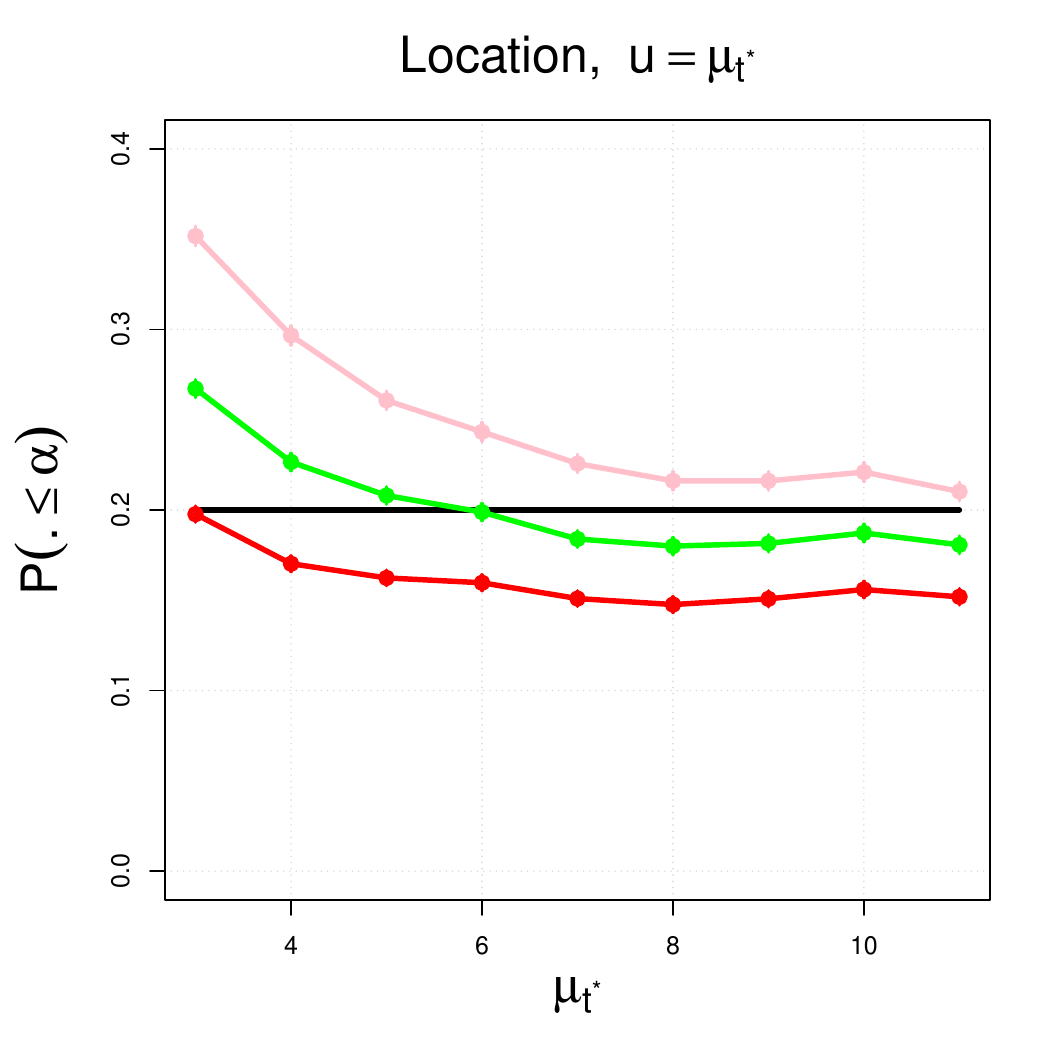}
			\label{fig:experiment-1-2d-5}
		\end{subfigure}\hfill
		\begin{subfigure}[t]{0.32\linewidth}
			\centering
			\includegraphics[width=\linewidth]{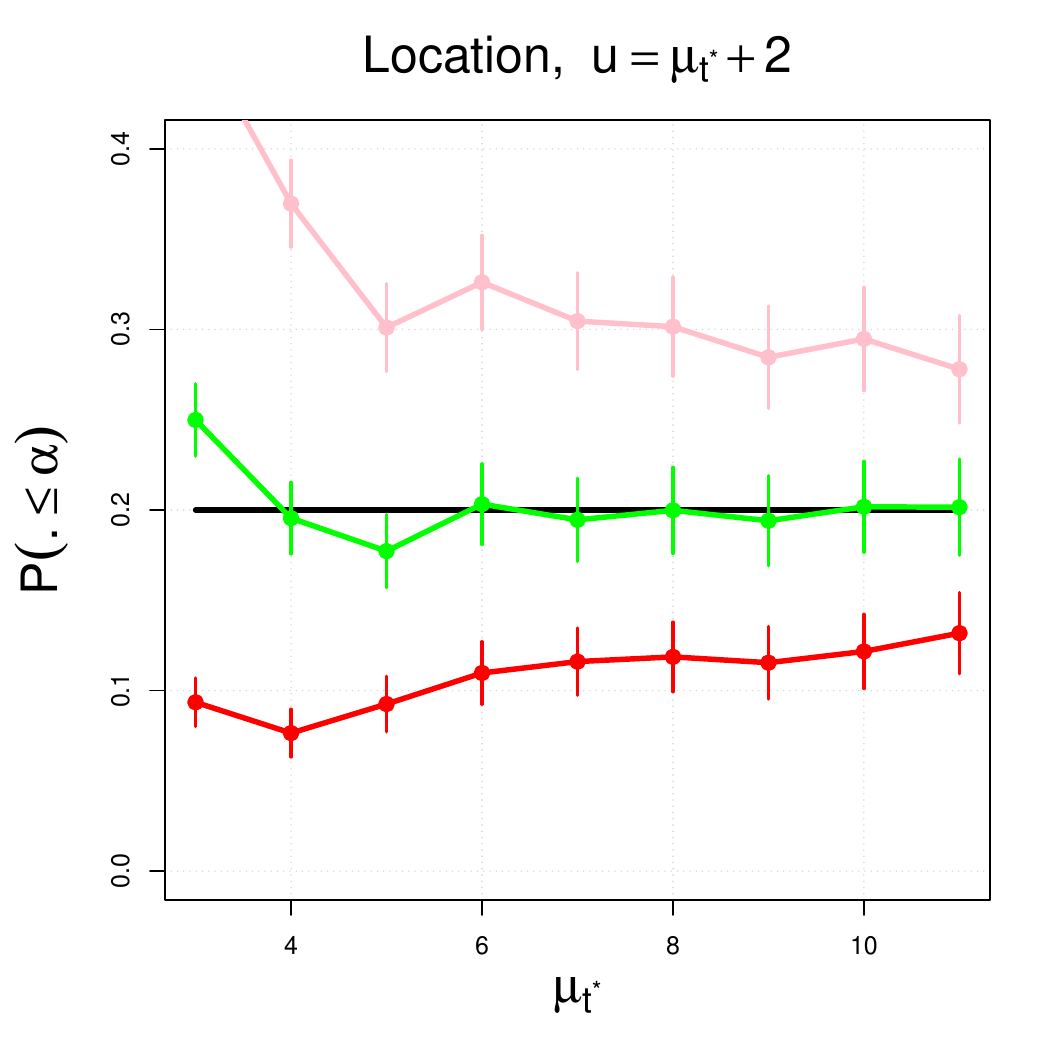}
			\label{fig:experiment-1-2d-6}
		\end{subfigure}\hfill
		\begin{subfigure}[t]{0.32\linewidth}
			\centering
			\includegraphics[width=\linewidth]{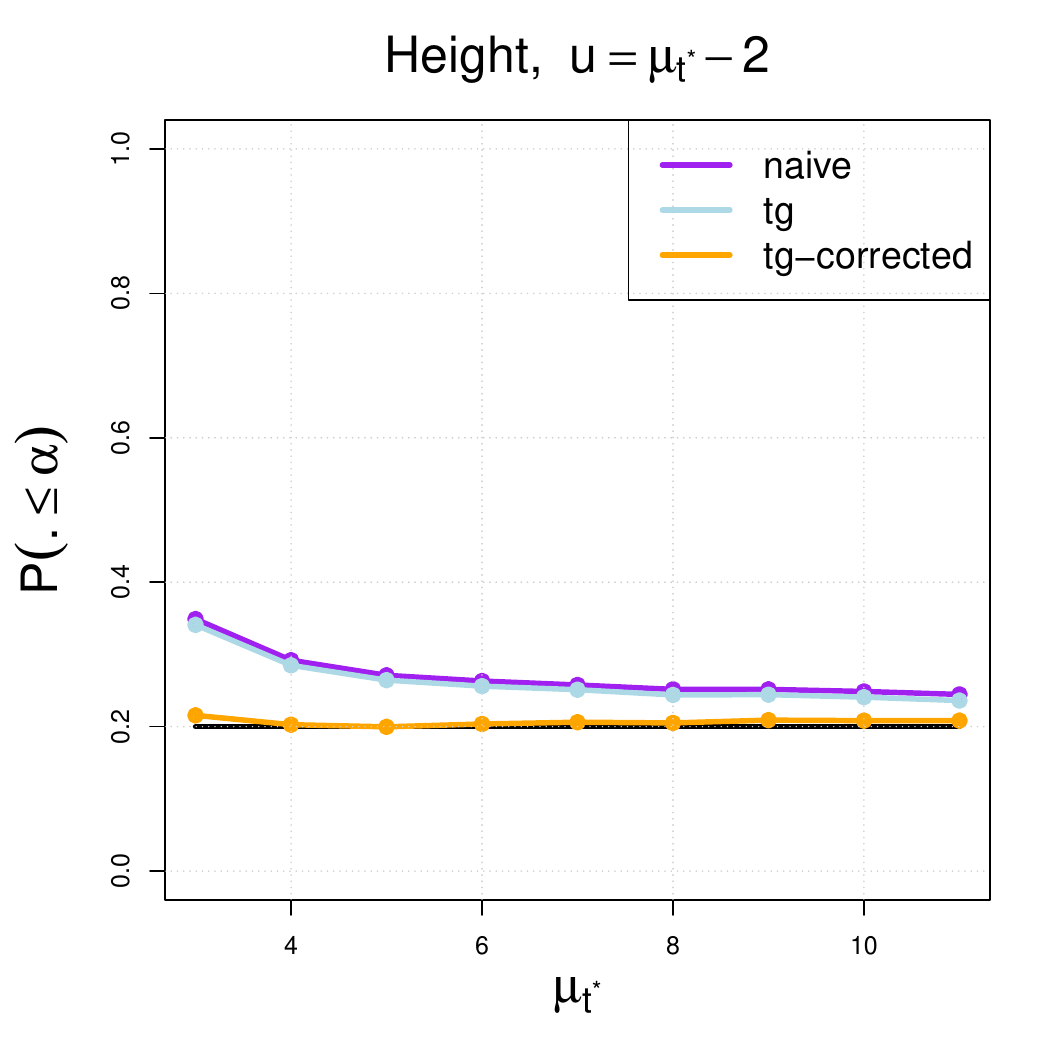}
			\label{fig:experiment-1-2d-1}
		\end{subfigure}\hfill
		\begin{subfigure}[t]{0.32\linewidth}
			\centering
			\includegraphics[width=\linewidth]{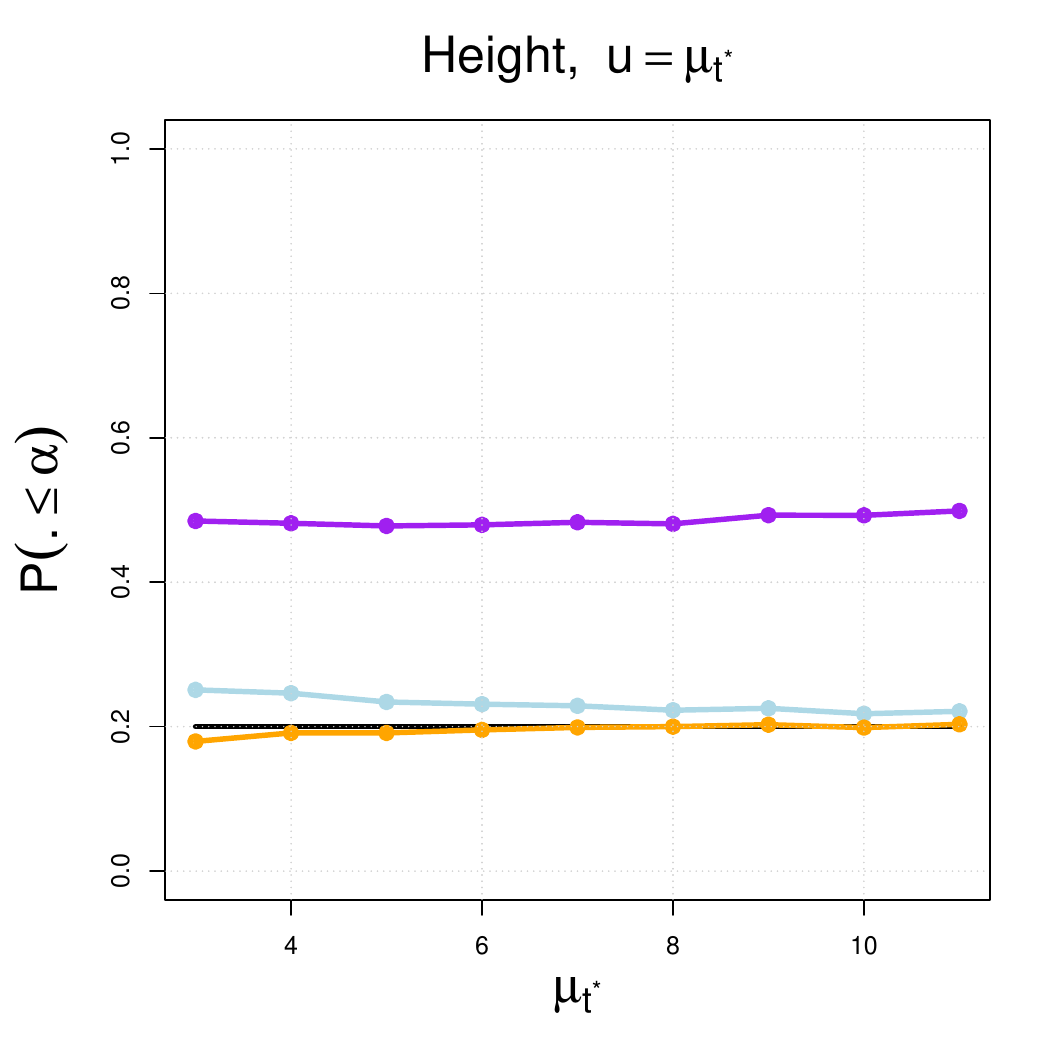}
			\label{fig:experiment-1-2d-2}
		\end{subfigure}\hfill
		\begin{subfigure}[t]{0.32\linewidth}
			\centering
			\includegraphics[width=\linewidth]{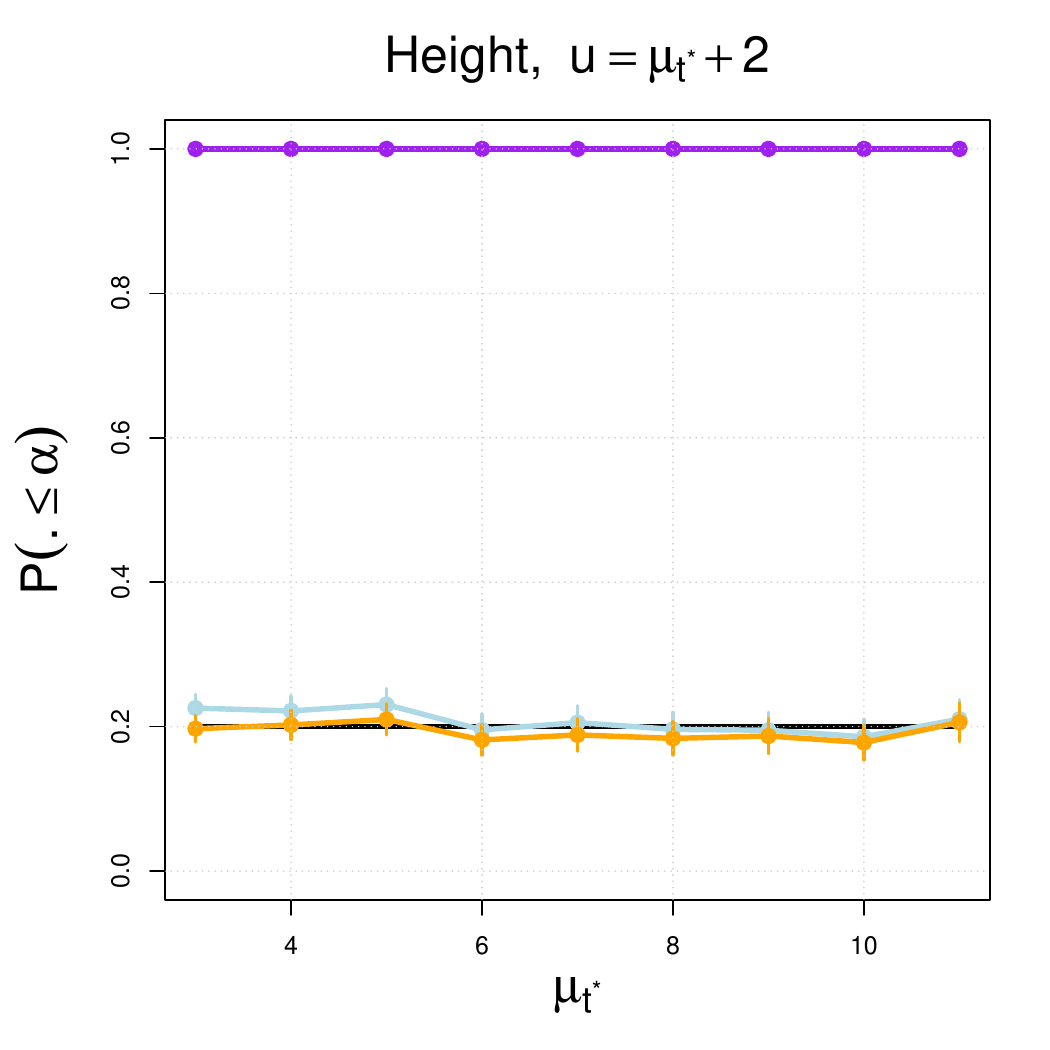}
			\label{fig:experiment-1-2d-3}
		\end{subfigure}
		
		\caption{Distribution of candidate quantities for location (top row) and height (bottom row). Different columns corresponds to different thresholds $u$. Details of experimental setup and takeaways are in the main text.}
		\label{fig:experiment-1-2d}
	\end{figure}
	
	\subsection{Conditional coverage}
	\label{subsec:experiment-2}
	Our second experiment compares the performance of our three proposals for post-selection peak inference: the non-randomized method outlined in Algorithm~\ref{alg:post-selective-inference-after-peak-detection-via-TG-test}, and the two randomized methods, carve~(Algorithm~\ref{alg:carve}) and split~(Algorithm~\ref{alg:split}). For the latter two methods the randomization tuning parameter is set to be $\gamma = 1$, which would correspond to a 50/50 split if $Y$ were constructed by averaging independent replicates. The distribution of $Y$ and choices of threshold are the same as in our first experiment, but we now examine coverage and size of confidence regions conditional on selection. (Notice that the conditional distributions underlying each method differ in what precisely is conditioned on; see~\eqref{eqn:distribution-after-selection} and~\eqref{eqn:distribution-after-randomized-selection} respectively.) The results are displayed in Figure~\ref{fig:experiment-2-2d}.  
	
	For the location: all methods achieve nominal coverage when the signal strength is large and selection pressure is weak. Split inference is noticeably worst at the smallest signal strengths: this is due to the fact that, after standardization, the curvture of the field $Y^{\inf}$ that split uses for inference is smaller than the curvature of the full data field $Y$. Under moderate or strong selection pressure, the non-randomized method also has inflated type I error, even when the signal is strong, as predicted by our theory. 
	
	For the height: all methods achieve very close to nominal coverage across different signal strengths and sample sizes, except for split and carve at the very smallest signal strengths. However, the split intervals are always wider than the carved intervals, and the standard intervals are much wider than either under strong selection pressure. 
	
	Additional experiments presented in Section~\ref{subsec:experiment-2-additional} demonstrate that these conclusions are qualitatively similar in a one-dimensional setting. They also evaluate performance in a setup where the signal is wider than the covariance kernel. In this last setting, the relative performance of the non-randomized, carve, and split methods is similar, but all methods perform worse at low signal strengths. This makes sense as the wider signal has much smaller curvature, making the problem more challenging. 
	
	\begin{figure}[htbp]
		\centering
		\begin{subfigure}[t]{0.32\linewidth}
			\centering
			\includegraphics[width=\linewidth]{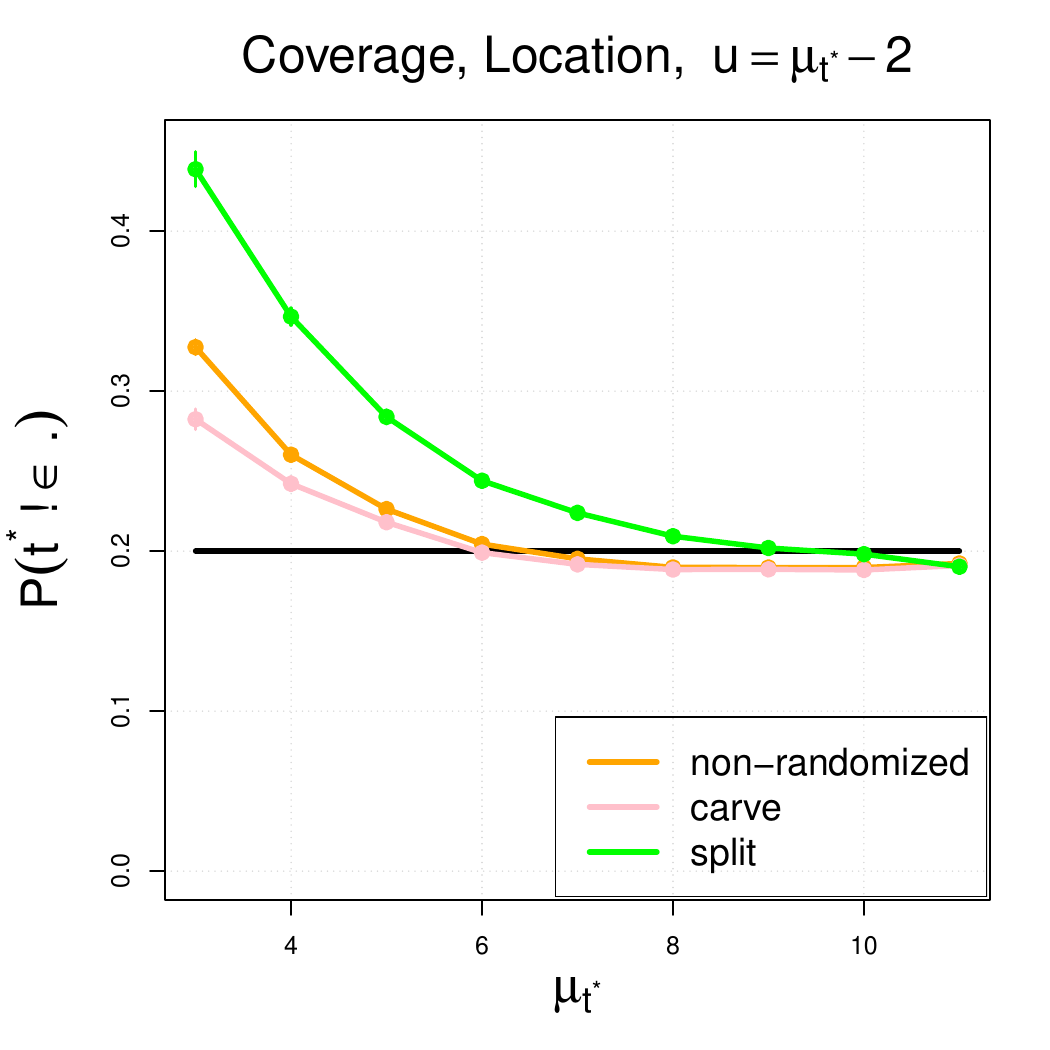}
		\end{subfigure}\hfill
		\begin{subfigure}[t]{0.32\linewidth}
			\centering
			\includegraphics[width=\linewidth]{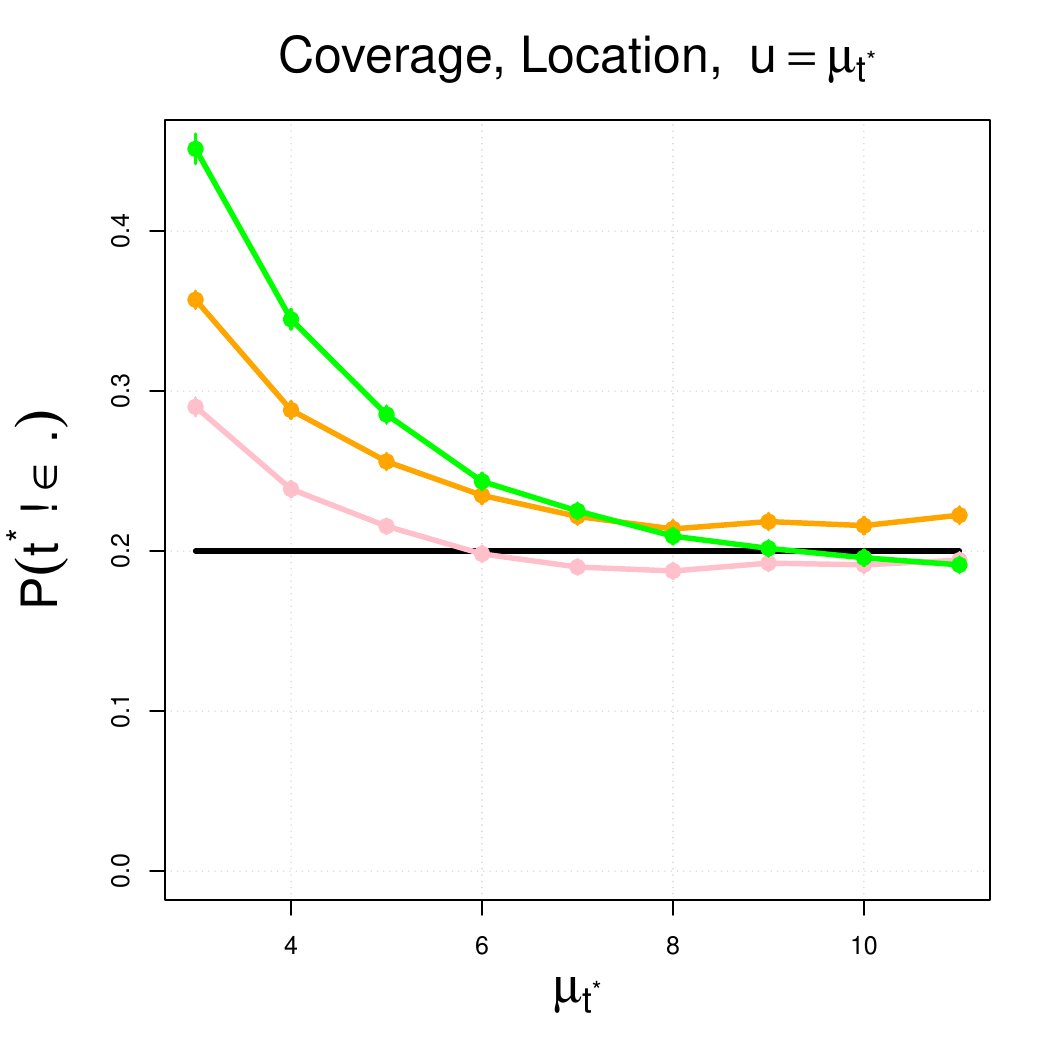}
		\end{subfigure}\hfill
		\begin{subfigure}[t]{0.32\linewidth}
			\centering
			\includegraphics[width=\linewidth]{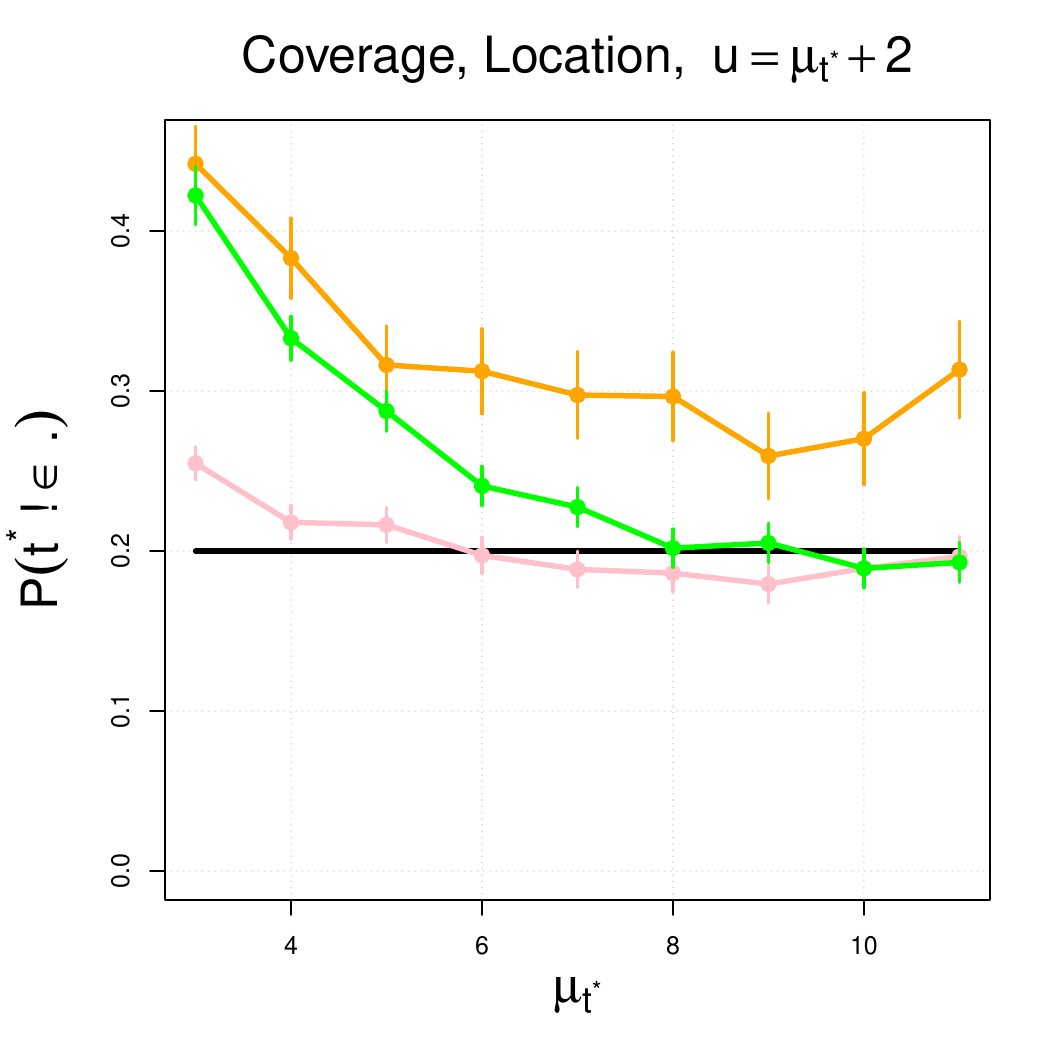}
		\end{subfigure}\hfill
		
		\begin{subfigure}[t]{0.32\linewidth}
			\centering
			\includegraphics[width=\linewidth]{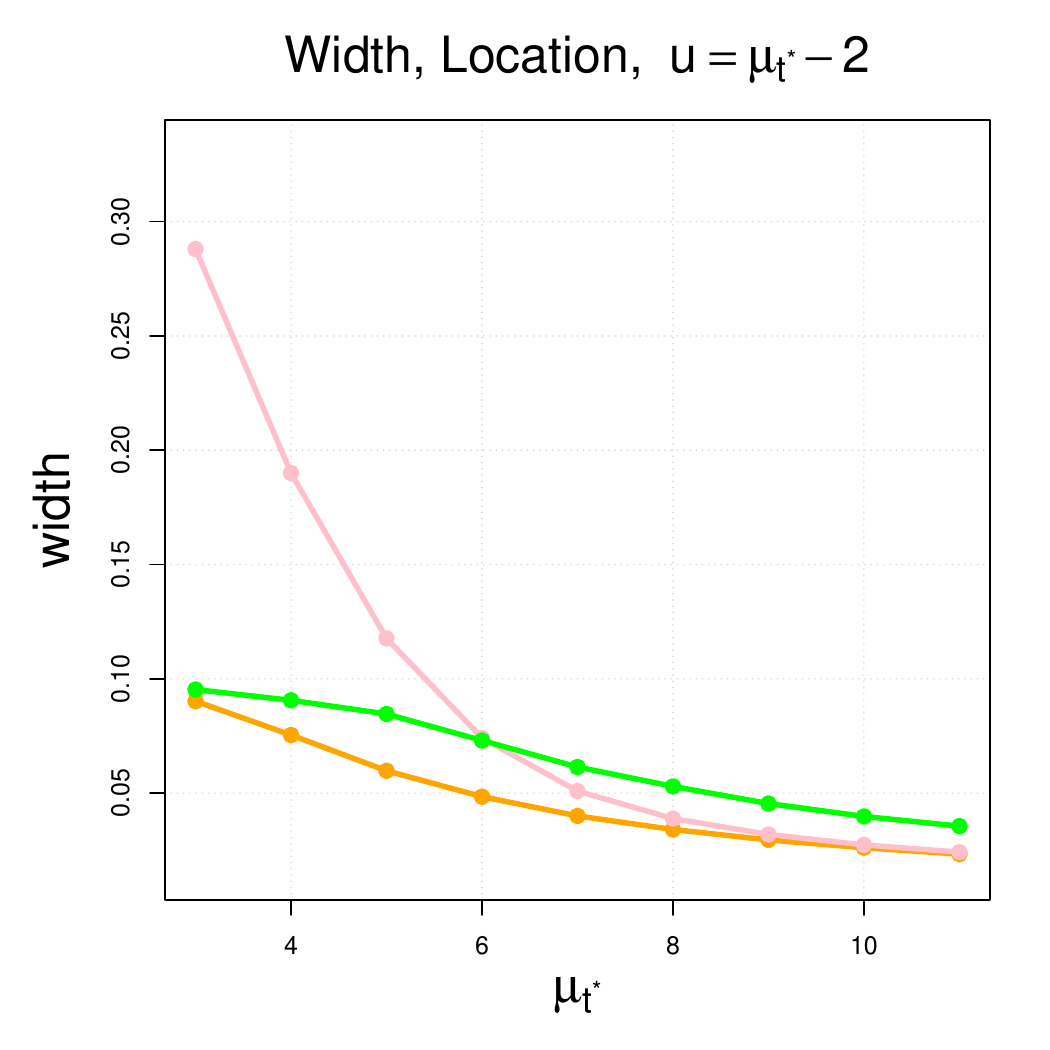}
		\end{subfigure}\hfill
		\begin{subfigure}[t]{0.32\linewidth}
			\centering
			\includegraphics[width=\linewidth]{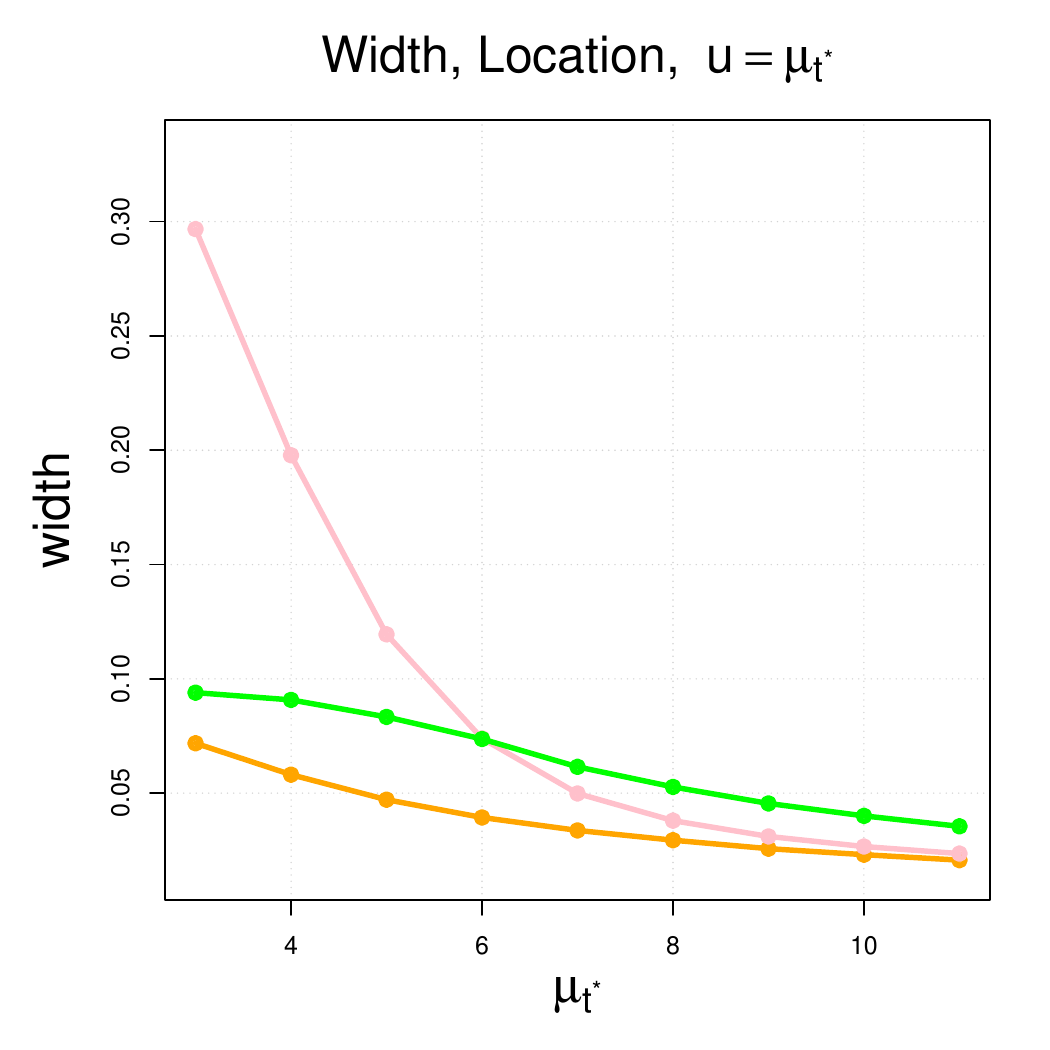}
		\end{subfigure}\hfill
		\begin{subfigure}[t]{0.32\linewidth}
			\centering
			\includegraphics[width=\linewidth]{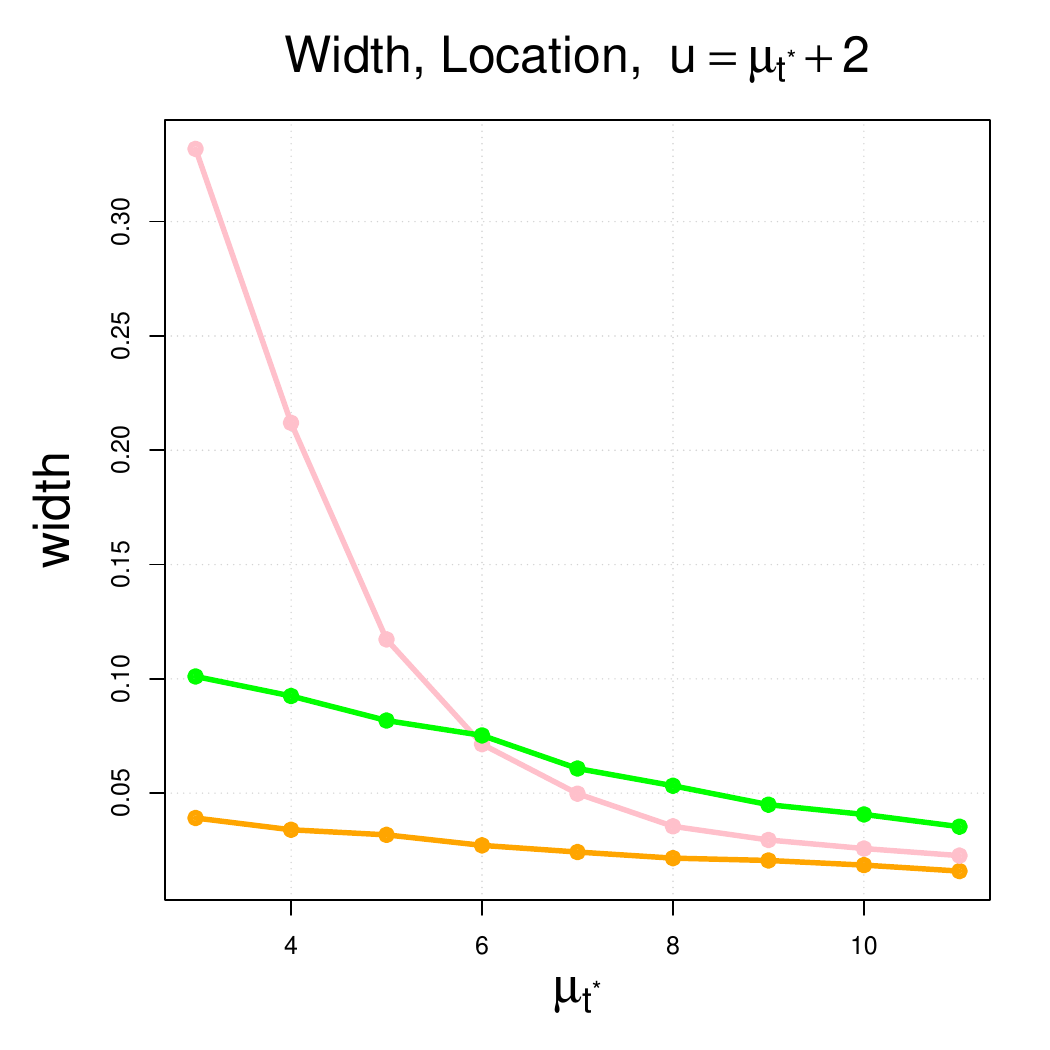}
		\end{subfigure}\hfill
		
		\begin{subfigure}[t]{0.32\linewidth}
			\centering
			\includegraphics[width=\linewidth]{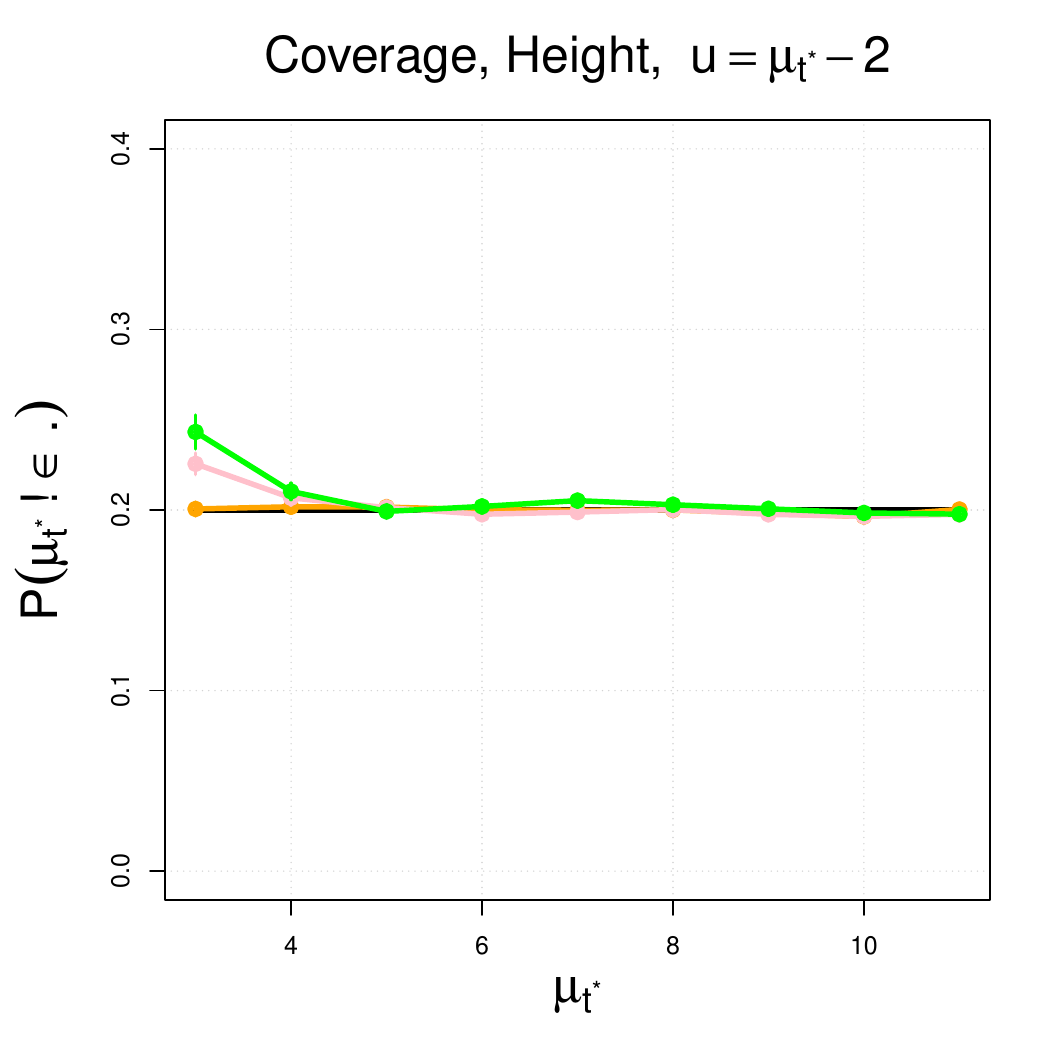}
		\end{subfigure}\hfill
		\begin{subfigure}[t]{0.32\linewidth}
			\centering
			\includegraphics[width=\linewidth]{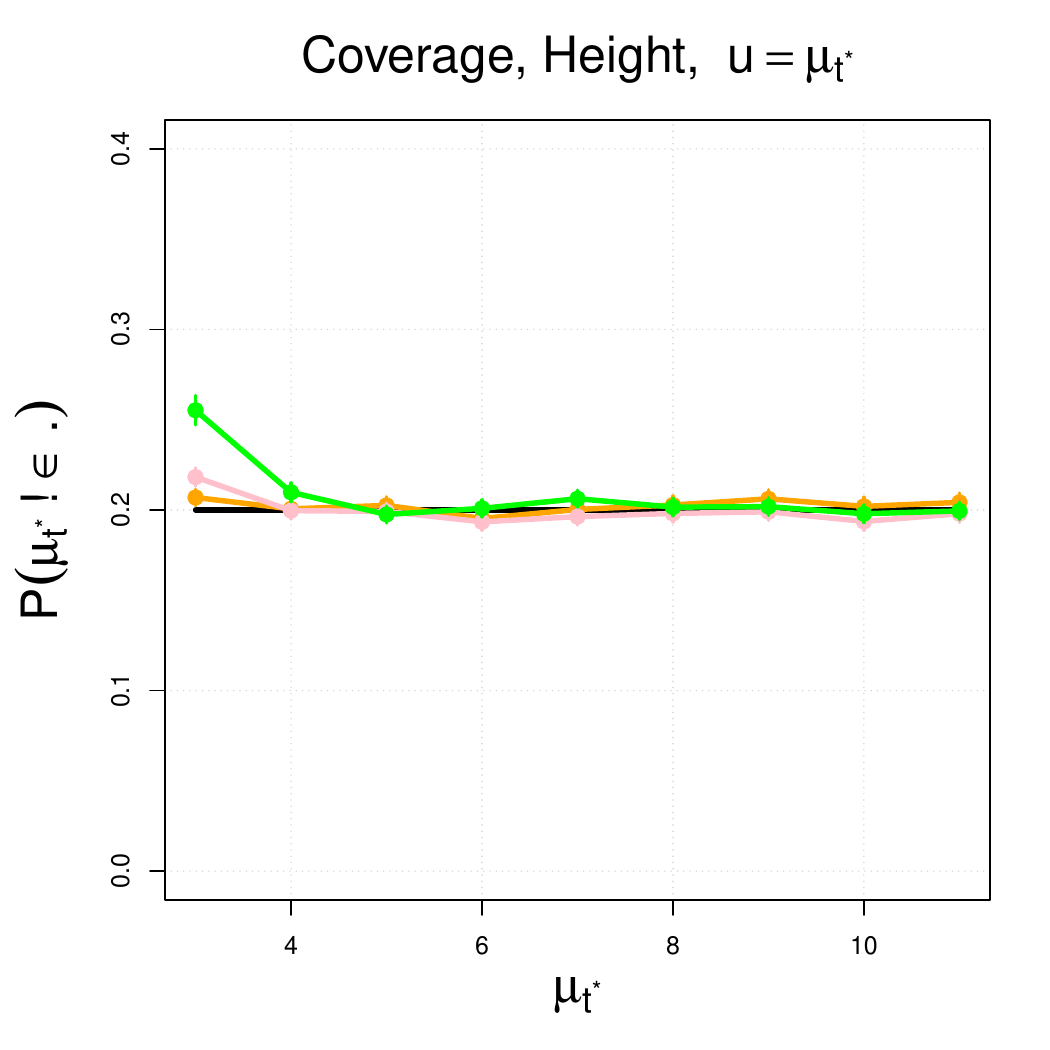}
		\end{subfigure}\hfill
		\begin{subfigure}[t]{0.32\linewidth}
			\centering
			\includegraphics[width=\linewidth]{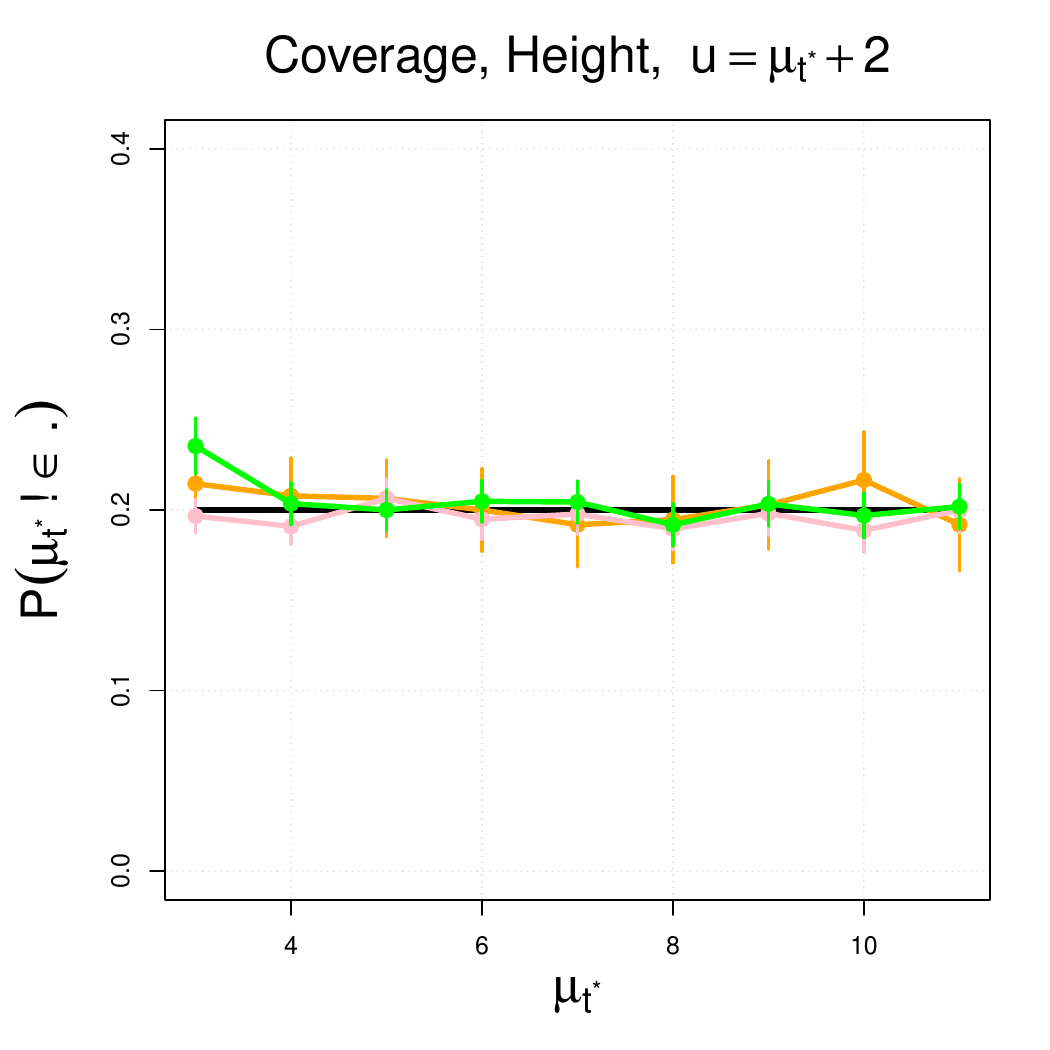}
		\end{subfigure}\hfill
		
		\begin{subfigure}[t]{0.32\linewidth}
			\centering
			\includegraphics[width=\linewidth]{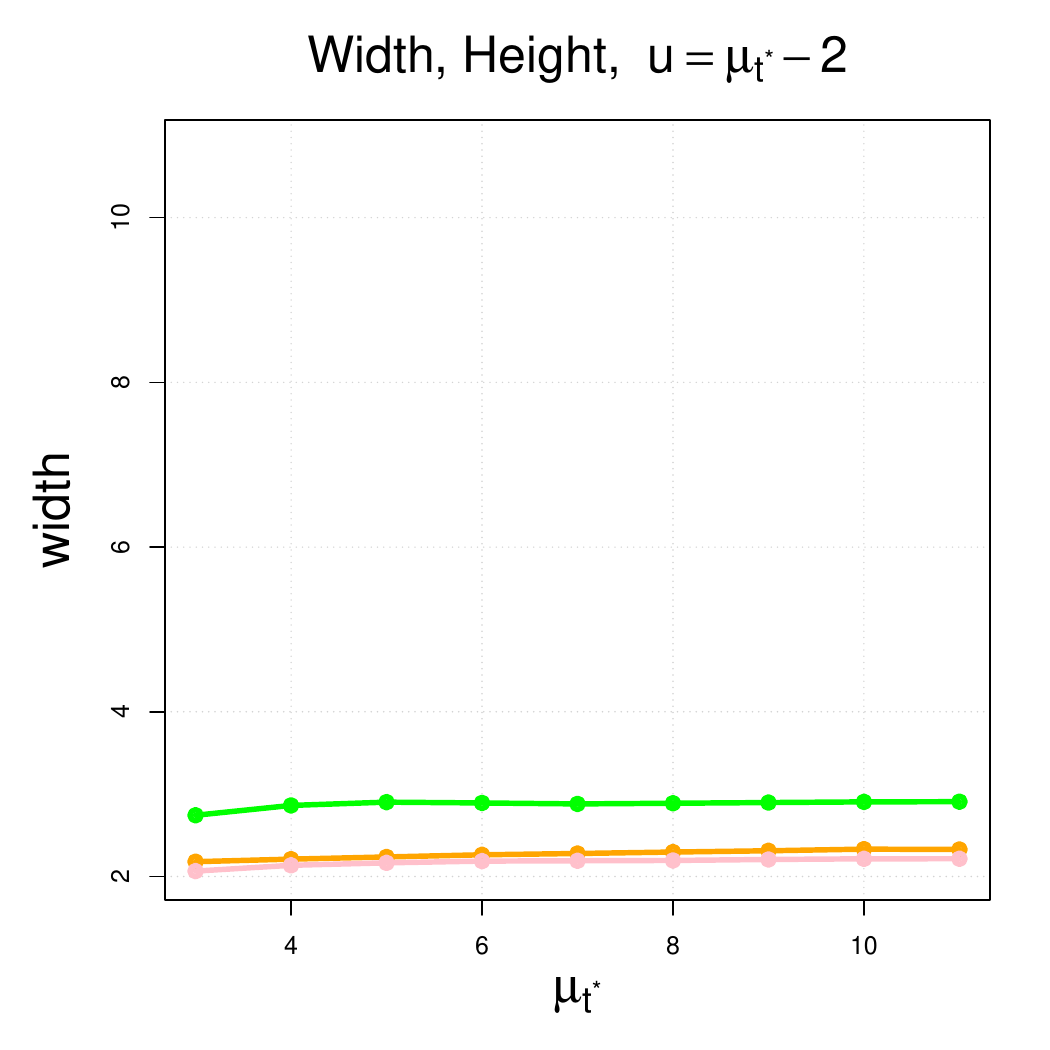}
		\end{subfigure}\hfill
		\begin{subfigure}[t]{0.32\linewidth}
			\centering
			\includegraphics[width=\linewidth]{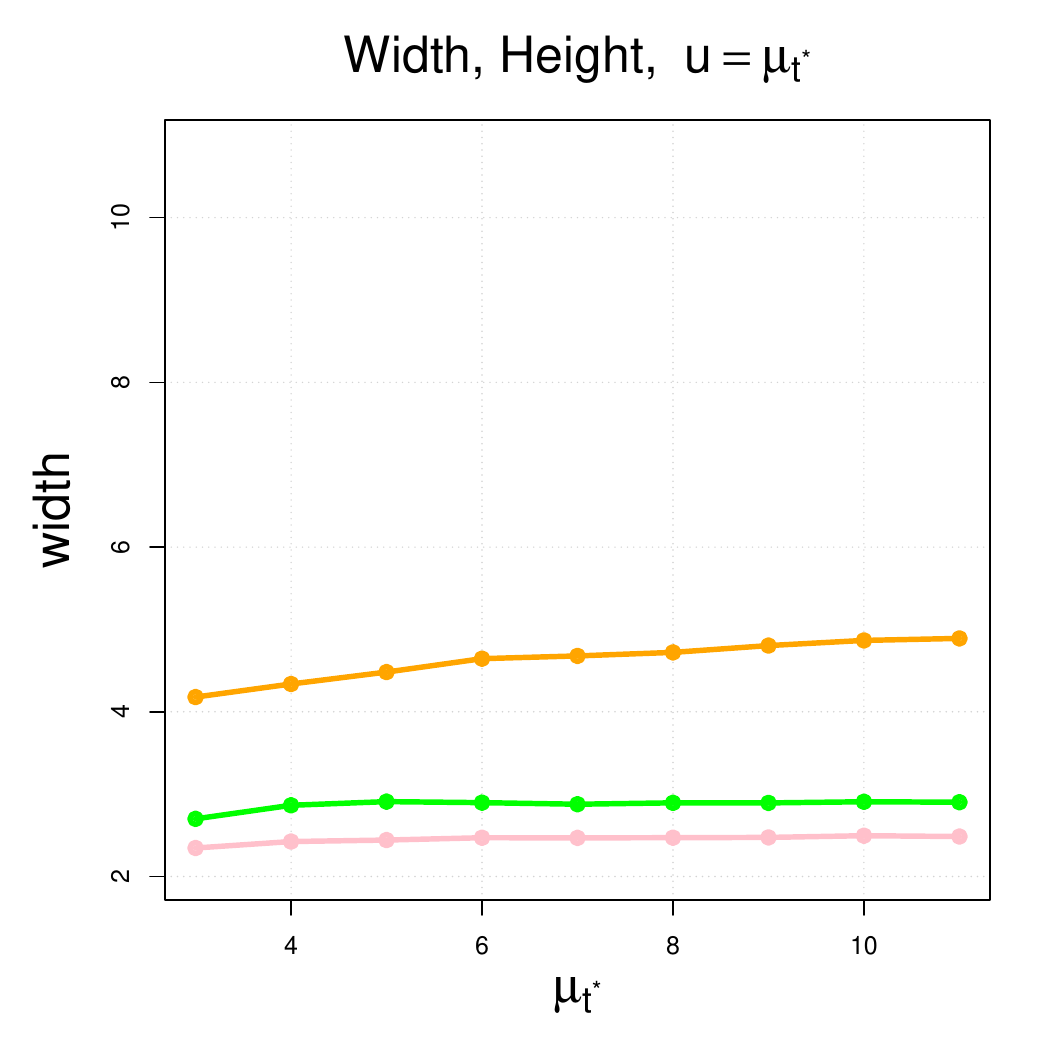}
		\end{subfigure}\hfill
		\begin{subfigure}[t]{0.32\linewidth}
			\centering
			\includegraphics[width=\linewidth]{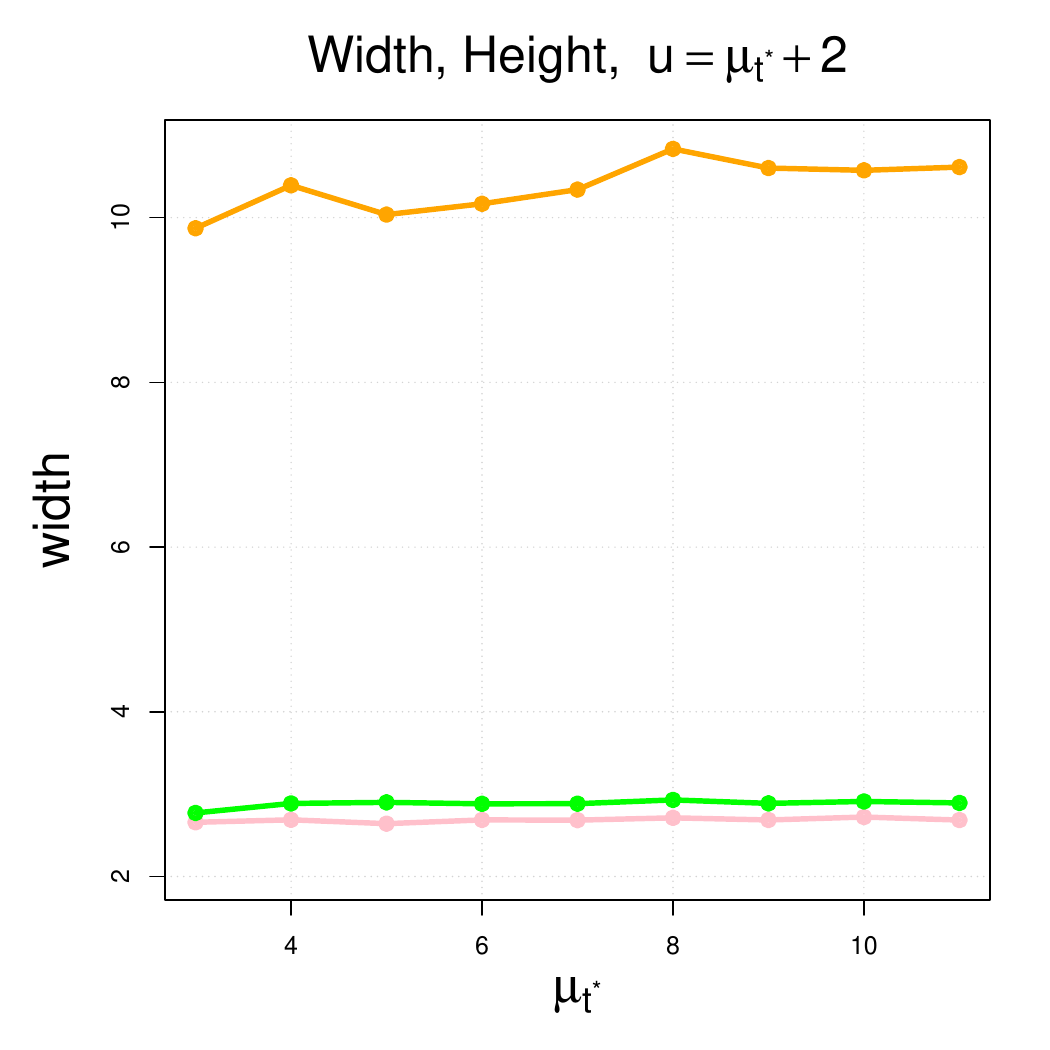}
		\end{subfigure}\hfill
		
		\caption{Comparing miscoverage and width of non-randomized, carve, and split methods for peak inference. Top two rows correspond to inference for location, bottom two rows to inference for height. ``Width'' of a confidence ellipse is the length of its largest semi-axis. Details of experimental setup and takeaways are in the main text.}
		\label{fig:experiment-2-2d}
	\end{figure}
	
	\subsection{Multiple peaks and overall performance}
	\label{subsec:experiment-3}
	Our third and final experiment compares the performance of the non-randomized, carve, and split methods in a more realistic experiment with multiple peaks. In this experiment the signal is a superposition of nine compactly supported unimodal functions of various peak heights. Additionally, in this experiment we do not set the threshold $u$ ad-hoc, but instead set $u = u_{\TG}(\alpha,v)$ based on theory, to control null-PCER at $\alpha = 0.1$. Figure~\ref{fig:experiment-3} shows both conditional coverage on a per-peak basis and overall miscoverage, for both height and location. 
	
	For the location: both non-randomized and split methods have inflated miscoverage on a per-peak, conditional basis, while carved is closer to nominal. Marginally, however, all methods have nominal coverage, with split and carve even being conservative. This conservatism is expected because this is a setting where not all true peaks are discovered with high probability; see the discussion after Theorem~\ref{thm:miscoverage}.
	
	For the height: all methods achieve very close to nominal coverage on a conditional, per-peak basis and are marginally conservative. However the carved intervals are smallest, particularly at the true peaks which have the lowest heights.
	
	\begin{figure}[htbp]
		\centering
		\begin{subfigure}[t]{0.32\linewidth}
			\centering
			\includegraphics[width=\linewidth]{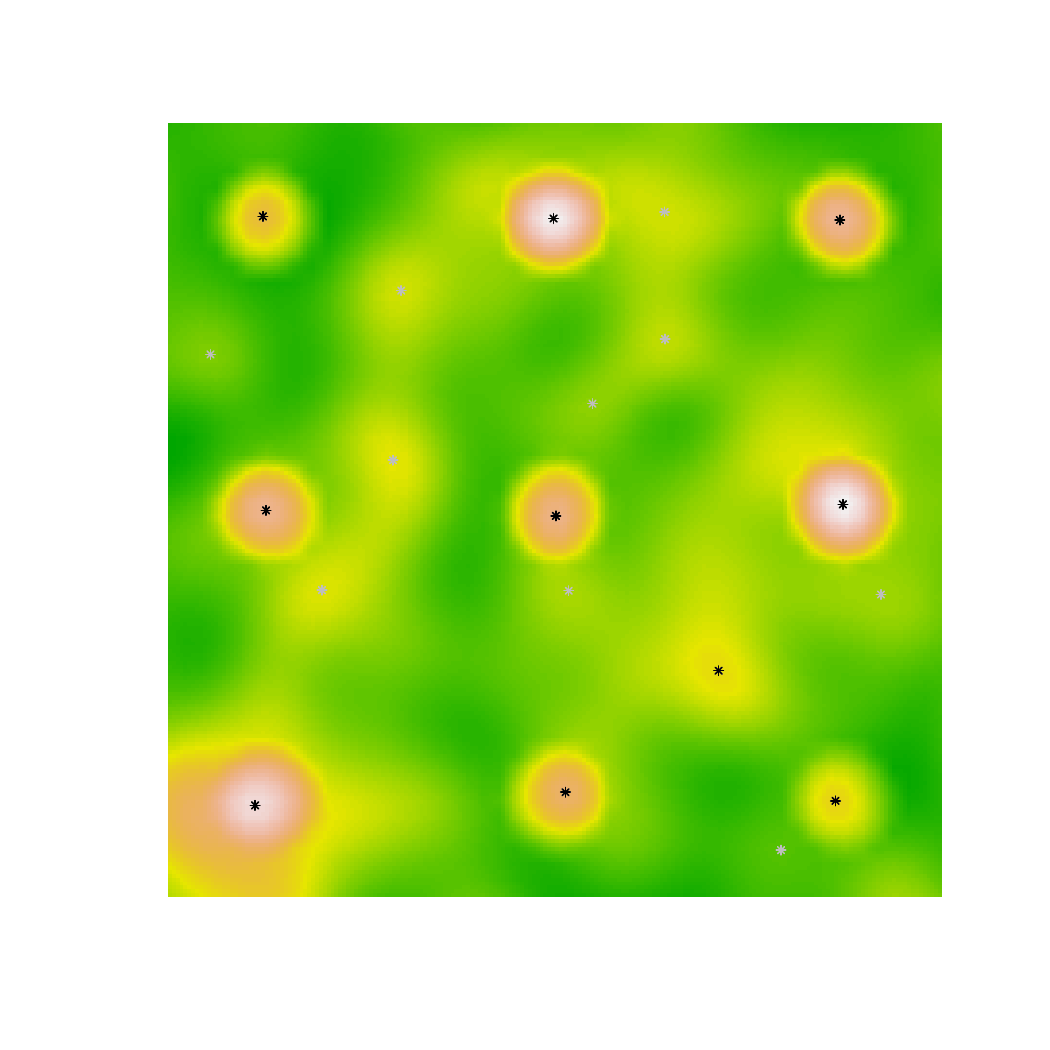}
		\end{subfigure}	
		\begin{subfigure}[t]{0.32\linewidth}
		\centering
		\includegraphics[width=\linewidth]{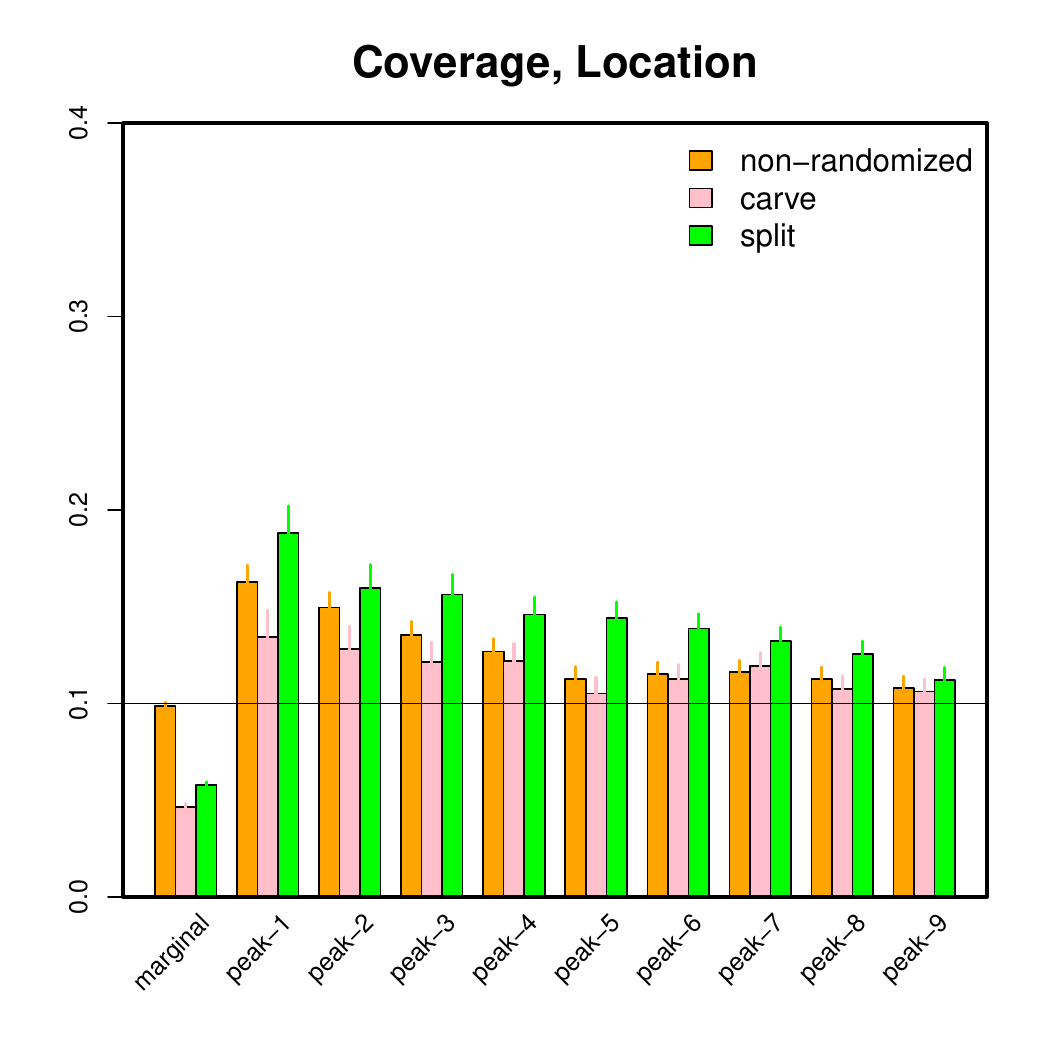}
		\end{subfigure}	
		\begin{subfigure}[t]{0.32\linewidth}
		\centering
		\includegraphics[width=\linewidth]{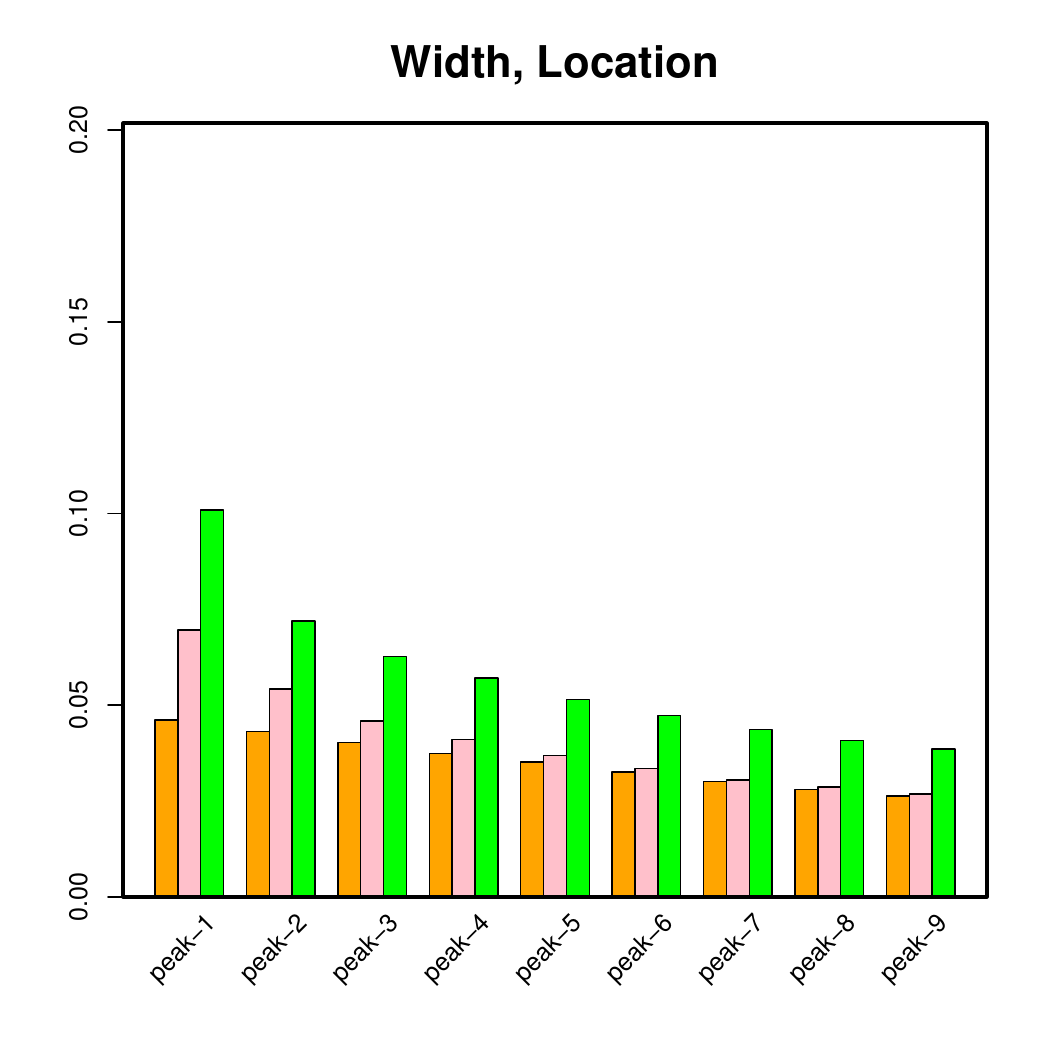}
		\end{subfigure} \hfill \\
		\begin{subfigure}[t]{0.32\linewidth}
			\centering
			\includegraphics[width=\linewidth]{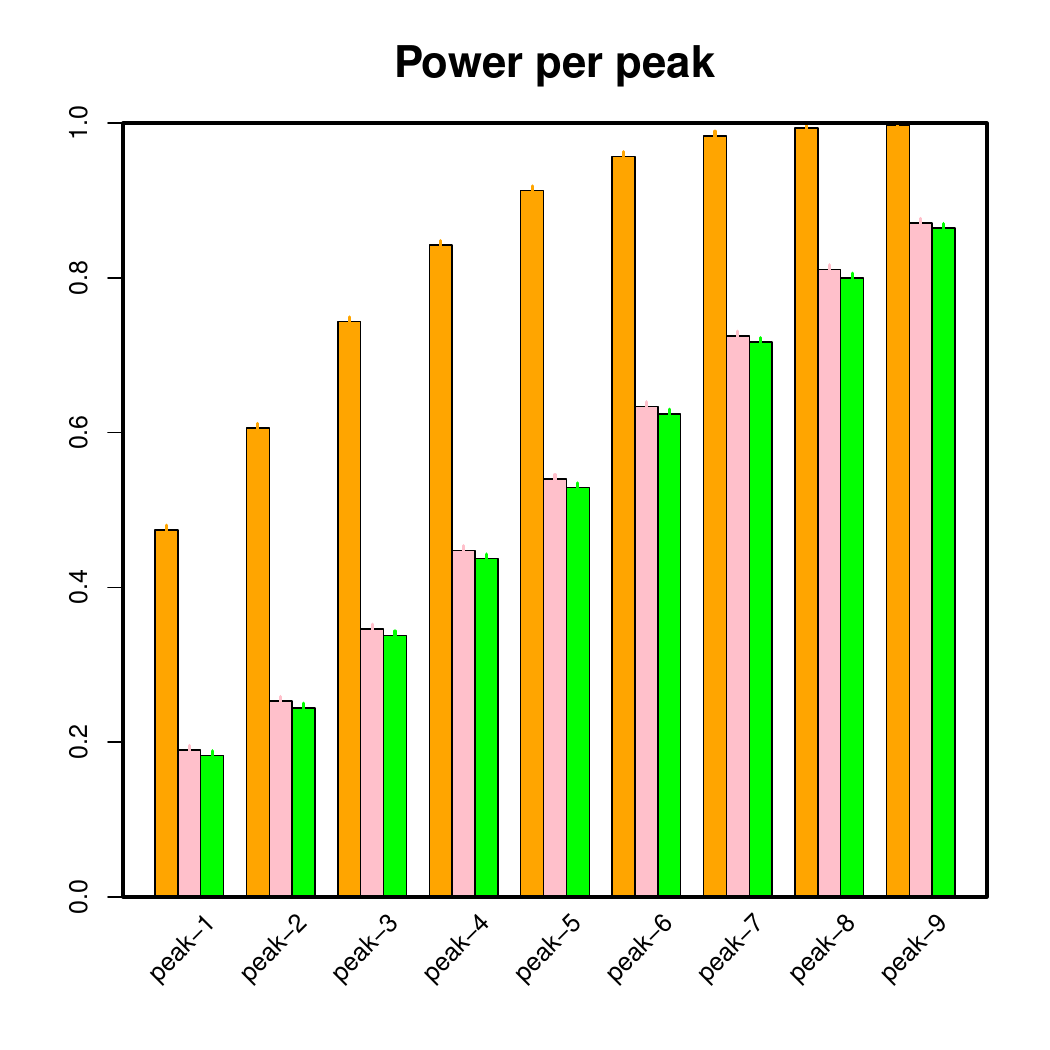}
		\end{subfigure}	
		\begin{subfigure}[t]{0.32\linewidth}
			\centering
			\includegraphics[width=\linewidth]{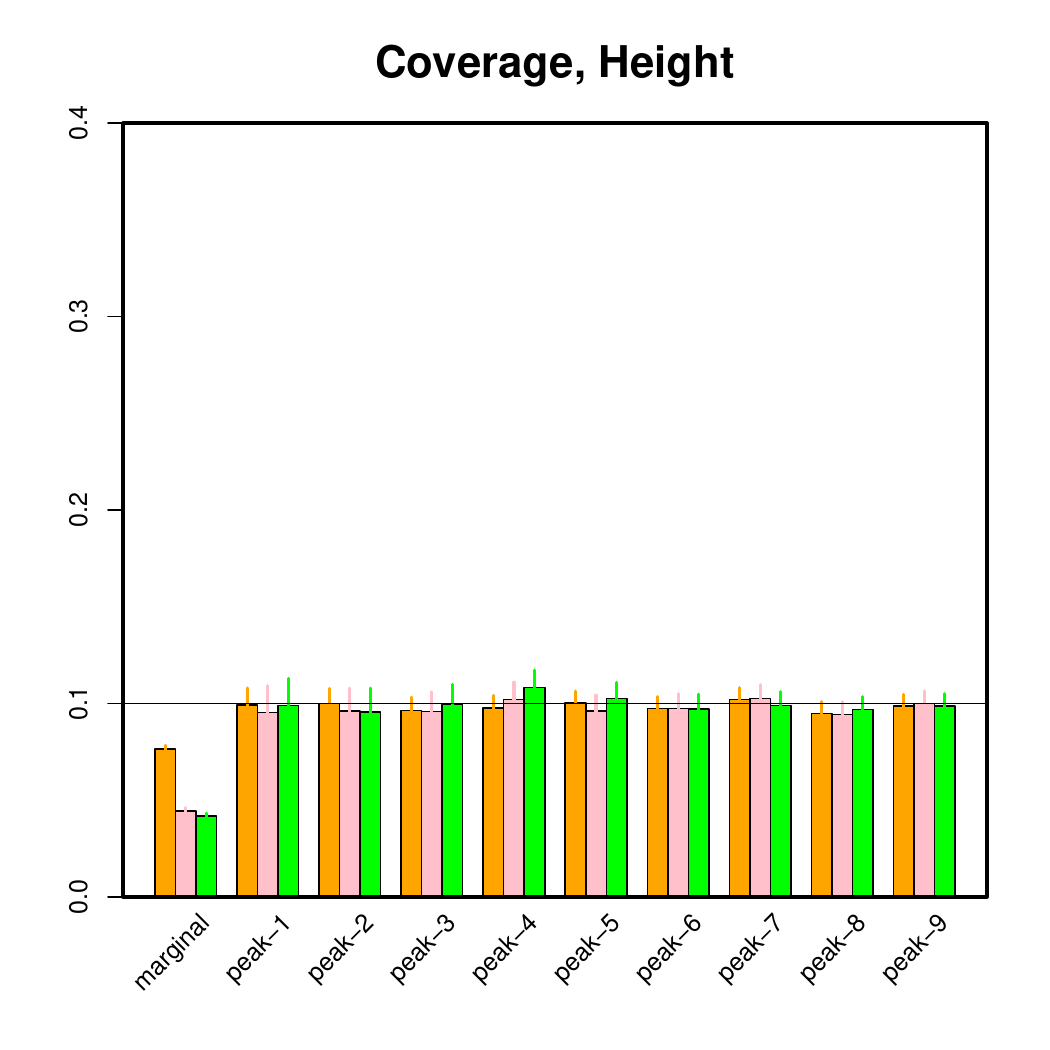}
		\end{subfigure}
		\begin{subfigure}[t]{0.32\linewidth}
			\centering
			\includegraphics[width=\linewidth]{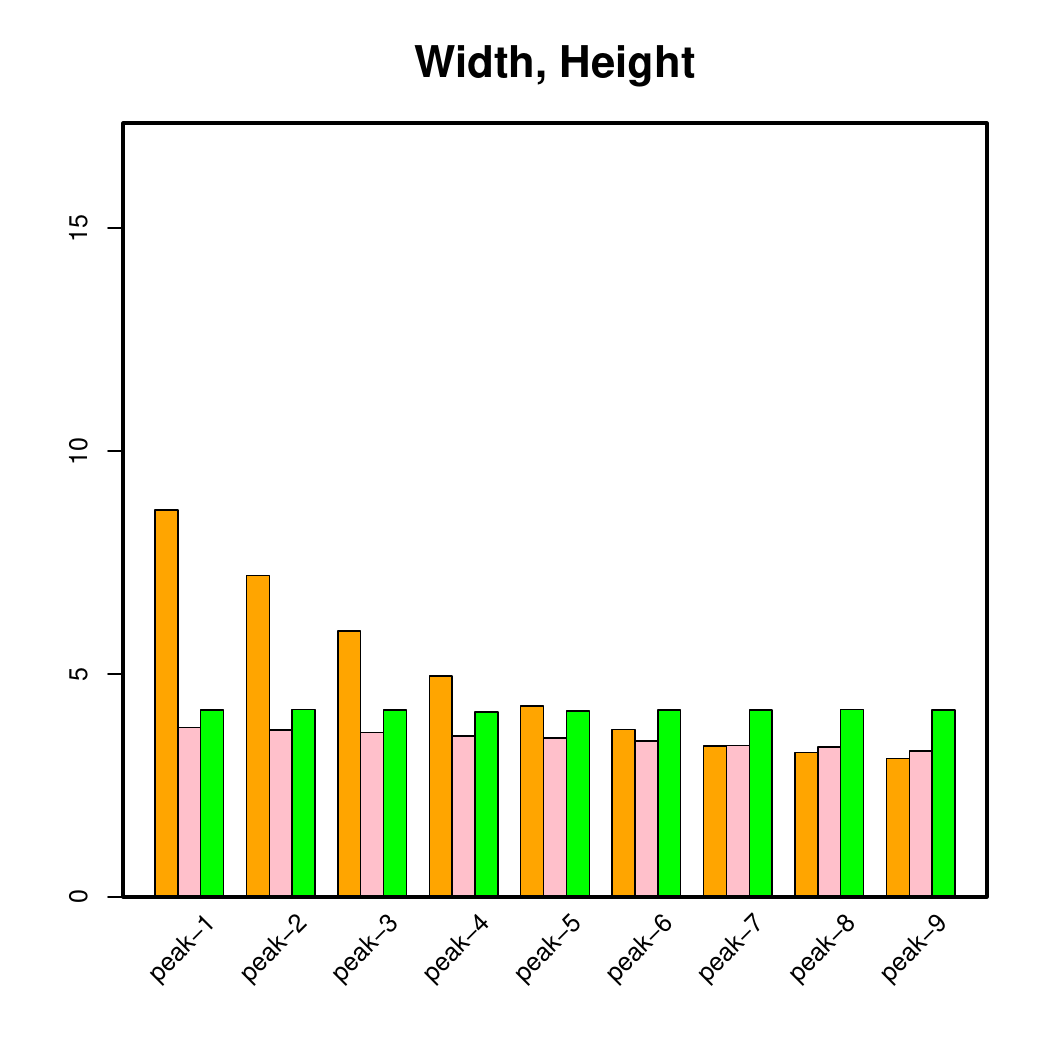}
		\end{subfigure}\hfill
		
		\caption{Comparing power, miscoverage and width of non-randomized, carve, and split methods for peak inference, in the multipeak experiment described in Section~\ref{subsec:experiment-3}. Peak heights are evenly spaced between $\mu_{t^*} = 3$ (peak-1) and $\mu_{t^*} = 6$ (peak-9). Top left: example field along with all peaks declared significant by TG test. }
		\label{fig:experiment-3}
	\end{figure}
	
	\section{Summary and discussion}
	\label{sec:discussion}
	
	This article lays out several procedures for peak detection followed by localization via formal statistical inference. These procedures account for the fact that peaks were subjected to thresholding and significance testing prior to inference. As a result, they are valid, both marginally and in a post-selective, conditional sense, even in regimes where some peaks will be falsely selected and discovery of true peaks is not guaranteed. 
	
	We conclude by mentioning a few interesting directions in which our theory and methodology might be extended. 
	First, our procedures assume that the field $Y$ is standardized, and assume knowledge of $\Lambda_t$ in constructing confidence regions. If the true covariance kernel $K$ is unknown, then the field must be standardized using an estimate of variance, and an estimate of $\Lambda$ must be used; it is not clear to us the effect this will have on subsequent inferences.  Second, the TG peak detection procedure is calibrated to control PCER rather than a more stringent criterion like FWER or FDR. We believe that a larger choice of significance threshold $u$, based on a Bonferroni/BH-like procedure, should result in control of either FWER or FDR. A likely downstream consequence of this would be that our overall method for inference -- with this larger choice of $u$ -- would control asymptotic false coverage rate rather than the focus of this paper, which is the per-comparison miscoverage rate. Finally, it would be interesting to consider post-selection peak inference using methods for identifying signal regions that are more sophisticated than simply thresholding peaks.

	\bibliographystyle{plainnat}   
	\bibliography{bibliography}         
	
	\section{Proofs for Section~\ref{sec:peak-intensity-local-expansion}}
	\label{sec:pfs-peak-intensity-local-expansion}
	
	In this section we build to the proof of Theorem~\ref{thm:approximate-joint-intensity}. Sections~\ref{subsec:asymptotics-signal} and~\ref{subsec:asymptotics-covariance} contain some preliminary estimates on the asymptotic behavior of the signal and covariance. In Section~\ref{subsec:hessian-asymptotics} we show that conditional on there being a discovery $\hat{t} \in \wh{T}_u$ that consistently estimates $t^*$, the relative error between the Hessian $-\nabla^2 Y_t$ and $\bar{H}_{t^*}$ converges to $0$ uniformly over $t$ in a neighborhood of $t^*$. In Section~\ref{subsec:pf-approximate-joint-intensity} we state two Lemmas, Lemma~\ref{lem:approximation-peak-intensity} and Lemma~\ref{lem:asymptotic-expansion-density}, that give local expansions of the determinant and density terms in the Kac-Rice formula for $\rho(t,y)$. Lemma~\ref{lem:approximation-peak-intensity} is proved in Section~\ref{subsec:pf-approximation-peak-intensity} and  Lemma~\ref{lem:asymptotic-expansion-density} is proved in Section~\ref{subsec:asymptotic-analysis-density}; both proofs rely on the results of Sections~\ref{subsec:asymptotics-signal}-\ref{subsec:hessian-asymptotics}. Combined these local expansions imply Theorem~\ref{thm:approximate-joint-intensity}, as shown in Section~\ref{subsec:pf-approximate-joint-intensity-2}. 
	
	Throughout, we will assume that $n \in \mathbb{N}$ is large enough that each of the following is true:~\eqref{eqn:well-conditioned} is satisfied, $\mc{B}_{t^*} \subseteq B(t^*,c_0)$, and $C\sqrt{\log(\lambda_n)} \leq \lambda_n$ where $C$ is some constant to be determined later. Notice that these are all true if $n \geq C$. In the future, we will not be as explicit about what lower bounds $n$ must satisfy, and will often simply revert to stating that $n$ is ``sufficiently large.''
	
	\subsection{Asymptotics of signal}
	\label{subsec:asymptotics-signal}
	We consider Taylor expansion of $\mu_t,\nabla \mu_t, \nabla^2 \mu_t$ around $t = t^*$ for some $t^* \in T^*$. By Assumption~\ref{asmp:well-conditioned}, the remainder terms in these expansions are bounded as follows: 
	\begin{equation}
		\begin{aligned}
			\label{eqn:signal-taylor-expansion}
			\sup_{h \in B(0,c_2)}\Big|\mu_{t^* + h} - \mu_{t^*} -  \frac{1}{2}h'\nabla^2\mu_{t^*}h\Big| 
			& \leq C\|h\|^3 \lambda_{t^*}, \\
			\sup_{h \in B(0,c_2)} \|\nabla \mu_{t^* + h} - \nabla^2 \mu_{t^*}h - \frac{1}{2}\nabla^3\mu_{t^*}(h,h)\| 
			& \leq C \|h\|^3 \lambda_{t^*}, \\
			\sup_{h \in B(0,c_2)} \|\nabla^2 \mu_{t^* + h} - \nabla^2 \mu_{t^*} - \nabla^3\mu_{t^*}(h)\| 
			& \leq C \|h\|^2 \lambda_{t^*}.
		\end{aligned}
	\end{equation}
	
	We will use the following consequences of \eqref{eqn:signal-taylor-expansion} and Assumption~\ref{asmp:well-conditioned}: there exists a constant $c > 0$ such that for all $n \in \mathbb{N}$ sufficiently large,
	\begin{equation}
		\begin{aligned}
			\label{eqn:signal-1st-and-2nd-derivative-control}
			\sup_{h \in B(0,c)} \|\nabla^2 \mu_{t^* + h}\| 
			& \leq C\lambda_{t^*}, \\
			\sup_{h \in B(0,c)} \Big\{\frac{\|\nabla \mu_{t^* + h}\|}{\|h\|}\Big\}
			& \leq C\lambda_{t^*}, \\
			\sup_{h \in B(0,c)} \Big\{ \frac{|\mu_{t^* + h} - \mu_{t^*}|}{\|h\|^{2}} \Big\}
			& \leq
			C\lambda_{t^*}.
		\end{aligned}
	\end{equation}
	
	\subsection{Asymptotics of covariance}
	\label{subsec:asymptotics-covariance}
	
	We begin by recalling some notation from the main text, and introducing some new notation as well. Recall that $\Lambda_t = \Cov[\nabla \epsilon_t] \in \R^{d \times d}$ and that we write $K_{ij}(t,s) := \Cov[\nabla^i\epsilon_t,\nabla^j\epsilon_s]$; from now on we abbreviate $K_{ij}(t) := K_{ij}(t,t)$. Let $\Gamma_t := \Cov[\nabla^2 \epsilon_t,\nabla \epsilon_t]\Lambda_t^{-1} = K_{21}(t)\Lambda_t^{-1}$; this is a $d \times d \times d$ array with entries $(\Gamma_t)_{ijk} = \sum_{m = 1}^{d} [K_{21}(t)]_{ijm} [\Lambda_{t}^{-1}]_{mk}$. 
	
	Assumption~\ref{asmp:covariance-holder} (and in the case of $\Gamma_t$, Equation~\eqref{eqn:non-degenerate}) imply that $\Lambda_t,\Gamma_t$ and the first and second derivatives of these array-valued mappings are bounded above by constants: 
	\begin{equation}
		\begin{aligned}
			\label{eqn:variance-control}
			\sup_{t \in \Rd } \|\Lambda_t\| < C, \sup_{t \in \Rd} \|\dot{\Lambda}_t\| < C, \sup_{t \in \Rd}\|\ddot{\Lambda}_t\| < C, \\
			\sup_{t \in \Rd} \|\Gamma_t\| < C, \sup_{t \in \Rd} \|\dot{\Gamma}_t\| < C, \sup_{t \in \Rd} \|\ddot{\Gamma}_t\| < C.
		\end{aligned}
	\end{equation}
	We also note that for any points $t,t'$,
	\begin{equation*}
		\begin{aligned}
			\|K_{20}(t,t') - \Lambda_{t}\| \leq C\sup_{\tilde{t} \in B(t,\|t' - t\|)} \|K_{21}(t,\tilde{t})\| \cdot \|t' - t\| \leq C \|t' - t\|, \\
			\|K_{21}(t,t') - K_{21}(t)\| \leq C\sup_{\tilde{t} \in B(t,\|t' - t\|)} \|K_{22}(t,\tilde{t})\| \cdot \|t' - t\| \leq C \|t' - t\|.
		\end{aligned}
	\end{equation*}
	
	\subsection{Asymptotics of Hessian}
	\label{subsec:hessian-asymptotics}
	In the main text we have observed that conditional on the selection event $N(\mc{S}_{t^*}) = 1$, the observed Hessian $\wh{H}$ is a biased estimate for $-\nabla^2\mu_{t^*}$ and instead has deterministic limit $\bar{H}_{t^*} = \E[-\nabla^2 Y_{t^*}|Y_{t^*} = \bar{u}_{t^*}]$. In this section, we prove the related claim: conditional on $\{Y_t = y,\nabla Y_t = 0\}$ at some $t \in B_d(t^*,\varepsilon_n), y \in \bar{u}_{t^*} \pm \Delta_n$, the relative error between $-\nabla^2 Y_s$ and $\bar{H}_{t^*}$ converges to $0$ in $\P$-probability, uniformly over $s \in B_d(t^*,\varepsilon_n)$. In fact we derive explicit bounds on the rate of convergence, see~\eqref{eqn:hessian-uniform-concentration}. Later in Section~\ref{subsec:pf-pivot-hessian} we show that this implies $\wh{H}$ has deterministic limit $\bar{H}_{t^*} = \E[-\nabla^2 Y_{t^*}|Y_{t^*} = \bar{u}_{t^*}]$ under $\Q^{\mc{S}_{t^*}}$. 
	
	\paragraph{Projection/residual decomposition.}
	To describe the asymptotic behavior of the process $-\nabla^2Y_s$ nearby a selected peak $t \in \wh{T}_u$, we make use of the following decomposition of the negative Hessian of the noise:
	\begin{equation*}
		\begin{aligned}
			-\nabla^2 \epsilon_s =  
			-K_{20}(s,t) \epsilon_t  - K_{21}(s,t)(\Lambda_t^{-1} \nabla \epsilon_t) + R_s^{t}.
		\end{aligned}
	\end{equation*}
	By construction the residual $R_s^t$ is a symmetric, mean-zero Gaussian random matrix that is independent of $(\epsilon_t,\nabla \epsilon_t)$. The covariance between $R_s^{t},R_{s'}^t$ -- either marginally or conditionally on $(\epsilon_t,\nabla \epsilon_t)$ -- is an array $K_{22}^{t}(s,s') \in \R^{d \times d \times d \times d}$, having elements
	\begin{equation}
		\label{eqn:hessian-covariance}
		[K_{22}^{t}(s,s')]_{ijkl} := \Cov\big([R_s^t]_{ij},[R_{s'}^t]_{kl}\big) = [K_{22}(s,s')]_{ijkl} - [K_{20}(s,t)]_{ij} [K_{20}(s',t)]_{kl} - \sum_{m,o} [K_{21}(s,t)]_{klm} [\Lambda_t^{-1}]_{mo} [K_{12}(t,s')]_{oij}.
	\end{equation}
	Conditional on $\{\epsilon_t = y - \mu_t, \nabla \epsilon_t = -\nabla \mu_t\}$, we have that
	\begin{equation*}
		-\nabla^2 \epsilon_s \overset{d}{=} -K_{20}(s,t)(y - \mu_t) + K_{21}(s,t)(\Lambda_t^{-1} \nabla \mu_t) + R_s^{t},
	\end{equation*}
	and therefore conditional $\{Y_t = y, \nabla Y_t = 0\}$ we have that
	\begin{equation}
		\label{eqn:hessian-decomposition}
		-\nabla^2 Y_s \overset{d}{=} H_s^t(y) + R_s^{t}, \quad H_s^t(y) := -\nabla^2 \mu_s - K_{20}(s,t)(y - \mu_t) + K_{21}(s,t)(\Lambda_t^{-1} \nabla \mu_t), \quad R_s^{t} \sim N_{d \times d}(0,K_{22}^t(s,s)).
	\end{equation}
	We note that the pinned mean $H_t^t(y)$ is equal to the deterministic Hessian $H_{t|y}$ defined in~\eqref{eqn:deterministic-hessian}. Notationally, hereafter when evaluating the pinned mean $H_s^t(y)$ and residual $R_s^t$ at $t = s$, we will drop the superscript, writing $H_{t|y}$ for $H_t^{t}(y)$ and $R_t$ for $R_t^t$. 
	
	\paragraph{Asymptotic deterministic equivalent: signal region.}
	We now consider the Hessian $-\nabla^2 Y_s$ in the neighborhood of a true peak $t^*$. Recall the notation from the main text $\mc{B}_{t^*} := B_d(t^*,\varepsilon_n)$ and $\mc{I}_{t^*} := u_{t^*} \pm \Delta_n \cap (u,\infty)$. We will show that conditional on $\{Y_t = y,\nabla Y_t = 0\}$ for any $t \in \mc{B}_{t^*}, y \in \mc{I}_{t^*}$, with high probability the process $-\nabla^2 Y_{s}$ is uniformly close to $\bar{H}_{t^*}$ over all $s \in \mc{B}_{t^*}$. To show this we proceed from the projection/residual decomposition in~\eqref{eqn:hessian-decomposition}. We will show first that the relative error between the pinned mean $H_{s}^{t}(y)$ and the deterministic Hessian $\bar{H}_{t^*}$ is uniformly small. Then we will show that with high probability, uniformly over $s \in \mc{B}_{t^*}$, the residual Hessian $R_s^{t} \in \mc{R}_{t^*} := B_{d \times d}(0,\xi_n) \cap \R_{sym}^{d \times d}$, where $B_{d \times d}(0,r)$ denotes an $r$-ball in the set of symmetric matrices $\R_{sym}^{d \times d} = \{A \in \R^{d \times d}: A = A'\}$, the radius 
	\begin{equation*}
		\xi_n := \sigma_{22} \cdot \Delta_n \sqrt{d},
	\end{equation*}
	and the variance term is
	\begin{equation}
		\label{eqn:hessian-variance}
		\sigma_{22}^2 := \sup_{s,t \in \Rd} \sup_{x \in \S^{d - 1}} \Var[x'R_s^t x] < \infty.
	\end{equation}
	We begin with the pinned mean. Observe that $H_s^t(y)$ is Lipschitz in its lower argument; to bound its Lipschitz constant, we combine the following:
	\begin{align*}
		\sup_{s,t \in \mc{B}_{t^*}} \|\nabla^2 \mu_s - \nabla^2 \mu_t\| & \leq C\lambda_{t^*}\|s - t\| \\
		\sup_{s,t \in \mc{T}} \|K_{20}(s,t) - K_{20}(t,t) \| & \leq C \|s - t\| \\
		\sup_{s,t \in \mc{T}} \|K_{21}(s,t) - K_{21}(t,t)\| & \leq C \|s - t\|, 
	\end{align*}
	to conclude via the triangle inequality that
	\begin{equation*}
		\begin{aligned}
			& \sup_{s,t \in \mc{B}_{t^*}}\|H_s^t(y) - H_{t|y}\| \\
			& = \sup_{s,t \in \mc{B}_{t^*}} \|\nabla^2 \mu_t - \nabla^2 \mu_s + (y - \mu_t) (K_{20}(t,t) - K_{20}(s,t)) + (K_{21}(s,t) - K_{21}(t,t))\Lambda_t^{-1}\nabla \mu_t\| \\
			& \leq \sup_{s,t \in \mc{B}_{t^*}} C\Big(\lambda_{t^*}\|s - t\| + |y - \mu_t| \cdot \|s - t\| + \|\nabla \mu_t\| \cdot \|s - t\|\Big) \\
			& \overset{(i)}{\leq} \sup_{s,t \in \mc{B}_{t^*}} C\Big(\lambda_{t^*}\|s - t\| + \big(|y - \bar{u}_{t^*}| + |\bar{u}_{t^*} - \mu_{t^*}| + \lambda_{t^*}\|t - t^*\|\big)\|s - t\| + \lambda_{t^*} \cdot \|t - t^*\|\cdot\|s - t\|\Big) \\
			& \overset{(ii)}{\leq} C\Big(\lambda_{t^*} \varepsilon_n + \Delta_n \varepsilon_n + |\bar{u}_{t^*} - \mu_{t^*}| \varepsilon_n\Big) \\
			& \overset{(iii)}{\leq} C\Delta_n.
		\end{aligned}
	\end{equation*}
	Above $(i)$ uses the upper bounds on $|\mu_t - \mu_{t^*}|$ and $\|\nabla \mu_t\|$ stated in~\eqref{eqn:signal-1st-and-2nd-derivative-control}, in $(ii)$ and $(iii)$ we have absorbed asymptotically negligible terms into the constant $C$, and in $(iii)$ we observe that $\lambda_{t^*} \varepsilon_n \leq C \Delta_n$. Essentially the same analysis bounds the difference between $H_{t|y}$ and $\bar{H}_{t^*}$ uniformly over $t \in \mc{B}_{t^*}, y \in \mc{I}_{t^*}$:
	\begin{equation}
		\label{eqn:deterministic-hessian-relative-error}
		\sup_{t \in \mc{B}_{t^*}, y \in \mc{I}_{t^*}} \|H_{t|y} - \bar{H}_{t^*}\| \leq \sup_{t \in \mc{B}_{t^*}, y \in \mc{I}_{t^*}} C\Big(\lambda_{t^*}\|t - t^*\| + (|y - \bar{u}_{t^*}| + |\bar{u}_{t^*} - \mu_{t^*}|) \cdot \|t - t^*\| + \lambda_{t^*} \|t - t^*\|^2\Big) \leq C\Big(\lambda_{t^*} \varepsilon_n \Big) \leq C \Delta_n.
	\end{equation}
	To show that the residual $R_{s}^t$ is relatively small with high probability, we will apply the Borell-TIS inequality as stated in~\eqref{eqn:borell-tis-consequence}, which applies as $K_{22}^{t}(s,\cdot) \in C^1(\Rd)$ for all $s  \in \Rd$ by Assumption~\ref{asmp:covariance-holder}, and implies that for any constant $\eta > 0$, there exists a constant $C(\eta)$ such that 
	\begin{equation}
		\label{eqn:hessian-residual-tail-behavior}
		\begin{aligned}
			\P\Big(\sup_{s \in \mc{B}_{t^*}} \lambda_{\max}(R_s^t) \geq \xi_n\Big) 
			& \leq \P\Big(\sup_{s \in B(t^*,1)} \sup_{x \in \S^{d - 1}} x'R_s^t x \geq \xi_n\Big) \\
			& \leq C(\eta) \exp\Big(\eta \xi_n^2 - \frac{\xi_n^2}{2\sigma_{22}^2}\Big).
		\end{aligned}
	\end{equation}
	Combining our analyses of the pinned mean and pinned residual, we conclude the following: conditional on $\{Y_t = y,\nabla Y_t = 0\}$ for any $t \in \mc{B}_{t^*},y \in \mc{I}_{t^*}, \eta > 0$, 
	\begin{equation}
		\label{eqn:hessian-uniform-concentration}
		\sup_{s \in \mc{B}_{t^*}} \frac{\|-\nabla^2 Y_s - \bar{H}_{t^*}\|}{\|\bar{H}_{t^*}\|} \leq C\delta_{t^*}(\Delta_n + \xi_n) \leq C \varepsilon_n,
	\end{equation}
	with probability at least $1 - C(\eta) \exp(\eta \xi_n^2 - \xi_n^2/(2\sigma_{22}^2))$.
	
	\paragraph{First-order Taylor expansion of Hessian.}
	So far we have derived an upper bound on the difference between the observed Hessian $-\nabla^2 Y_t$ and the deterministic Hessian $\bar{H}_{t^*}$. To obtain second-order accurate approximations of the density $p(t,y)$, we will need to compute a first-order Taylor expansion of the observed Hessian about $\bar{H}_{t^*}$. Conditional on $\{Y_t = y,\nabla Y_t = 0,R_t = R\}$, the observed Hessian is equal to
	\begin{equation*}
		-\nabla^2 Y_t = H_{t|y} + R.
	\end{equation*}
	The first-order Taylor expansion of the right hand side about $(t = t^*, y = \bar{u}_{t^*},R = 0)$ is
	\begin{equation}
		\begin{aligned}
			\label{eqn:deterministic-hessian-taylor-expansion}
			\bar{H}_t(y,R) & := \bar{H}_{t^*} + T_{10}^{H}(t - t^*) + T_{01}^{H}(y - \bar{u}_{t^*}) + R, \\
			T_{10}^{H}(h) & = -\nabla^3 \mu_{t^*}(h) + (\bar{u}_{t^*} - \mu_{t^*})\dot{\Lambda}_{t^*}(h) - \Gamma_{t^*}(\nabla^2\mu_{t^*} (h)), \quad T_{01}^{H}(y - \bar{u}_{t^*}) = (y - \bar{u}_{t^*})\Lambda_{t^*}.
		\end{aligned}
	\end{equation}
	To upper bound the error incurred by Taylor expansion, we combine the following:
	\begin{equation*}
		\begin{aligned}
			\|\nabla^2 \mu_{t^* + h} - \nabla^2 \mu_{t^*} - \nabla^3 \mu_{t^*}(h)\| 
			& \leq C \lambda_{t^*} \|h\|^2, \\
			\|\Gamma_{t^* + h}(\nabla \mu_{t^* + h}) - \Gamma_{t^*}(\nabla^2\mu_{t^*} h)\| 
			& \leq C\lambda_{t^*}\|h\|^2, \\
			\|(y - \mu_{t^*})\Lambda_{t^*} + (\bar{u}_{t^*} - \mu_{t^*})\dot{\Lambda}_{t^*}(h) - (y - \mu_{t^* + h})\Lambda_{t^* + h}\| 
			& \leq C(|y - \bar{u}_{t^*}|\|h\| + |\bar{u}_{t^*} - \mu_{t^*}|\|h\|^2), 
		\end{aligned}
	\end{equation*}
	to conclude that the Taylor expansion remainder term is upper bounded by 
	\begin{equation}
		\label{eqn:deterministic-hessian-taylor-expansion-remainder}
		C\Big(\lambda_{t^*}\|t - t^*\|^2 + |y - \bar{u}_{t^*}|\|t - t^*\| + |\bar{u}_{t^*} - \mu_{t^*}|\|t - t^*\|^2\Big) \leq C\Big(\lambda_n\|t - t^*\|^2 + |y - \bar{u}_{t^*}|\|t - t^*\|\Big).
	\end{equation}
	The second inequality above follows from the upper bound $|\bar{u}_{t^*} - \mu_{t^*}| \leq C\lambda_n$ assumed in Assumption~\ref{asmp:hessian-curvatures}. Finally, notice that the first-order terms in the Taylor expansion have magnitude at most
	\begin{equation}
		\label{eqn:deterministic-hessian-taylor-expansion-first-order}
		\|T_{10}^{H}(h)\| \leq C \lambda_n \|h\| \leq C \sqrt{\log \lambda_n},   \quad \|T_{01}^{H}(y)\| \leq |y - \bar{u}_{t^*}| \leq \Delta_n \leq C \sqrt{\log \lambda_n}, \quad \|R\| \leq C \sqrt{\log \lambda_n},
	\end{equation}
	and so these terms are asymptotically negligible compared to the leading order term $\bar{H}_{t^*} \asymp \lambda_n$. Thus, for all $n \in \mathbb{N}$ sufficiently large that $C \sqrt{\log \lambda_n} < \lambda_n/4$, it follows that $\|\bar{H}_{t}(y,R) - \bar{H}_{t^*}\| \leq \lambda_n/2$ and so $\|\bar{H}_{t}(y,R)\| \geq \lambda_n/2$. We conclude that for all such $n \in \mathbb{N}$, the relative error in a first-order Taylor expansion of the observed Hessian is at most
	\begin{equation}
		\frac{\|H_{t|y} + R - \bar{H}_t(y,R)\|}{\|\bar{H}_t(y,R)\|} \leq C\big(\|h\|^2 + \delta_n |y - \bar{u}_{t^*}| \|h\|\big) := \Err_{H}(h,y).
	\end{equation}

	\paragraph{Local expansion of determinant.}
	Applying the results of the preceding paragraph leads to a local expansion of the determinant of the observed Hessian $\det(H_{t|y} + R)$ about $(t = t^*,y = \bar{u}_{t^*}, R = 0)$. Let $E_t(y) = H_{t|y}  +  R - \bar{H}_t(y,R)$, and suppose $n \in \mathbb{N}$ is large enough that $C(\delta_n^2 \log \lambda_n) \leq 1/4$, where $C$ is the constant from~\eqref{eqn:deterministic-hessian-taylor-expansion-remainder}, and $C\sqrt{\log \lambda_n} \leq \lambda_n/4$. It follows that $\|E_{t}(y)\|/\|\bar{H}_t(y,R)\| \leq 1/2$. As a result we can apply~\eqref{eqn:matrix-det-taylor-expansion} to conclude that
	\begin{equation*}
		\begin{aligned}
			\Big|\det(H_{t|y} + R) - \det(\bar{H}_{t}(y,R))\Big| \leq C \det(\bar{H}_{t}(y,R)) \cdot \frac{\|E_{t}(y)\|}{\lambda_{\min}(\bar{H}_{t}(y,R))} \leq C \det(\bar{H}_{t^*})  \cdot \Err_H(h,y).
		\end{aligned}
	\end{equation*}
	Now let's express $\bar{H}_t(y,R) = \bar{H}_{t^*} + \bar{E}(h,y,R)$, where $\bar{E}(h,y,R) = T_{10}^{H}(h) + T_{01}^{H}(y) + R$. As stated earlier if $n$ is large enough so that $C\sqrt{\log \lambda_n} \leq \lambda_n/4$, then $\|\bar{E}(h,y,R)\| \leq \|H_{t^*}\|/2$. We may therefore apply~\eqref{eqn:matrix-det-taylor-expansion} again, concluding that 
	\begin{equation*}
		\begin{aligned}
			\Big|\det(\bar{H}_{t}(y,R)) - \det\big(\bar{H}_{t^*})(1 + \tr(\bar{H}_{t^*}^{-1}\bar{E}(h,y,R)\big)\Big|
			& \leq C \det(\bar{H}_{t^*}) \cdot \frac{\|\bar{E}(h,y,R)\|^2}{\lambda_{\min}(\bar{H}_{t^*})^2} \\
			& \leq C  \det(\bar{H}_{t^*}) (\|h\|^2 + |y - \bar{u}_{t^*}|^2\delta_n^2 + \|R\|^2 \delta_n^2).
		\end{aligned}
	\end{equation*}
	In the latter inequality we have inserted the upper bounds in~\eqref{eqn:deterministic-hessian-taylor-expansion-first-order}. Additionally, recognize that $\tr(\bar{H}_{t^*}^{-1}\bar{E}(h,y,R)) = T_{10}^{\det}(h) + T_{01}^{\det}(y) + \tr(\bar{H}_{t^*}^{-1}R)$.  In summary, we have shown that
	\begin{equation}
		\label{eqn:hessian-determinant-local-expansion}
		\begin{aligned}
			& \Big|\det(H_{t^* + h|y} + R) - \det(\bar{H}_{t^*})\big(1 + T_{10}^{\det}(h) + T_{01}^{\det}(y) + \tr(\bar{H}_{t^*}^{-1}R)\big)\Big| \\
			& \quad \leq C\det(\bar{H}_{t^*})\Big(\|h\|^2 + |y - \bar{u}_{t^*}|^2 \delta_n^2 + \|R\|^2\delta_n^2\Big) := \Err_{\det}(h,y,R).
		\end{aligned}
	\end{equation}
	We will use this local expansion in the proof of Lemma~\ref{lem:approximation-peak-intensity}.
	
	\subsection{Proof of Theorem~\ref{thm:approximate-joint-intensity}: Expansions of determinant and density terms}
	\label{subsec:pf-approximate-joint-intensity}
	We are now ready to develop the local expansions of the determinant and density terms in the Kac-Rice formula for $\rho(t,y)$, which we copy here for convenience:
	\begin{equation*}
		\rho(t,y) = \E\Big[\det(-\nabla^2 Y_t) \cdot \1(\nabla^2 Y_t \prec 0)|Y_t = y,\nabla Y_t = 0\Big] \cdot f_{Y_t,\nabla Y_t}(y,0)
	\end{equation*}
	\paragraph{Determinant term.}
	The expectation of the determinant of a Gaussian random matrix, multiplied by the indicator that the matrix is negative definite, is in general difficult to compute. However, in our high-curvature asymptotic setup, at points $t$ nearby true peaks $t^*$ the probability that $\nabla^2 Y_t$ is negative definite is exponentially close to $1$. We are thus left with the conditional expectation of a determinant of a Gaussian matrix which can be effectively approximated by local expansion. To concisely write the first-order terms in this expansion, we recall notation from the main text:
	\begin{equation*}
		T_{10}^{\det}(h) = \tr\big(\bar{H}_{t^*}^{-1} \dot{H}_{t^*|\bar{u}_{t^*}}(h)\big), \quad T_{01}^{\det}(y) = (y - \bar{u}_{t^*})\tr\big(\bar{H}_{t^*}^{-1}\Lambda_{t^*}),
	\end{equation*}
	where $\dot{H}_{t^*|\bar{u}_{t^*}} \in \R^{d \times d \times d}$ is the derivative of the deterministic Hessian, 
	\begin{equation}
		\label{eqn:hessian-deterministic-equivalent-derivative}
		\dot{H}_{t^*|\bar{u}_{t^*}}(h) = -\nabla^3 \mu_{t^*}(h) + (\bar{u}_{t^*} - \mu_{t^*})\cdot \dot{\Lambda}_{t^*}(h) + \Gamma_{t^*}(\nabla^2 \mu_{t^*}h).
	\end{equation}
	\begin{lemma}
		\label{lem:approximation-peak-intensity}
		Under the assumptions of Theorem~\ref{thm:approximate-joint-intensity}, at points $t \in \mc{B}_{t^*}, y \in \mc{I}_{t^*}$, writing $h = t - t^*$:
		\begin{equation}
			\begin{aligned}
				\label{eqn:approximate-intensity-1}
				& \frac{\Big|\E[\det(-\nabla^2 Y_t) \cdot \1(\nabla^2 Y_t \prec 0)|Y_t = y,\nabla Y_t = 0] - \det(\bar{H}_{t^*})\big(1 + T_{10}^{\det}(h) + T_{01}^{\det}(y)\big)\Big|}{\det(\bar{H}_{t^*})} \leq \Err_{\det}(h,y),
			\end{aligned}
		\end{equation}
		where
		\begin{equation*}
			\Err_{\det}(h,y) := C\Big(\big(|y - \bar{u}_{t^*}|^2 + 1\big)\delta_n^2 + \|h\|^2\Big).
		\end{equation*}
	\end{lemma}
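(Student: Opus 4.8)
\textbf{Proof proposal for Lemma~\ref{lem:approximation-peak-intensity}.} The plan is to use the projection/residual decomposition~\eqref{eqn:hessian-decomposition} to rewrite the conditional expectation as a Gaussian integral over the residual Hessian: conditional on $\{Y_t = y, \nabla Y_t = 0\}$ we have $-\nabla^2 Y_t \overset{d}{=} H_{t|y} + R_t$ with $R_t \sim N_{d\times d}(0,K_{22}^t(t,t))$, so the target quantity is $\E\big[\det(H_{t|y} + R_t)\,\1(H_{t|y} + R_t \succ 0)\big]$, the expectation taken over $R_t$. I would then split according to whether $R_t$ lies in the ball $\mc{R}_{t^*} = B_{d\times d}(0,\xi_n)\cap\R_{sym}^{d\times d}$, which by~\eqref{eqn:hessian-residual-tail-behavior} it does with probability $1 - O(\lambda_n^{-3d+\eta})$ for any fixed $\eta>0$ (after choosing $\eta$ small in that bound), and handle the two pieces separately. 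The key input is the already-established pointwise determinant expansion~\eqref{eqn:hessian-determinant-local-expansion}, valid for all $R$ with $\|R\| = \wt{O}(\sqrt{\log\lambda_n})$.

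On the event $\{R_t\in\mc{R}_{t^*}\}$ the indicator is identically one: combining~\eqref{eqn:deterministic-hessian-relative-error} with the lower bound $\lambda_{\min}(\bar{H}_{t^*})\geq\lambda_n$ (which holds because $-\nabla^2\mu_{t^*}\succeq\lambda_n I$, $\bar{u}_{t^*}\geq\mu_{t^*}$, and $\Lambda_{t^*}\succ0$) gives $\lambda_{\min}(H_{t|y})\geq\lambda_n/2$ for $n$ large, and since $\|R_t\|\leq\xi_n = O(\Delta_n) = o(\lambda_n)$ we get $H_{t|y}+R_t\succ0$. I would then plug in~\eqref{eqn:hessian-determinant-local-expansion}, namely
\[
\det(H_{t|y}+R_t) = \det(\bar{H}_{t^*})\big(1 + T_{10}^{\det}(h) + T_{01}^{\det}(y) + \tr(\bar{H}_{t^*}^{-1}R_t)\big) + E_t, \qquad |E_t|\leq \Err_{\det}(h,y,R_t),
\]
and take expectations of $(\cdot)\,\1(R_t\in\mc{R}_{t^*})$ term by term. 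The crucial cancellation is that the linear term vanishes in expectation, $\E[\tr(\bar{H}_{t^*}^{-1}R_t)] = 0$, so its truncated expectation equals minus its expectation over the complementary (tiny-probability) event and is negligible. The error term integrates to $\E[\Err_{\det}(h,y,R_t)] \leq C\det(\bar{H}_{t^*})\big(\|h\|^2 + (|y-\bar{u}_{t^*}|^2 + 1)\delta_n^2\big)$, using $\E\|R_t\|^2\leq C$, which follows from Assumption~\ref{asmp:covariance-holder} via~\eqref{eqn:hessian-variance}; this is exactly $\det(\bar{H}_{t^*})\cdot\Err_{\det}(h,y)$ with the stated $\Err_{\det}(h,y)$. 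Finally the factor $\P(R_t\in\mc{R}_{t^*}) = 1 - O(\lambda_n^{-3d+\eta})$ multiplying the bounded quantity $\det(\bar{H}_{t^*})(1 + T_{10}^{\det}(h) + T_{01}^{\det}(y))$ (recall $|T_{10}^{\det}(h)|,|T_{01}^{\det}(y)| = \wt{O}(\delta_n)$) introduces only an exponentially small relative error.

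The main obstacle is controlling the off-event contribution $\E\big[\det(H_{t|y}+R_t)\,\1(H_{t|y}+R_t\succ0)\,\1(R_t\notin\mc{R}_{t^*})\big]$ and showing it is $o(\det(\bar{H}_{t^*})\delta_n^2)$. Here I would factor $\det(H_{t|y}+R_t) = \det(H_{t|y})\det(I + H_{t|y}^{-1}R_t)$, note $|\det(H_{t|y})|\leq C\det(\bar{H}_{t^*})$ and $\|H_{t|y}^{-1}\| = O(\delta_n)$, so that $\det(I+H_{t|y}^{-1}R_t)$ is a degree-$d$ polynomial in the $O(1)$-variance Gaussian entries of $R_t$ whose coefficients are bounded for $n$ large; hence it has moments of every finite order bounded uniformly in $n$. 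Applying H\"older's inequality with a sufficiently large exponent $p<\infty$ against the polynomially small tail bound~\eqref{eqn:hessian-residual-tail-behavior} (which is of order $\lambda_n^{-3d+\eta}$ with $3d\geq3$, so a moment slightly above the second is needed when $d=1$) shows this term is $O(\det(\bar{H}_{t^*})\lambda_n^{-(3d-\eta)(1-1/p)})$, which for suitable $p$ is $o(\det(\bar{H}_{t^*})\delta_n^2)$ and thus absorbed for $n$ sufficiently large. Collecting the three contributions and dividing through by $\det(\bar{H}_{t^*})$ yields~\eqref{eqn:approximate-intensity-1} with $\Err_{\det}(h,y) = C\big((|y-\bar{u}_{t^*}|^2 + 1)\delta_n^2 + \|h\|^2\big)$.
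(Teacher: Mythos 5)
Your proposal is correct and follows essentially the same approach as the paper's proof: the projection/residual decomposition~\eqref{eqn:hessian-decomposition}, truncation to $\mc{R}_{t^*}$, the pointwise determinant expansion~\eqref{eqn:hessian-determinant-local-expansion} (with the positive-definite indicator automatic on-event), cancellation of the linear term $\tr(\bar{H}_{t^*}^{-1}R)$ by symmetry, and a H\"older bound against the tail~\eqref{eqn:hessian-residual-tail-behavior} to dispatch the off-event contribution. Two small inaccuracies, neither affecting the argument: the off-event probability is polynomially (not \textquotedblleft exponentially\textquotedblright) small in $\lambda_n$, namely of order $\lambda_n^{-3d+\eta}$; and for $d=1$ the H\"older exponent $p$ must exceed $3$ rather than $2$, since you need $3(1-1/p)>2$ — both harmless because all moments of the Gaussian determinant are finite and $3d>2$.
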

	The proof of Lemma~\ref{lem:approximation-peak-intensity} is given in Section~\ref{subsec:pf-approximation-peak-intensity}. The linear terms and error term have respective magnitude
	\begin{equation}
		\label{eqn:determinant-first-order}
		\sup_{t \in \mc{B}_{t^*}} T_{10}^{\det}(h) \leq C \varepsilon_n, 
		\quad \sup_{y \in \mc{I}_{t^*}} T_{01}^{\det}(y) \leq C\varepsilon_n, 
		\quad \sup_{t \in \mc{B}_{t^*}, y \in \mc{I}_{t^*}} \Err_{\det}(h,y) \leq C\varepsilon_n^2.
	\end{equation}
	This is why we refer to $T_{10}^{\det}(h), T_{01}^{\det}(y)$ as first-order terms, while the error is only second-order.
	
	\paragraph{Density term.}
	Standard asymptotic analysis (carried out in Section~\ref{subsec:asymptotic-analysis-density}) shows that under Assumption~\ref{asmp:well-conditioned}, at points $t \in \mc{B}_{t^*}, y \in \mc{I}_{t^*}$, the density of both the height and gradient are locally well-approximated by Taylor expansion around $t = t^*, y = \bar{u}_{t^*}$. We again recall the notation introduced in the main text for the first-order terms in this expansion:
	\begin{equation*}
		\begin{aligned}
			T_{21}^{Y}(h,y) 
			& := 
			\frac{1}{2}(y - \bar{u}_{t^*})h'\nabla^2\mu_{t^*}h \\
			T_{10}^{\nabla Y}(h)
			& := -\frac{1}{2}\tr\big(\Lambda_{t^*}^{-1}\dot{\Lambda}_{t^*}(h)\big), \quad
			\\
			T_{30}^{\nabla Y}(h)
			& := 
			h' \nabla^2 \mu_{t^*} \Lambda_{t^*}^{-1} \{\nabla^3\mu_{t^*}(h,h)\} -  h'\nabla^2 \mu_{t^*} \Lambda_{t^*}^{-1}\{\dot{\Lambda}_{t^*}(h)\}\Lambda_{t^*}^{-1} h
		\end{aligned}
	\end{equation*}
	\begin{lemma}
		\label{lem:asymptotic-expansion-density}
		Under the assumptions of Theorem~\ref{thm:approximate-joint-intensity}, at points $t \in \mc{B}_{t^*}, y \in \mc{I}_{t^*}$, writing $h = t - t^*$:
		\begin{equation}
			\begin{aligned}
				\frac{\Big|f_{Y_t}(y) - \bar{f}_{Y_t}(y)\Big|}{\bar{f}_{Y_t}(y)} & \leq C\Big(\big(|y - \bar{u}_{t^*}| + |\bar{u}_{t^*} - \mu_{t^*}|\big) \lambda_n \|h\|^3 + \lambda_n^2\|h\|^4 + |y - \bar{u}_{t^*}|^2 \lambda_n^2 \|h\|^4\Big) := \Err_{Y}(h,y),
			\end{aligned}
		\end{equation}
		and
		\begin{equation}
			\label{eqn:approximation-density-gradient}
			\begin{aligned}
				\frac{\Big|f_{\nabla Y_t}(0) - \bar{f}_{\nabla Y_t}(0)\Big|}{\bar{f}_{\nabla Y_t}(0)} & \leq C\Big(\|h\|^2 + \|h\|^4 \lambda_n^2\Big) := \Err_{f_{\nabla Y}}(h),
			\end{aligned}
		\end{equation}
		where
		\begin{equation}
			\label{eqn:asymptotic-density-height}
			\bar{f}_{Y_t}(y) := \frac{(1 + T_{21}^{Y}(h,y))}{\sqrt{2\pi}}\exp\Big(-\frac{1}{2}\Big\{(y - \mu_{t^*})^2 - (\bar{u}_{t^*} - \mu_{t^*})h'\nabla^2 \mu_{t^*}h\Big\} \Big),
		\end{equation}
		and
		\begin{equation}
			\label{eqn:asymptotic-density-gradient}
			\bar{f}_{\nabla Y_t}(0) := \Big(1 - \frac{T_{30}^{\nabla Y}(h)}{2} + T_{10}^{\nabla Y}(h)\Big)\frac{1}{\sqrt{(2\pi)^d\det(\Lambda_{t^*})}}\exp\bigg(-\frac{1}{2}h' \nabla^2 \mu_{t^*} \Lambda_{t^*}^{-1}  \nabla^2 \mu_{t^*} h\bigg), 
		\end{equation}
	\end{lemma}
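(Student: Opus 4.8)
\emph{Proof plan.} Both densities appearing in the statement are exactly Gaussian, so the lemma is entirely a matter of Taylor expansion and bookkeeping. Since the field is standardized, $K(t,t)=1$, we have $Y_t\sim N(\mu_t,1)$ marginally, hence $f_{Y_t}(y)=(2\pi)^{-1/2}\exp(-\tfrac12(y-\mu_t)^2)$; likewise $\nabla Y_t\sim N(\nabla\mu_t,\Lambda_t)$, hence $f_{\nabla Y_t}(0)=(2\pi)^{-d/2}(\det\Lambda_t)^{-1/2}\exp(-\tfrac12\nabla\mu_t'\Lambda_t^{-1}\nabla\mu_t)$. The plan is to expand these two explicit expressions in $h=t-t^*$ about $t=t^*$ (and, for the height, to reorganize around $y=\bar{u}_{t^*}$), separating the $\wt O(\delta_n)$ pieces — which become the first-order prefactor corrections $T_{21}^{Y},T_{10}^{\nabla Y},T_{30}^{\nabla Y}$ — from the $\wt O(\delta_n^2)$ remainder, which forms $\Err_{Y}$ and $\Err_{f_{\nabla Y}}$. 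The quantitative inputs are: $\nabla\mu_{t^*}=0$ because $t^*\in T^*$; the Taylor remainder estimates~\eqref{eqn:signal-taylor-expansion} and the magnitude bounds~\eqref{eqn:signal-1st-and-2nd-derivative-control} for $\mu,\nabla\mu,\nabla^2\mu,\nabla^3\mu$ coming from Assumption~\ref{asmp:well-conditioned}; the bounds~\eqref{eqn:variance-control} on $\Lambda_t$ and its first two derivatives from Assumption~\ref{asmp:covariance-holder}; the range constraints $\|h\|\le\varepsilon_n$, $|y-\bar{u}_{t^*}|\le\Delta_n$, which by~\eqref{eqn:local-parameters} give $\lambda_n\|h\|=\wt O(1)$ and $|y-\bar{u}_{t^*}|=\wt O(1)$; and $|\bar{u}_{t^*}-\mu_{t^*}|\le C\lambda_n$ from Assumption~\ref{asmp:hessian-curvatures}.

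For the height, I would write $y-\mu_{t^*+h}=(y-\mu_{t^*})-\tfrac12 h'\nabla^2\mu_{t^*}h+r$ with $|r|\le C\lambda_n\|h\|^3$ by~\eqref{eqn:signal-taylor-expansion}, square, and split the cross term $-(y-\mu_{t^*})h'\nabla^2\mu_{t^*}h=-(y-\bar{u}_{t^*})h'\nabla^2\mu_{t^*}h-(\bar{u}_{t^*}-\mu_{t^*})h'\nabla^2\mu_{t^*}h$. This exhibits $-\tfrac12(y-\mu_t)^2$ as the leading exponent $-\tfrac12\{(y-\mu_{t^*})^2-(\bar{u}_{t^*}-\mu_{t^*})h'\nabla^2\mu_{t^*}h\}$, plus $T_{21}^{Y}(h,y)=\tfrac12(y-\bar{u}_{t^*})h'\nabla^2\mu_{t^*}h$, plus the error $-\tfrac18(h'\nabla^2\mu_{t^*}h)^2-(y-\mu_{t^*})r$ and strictly smaller terms. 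The leading piece is $\wt O(1)$ (it \emph{cannot} be linearized, since $|\bar{u}_{t^*}-\mu_{t^*}|$ may be of order $\lambda_n$) so it stays inside the exponential; $T_{21}^{Y}=\wt O(\delta_n)$ so $\exp(T_{21}^{Y})=1+T_{21}^{Y}+O((T_{21}^{Y})^2)$, and $(T_{21}^{Y})^2=O(|y-\bar{u}_{t^*}|^2\lambda_n^2\|h\|^4)$ yields the last term of $\Err_{Y}$. Since the exponent errors are bounded for $n$ large (using $|\bar{u}_{t^*}-\mu_{t^*}|\le C\lambda_n$ and $\|h\|\le\varepsilon_n$), the bound $|e^x-1|\le 2|x|$ for $|x|\le1$ converts $-\tfrac18(h'\nabla^2\mu_{t^*}h)^2$ into the $\lambda_n^2\|h\|^4$ term and $-(y-\mu_{t^*})r$ into the $(|y-\bar{u}_{t^*}|+|\bar{u}_{t^*}-\mu_{t^*}|)\lambda_n\|h\|^3$ term of $\Err_{Y}$, delivering $\bar{f}_{Y_t}(y)$ as in~\eqref{eqn:asymptotic-density-height}.

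For the gradient, I would expand $\nabla\mu_{t^*+h}=\nabla^2\mu_{t^*}h+\tfrac12\nabla^3\mu_{t^*}(h,h)+O(\lambda_n\|h\|^3)$ (again via $\nabla\mu_{t^*}=0$), $\Lambda_{t^*+h}^{-1}=\Lambda_{t^*}^{-1}-\Lambda_{t^*}^{-1}\dot\Lambda_{t^*}(h)\Lambda_{t^*}^{-1}+O(\|h\|^2)$, and $(\det\Lambda_{t^*+h})^{-1/2}=(\det\Lambda_{t^*})^{-1/2}(1+T_{10}^{\nabla Y}(h)+O(\|h\|^2))$ with $T_{10}^{\nabla Y}(h)=-\tfrac12\tr(\Lambda_{t^*}^{-1}\dot\Lambda_{t^*}(h))$. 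In the quadratic form $\nabla\mu_t'\Lambda_t^{-1}\nabla\mu_t$ the leading term $h'\nabla^2\mu_{t^*}\Lambda_{t^*}^{-1}\nabla^2\mu_{t^*}h$ is $\wt O(1)$ and stays in the exponent; the two third-order contributions $h'\nabla^2\mu_{t^*}\Lambda_{t^*}^{-1}\nabla^3\mu_{t^*}(h,h)$ and $-h'\nabla^2\mu_{t^*}\Lambda_{t^*}^{-1}\dot\Lambda_{t^*}(h)\Lambda_{t^*}^{-1}\nabla^2\mu_{t^*}h$ combine into $T_{30}^{\nabla Y}(h)=\wt O(\delta_n)$; and the remaining exponent terms are $O(\lambda_n^2\|h\|^4)$. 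Exponentiating and multiplying by the normalizing factor gives $\bar{f}_{\mathrm{lead}}\cdot(1+T_{10}^{\nabla Y}(h)+O(\|h\|^2))\cdot(1-\tfrac12 T_{30}^{\nabla Y}(h)+O((T_{30}^{\nabla Y})^2)+O(\lambda_n^2\|h\|^4))$, writing $\bar{f}_{\mathrm{lead}}:=(2\pi)^{-d/2}(\det\Lambda_{t^*})^{-1/2}\exp(-\tfrac12 h'\nabla^2\mu_{t^*}\Lambda_{t^*}^{-1}\nabla^2\mu_{t^*}h)$; since cross-products of first-order terms and $(T_{30}^{\nabla Y})^2$ are, up to log factors, of order $\|h\|^2+\lambda_n^2\|h\|^4$, and since the prefactor $1+T_{10}^{\nabla Y}(h)-\tfrac12 T_{30}^{\nabla Y}(h)$ of the target $\bar{f}_{\nabla Y_t}(0)$ in~\eqref{eqn:asymptotic-density-gradient} is $1+\wt O(\delta_n)\ge\tfrac12$ for $n$ large, dividing yields the relative error $\Err_{f_{\nabla Y}}(h)=C(\|h\|^2+\lambda_n^2\|h\|^4)$.

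The main obstacle is not any single estimate but the organizational task of correctly partitioning the Taylor expansions into (i) the $\wt O(1)$ leading pieces that must remain inside the exponentials, (ii) the $\wt O(\delta_n)$ pieces $T_{21}^{Y},T_{10}^{\nabla Y},T_{30}^{\nabla Y}$ that are linearized into polynomial prefactors, and (iii) the $\wt O(\delta_n^2)$ residuals that form $\Err_{Y},\Err_{f_{\nabla Y}}$; at each stage one must verify the exponent errors stay bounded (so $|e^x-1|\le 2|x|$ applies) and keep the logarithmic factors under control — a slightly more careful accounting, as noted in the remark following Theorem~\ref{thm:approximate-joint-intensity}, removes the stray $\log\lambda_n$ factors.
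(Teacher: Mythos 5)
Your proposal is correct and follows essentially the same route as the paper: both proofs write out the explicit Gaussian densities, Taylor-expand $\mu_t$, $\nabla\mu_t$, $\Lambda_t^{-1}$, and $(\det\Lambda_t)^{-1/2}$ about $t=t^*$ (using $\nabla\mu_{t^*}=0$), reorganize the height-density exponent around $y=\bar{u}_{t^*}$ to isolate the $\wt O(1)$ leading piece that must remain exponentiated, linearize the $\wt O(\delta_n)$ terms into the prefactors $T_{21}^{Y},T_{10}^{\nabla Y},T_{30}^{\nabla Y}$, and control the remaining exponent errors (all $\wt O(\delta_n^2)$ on $\mc{B}_{t^*}\times\mc{I}_{t^*}$) via $|e^x-1|\le C|x|$ once they are shown to be bounded. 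The step-by-step bookkeeping — including the split $(y-\mu_{t^*})=(y-\bar{u}_{t^*})+(\bar{u}_{t^*}-\mu_{t^*})$ for the cross term, the observation that the $(\bar{u}_{t^*}-\mu_{t^*})$ piece cannot be taken out of the exponential, and the identification of the dominant error sources — matches the paper's Section on the proof of Lemma~\ref{lem:asymptotic-expansion-density}.
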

	The proof of Lemma~\ref{lem:asymptotic-expansion-density} is given in Section~\ref{subsec:asymptotic-analysis-density}. Theorem~\ref{thm:approximate-joint-intensity} is essentially an immediate consequence of Lemmas~\ref{lem:approximation-peak-intensity} and~\ref{lem:asymptotic-expansion-density}; for completeness, we include a proof in Section~\ref{subsec:pf-approximate-joint-intensity}.
	
	%The magnitude of the first-order terms is at most
	%\begin{equation*}
	%	\lambda_{t^*}^{+}\delta_{t^*}^{-1}\varepsilon_0 \varepsilon_1^2 \to 0, |\bar{u}_{t^*} - \mu_{t^*}| \lambda_{t^*} \varepsilon_1^3 \to 0, \varepsilon_1^3 \lambda_{t^*}^2 \to 0,
	%\end{equation*} 
	%implying that all of the first-order terms $T_{21}^{Y}(h,y), T_{10}^{\nabla Y}(h)$ and $T_{30}^{\nabla Y}(h)$, as well as the error terms $\Err_{Y}(h,y)$ and $\Err_{\nabla Y}(h,y)$, converge to $0$ uniformly over $(t,y) \in \mc{S}_{t^*}$. 

	\subsection{Proof of Lemma~\ref{lem:approximation-peak-intensity}}
	\label{subsec:pf-approximation-peak-intensity}
	We begin by rewriting the determinant term using the decomposition in~\eqref{eqn:hessian-decomposition}:
	\begin{equation}
		\begin{aligned}
			\E\Big[\det(-\nabla^2 Y_t) \cdot \1(\nabla^2 Y_t \prec 0)|Y_t = y, \nabla Y_t = 0\Big] 
			& = \int_{\R_{d \times d}^{sym}} \det(H_{t|y} + R) \cdot \1(H_{t|y} + R \succ 0) \cdot f_{R_t}(R) \,dR,
		\end{aligned}
	\end{equation}
	where $f_{R_t}(R)$ is the Gaussian density of the pinned residual $R_t$ defined in~\eqref{eqn:hessian-covariance}. We derive a local expansion of this integral via an approach that we will use repeatedly in the proofs to come: truncate the integral to a range at which a local expansion of the integrand is valid, apply this local expansion, then ``undo'' the truncation. This yields 
	\begin{equation*}
		\begin{aligned}
			& \int \det(H_{t|y} + R) \cdot \1(H_{t|y} + R \succ 0) \cdot f_{R_t}(R) \,dR \\
			& \quad = \det(\bar{H}_{t^*})(1 + T_{10}^{\det}(h) + T_{01}^{\det}(y)) \\
			& \quad - \int_{\mc{R}_{t^*}^c}  \det(\bar{H}_{t^*})(1 + T_{10}^{\det}(h) + T_{01}^{\det}(y)) \cdot f_{R_t}(R) \,dR \\
			& \quad + \int_{\mc{R}_{t^*}} \Big(\det(H_{t|y} + R)  \1(H_{t|y} + R \succ 0) -\det(\bar{H}_{t^*})\big(1 + T_{10}^{\det}(h) + T_{01}^{\det}(y)\big) \Big) \cdot f_{R_t}(R) \,dR  \\
			& \quad + \int_{\mc{R}_{t^*}^c} \det(H_{t|y} + R) \cdot \1(H_{t|y} + R \succ 0) \cdot f_{R_t}(R) \,dR.
		\end{aligned}
	\end{equation*}
	The first term on the right hand side of the equality above is the claimed local expansion, and so what remains is to bound the various sources of error, due to local expansion, truncation, and undoing the truncation.
	
	\paragraph{Error due to local expansion.}
	For all $n$ large enough such that $C\sqrt{\log \lambda_n} < \lambda_n$, the positive definite requirement $H_{t|y} + R \succ 0$ is automatically satisfied for all $t \in \mc{B}_{t^*},y \in \mc{I}_{t^*}, R \in \mc{R}_{t^*}$. (Recall that $\mc{R}_{t^*} = B_{d \times d}(0,\xi_n) \cap \R_{sym}^{d \times d}$). Moreover, $\mc{R}_{t^*}$ is symmetric about $0$ and $\tr(\bar{H}_{t^*}^{-1}R)$ is an odd function of $R$, so $\int_{\mc{R}_{t^*}} \tr(\bar{H}_{t^*}^{-1}R) f_{R_t}(R) \,dR = 0$. Thus the error due to local expansion is 
	\begin{equation}
		\begin{aligned}
			& \int_{\mc{R}_{t^*}} \Big(\det(H_{t|y} + R)  \1(H_{t|y} + R \succ 0) -\det(\bar{H}_{t^*})(1 + T_{10}^{\det}(h) + T_{01}^{\det}(y)\big) \Big) \cdot f_{R_t}(R) \,dR \\
			&\quad = \int_{\mc{R}_{t^*}} \Big(\det(H_{t|y} + R) -\det(\bar{H}_{t^*})\big(1 + T_{10}^{\det}(h) + T_{01}^{\det}(y) + \tr(\bar{H}_{t^*}^{-1}R)\big) \Big) \cdot f_{R_t}(R) \,dR \\
			& \quad \leq \int_{\mc{R}_{t^*}} \Err_{\det}(t,y,R) f_{R_t}(R) \,dR \\
			& \quad = \int_{\mc{R}_{t^*}} C\det(\bar{H}_{t^*})\Big(\|h\|^2 + |y - \bar{u}_{t^*}|^2\delta_n^2 + \delta_n^2\|R\|^2\Big) f_{R_t}(R) \,dR \\
			& \quad \leq C\det(\bar{H}_{t^*})\Big(\|h\|^2 + |y - \bar{u}_{t^*}|^2\delta_n^2 + \delta_n^2\Big) := \Err_{\det}(h,y,\mc{R}_{t^*}).
		\end{aligned}
	\end{equation}
	The final inequality follows since $f_{R_t}(R)$ is a probability density of a mean-zero random matrix, with variance $\Var([R_t]_{ij}) \leq \sigma_{22}^2$ for all $i,j \in \{1,\ldots,d\}$. 
	
	\paragraph{Truncation error.}
	The error due to truncation and the error due to undoing the truncation can both be bounded using the moderate-deviation inequality~\eqref{eqn:hessian-residual-tail-behavior}. The error due to undoing the truncation is at most
	\begin{align*}
		& \int_{\mc{R}_{t^*}^c}  \det(\bar{H}_{t^*})(1 + T_{10}^{\det}(h) + T_{01}^{\det}(y)) \cdot f_{R_t}(R) \,dR \\
		& \quad = \det(\bar{H}_{t^*})(1 + T_{10}^{\det}(h) + T_{01}^{\det}(y)) \cdot \P\Big(\|R_t\| \geq \xi_n\Big) \\
		& \quad \overset{(i)}{\leq} C \det(\bar{H}_{t^*}) \cdot \P\Big(\|R_t\| \geq \xi_n\Big) \\
		& \quad \leq C \det(\bar{H}_{t^*}) \cdot \P\Big(\lambda_{\max}(R_t) \geq \frac{1}{d}\xi_n\Big) \\
		& \quad \overset{(ii)}{\leq} C \det(\bar{H}_{t^*}) \cdot \exp\Big(-\frac{\xi_n^2}{3d\sigma_{22}^2}\Big) \\
		& \quad = C \det(\bar{H}_{t^*}) \delta_n^2 := \wb{\Trunc}_{\det},
	\end{align*}
	where~$(i)$ follows from the bounds on $T_{10}^{\det}(h), T_{01}^{\det}(y)$ set out in~\eqref{eqn:determinant-first-order}, and $(ii)$ invokes~\eqref{eqn:hessian-residual-tail-behavior} with a proper choice of constant $\eta$. A similar analysis takes care of the first truncation error term, with the slight complication that we need to properly account for the (polynomial) growth of the determinant term:
	\begin{equation*}
		\begin{aligned}
			\int_{\mc{R}_{t^*}^c} \det(H_{t|y} + R) f_{R_t}(R) \,dR 
			& \overset{(i)}{\leq} \Big(\int_{\mc{R}_{t^*}^c} |\det(H_{t|y} + R)|^{q_1} f_{R_t}(R) \,dR\Big)^{1/q_1} \cdot \Big(\P(\|R_t\| \geq \xi_n)\Big)^{1/q_2} \\
			& \leq \Big(\int_{\R_{sym}^{d \times d}} |\det(H_{t|y} + R)|^{q_1} f_{R_t}(R) \,dR\Big)^{1/q_1} \cdot \Big(\P(\|R_t\| \geq \xi_n)\Big)^{1/q_2} \\
			& \leq \Big(\int_{\R_{sym}^{d \times d}} (\|H_{t|y}\|^{d} + \|R\|^{d})^{q_1} f_{R_t}(R) \,dR\Big)^{1/q_1} \cdot \Big(\P(\|R_t\| \geq \xi_n)\Big)^{1/q_2} \\
			& \overset{(ii)}{\leq} C\Big(\int_{\R_{sym}^{d \times d}} (\lambda_n^{d q_1} + \|R\|^{d q_1}) f_{R_t}(R) \,dR\Big)^{1/q_1} \cdot \Big(\P(\|R_t\| \geq \xi_n)\Big)^{1/q_2} \\
			& \overset{(iii)}{\leq} C\Big(\lambda_n^{dq_1}+ 1\Big)^{1/q_1} \cdot \Big(\P(\|R_t\| \geq \xi_n)\Big)^{1/q_2} \\
			& \leq C\det(\bar{H}_{t^*}) \cdot \Big(\P(\|R_t\| \geq \xi_n)\Big)^{1/q_2} \\
			& \overset{(iv)}{\leq} C \det(\bar{H}_{t^*}) \delta_n^2 := \Trunc_{\det}.
		\end{aligned}
	\end{equation*}
	Above $(i)$ is H\"{o}lder's inequality, and holds for any conjugate exponents $1/q_1 + 1/q_2 = 1$; $(ii)$ relies on~\eqref{eqn:deterministic-hessian-relative-error} to upper bound $\|H_{t|y}\|^p \leq C \|\bar{H}_{t^*}\|^p$ for all $n$ sufficiently large and $(t,y) \in \mc{S}_{t^*}$; $(iii)$ follows because $f_{R_t}(R)$ is the probability density of a mean-zero Gaussian random matrix with variance $\Var([R_t]_{ij}) \leq \sigma_{22}^2$ for all $i,j \in \{1,\ldots,d\}$; and $(iv)$ follows from a sufficiently small (but constant) choice of conjugate exponent $q_2$. 
	
	Summing up the three sources of error gives $\Err_{\det}(h,y)$, and completes the proof of Lemma~\ref{lem:approximation-peak-intensity}.
	
	\subsection{Proof of Lemma~\ref{lem:asymptotic-expansion-density}}
	\label{subsec:asymptotic-analysis-density}
	
	We separately derive~\eqref{eqn:asymptotic-density-height} and~\eqref{eqn:asymptotic-density-gradient}.
	
	\paragraph{Density of height.}
	The density of the height is
	\begin{equation}
		\label{eqn:density-height}
		f_{Y_t}(y) = \frac{1}{\sqrt{2\pi}} \cdot \exp\Big(-\frac{1}{2} (y - \mu_t)^2\Big).
	\end{equation}
	We substitute the second-order Taylor expansion of $\mu_t$ about $t = t^*$ into the quadratic, incurring error
	\begin{equation*}
		\Big|\big(y - \mu_t\big)^2 - (y - \mu_{t^*} + \frac{1}{2}h'\nabla^2\mu_{t^*} h)^2\Big| \leq C\Big(\lambda_n^2\|h\|^5 + |y - \mu_{t^*}| \lambda_n\|h\|^3\Big).
	\end{equation*}
	Expanding the square gives
	\begin{equation*}
		\Big|(y - \mu_{t^*}- \frac{1}{2}h'\nabla^2\mu_{t^*} h)^2  - (y - \mu_{t^*})^2 - (y - \mu_{t^*})h'\nabla^2\mu_{t^*} h\Big| \leq C\Big(\lambda_n^2\|h\|^4\Big),
	\end{equation*}
	so that by the triangle inequality
	\begin{equation*}
		\Big|\big(y - \mu_t\big)^2 - (y - \mu_{t^*})^2 - (y - \mu_{t^*})h'\nabla^2\mu_{t^*} h\Big| \leq C\Big(\lambda_n^2\|h\|^4 + |y - \mu_{t^*}| \lambda_n\|h\|^3\Big).
	\end{equation*}
	The right hand side of the above equation is at most $C\delta_n^2(\log \lambda_n)^2$, which is less than $1/2$ for all $n$ sufficiently large. In that case, exponentiating both sides of the above inequality and invoking the bound $|\exp(x) - 1| \leq \exp(1/2) |x|$ for all $|x| \leq 1/2$ yields
	\begin{equation*}
		\frac{f_{Y_t}(y) - \tilde{f}_{Y_t}(y)}{\tilde{f}_{Y_t}(y)} \leq C(\lambda_n^2\|h\|^4 + |y - \mu_{t^*}| \lambda_n\|h\|^3),
	\end{equation*}
	where 
	\begin{equation*}
		\begin{aligned}
		\tilde{f}_{Y_t}(y) 
		& := \frac{1}{\sqrt{2\pi}} \exp\Big(-\frac{1}{2}(y - \mu_{t^*})^2 + \frac{1}{2}(y - \mu_{t^*})h'\nabla^2\mu_{t^*}h\Big) \\
		& = \frac{1}{\sqrt{2\pi}} \exp\Big(-\frac{1}{2}(y - \mu_{t^*})^2 +  \frac{1}{2}(\bar{u}_{t^*} - \mu_{t^*})h'\nabla^2\mu_{t^*}h\Big) \exp\Big(\frac{1}{2}(y - \bar{u}_{t^*})h'\nabla^2\mu_{t^*}h\Big).
		\end{aligned}
	\end{equation*}
	Finally, invoking the bound 
	\begin{equation*}
	\Big|\exp\Big(\frac{1}{2}(y - \bar{u}_{t^*})h'\nabla^2\mu_{t^*}h\Big) - \big(1 + \frac{1}{2}(y  - \bar{u}_{t^*})h'\nabla^2\mu_{t^*}h\big) \Big| \leq C |y - \bar{u}_{t^*}|^2 \lambda_n^2 \|h\|^4,
	\end{equation*}
	again valid for all $n$ sufficiently large, gives the desired approximation $\bar{f}_{Y_t}(y)$.
	
	\paragraph{Density of gradient.}
	The density of the gradient $\nabla Y_t$ evaluated at $0$ is
	\begin{equation}
		\label{eqn:density-gradient}
		f_{\nabla Y_t}(0) = \frac{1}{\sqrt{(2\pi)^d\det( \Lambda_t)}} \exp\Big(-\frac{1}{2} \nabla \mu_t' \Lambda_t^{-1}\nabla\mu_t\Big).
	\end{equation}
	A first-order Taylor expansion of $\Lambda_t^{-1}$ about $t = t^*$ implies
	\begin{equation*}
		\|\Lambda_{t^* + h}^{-1} - \Lambda_{t^*}^{-1} +  \Lambda_{t^*}^{-1}\{\dot{\Lambda}_{t^*}(h)\}\Lambda_{t^*}^{-1}\| \leq C\|h\|^2,
	\end{equation*}
	where the bound on the remainder term holds as~\eqref{eqn:non-degenerate} implies that $\|\Lambda_{t}^{-1}\|$ is uniformly bounded away from $0$. Inserting this along with a second-order Taylor expansion of $\nabla \mu_{t}$ about $t = t^*$ into the quadratic, we obtain
	\begin{equation*}
		\Big|\nabla \mu_t' \Lambda_t^{-1}\nabla\mu_t - \big(\nabla^2\mu_{t^*} h + \frac{1}{2}\nabla^3\mu(h,h)\big)'\big(\Lambda_{t^*}^{-1} - \Lambda_{t^*}^{-1}\{\dot{\Lambda}_{t^*}(h)\}\Lambda_{t^*}^{-1}\big)\big(\nabla^2\mu_{t^*} h + \frac{1}{2}\nabla^3\mu(h,h)\big)\Big| \leq C\big(\lambda_n^2\|h\|^4\big).
	\end{equation*}
	Expanding the resulting quadratic gives
	\begin{equation*}
		\Big|(\nabla^2\mu_{t^*} h + \frac{1}{2}\nabla^3\mu(h,h))'(\Lambda_{t^*}^{-1} - \Lambda_{t^*}^{-1}\{\dot{\Lambda}_{t^*}(h)\}\Lambda_{t^*}^{-1})(\nabla^2\mu_{t^*} h + \frac{1}{2}\nabla^3\mu(h,h)) - \Big(h'\nabla^2\mu_{t^*} \Lambda_{t^*}^{-1}\nabla^2 \mu_{t^*}h + T_{30}^{\nabla Y}(h)\Big)\Big| \leq C\Big(\lambda_n^2\|h\|^4\Big),
	\end{equation*}
	and so by the triangle inequality
	\begin{equation*}
		\Big|\nabla \mu_t' \Lambda_t^{-1}\nabla\mu_t - \Big(h'\nabla^2\mu_{t^*} \Lambda_{t^*}^{-1}\nabla^2 \mu_{t^*}h + T_{30}^{\nabla Y}(h)\Big)\Big| \leq C\Big(\lambda_n^2\|h\|^4\Big).
	\end{equation*}
	On the other hand, a first-order Taylor expansion of $\{\det( \Lambda_t)\}^{-1/2}$ about $t = t^*$ implies
	\begin{equation*}
		\Big|\frac{1}{\sqrt{\det( \Lambda_t)}} - \frac{1}{\sqrt{\det(\Lambda_{t^*})}}(1 - \frac{1}{2} \tr(\Lambda_{t^*}^{-1}\dot{\Lambda}_{t^*}(h)))\Big| \leq C\|h\|^2,
	\end{equation*}
	where~\eqref{eqn:non-degenerate} implies that $\det( \Lambda_t)$ is uniformly bounded away from $0$. We have thus established an approximation to the log of $f_{\nabla Y_t}(0)$; from here, analysis similar to that used in tackling the density of the height yields the claimed result~\eqref{eqn:asymptotic-density-gradient}.
	
	\subsection{Finishing off the proof of Theorem~\ref{thm:approximate-joint-intensity}}
	\label{subsec:pf-approximate-joint-intensity-2}
	After applying Lemmas~\ref{lem:approximation-peak-intensity} and~\ref{lem:asymptotic-expansion-density}, some routine calculations yield the upper bound
	\begin{equation*}
		\frac{\rho(t,y) - \bar{\rho}(t,y)}{\bar{\rho}(t,y)} \leq C\Big(\{T_{01}^{\rho}(y)\}^2 + \{T_{10}^{\rho}(h)\}^2 + \{T_{30}^{\rho}(h)\}^2 + \{T_{21}^{\rho}(h,y)\}^2 + \Err_{\det}(h,y) + \Err_{Y}(h,y) + \Err_{{\nabla Y}}(h)\Big). 
	\end{equation*}
	As one would anticipate, each of the squared first-order terms above is second-order:
	\begin{align*}
		\{T_{01}^{\rho}(y)\}^2 & \leq C |y - \bar{u}_{t^*}|^2 \delta_n^2 \\
		\{T_{10}^{\rho}(h)\}^2 & \leq C\|h\|^2 \\
		\{T_{30}^{\rho}(h)\}^2 & \leq C\lambda_n^4\|h\|^6 \\
		\{T_{21}^{\rho}(h,y)\}^2 & \leq C|y - \bar{u}_{t^*}|^2 \lambda_n^2\|h\|^4,
	\end{align*}
	all of which are at most a constant times $\Err_{\det}(h,y) + \Err_{\nabla Y}(h)$. Thus we have shown that $\frac{\rho(t,y) - \bar{\rho}(t,y)}{\bar{\rho}(t,y)}$ is upper bounded by $\Err_{\rho}(h,y)$, and the corresponding lower bound follows similarly.
	
	\iffalse
	\section{Connections to conditional inference in well-specified curved Gaussian models}
	\label{subsec:conditional-inference}
	
	In Remark~\ref{rmk:conditional-inference} we stated that Theorem~\ref{thm:approximate-joint-distribution}, when applied with $Y_t$ being the likelihood of a well-specified curved Gaussian family, recovered some classical results on the theory of conditional inference with ancillary statistics. We now justify and explain this claim.
	
	\paragraph{Setup.}
	Suppose we observe random vectors $y_1,\ldots,y_n,\ldots$ i.i.d. from Gaussian distribution $N_p(\varphi^*,I_p)$ where $\|\varphi^*\|^2$ is known (and taken, without loss of generality, to be $\|\varphi^*\|^2 = 1$). Consider modeling the distribution as belonging to the curved Gaussian family $\{N_n(\varphi_t,I_p)\}$, where $\varphi_t:[0,1]^d \to \S^{p - 1}$ is a known smooth mapping, and conducting inference by maximizing the likelihood, or equivalently, by maximizing 
	$$
	Y_t = \sqrt{n}(\varphi_t'\bar{y}), \quad \textrm{where} \quad \bar{y} = \frac{1}{n} \sum_{i = 1}^{n} y_i.
	$$
	If $\varphi^{*} = \varphi_{t^*}$ for some $t^* \in [0,1]^d$ then the curved Gaussian model is well-specified. In that case, as $n \to \infty$ the Hessian $-\nabla^2\mu_{t^*}$ has growing curvature, and there will be a unique local maximum $\hat{t}$ that consistently estimates $t^*$. 
	\fi
	
	\section{Proofs for Section~\ref{sec:peak-estimation}}
	\label{sec:pf-peak-estimation}

	\subsection{Proof of Proposition~\ref{prop:null-false-positive-rate}}
	\label{subsec:pf-null-false-positive-rate}
	We begin by deriving a second-order accurate Truncated Gaussian approximation to the Palm density defined in~\eqref{eqn:density-height-null}. The approximation is derived by local expansion, and will be valid at points $y \in v \pm \sqrt{3 \log v}$ for all $v \in \R$ sufficiently large. (Here and henceforth, we use the phrase ``for all $v$ sufficiently large'' to mean ``for all $v \geq C$'', where as always $C$ is a constant that depends only on the covariance kernel $K$, dimension $d$ and level $\alpha$.)
	\begin{lemma}
		\label{lem:density-height-null}
		For all $v$ sufficiently large, for any $t \in \mc{T}_0$ and $y \in v \pm \sqrt{3 \log v}$,
		\begin{equation}
			\frac{|\bar{p}_v(y|t) - p_v(y|t)|}{\bar{p}_v(y|t)} \leq C\Big(\frac{(y - v)^2}{v^2} + \frac{1}{v^2}\Big) := \Err_{p}(y|t)
		\end{equation}
		where
		\begin{equation}
			\bar{p}_v(y|t) = \frac{1}{\Psi(v - d/v)} \cdot \frac{1}{\sqrt{2\pi}}\exp\Big(-\frac{1}{2}(y - d/v)^2\Big).
		\end{equation}
	\end{lemma}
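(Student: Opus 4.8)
The plan is to reduce $p_v(y|t)$ to the product of a Gaussian height density and a polynomial‑in‑$y$ correction coming from the Hessian determinant, and then observe that on the range $y\in v\pm\sqrt{3\log v}$ this polynomial correction linearizes to exactly the mean shift $d/v$. First I specialize to $t\in\mc{T}_0$, so that $\mu_s\equiv 0$ on a neighborhood of $t$ and hence $\mu_t=0$, $\nabla\mu_t=0$, $\nabla^2\mu_t=0$. Since $K(t,t)=1$ this gives $f_{Y_t}(y)=\frac{1}{\sqrt{2\pi}}e^{-y^2/2}$, and by the projection/residual decomposition~\eqref{eqn:hessian-decomposition}, together with the standardized‑field identity $K_{20}(t,t)=-\Lambda_t$, the conditional law of $-\nabla^2 Y_t$ given $\{Y_t=y,\nabla Y_t=0\}$ is that of $\Lambda_t y+R_t$ with $R_t\sim N_{d\times d}(0,K_{22}^t(t,t))$ (here $H_{t|y}$ of~\eqref{eqn:deterministic-hessian} reduces to $y\Lambda_t$). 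Consequently, writing $D(y):=\E[\det(\Lambda_t y+R_t)\,\1(\Lambda_t y+R_t\succ 0)]$, the Palm density~\eqref{eqn:density-height-null} is exactly the renormalized restriction $p_v(y|t)=D(y)f_{Y_t}(y)\,\1(y>v)\big/\int_v^\infty D(z)f_{Y_t}(z)\,dz$.

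Next I analyze $D(y)$. Since $\Lambda_t\succeq\sigma_1^2 I$ by~\eqref{eqn:non-degenerate} and $R_t$ has variance bounded by $\sigma_{22}^2$, on the range $y\asymp v$ the event $\{\Lambda_t y+R_t\succ 0\}$ fails with probability $e^{-cv^2}$; combining the moderate‑deviation bound~\eqref{eqn:hessian-residual-tail-behavior} with H\"older's inequality to dominate the polynomial growth of $\det(\Lambda_t y+R_t)$ — exactly as in the truncation‑error estimate of Section~\ref{subsec:pf-approximation-peak-intensity} — shows that dropping the indicator costs only an additive $O(e^{-cv^2})$. The identity $\det(\Lambda_t y+R_t)=\det(\Lambda_t)\,y^d\det(I+y^{-1}\Lambda_t^{-1}R_t)$ and the polynomial expansion $\det(I+y^{-1}M)=\sum_{k=0}^d y^{-k}e_k(M)$ then yield
$$
D(y)=\det(\Lambda_t)\,y^d\Big(1+\sum_{k\ge 2}c_k(t)\,y^{-k}\Big)+O(e^{-cv^2}),\qquad c_k(t):=\E[e_k(\Lambda_t^{-1}R_t)],
$$
where $c_0=1$ and $c_1=\E[\tr(\Lambda_t^{-1}R_t)]=0$ because $R_t$ is centered (and, if $d$ is odd, $c_d=0$ by Gaussian symmetry). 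Crucially, $\det(\Lambda_t)$ and all $c_k(t)$ are bounded uniformly in $t$ by Assumption~\ref{asmp:covariance-holder} and~\eqref{eqn:non-degenerate}, so $\sum_{k\ge 2}c_k(t)y^{-k}=O(1/v^2)$ uniformly on the range.

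Then I take logarithms and Taylor‑expand at $y=v$. Writing $y=v+w$ with $|w|\le\sqrt{3\log v}$, one has $w^2/v^2\le 3\log v/v^2\to 0$, so $d\log y=d\log v+dw/v+O(w^2/v^2)=\mathrm{const}+dy/v+O((y-v)^2/v^2)$ and $\log\big(1+\sum_{k\ge 2}c_k(t)y^{-k}\big)=O(1/v^2)$. Adding the $-y^2/2$ from $f_{Y_t}$ and completing the square (the discarded $d^2/(2v^2)$ is $O(1/v^2)$) gives, for a $y$‑independent $\kappa(t)$,
$$
\log\big(D(y)f_{Y_t}(y)\big)=\kappa(t)-\tfrac12\big(y-d/v\big)^2+O\big((y-v)^2/v^2+1/v^2\big).
$$
Exponentiating via $|e^x-1|\le 2|x|$ for small $|x|$ gives $D(y)f_{Y_t}(y)=e^{\kappa(t)}\,\tfrac{1}{\sqrt{2\pi}}e^{-(y-d/v)^2/2}\,(1+r(y))$ with $|r(y)|\le C\big((y-v)^2/v^2+1/v^2\big)$ on the range.

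Finally I transfer this to the normalized densities. Outside $v\pm\sqrt{3\log v}$ both $D(y)f_{Y_t}(y)$ and $e^{-(y-d/v)^2/2}$ are bounded by $\mathrm{poly}(y)\,e^{-y^2/2}$, whose integral over $(v+\sqrt{3\log v},\infty)$ is negligible compared with $\Psi(v-d/v)$; and since the truncated‑Gaussian conditional second moment of $(y-v)$ at threshold $v$ is $O(1/v^2)$, integrating $D(y)f_{Y_t}(y)$ over $(v,\infty)$ equals $e^{\kappa(t)}\Psi(v-d/v)(1+O(1/v^2))$. Dividing, the factors $e^{\kappa(t)}$ cancel and
$$
p_v(y|t)=\frac{D(y)f_{Y_t}(y)}{\int_v^\infty D(z)f_{Y_t}(z)\,dz}=\bar{p}_v(y|t)\cdot\frac{1+r(y)}{1+O(1/v^2)}=\bar{p}_v(y|t)\,\big(1+O((y-v)^2/v^2+1/v^2)\big),
$$
which is the claim. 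The main obstacle is the bookkeeping in the second and last steps — controlling the polynomial growth of $\det(\Lambda_t y+R_t)$ when stripping the positive‑definiteness indicator and when bounding tail mass (the H\"older/moderate‑deviation argument), and keeping every constant uniform over $t\in\mc{T}_0$. The one genuinely useful point is that the leading determinant factor is \emph{exactly} $\det(\Lambda_t)y^d$, so $d\log y$ linearizes to $dy/v$ up to a constant and produces precisely the shift $d/v$, while the odd‑order fluctuation terms of $R_t$ vanish by symmetry and hence do not perturb the shift at first order.
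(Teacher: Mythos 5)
Your proof is correct and follows essentially the same route as the paper's (Lemmas~\ref{lem:approximate-intensity-null}--\ref{lem:jacobian-null}): both exploit the identity $H_{t|y}=y\Lambda_t$ under the null, kill the first-order fluctuation of the residual $R_t$ via its mean-zero property, and extract the mean shift $d/v$ from the linearization of the leading $y^d$ behavior of the determinant, with truncation/Borell--TIS/H\"older bookkeeping for the PSD indicator and tails. The only cosmetic difference is that you factor out $y^d$ explicitly and expand $\det(I+y^{-1}\Lambda_t^{-1}R_t)$ via elementary symmetric polynomials before taking logs, whereas the paper Taylor-expands $\det(H_{t|y}+R)$ jointly about $(y,R)=(v,0)$ and works multiplicatively; the error terms and the resulting second-order bound are the same.
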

	The proof of Lemma~\ref{lem:density-height-null} is given in Section~\ref{subsec:pf-density-height-null}.
	The survival function corresponding to this density is $\P(Y_t > u|t \in \wh{T}_v) = \int_{u}^{\infty} p_v(y|t) \,dy$. Using Lemma~\ref{lem:density-height-null} and a truncation argument it can be shown that this survival function is close to a Truncated Gaussian survival function with mean parameter $d/v$.
	\begin{lemma}
		\label{lem:survival-height-null}
		For any constant $B > 0$, the following statement holds: for all $v$ sufficiently large, for any $v \leq u \leq v + B/v$,
		\begin{equation*}
			\sup_{t \in \mc{T}_0}\frac{\Big|\P(Y_t > u|t \in \wh{T}_v) - \S_{v}(u,d/v)\Big|}{\S_v(u,d/v)} \leq \frac{C}{v^2},
		\end{equation*}
		where $C$ depends on $B$. 
	\end{lemma}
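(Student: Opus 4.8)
The plan is to compare the survival function $\P(Y_t>u\mid t\in\wh T_v)=\int_u^\infty p_v(y\mid t)\,dy$ directly against $\S_v(u,d/v)=\int_u^\infty\bar p_v(y\mid t)\,dy$, where $\bar p_v(y\mid t)$ is the mean-shifted Gaussian density from Lemma~\ref{lem:density-height-null} (note $\int_v^\infty\bar p_v(y\mid t)\,dy=1$, so $\int_u^\infty\bar p_v(y\mid t)\,dy=\Psi(u-d/v)/\Psi(v-d/v)$ is exactly $\S_v(u,d/v)$). First I would observe that for $v\le u\le v+B/v$ the quantity $\S_v(u,d/v)$ is bounded below by a constant $c(B)>0$: the Mills-ratio bound $\Psi(a+\delta)\ge\Psi(a)e^{-(a+\delta)\delta}$ with $a=v-d/v$, $\delta=u-v\le B/v$ gives $\S_v(u,d/v)\ge e^{-B-O(1/v^2)}$. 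Hence it suffices to bound the \emph{absolute} error $\bigl|\P(Y_t>u\mid t\in\wh T_v)-\S_v(u,d/v)\bigr|$ by $C/v^2$ uniformly over $t\in\mc T_0$ and then divide by $\S_v(u,d/v)$. I would then split both integrals at $w:=v+\sqrt{3\log v}$, the right endpoint of the range in which Lemma~\ref{lem:density-height-null} applies, writing the error as
$$\Bigl|\int_u^w\bigl(p_v(y\mid t)-\bar p_v(y\mid t)\bigr)\,dy\Bigr|+\int_w^\infty p_v(y\mid t)\,dy+\int_w^\infty\bar p_v(y\mid t)\,dy.$$

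For the bulk term, Lemma~\ref{lem:density-height-null} gives $\bigl|\int_u^w(p_v-\bar p_v)(y\mid t)\,dy\bigr|\le C\int_u^\infty\bigl((y-v)^2/v^2+1/v^2\bigr)\bar p_v(y\mid t)\,dy$. Using $(y-v)^2\le 2(y-u)^2+2B^2/v^2$ and the fact that $\bar p_v(\cdot\mid t)$ is proportional to the density of a $N(d/v,1)$ variable, standard Mills-ratio estimates give $\int_u^\infty(y-u)^2\bar p_v(y\mid t)\,dy=\S_v(u,d/v)\cdot O(1/v^2)$ (the second moment of a standard Gaussian about a truncation level $\asymp v$ is $O(1/v^2)$), so this contribution is $O(1/v^2)$. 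For the Gaussian tail, $\int_w^\infty\bar p_v(y\mid t)\,dy=\Psi(w-d/v)/\Psi(v-d/v)$, and the log-concavity bound $\Psi(a+\delta)\le\Psi(a)e^{-a\delta-\delta^2/2}$ with $a=v-d/v$, $\delta=\sqrt{3\log v}$ yields $\le\exp\bigl(-(v-d/v)\sqrt{3\log v}-\tfrac32\log v\bigr)=o(v^{-2})$.

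The substantive step is the tail of the true Palm density, $\int_w^\infty p_v(y\mid t)\,dy$. Conditioning on $Y_t$ inside the denominator of~\eqref{eqn:density-height-null} and using $Y_t\perp\nabla Y_t$ (constant variance), one writes $p_v(y\mid t)=g_t(y)f_{Y_t}(y)\mathbf{1}(y>v)\big/\int_v^\infty g_t(y')f_{Y_t}(y')\,dy'$ with $g_t(y):=\E[\det(-\nabla^2Y_t)\mathbf{1}(\nabla^2Y_t\prec0)\mid\nabla Y_t=0,Y_t=y]$ and $f_{Y_t}(y)=(2\pi)^{-1/2}e^{-y^2/2}$ under $H_{0,t}$. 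By the decomposition~\eqref{eqn:hessian-decomposition} specialized to $s=t$ under the null, conditional on $\{Y_t=y,\nabla Y_t=0\}$ one has $-\nabla^2Y_t\overset{d}{=}y\Lambda_t+R_t$ with $R_t$ mean-zero Gaussian whose entrywise variance is $\le\sigma_{22}^2$ uniformly (see~\eqref{eqn:hessian-variance}); since $\lambda_{\min}(\Lambda_t)\ge\sigma_1^2>0$ by~\eqref{eqn:non-degenerate}, the event $\{y\Lambda_t+R_t\succ0\}$ fails with probability $\le Ce^{-cy^2}$, and $\E[\det(y\Lambda_t+R_t)]$ is a degree-$d$ polynomial in $y$ with leading coefficient $\det(\Lambda_t)\ge\sigma_1^{2d}$ and uniformly bounded lower-order coefficients, so $c\,y^d\le g_t(y)\le C\,y^d$ for all $y$ above a fixed constant and all $t\in\mc T_0$. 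Bounding the numerator by $C\int_w^\infty y^de^{-y^2/2}\,dy\le Cw^{d-1}e^{-w^2/2}$ and the denominator from below by $\int_v^{v+1}g_t(y)f_{Y_t}(y)\,dy\ge c\,v^de^{-(v+1)^2/2}$ yields $\int_w^\infty p_v(y\mid t)\,dy\le C\,v^{-1}e^{-(w^2-(v+1)^2)/2}\cdot\mathrm{poly}(v)$, which is $o(v^{-2})$ since $w^2-(v+1)^2=2v\sqrt{3\log v}+3\log v-2v-1\to\infty$. Combining the three pieces and dividing by $\S_v(u,d/v)\ge c(B)$ gives the claim. The main obstacle is precisely this last step: establishing the two-sided polynomial control $c\,y^d\le g_t(y)\le C\,y^d$ and, in particular, the uniform-in-$t$ lower bound on $\int_v^\infty g_t(y)f_{Y_t}(y)\,dy$, which requires the non-degeneracy conditions~\eqref{eqn:non-degenerate} together with the uniform smoothness of $K$ in Assumption~\ref{asmp:covariance-holder} in order to control both the determinant term and the negative-definiteness indicator uniformly over the null region.
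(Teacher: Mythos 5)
Your proposal is correct, but it takes a genuinely different route from the paper's. The paper observes that the conditional survival function is a ratio of two marginal intensities, $\P(Y_t>u\mid t\in\wh T_v)=\rho_u(t)/\rho_v(t)$, and then applies Lemma~\ref{lem:marginal-intensity-null} twice --- once at threshold $v$ and once at threshold $u$ --- to approximate both numerator and denominator; the only remaining work is a Taylor step to pass from the resulting $u^d\Psi(u-d/u)$ in the numerator approximation to $v^d\Psi(u-d/v)$, which is where the constraint $v\le u\le v+B/v$ is used. Your approach instead integrates the density approximation from Lemma~\ref{lem:density-height-null} directly over $[u,\infty)$, splitting at $w=v+\sqrt{3\log v}$: the bulk error is controlled by integrating the relative error bound of Lemma~\ref{lem:density-height-null} against $\bar p_v$ (with the second-moment Mills estimate giving the $O(1/v^2)$ rate after dividing by the lower bound $\S_v(u,d/v)\ge c(B)$), and the two tail pieces are handled by Gaussian log-concavity and by a direct two-sided polynomial control on the Kac-Rice determinant factor $g_t(y)$. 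This is a valid alternative. The trade-off is that your tail argument for $\int_w^\infty p_v(y\mid t)\,dy$ essentially re-derives, in one shot, the content of Lemma~\ref{lem:jacobian-null} (the two-sided bound $c\,y^d\le g_t(y)\le C\,y^d$) and the truncation step inside the proof of Lemma~\ref{lem:marginal-intensity-null}; the paper's formulation avoids this duplication by reusing Lemma~\ref{lem:marginal-intensity-null} as a black box. Conversely, your version is more self-contained: it requires only Lemma~\ref{lem:density-height-null} plus elementary Gaussian estimates, and makes the role of the constraint $u-v\le B/v$ more transparent (it enters only through the lower bound on $\S_v(u,d/v)$, rather than through the Taylor comparison of $u^d$ with $v^d$ and $d/u$ with $d/v$). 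The minor details you flag as "the main obstacle" (the uniform-in-$t$ bounds on $g_t$) do go through exactly as you sketch, using $H_{t|y}=y\Lambda_t$ under $H_{0,t}$, the non-degeneracy in~\eqref{eqn:non-degenerate}, and the uniform control on $\|K\|_{C^5}$ from Assumption~\ref{asmp:covariance-holder}.
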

	The proof of Lemma~\ref{lem:survival-height-null} is given in Section~\ref{subsec:pf-survival-height-null}. We assume an upper bound on $u$ in Lemma~\ref{lem:survival-height-null} in order to get an upper bound on the relative, rather than absolute, error. It is likely that the result can be strengthened to hold for a wider range of $u$ -- thus allowing $\alpha \to 0$ in Proposition~\ref{prop:null-false-positive-rate} --  but we will not pursue this further.  
	
	The final step in our analysis is to use Lemma~\ref{lem:survival-height-null} to show that the null false positive rate is close to $\alpha$. This is straightforward. The expected number of null discoveries made by the TG test is
	\begin{equation}
		\label{pf:null-false-positive-rate-1}
		\E[N_{u_{\TG}(\alpha,v)}(\mc{T}_0)] = \int_{\mc{T}_0} \int_{u_{\TG}(\alpha,v)}^{\infty} \rho_v(t,y) \,dy \,dt =  \int_{\mc{T}_0} \Big\{\int_{u_{\TG}(\alpha,v)}^{\infty} p_v(y|t) \,dy \Big\} \cdot \int_{v}^{\infty} \rho_v(t,y) \,dy \,dt.
	\end{equation}
	For any fixed $\alpha \in (0,1)$, let $B_{\alpha}$ be the twice the $(1 - \alpha)$th quantile of the $\Exp(1)$ distribution. For all $n \in \mathbb{N}$ sufficiently large $u_{\TG}(\alpha,v_n) \leq v_n + \frac{B_{\alpha}}{v_n}$. We can therefore apply Lemma~\ref{lem:survival-height-null} to conclude that 
	\begin{equation*}
		\Big|\int_{u_{\TG}(\alpha,v_n)}^{\infty} p_v(y|t) \,dy - \int_{u_{\TG}(\alpha,v_n)}^{\infty} \bar{p}_v(y|t) \,dy \Big| \leq C \frac{\S_{v_n}(u_{\TG}(\alpha,v_n),d/v_n)}{v_n^2} \leq \frac{C}{v_n^2}.
	\end{equation*}
	Replacing $p_v(y|t)$ by $\bar{p}_v(y|t)$ in~\eqref{pf:null-false-positive-rate-1}, and recognizing that $\int_{u_{\TG}(\alpha,v)}^{\infty} \bar{p}_v(y|t) \,dy = \S_{v}(u_{\TG}(\alpha,v),d/v) = \alpha$ and $\int_{\mc{T}_0}\int_{v}^{\infty} \rho_v(t,y) \,dy \,dt = \E[N_v(\mc{T}_0)]$ , we deduce that
	\begin{equation*}
		\Big|\E[N_{u_{\TG}(\alpha,v_n)}(\mc{T}_0)] -  \alpha \cdot \E[N_{v_n}(\mc{T}_0)] \Big| \leq \frac{C}{v_n^2} \E[N_{v_n}(\mc{T}_0)].
	\end{equation*}
	After dividing both sides of the previous inequality by $\E[N_{v_n}(\mc{T}_0)]$, and recognizing that $\PCER_0(u,v) = \E[N_u(\mc{T}_0)]/N_{v}(\mc{T})] \leq \E[N_u(\mc{T}_0)]/N_{v}(\mc{T}_0)] $, we arrive at the claim of Proposition~\ref{prop:null-false-positive-rate}.
	
	\subsection{Proof of Lemma~\ref{lem:density-height-null}}
	
	\label{subsec:pf-density-height-null}
	Recall the Palm density defined in~\eqref{eqn:density-height-null},
	\begin{equation*}
		p_v(y|t) = \frac{\E[\det(-\nabla^2 Y_t) \cdot \1(\nabla^2 Y_t \prec 0)|\nabla Y_t = 0, Y_t = y]}{\E[\det(-\nabla^2 Y_t) \cdot \1(Y_t > v, \nabla^2 Y_t \prec 0)| \nabla Y_t = 0]} \cdot f_{Y_t}(y) \cdot \1(y > v) = \frac{\rho_v(t,y)}{\int_{v}^{\infty} \rho_v(t,y) \,dy}.
	\end{equation*}
	To prove Lemma~\ref{lem:density-height-null} we give a local expansion of $\rho_v(t,y)$ about $y = v$ that is accurate at points $t \in \mc{T}_0$.
	\begin{lemma}
		\label{lem:approximate-intensity-null}
		For all $v$ sufficiently large, for any $t \in \mc{T}_0$ and $y \in v \pm \sqrt{3 \log v}$,
		\begin{equation}
			\frac{\Big|\rho_v(t,y) - \bar{\rho}(t,y)\Big|}{\bar{\rho}(t,y)} \leq C\Big(\frac{(y - v)^2}{v^2} + \frac{1}{v^2}\Big).
		\end{equation}
		where
		\begin{equation}
			\label{eqn:approximate-intensity-null}
			\begin{aligned}
			\bar{\rho}(t,y) 
			& := \frac{v^d \sqrt{\det(\Lambda_t)} \exp(-d)}{(2\pi)^{(d + 1)/2}} \exp\Big(-\frac{1}{2}(y - d/v)^2\Big) .
			\end{aligned}
		\end{equation}
	\end{lemma}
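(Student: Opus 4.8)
The plan is to evaluate the three factors of the Kac--Rice intensity $\rho_v(t,y) = \E[\det(-\nabla^2 Y_t)\1(\nabla^2 Y_t \prec 0)\mid \nabla Y_t = 0, Y_t = y]\cdot f_{\nabla Y_t}(0)\cdot f_{Y_t}(y)$ directly on the null region $\mc{T}_0$, exploiting the fact that $\mu$ vanishes identically in a ball around any $t \in \mc{T}_0$. Two of the three factors are then exact: by standardization and the independence of $\epsilon_t$ and $\nabla\epsilon_t$, $f_{Y_t}(y) = (2\pi)^{-1/2}\exp(-y^2/2)$ and $f_{\nabla Y_t}(0) = (2\pi)^{-d/2}\det(\Lambda_t)^{-1/2}$. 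For the determinant term I would invoke the projection/residual decomposition~\eqref{eqn:hessian-decomposition}: since $\mu_t = 0$, $\nabla\mu_t = 0$ and $\nabla^2\mu_t = 0$, conditionally on $\{Y_t = y,\nabla Y_t = 0\}$ one has $-\nabla^2 Y_t \overset{d}{=} y\Lambda_t + R_t$, where $R_t$ is a mean-zero Gaussian matrix with $\Var([R_t]_{ij}) \le \sigma_{22}^2$ uniformly in $t$ by~\eqref{eqn:hessian-variance}. Hence the determinant term equals $\E[\det(y\Lambda_t + R_t)\,\1(y\Lambda_t + R_t \succ 0)]$.

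The second step is to show this last expectation equals $y^d\det(\Lambda_t)(1 + O(v^{-2}))$ uniformly over $t \in \mc{T}_0$ and $y \in v \pm \sqrt{3\log v}$. Since $\lambda_{\min}(\Lambda_t)\ge \sigma_1^2 > 0$ by~\eqref{eqn:non-degenerate} and $y \ge v/2$ for $v$ large, the matrix $y\Lambda_t$ has least eigenvalue at least $v\sigma_1^2/2 \to \infty$, while $\lambda_{\max}(-R_t)$ has Gaussian tails with a bounded variance proxy; a tail bound of the type~\eqref{eqn:hessian-residual-tail-behavior} (applied to the single matrix $R_t$) then makes $\P(y\Lambda_t + R_t \not\succ 0)$ exponentially small in $v^2$, and the same H\"older-plus-truncation device used in the proof of Lemma~\ref{lem:approximation-peak-intensity} --- splitting at $\|R_t\| \asymp v$ and absorbing the polynomial growth $\det(y\Lambda_t + R_t)\lesssim (Cv + \|R_t\|)^d$ against the exponential tail --- bounds the discarded contribution by $O(v^{-2})$ relative to $\bar\rho$. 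On the main event I would factor $\det(y\Lambda_t + R_t) = y^d\det(\Lambda_t)\det(I + y^{-1}\Lambda_t^{-1}R_t)$ and Taylor-expand the last factor via $\det(I + M) = 1 + \tr(M) + e_2(M) + \cdots$: the linear term has zero expectation because $\E[R_t] = 0$, the quadratic term $\E[e_2(y^{-1}\Lambda_t^{-1}R_t)]$ is $O(v^{-2})$ with a coefficient depending only on the bounded covariance of $R_t$, and the higher-order terms are controlled by moments of $\|R_t\|$ together with the truncation. This gives $\rho_v(t,y) = \frac{y^d\sqrt{\det(\Lambda_t)}}{(2\pi)^{(d+1)/2}}\exp(-y^2/2)\,(1 + O(v^{-2}))$, uniformly in $t \in \mc{T}_0$, where uniformity uses $\sigma_1^2 > 0$, $\sup_t\|\Lambda_t\| < \infty$ from~\eqref{eqn:variance-control}, and $\sigma_{22}^2 < \infty$.

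The final step is purely algebraic: identify $\frac{y^d\sqrt{\det(\Lambda_t)}}{(2\pi)^{(d+1)/2}}\exp(-y^2/2)$ with $\bar\rho(t,y)$ of~\eqref{eqn:approximate-intensity-null} up to relative error $O((y-v)^2/v^2 + v^{-2})$. Completing the square, $\tfrac12(y - d/v)^2 = \tfrac12 y^2 - d(v + (y-v))/v + d^2/(2v^2)$, so $v^d e^{-d}\exp(-\tfrac12(y-d/v)^2) = v^d e^{-y^2/2}\exp(d(y-v)/v)\,(1 + O(v^{-2}))$; and $y^d = v^d\exp(d\log(1 + (y-v)/v)) = v^d\exp(d(y-v)/v)\,(1 + O((y-v)^2/v^2))$ since $(y-v)/v = O(\sqrt{\log v}/v) \to 0$. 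Matching these and combining with the second step yields the stated bound. I expect the second step to be the main obstacle --- not any single estimate, but the bookkeeping needed to confirm that removal of the indicator $\1(\nabla^2 Y_t \prec 0)$ and the tail of the determinant expansion together contribute only $O(v^{-2})$ relative to the leading $\bar\rho$, uniformly in $t \in \mc{T}_0$ and across the full two-sided window $v \pm \sqrt{3\log v}$; fortunately this is exactly the H\"older/truncation template already validated in the proof of Lemma~\ref{lem:approximation-peak-intensity}, now applied to a single Gaussian matrix rather than a process.
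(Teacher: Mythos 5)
Your proposal is correct and essentially matches the paper's argument, which also works through the exact Kac--Rice factorization on $\mc{T}_0$, the projection/residual decomposition $-\nabla^2 Y_t \overset{d}{=} y\Lambda_t + R_t$, the odd-function/zero-mean trick that kills the $O(v^{-1})$ term, and the H\"older-plus-truncation device to control the determinant tail and the negative-definite indicator. The one organizational difference is that the paper routes through a separate lemma (the analogue of your step two) that expands the determinant term about $y=v$, giving $\det(v\Lambda_t)\big(1+(y-v)d/v\big)$ with error $O\big((y-v)^2/v^2+v^{-2}\big)$, and then recombines $\big(1+(y-v)d/v\big)e^{-y^2/2}$ with $e^{-d}e^{-(y-d/v)^2/2}$; you instead keep the determinant expansion centered at the running value $y$, obtaining the cleaner intermediate $y^d\det(\Lambda_t)\big(1+O(v^{-2})\big)$ with no $(y-v)$-dependence, and then absorb the $y^d$-to-$v^d$ conversion into the final algebraic identification. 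Both are valid; yours is a modest simplification. One small imprecision worth fixing: to Taylor-expand $\det(I + y^{-1}\Lambda_t^{-1}R_t)$ one should truncate at $\|R_t\|\lesssim cv$ for a \emph{small} constant $c$ (or at $\|R_t\|\asymp\sqrt{\log v}$ as the paper does) so that $\|y^{-1}\Lambda_t^{-1}R_t\| < 1$ on the main event, rather than at $\|R_t\|\asymp v$ with an unspecified constant; alternatively, since $\det(I+M)$ is an exact degree-$d$ polynomial, no truncation is needed for the main term and one can just bound the $k\ge 2$ terms by $\E\|R_t\|^k \cdot y^{-k} = O(v^{-2})$ directly, reserving the truncation solely for the indicator.
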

	
	The proof of Lemma~\ref{lem:approximate-intensity-null} is given in Section~\ref{subsec:pf-approximate-intensity-null}. It follows similar steps to the proof of Theorem~\ref{thm:approximate-joint-intensity}, but is simpler because we only compute a local expansion in the second argument of $\rho_v(t,y)$.
	
	To compute the normalizing constant $\rho_v(t) = \int_{v}^{\infty} \rho_v(t,y) \,dy$ -- which can be thought of as the marginal intensity function of the counting process $N_v$ -- we combine Lemma~\ref{lem:approximate-intensity-null} with a truncation argument.
	\begin{lemma}
		\label{lem:marginal-intensity-null}
		For all $v$ sufficiently large, for any $t \in \mc{T}_0$,
		\begin{equation*}
			\frac{|\rho_v(t) - \bar{\rho}_v(t)|}{\bar{\rho}_v(t)} \leq \frac{C}{v^2}.
		\end{equation*}
		where
		\begin{equation*}
			\bar{\rho}_v(t) = \frac{v^d \sqrt{\det(\Lambda_t)} \exp(-d)}{(2\pi)^{d/2}} \cdot  \Psi(v - d/v).
		\end{equation*}
	\end{lemma}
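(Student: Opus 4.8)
The plan is to combine the pointwise local expansion of Lemma~\ref{lem:approximate-intensity-null} with a truncation argument, exactly as in the proof of Lemma~\ref{lem:approximation-peak-intensity}. Writing $\rho_v(t,y)=\rho(t,y)$ for the Kac-Rice intensity~\eqref{eqn:kac-rice-intensity} (restricted to $y>v$), split $\rho_v(t) = I_1(t) + I_2(t)$ with $I_1(t) := \int_{v}^{v + \sqrt{3\log v}} \rho_v(t,y)\,dy$ the contribution of the window on which Lemma~\ref{lem:approximate-intensity-null} applies and $I_2(t) := \int_{v + \sqrt{3\log v}}^{\infty}\rho_v(t,y)\,dy$ the tail. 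On $I_1$ I would replace $\rho_v(t,y)$ by $\bar\rho(t,y)$ of~\eqref{eqn:approximate-intensity-null} at a relative cost of $C\big((y-v)^2/v^2 + 1/v^2\big)$ and then integrate this error against $\bar\rho(t,y)$. Since $\int_{v}^{\infty}\exp(-\tfrac12(y - d/v)^2)\,dy = \sqrt{2\pi}\,\Psi(v - d/v)$, the $1/v^2$ part contributes $O(v^{-2})\,\bar\rho_v(t)$, where $\bar\rho_v(t) = \int_v^\infty \bar\rho(t,y)\,dy$. For the $(y-v)^2/v^2$ part, the substitution $y = v + s$ and $(s + v - d/v)^2 \ge (v - d/v)^2 + 2(v-d/v)s$ give $\int_v^\infty (y-v)^2\exp(-\tfrac12(y-d/v)^2)\,dy \le 2(v - d/v)^{-3}\exp(-\tfrac12(v-d/v)^2)$, which by the Mills-ratio lower bound $\Psi(x)\ge c x^{-1}(2\pi)^{-1/2}e^{-x^2/2}$ is $\le C v^{-2}\int_v^\infty\exp(-\tfrac12(y-d/v)^2)\,dy$, so this part too contributes $O(v^{-2})\,\bar\rho_v(t)$. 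Extending the truncated integral of $\bar\rho$ back to $(v,\infty)$ costs $\int_{v+\sqrt{3\log v}}^\infty \bar\rho(t,y)\,dy$, and the Mills-ratio upper bound together with $(v + \sqrt{3\log v} - d/v)^2 \ge (v-d/v)^2 + v\sqrt{3\log v}$ bounds this by $C v^{d-1}e^{-\frac12(v-d/v)^2}e^{-\frac12 v\sqrt{3\log v}}$, which is $o(v^{-2})\,\bar\rho_v(t)$ since $e^{-\frac12 v\sqrt{3\log v}} = o(v^{-k})$ for every fixed $k$. Hence $I_1(t) = \bar\rho_v(t)\big(1 + O(v^{-2})\big)$ uniformly over $t \in \mc{T}_0$.

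To control $I_2$, the key ingredient is a pointwise upper bound $\rho_v(t,y)\le C\,(1 + |y|^d)\,e^{-y^2/2}$, uniform over $t\in\mc{T}_0$, read off from~\eqref{eqn:kac-rice-intensity}: under $H_{0,t}$ one has $\mu_t = 0$ and $\nabla\mu_t = 0$, so $f_{Y_t}(y) = (2\pi)^{-1/2}e^{-y^2/2}$ and $f_{\nabla Y_t}(0) = ((2\pi)^d\det\Lambda_t)^{-1/2}\le C$ by the non-degeneracy condition~\eqref{eqn:non-degenerate}; while conditional on $\{Y_t = y,\nabla Y_t = 0\}$ the Hessian satisfies $-\nabla^2 Y_t \overset{d}{=} y\Lambda_t + R_t$ with $R_t$ mean-zero Gaussian of variance bounded via Assumption~\ref{asmp:covariance-holder}, so $\E[\det(-\nabla^2 Y_t)\1(\nabla^2 Y_t\prec 0)\mid Y_t = y,\nabla Y_t = 0]\le \E\|y\Lambda_t + R_t\|^d \le C(|y|^d + 1)$. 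Plugging this in and using the elementary tail bound $\int_a^\infty(y^d + 1)e^{-y^2/2}\,dy \le C\,a^{d-1}e^{-a^2/2}$ for $a \ge 1$, with $a = v + \sqrt{3\log v}$, gives $I_2(t)\le C v^{d-1}e^{-\frac12 v^2}e^{-v\sqrt{3\log v}}$; comparing with $\bar\rho_v(t)\ge c\,v^{d-1}e^{-\frac12(v - d/v)^2}\ge c'\,v^{d-1}e^{-\frac12 v^2}$ — again via the Mills-ratio lower bound and $\inf_t\det\Lambda_t \ge (\sigma_1^2)^d > 0$ — shows $I_2(t)/\bar\rho_v(t) = o(v^{-2})$ uniformly over $t\in\mc{T}_0$.

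Summing, $|\rho_v(t) - \bar\rho_v(t)| \le \big|I_1(t) - \int_v^{v + \sqrt{3\log v}}\bar\rho(t,y)\,dy\big| + \big|\int_v^{v+\sqrt{3\log v}}\bar\rho(t,y)\,dy - \bar\rho_v(t)\big| + I_2(t) \le (C/v^2)\,\bar\rho_v(t)$ for all $v$ sufficiently large, uniformly over $t\in\mc{T}_0$, since every constant above depends only on $K$, $d$ and $\alpha$; this is the claim. I expect the only step needing genuine care to be the uniform-in-$t$ pointwise bound on the null intensity used for $I_2$ — in particular the Gaussian-moment estimate for $\E\|y\Lambda_t + R_t\|^d$ with constants uniform over $t$ — but this is a routine consequence of~\eqref{eqn:non-degenerate} and Assumption~\ref{asmp:covariance-holder}, and the remainder is bookkeeping with one-dimensional Gaussian integrals, parallel to the truncation step in the proof of Lemma~\ref{lem:approximation-peak-intensity}.
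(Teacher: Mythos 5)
Your proposal is correct and follows essentially the same approach as the paper: truncate the $y$-integral to $[v, v+\sqrt{3\log v}]$, apply the local expansion of Lemma~\ref{lem:approximate-intensity-null} inside that window, compute the main term as $\int_v^\infty\bar\rho(t,y)\,dy=\bar\rho_v(t)$, and then bound the three error terms (local-expansion error, the tail of $\bar\rho$, and the tail of $\rho_v$ controlled via the pointwise bound $\rho_v(t,y)\le C(1+|y|^d)e^{-y^2/2}$). The only cosmetic difference is that you work out the truncated-Gaussian moment estimates by completing the square rather than citing~\eqref{eqn:normalized-overshoot-moments}, which actually yields a slightly sharper $O(v^{-4})$ bound for the $(y-v)^2/v^2$ piece; the conclusion is unchanged.
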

	The proof of Lemma~\ref{lem:marginal-intensity-null} is given in Section~\ref{subsec:pf-marginal-intensity-null}. Lemma~\ref{lem:density-height-null} follows immediately from Lemmas~\ref{lem:approximate-intensity-null} and~\ref{lem:marginal-intensity-null}. 
	
	\subsection{Proof of Lemma~\ref{lem:survival-height-null}}
	\label{subsec:pf-survival-height-null}
	The survival function of $Y_t|t \in \wh{T}_v$ can be written in terms of the marginal intensity $\rho_v(t)$:
	\begin{equation*}
		\P(Y_t > u|t \in \wh{T}_v) = \frac{\rho_u(t)}{\rho_v(t)}.
	\end{equation*}
	Lemma~\ref{lem:marginal-intensity-null} upper bounds the relative error incurred by substituting $\bar{\rho}_v(t)$ for $\rho_v(t)$. Exactly the same analysis used to prove Lemma~\ref{lem:marginal-intensity-null}, but substituting $u$ for $v$ everywhere in the proof of that lemma, also implies that
	\begin{equation}
		\label{pf:survival-height-null-1}
		\frac{|\bar{\rho}_u(t) - \rho_u(t)|}{\bar{\rho}_{u}(t)} \leq \frac{C}{u^2},
	\end{equation}
	where to be perfectly explicit 
	$$\
	\bar{\rho}_u(t) = \frac{u^d \sqrt{\det(\Lambda_t)} \exp(-d)}{(2\pi)^{d/2}} \cdot \Psi(u - d/u).
	$$ It remains to upper bound the relative error between $\bar{\rho}_u(t)$ and 
	$$
	\tilde{\rho}_u(t) = \frac{v^d \sqrt{\det(\Lambda_t)} \exp(-d)}{(2\pi)^{d/2}}\cdot \Psi(u - d/v),
	$$and it is here we use the condition that $v \leq u \leq v + \frac{B}{v}$. Specifically, this condition along with Taylor expansion implies that for all $v \geq C \sqrt{\log v}$ sufficiently large:
	\begin{equation*}
		\big|u^d- v^d\big| \leq Cv^d\frac{(u - v)}{v} \leq  \frac{Cv^d}{v^2},
	\end{equation*}
	and
	\begin{equation*}
		\Big|\Psi\Big(u - \frac{d}{u}\Big) - \Psi\Big(u - \frac{d}{v}\Big)\Big| \leq \Psi\Big(u - \frac{d}{v}\Big) \cdot \frac{(u - v)}{v} \leq C\Psi\Big(u - \frac{d}{v}\Big) \frac{1}{v^2},
	\end{equation*}
	from which we deduce that
	\begin{equation}
		\label{pf:survival-height-null-2}
		\frac{|\tilde{\rho}_u(t)- \bar{\rho}_u(t)|}{\tilde{\rho}_u(t)} \leq \frac{C}{v^2}.
	\end{equation}
	Together~\eqref{pf:survival-height-null-1},~\eqref{pf:survival-height-null-2} and Lemma~\ref{lem:marginal-intensity-null} imply the result of Lemma~\ref{lem:survival-height-null}. 
	
	\subsection{Proof of Lemma~\ref{lem:approximate-intensity-null}}
	\label{subsec:pf-approximate-intensity-null}
	At null points $t \in \mc{T}_0$, the Kac-Rice formula of the intensity $\rho(t,y)$ simplifies:
	\begin{equation}
		\begin{aligned}
			\label{eqn:kac-rice-intensity-null}
			\rho(t,y) 
			& = \E\Big[\det(-\nabla^2 Y_t) \cdot \1(\nabla^2 Y_t \prec 0)|\nabla Y_t = 0, Y_t = y\Big] \cdot f_{Y_t}(y) \cdot f_{\nabla Y_t}(0)  \\
			& = \E\Big[\det(-\nabla^2 Y_t) \cdot \1(\nabla^2 Y_t \prec 0)|\nabla Y_t = 0, Y_t = y\Big] \cdot \frac{1}{\sqrt{2\pi}} \exp\Big(-\frac{1}{2}y^2\Big) \cdot \frac{1}{(2\pi\det(\Lambda_t))^{d/2}} .
		\end{aligned}
	\end{equation}
	To prove Lemma~\ref{lem:approximate-intensity-null}, we compute a local expansion of the determinant term in the above expression.
	\begin{lemma}
		\label{lem:jacobian-null}
		For all $v$ sufficiently large, for any $t \in \mc{T}_0$ and $y \in v \pm \sqrt{3 \log v}$,
		\begin{equation}
			\label{eqn:jacobian-null}
			\frac{\Bigg|\E\Big[\det(-\nabla^2 Y_t) \cdot \1(\nabla^2 Y_t \prec 0)|\nabla Y_t = 0, Y_t = y\Big] - \Big(1 + (y - v)\frac{d}{v}\Big)\det(v \cdot \Lambda_t)\Bigg|}{\det(v \cdot \Lambda_t)} \leq C \Big(\frac{(y - v)^2}{v^2} + \frac{1}{v^2}\Big).
		\end{equation}
	\end{lemma}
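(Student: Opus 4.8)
\emph{Proof plan.}
At a null point $t \in \mc{T}_0$ the mean vanishes in a neighborhood, so $\mu_t = 0$, $\nabla\mu_t = 0$, $\nabla^2\mu_t = 0$, and $Y_t = \epsilon_t$, $\nabla Y_t = \nabla\epsilon_t$ there. Since the field is standardized, differentiating $\Var[\epsilon_t]\equiv 1$ twice gives $\Cov[\nabla\epsilon_t,\epsilon_t] = 0$ and $\Cov[\nabla^2\epsilon_t,\epsilon_t] = -\Lambda_t$, so the Gaussian regression of $\nabla^2\epsilon_t$ on $(\epsilon_t,\nabla\epsilon_t)$ yields the clean identity $\E[-\nabla^2 Y_t \mid Y_t = y,\ \nabla Y_t = 0] = y\Lambda_t$; this is exactly the deterministic Hessian $H_{t|y}$ of~\eqref{eqn:deterministic-hessian} specialized to $\mu \equiv 0$. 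Moreover, by the projection/residual decomposition~\eqref{eqn:hessian-decomposition}, conditionally on $\{Y_t = y, \nabla Y_t = 0\}$ the residual $R_t := -\nabla^2 Y_t - y\Lambda_t$ is a mean-zero symmetric Gaussian matrix whose law $N_{d\times d}(0,K_{22}^t(t,t))$ does not depend on $y$, and by Assumption~\ref{asmp:covariance-holder} its entrywise variance is at most $\sigma_{22}^2 < \infty$ (see~\eqref{eqn:hessian-variance}). Hence the determinant term equals $\E[\det(y\Lambda_t + R_t)\cdot\1(y\Lambda_t + R_t \succ 0)]$, and the proof is a one-variable simplification of the argument used for Lemma~\ref{lem:approximation-peak-intensity} (no spatial variable $h$, no signal derivatives).

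Concretely, I would fix a truncation radius $\xi = \sigma_{22}\sqrt{d}\cdot c_0\sqrt{\log v}$ with the constant $c_0$ chosen so that the moderate-deviation bound~\eqref{eqn:hessian-residual-tail-behavior} gives $\P(\|R_t\| > \xi) \le C/v^2$, and assume $v$ large enough that $v\sigma_1^2 > 2\xi$, so that $y\Lambda_t + R \succ 0$ automatically whenever $y \in v \pm \sqrt{3\log v}$ and $\|R\| \le \xi$. On this region I would Taylor-expand the determinant about $\det(v\Lambda_t)$ via~\eqref{eqn:matrix-det-taylor-expansion} with $E = (y - v)\Lambda_t + R$: the linear term is $\tr\big((v\Lambda_t)^{-1}E\big) = \tfrac{(y-v)d}{v} + \tfrac{1}{v}\tr(\Lambda_t^{-1}R)$, and (using $\lambda_{\min}(v\Lambda_t) \ge v\sigma_1^2 \asymp v$) the remainder is $O\big(\det(v\Lambda_t)\,\|E\|^2/v^2\big) = O\big(\det(v\Lambda_t)\,(\tfrac{(y-v)^2}{v^2} + \tfrac{\|R\|^2}{v^2})\big)$.

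Then I would split the full expectation exactly as in Section~\ref{subsec:pf-approximation-peak-intensity}: the claimed main term $\det(v\Lambda_t)\big(1 + \tfrac{(y-v)d}{v}\big)$; minus (i) the integral of that main term over $\{\|R\| > \xi\}$; plus (ii) the integral of the local-expansion remainder over $\{\|R\| \le \xi\}$; plus (iii) the integral of $\det(y\Lambda_t + R)\,\1(y\Lambda_t + R \succ 0)$ over $\{\|R\| > \xi\}$. The odd integral $\int_{\{\|R\|\le\xi\}} \tr(\Lambda_t^{-1}R)\,f_{R_t}(R)\,dR$ vanishes by symmetry of both the region and $f_{R_t}$. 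Term (i) is $\le C\det(v\Lambda_t)\,\P(\|R_t\| > \xi) \le C\det(v\Lambda_t)/v^2$, using $|1 + \tfrac{(y-v)d}{v}| \le C$ on the stated range of $y$. Term (ii) is $\le C\det(v\Lambda_t)\big(\tfrac{(y-v)^2}{v^2} + \tfrac{\E\|R_t\|^2}{v^2}\big) \le C\det(v\Lambda_t)\big(\tfrac{(y-v)^2}{v^2} + \tfrac{1}{v^2}\big)$ since $\E\|R_t\|^2 \le C$. Term (iii) I would control by H\"older with conjugate exponents $q_1, q_2$: from $|\det(y\Lambda_t + R)| \le C(v + \|R\|)^d \le C(v^d + \|R\|^d)$ one gets $\big(\E|\det(y\Lambda_t + R_t)|^{q_1}\big)^{1/q_1} \le C(v^{dq_1} + 1)^{1/q_1} \le Cv^d \asymp \det(v\Lambda_t)$ (here $\det(\Lambda_t)$ is bounded above and below by Assumption~\ref{asmp:covariance-holder} and~\eqref{eqn:non-degenerate}), while $\P(\|R_t\| > \xi)^{1/q_2} \le C/v^2$ for $q_2$ a small enough constant, mirroring the corresponding step in Section~\ref{subsec:pf-approximation-peak-intensity}. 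Summing the three error terms and dividing by $\det(v\Lambda_t)$ gives the stated bound.

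The only genuine bookkeeping obstacle is this last step: calibrating the truncation radius $\xi \asymp \sqrt{\log v}$ and the H\"older exponent $q_2$ so that both the ``undo-truncation'' contribution and the polynomially-growing determinant tail are $O(\det(v\Lambda_t)/v^2)$. Everything else is Taylor expansion plus the identity $\E[-\nabla^2 Y_t \mid Y_t = y, \nabla Y_t = 0] = y\Lambda_t$, and I expect no difficulties beyond those already resolved in the proof of Lemma~\ref{lem:approximation-peak-intensity}, of which this is a strict special case.
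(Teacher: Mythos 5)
Your proof is correct and follows essentially the same route as the paper's own proof in Section~\ref{subsec:pf-jacobian-null}: both use the projection/residual decomposition, specialize $H_{t|y} = y\Lambda_t$ at null points, Taylor-expand the determinant about $y=v,R=0$ with the odd trace term vanishing by symmetry, and handle the tail via a $\sqrt{\log v}$-scale truncation plus a H\"older argument against the polynomially growing determinant. The only cosmetic difference is that the paper writes the expansion in terms of $H_{t|v}$ and $\tr(\{H_{t|v}\}^{-1}R)$ rather than $v\Lambda_t$ and $\tfrac{1}{v}\tr(\Lambda_t^{-1}R)$, which are identical at a null point.
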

	The proof of Lemma~\ref{lem:jacobian-null} is given in Section~\ref{subsec:pf-jacobian-null}. Now we show how the first-order expansion of the determinant term translates into a mean shift in the Gaussian density in~\eqref{eqn:kac-rice-intensity-null}. For all $v$ large enough so that $d\sqrt{3 \log v}/v \leq 1$, we have that for all $y \in v \pm \sqrt{3 \log v}$,	
	\begin{equation}
		\Big|1 + (y - v)\frac{d}{v} - \exp\Big((y - v)\frac{d}{v}\Big)\Big| \leq C \frac{(y - v)^2}{v^2}.
	\end{equation}
	On the other hand,
	\begin{equation}
		\label{pf:approximate-intensity-null-1}
		\Big|\frac{\exp(-\frac{1}{2}y^2)}{\sqrt{2 \pi}} \cdot \exp\Big((y - v)\frac{d}{v}\Big) -  \frac{\exp(-d)}{\sqrt{2\pi}}\exp\Big(-\frac{1}{2}(y - d/v)^2\Big)\Big| \leq \frac{C}{v^2}\exp\Big(-\frac{1}{2}(y - d/v)^2\Big).
	\end{equation}
	Combining~\eqref{eqn:jacobian-null}-\eqref{pf:approximate-intensity-null-1} gives the claim of Lemma~\ref{lem:approximate-intensity-null}.
	
	\subsection{Proof of Lemma~\ref{lem:marginal-intensity-null}}
	\label{subsec:pf-marginal-intensity-null}
	We expand the marginal intensity $\rho_v(t)$ of the counting process $N_v$ as follows:
	\begin{equation*}
		\rho_v(t) = \int_{v}^{\infty} \rho_v(t,y) \,dy = \int_{v}^{\infty} \bar{\rho}(t,y) \,dy + \int_{v + \sqrt{3 \log v}}^{\infty} \bar{\rho}(t,y) \,dy + \int_{v}^{v + \sqrt{3 \log v}} \rho_v(t,y) - \bar{\rho}(t,y) \,dy + \int_{v + \sqrt{3 \log v}}^{\infty} \rho_v(t,y) \,dy.
	\end{equation*}
	Above $\bar{\rho}(t,y)$ is the approximate intensity defined in~\eqref{eqn:approximate-intensity-null}. The first term on the right hand side of the previous equation is the main term:
	\begin{equation}
		\int_{v}^{\infty} \bar{\rho}(t,y) \,dy = \det(H_{t|v}) \exp(-d) \Psi(v - d/v) f_{\nabla Y_t}(0) = \bar{\rho}_v(t).
	\end{equation}
	The remaining terms represent error due to truncating the integral to $[v,v + \sqrt{3 \log v}]$, replacing $\rho_v(t,y)$ by the local expansion $\bar{\rho}(t,y)$, and undoing the truncation. The error due to truncation is at most
	\begin{equation*}
		\begin{aligned}
			\int_{v + \sqrt{3 \log v}}^{\infty} \rho_v(t,y) \,dy 
			& = \int_{v + \sqrt{3 \log v}}^{\infty} \E[\det(-\nabla^2 Y_t) \cdot \1(\nabla^2 Y_t \prec 0)|Y_t = y,\nabla Y_t = 0] \cdot \frac{1}{\sqrt{2\pi}}\exp\Big(-\frac{1}{2}y^2\Big) f_{\nabla Y_t}(0) \,dy \\
			& \overset{(i)}{\leq} C \int_{v + \sqrt{3 \log v}}^{\infty} y^d \cdot \exp\big(-\frac{1}{2}y^2\big) \,dy \\ 
			& \overset{(ii)}{\leq} C v^d \Psi(v + \sqrt{3 \log v}) \\
			& \overset{(iii)}{\leq} C v^d \Psi(v) \cdot \exp(-\frac{1}{2}v) \\
			& \leq C \bar{\rho}_v(t) \cdot \exp(-\frac{1}{2}v).
		\end{aligned}
	\end{equation*}
	Above, $(i)$ uses the fact that $f_{\nabla Y_t}(0) \leq C/\sqrt{\det(\Lambda_t)} \leq C$ is upper bounded by a constant, while the determinant term grows polynomially in $y$, i.e. for all $v \geq 1$ and $y > v + \sqrt{3 \log v}$,
	\begin{equation*}
		\E[\det(-\nabla^2 Y_t)|\nabla Y_t = 0,Y_t = y] \leq C(y^d + 1) \leq Cy^d;
	\end{equation*}
	$(ii)$ follows from the bound on the moments of a Truncated Gaussian as given in~\eqref{eqn:normalized-overshoot-moments}; and $(iii)$ applies the upper bound on Mills' ratio as stated in~\eqref{eqn:mills-ratio}. Very similar arguments imply that 
	\begin{equation*}
		\int_{v + \sqrt{3 \log v}}^{\infty} \bar{\rho}(t,y) \,dy = C \det(v \cdot \Lambda_t) \cdot \Psi(v + \sqrt{3 \log v} - d/v) \leq C \bar{\rho}_v(t) \cdot \exp(-\frac{1}{2}v).
	\end{equation*}
	Finally, Lemma~\ref{lem:approximate-intensity-null} upper bounds the relative error between $\bar{\rho}(t,y)$ and $\rho_v(t,y)$. Integrating this yields the following upper bound:
	\begin{align*}
		\int_{v}^{v + \sqrt{3 \log v}} |\rho_v(t,y) - \bar{\rho}(t,y)| \,dy & \leq 
		C \int_{v}^{\infty} \bar{\rho}(t,y) \Big(\frac{(y - v)^2}{v^2} + \frac{1}{v^2}\Big) \,dy \\
		& \leq C \det(v \cdot \Lambda_t) \cdot \frac{\Psi(v - d/v)}{v^2} \\
		& \leq 
		C \bar{\rho}_v(t) \cdot \frac{1}{v^2},
	\end{align*}
	with the second and third inequalities following by~\eqref{eqn:normalized-overshoot-moments} and~\eqref{eqn:mills-ratio} respectively. Combining these bounds yields the claim of Lemma~\ref{lem:marginal-intensity-null}. 
	
	\subsection{Proof of Lemma~\ref{lem:jacobian-null}}
	\label{subsec:pf-jacobian-null}
	The proof of Lemma~\ref{lem:jacobian-null} is similar to the proof of Lemma~\ref{lem:approximation-peak-intensity}. The determinant term is 
	\begin{equation*}
		\int_{\R^{d \times d}} \det(H_{t|y} + R) \cdot \1(H_{t|y} + R \succ 0) \cdot f_{R_t}(R) \,dR,
	\end{equation*}
	where we recall that the density of the pinned residual $f_{R_t}(R)$ is defined in~\eqref{eqn:hessian-covariance}. Let $\mc{R}_0 := B_{d \times d}(0, d \cdot \sqrt{3 \log v})$. We expand the determinant term as follows:
	\begin{equation*}
		\begin{aligned}
			& \int \det(H_{t|y} + R) \cdot \1(H_{t|y} + R \succ 0) \cdot f_{R_t}(R) \,dR \\
			& \quad = \det({H}_{t|v})\Big(1 + \frac{d(y - v)}{v}\Big) \\
			& \quad - \int_{\mc{R}_{0}^c}  \det({H}_{t|v})\Big(1 + \frac{d(y - v)}{v}\Big) \cdot f_{R_t}(R) \,dR \\
			& \quad + \int_{\mc{R}_{0}}  \det({H}_{t|v})\Big(\tr(\{H_{t|v}\}^{-1}R)\Big) \cdot f_{R_t}(R) \,dR \\
			& \quad + \int_{\mc{R}_{0}} \Big(\det(H_{t|y} + R)  \1(H_{t|y} + R \succ 0) -\det(H_{t|v})\Big(1 + \frac{d(y - v)}{v} + \tr(\{H_{t|v}\}^{-1}R)\Big) \cdot f_{R_t}(R) \,dR  \\
			& \quad + \int_{\mc{R}_{0}^c} \det(H_{t|v} + R) \cdot \1(H_{t|v} + R \succ 0) \cdot f_{R_t}(R) \,dR.
		\end{aligned}
	\end{equation*}
	The first term on the right hand side of the equality above is the claimed local expansion of the determinant term. The second term on the right hand side is the error due to ``undoing'' truncation to $\mc{R}_0$. The third term integrates an odd function over a set that is symmetric about the origin, and is thus zero. The fourth term is error due to local expansion, and the last term is error due to truncating to $\mc{R}_0$. Thus it  remains only to bound the various sources of error, due to local expansion, truncation, and undoing the truncation. 
	
	In bounding the error due to local expansion, we apply a first-order Taylor expansion of $\det(H_{t|y} + R)$ about $y = v$, $R = 0$ to conclude that for all $v > C\sqrt{\log v}$ the following is true: for all $t \in \mc{T}_0, y \in v \pm \sqrt{3 \log v}$ and $R \in B_{d \times d}(0,\sqrt{3 \log v})$
	\begin{equation*}
		\Big|\det(H_{t|y} + R) - \det(H_{t|v}) \cdot \Big(1 + \frac{d(y - v)}{v} + \tr(\{H_{t|v}\}^{-1}R)\Big)\Big| \leq C\Big(\frac{(y - v)^2}{v^2} + \frac{\|R\|^2}{v^2}\Big),
	\end{equation*}
	and moreover $H_{t|y} + R \succ 0$. Thus error due to local expansion is at most
	\begin{equation*}
		\begin{aligned}
			& \int_{\mc{R}_{0}} \Big(\det(H_{t|y} + R)  \1(H_{t|y} + R \succ 0) -\det(H_{t|v})\Big(1 + \frac{d(y - v)}{v} + \tr(\{H_{t|v}\}^{-1}R)\Big) \cdot f_{R_t}(R) \,dR  \\
			& \leq C \int_{\mc{R}_0} \Big(\frac{(y - v)^2}{v^2} + \frac{\|R\|^2}{v^2}\Big)  \cdot f_{R_t}(R) \,dR \\
			& \leq C \Big(\frac{(y - v)^2}{v^2} + \frac{1}{v^2}\Big).
		\end{aligned}
	\end{equation*}
	Arguments similar to those used in the proof of Lemma~\ref{lem:approximation-peak-intensity} show that the error due to truncation and undoing the truncation are both at most $C \det(H_{t|v})/v^{2}$. This completes the proof of Lemma~\ref{lem:jacobian-null}. 
	
	% TODO: \jt{I find the exposition here (of the results within $\mc{T}_0$) a bit hard to follow because of the chain of lemmata. Is it possible to have fewerlemmata and simply include their proofs ``inline''?}
	
	\subsection{Proof of Proposition~\ref{prop:expectation-counting-process}}
	\label{subsec:pf-expectation-counting-process}
	We begin by deriving the upper bound on the relative error between $\bar{\E}[N(\mc{S}_{t^*})]$ and $\E[N(\mc{S}_{t^*})]$, and then proceed to upper bound the error between $\E[N_u(\mc{B}_{t^*})]$ and $\E[N(\mc{S}_{t^*})]$. 
	
	\subsubsection{Error between $\bar{\E}[N(\mc{S}_{t^*})]$ and $\E[N(\mc{S}_{t^*})]$.}
	Define $\rho(y) := \int_{\mc{B}_{t^*}} \rho(t,y) \,dt$. One can think of this as the intensity function of peaks of height $y$ across $\mc{B}_{t^*}$.  The following Lemma computes a local expansion $\bar{\rho}(y)$ to $\rho(y)$ by integrating $\bar{\rho}(t,y)$ over $\Rd$. There are two types of error incurred: error due to replacing $\rho(t,y)$ by $\bar{\rho}(t,y)$, and error due to integrating over $\Rd$ rather than $\mc{B}_{t^*}$. These are bounded by the following: 
	\begin{align*}
		\Err_{\rho}(\mc{B}_{t^*},y) & := C\Big(|y - \bar{u}_{t^*}|^2\delta_n^2 + |\bar{u}_{t^*} - \mu_{t^*}|\delta_n^2 +  \delta_n^2 \Big) \\
		\Trunc_{\rho}(\mc{B}_{t^*},y) & := C\delta_n^2
	\end{align*}
	\begin{lemma}
		\label{lem:height-intensity}
		Under the assumptions of Theorem~\ref{thm:approximate-joint-intensity}, for all $n \in \mathbb{N}$ sufficiently large, for any $y \in \mc{I}_{t^*}$,
		\begin{equation}
			\label{eqn:height-intensity}
			\begin{aligned}
				& \frac{\big|\rho(y) - \bar{\rho}(y)\big|}{\bar{\rho}(y)}
				\leq \Err_{\rho}(\mc{B}_{t^*},y) + \Trunc_{\rho}(\mc{B}_{t^*},y)  := \Err_{\rho}(y),
			\end{aligned}
		\end{equation}
		where
		\begin{equation}
			\label{eqn:approximate-height-intensity}
			\begin{aligned}
				\bar{\rho}(y) & := \bar{\E}[N(\mc{S}_{t^*})] \cdot \frac{1}{\sqrt{2\pi}}\frac{\exp\Big(-\frac{1}{2}\big(y - \mu_{t^*} - \frac{1}{2}\tr(\bar{H}_{t^*}^{-1}\Lambda_{t^*})\big)^2\Big)}{\Psi(u - \mu_{t^*} - \frac{1}{2}\tr(\bar{H}_{t^*}^{-1}\Lambda_{t^*}))} \cdot \1(y > u).
			\end{aligned}
		\end{equation}
	\end{lemma}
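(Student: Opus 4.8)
The plan is to obtain $\bar{\rho}(y)$ by integrating the local expansion $\bar{\rho}(t^*+h,y)$ of Theorem~\ref{thm:approximate-joint-intensity} over $h\in\R^d$, and to organize the two error contributions — replacing $\rho$ by $\bar{\rho}$, and enlarging the domain of integration from $\mc{B}_{t^*}$ to $\R^d$ — via the decomposition
$$\rho(y)=\int_{\mc{B}_{t^*}}\rho(t^*+h,y)\,dh=\int_{\R^d}\bar{\rho}(t^*+h,y)\,dh\;-\;\int_{\R^d\setminus\mc{B}_{t^*}}\bar{\rho}(t^*+h,y)\,dh\;+\;\int_{\mc{B}_{t^*}}\big(\rho-\bar{\rho}\big)(t^*+h,y)\,dh.$$
The first term on the right equals $\bar{\rho}(y)$ up to relative error of order $\Err_{\rho}(\mc{B}_{t^*},y)$, the second is the truncation contribution $\Trunc_{\rho}(\mc{B}_{t^*},y)$, and the third is the Theorem~\ref{thm:approximate-joint-intensity} contribution $\Err_{\rho}(\mc{B}_{t^*},y)$.

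First I would evaluate $\int_{\R^d}\bar{\rho}(t^*+h,y)\,dh$ exactly. In~\eqref{eqn:approximate-joint-intensity} the prefactor terms $T_{10}^{\rho}(h)$ and $T_{30}^{\rho}(h)$ are odd in $h$ (linear and cubic respectively), so they integrate to zero against the even weight $\exp(-h'\bar{G}_{t^*}h/2)$; the $h$-free factor $1+T_{01}^{\rho}(y)=1+(y-\bar{u}_{t^*})\tr(\bar{H}_{t^*}^{-1}\Lambda_{t^*})$ factors out; and the quadratic $T_{21}^{\rho}(h,y)=\tfrac12(y-\bar{u}_{t^*})h'\nabla^2\mu_{t^*}h$ integrates to $-\tfrac12(y-\bar{u}_{t^*})\tr(\bar{H}_{t^*}^{-1}\Lambda_{t^*})$ times $\int\exp(-h'\bar{G}_{t^*}h/2)\,dh$, using the cyclic identity $\tr(\bar{G}_{t^*}^{-1}(-\nabla^2\mu_{t^*}))=\tr(\bar{H}_{t^*}^{-1}\Lambda_{t^*})$ (which itself follows from the Goldilocks identity $\bar{G}_{t^*}=\bar{H}_{t^*}\Lambda_{t^*}^{-1}(-\nabla^2\mu_{t^*})$). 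Performing the Gaussian integral and collapsing the prefactor via $\det(\bar{G}_{t^*})=\det(\bar{H}_{t^*})\det(-\nabla^2\mu_{t^*})/\det(\Lambda_{t^*})$ gives
$$\int_{\R^d}\bar{\rho}(t^*+h,y)\,dh=\sqrt{\tfrac{\det(\bar{H}_{t^*})}{\det(-\nabla^2\mu_{t^*})}}\cdot\tfrac{1}{\sqrt{2\pi}}\exp\big(-\tfrac12(y-\mu_{t^*})^2\big)\big(1+\tfrac12(y-\bar{u}_{t^*})\tr(\bar{H}_{t^*}^{-1}\Lambda_{t^*})\big)\1(y>u).$$
Writing $w:=\tfrac12(y-\bar{u}_{t^*})\tr(\bar{H}_{t^*}^{-1}\Lambda_{t^*})$, one has $|w|\le C\Delta_n\delta_n=\wt{O}(\delta_n)$ on $y\in\mc{I}_{t^*}$ (crucially this involves $y-\bar{u}_{t^*}$, not $y-\mu_{t^*}$, which could be order $\lambda_n$ under selection pressure). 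Then a direct manipulation using the definition~\eqref{eqn:approximate-number-true-discoveries} of $\bar{\E}[N(\mc{S}_{t^*})]$ — whose $\exp(-\tfrac12(\bar{u}_{t^*}-\mu_{t^*})\tr(\bar{H}_{t^*}^{-1}\Lambda_{t^*}))$ factor is exactly what cancels the $(\bar{u}_{t^*}-\mu_{t^*})$ part of $w$ — together with $(1+w)=e^{w}(1+O(w^2))$ and completing the square in the Gaussian exponent shows this equals $\bar{\rho}(y)$ of~\eqref{eqn:approximate-height-intensity} up to a factor $1+O(w^2+\tr(\bar{H}_{t^*}^{-1}\Lambda_{t^*})^2)=1+O(|y-\bar{u}_{t^*}|^2\delta_n^2+\delta_n^2)$, which is within $\Err_{\rho}(\mc{B}_{t^*},y)$.

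For the Theorem~\ref{thm:approximate-joint-intensity} contribution, $\int_{\mc{B}_{t^*}}|\rho-\bar{\rho}|(t^*+h,y)\,dh\le\int_{\mc{B}_{t^*}}\bar{\rho}(t^*+h,y)\,\Err_{\rho}(h,y)\,dh$. The point — and the reason $\Err_{\rho}(\mc{B}_{t^*},y)$ has no $\log\lambda_n$ factor, unlike the crude $\sup_{h\in\mc{B}_{t^*}}\Err_{\rho}(h,y)$ — is to integrate the error against the Gaussian weight rather than to bound it by its supremum: each monomial of $\Err_{\rho}(h,y)$ has the form $c\,\lambda_n^{a}\|h\|^{b}$ with $b\ge a+2$ (namely $\|h\|^2$, $\lambda_n\|h\|^3$, $\lambda_n^2\|h\|^4$, each possibly multiplied by $|y-\bar{u}_{t^*}|$, $|y-\bar{u}_{t^*}|^2$ or $|\bar{u}_{t^*}-\mu_{t^*}|$), plus the $h$-free term $(|y-\bar{u}_{t^*}|^2+1)\delta_n^2$; since $\lambda_{\min}(\bar{G}_{t^*})\gtrsim\lambda_n^2$ (from the definition of $\lambda_n$, the fact that $\bar{u}_{t^*}\ge\mu_{t^*}$, and Assumption~\ref{asmp:covariance-holder}) one has $\int_{\R^d}\exp(-h'\bar{G}_{t^*}h/2)\|h\|^b\,dh\le C\delta_n^{b}\int_{\R^d}\exp(-h'\bar{G}_{t^*}h/2)\,dh$, so each monomial contributes at most $C\delta_n^{b-a}\le C\delta_n^2$ (times the corresponding factor of $|y-\bar{u}_{t^*}|$ or $|\bar{u}_{t^*}-\mu_{t^*}|$) relative to $\bar{\rho}(y)$, and summing gives $\le\Err_{\rho}(\mc{B}_{t^*},y)$. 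For the truncation term, $\int_{\R^d\setminus\mc{B}_{t^*}}\bar{\rho}(t^*+h,y)\,dh$ is a Gaussian tail of the polynomially growing integrand $\bar{\rho}$; by the choice $\varepsilon_n=\delta_n\sqrt{6\sigma_1^2\log\lambda_n}$ in~\eqref{eqn:local-parameters} together with $\lambda_{\min}(\bar{G}_{t^*})\gtrsim\lambda_n^2$, the exponent $h'\bar{G}_{t^*}h$ exceeds a sufficiently large multiple of $\log\lambda_n$ on $\|h\|\ge\varepsilon_n$ (this is where the constant $6>4$ is used, the polynomial growth being swallowed by the exponential decay), so the tail is $O(\delta_n^2)$ relative to $\bar{\rho}(y)$, i.e.\ at most $\Trunc_{\rho}(\mc{B}_{t^*},y)$. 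Dividing the three bounds through by $\bar{\rho}(y)>0$ yields~\eqref{eqn:height-intensity}.

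The main obstacle is the bookkeeping in the second step: verifying that the three first-order sources of $y$-dependence — the Hessian-determinant correction $T_{01}^{\rho}(y)$, the Gaussian-integration factor coming from $T_{21}^{\rho}(h,y)$, and the normalizing factor $\Psi(u-\mu_{t^*}-\tfrac12\tr(\bar{H}_{t^*}^{-1}\Lambda_{t^*}))$ inside $\bar{\E}[N(\mc{S}_{t^*})]$ — conspire to leave a single mean shift of exactly $\tfrac12\tr(\bar{H}_{t^*}^{-1}\Lambda_{t^*})$, half of the coefficient in $T_{01}^{\rho}$, with the other half reappearing as the $\bar{u}_{t^*}$-dependent factor in $\bar{\E}[N(\mc{S}_{t^*})]$. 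This cancellation rests on the algebraic identities relating $\bar{G}_{t^*}$, $\bar{H}_{t^*}$ and $-\nabla^2\mu_{t^*}$, and one must be careful to route the $O(\cdot)$ remainders correctly — in particular, the pieces that acquire a factor $|y-\bar{u}_{t^*}|$ or $|\bar{u}_{t^*}-\mu_{t^*}|$ must be kept (they are precisely what makes $\Err_{\rho}(\mc{B}_{t^*},y)$ exceed $C\delta_n^2$ under selection pressure) rather than discarded. The truncation and Gaussian-moment estimates, by contrast, are routine, modulo the dependence of the tail exponent on the conditioning of the covariance.
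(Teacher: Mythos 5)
Your proposal is correct and follows essentially the same route as the paper: decompose $\rho(y)$ into an integral of the local expansion over $\Rd$, a truncation tail, and an integrated error term, exploit the oddness of $T_{10}^{\rho}$ and $T_{30}^{\rho}$, then use the Goldilocks identity $\bar{G}_{t^*}=\bar{H}_{t^*}\Lambda_{t^*}^{-1}(-\nabla^2\mu_{t^*})$ so that $T_{01}^{\rho}(y)$ and the Gaussian-integrated $T_{21}^{\rho}(h,y)$ combine into the single factor $1+\tfrac12 T_{01}^{\rho}(y)$, which is then matched to the mean-shifted form of $\bar{\rho}(y)$ up to $O(|y-\bar{u}_{t^*}|^2\delta_n^2+\delta_n^2)$. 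The only cosmetic difference is that the paper introduces an explicit intermediate $\tilde{\rho}$ (dropping the odd terms) and records separately that $\int_{\mc{B}_{t^*}}(\tilde{\rho}-\bar{\rho})=0$, whereas you integrate $\bar{\rho}$ directly over $\Rd$ and invoke oddness inline.
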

	The proof of Lemma~\ref{lem:height-intensity} is given in Section~\ref{subsec:pf-height-intensity}.  Now we want to use Lemma~\ref{lem:height-intensity} to bound the difference between $\E[N(\mc{S}_{t^*})] = \int_{\mc{I}_{t^*}} \rho(y) \,dy$ and $\bar{\E}[N(\mc{S}_{t^*})]$.  To do so we substitute $\bar{\rho}(y)$ for $\rho(y)$ and undo the truncation in the integral defining $\E[N(\mc{S}_{t^*})]$, yielding the decomposition:
	\begin{equation}
		\begin{aligned}
			\label{pf:theorem-joint-density-1}
			\int_{\mc{I}_{t^*}} {\rho}(y) \,dy 
			& = \int_{u}^{\infty} \bar{\rho}(y) \,dy - \int_{(u,\infty) \setminus \mc{I}_{t^*}} \bar{\rho}(y) \,dy + \int_{\mc{I}_{t^*}} \rho(y) - \bar{\rho}(y) \,dy.
		\end{aligned}
	\end{equation}
	Direct calculation shows that the first term on the right hand side above is exactly $\int_{u}^{\infty} \bar{\rho}(y) \,dy = \bar{\E}[N(\mc{S}_{t^*})]$. The remaining terms represent error due to local expansion and undoing the truncation, and we now bound each. 
	
	\paragraph{Error due to local expansion.}
	The error due to approximation of $\rho(y)$ by $\bar{\rho}(y)$ is at most 
	\begin{equation*}
		\begin{aligned}
			\int_{\mc{I}_{t^*}} |\rho(y) - \bar{\rho}(y)| \,dy \leq \int_{\mc{I}_{t^*}} \Err_{\rho}(y) \cdot \bar{\rho}(y) \,dy \leq
			C\delta_n^2\Big((1 + |\bar{u}_{t^*} - \mu_{t^*}| ) \cdot \bar{\E}[N(\mc{S}_{t^*})] + \int_{\mc{I}_{t^*}} |y - \bar{u}_{t^*}|^2 \bar{\rho}(y) \,dy\Big).
		\end{aligned}
	\end{equation*}
	The remaining integral can be upper bounded using the Truncated Gaussian moment bounds in Section~\ref{subsec:gaussian-tail-univariate}: letting $m = \mu_{t^*} + \frac{1}{2}\tr(\bar{H}_{t^*}^{-1}\Lambda_{t^*})$,
	\begin{equation*}
		\begin{aligned}
			\int_{\mc{I}_{t^*}} |y - \bar{u}_{t^*}|^2 \bar{\rho}(y) \,dy 
			& \leq \frac{\bar{\E}[N(\mc{S}_{t^*})]}{\Psi(\bar{u}_{t^*} - m)} \int_{u}^{\infty} \frac{|y - \bar{u}_{t^*}|^2}{\sqrt{2\pi}} \cdot \exp\Big(-\frac{1}{2}\big(y - m\big)^2\Big) \,dy \\
			& = \frac{\bar{\E}[N(\mc{S}_{t^*})]}{\Psi(\bar{u}_{t^*} - m)} \int_{u - m}^{\infty} |z - (\bar{u}_{t^*} - m)|^2 \phi(z) \,dz \\
			& \leq C \cdot \bar{\E}[N(\mc{S}_{t^*})].
		\end{aligned}
	\end{equation*}
	Thus, the relative error due to local expansion is at most
	\begin{equation*}
		\int_{\mc{I}_{t^*}} |\rho(y) - \bar{\rho}(y)| \,dy \leq C \cdot \bar{\E}[N(\mc{S}_{t^*})] \cdot \Big( |\bar{u}_{t^*} - \mu_{t^*}| \delta_n^2 + \delta_n^2\Big).
	\end{equation*}
	\paragraph{Error due to undoing the truncation.}
	Now we bound the error due to integrating over $(u,\infty)$ rather than over $\mc{I}_{t^*}$. In bounding this error we will use the fact that $\tr(\bar{H}_{t^*}^{-1}\Lambda_{t^*}) \leq C \delta_n$, and we will assume that $n \in \mathbb{N}$ is sufficiently large so that $C\delta_n < 1$. We partition the complement of this interval, $\mc{I}_{t^*}^c = (u,\infty) \setminus \mc{I}_{t^*}$, into $\mc{I}_{-}^{c} := [u, \max(\bar{u}_{t^*} - \Delta_n,u))$ and $\mc{I}_{+}^{c} := (\bar{u}_{t^*} + \Delta_n,\infty)$, so that the truncation error is the sum of lower and upper truncation error terms,
	\begin{equation*}
		\int_{(u,\infty) \setminus \mc{I}_{t^*}} \bar{\rho}(y) \,dy  = \int_{\mc{I}_{-}^c} \bar{\rho}(y) \,dy + \int_{\mc{I}_{+}^c} \bar{\rho}(y) \,dy.
	\end{equation*}
	The lower truncation error is zero if $u \geq \mu_{t^*}$, since in that case $u > \bar{u}_{t^*} - \Delta_n$. Otherwise $\mu_{t^*} > u$ and $\bar{u}_{t^*} = \mu_{t^*}$, and so $\Psi(u - \mu_{t^*} - \frac{1}{2}\tr(\bar{H}_{t^*}^{-1}\Lambda_n)) \geq \Psi(-1/2) > 1/4$. In this case the lower truncation error is at most
	\begin{equation*}
		\begin{aligned}
			\int_{\mc{I}_{-}^c} \bar{\rho}(y) \,dy  
			& =  \frac{\bar{\E}[N(\mc{S}_{t^*})]}{\sqrt{2\pi}}  \int_{\mc{I}_{-}^c}  \frac{\exp\Big(-\frac{1}{2}\big(y - \mu_{t^*} - \frac{1}{2}\tr(\bar{H}_{t^*}^{-1}\Lambda_{t^*})\big)^2\Big)}{\Psi(\bar{u}_{t^*} - \mu_{t^*} - \frac{1}{2}\tr(\bar{H}_{t^*}^{-1}\Lambda_{t^*}))} \,dy \\
			& \leq  \frac{\bar{\E}[N(\mc{S}_{t^*})]}{\sqrt{2\pi}}  \int_{\mc{I}_{-}^c}  \frac{\exp\Big(-\frac{1}{2}(y - \mu_{t^*})^2\Big)}{\Psi(\bar{u}_{t^*} - \mu_{t^*} - \frac{1}{2}\tr(\bar{H}_{t^*}^{-1}\Lambda_{t^*}))} \,dy \\
			& \leq \bar{\E}[N(\mc{S}_{t^*})] \cdot \frac{\Phi(-\Delta_n)}{\Psi(u - \mu_{t^*} - \frac{1}{2}\tr(\bar{H}_{t^*}^{-1}\Lambda_{t^*}))} \\ 
			& \leq 4 \cdot \bar{\E}[N(\mc{S}_{t^*})] \Phi(-\Delta_n) \\
			& \overset{(i)}{\leq} 4\cdot \bar{\E}[N(\mc{S}_{t^*})] \exp\Big(-\frac{1}{2}\Delta_n^2\Big) \\
			& \leq 4\cdot \bar{\E}[N(\mc{S}_{t^*})] \cdot \delta_n^3,
		\end{aligned}
	\end{equation*}
	with $(i)$ following from a standard bound on Mills' ratio -- recorded in~\eqref{eqn:mills-ratio} for convenience. 
	
	A corresponding bound on upper truncation error can again be deduced using standard bounds on Mills' ratio:
	\begin{equation*}
		\int_{\mc{I}_{+}^c} \bar{\rho}(y) \,dy = \frac{\bar{\E}[N(\mc{S}_{t^*})]}{\sqrt{2\pi}} \cdot \frac{\Psi(\bar{u}_{t^*} + \Delta_n - \mu_{t^*} - \frac{1}{2}\tr(\bar{H}_{t^*}^{-1}\Lambda_{t^*}))}{\Psi(\bar{u}_{t^*} - \mu_{t^*} - \frac{1}{2}\tr(\bar{H}_{t^*}^{-1}\Lambda_{t^*}))} \leq \bar{\E}[N(\mc{S}_{t^*})] \exp\Big(-\frac{1}{2}\Delta_n^2\Big) \leq \bar{\E}[N(\mc{S}_{t^*})]  \delta_n^3.
	\end{equation*}
	We conclude that the error due to undoing truncation to $\mc{I}_{t^*}$ is at most $C \bar{\E}[N(\mc{S}_{t^*})] \delta_n^3$, of a lower order than the error due to local expansion. 
	
	\subsubsection{Upper bound on $\E[N_u(\mc{B}_{t^*})] - \E[N(\mc{S}_{t^*})]$.}
	It remains to upper bound the difference between the expected number of $\varepsilon_n$-consistent discoveries, and the expected number of $\varepsilon_n$-consistent discoveries that also have height $Y_t \in \bar{u}_{t^*} \pm \Delta_n$:
	\begin{equation*}
		\E[N_u(\mc{B}_{t^*})] - \E[N(\mc{S}_{t^*})] = \int_{\mc{B}_{t^*}} \int_{\mc{I}_{t^*}^c} \rho(t,y)  \,dy \,dt = \int_{\mc{B}_{t^*}} \int_{\mc{I}_{-}^c} \rho(t,y) \,dy \,dt + \int_{\mc{B}_{t^*}} \int_{\mc{I}_{+}^c} \rho(t,y) \,dy \,dt.
	\end{equation*}
	We will upper bound each of the two error terms above, the lower truncation error and the upper truncation error, separately. In obtaining these bounds, we cannot rely on Theorem~\ref{thm:approximate-joint-intensity} because the range of integration is over $y \not\in \bar{u}_{t^*} \pm \Delta_n$. Instead, we will bound each term in the Kac-Rice formula for $\rho(t,y)$, and integrate these bounds. For both lower and upper truncation error, the following bound on the determinant term in the Kac-Rice formula will be useful:
	\begin{align}
		\E[\det(-\nabla^2 Y_t) \cdot \1(\nabla^2 Y_t \prec 0)|Y_t = y, \nabla Y_t = 0] 
		& = \int \det(H_{t|y} + R) \cdot \1(H_{t|y} + R \succ 0) f_{R_t}(R) \,dR \nonumber \\
		& \leq C\Big(\det(H_{t|y}) + 1\Big) \nonumber \\
		& \leq C\Big(\det(H_{t^*|y}) + 1\Big) \nonumber \\
		& \leq C\Big(\det(\bar{H}_{t^*}) + (y - \bar{u}_{t^*})_{+}^d  + 1\Big) \nonumber \\
		& \leq C\Big(\det(\bar{H}_{t^*}) + (y - \bar{u}_{t^*})_{+}^d\Big) \label{pf:expectation-counting-process-2-1}
	\end{align}
	where in the last inequality we have assumed that $n \in \mathbb{N}$ is large enough so that $\lambda_n \geq 1$. We will also assume that $n \in \mathbb{N}$ is large enough that 
	\begin{equation}
		\label{pf:expectation-counting-process-2-2}
		\begin{aligned}
			\nabla \mu_{t^* + h}'\Lambda_{t^*}^{-1}\nabla \mu_{t^* + h} 
			& \geq \frac{1}{2} h'\nabla^2\mu_{t^*}\Lambda^{-1}\nabla^2\mu_{t^*}h, \\
			\mu_{t^*} - \mu_{t^* + h} 
			& \leq 1/2 \leq \sqrt{1/3} \cdot \Delta_n,
		\end{aligned}
	\end{equation}
	for all $h \in B_d(0,\varepsilon_n)$.
	
	\paragraph{Lower truncation term.}
	If $u \geq \mu_{t^*}$ then $\bar{u}_{t^*} = u, \bar{u}_{t^*} - \Delta_n < u$ and so the lower truncation term is $0$. Otherwise $u < \mu_{t^*}, \bar{u}_{t^*} = \mu_{t^*}, \bar{H}_{t^*} = -\nabla^2\mu_{t^*}$.  We apply the upper bound~\eqref{pf:expectation-counting-process-2-1} on the determinant term -- noting that $y < \bar{u}_{t^*}$ for all $y \in \mc{I}_{-}^c$ -- and use the bounds from~\eqref{pf:expectation-counting-process-2-2} to upper bound the density of the height and gradient, giving
	\begin{equation*}
		\begin{aligned}
			& \int_{\mc{B}_{t^*}} \int_{\mc{I}_{-}^c} \rho(t,y) \,dt \,dy \\
			& \quad \leq C \det(\bar{H}_{t^*})  \int_{B_d(0,\varepsilon_n)} \int_{-\infty}^{\mu_{t^*} - \Delta_n}  \exp\Big(-\frac{1}{2}(y - \mu_{t^*} + \sqrt{1/3} \cdot \Delta_n)^2\Big) \cdot \exp(-\frac{1}{4}h'\nabla^2\mu_{t^*}\Lambda_{t^*}^{-1}\nabla^2\mu_{t^*}h) \,dy \,dh \\
			& \quad \leq C \det(\bar{H}_{t^*})  \int_{\mc{B}_{t^*}} \int_{-\infty}^{\mu_{t^*} - \Delta_n}  f_{Y_t}(y) \cdot f_{\nabla Y_t}(0) \,dy \,dt \\
			& \quad \leq C \det(\bar{H}_{t^*}) \Phi\Big(-\sqrt{\frac{2}{3}} \Delta_n \Big) \int_{B(0,\varepsilon_n)} \exp(-\frac{1}{4}h'\nabla^2\mu_{t^*}\Lambda_{t^*}^{-1}\nabla^2\mu_{t^*}h) \,dh \\
			& \quad \leq C \Phi\Big(-\sqrt{\frac{2}{3}} \Delta_n \Big)  \\
			& \quad \leq C \delta_n^2.
		\end{aligned}
	\end{equation*}
	with the final inequality following by Mills' inequality~\eqref{eqn:mills-ratio} and the definition of $\Delta_n$. Finally, since $\mu_{t^*} > u$ it can be verified that $\bar{\E}[N(\mc{S}_{t^*})] \geq c > 0$, so that the lower truncation term is at most $C \cdot \bar{\E}[N(\mc{S}_{t^*})] \cdot \delta_n^2$.
	
	\paragraph{Upper truncation term.}
	For the upper truncation term, applying~\eqref{pf:expectation-counting-process-2-1} and integrating over $y \in [\bar{u}_{t^*} + \Delta_n,\infty)$:
	\begin{equation*}
		\begin{aligned}
			\int_{\mc{B}_{t^*}} \int_{\mc{I}_{+}^c} \rho(t,y) \,dt \,dy 
			& \leq C \int_{\mc{B}_{t^*}}\int_{\bar{u}_{t^*} + \Delta_n}^{\infty} (\det(\bar{H}_{t^*}) + (y - \bar{u}_{t^*})^{d}) \cdot f_{Y_{t}}(y) \cdot f_{\nabla Y_t}(0) \,dy \,dt \\
			& \leq C \int_{\mc{B}_{t^*}}\int_{\bar{u}_{t^*} + \Delta_n}^{\infty} (\det(\bar{H}_{t^*}) + \Delta_n^d + (y - \bar{u}_{t^*} - \Delta_n)^{d}) \cdot f_{Y_{t}}(y) \cdot f_{\nabla Y_t}(0) \,dy \,dt \\
			& \overset{(i)}{\leq} C (\det(\bar{H}_{t^*}) + \Delta_n^d) \int_{\mc{B}_{t^*}} \Psi(\bar{u}_{t^*} - \mu_t + \Delta_n) \cdot f_{\nabla Y_t}(0) \,dt \\
			& \leq C \det(\bar{H}_{t^*})\int_{\mc{B}_{t^*}} \Psi(\bar{u}_{t^*} - \mu_t + \Delta_n) \cdot f_{\nabla Y_t}(0) \,dt.
		\end{aligned}
	\end{equation*}
	where $(i)$ relies on the bound on Truncated Gaussian moments in~\eqref{eqn:normalized-overshoot-moments}. If $\bar{u}_{t^*} - \mu_{t^*} \not\to \infty$, then the remaining argument is very similar to the analysis of the lower truncation term, and so we focus on the case where $\bar{u}_{t^*} - \mu_{t^*} \to \infty$. Applying the upper bound~\eqref{eqn:mills-ratio} on Mills' ratio again:
	\begin{equation*}
		\begin{aligned}
			\Psi(\bar{u}_{t^*} - \mu_t + \Delta_n) 
			& \leq C \frac{\exp(-\frac{1}{2}\Delta_n^2 - \frac{1}{2}(\bar{u}_{t^*} - \mu_t)^2)}{\bar{u}_{t^*} - \mu_t + \Delta_n} \\
			& \leq C \exp\Big(-\frac{1}{2}\Delta_n^2\Big) \frac{\exp\big(-\frac{1}{2}(\bar{u}_{t^*} - \mu_{t^*})^2\big)}{\bar{u}_{t^*} - \mu_{t^*}}  \\
			& \leq C \exp\Big(-\frac{1}{2}\Delta_n^2\Big) \Psi(\bar{u}_{t^*} - \mu_{t^*}).
		\end{aligned}
	\end{equation*}
	Plugging this into our previous bound on the upper truncation term, using the bound from~\eqref{pf:expectation-counting-process-2-2} to upper bound $f_{\nabla Y_t}(0)$, and integrating over $t \in \mc{B}_{t^*}$, we conclude that the upper truncation error is at most
	\begin{equation*}
		C \exp\Big(-\frac{1}{2}\Delta_n^2\Big) \Psi(\bar{u}_{t^*} - \mu_{t^*}) \leq C \cdot \bar{\E}[N(\mc{S}_{t^*})] \cdot \exp\Big(-\frac{1}{2}\Delta_n^2\Big)  \leq C \cdot \bar{\E}[N(\mc{S}_{t^*})] \cdot \delta_n^3.
	\end{equation*}
	
	\subsection{Proof of Lemma~\ref{lem:height-intensity}}
	\label{subsec:pf-height-intensity}
	Recall that $\bar{\rho}(t^* + h,y)$ is defined in~\eqref{eqn:approximate-joint-intensity}. Let
	\begin{equation*}
		\tilde{\rho}(t^* + h,y) := \frac{\det(\bar{H}_{t^*})\Big(1 + T_{01}^{\rho}(y) + T_{21}^{\rho}(h,y)\Big)}{\sqrt{(2\pi)^{d + 1}\det(\Lambda_{t^*})}} \cdot \exp\Big(-\frac{(y - \mu_{t^*})^2}{2}\Big) \cdot \exp\bigg(-\frac{h' \bar{G}_{t^*}h}{2}\bigg) \cdot \1(y > u).
	\end{equation*}
	Note that the only difference between $\tilde{\rho}(t^* + h,y)$ and $\bar{\rho}(t^* + h,y)$ is that the former omits the first-order terms $T_{10}^{\rho}(h)$ and $T_{30}^{\rho}(h)$.	
	We write $\rho(y)$ in terms of a main term and three error terms:
	\begin{equation*}
		\rho(y) = \int_{\Rd} \tilde{\rho}(t^* + h,y) \,dh - \int_{\mc{B}_{t^*}^c} \tilde{\rho}(t^* + h,y) \,dh + \int_{\mc{B}_{t^*}} \tilde{\rho}(t^* + h,y) - \bar{\rho}(t^* + h,y) \,dh + \int_{\mc{B}_{t^*}} \rho(t^* + h,y) - \bar{\rho}(t^* + h,y) \,dh.
	\end{equation*}
	The second of the three error terms is zero, 
	\begin{equation*}
		\int_{\mc{B}_{t^*}} \tilde{\rho}(t^* + h,y)  = \int_{\mc{B}_{t^*}}\bar{\rho}(t^* + h,y) \,dh,
	\end{equation*}
	because $T_{10}^{\rho}(h), T_{30}^{\rho}(h)$ are odd functions of $h$ and $\mc{B}_{t^*}$ is symmetric about the origin. We proceed to bound the relative error between the main term and $\bar{\rho}(y)$, and then bound the magnitude of the remaining two error terms. In doing so, we will at times use the fact that the first-order terms in the definition of $\bar{\rho}(t^* + h,y)$ are indeed asymptotically vanishing, as
	\begin{equation*}
		\begin{aligned}
			T_{01}^{\rho}(y) & \leq C \Delta_n \delta_n \leq  C\varepsilon_n \\
			T_{10}^{\rho}(h) & \leq C \varepsilon_n \\
			T_{30}^{\rho}(h) & \leq C \delta_n \{\log(\lambda_n)\}^{3/2} \leq  C \varepsilon_n  \log(\lambda_n) \\
			T_{21}^{\rho}(h) & \leq C \delta_n \{\log(\lambda_n)\}^{3/2} \leq  C \varepsilon_n  \log(\lambda_n).
		\end{aligned}
	\end{equation*}
	We assume $n \in \mathbb{N}$ is large enough that $C \varepsilon_n \log(\lambda_n) \leq 1/2$.
	
	\paragraph{Main term.}
	Direct computation shows that the main term is
	\begin{equation}
		\label{pf:height-intensity-1}
		\begin{aligned}
			\int_{\Rd} \tilde{\rho}(t^* + h,y) \,dh 
			& = \Big(1 + T_{01}^{\rho}(y) + \frac{(y - \bar{u}_{t^*})}{2}\tr(\bar{G}_{t^*}^{-1}\nabla^2\mu_{t^*})\Big) \cdot \frac{\det(\bar{H}_{t^*})}{\sqrt{\det(\bar{G}_{t^*}\Lambda_{t^*})}} \cdot \frac{1}{\sqrt{2\pi}}\exp\Big(-\frac{(y - \mu_{t^*})^2}{2}\Big) \\
			& = \Big(1 + \frac{1}{2}T_{01}^{\rho}(y)\Big) \cdot \sqrt{\frac{\det(\bar{H}_{t^*})}{\det(-\nabla^2\mu_{t^*})}} \cdot \frac{1}{\sqrt{2\pi}}\exp\Big(-\frac{(y - \mu_{t^*})^2}{2}\Big) =: \tilde{\rho}(y).
		\end{aligned} 
	\end{equation}
	with the second equality following upon recalling that $\bar{G}_{t^*} = -\nabla^2\mu_{t^*} \Lambda_{t^*}^{-1}\bar{H}_{t^*}$.
	
	We have shown that the main term $\tilde{\rho}(y) = \int_{\Rd} \tilde{\rho}(t^* + h,y) \,dh$ is a valid first-order expansion of $\rho(y)$ about $y = \bar{u}_{t^*}$. We now show that this expansion can be rewritten as a Gaussian density with a first-order shift in the mean, at the cost of another second-order error term. 
	Assuming $n \in \mathbb{N}$ is large enough that $C \varepsilon_n < 1/2$, it follows from first-order Taylor expansion that $\sup_{x \leq C \varepsilon_n}|1 + x - \exp(x)| \leq 2 x^2$. Plugging in $x = \frac{1}{2}T_{01}^{\rho}(y)$, we conclude that
	\begin{equation*}
		\bigg|\Big(1 + \frac{1}{2}T_{01}^{\rho}(y)\Big) - \exp\Big(\frac{1}{2}(y - \mu_{t^*})\cdot \tr(\bar{H}_{t^*}^{-1}\Lambda_{t^*}) - \frac{1}{2}(\bar{u}_{t^*} - \mu_{t^*})\tr(\bar{H}_{t^*}^{-1}\Lambda_{t^*})\Big)\bigg| \leq C\Big((y - \bar{u}_{t^*})^2 \delta_n^2\Big).
	\end{equation*}
	Multiplying the second term inside the absolute value by $\exp(-\frac{1}{4}\tr(\bar{H}_{t^*}^{-1}\Lambda_{t^*})^2)$ costs only an additional factor of $C\delta_n^{2}$. Plugging the result back into~\eqref{pf:height-intensity-1}, we conclude that
	\begin{equation}
		\label{pf:approximate-joint-distribution-4}
		\begin{aligned}
			\big|\tilde{\rho}(y) - \bar{\rho}(y)\big| & \leq C\frac{\sqrt{\det(\bar{H}_{t^*})}}{\sqrt{(2\pi)\det(-\nabla^2\mu_{t^*})}} \cdot \exp\Big(-\frac{1}{2}(y - \mu_{t^*})^2\Big)\Big((y - \bar{u}_{t^*})^2 \delta_n^2 + \delta_n^2\Big) \\
			& \leq C \bar{\rho}(y) \cdot \Big((y - \bar{u}_{t^*})^2 \delta_n^2 + \delta_n^2\Big).
		\end{aligned}
	\end{equation}
	\paragraph{Error due to local expansion.}
	The error due to integrating the local expansion $\bar{\rho}(t^* + h,y)$ rather than $\rho(t^* + h,y)$ is at most
	\begin{equation}
		\label{pf:approximate-joint-distribution-1}
		\begin{aligned}
			& \int_{\mc{B}_{t^*}} \bar{\rho}(t^* + h,y) - \rho(t^* + h,y) \,dh \\
			& \leq  \int_{\mc{B}_{t^*}} \Err_{\rho}(h,y) \cdot \bar{\rho}(h,y) \,dh \\
			& \overset{(i)}{\leq} C \det(\bar{H}_{t^*}) \exp\Big(-\frac{1}{2}(y - \mu_{t^*})^2\Big)\int_{\mc{B}_{t^*}} \Err_{\rho}(h,y) \exp\Big(-\frac{1}{2}h'\bar{G}_{t^*}h\Big) \cdot  \,dh  \\
			& \leq C \det(\bar{H}_{t^*}) \exp\Big(-\frac{1}{2}(y - \mu_{t^*})^2\Big)\int_{\mc{B}_{t^*}} \Big(\Err_{\det}(h,y) + \Err_{Y}(h,y) + \Err_{\nabla Y}(h,y)\Big)  \exp\Big(-\frac{1}{2}h'\bar{G}_{t^*}h\Big) \,dh \\
			& \overset{(ii)}{\leq} C\frac{\det(\bar{H}_{t^*})}{\sqrt{\det(\bar{G}_{t^*})}} \exp\Big(-\frac{1}{2}(y - \mu_{t^*})^2\Big) \Big(|y - \bar{u}_{t^*}|^2\delta_n^2 + |\bar{u}_{t^*} - \mu_{t^*}|\delta_n^2 + \delta_n^2\Big) \\
			& \overset{(iii)}{\leq} C\bar{\rho}(y)\Big(|y - \bar{u}_{t^*}|^2\delta_n^2 + |\bar{u}_{t^*} - \mu_{t^*}|\delta_n^2 + \delta_n^2\Big).
		\end{aligned}
	\end{equation}
	In obtaining $(i)$ we have applied the upper bound $\bar{\rho}(t,y) \leq C \det(\bar{H}_{t^*})\exp(-\frac{1}{2}h'G(y)h) \cdot \exp(-\frac{1}{2}(y - \mu_{t^*})^2)$; in $(ii)$ we have integrated each component of $\Err_{\rho}(t,y)$, and in $(iii)$ we have applied the upper bound 
	\begin{equation*}
		\label{pf:approximate-joint-distribution-5}
		\frac{\det(\bar{H}_{t^*})}{\sqrt{\det(\bar{G}_{t^*})}} \exp\Big(-\frac{1}{2}(y - \mu_{t^*})^2\Big) \leq C\bar{\rho}(y).
	\end{equation*}
	\paragraph{Error due to truncation.}
	In bounding the truncation error, observe that $\varepsilon_n \geq \{\lambda_{\min}(\bar{G}_{t^*})\}^{-1} \sqrt{6 \log \lambda_n}$. Therefore,
	\begin{equation}
		\label{pf:approximate-joint-distribution-7}
		\sqrt{\det(\bar{G}_{t^*})} \int_{\mc{B}_{t^*}^c} \exp\Big(-\frac{h'\bar{G}_{t^*}h}{2}\Big) \,dh \leq \int_{\Rd \setminus B_d(0,\sqrt{6 \log \lambda_n})} \exp\Big(-\frac{\|z\|^2}{2}\Big) \,dz \leq C \lambda_n^{-2} = C \delta_n^2,
	\end{equation}
	the latter inequality following by~\eqref{eqn:gaussian-integrals-concentration}. Similarly, using the Cauchy-Schwarz inequality,
	\begin{align*}
		\sqrt{\det(\bar{G}_{t^*})} \int_{\mc{B}_{t^*}^c} \|h\|^2 \exp\Big(-\frac{h'\bar{G}_{t^*}h}{2}\Big) \,dh
		& \leq \Big(\int_{\Rd} \|h\|^4 \phi_{0,\bar{G}_{t^*}^{-1}}(h) \,dt\Big)^{1/2} \Big(\int_{\mc{B}_{t^*}^c} \phi_{0,\bar{G}_{t^*}^{-1}}(h) \,dt\Big)^{1/2} \leq C \delta_n^{4},
	\end{align*}
	where we have written $\phi_{0,\bar{G}_{t^*}^{-1}}(h)$ for the density of a $N_d(0,\bar{G}_{t^*}^{-1})$ distribution. 
	The overall error due to truncation is thus at most
	\begin{equation*}
		\begin{aligned}
			& \frac{\det(\bar{H}_{t^*})}{\sqrt{(2\pi)\det(\Lambda_{t^*} \bar{G}_{t^*})}} \cdot \exp\Big(-\frac{1}{2}(y - \mu_{t^*})^2\Big) \int_{\mc{B}_{t^*}^c} \Big(1 + T_{01}^{\rho}(y) + T_{21}^{\rho}(h,y)\Big) \phi_{0,\bar{G}_{t^*}^{-1}}(h) \,dh \\
			& \leq C\bar{\rho}(y) \int_{\mc{B}_{t^*}^c} \Big(1 + T_{01}^{\rho}(y) + T_{21}^{\rho}(h,y)\Big) \phi_{0,\bar{G}_{t^*}^{-1}}(h) \,dh \\
			& \leq C\bar{\rho}(y) \int_{\mc{B}_{t^*}^c} \Big(1 + \|h\|^2 \lambda_{\max}(-\nabla^2\mu_{t^*})\Big) \phi_{0,\bar{G}_{t^*}^{-1}}(h) \,dh \\
			& \leq C\bar{\rho}(y) \delta_n^2.
		\end{aligned}
	\end{equation*}
	This completes the proof.
	
	\subsection{Proof of Proposition~\ref{prop:high-gradient-peaks}}
	Throughout this proof we will assume that $n \in \mathbb{N}$ is large enough that $\lambda_n \geq 1$. Our goal is to upper bound the expected number of peaks in the high-gradient region. An exact identity for this expectation is given by the Kac-Rice formula:
	\begin{equation*}
		\begin{aligned}
			\E[N_u(\mc{G}_{\varepsilon_n})]
			& = \int_{\mc{G}_{\varepsilon_n}} \int_{u}^{\infty} \rho(t,y) \,dy \,dt \\
			& = \int_{\mc{G}_{\varepsilon_n}} \int_{u}^{\infty} \E[\det(-\nabla^2 Y_t) \cdot \1(\nabla^2 Y_t \prec 0)|\nabla Y_t = 0,Y_t = y] \cdot f_{\nabla Y_t}(0) \cdot f_{Y_t}(y) \,dy \,dt. 
		\end{aligned}
	\end{equation*}
	Intuitively, this should be small relative to $\E[N_u(\mc{T}_{\varepsilon_n})]$ because $\|\E[\nabla Y_t]\| = \|\nabla \mu_t\|$ is growing at all $t \in \mc{G}_{\varepsilon_n}$. However, obtaining the correct bound requires careful handling of the determinant and height density terms above. The determinant term can be upper bounded using Assumption~\ref{asmp:bounded-hessian}, which upper bounds the curvature of the signal:
	\begin{equation*}
		\begin{aligned}
			\E[\det(-\nabla^2 Y_t) \cdot \1(\nabla^2 Y_t \prec 0)|\nabla Y_t = 0,Y_t = y] 
			& = \int \det(H_{t|y} + R) \cdot \1(H_{t|y} + R \succ 0) f_{R_t}(R) \,dR \\
			& \leq \int (\|H_{t|y}\|^d + \|R\|^d) f_{R_t}(R) \,dR \\
			& \leq C(\det(H_{t|y}) + 1) \\
			& \leq C(\|\nabla^2 \mu_t\|^d + |y - \mu_t|^d + \|\nabla \mu_t\|^d + 1) \\
			& \leq C(\lambda_n^d + |y - \mu_t|^d + \|\nabla \mu_t\|^d).
		\end{aligned}
	\end{equation*}
	Plugging this in to our exact expression for $\E[N_u(\mc{G}_{\varepsilon_n})]$ and marginalizing over $y \in (u,\infty)$ yields the upper bound
	\begin{equation}
		\label{pf:high-gradient-peaks-2}
		\E[N_u(\mc{G}_{\varepsilon_n})] \leq C \int_{\mc{G}_{\varepsilon_n}} (\lambda_n^d + (u - \mu_t)_{+}^d + \|\nabla \mu_t\|^d) \Psi(u - \mu_t) f_{\nabla Y_t}(0) \,dt,
	\end{equation}
	where we have used~\eqref{eqn:normalized-overshoot-moments} to upper bound $\int_{u}^{\infty} |y - \mu_t|^d f_{Y_t}(y) \,dy \leq C(1 + (u - \mu_t)_{+}^d)\Psi(u - \mu_t)$. 	
	
	From here, we will apply Assumption~\ref{asmp:well-separated} to upper bound the density of the gradient $f_{\nabla Y_t}(0)$. The precise nature of the analysis, and how it contributes to the ultimate upper bound, depends on how far $t$ is from $T^*$. So we further partition $\mc{G}_{\varepsilon_n}$ into two subregions, and bound the expectation over each. By~\eqref{eqn:signal-taylor-expansion}, there exist constants $\varepsilon_0 > 0, n_0$ such that for all $n \geq n_0$,
	\begin{equation*}
		\nabla \mu_{t^* + h}' \Lambda_{t^*}^{-1} \nabla \mu_{t^* + h} \geq \frac{3}{4}h'\nabla^2\mu_{t^*}\Lambda_{t^*}^{-1}\nabla^2\mu_{t^*}h, \quad \|\nabla \mu_{t^* + h}\| \leq 2 \|\nabla^2\mu_{t^*}\| \|h\|, \quad 0 \leq \mu_{t^*} - \mu_{t^* + h} \leq 1/2,
	\end{equation*}
	for all $t^* \in T^*, h \in B_d(0,\varepsilon_0)$. We partition $\mc{G}_{\varepsilon_n}$ into $\mc{G}_{\varepsilon_n}^{0}$ and $\mc{G}_{\varepsilon_n}^{1}$, where $\mc{G}_{\varepsilon_n}^0$ consists of points $t \in \mc{G}_{\varepsilon_n}$ that are within distance $\varepsilon_0$ of $T^*$, while $\mc{G}_{\varepsilon_n}^1$ consists of all other points $t \in \mc{G}_{\varepsilon_n}$:  
	\begin{equation}
		\mc{G}_{\varepsilon_n}^{0} = \mc{G}_{\varepsilon_n} \setminus \mc{G}_{\varepsilon_0}, \quad \textrm{and} \quad \mc{G}_{\varepsilon_n}^{1} = \mc{G}_{\varepsilon_n} \cap \mc{G}_{\varepsilon_0}.
	\end{equation}
	By Assumption~\ref{asmp:well-separated}, there exists a constant $n_1$ such that $\{g_{n}(\varepsilon_0)\}^2 \geq 6\sigma_1^2(\log |\mc{T}| + (d + 2) \log \lambda_n)$ for all $n \geq n_1$. Moreover, there exists a constant $n_2$ such that $3\{g_n(\varepsilon_0)\}^2 - 3d \sigma_1^2 \log g_n(\varepsilon_0) \geq \{g_n(\varepsilon_0)\}^2$ for all $n \geq n_2$. Hereafter we will assume $n \geq \max(n_0,n_1,n_2)$.
	
	\paragraph{Upper bound on $\E[N(\mc{G}_{\varepsilon_n}^0)]$.} Let $\mc{G}_{t^*} := B(t^*,\varepsilon_0) \setminus B(t^*,\varepsilon_n)$.  For all $t \in \mc{G}_{t^*}$, $(\lambda_n^d + (u - \mu_t)_{+}^d + \|\nabla \mu_t\|^d) \leq C\det(\bar{H}_{t^*})$. Therefore, the expected number of points in $\mc{G}_{t^*}$ is at most
	\begin{equation*}
		\begin{aligned}
			C \det(\bar{H}_{t^*}) \int_{\mc{G}_{t^*}}  \Psi(u - \mu_t) f_{\nabla Y_t}(0) \,dt 
			& \leq C \det(\bar{H}_{t^*}) \cdot \Psi(u - \mu_{t^*})  \cdot \int_{\mc{G}_{t^*}} f_{\nabla Y_t}(0) \,dt \\
			& \leq C \det(\bar{H}_{t^*}) \cdot \Psi(u - \mu_{t^*})  \cdot \int_{B(0,\varepsilon_0) \setminus B(0,\varepsilon_n) }\exp(-\frac{3}{8}h'\nabla^2\mu_{t^*}\Lambda_{t^*}^{-1}\nabla^2\mu_{t^*}h) \,dh \\
			&\leq C \det(\bar{H}_{t^*}) \cdot \Psi(u - \mu_{t^*}) \cdot \int_{B(0,\varepsilon_0) \setminus B(0,\varepsilon_n)} \exp(-\frac{3\lambda_n^2}{8\sigma_1^2}\|h\|^2) \,dh \\
			&\overset{(i)}{\leq} C \cdot \Psi(u - \mu_{t^*}) \cdot \int_{\Rd \setminus B(0,\sqrt{5 \log \lambda_n})} \exp(-\frac{1}{2}\|z\|^2) \,dz \\
			&\overset{(ii)}{\leq} C \cdot \Psi(u - \mu_{t^*}) \cdot \{\log(\lambda_n)\}^{d/2} \delta_n^{5/2} \\
			&\leq C \cdot \Psi(u - \mu_{t^*}) \cdot \delta_n^2,
		\end{aligned}
	\end{equation*}
	where $(i)$ follows by changing variables from $h$ to $z = \sqrt{3\lambda_n/4\sigma_1^2}h$, and $(ii)$ follows by a Chernoff bound as stated in~\eqref{eqn:gaussian-integrals-concentration}. 
	Summing over $t^* \in T^*$, we conclude that $\E[N(\mc{G}_{\varepsilon_n}^0)] \leq C \sum_{t^* \in T^*} \bar{\E}[N(\mc{S}_{t^*})] \delta_n^2$.
	
	\paragraph{Upper bound on $\E[N(\mc{G}_{\varepsilon_n}^1)]$.}
	To upper bound  $\E[N(\mc{G}_{\varepsilon_n}^1)]$, we will want to invoke Assumption~\ref{asmp:interior-global-max} to upper bound the product of the determinant term and the Gaussian survival function in~\eqref{pf:high-gradient-peaks-2}. Let $t^{**} = \argmax_{t^* \in T^*} \mu_{t}$. At all points $t \in \mc{T}$ such that $u - \mu_t < \sqrt{d}$, 
	\begin{equation}
		(\lambda_n^d + (u - \mu_{t})_{+}^d + \|\nabla \mu_t\|^d) \Psi(u - \mu_{t}) \leq C (\lambda_n^d + \|\nabla \mu_t\|^d) \Psi(u - \mu_{t}) \leq C (\lambda_n^d + \|\nabla \mu_t\|^d) \cdot \Psi(u - \mu_{t^{**}}),
	\end{equation}
	with the last inequality following by Assumption~\ref{asmp:interior-global-max}. 
	Otherwise if $u - \mu_t > \sqrt{d}$ then $\mu_{t} \mapsto ((u - \mu_t)_{+}^d + \lambda_n^d)\Psi(u - \mu_t)$ is monotonically increasing in $\mu_t$, and so for all such points $t \in \mc{T}$
	\begin{equation}
		(\lambda_n^d + (u - \mu_{t})_{+}^d + \|\nabla \mu_t\|^d) \Psi(u - \mu_{t}) \leq C (\lambda_n^d + (u - \mu_{t^{**}})_{+}^d + \|\nabla \mu_t\|^d) \cdot \Psi(u - \mu_{t^{**}}).
	\end{equation}
	Therefore the upper bound in~\eqref{pf:high-gradient-peaks-2} is at most
	\begin{equation}
		\label{pf:high-gradient-peaks-1}
		C \int_{\mc{G}_{\varepsilon_n}^{1}} (\lambda_n^d + (u - \mu_{t^{**}})_{+}^d + \|\nabla \mu_t\|^d) \Psi(u - \mu_{t^{**}}) f_{\nabla Y_t}(0) \,dt \leq C \int_{\mc{G}_{\varepsilon_n}^{1}} (\det(\bar{H}_{t^{**}})  + \|\nabla \mu_t\|^d) \Psi(u - \mu_{t^{**}}) f_{\nabla Y_t}(0) \,dt.
	\end{equation}	
	We now apply Assumption~\ref{asmp:well-separated}, which lower bounds $\inf_{t \in \mc{G}_{\varepsilon_n}^1} \|\nabla \mu_t\| \geq g_n(\varepsilon_0)$. As a result,
	\begin{equation*}
		\inf_{t \in \mc{G}_{\varepsilon_n}^{1}} \|\nabla\mu_t\|^d \cdot f_{\nabla Y_t}(0) \leq C \exp\Big(-\frac{\|\nabla\mu_t\|^2}{2\sigma_{1}^2} + d\log(\|\nabla\mu_t\|)\Big) \leq C \exp\Big(-\frac{1}{3\sigma_{1}^2} \{g_n(\varepsilon_0)\}^2\Big),
	\end{equation*}
	with the latter inequality following because $3\{g_n(\varepsilon_0)\}^2 - 3d \sigma_1^2 \log g_n(\varepsilon_0) \geq \{g_n(\varepsilon_0)\}^2$. As a result~\eqref{pf:high-gradient-peaks-1} is further upper bounded by
	\begin{equation*}
		\begin{aligned}
			C \cdot |\mc{T}| \Psi\big(u - \mu_{t^{**}}\big) \det(\bar{H}_{t^{**}}) \cdot \exp\Big(-\frac{1}{3\sigma_{1}^2} \{g_n(\varepsilon_0)\}^2\Big)
			& \leq C \cdot |\mc{T}| \cdot \lambda_n^d \cdot \Psi\big(u - \mu_{t^{**}}\big)  \cdot \exp\Big(-\frac{1}{3\sigma_{1}^2} \{g_n(\varepsilon_0)\}^2\Big) \\
			& \leq C \cdot \Psi\big(u - \mu_{t^{**}}\big)   \exp\Big(- \frac{1}{3\sigma_1^2} \{g_n(\varepsilon_0)\}^2 + \log |\mc{T}| + d\log \lambda_n \Big) \\
			& \leq C \cdot \Psi\big(u - \mu_{t^{**}}\big)  \delta_n^2, 
		\end{aligned}
	\end{equation*}
	with the final inequality following because $\{g_{n}(\varepsilon_0)\}^2 \geq 6\sigma_1^2(\log |\mc{T}| + (d + 2) \log \lambda_n)$. Thus $\E[N(\mc{G}_{\varepsilon_n}^1)] \leq C \bar{\E}[N(\mc{S}_{t^{**}})] \delta_n^2 \leq C \delta_n^2 \cdot \sum_{t^* \in T^*} \bar{E}[N(\mc{S}_{t^{*}})]$. Combining the upper bounds on $\E[N(\mc{G}_{\varepsilon_n}^0)]$ and $\E[N(\mc{G}_{\varepsilon_n}^1)]$ yields the claim of Proposition~\ref{prop:high-gradient-peaks}.
	
	\section{Proofs for Section~\ref{sec:peak-distribution}}
	Section~\ref{sec:peak-distribution} contains four main results: Theorem~\ref{thm:approximate-joint-distribution} on the asymptotic conditional distribution of peaks; Theorem~\ref{thm:pivot} on the asymptotic distribution of pivots; Corollary~\ref{cor:conditional-coverage} on conditional coverage; and Theorem~\ref{thm:miscoverage} which bounds the asymptotic PCMR. The proof of Theorem~\ref{thm:approximate-joint-distribution} is given in Sections~\ref{subsec:pf-approximate-joint-distribution}-\ref{subsec:pf-no-more-than-one-process-peak-per-signal-peak}. The proof of Theorem~\ref{thm:pivot} is given in Sections~\ref{subsec:pf-pivot-hessian}-\ref{subsec:pf-approximate-density-studentized-peak}. The proof of Corollary~\ref{cor:conditional-coverage} is given in~\ref{subsec:pf-conditional-coverage}. The proof of Theorem~\ref{thm:miscoverage} is given in Section~\ref{subsec:pf-miscoverage}.

	\subsection{Proof of Theorem~\ref{thm:approximate-joint-distribution}}
	\label{subsec:pf-approximate-joint-distribution}
	As mentioned in the main text, under the assumptions of Section~\ref{subsec:assumptions}, as $\delta_n \to 0$ the $\P$-probability of observing a unique peak $\hat{t} \in \wh{T}$ that consistently estimates $t^*$ and has height $\wh{Y} \in \mu_{t^*} \pm \Delta_n$ tends to one. This is a consequence of the following stronger result.
	\begin{proposition}
		\label{prop:no-more-than-one-process-peak-per-signal-peak}
		Under Assumptions~\ref{asmp:signal-holder}-\ref{asmp:curvature-asymptotics}, for all $n \in \mathbb{N}$ sufficiently large:
		\begin{equation}
			\label{eqn:no-more-than-one-peak-1}
			\begin{aligned}
				\sup_{\mc{A}}\frac{|\P\big(N(\mc{A}) = 1, N(\mc{S}_{t^*}) = 1\big) - \E[N(\mc{A})]|}{\E[N(\mc{A})]} \leq C\exp\big(-c \lambda_n^2\big),
			\end{aligned}
		\end{equation}
		where the supremum is over all measurable subsets $\mc{A}$ of $\mc{S}_{t^*}$.
	\end{proposition}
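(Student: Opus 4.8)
The plan is to show that, with overwhelming probability, the field $Y$ is strictly concave on the ball $\mc{B}_{t^*} = B_d(t^*,\varepsilon_n)$, so that it has at most one critical point there; the proposition then follows by a short set-theoretic reduction together with a \emph{weighted} Kac--Rice bound. Let $E$ denote the event that $-\nabla^2 Y_s \succ 0$ for every $s \in \mc{B}_{t^*}$. On $E$ one has $\langle \nabla Y_s - \nabla Y_{s'}, s - s'\rangle < 0$ for distinct $s,s' \in \mc{B}_{t^*}$, so $\nabla Y$ is injective on this convex set, hence $N(\mc{B}_{t^*}) \le 1$ and a fortiori $N(\mc{S}_{t^*}) \le 1$ and $N(\mc{A}) \le 1$ for every $\mc{A} \subseteq \mc{S}_{t^*}$. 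I would first record the elementary identity that results. On $E$, $N(\mc{A})\1_E = \1_{\{N(\mc{A})=1\}\cap E}$ and $\{N(\mc{A})=1\}\cap E \subseteq \{N(\mc{S}_{t^*})=1\}$, so $\E[N(\mc{A})\1_E] = \P(N(\mc{A})=1,N(\mc{S}_{t^*})=1,E)$; consequently
\begin{equation*}
  0 \le \E[N(\mc{A})] - \P\big(N(\mc{A})=1, N(\mc{S}_{t^*})=1\big) = \E[N(\mc{A})\1_{E^c}] - \P\big(N(\mc{A})=1,N(\mc{S}_{t^*})=1,E^c\big) \le \E[N(\mc{A})\1_{E^c}].
\end{equation*}
Thus the whole statement reduces to the \emph{relative} bound $\E[N(\mc{A})\1_{E^c}] \le C\exp(-c\lambda_n^2)\,\E[N(\mc{A})]$, uniformly over measurable $\mc{A} \subseteq \mc{S}_{t^*}$ and $t^* \in T^*$.

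To obtain this I would invoke the version of the Kac--Rice formula carrying a field-measurable functional (the same device underlying the Palm density~\eqref{eqn:density-height-null}): writing the count over the joint location--height space,
\begin{equation*}
  \E[N(\mc{A})\1_{E^c}] = \int\!\!\int \1\{(t,y) \in \mc{A}\}\, \E\big[\det(-\nabla^2 Y_t)\1(\nabla^2 Y_t \prec 0)\1_{E^c}\,\big|\,\nabla Y_t = 0, Y_t = y\big]\, f_{\nabla Y_t, Y_t}(0,y)\,dy\,dt,
\end{equation*}
and the same identity without the $\1_{E^c}$ gives $\E[N(\mc{A})]$. Since the Gaussian factor $f_{\nabla Y_t,Y_t}(0,y)$ is common, it suffices to bound, uniformly over $(t,y)\in\mc{S}_{t^*}$, the ratio of the two conditional expectations of the Hessian determinant by $C\exp(-c\lambda_n^2)$. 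Using the projection/residual decomposition~\eqref{eqn:hessian-decomposition}, conditional on $\{\nabla Y_t=0,Y_t=y\}$ we have $-\nabla^2 Y_s \overset{d}{=} H_s^t(y) + R_s^t$ with $R_s^t$ independent of $(\epsilon_t,\nabla\epsilon_t)$; the estimates of Section~\ref{subsec:hessian-asymptotics} give $\inf_{s\in\mc{B}_{t^*}}\lambda_{\min}(H_s^t(y)) \ge (1-C\varepsilon_n)\lambda_{\min}(\bar H_{t^*}) \ge \tfrac12\lambda_n$ for $n$ large (here $\bar H_{t^*} \succeq -\nabla^2\mu_{t^*}$ because $\bar u_{t^*}\ge\mu_{t^*}$ makes $(\bar u_{t^*}-\mu_{t^*})\Lambda_{t^*}$ positive semidefinite), while the Borell--TIS bound, applied exactly as in~\eqref{eqn:hessian-residual-tail-behavior} but with threshold $\tfrac14\lambda_n$ in place of $\xi_n$, gives $\P(\sup_{s\in\mc{B}_{t^*}}\lambda_{\max}(R_s^t)\ge\tfrac14\lambda_n) \le C\exp(-c\lambda_n^2)$. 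Hence $\P(E^c\mid\nabla Y_t=0,Y_t=y)\le C\exp(-c\lambda_n^2)$ uniformly. Splitting the numerator on $\{\|R_t\|\le\tfrac14\lambda_n\}$, on which $\det(-\nabla^2 Y_t)\le C\det(H_{t|y})$ by the determinant Taylor estimate together with~\eqref{eqn:deterministic-hessian-relative-error}, and its complement, handled by Cauchy--Schwarz with $\E[\|R_t\|^{2d}\mid\cdot]\le C$ and $\|H_{t|y}\|^d\asymp\det(H_{t|y})$, one gets $\E[\det(-\nabla^2 Y_t)\1(\nabla^2 Y_t\prec0)\1_{E^c}\mid\cdot]\le C\det(H_{t|y})\exp(-c\lambda_n^2)$; conversely $\E[\det(-\nabla^2 Y_t)\1(\nabla^2 Y_t\prec0)\mid\cdot]\ge c\det(H_{t|y})$, since $\{\|R_t\|\le\tfrac14\lambda_n\}$ has probability bounded below and on it $\det(-\nabla^2 Y_t)\ge c\det(H_{t|y})>0$. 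The desired ratio bound follows, and integrating over $\mc{A}$ preserves it.

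I expect the \emph{main obstacle} to be the second paragraph: establishing the functional Kac--Rice identity at the level of generality needed (a general field-measurable event inside the conditional expectation) and, more delicately, arranging the determinant bookkeeping so that the exponential factor controls the ratio \emph{uniformly}, including the strong-selection regime where $\bar u_{t^*}-\mu_{t^*}$ — and hence $\det(\bar H_{t^*})\asymp\bar u_{t^*}^d$ and the Gaussian height factor — may be large, since Assumption~\ref{asmp:hessian-curvatures} is not imposed here. The point that makes it go through is that the crude lower bound $\E[\det(-\nabla^2 Y_t)\1(\nabla^2 Y_t\prec0)\mid\cdot]\gtrsim\det(H_{t|y})\gtrsim\lambda_n^d$ holds with a fixed constant, so the relative error is genuinely governed by the Borell--TIS tail $\exp(-c\lambda_n^2)$ of the residual Hessian and by nothing else.
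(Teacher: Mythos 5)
Your proposal is correct and follows essentially the same route as the paper. The paper also reduces the claim to the event that $Y$ is strictly concave on $\mc{B}_{t^*}$ (via the deterministic bound $N(\mc{A}) - \1(N(\mc{A})=1, N(\mc{S}_{t^*})=1) \le N(\mc{A})\1(\sup_{s\in\mc{B}_{t^*}}\lambda_{\max}(\nabla^2 Y_s)\ge 0)$, which is your $\1_{E^c}$ in slightly different clothing), then applies the Kac--Rice formula with the extra field-measurable event inserted, the pinned-mean lower bound $\inf_s\lambda_{\min}(H_s^t(y))\ge\frac12\lambda_n$, the Borell--TIS tail on $\sup_s\lambda_{\max}(R_s^t)$, and H\"older (your Cauchy--Schwarz is the $q_1=q_2=2$ case) to separate the determinant moment from the small-probability factor. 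The step you flag as the main obstacle — the marked Kac--Rice identity — is used without ceremony in the paper; it is the same result that underlies the Palm density~\eqref{eqn:density-height-null}, so this is not a genuine gap. Your observation about the lower bound $\E[\det(-\nabla^2 Y_t)\1(\cdots)\mid\cdot]\gtrsim\det(H_{t|y})$ making the ratio bound work without Assumption~\ref{asmp:hessian-curvatures} is a nice piece of explicit bookkeeping that the paper handles implicitly by comparing directly against $\bar\rho(t,y)\lesssim\rho(t,y)$.
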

	The proof of Proposition~\ref{prop:no-more-than-one-process-peak-per-signal-peak} is given in Section~\ref{subsec:pf-no-more-than-one-process-peak-per-signal-peak}. Note that taking $u = -\infty, \mc{A} = \mc{S}_{t^*}$ in Proposition~\ref{prop:no-more-than-one-process-peak-per-signal-peak} shows that  
		\begin{equation*}
		\P(N_{\mu_{t^*} \pm \Delta_n}(\mc{B}_{t^*}) = 1) \geq \E[N_{\mu_{t^*} \pm \Delta_n}(\mc{B}_{t^*}))]\big(1 - O(\exp(-\lambda_n^2))\big) \geq 1 - O(\delta_n^2),
	\end{equation*}	
	with the latter inequality following from Proposition~\ref{prop:expectation-counting-process}. In other words, each true peak $t^*$ will be $\varepsilon_n$-consistently estimated by a unique observed peak $\hat{t} \in \wh{T}$, with asymptotic probability one. 
	
	We now use Proposition~\ref{prop:no-more-than-one-process-peak-per-signal-peak} to  derive~\eqref{eqn:joint-density-error}, which upper bounds the relative error between $p(t,y)$ and $\bar{p}(t,y)$. First, application of Proposition~\ref{prop:no-more-than-one-process-peak-per-signal-peak} with $\mc{A} = \mc{S}_{t^*}$ upper bounds $|\P(\mc{S}_{t^*} = 1)/\E[N(\mc{S}_{t^*})] - 1|$. A second application of Proposition~\ref{prop:no-more-than-one-process-peak-per-signal-peak} with $\mc{A} = B_d(t,r) \times [y - r, y + r]$ then implies
	\begin{equation}
		\begin{aligned}
			\frac{\Big|\P(N(\mc{A}_r) = 1, N(\mc{S}_{t^*}) = 1)/\E[N(\mc{S}_{t^*})] - \E[N(\mc{A}_r)]/\E[N(\mc{S}_{t^*})]\Big|}{E[N(\mc{A}_r)]/\E[N(\mc{S}_{t^*})]} \leq C\exp\big(-c \lambda_n^2\big).
		\end{aligned}
	\end{equation}
	Combining these two results and taking the limit as $r \to 0$ gives
	\begin{equation}
		\label{eqn:approximate-distribution-after-selection-1}
		\begin{aligned}
			\frac{\Big|p(t,y) - \rho(t,y)/\E[N(\mc{S}_{t^*})]\Big|}{\rho(t,y)/\E[N(\mc{S}_{t^*})]} \leq C\exp\big(-c \lambda_n^2\big).
		\end{aligned}
	\end{equation}
	The local expansion $\bar{p}(t,y) = \bar{\rho}(t,y)/\bar{\E}[N(\mc{S}_{t^*})]$, with the bound on relative error~\eqref{eqn:joint-density-error} following from~\eqref{eqn:approximate-distribution-after-selection-1}, Theorem~\ref{thm:approximate-joint-intensity} and Proposition~\ref{prop:expectation-counting-process}. We do not give the specific calculations involved explicitly, because they are similar to the bookkeeping carried out explicitly in the proof of Theorem~\ref{thm:approximate-joint-intensity}. Similarly, the local expansion $\bar{p}(y) = \bar{\rho}(y)/\bar{\E}[N(\mc{S}_{t^*})]$, with the bound on relative error~\eqref{eqn:density-height-error} following from~\eqref{eqn:approximate-distribution-after-selection-1}, Lemma~\ref{lem:height-intensity} and Proposition~\ref{prop:expectation-counting-process}. 
	
	The same is essentially also true for $\bar{p}(t|y)$ and $\Err_{p}(t|y)$, but the details are slightly more complicated. Recall that in the proof of Lemma~\ref{lem:height-intensity}, we introduced
	\begin{equation*}
		\tilde{\rho}(y) = \frac{\sqrt{\det(\bar{H}_{t^*})}}{\sqrt{(2\pi)\det(-\nabla^2\mu_{t^*})}} \cdot \exp\Big(-\frac{1}{2}(y - \mu_{t^*})^2\Big)\Big(1 + \frac{1}{2}T_{01}^{\rho}(y)\Big).
	\end{equation*}
	Now additionally define
	\begin{equation}
		\label{pf:approximate-joint-distribution-6}
		\begin{aligned}
			\tilde{p}(t^* + h|y) := \frac{\bar{\rho}(t^* + h,y)}{\tilde{\rho}(y)} = \sqrt{\frac{\det(G_{t^*|y})}{(2\pi)^d}} \exp\Big(-\frac{1}{2}h'G_{t^*|y}h\Big) \bigg(\frac{1 + T_{10}^{\rho}(h) - \frac{1}{2} R_{30}^{\rho}(h) + T_{01}^{\rho}(y) + T_{21}^{\rho}(h,y)}{1 + \frac{1}{2}T_{01}^{\rho}(y)}\bigg).
		\end{aligned}
	\end{equation}
	Inspecting the proof of Lemma~\ref{lem:height-intensity} shows that $|\tilde{\rho}(y) - \bar{\rho}(y)|/\tilde{\rho}(y) \leq \Err_{\rho}(y)$. Combining this observation with Lemma~\ref{lem:height-intensity},~\eqref{eqn:approximate-distribution-after-selection-1} and Theorem~\ref{thm:approximate-joint-intensity} implies that for all $n \in \mathbb{N}$ sufficiently large:
	\begin{equation}
		\label{pf:approximate-joint-distribution-2}
		\frac{|\tilde{p}(t^* + h|y) - p(t^* + h|y)|}{\tilde{p}(t^* + h|y)} \leq C\big(\delta_n^2 + \Err_{\rho}(h,y) + \Err_{\rho}(y)\big).
	\end{equation}
	Thus $\tilde{p}(t|y)$ is a second-order accurate local expansion of $p(t|y)$. An upper bound on the relative error between $\tilde{p}(t|y)$ and $\bar{p}(t|y)$ follows from Taylor expansion of several simple functions involving first-order terms in $\tilde{p}(t|y)$: for all $n \in \mathbb{N}$ sufficiently large,
	\begin{align*}
		\Big|\frac{1 + T_{01}^{\rho}(y)}{1 + \frac{1}{2}T_{01}^{\rho}(y)} - \Big(1 + \frac{1}{2}T_{01}^{\rho}(y)\Big)\Big| 
		& \leq  C\{T_{01}^{\rho}(y)\}^2 \leq C\cdot\Err_{\rho}(h,y), \\
		\frac{\Big|\sqrt{\det(G_{t^*|y})} - \sqrt{\det(\bar{G}_{t^*})}(1 + \frac{1}{2}T_{01}^{\rho}(y))\Big|}{\sqrt{\det(\bar{G}_{t^*})}} 
		& \leq C\{T_{01}^{\rho}(y)\}^2 \leq C\cdot\Err_{\rho}(h,y), \\
		\Big|\exp(T_{21}^{\rho}(h,y)) - (1 + T_{21}^{\rho}(h,y))\Big| 
		& \leq C\{T_{21}^{\rho}(h,y)\}^2 \leq C\cdot\Err_{\rho}(h,y),
	\end{align*}
	and therefore
	\begin{equation}
		\label{pf:approximate-joint-distribution-3}
		\frac{|\bar{p}(t^* + h|y) - \tilde{p}(t^* + h|y)|}{\tilde{p}(t^* + h|y)} \leq C \cdot \Err_{\rho}(h,y).
	\end{equation}
	Combining~\eqref{pf:approximate-joint-distribution-2} and~\eqref{pf:approximate-joint-distribution-3} yields~\eqref{eqn:density-location-error}.
		
	\subsection{Proof of Proposition~\ref{prop:no-more-than-one-process-peak-per-signal-peak}}
	\label{subsec:pf-no-more-than-one-process-peak-per-signal-peak}
	Our goal is to upper bound
	\begin{equation*}
		\Big|\E[N(\mc{A})] - \P(N(\mc{A}) = 1,N(\mc{S}_{t^*}) = 1)\Big|.
	\end{equation*}
	We begin with a deterministic upper bound: $N(\mc{A}) - \1(N(\mc{A}) = 1,N(\mc{S}_{t^*}) = 1)$ -- which is always non-negative -- is at most
	\begin{equation*}
		N(\mc{A}) - \1(N(\mc{A}) = 1,N(\mc{S}_{t^*}) = 1) \leq N(\mc{A}) \cdot \1(N(\mc{S}_{t^*}) > 1) \leq N(\mc{A}) \cdot \1\bigg(\sup_{s \in \mc{B}_{t^*}} \lambda_{\max}(\nabla^2 Y_s) \geq 0\bigg).
	\end{equation*}
	Here we are using the fact that if the maximum eigenvalue of $\nabla^2 Y_{s}$ over $s \in \mc{B}_{t^*}$ is negative, then $Y$ is strongly concave in $\mc{B}_{t^*}$ and hence has at most one local maximum. Decomposing the Hessian $-\nabla^2 Y_s = H_s^{t}(y) + R_s^{t}$ -- where we recall the projection/residual notation from~\eqref{eqn:hessian-decomposition} -- allows us to further upper bound the final quantity on the right hand side by the counting process
	\begin{equation*}
		N\Big((t,y) \in \mc{A}: \nabla Y_t = 0, Y_t = y, \sup_{s \in \mc{B}_{t^*}} \lambda_{\max}(R_s^t) \geq \inf_{s \in \mc{B}_{t^*}} \lambda_{\min}(H_s^t(y)) \Big),
	\end{equation*} 
	The results of Section~\ref{subsec:hessian-asymptotics}, specifically~\eqref{eqn:deterministic-hessian-relative-error}, further imply that  for all $n \in \mathbb{N}$ large enough such that $\lambda_n \geq C \Delta_n$, for all $t \in \mc{B}_{t^*},y \in \mc{I}_{t^*}$:
	$$
	\inf_{s \in \mc{B}_{t^*}} \lambda_{\min}(H_s^t(y)) \geq \lambda_{\min}(\bar{H}_{t^*}) - C \Delta_n \geq \lambda_n - C \Delta_n \geq \frac{1}{2}\lambda_n,
	$$ 
	Taking expectations, we conclude that
	\begin{equation*}
		\Big|\E[N(\mc{A})] - \P(N(\mc{A}) = 1,N(\mc{S}_{t^*}) = 1)\Big| \leq \E\Big[N\Big((t,y) \in \mc{A}: \nabla Y_t = 0, Y_t = y, \sup_{s \in \mc{B}_{t^*}} \lambda_{\max}(R_s^t) \geq \frac{1}{2}\lambda_n \Big)\Big]
	\end{equation*}
	By the Kac-Rice formula, the expectation on the right hand side of the previous display is
	\begin{equation*}
		\int_{\mc{A}} \E\Big[\det(-\nabla^2 Y_{t}) \cdot \1\big(\nabla^2 Y_t \prec 0, \sup_{s \in \mc{B}_{t^*}} \lambda_{\max}(R_s^t) \geq \frac{1}{2} \lambda_{n}\big)|Y_t = y,\nabla Y_t = 0\Big] \cdot f_{Y_t}(y) \cdot f_{\nabla Y_t}(0) \,dy \,dt.
	\end{equation*}
	We upper bound this via H\"{o}lder's inequality:
	\begin{equation*}
		\begin{aligned}
			& \int_{\mc{A}} \E\Big[\det(-\nabla^2 Y_{t}) \cdot \1\big(\nabla^2 Y_t \prec 0, \sup_{s \in \mc{B}_{t^*}} \lambda_{\max}(R_s^t) \geq \frac{1}{2} \lambda_{n}\big)|Y_t = y,\nabla Y_t = 0\Big] \cdot f_{Y_t}(y) \cdot f_{\nabla Y_t}(0) \,dy \,dt \\
			& \leq \int_{\mc{A}} \Big\{\E\Big[|\det(-\nabla^2 Y_{t})|^{q_1}|Y_t = y,\nabla Y_t = 0\Big]\Big\}^{1/q_1} \cdot \Big\{\P\Big(\sup_{s \in \mc{B}_{t^*}} \lambda_{\max}(R_s^t) \geq \frac{1}{2}\lambda_n|Y_t = y,\nabla Y_t = 0\Big)\Big\}^{1/q_2} \cdot f_{Y_t}(y) \cdot f_{\nabla Y_t}(0) \,dy \,dt, \\
			& = \int_{\mc{A}} \Big\{\E\Big[|\det(-\nabla^2 Y_{t})|^{q_1}|Y_t = y,\nabla Y_t = 0\Big]\Big\}^{1/q_1} \cdot \Big\{\P\Big(\sup_{s \in \mc{B}_{t^*}} \lambda_{\max}(R_s^t) \geq \frac{1}{2}\lambda_n\Big)\Big\}^{1/q_2} \cdot f_{Y_t}(y) \cdot f_{\nabla Y_t}(0) \,dy \,dt,
		\end{aligned}
	\end{equation*}               
	where $q_1 > 1, q_2 > 1, 1/q_1 + 1/q_2 = 1$ are H\"{o}lder conjugates, and in the equality we have used the fact that $R_s^t$ is independent of $(Y_t,\nabla Y_t)$. We proceed to upper bound each term in the previous display. In the proof of Lemma~\ref{lem:approximation-peak-intensity}, it was shown that for all $(t,y) \in \mc{S}_{t^*}$, for all $n \in \mathbb{N}$ sufficiently large,
	\begin{equation*}
		\E\Big[|\det(-\nabla^2 Y_{t})|^{q_1}|Y_t = y,\nabla Y_t = 0\Big]\Big\}^{1/q_1} \leq C(\lambda_n^d + 1) \leq C |\det (\bar{H}_{t^*})|.
	\end{equation*} 
	Lemma~\ref{lem:asymptotic-expansion-density} implies that for all $n \in \mathbb{N}$ sufficiently large,
	\begin{equation*}
		f_{Y_t}(y) \cdot f_{\nabla Y_{t}}(0) \leq C\exp\Big(-\frac{(y - \mu_{t^*})^2}{2} - \frac{(t - t^*)'G_{t^*|y}(t - t^*)}{2}\Big).
	\end{equation*}              
	Finally, it is easy to see that~\eqref{eqn:hessian-residual-tail-behavior} holds with $\xi_n$ replaced by $\frac{1}{2}\lambda_n^2$. Taking $\eta = \sigma_{22}^2/8$ in that equation yields
	\begin{equation*}
		\P\Big(\sup_{s \in \mc{B}_{t^*}} \lambda_{\max}(R_s^t) \geq  \frac{1}{2}\lambda_n^2\Big) \leq C\exp\Big(-\frac{\lambda_n^2}{8\sigma_{22}^2}\Big).
	\end{equation*}
	Thus, for all $n \in \mathbb{N}$ sufficiently large,
	\begin{equation*}
		\begin{aligned}
			& \Big|\E\big[N(\mc{A})\big] - \P\big(N(\mc{A}) = 1,N(\mc{S}_{t^*}) = 1\big)\Big| \\
			& \leq C\exp\Big(-\frac{\lambda_n^2}{8q_2\sigma_{22}^2}\Big)\int_{\mc{A}} |\det(\bar{H}_{t^*})| \cdot \exp\Big(-\frac{(y - \mu_{t^*})^2}{2} - \frac{(t - t^*)'G_{t^*|y}(t - t^*)}{2}\Big) \,dy \,dt \\
			& \leq C\exp\Big(-\frac{\lambda_n^2}{8q_2\sigma_{22}^2}\Big) \int_{\mc{A}} \bar{\rho}(t,y) \,dy \,dt \\
			& \leq C\exp\Big(-\frac{\lambda_n^2}{8q_2\sigma_{22}^2}\Big) \int_{\mc{A}} \rho(t,y) \,dy \,dt \\
			& = C\exp(-c\lambda_n^2) \cdot \E[N(\mc{A})],
		\end{aligned}
	\end{equation*}
	which is exactly the claim of Proposition~\ref{prop:no-more-than-one-process-peak-per-signal-peak}.
	
	\subsection{Proof of Theorem~\ref{thm:pivot}: deterministic limit of observed Hessian}
	\label{subsec:pf-pivot-hessian}
	 In this section, we build on the results of Section~\ref{subsec:hessian-asymptotics} to show that under $\Q^{\mc{S}_{t^*}}$ the relative error between $\wh{H}$ and $\bar{H}_{t^*}$ converges to $0$ in conditional probability, given $\{\hat{t} = t, \wh{Y} = y\}$.

	We begin by writing $\wh{H}$ in terms of the projection/residual decomposition defined in Section~\ref{subsec:hessian-asymptotics}: 
	$$\wh{H} = H_{\hat{t}|\wh{Y}} + \wh{R},$$
	where $\wh{R} = R_{\hat{t}}$. The joint distribution of the triple $(\wh{R},\hat{t},\wh{Y})$ under the law $\Q^{\mc{S}_{t^*}}$ is continuous, and we denote its density by $p(R,t,y)$, and define the conditional density $p(R|t,y) := p(R,t,y)/p(t,y)$, where $p(t,y)$ is the joint density of location and height defined in~\eqref{eqn:density-after-selection}.

	Now we recall the result of~\eqref{eqn:deterministic-hessian-relative-error}:
	\begin{equation*}
		\sup_{(t,y) \in \mc{S}_{t^*}} \|H_{t|y} - \bar{H}_{t^*}\| \leq C \Delta_n.
	\end{equation*} 
	If additionally $R \in \mc{R}_{t^*} = B_{d \times d}(0,\xi_n) \cap \R_{sym}^{d \times d}$ then 
	\begin{equation}
		\label{eqn:hessian-deterministic-equivalent}
		\sup_{(t,y) \in \mc{S}_{t^*}, R \in \mc{R}_{t^*}} \frac{\|H_{t|y} + R - \bar{H}_{t^*}\|}{\|\bar{H}_{t^*}\|} \leq C\frac{(\Delta_n + \xi_n)}{\lambda_n} \leq C\varepsilon_n.
	\end{equation}
	It remains to upper bound the probability that $\wh{R} \in \mc{R}_{t^*}^c$ under $\Q^{\mc{S}_{t^*}}$ and given $\{\hat{t} = t,\wh{Y} = y\}$. This probability can be written as 
	\begin{equation*}
		\Q^{\mc{S}_{t^*}}(\wh{R} \in \mc{R}_{t^*}^c|\hat{t} = t,\wh{Y} = y) = \lim_{r \to 0} \frac{\P(\wh{R} \in \mc{R}_{t^*}^c, N(\mc{A}_r) = 1)}{\P(N(\mc{S}_{t^*}) = 1, N(\mc{A}_r) = 1)}.
	\end{equation*}
	Here $\mc{A}_r = B_d(t,r) \times [y - r, y + r]$. By Proposition~\ref{prop:no-more-than-one-process-peak-per-signal-peak}, for all $n \in \mathbb{N}$ sufficiently large the denominator is at least $\frac{1}{2} \cdot \E[N(\mc{A}_r)]$ for all $r > 0$. On the other hand, the numerator is upper bounded by
	$$
	\E\bigg[N\Big((t,y) \in \mc{A}_r: \nabla Y_t = 0,Y_t = y, \nabla^2 Y_t \prec 0, \|R_t\| > \xi_n\Big)\bigg] \leq C \delta_n^2 \cdot \E[N(\mc{A}_r)],
	$$
	with the inequality following by arguments similar to those used to prove Proposition~\ref{prop:no-more-than-one-process-peak-per-signal-peak}. As a result, for all $n \in \mathbb{N}$ sufficiently large
	\begin{equation}
		\label{eqn:hessian-residual-truncation-error}
		\Q^{\mc{S}_{t^*}}(\wh{R} \in \mc{R}_{t^*}^c|\hat{t} = t,\wh{Y} = y) \leq C \delta_n^2,
	\end{equation}
	and in combination with~\eqref{eqn:hessian-deterministic-equivalent} this implies
	\begin{equation*}
		\sup_{(t,y) \in \mc{S}_{t^*}} \Q^{\mc{S}_{t^*}}\Big(\frac{\|\wh{H} - \bar{H}_{t^*}\|}{\|\bar{H}_{t^*}\|} \geq C\varepsilon_n|\hat{t} = t,\wh{Y} = y\Big) \leq C \delta_n^2.
	\end{equation*}
	
	% The joint distribution of the triple $(\wh{R},\hat{t},\wh{Y})$ under the law $\Q^{\mc{S}_{t^*}}$ is continuous, and we denote its density by $p(R,t,y)$, and define the conditional density $p(R|t,y) := p(R,t,y)/p(t,y)$, where $p(t,y)$ is the joint density of location and height defined in~\eqref{eqn:density-after-selection}. 

	\subsection{Proof of Theorem~\ref{thm:pivot}: pivot for height}
	We will work under the assumptions of Theorem~\ref{thm:pivot} and establish~\eqref{eqn:height-pivot}. Lemma~\ref{lem:height-pivot} shows that $\bar{\S}_{\mu}(\wh{Y})$ is approximately pivotal.
	\begin{lemma}
		\label{lem:height-pivot}
		Under the assumptions of Theorem~\ref{thm:pivot},
		\begin{equation}
			\label{eqn:height-pivot-3}
			\Big|\Q^{\mc{S}_{t^*}}\big(\bar{\S}_{\mu}(\wh{Y}) \leq \alpha\big) - \alpha\Big| \leq C \big( |\bar{u}_{t^*} - \mu_{t^*}| \delta_n^2 + \delta_n^2\big) := \Err_{\bar{\S}}.
		\end{equation}
	\end{lemma}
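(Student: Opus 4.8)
The plan is to exploit the fact that the reference density $\bar{p}(y)$ of Theorem~\ref{thm:approximate-joint-distribution} in \eqref{eqn:approximate-marginal-density-height} is exactly the density of a Gaussian with mean $m := \mu_{t^*} + \frac{1}{2}\tr(\bar{H}_{t^*}^{-1}\Lambda_{t^*})$ truncated to $(u,\infty)$, and that $\bar{\S}_{\mu_{t^*}}(y) = \Psi(y - m)/\Psi(u - m)$ is precisely the associated survival function. Consequently, if $\wh{Y}$ had law exactly $\bar{p}$, the probability integral transform would give $\bar{\S}_{\mu_{t^*}}(\wh{Y}) \sim \mathrm{Unif}(0,1)$ exactly. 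Concretely, for fixed $\alpha \in (0,1)$ and all $n$ large the level set $\{y : \bar{\S}_{\mu_{t^*}}(y) \le \alpha\}$ equals $[y_\alpha,\infty)$ with $y_\alpha := m + \Psi^{-1}(\alpha\,\Psi(u - m))$; since $\alpha<1$ one has $u \le y_\alpha$, and since $\alpha$ is fixed while $\Psi(\bar{u}_{t^*} - m + \Delta_n)/\Psi(u-m) \to 0$ (using $\Delta_n = \sqrt{6\log\lambda_n}$ and, in the case $u<\mu_{t^*}$, that $u-m<0$ so $\Psi(u-m)>1/2$) one also has $y_\alpha < \bar{u}_{t^*} + \Delta_n$ for $n$ large. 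Finally $\int_{y_\alpha}^{\infty}\bar{p}(y)\,dy = \Psi(y_\alpha - m)/\Psi(u - m) = \alpha$.

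Writing $p(y)$ for the marginal density of $\wh{Y}$ under $\Q^{\mc{S}_{t^*}}$ — which is supported on $\mc{I}_{t^*} = (\bar{u}_{t^*} \pm \Delta_n)\cap(u,\infty)$ — the previous paragraph gives $\big|\Q^{\mc{S}_{t^*}}(\bar{\S}_{\mu}(\wh{Y}) \le \alpha) - \alpha\big| = \big|\int_{[y_\alpha,\infty)}(p(y) - \bar{p}(y))\,dy\big|$, and I would split this integral into $[y_\alpha,\infty)\cap\mc{I}_{t^*}$ and its complement. On the complement, $p \equiv 0$, and since $u \le y_\alpha < \bar{u}_{t^*}+\Delta_n$ the region reduces to $(\bar{u}_{t^*}+\Delta_n,\infty)$, on which the standard Mills-ratio bound \eqref{eqn:mills-ratio} gives $\int_{\bar{u}_{t^*}+\Delta_n}^{\infty}\bar{p}(y)\,dy \le C\exp(-\tfrac12\Delta_n^2) = C\delta_n^3$, treating the cases $u \ge \mu_{t^*}$ and $u < \mu_{t^*}$ separately exactly as in the proof of Proposition~\ref{prop:expectation-counting-process}. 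On $\mc{I}_{t^*}$, the relative-error bound \eqref{eqn:density-height-error}, together with $\sup_{y \in \mc{I}_{t^*}}\Err_{p}(y) \to 0$ (which holds because $\Err_p(y)\le C(\Delta_n^2\delta_n^2 + |\bar{u}_{t^*}-\mu_{t^*}|\delta_n^2 + \delta_n^2)$ and $\bar{u}_{t^*}-\mu_{t^*}\le C\lambda_n$ by Assumption~\ref{asmp:hessian-curvatures}), yields $|p(y) - \bar{p}(y)| \le C\,\Err_{p}(y)\,\bar{p}(y)$, so $\int_{\mc{I}_{t^*}}|p - \bar{p}| \le C\int_{\mc{I}_{t^*}}\Err_{p}(y)\,\bar{p}(y)\,dy$.

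The remaining step is to integrate $\Err_{p}(y) = C\big(|y - \bar{u}_{t^*}|^2\delta_n^2 + |\bar{u}_{t^*} - \mu_{t^*}|\delta_n^2 + \delta_n^2\big)$ against the truncated Gaussian $\bar{p}$. The constant and $|\bar{u}_{t^*}-\mu_{t^*}|$ terms contribute at most $C\delta_n^2(1 + |\bar{u}_{t^*} - \mu_{t^*}|)$ since $\int\bar{p} \le 1$; for the quadratic term I would change variables and invoke the normalized-overshoot / truncated-Gaussian moment bound \eqref{eqn:normalized-overshoot-moments}, exactly as in the proof of Proposition~\ref{prop:expectation-counting-process} where $\int_{u-m}^{\infty}|z - (\bar{u}_{t^*}-m)|^2\phi(z)\,dz/\Psi(u-m) \le C$, obtaining $\int_{(u,\infty)}|y - \bar{u}_{t^*}|^2\bar{p}(y)\,dy \le C$. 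Combining, $\int_{\mc{I}_{t^*}}|p - \bar{p}| \le C\delta_n^2(1 + |\bar{u}_{t^*} - \mu_{t^*}|)$, and adding the $C\delta_n^3$ tail contribution gives $\big|\Q^{\mc{S}_{t^*}}(\bar{\S}_\mu(\wh{Y}) \le \alpha) - \alpha\big| \le C(|\bar{u}_{t^*}-\mu_{t^*}|\delta_n^2 + \delta_n^2) = \Err_{\bar{\S}}$, which is the claim.

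I expect the bulk of this to be routine bookkeeping on top of Theorem~\ref{thm:approximate-joint-distribution}; the only genuinely delicate points are (i) confirming that the $\log\lambda_n = \Delta_n^2/6$ factor present in $\sup_{y}\Err_\rho(y)$ does not survive integration against $\bar{p}$ — it is killed precisely because $\int|y - \bar{u}_{t^*}|^2\bar{p}(y)\,dy$ is a bounded truncated-Gaussian moment rather than $O(\Delta_n^2)$ — and (ii) handling the tail region $y \notin \mc{I}_{t^*}$, where $p$ and $\bar{p}$ have different supports, which is what forces the $y_\alpha$ localization and the Mills-ratio estimate above.
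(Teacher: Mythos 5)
Your proposal is correct and takes essentially the same approach as the paper: both exploit exactness of the probability integral transform for the reference density $\bar{p}(y)$, then bound the error by $\int_{\mc{I}_{t^*}}|p-\bar{p}|$ (controlled via $\Err_p(y)$ from Theorem~\ref{thm:approximate-joint-distribution} and truncated-Gaussian moment bounds) plus the tail mass $\int_{(u,\infty)\setminus\mc{I}_{t^*}}\bar{p}$ (controlled via Mills' ratio). Your explicit identification of the level set $[y_\alpha,\infty)$ is a minor refinement the paper avoids by keeping the indicator $\1(\bar{\S}_\mu(y)\leq\alpha)$ inside the integral, but the resulting estimates are identical.
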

	The proof of Lemma~\ref{lem:height-pivot} is given in Section~\ref{subsec:pf-height-pivot}. Lemma~\ref{lem:height-pivot-2} bounds the difference between $\wh{\S}_{\mu}$ and $\bar{\S}_{\mu}$ on a high probability set.
	\begin{lemma}
		\label{lem:height-pivot-2}
		For all $(t,y) \in \mc{S}_{t^*}$ and $R \in \mc{R}_{t^*}$, 
		\begin{equation}
			\label{eqn:height-pivot-2}
			\Big|\wh{\S}_{\mu}(y,t,H_{t|y} + R) - \bar{\S}_{\mu}(y)\Big| \leq C(|\bar{u} - \mu_{t^*}| + \Delta_n + 1)\varepsilon_n \delta_n := {\rm PlugIn}_{\S}.
		\end{equation}
	\end{lemma}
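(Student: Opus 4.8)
The plan is to treat the claim as a deterministic perturbation estimate. Both $\wh{\S}_{\mu_{t^*}}(y,t,H_{t|y}+R)$ and $\bar{\S}_{\mu_{t^*}}(y)$ have the form $g(s):=\Psi(y-\mu_{t^*}-s)/\Psi(u-\mu_{t^*}-s)$, evaluated at the plug-in shift $\wh{s}:=\tfrac12\tr\big((H_{t|y}+R)^{-1}\Lambda_t\big)$ and at the oracle shift $\bar{s}:=\tfrac12\tr\big(\bar{H}_{t^*}^{-1}\Lambda_{t^*}\big)$ respectively. So I would first bound $|\wh{s}-\bar{s}|$, then bound $\sup_s|g'(s)|$ on the segment joining $\bar{s}$ and $\wh{s}$, and finally combine the two estimates via the mean value theorem.

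For the first bound, since $\bar{H}_{t^*}=-\nabla^2\mu_{t^*}+(\bar{u}_{t^*}-\mu_{t^*})\Lambda_{t^*}\succeq-\nabla^2\mu_{t^*}\succeq\lambda_n I$ (using $\bar{u}_{t^*}\ge\mu_{t^*}$ and $\Lambda_{t^*}\succ0$), we have $\|\bar{H}_{t^*}^{-1}\|\le\delta_n$; and by the Hessian concentration bound~\eqref{eqn:hessian-deterministic-equivalent}, for $(t,y)\in\mc{S}_{t^*}$, $R\in\mc{R}_{t^*}$ and $n$ large the matrix $H_{t|y}+R$ is positive definite and well-conditioned, with $\|(H_{t|y}+R)-\bar{H}_{t^*}\|\le C\varepsilon_n\|\bar{H}_{t^*}\|\le C\varepsilon_n\lambda_n$ and $\|(H_{t|y}+R)^{-1}\|\le2\delta_n$. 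The resolvent identity then gives $\|(H_{t|y}+R)^{-1}-\bar{H}_{t^*}^{-1}\|\le\|\bar{H}_{t^*}^{-1}\|\,\|(H_{t|y}+R)-\bar{H}_{t^*}\|\,\|(H_{t|y}+R)^{-1}\|\le C\varepsilon_n\delta_n$. Feeding this, together with the Lipschitz estimate $\|\Lambda_t-\Lambda_{t^*}\|\le C\|t-t^*\|\le C\varepsilon_n$ from~\eqref{eqn:variance-control} and $\|\Lambda_{t^*}\|\le C$, into the telescoping
\[ \tr\big((H_{t|y}+R)^{-1}\Lambda_t\big)-\tr\big(\bar{H}_{t^*}^{-1}\Lambda_{t^*}\big)=\tr\big[\big((H_{t|y}+R)^{-1}-\bar{H}_{t^*}^{-1}\big)\Lambda_t\big]+\tr\big[\bar{H}_{t^*}^{-1}(\Lambda_t-\Lambda_{t^*})\big] \]
yields $|\wh{s}-\bar{s}|\le C\varepsilon_n\delta_n$.

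For the second bound, write $a(s)=y-\mu_{t^*}-s$ and $b(s)=u-\mu_{t^*}-s$; a direct computation gives $g'(s)=\phi(a)/\Psi(b)-g(s)\,\phi(b)/\Psi(b)$, so $|g'(s)|\le\phi(a)/\Psi(b)+\phi(b)/\Psi(b)$ since $0\le g\le1$. Because $y>u$ we have $b\le a$, hence $\Psi(b)\ge\Psi(a)$ and $\phi(a)/\Psi(b)\le\phi(a)/\Psi(a)$; I would then invoke the standard inverse-Mills bound $\phi(z)/\Psi(z)\le C(1+z_+)$ (cf.~\eqref{eqn:mills-ratio}). Using $\bar{s},\wh{s}=O(\delta_n)$, the range $|y-\bar{u}_{t^*}|\le\Delta_n$, and the identity $(u-\mu_{t^*})_+=\bar{u}_{t^*}-\mu_{t^*}$, this produces $a_+\le C(\Delta_n+|\bar{u}_{t^*}-\mu_{t^*}|+1)$ and $b_+\le C(|\bar{u}_{t^*}-\mu_{t^*}|+1)$, so $\sup_s|g'(s)|\le C(\Delta_n+|\bar{u}_{t^*}-\mu_{t^*}|+1)$ uniformly over $s$ in the $O(\varepsilon_n\delta_n)$-length interval between $\bar{s}$ and $\wh{s}$. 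Multiplying the two bounds gives $|\wh{\S}_{\mu_{t^*}}(y,t,H_{t|y}+R)-\bar{\S}_{\mu_{t^*}}(y)|\le C(|\bar{u}_{t^*}-\mu_{t^*}|+\Delta_n+1)\varepsilon_n\delta_n=\PlugIn_{\S}$.

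The one part requiring care is the inverse-Mills estimate: under strong selection pressure $\bar{u}_{t^*}-\mu_{t^*}$ may grow, inflating $\phi(b)/\Psi(b)$, and this is exactly the source of the $(|\bar{u}_{t^*}-\mu_{t^*}|+\Delta_n+1)$ factor in $\PlugIn_{\S}$ — so the factor cannot be removed, only tracked correctly. Everything else is routine: the well-conditioning of $H_{t|y}+R$ (which also ensures $\wh{\S}$ is well defined) and the resolvent and telescoping manipulations all follow from~\eqref{eqn:deterministic-hessian-relative-error},~\eqref{eqn:hessian-deterministic-equivalent}, and~\eqref{eqn:variance-control}.
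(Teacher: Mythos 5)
Your proposal is correct, and while it reaches the same bound the paper does, the packaging is a little different and worth noting. The paper splits the plug-in perturbation into two scalar errors $E_1 = \tfrac12\tr\bigl((\{H_{t|y}+R\}^{-1}-\bar H_{t^*}^{-1})\Lambda_t\bigr)$ and $E_2 = \tfrac12\tr\bigl(\bar H_{t^*}^{-1}(\Lambda_t-\Lambda_{t^*})\bigr)$, bounds each by $C\varepsilon_n\delta_n$ using exactly the resolvent-identity and Lipschitz estimates you invoke, and then applies its Gaussian-survival perturbation result (Lemma~\ref{lem:gaussian-survival-function-perturbation}, Equation~\eqref{eqn:gaussian-survival-function-perturbation-2}) to the \emph{numerator and denominator of $\wh\S_\mu$ separately}, recombining the two relative-error bounds at the end. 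You instead absorb everything into the single scalar shift $|\wh s-\bar s|\le C\varepsilon_n\delta_n$ and apply the mean value theorem directly to the ratio $g(s)=\Psi(y-\mu_{t^*}-s)/\Psi(u-\mu_{t^*}-s)$, controlling $\sup|g'|$ via the inverse-Mills bound $\phi(z)/\Psi(z)\le C(1+z_+)$. Both arguments route the inverse-Mills growth through the same $((\bar u_{t^*}-\mu_{t^*})+\Delta_n+1)$ factor and the same $O(\varepsilon_n\delta_n)$ perturbation size, so the final estimate is identical; your version is slightly more self-contained (it does not need Lemma~\ref{lem:gaussian-survival-function-perturbation} as a black box), while the paper's version reuses that lemma, which it also needs elsewhere.
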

	The proof of Lemma~\ref{lem:height-pivot-2} is given in Section~\ref{subsec:pf-height-pivot-2}. Now we use Lemmas~\ref{lem:height-pivot} and~\ref{lem:height-pivot-2} to establish~\eqref{eqn:height-pivot}. Recall that the quantity we are interested in is
	\begin{equation*}
		\Q^{\mc{S}_{t^*}}\Big(\wh{\S}_{\mu}(\wh{Y}, \hat{t}, \wh{H}) \leq \alpha\Big) = \int_{\mc{S}_{t^*}} \int_{\R_{sym}^{d \times d}} \1\{\wh{\S}_{\mu}(y, t, H_{t|y} + R) \leq \alpha\} \cdot p(R|t,y) \cdot p(t,y) \,dR \,dt \,dy.
	\end{equation*}
	We truncate this integral to the range at which the approximation of Lemma~\ref{lem:height-pivot-2} applies; applying this lemma and then Lemma~\ref{lem:height-pivot} gives:
	\begin{align*}
		\Q^{\mc{S}_{t^*}}\Big(\wh{\S}_{\mu}(\wh{Y}, \hat{t}, \wh{H}) \leq \alpha \Big)  
		& \geq \int_{\mc{S}_{t^*}} \int_{\mc{R}_{t^*}} \1\{\wh{\S}_{\mu}(y, t, H_{t|y} + R) \leq \alpha\} \cdot p(R|t,y) \cdot p(t,y) \,dR \,dt \,dy \\
		& \overset{(i)}{\geq} \int_{\mc{S}_{t^*}} \int_{\mc{R}_{t^*}} \1(\bar{\S}_{\mu}(y) \leq \alpha - \PlugIn_{\S}) \cdot p(R|t,y) \cdot p(t,y) \,dR \,dt \,dy \\
		& \geq \int_{\mc{S}_{t^*}} \1(\bar{\S}_{\mu}(y) \leq \alpha - \PlugIn_{\S})  p(t,y) \,dt \,dy - \sup_{(t,y) \in \mc{S}_{t^*}} \int_{\mc{R}_{t^*}^c} p(R|t,y) \,dR \\
		& \overset{(ii)}{\geq} \alpha - \PlugIn_{\S} - \Err_{\bar{\S}} - \sup_{(t,y) \in \mc{S}_{t^*}} \int_{\mc{R}_{t^*}^c} p(R|t,y) \,dR \\
		& \overset{(iii)}{\geq} \alpha - \PlugIn_{\S} - \Err_{\bar{\S}} - C\delta_n^2
	\end{align*}
	Above, (i) follows from Lemma~\ref{lem:height-pivot-2}, (ii) follows from Lemma~\ref{lem:height-pivot},  and (iii) follows from~\eqref{eqn:hessian-residual-truncation-error}. A symmetric upper bound follows from essentially equivalent reasoning. Notice that $\PlugIn_{\S} + \Err_{\bar{\S}} + C\delta_n^2 \leq C \cdot \Err_{\S}$. Thus, to establish~\eqref{eqn:height-pivot} it remains to prove Lemma~\ref{lem:height-pivot} and Lemma~\ref{lem:height-pivot-2}.
	
	\subsection{Proof of Lemma~\ref{lem:height-pivot}}
	\label{subsec:pf-height-pivot}
	The quantity we are after is
	\begin{equation*}
		\Q^{\mc{S}_{t^*}}(\bar{\S}_{\mu}(\wh{Y}) \leq \alpha) = \int_{\mc{S}_{t^*}} \1(\bar{\S}_{\mu}(y) \leq \alpha) \cdot p(t,y) \,dt \,dy = \int_{\mc{I}_{t^*}} \1(\bar{\S}_{\mu}(y) \leq \alpha) \cdot p(y) \,dy. 
	\end{equation*}
	Replacing $p(y)$ by $\bar{p}(y)$, and adjusting the limits of integration, we have
	\begin{equation*}
		\Big|\Q^{\mc{S}_{t^*}}(\bar{\S}_{\mu}(\wh{Y}) \leq \alpha) - \int_{u}^{\infty} \1(\bar{\S}_{\mu}(y) \leq \alpha) \cdot \bar{p}(y) \,dy\Big| \leq \int_{\mc{I}_{t^*}} |p(y) - \bar{p}(y)| \,dy +\int_{(u,\infty) \setminus \mc{I}_{t^*}} \bar{p}(y) \,dy.
	\end{equation*}
	By definition
	$
	\int_{u}^{\infty} \1(\bar{\S}_{\mu}(y) \leq \alpha) \cdot \bar{p}(y) \,dy = \bar{\S}_{\mu}(\bar{\S}_{\mu}^{-1}(\alpha)) = \alpha,
	$
	and so it remains only to bound the error terms in the previous display. In the proof of Proposition~\ref{prop:expectation-counting-process}, it is shown that 
	\begin{equation*}
		\int_{\mc{I}_{t^*}} |y - \bar{u}_{t^*}|^2 \bar{\rho}(y) \,dy \leq C \cdot \bar{\E}[N(\mc{S}_{t^*})] \quad \textrm{and} \quad \int_{(u,\infty) \setminus \mc{I}_{t^*}} \bar{\rho}(y) \,dy \leq C \cdot \bar{\E}[N(\mc{S}_{t^*})] \delta_n^3
	\end{equation*}
	Consequently, the error due to local expansion is at most
	\begin{equation*}
		\int_{\mc{I}_{t^*}} |p(y) - \bar{p}(y)| \,dy \leq \int_{\mc{I}_{t^*}} \Err_{p}(y) \bar{p}(y) \,dy \leq C \delta_n^2 \int_{\mc{I}_{t^*}} \big(|y - \bar{u}_{t^*}|^2 + |\bar{u}_{t^*} - \mu_{t^*}| + 1\big) \bar{p}(y) \,dy \leq C \delta_n^2( |\bar{u}_{t^*} - \mu_{t^*}|^2 + 1),
	\end{equation*}
	while  the error due to truncation is at most
	$$
	\int_{(u,\infty) \setminus \mc{I}_{t^*}} \bar{p}(y) \,dy = \int_{(u,\infty) \setminus \mc{I}_{t^*}} \frac{\bar{\rho}(y)}{\bar{\E}[N(\mc{S}_{t^*})]} \,dy \leq C \delta_n^3,
	$$
	which is negligible by comparison.
	
	\subsection{Proof of Lemma~\ref{lem:height-pivot-2}}
	\label{subsec:pf-height-pivot-2}
	
	We find it convenient to introduce the following notation:
	\begin{align*}
		m & = \mu_{t^*} + \frac{1}{2}\tr(\bar{H}_{t^*}^{-1}\Lambda) \\
		E_1(t,y,R) & = \frac{1}{2}\tr((\{H_{t|y} + R\}^{-1} - \bar{H}_{t^*}^{-1})\Lambda_{t}) \\
		E_2(t) & = \frac{1}{2}\tr(\bar{H}_{t^*}^{-1}\{\Lambda_t - \Lambda\}),
	\end{align*}
	so that the estimated survival function is
	\begin{equation*}
		\wh{\S}_{\mu}(y,t,H_{t|y} + R) = \frac{\Psi(y - m - E_1(t,y,R) - E_2(t))}{\Psi(u - m - E_1(t,y,R) - E_2(t))}.
	\end{equation*}
	For all $n \in \mathbb{N}$ sufficiently large, the following bounds on the magnitude of $|E_1(t,y,R)|$ and $|E_2(t)|$ hold uniformly over $(t,y) \in \mc{S}_{t^*}, R \in \mc{R}_{t^*}$:
	\begin{align*}
		|E_1(t,y,R)| \leq C \|\{H_{t|y} + R\}^{-1} - \bar{H}_{t^*}^{-1}\| \cdot \|\Lambda_t\| \overset{(i)}{\leq} C \frac{\|H_{t|y} + R - \bar{H}_{t^*}\|}{\|\bar{H}_{t^*}\|^2} \|\Lambda_t\| \overset{(ii)}{\leq} C \varepsilon_n \delta_n.
	\end{align*}
	where $(i)$ follows from a first-order Taylor expansion of $A \mapsto A^{-1}$ (stated in~\eqref{eqn:matrix-inverse-taylor-expansion} for completeness), while (ii) follows from~\eqref{eqn:hessian-deterministic-equivalent}; and more simply
	\begin{equation*}
		|E_2(t)| \leq \|\bar{H}_{t^*}^{-1}\| \cdot\|\Lambda_{t} - \Lambda\| \leq C \delta_n \varepsilon_n.
	\end{equation*}
	An implication of the upper bounds on $|E_1(t,y,R)|$ and $|E_2(t)|$ is that uniformly over $(t,y) \in \mc{S}_{t^*}$ and $R \in \mc{R}_{t^*}$, $|y - m| \cdot (|E_1(t,y,R)| + |E_2(t)|) \to 0$ and $|u - m| \cdot (|E_1(t,y,R)| + |E_2(t)|) \to 0$. We can therefore apply Lemma~\ref{lem:gaussian-survival-function-perturbation} to both numerator and denominator of $\wh{\S}_{\mu}(y,t,H_{t|y} + R)$, and conclude that for all $n \in \mathbb{N}$ sufficiently large
	\begin{equation}
		\begin{aligned}
			\frac{\Psi(y - m - E_1(t,y,R) - E_2(t)) - \Psi(y - m)}{\Psi(y - m)} & \leq C((y - m)_{+} + 1) \varepsilon_n \delta_n, \\
			\frac{\Psi(u - m - E_1(t,y,R) - E_2(t)) - \Psi(u - m)}{\Psi(u - m)} & \leq C((u - m)_{+} + 1) \varepsilon_n \delta_n,
		\end{aligned}
	\end{equation}
	and therefore
	\begin{equation}
		\label{pf:height-pivot-3}
		\Big|\wh{\S}_{\mu}(y,t,H_t(y) + R) - \bar{\S}_{\mu}(y)\Big| \leq C ((y - m)_{+} + (u - m)_{+} + 1) \varepsilon_n \delta_n \leq C(\Delta_n  + |\bar{u} - \mu_{t^*}| + 1)\varepsilon_n \delta_n.
	\end{equation}
	
	\subsection{Proof of Theorem~\ref{thm:pivot}: pivot for location}
	We work under the assumptions of Theorem~\ref{thm:pivot} and establish~\eqref{eqn:pivot-location}. Consider the \emph{studentized peak}
	\begin{equation*}
		\wh{Z} = Z_{\hat{t}}(\wh{Y},\wh{R}) = \wh{F}(\hat{t} - t^*), \quad \textrm{where} \quad\wh{F} = F_{\hat{t}}(\wh{Y},\wh{R}) =  \Lambda_{\hat{t}}^{-1/2} \wh{H} = \Lambda_{\hat{t}}^{-1/2} (H_{\hat{t}|\wh{Y}} + \wh{R})
	\end{equation*}
	The Wald pivot is simply $\wh{W} := \wh{W}_{t^*}(\hat{t}) = \wh{Z}'\wh{Z}$. We begin with a high-level outline of the derivation of~\eqref{eqn:pivot-location}. First we give a second-order accurate local expansion $\bar{p}(R|t,y)$ of the conditional density of  $\wh{R}$ given $\hat{t} = t,\wh{Y} = y$. Then, we use $\bar{p}(R|t,y)$ and Theorem~\ref{thm:approximate-joint-distribution} to derive an approximation to $p(t,R|y)$, the joint density of $(\wh{t},\wh{R})$ given $\wh{Y} = y$.  We use this to deduce first a local expansion of the density of $\wh{Z}$, and then a local expansion of the distribution of $\wh{W}$. Finally, this limiting distribution of $\wh{W}$ is a generalized chi-squared distribution, that can be approximated by a $\chi_d^2$ distribution -- with loss of second-order accuracy -- under the conditions of Theorem~\ref{thm:pivot}.
	
	\paragraph{Local expansion of density of Hessian.}
	As in the proof of Theorem~\ref{thm:approximate-joint-distribution}, we derive a local expansion of $p(R|t,y)$ by first relating this conditional density to the intensity function of a relevant counting process. Concretely, arguments similar to the proof of Proposition~\ref{prop:no-more-than-one-process-peak-per-signal-peak} can be used to show that at any $R \in \mc{R}_{t^*}, (t,y) \in \mc{S}_{t^*}$,
	\begin{equation*}
	\bigg|\frac{p(R,t,y) - \rho(R,t,y)/\E[N(\mc{S}_{t^*})]}{\rho(R,t,y)/\E[N(\mc{S}_{t^*})]}\bigg| \leq C\exp(-c\lambda_n^2).
	\end{equation*}
	where 
	\begin{equation}
		\rho(R,t,y) = \det(H_{t|y} + R) \cdot \1(H_{t|y} + R \succ 0) \cdot f_{R_t}(R) \cdot f_{Y_t}(y) \cdot f_{\nabla Y_t}(0),
	\end{equation}
	is the intensity function of a process that counts peaks of a given height, location, and residual Hessian; recall that we use $f_{R_t}(R)$ to denote the density of $R_t$, which does not depend on $(Y_t,\nabla Y_t)$ by construction.
	In combination with~\eqref{eqn:approximate-distribution-after-selection-1} this implies
	\begin{equation}
		\label{pf:approximate-palm-density-hessian-1}
		\bigg|\frac{p(R|t,y) - \rho(R,t,y)/\rho(t,y)}{\rho(R,t,y)/\rho(t,y)}\bigg| \leq C\exp(-c\lambda_n^2).
	\end{equation}
	Thus we have related the density of $\wh{R}|\hat{t},\wh{Y}$ under $\Q^{\mc{S}_{t^*}}$ to the Palm distribution of $\wh{R}$, which has density
	\begin{equation}
		\label{eqn:palm-density-hessian}
		\frac{\rho(R,t,y)}{\rho(t,y)} = \frac{\det(R + H_{t|y}) \cdot \1(R + H_{t|y} \succ 0)}{\E[\det(-\nabla^2 Y_t) \cdot \1(-\nabla^2 Y_t \prec 0)|\nabla Y_t = 0,Y_t = y]} \cdot f_{R_t}(R),
	\end{equation}
	where $f_{R_t}(R)$ is the Gaussian density
	\begin{equation}
		\label{eqn:gaussian-density-hessian}
		f_{R_t}(R) = \frac{1}{\sqrt{(2\pi)^{d(d + 1)/2}\det(\Theta_t)}} \exp\Big(-\frac{1}{2}\vech(R)'\Theta_t^{-1}\vech(R)\Big),
	\end{equation}
	and $\Theta_t = \Var[\vech(R_t)]$ is invertible as a result of the condition $\tau_{2}^2 > 0$. We now derive an approximation to $p(R|t,y)$ by local expansion of~\eqref{eqn:palm-density-hessian} about $t = t^*, R = 0$. We introduce the following notation for the first-order terms in this expansion: 
	\begin{align*}
		T_{10}^{R}(h) & := -\frac{1}{2}\tr(\Theta_{t^*}^{-1}\dot{\Theta}_{t^*}(h)) \\
		T_{12}^{R}(h,R) & := \frac{1}{2}\vech(R)'\Theta_{t^*}^{-1}\{\dot{\Theta}_{t^*}(h)\}\Theta_{t^*}^{-1}\vech(R)\\
		T_{01}^{R}(R) & := \tr(\bar{H}_{t^*}^{-1}R).
	\end{align*}
	\begin{lemma}
		\label{lem:approximate-palm-density-hessian}
		Under the assumptions of Theorem~\ref{thm:pivot}, at any $t^* \in T^*, h \in B_d(0,\varepsilon_n), y \in \mc{I}_{t^*}$ and $R \in \mc{R}_{t^*}$,
		\begin{equation*}
			\bigg|\frac{p(R|t^* + h,y) - \bar{p}(R|t^* + h,y)}{\bar{p}(R|t^* + h,y)}\bigg| \leq C\Big(\|h\|^2 + |y - \bar{u}_{t^*}|^2 \delta_n^2 + \|R\|^2\delta_n^2 + \|R\|^2 \|h\|^2 + \delta_n^2\Big) := \Err_{p}(R|h,y),
		\end{equation*}
		where
		\begin{equation}
			\label{eqn:approximate-palm-density-hessian}
			\bar{p}(R|t^* + h,y) := \Big(1 + T_{10}^{R}(h) + T_{12}^{R}(h,R) + T_{01}^{R}(R)\Big) \cdot \frac{1}{\sqrt{(2\pi)^{d(d + 1)/2} \cdot \det(\Theta_{t^*})}} \exp\Big(-\frac{1}{2}\vech(R)'\Theta_{t^*}^{-1}\vech(R)\Big).
		\end{equation}
	\end{lemma}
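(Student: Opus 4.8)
The proof closely parallels those of Lemma~\ref{lem:approximation-peak-intensity} and the ``density of gradient'' portion of Lemma~\ref{lem:asymptotic-expansion-density}, but is pointwise in $R$ and so requires no truncated integral. The plan is to first use~\eqref{pf:approximate-palm-density-hessian-1}, which already shows that the relative error between $p(R|t^*+h,y)$ and the Palm ratio $\rho(R,t^*+h,y)/\rho(t^*+h,y)$ in~\eqref{eqn:palm-density-hessian} is $O(\exp(-c\lambda_n^2))$, hence negligible. It then remains to expand the Palm ratio, which factors as a determinant ratio times the Gaussian density $f_{R_t}(R)$ from~\eqref{eqn:gaussian-density-hessian}, about $t=t^*,R=0$.

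For the determinant ratio I would invoke two local expansions already established. The numerator $\det(H_{t^*+h|y}+R)\cdot\1(H_{t^*+h|y}+R\succ0)$ is controlled by~\eqref{eqn:hessian-determinant-local-expansion}: for $(t,y)\in\mc{S}_{t^*}$ and $R\in\mc{R}_{t^*}$ the positive-definiteness indicator is automatically $1$ for $n$ large, so the numerator equals $\det(\bar{H}_{t^*})\big(1+T_{10}^{\det}(h)+T_{01}^{\det}(y)+\tr(\bar{H}_{t^*}^{-1}R)\big)$ up to relative error $C(\|h\|^2+|y-\bar{u}_{t^*}|^2\delta_n^2+\|R\|^2\delta_n^2)$. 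The denominator $\E[\det(-\nabla^2 Y_t)\1(\nabla^2 Y_t\prec0)\mid Y_t=y,\nabla Y_t=0]$ is controlled by Lemma~\ref{lem:approximation-peak-intensity}, which gives $\det(\bar{H}_{t^*})\big(1+T_{10}^{\det}(h)+T_{01}^{\det}(y)\big)$ up to relative error $\Err_{\det}(h,y)$; note the standalone $\delta_n^2$ in $\Err_{p}(R|h,y)$ originates from the ``$+1$'' in $\Err_{\det}$. Dividing, $\det(\bar{H}_{t^*})$ cancels; writing $a=T_{10}^{\det}(h)+T_{01}^{\det}(y)$, and using $|a|=\wt{O}(\delta_n)$ together with $|\tr(\bar{H}_{t^*}^{-1}R)|\le C\delta_n\|R\|$, the ratio $\big(1+a+\tr(\bar{H}_{t^*}^{-1}R)\big)/(1+a)$ equals $1+T_{01}^{R}(R)$ up to a product of first-order terms, which AM--GM bounds by $C(\delta_n^2+\|R\|^2\delta_n^2)$.

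For the Gaussian density $f_{R_t}(R)=\{(2\pi)^{d(d+1)/2}\det(\Theta_t)\}^{-1/2}\exp\big(-\tfrac12\vech(R)'\Theta_t^{-1}\vech(R)\big)$, I would Taylor expand $\det(\Theta_t)^{-1/2}$ and $\Theta_t^{-1}$ about $t=t^*$, exactly as $f_{\nabla Y_t}(0)$ was expanded in Lemma~\ref{lem:asymptotic-expansion-density}; here Assumption~\ref{asmp:covariance-holder} and~\eqref{eqn:non-degenerate} play the role that~\eqref{eqn:variance-control} played for $\Lambda_t$, since $\Theta_t$ is built from $K_{22},K_{20},K_{21}$ and $\Lambda_t^{-1}$, so $\Theta_t$ and $\dot{\Theta}_t$ are uniformly bounded, and $\tau_2^2>0$ makes $\Theta_t$ uniformly invertible. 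The first-order term from $\det(\Theta_t)^{-1/2}$ is $T_{10}^{R}(h)=-\tfrac12\tr(\Theta_{t^*}^{-1}\dot{\Theta}_{t^*}(h))$; expanding $\exp(-\tfrac12\vech(R)'\Theta_t^{-1}\vech(R))$ about $\Theta_{t^*}$ produces $T_{12}^{R}(h,R)=\tfrac12\vech(R)'\Theta_{t^*}^{-1}\{\dot{\Theta}_{t^*}(h)\}\Theta_{t^*}^{-1}\vech(R)$, with remainder (including the error from linearizing $\exp$) of order $O(\|h\|^2+\|R\|^2\|h\|^2)$, using $\|R\|\le\xi_n$ and $\|h\|\le\varepsilon_n$ to control the higher powers. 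Multiplying the determinant-ratio and density expansions and collecting terms recovers $\bar{p}(R|t^*+h,y)$ in~\eqref{eqn:approximate-palm-density-hessian}; all products of the first-order terms $T_{10}^{R}(h),T_{12}^{R}(h,R),T_{01}^{R}(R)$ --- each $\wt{O}(\delta_n)$ on $\mc{B}_{t^*}\times\mc{I}_{t^*}\times\mc{R}_{t^*}$ --- are second-order and absorbed into $\Err_{p}(R|h,y)$. The main obstacle is the bookkeeping in the density expansion: since $\|R\|$ may be as large as $\xi_n\asymp\sqrt{\log\lambda_n}$, one must track that the perturbation $\tfrac12\vech(R)'\big(\Theta_t^{-1}-\Theta_{t^*}^{-1}+\Theta_{t^*}^{-1}\dot{\Theta}_{t^*}(h)\Theta_{t^*}^{-1}\big)\vech(R)$, of size $O(\|R\|^2\|h\|^2)$, stays below $1/2$ for $n$ large, so that exponentiating and then linearizing $\exp$ is legitimate and contributes only the claimed error --- the same kind of argument as in the density computations of Section~\ref{sec:pfs-peak-intensity-local-expansion}, with $\Theta_t$ in place of $\Lambda_t$.
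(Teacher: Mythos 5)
Your proposal is correct and follows essentially the same route as the paper's proof: reduce to the Palm ratio via~\eqref{pf:approximate-palm-density-hessian-1}, approximate the determinant ratio using~\eqref{eqn:hessian-determinant-local-expansion} for the numerator and Lemma~\ref{lem:approximation-peak-intensity} for the denominator to extract $1 + T_{01}^R(R)$, and Taylor-expand $f_{R_t}(R)$ in $t$ to extract $T_{10}^R(h)$ and $T_{12}^R(h,R)$, then multiply and absorb products of first-order terms into the second-order error. The bookkeeping observations you flag (the indicator being automatically satisfied, the role of $\tau_2^2>0$ and Assumption~\ref{asmp:covariance-holder} in controlling $\Theta_t$, and the control of $\|R\|^2\|h\|^2$ when linearizing the exponential) are precisely the points the paper's proof handles.
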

	Lemma~\ref{lem:approximate-palm-density-hessian} is proved in Section~\ref{subsec:pf-approximate-palm-density-hessian}. 
	
	\paragraph{Joint density of location and residual Hessian.}
	Lemma~\ref{lem:approximate-palm-density-hessian} and Theorem~\ref{thm:approximate-joint-distribution} can be used to compute an approximation to the joint density of $(\wh{R},\hat{t})$ given $\wh{Y}$, the density of $\wh{R}$ given $\wh{Y}$, and the density of $\hat{t}$ given $(\wh{R},\wh{Y})$, which are respectively
	\begin{equation*}
		p(R,t|y) = p(R|t,y) \cdot p(t|y), \quad p(R|y) = \int_{\mc{B}_{t^*}} p(R,t|y) \,dt, \quad p(t|R,y) = \frac{p(R,t|y)}{p(R|y)}.
	\end{equation*}
	To more concisely denote the error involved in these expansions, let
	\begin{align*}
		\Err_{p}(R,y) & := C \big(|y - \bar{u}_{t^*}|^2 + |\bar{u}_{t^*} - \mu_{t^*}| + \|R\|^2\big)\delta_n^2\\
		\wb{\Trunc}_{\rho}(\mc{B}_{t^*}) & := C \delta_n^2
	\end{align*}
	\begin{lemma}
		\label{lem:approximate-joint-distribution-2}
		Under the assumptions of Theorem~\ref{thm:pivot}, the following statements hold at any $t^* \in T^*, h \in B_d(0,\varepsilon_n), y \in \mc{I}_{t^*}$ and $R \in \mc{R}_{t^*}$:
		\begin{itemize}
			\item The joint density of $(\wh{R},\wh{t})$ given $\wh{Y}$ satisfies
			\begin{equation*}
				\bigg|\frac{p(R,t^* + h|y) - \bar{p}(R|t^* + h,y)}{\bar{p}(R|t^* + h,y)}\bigg| \leq C\Big(\Err_{p}(R|h,y) + \Err_{p}(h|y)\Big) := \Err_{p}(R,h|y),
			\end{equation*}
			where
			\begin{equation}
				\label{eqn:approximate-joint-density-hessian-location}
				\begin{aligned}
					\bar{p}(R,t^* + h|y) := & \Big(1 + T_{10}^{R}(h) + T_{12}^{R}(h,R) + T_{01}^{R}(R) + T_{10}^{\rho}(h) - \frac{1}{2}T_{30}^{\rho}(h)\Big) \\ 
					& \times \frac{1}{\sqrt{(2\pi)^{d(d + 1)/2}\det(\Theta_{t^*})}} \exp\Big(-\frac{1}{2}\vech(R)'\Theta_{t^*}^{-1}\vech(R)\Big) \cdot \frac{\sqrt{\det(G_{t^*|y})}}{\sqrt{(2\pi)^d}} \cdot \exp\Big(-\frac{1}{2}h'G_{t^*|y} h\Big).
				\end{aligned}
			\end{equation}
			\item The conditional density of $\wh{R}$ given $\wh{Y}$ satisfies
			\begin{equation*}
				\frac{|\bar{p}(R|y) - p(R|y)|}{\bar{p}(R|y)} \leq C\Big(\Err_{p}(R,y) + \wb{\Trunc}_{\rho}(\mc{B}_{t^*})\Big) := \Err_{p}(R|y),
			\end{equation*}
			where 
			\begin{equation*}
				\bar{p}(R|y) = \Big(1 + T_{01}^{R}(R)\Big) \cdot \frac{1}{\sqrt{(2\pi)^{d(d + 1)/2} \cdot \det(\Theta_{t^*})}} \exp\Big(-\frac{1}{2}\vech(R)'\Theta_{t^*}^{-1}\vech(R)\Big).
			\end{equation*}
			\item The conditional density of $\hat{t}$ given $\wh{R},\wh{Y}$ satisfies
			\begin{equation*}
				\frac{|\bar{p}(t^* + h|R,y) - p(t^* + h|R,y)|}{\bar{p}(t^* + h|R,y)} \leq C\Big(\Err_{p}(R,h|y) + \Err_{p}(h|y)\Big) := \Err_{p}(h|R,y),
			\end{equation*}
			where
			\begin{equation*}
				\bar{p}(t^* + h|R,y) := \Big(1 + T_{10}^{R}(h) + T_{12}^{R}(h,R) + T_{10}^{\rho}(h) - \frac{1}{2}T_{30}^{\rho}(h)\Big) \cdot  \frac{\sqrt{\det(G_{t^*|y})}}{\sqrt{(2\pi)^d}} \cdot \exp\Big(-\frac{1}{2}h'G_{t^*|y} h\Big).
			\end{equation*}
		\end{itemize}
	\end{lemma}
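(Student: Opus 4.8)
The plan is to obtain all three parts of Lemma~\ref{lem:approximate-joint-distribution-2} by combining the local expansion of the Palm density of $\wh{R}$ given $(\hat{t},\wh{Y})$ supplied by Lemma~\ref{lem:approximate-palm-density-hessian} with the local expansion of $p(t|y)$ supplied by Theorem~\ref{thm:approximate-joint-distribution}, in exact parallel to the way Theorem~\ref{thm:approximate-joint-distribution} was itself assembled from the joint intensity $\bar{\rho}(t,y)$ and the normalizing constant $\bar{\E}[N(\mc{S}_{t^*})]$. The organizing observation, which makes all the algebra go through, is that uniformly over $h\in B_d(0,\varepsilon_n)$, $y\in\mc{I}_{t^*}$ and $R\in\mc{R}_{t^*}$ (so that $\|h\|\le\varepsilon_n$, $|y-\bar{u}_{t^*}|\le\Delta_n$ and $\|R\|\le\xi_n$), each of the first-order terms $T_{10}^R(h)$, $T_{12}^R(h,R)$, $T_{01}^R(R)$, $T_{10}^{\rho}(h)$, $T_{30}^{\rho}(h)$ is $\wt{O}(\varepsilon_n)$ or $\wt{O}(\delta_n)$ in magnitude; hence products of any two of them, and the remainder in expanding a reciprocal $1/(1+(\text{first-order}))$, are $\wt{O}(\delta_n^2)$ and get absorbed into the stated error bounds.

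For the joint density I would write $p(R,t^*+h|y)=p(R|t^*+h,y)\,p(t^*+h|y)$ and multiply the two input expansions: the Gaussian factors combine to the $R$-Gaussian and the $h$-Gaussian appearing in~\eqref{eqn:approximate-joint-density-hessian-location}, and the product $\bigl(1+T_{10}^R(h)+T_{12}^R(h,R)+T_{01}^R(R)\bigr)\bigl(1+T_{10}^{\rho}(h)-\tfrac12 T_{30}^{\rho}(h)\bigr)$ equals $1$ plus the sum of the six first-order terms plus a second-order cross term, which reproduces $\bar{p}(R,t^*+h|y)$ up to relative error $\wt{O}(\delta_n^2)$; propagating the relative errors $\Err_p(R|h,y)$ and $\Err_p(h|y)$ of the inputs (the latter being~\eqref{eqn:density-location-error}) then gives $\Err_p(R,h|y)$. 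The third part is symmetric: $p(t^*+h|R,y)=p(R,t^*+h|y)/p(R|y)$, and dividing the Part~1 and Part~2 expansions cancels the $R$-Gaussian, the normalizing constant, and $1+T_{01}^R(R)$ up to a second-order correction from expanding its reciprocal, leaving $\bigl(1+T_{10}^R(h)+T_{12}^R(h,R)+T_{10}^{\rho}(h)-\tfrac12 T_{30}^{\rho}(h)\bigr)$ times the $h$-Gaussian with precision $G_{t^*|y}$, i.e. $\bar{p}(t^*+h|R,y)$, with relative errors adding.

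For the middle part, $p(R|y)=\int_{\mc{B}_{t^*}} p(R,t^*+h|y)\,dh$, I would integrate the Part~1 expansion over $h$. The key point is that against the symmetric Gaussian with precision $G_{t^*|y}$, each of $T_{10}^R(h)$, $T_{12}^R(h,R)$, $T_{10}^{\rho}(h)$ and $T_{30}^{\rho}(h)$ integrates to zero, being linear or cubic (hence odd) in $h$ --- here one uses that $\dot{\Theta}_{t^*}(h)$ is linear in $h$ --- whereas $T_{01}^R(R)$ is $h$-free and survives, producing exactly $\bar{p}(R|y)$. Two errors remain: replacing $p$ by $\bar{p}$, whose contribution is controlled by integrating the polynomial-in-$h$ terms of $\Err_p(R,h|y)$ against $N_d(0,G_{t^*|y}^{-1})$, using $\lambda_{\min}(G_{t^*|y})\asymp\lambda_n^2$ so that $\int\|h\|^{2k}(\cdot)\asymp\delta_n^{2k}$ --- this yields $\Err_p(R,y)$; and extending the range of integration from $B_d(0,\varepsilon_n)$ to $\Rd$, whose Gaussian tail is $O(\delta_n^2)=\wb{\Trunc}_{\rho}(\mc{B}_{t^*})$ by the same concentration estimate used in the proofs of Proposition~\ref{prop:expectation-counting-process} and Lemma~\ref{lem:height-intensity} (the constant $6$ in the definition of $\varepsilon_n$ was chosen precisely so this tail is $O(\delta_n^2)$). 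Summing gives the claimed relative error for $p(R|y)$.

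The main obstacle is not conceptual --- the substantive expansions, Lemma~\ref{lem:approximate-palm-density-hessian} and Theorem~\ref{thm:approximate-joint-distribution}, are assumed here --- but rather the error bookkeeping: one must check, using the scalings $\|h\|\le\varepsilon_n$, $\|R\|\le\xi_n=\wt{O}(1)$, $|y-\bar{u}_{t^*}|\le\Delta_n$ and $|\bar{u}_{t^*}-\mu_{t^*}|\le C\lambda_n$ (Assumption~\ref{asmp:hessian-curvatures}), that every cross term and reciprocal-expansion remainder is genuinely $\wt{O}(\delta_n^2)$, and in the middle part one must carefully execute the Gaussian moment integrals and the truncation-tail estimate. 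These steps are routine but fiddly, which is why I would present only the combinatorial structure above and point to the analogous explicit bookkeeping in the proof of Theorem~\ref{thm:approximate-joint-intensity} rather than reproduce it.
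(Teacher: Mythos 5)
Your proposal is correct and follows essentially the same route as the paper: Part 1 is obtained by multiplying the expansions of Lemma~\ref{lem:approximate-palm-density-hessian} and of $\bar p(t|y)$ from Theorem~\ref{thm:approximate-joint-distribution} and absorbing the second-order cross terms; Part 2 integrates the Part 1 expansion in $h$, using that $T_{10}^{R},T_{12}^{R},T_{10}^{\rho},T_{30}^{\rho}$ are odd in $h$ (the paper says ``linear in $h$'' but, as you correctly note, $T_{30}^{\rho}$ is cubic --- the operative property is oddness) and handling local-expansion and truncation errors separately; Part 3 divides Part 1 by Part 2 and absorbs the reciprocal-expansion remainder. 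The only cosmetic difference is that the paper first subtracts off the odd terms (introducing the intermediate $\tilde p(R,t^*+h|y)$) before extending the integral from $\mc{B}_{t^*}$ to $\Rd$, which makes the truncation tail bound a pure Gaussian-tail estimate with no polynomial prefactor, whereas your ordering requires a Cauchy--Schwarz step (as in the proof of Lemma~\ref{lem:height-intensity}) to control the odd terms on $\mc{B}_{t^*}^c$; both yield $O(\delta_n^2)$.
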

	Lemma~\ref{lem:approximate-joint-distribution-2} is proved in Section~\ref{subsec:pf-approximate-joint-distribution-2}. 
	
	\paragraph{Density of studentized peak.}
	Recall the studentized peak $\wh{Z} = Z_{\hat{t}}(\wh{R},\wh{Y})$. Lemma~\ref{lem:approximate-joint-distribution-2}, combined with a change of variables, yields an approximation to the joint density of $(\wh{Z},\wh{R})$ given $\wh{Y}$, the conditional density of $\wh{Z}$ given $(\wh{R},\wh{Y})$, and the conditional density of $\wh{Z}$ given $\wh{Y}$ and $\wh{R} \in \mc{R}_{t^*}$. We introduce some notation to compactly denote these approximations. We write $F_{t^*} = F_{t^*}(y,0)$ and $\dot{F}_{t^*}$ for $\nabla_t F_t(y,0)|_{t = t^*}$. We let $\tilde{L}_t$ be the differential of $\Lambda_t^{-1/2}$, meaning any array in $\R^{d \times d \times d}$ that satisfies $\{L_t(h)\}'\Lambda_{t}^{-1/2} + \Lambda_t^{-1/2}\{L_t(h)\} = -\Lambda_t^{-1}\{\dot{\Lambda}_t(h)\}\Lambda_t^{-1}$ for all $h \in \Rd$. We write $L_t \in \R^{d \times d \times d}$ for the array that swaps the second and third indices of $\tilde{L}_t$, so that $[L_{t}]_{ijk} = [\tilde{L}_{t}]_{ikj}$, and write $\bar{L}_t = L_t + \tilde{L}_t$. Finally, we denote the first-order terms in these expansions by
	\begin{equation*}
		\begin{aligned}
			T_{01}^{p_{Z|R}}(z)
			& = T_{10}^R(F_{t^*}^{-1}z) + T_{10}^{\rho}(F_{t^*}^{-1}z) - 2 \cdot \tr(F_{t^*}^{-1}\Lambda_{t^*}^{-1/2}\{\dot{H}_{t^*}(F_{t^*}^{-1}z)\}) - \tr(F_{t^*}^{-1}\{\bar{L}_{t^*}(H_{t^*}F_{t^*}^{-1}z)\})  \\
			& = -\frac{1}{2}\tr\big(\Theta_{t^*}^{-1}\dot{\Theta}_{t^*}(F_{t^*}^{-1}z)\big) + \tr\big(\bar{H}_{t^*}^{-1} \dot{H}_{t^*|\bar{u}_{t^*}}(F_{t^*}^{-1}z)\big) + \frac{1}{2}\tr\big(\Lambda_{t^*}^{-1}\dot{\Lambda}_{t^*}(F_{t^*}^{-1}z)\big) \\
			& \quad \; - 2 \cdot \tr\big(F_{t^*}^{-1}\dot{F}_{t^*}(F_{t^*}^{-1}z)\big) - \tr(F_{t^*}^{-1}\{\bar{L}_{t^*}(H_{t^*}F_{t^*}^{-1}z)\}) \\
			T_{30}^{p_{Z|R}}(z) & = -\frac{1}{2}T_{30}^{\rho}(F_{t^*}^{-1}z) - z' F_{t^*}^{-1}G_{t^*|y}F_{t^*}^{-1}\dot{F}_{t^*}(F_{t^*}^{-1}z, F_{t^*}^{-1}z) \\
			T_{21}^{p_{Z|R}}(z,R) & = z'(F_{t^*}^{-1})'G_{t^*|y}F_{t^*}^{-1}\Lambda_{t^*}^{-1/2}RF_{t^*}^{-1}z \\
			T_{12}^{p_{Z|R}}(z,R) & = T_{12}^R(F_{t^*}^{-1}z,R) =  \frac{1}{2}\vech(R)'\Theta_{t^*}^{-1}\{\dot{\Theta}_{t^*}(F_{t^*}^{-1}z)\}\Theta_{t^*}^{-1}\vech(R), \\
			T_{01}^{p_{Z|R}}(R) & = -\tr(F_{t^*}^{-1}\Lambda_{t^*}^{-1/2}R),
		\end{aligned}
	\end{equation*}
	and for the relative error incurred by the expansion
	\begin{equation*}
		\begin{aligned}
			\Err_{p}(z, \mc{R}_{t^*}|y) & := C\delta_n^2\big(1 + \|z\|^6\big)\big(1 + |y - \bar{u}_{t^*}|^2 + |\bar{u}_{t^*} - \mu_{t^*}|\big) \\
			\Trunc_{p}(\mc{R}_{t^*}) & :=C \delta_n^2 \\
			\wb{\Trunc}_{p}(\mc{R}_{t^*}) & := C \delta_n^{2.5}.
		\end{aligned}
	\end{equation*}
	Finally, let $\mc{Z}_{t^*}(y) = \{Z_{t^* + h}(0,y): 2h \in B_d(0,\varepsilon_n)\}$.
	\begin{lemma}
		\label{lem:approximate-density-studentized-peak}
		Under the assumptions of Theorem~\ref{thm:pivot}, the following statements hold at any $y \in \mc{I}_{t^*}, z \in \mc{Z}_{t^*}(y)$ and $R \in \mc{R}_{t^*}$: 
		\begin{itemize}
			\item The conditional density of $\wh{Z}$ given $\wh{R} = R,\wh{Y} = y$ satisfies
			\begin{equation}
				\label{eqn:density-studentized-peak-given-hessian}
				\Big|\frac{p(z|R,y) - \bar{p}(z|R,y)}{\bar{p}(z|R,y)}\Big| \leq C\Big(\Err_{p}(F_{t^*}^{-1}z|R,y) + \delta_n^2(1 + \|z\|^6)(1 + \|R\|^2)\Big) := \Err_{p}(z|R,y),
			\end{equation}
			where 
			\begin{equation*}
				\begin{aligned}
					\bar{p}(z|R,y) & := \big(1 + T_{01}^{p_{Z|R}}(z) + T_{30}^{p_{Z|R}}(z)+ T_{21}^{p_{Z|R}}(z,R) + T_{12}^{p_{Z|R}}(z,R) + T_{01}^{p_{Z|R}}(R)\big) \\
					& \times \frac{\sqrt{\det(G_{t^*|y})}}{\sqrt{(2\pi)^d} \det(F)} \exp\Big(-\frac{1}{2}(F_{t^*}^{-1}z)'G_{t^*|y}F_{t^*}^{-1}z\Big).
				\end{aligned}
			\end{equation*}
			\item The joint density of $\wh{Z}$ and $\wh{R}$ given $\wh{Y} = y$ satisfies
			\begin{equation*}
				\Big|\frac{p(z,R|y) - \bar{p}(z,R|y)}{\bar{p}(z,R|y)}\Big| \leq C\Big(\Err_{p}(z|R,y) + \Err_{p}(R|y)\Big) := \Err_{p}(z,R|y)
			\end{equation*}
			where
			\begin{equation}
				\label{eqn:density-studentized-peak-and-hessian}
				\begin{aligned}
					\bar{p}(z,R|y) 
					& : = \big(1 + T_{01}^{p_{Z|R}}(z) + T_{30}^{p_{Z|R}}(z)+ T_{21}^{p_{Z|R}}(z,R) + T_{12}^{p_{Z|R}}(z,R) + T_{01}^{p_{Z|R}}(R) + T_{01}^{R}(R)\big) \\
					& \times \frac{\sqrt{\det(G_{t^*|y})}}{\sqrt{(2\pi)^d} \det(F_{t^*})} \exp\Big(-\frac{1}{2}(F_{t^*}^{-1}z)'G_{t^*|y}F_{t^*}^{-1}z\Big) \cdot \frac{1}{\sqrt{(2\pi)^{d(d + 1)/2} \cdot \det(\Theta_{t^*})}} \exp\Big(-\frac{1}{2}\vech(R)'\Theta_{t^*}^{-1}\vech(R)\Big).
				\end{aligned}
			\end{equation}
			\item The conditional density of $\wh{Z}$ given $\wh{Y} = y$ and $\wh{R} \in \mc{R}_{t^*}$ satisfies
			\begin{equation}
				\label{eqn:density-studentized-peak-2}
				\Big|\frac{p(z|y) - \bar{p}(z|y)}{\bar{p}(z|y)}\Big| \leq C\Big(\Err_{p}(z,\mc{R}_{t^*}|y) + \Trunc_{p}(\mc{R}_{t^*}) + \wb{\Trunc}_{p}(\mc{R}_{t^*})\Big) := \Err_{p}(z|y),
			\end{equation}
			where
			\begin{equation*}
				\bar{p}(z|y) = \Big(1 + T_{01}^{p_{Z|R}}(z) + T_{30}^{p_{Z|R}}(z) + \frac{1}{2}\tr\big(\Theta_{t^*}^{-1}\{\dot{\Theta}_{t^*}(F_{t^*}^{-1}z)\}\big)\Big) \cdot \frac{\sqrt{\det(G_{t^*|y})}}{\sqrt{(2\pi)^d} \det(F_{t^*})} \exp\Big(-\frac{1}{2}(F_{t^*}^{-1}z)'G_{t^*|y}F_{t^*}^{-1}z\Big).
			\end{equation*}
		\end{itemize}
	\end{lemma}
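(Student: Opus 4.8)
The plan is to obtain the three displayed approximations in sequence. Part~1 (the conditional density of $\wh Z$ given $\wh R=R,\wh Y=y$) comes from a change of variables applied to the expansion $\bar p(t^*+h\mid R,y)$ of Lemma~\ref{lem:approximate-joint-distribution-2}; Part~2 (the joint density of $\wh Z$ and $\wh R$ given $\wh Y=y$) comes from multiplying by the density $\bar p(R\mid y)$ of the same lemma; and Part~3 (the conditional density of $\wh Z$ given $\wh Y=y$ and $\wh R\in\mc R_{t^*}$) comes from integrating Part~2 over $R\in\mc R_{t^*}$ and renormalizing.

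\textbf{Part~1.} The relevant change of variables is $h\mapsto z := F_{t^*+h}(y,R)\,h$, holding $R,y$ fixed. First I would check that for $n$ large this map restricts to a diffeomorphism of $B_d(0,\varepsilon_n)$ onto a set containing $\mc Z_{t^*}(y)$: by the Taylor estimates of Sections~\ref{subsec:asymptotics-covariance}--\ref{subsec:hessian-asymptotics} one has $F_{t^*+h}(y,R)=F_{t^*}+O(\varepsilon_n+\Delta_n\delta_n+\xi_n\delta_n)$ uniformly over $h\in B_d(0,\varepsilon_n),R\in\mc R_{t^*},y\in\mc I_{t^*}$, so the map is a small perturbation of the invertible linear map $h\mapsto F_{t^*}h$, and the factor $2$ in the definition of $\mc Z_{t^*}(y)$ leaves room for the perturbation caused by $R\ne 0$. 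Next I would Taylor expand $F_{t^*+h}(y,R)=F_{t^*}+\dot F_{t^*}(h)+\Lambda_{t^*}^{-1/2}R+O(\|h\|^2+\|h\|\,\|R\|)$, invert to obtain $h=F_{t^*}^{-1}z - F_{t^*}^{-1}\dot F_{t^*}(F_{t^*}^{-1}z)F_{t^*}^{-1}z - F_{t^*}^{-1}\Lambda_{t^*}^{-1/2}R\,F_{t^*}^{-1}z+\cdots$, and compute the Jacobian $\det\nabla_h\bigl(F_{t^*+h}(y,R)h\bigr)=\det\bigl(F_{t^*+h}(y,R)+(\nabla_hF_{t^*+h}(y,R))(h)\bigr)$, whose first-order expansion contributes $-\tr(F_{t^*}^{-1}\Lambda_{t^*}^{-1/2}R)=T_{01}^{p_{Z\mid R}}(R)$ from the $R$-part and, after substituting $h=F_{t^*}^{-1}z$, a $-2\tr(F_{t^*}^{-1}\dot F_{t^*}(F_{t^*}^{-1}z))$ contribution to $T_{01}^{p_{Z\mid R}}(z)$ --- the factor $2$ arising because both the explicit $F_{t^*+h}$ and the $(\nabla_hF)(h)$ term of the Jacobian are first order in $h$. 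Substituting the inverse expansion of $h$ into the Gaussian exponent $h'G_{t^*\mid y}h$ of $\bar p(t^*+h\mid R,y)$ produces the cubic term $T_{30}^{p_{Z\mid R}}(z)$ and the bilinear term $T_{21}^{p_{Z\mid R}}(z,R)$, while the multiplicative factor $1+T_{10}^{R}(h)+T_{12}^{R}(h,R)+T_{10}^{\rho}(h)-\tfrac12 T_{30}^{\rho}(h)$ becomes, after $h=F_{t^*}^{-1}z$, the remaining pieces of $T_{01}^{p_{Z\mid R}}(z)$ and $T_{12}^{p_{Z\mid R}}(z,R)$. The tensor identities relating $\dot F_{t^*}$ to $\Lambda_{t^*}^{-1/2}\dot H_{t^*}$ and $\tilde L_{t^*}H_{t^*}$, together with the index-swap definition $\bar L_{t^*}=L_{t^*}+\tilde L_{t^*}$, are what let one rewrite the accumulated first-order contribution as the two displayed pieces $-2\tr(F_{t^*}^{-1}\Lambda_{t^*}^{-1/2}\dot H_{t^*}(F_{t^*}^{-1}z))-\tr(F_{t^*}^{-1}\{\bar L_{t^*}(H_{t^*}F_{t^*}^{-1}z)\})$. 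All remainders are bounded using $\|F_{t^*}^{-1}z\|\le C\varepsilon_n$, $\|R\|\le\xi_n$, $|y-\bar u_{t^*}|\le\Delta_n$ and the earlier bounds on derivatives of $\mu,\Lambda,\Gamma,\Theta$; in particular replacing $\dot H_{t^*\mid y}$ by $\dot H_{t^*\mid\bar u_{t^*}}$ in the first-order terms costs only $\tilde O(\delta_n^2)$, and the $\|R\|^2\|h\|^2$, $\|R\|^2\delta_n^2$ errors inherited through $\Err_p(h\mid R,y)$ of Lemma~\ref{lem:approximate-joint-distribution-2} combine with the change-of-variables remainders into $\Err_p(z\mid R,y)$.

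\textbf{Parts~2 and~3.} For Part~2 I would multiply $\bar p(z\mid R,y)$ by $\bar p(R\mid y)$ from Lemma~\ref{lem:approximate-joint-distribution-2}; since both factors have the shape (Gaussian)$\times(1+\text{first order})$, their product's first-order terms add, giving the displayed $\bar p(z,R\mid y)$ with the extra term $T_{01}^{R}(R)=\tr(\bar H_{t^*}^{-1}R)$, the relative error being $\Err_p(z\mid R,y)+\Err_p(R\mid y)$ after absorbing products of first-order terms. For Part~3 I would integrate $\bar p(z,R\mid y)$ over $R\in\mc R_{t^*}$ and renormalize. The terms linear in $R$ --- namely $T_{21}^{p_{Z\mid R}}(z,R)$, $T_{01}^{p_{Z\mid R}}(R)$ and $T_{01}^{R}(R)$ --- integrate to zero by symmetry of $\mc R_{t^*}$ about the origin; the term $T_{12}^{p_{Z\mid R}}(z,R)$, quadratic in $R$, integrates against the $N(0,\Theta_{t^*})$ density to $\tfrac12\tr(\Theta_{t^*}^{-1}\dot\Theta_{t^*}(F_{t^*}^{-1}z))$, which is exactly the extra term in $\bar p(z\mid y)$. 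The discrepancy between integrating over $\mc R_{t^*}$ and over all of $\R^{d\times d}_{\mathrm{sym}}$, in both the numerator and the normalizing constant, is controlled by the moderate-deviation bound~\eqref{eqn:hessian-residual-tail-behavior} (and by~\eqref{eqn:hessian-residual-truncation-error}, which gives $\Q^{\mc S_{t^*}}(\wh R\in\mc R_{t^*}^c\mid\hat t=t,\wh Y=y)=O(\delta_n^2)$), producing the truncation errors $\Trunc_p(\mc R_{t^*})$ and $\wb\Trunc_p(\mc R_{t^*})$ exactly as in the proof of Lemma~\ref{lem:approximation-peak-intensity}; integrating $\Err_p(z\mid R,y)+\Err_p(R\mid y)$ against the $R$-Gaussian (which turns the $\|R\|^2$-factors into constants) yields $\Err_p(z,\mc R_{t^*}\mid y)$, and together these give the claimed $\Err_p(z\mid y)$.

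\textbf{Main obstacle.} The hard part will be Part~1: the change of variables $h\mapsto F_{t^*+h}(y,R)\,h$ is genuinely nonlinear with a point-dependent ``slope'' matrix $F_{t^*+h}(y,R)$, so one must expand the Jacobian to first order (tracking the doubling of the $\dot F_{t^*}$ contribution), the inverse map to second order, and then match every resulting first-order term --- in particular those passing through $\dot F_{t^*}$, the differential $\bar L_{t^*}$ of $\Lambda_t^{-1/2}$, and the interaction of the cubic correction to $h$ with the quadratic form $h'G_{t^*\mid y}h$ --- against the displayed $T^{p_{Z\mid R}}$ terms, all while keeping the $z$- and $R$-polynomial weights on the remainder small enough for second-order accuracy. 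Parts~2 and~3 are then essentially bookkeeping of the kind already carried out in the proofs of Theorem~\ref{thm:approximate-joint-distribution} and Lemma~\ref{lem:approximate-joint-distribution-2}.
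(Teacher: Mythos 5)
Your proposal follows the same route as the paper's proof: the change of variables $h\mapsto Z_{t^*+h}(R)=F_{t^*+h}(y,R)h$, Taylor inversion to get $\bar h_z(R)$, expansion of the Jacobian $\det\dot Z$ to first order, matching the resulting terms against the $T^{p_{Z\mid R}}$ pieces, then multiplying by $\bar p(R\mid y)$ for Part~2 and integrating over $\mc R_{t^*}$ with truncation/normalization bookkeeping for Part~3. The one point you describe slightly imprecisely is the Jacobian's first-order contribution: it is not literally $-2\tr(F_{t^*}^{-1}\dot F_{t^*}(F_{t^*}^{-1}z))$, because the $(\nabla_hF)(h)$ part contributes the \emph{index-swapped} array $L_{t^*}(H_{t^*}h)$ rather than a second copy of $\tilde L_{t^*}(H_{t^*}h)$ — the two together give $\bar L_{t^*}(H_{t^*}h) + 2\Lambda_{t^*}^{-1/2}\dot H_{t^*}(h)$, which is exactly the two displayed pieces and agrees with your final target; so you have the right answer, but the intermediate ``factor 2 on $\dot F$'' characterization should be dropped in favor of the $\bar L_{t^*}=L_{t^*}+\tilde L_{t^*}$ decomposition you mention immediately after.
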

	Lemma~\ref{lem:approximate-density-studentized-peak} is proved in Section~\ref{subsec:pf-approximate-density-studentized-peak}. To be perfectly clear, the density $p(z|y)$ conditions both on $N(\mc{S}_{t^*}) = 1$ and on $\wh{R} \in \mc{R}_{t^*}$, i.e.:
	\begin{equation*}
		p(z|y) := \frac{1}{\Q^{\mc{S}_{t^*}}(\wh{R} \in \mc{R}_{t^*})}\int_{\mc{R}_{t^*}} p(z,R|y) \,dR.
	\end{equation*}
	
	\paragraph{Distribution of studentized Wald pivot.} We now turn to the conditional distribution of the studentized Wald pivot $\wh{W}$, given  the selection event $\mc{S}_{t^*}$ and conditional on the height $\wh{Y}$. To additionally condition on the high-probability event $\wh{R} \in \mc{R}_{t^*}$, we sandwich 
	\begin{equation*}
		\Q^{\mc{S}_{t^*}}(\wh{W} \leq q, \wh{R} \in \mc{R}_{t^*}|\wh{Y} =  y) \leq \Q^{\mc{S}_{t^*}}(\wh{W} \leq q|\wh{Y} = y) \leq \Q^{\mc{S}_{t^*}}(\wh{W} \leq q, \wh{R} \in \mc{R}_{t^*}|\wh{Y} =  y) + \Q^{\mc{S}_{t^*}}(\wh{R} \notin \mc{R}_{t^*}|\wh{Y} = y).
	\end{equation*}
	We note that $\Q^{\mc{S}_{t^*}}(\wh{R} \notin \mc{R}_{t^*}|\wh{Y} = y) \leq C \delta_n^2$ and 
	$$
	\Big|\Q^{\mc{S}_{t^*}}(\wh{W} \leq q, \wh{R} \in \mc{R}_{t^*}|\wh{Y} =  y) - \Q^{\mc{S}_{t^*}}(\wh{W} \leq q|\wh{R} \in \mc{R}_{t^*},\wh{Y} = y)\Big| \leq C \delta_n^2,
	$$
	both by~\eqref{eqn:hessian-residual-truncation-error}. Since $\Q^{\mc{S}_{t^*}}(\wh{W} \leq q|\wh{R} \in \mc{R}_{t^*},\wh{Y} = y) = \int_{B_d(0,q)} p(z|y) \,dz$, we have succeeded in showing that 
	\begin{equation*}
		\bigg|\Q^{\mc{S}_{t^*}}(\wh{W} \leq q|\wh{Y} = y) - \int_{B_d(0,q)}  p(z|y) \,dz\bigg| \leq C \delta_n^2.
	\end{equation*}
	Notice that $B_d(0,q) \subseteq \mc{Z}_{t^*}$ for all $n \in \mathbb{N}$ sufficiently large. We can therefore apply Lemma~\ref{lem:approximate-density-studentized-peak} to conclude that
	\begin{equation}
		\label{pf:studentized-wald-1}
		\Big|\int_{B_d(0,q)} p(z|y) \,dz - \int_{B_d(0,q)} \bar{p}(z|y) \,dz\Big| \leq \int_{B_d(0,q)} \Err_{p}(z|y) \bar{p}(z|y) \,dz.
	\end{equation}
	Since each of the first-order terms in the local expansion $\bar{p}(z|y)$ is an odd function of $z$, they vanish when integrated over $B_d(0,q)$, leaving us with
	\begin{equation}
		\label{pf:studentized-wald-2}
		\int_{B_d(0,q)} \bar{p}(z|y) \,dz = \int_{B_d(0,q)} \frac{\sqrt{\det(G_{t^*|y})}}{\sqrt{(2\pi)^d} \det(F_{t^*})} \exp\Big(-\frac{1}{2}(F_{t^*}^{-1}z)'G_{t^*|y}F_{t^*}^{-1}z\Big) \,dz.
	\end{equation}
	In other words $\wh{W}$ has a limiting generalized chi-squared distribution, up to second-order relative error
	\begin{equation}
		\label{eqn:pivot-location-error}
		\begin{aligned}
			C\delta_n^2 + \int_{B_d(0,q)} \Err_{p}(z|y) \bar{p}(z|y) \,dz 
			& \leq C \Big(\delta_n^2(1 + |y - \bar{u}_{t^*}|^2 + |\bar{u}_{t^*} - \mu_{t^*}|)\Big).
		\end{aligned}
	\end{equation}
	
	\paragraph{Derivation of~\eqref{eqn:pivot-location}.}
	Equation~\eqref{pf:studentized-wald-2} gives a second-order accurate approximation to $\Q^{\mc{S}_{t^*}}(\wh{W} \leq q|\wh{Y} = y)$. We now expand this approximation in $y$, around $y = \mu_{t^*}$, to obtain the ultimate claim~\eqref{eqn:pivot-location}, valid when $(\bar{u}_{t^*} - \mu_{t^*})\delta_n \to 0, y \in \bar{u}_{t^*} \pm \Delta_n$ and therefore $(y - \mu_{t^*})\delta_n \to 0$. We begin by simplifying the integrand of~\eqref{pf:studentized-wald-2}. We have that the precision matrix 
	\begin{equation*}
		F_{t^*}^{-1}G_{t^*|y}F_{t^*}^{-1} = \Lambda_{t^*}^{-1/2}(-\nabla^2 \mu_{t^*})H_{t^*|y}^{-1}\Lambda_{t^*}^{1/2} = I - (y - \mu_{t^*})\Lambda_{t^*}^{1/2}H_{t^*|y}^{-1}\Lambda_{t^*}^{1/2}.
	\end{equation*}
	Applying the matrix perturbation bound~\eqref{eqn:matrix-inverse-taylor-expansion} allows us to bound $\|H_{t^*|y}^{-1} - \bar{H}_{t^*}\| \leq C|y - \bar{u}_{t^*}|\delta_n^2$, and in turn implies
	\begin{equation*}
		\|F_{t^*}^{-1}G_{t^*|y}F_{t^*}^{-1} - \Big(I - (y - \mu_{t^*})\Lambda_{t^*}^{1/2}\bar{H}_{t^*}^{-1}\Lambda_{t^*}^{1/2}\Big)\| \leq C\Big( |y - \bar{u}_{t^*}| |y - \mu_{t^*}| \delta_n^2\Big).
	\end{equation*}
	Then  a subsequent Taylor expansion of $\exp(x)$ around $x = 0$ -- noting that the magnitude of the first-order term in the previous expression is $O((y - \mu_{t^*})\delta_n )$ -- implies
	\begin{equation*}
		\Big\|\exp\Big(-\frac{1}{2}(F_{t^*}^{-1}z)'G_{t^*|y}F_{t^*}^{-1}z\Big)  - \Big(1 + \frac{(y - \mu_{t^*})}{2}z'\Lambda_{t^*}^{1/2}\bar{H}_{t^*}^{-1}\Lambda_{t^*}^{1/2}z\Big)\exp\Big(-\frac{1}{2}z'z\Big) \Big\| \leq C\Big( (y - \mu_{t^*})^2\delta_n^2 + |y - \bar{u}_{t^*}| |y - \mu_{t^*}|\delta_n^2\Big).
	\end{equation*}
	Similarly,
	\begin{equation}
		\frac{\sqrt{\det(G_{t^*|y})}}{\det(F_{t^*})} = \sqrt{\frac{\det(-\nabla^2\mu_{t^*})}{\det(H_{t^*|y})}} = \sqrt{\det(I - (y  - \mu_{t^*})H_{t^*|y}^{-1}\Lambda_{t^*})}.
	\end{equation}
	Using a first-order Taylor expansion of $\det(I + E)$ about $E = 0$, as recorded in~\eqref{eqn:matrix-det-taylor-expansion}, and again applying the matrix perturbation bound~\eqref{eqn:matrix-inverse-taylor-expansion} to upper bound $\|H_{t^*|y}^{-1} - \bar{H}_{t^*}\|$, we have
	\begin{equation*}
		\Big\|\sqrt{\det(I - (y  - \mu_{t^*})H_{t^*|y}^{-1}\Lambda_{t^*})} - \Big(1 - \frac{(y - \mu_{t^*})}{2}\tr(\bar{H}_{t^*}^{-1}\Lambda_{t^*})\Big)\Big\| \leq C\Big( (y - \mu_{t^*})^2\delta_n^2 +  |y - \mu_{t^*}| |y - \bar{u}_{t^*}| \delta_n^2\Big).
	\end{equation*}
	Putting these two approximations together yields the following approximation to the integrand of~\eqref{pf:studentized-wald-2}:
	\begin{equation}
		\label{pf:studentized-wald-3}
		\bigg|\phi_{0,F_{t^*}^{-1}G_{t^*|y}F_{t^*}^{-1}}(z)- \Big(1 -\frac{(y - \mu_{t^*})}{2}\tr(\bar{H}_{t^*}^{-1}\Lambda_{t^*}) + \frac{(y - \mu_{t^*})}{2}z'\Lambda_{t^*}^{1/2}\bar{H}_{t^*}^{-1}\Lambda_{t^*}^{1/2}z \Big) \phi_{0,I}(z)\bigg| \leq C\Big( (y - \mu_{t^*})^2\delta_n^2 +  |y - \mu_{t^*}| |y - \bar{u}_{t^*}| \delta_n^2\Big).
	\end{equation}
	Integrating this approximation over $B_d(0,q)$ gives
	\begin{equation*}
		\int_{B_d(0,q)} \Big(1 -\frac{(y - \mu_{t^*})}{2}\tr(\bar{H}_{t^*}^{-1}\Lambda_{t^*}) + \frac{(y - \mu_{t^*})}{2}z'\Lambda_{t^*}^{1/2}\bar{H}_{t^*}^{-1}\Lambda_{t^*}^{1/2}z \Big) \phi_{0,I}(z) \,dz = \chi_d^2(q) - A \cdot \frac{(y - \mu_{t^*})}{2}\tr(\bar{H}_{t^*}^{-1}\Lambda_{t^*}),
	\end{equation*}
	which equals $\P(\wh{W} \leq q|\wh{Y} = y)$ up to the second-order error terms in~\eqref{eqn:pivot-location-error} and~\eqref{pf:studentized-wald-3}, which are at most $C\delta_n^2(1 + |y - \bar{u}_{t^*}|^2 + |\bar{u}_{t^*} - \mu_{t^*}|^2)$. 
	
	\subsection{Proof of Lemma~\ref{lem:approximate-palm-density-hessian}}
	\label{subsec:pf-approximate-palm-density-hessian}
	The density of the Palm distribution of $R_t|\{\hat{t} = t,\wh{Y} = y\}$ is given in~\eqref{eqn:palm-density-hessian}; up to an exponential error term, this agrees with the density of $R_t||\{\hat{t} = t,\wh{Y} = y\}$ under $\Q^{\mc{S}_{t^*}}$ as shown in~\eqref{pf:approximate-palm-density-hessian-1}. For all $n \in \mathbb{N}$ large enough and $(t,y) \in \mc{S}_{t^*}, R \in \mc{R}_{t^*}$, the matrix $H_{t^*|y} + R$ will be positive definite, and then the Palm density is simply the product of a Gaussian density and a determinant term, divided by a normalizing constant:
	\begin{equation}
		\label{eqn:palm-density-hessian-no-psd}
		\begin{aligned}
			p(R|t,y) = \frac{\det(R + H_{t|y})}{\E[\det(-\nabla^2 Y_t) \cdot \1(-\nabla^2 Y_t \succ 0)|\nabla Y_t = 0,Y_t = y]} f_{R_t}(R).
		\end{aligned}
	\end{equation}
	We separately approximate the Gaussian density, and the determinant term and normalizing constant. Throughout we write $r = \vech(R)$ and $h = t - t^*$ for notational convenience. 
	
	\paragraph{Gaussian density of residual.}
	Recall that the Gaussian density of the residual Hessian $R_t$ is 
	\begin{equation*}
		f_{R_t}(R) = \frac{1}{\sqrt{(2\pi)^{d(d + 1)/2}\det(\Theta_t)}} \exp\Big(-\frac{1}{2}r'\Theta_t^{-1}r\Big),
	\end{equation*}
	A first-order Taylor expansion of $\Theta_t^{-1}$ about $t = t^*$ implies
	\begin{equation*}
		\Big|r'\Theta_{t}^{-1}r - r'(\Theta_{t^*}^{-1} - \Theta_{t^*}^{-1}\{\dot{\Theta}_{t^*}(h)\}\Theta_{t^*}^{-1})r\Big| \leq C\|h\|^2\|R\|^2,
	\end{equation*}
	while first-order Taylor expansion of $\{\det(\Theta_t)\}^{-1/2}$ about $t = t^*$ implies
	\begin{equation*}
		\Big|\frac{1}{\sqrt{\det(\Theta_t)}} - \frac{1}{\sqrt{\det(\Theta_{t^*})}}(1 - \frac{1}{2}\tr(\Theta_{t^*}^{-1}\dot{\Theta}_{t^*}(h))\Big| \leq C\|h\|^2;
	\end{equation*}
	in both cases we are using \eqref{eqn:non-degenerate} which implies that $\lambda_{\min}(\Theta_t) > \tau_2^2 > 0$. These combine to yield an approximation to the log of $f_{R_t}(R)$. From here, arguments similar to those of Section~\ref{subsec:asymptotic-analysis-density} yield
	\begin{equation*}
		\frac{|f_{R_t}(R) - \bar{f}_{R_t}(R)|}{\bar{f}_{R_t}(R)} \leq C\Big(\|R\|^2 \|h\|^2 + \|h\|^2\Big)
	\end{equation*}
	where 
	\begin{equation}
		\label{pf:palm-distribution-hessian-1}
		\bar{f}_{R_t}(R) := \frac{\Big(1 - \frac{1}{2}\tr\big(\Theta_{t^*}^{-1}\dot{\Theta}_{t^*}(h)\big) + \frac{1}{2}r'\Theta_{t^*}^{-1}\{\dot{\Theta}_{t^*}(h)\}\Theta_{t^*}^{-1}r\Big)}{\sqrt{(2\pi)^{d(d + 1)/2}\det(\Theta_{t^*})}} \exp\Big(-\frac{1}{2}r'\Theta_{t^*}^{-1}r\Big).
	\end{equation}
	
	\paragraph{Determinant term and normalizing constant.} Recall the result of \eqref{eqn:hessian-determinant-local-expansion} in Section~\ref{subsec:hessian-asymptotics}:
	\begin{equation*}
		\begin{aligned}
			& \frac{\Big|\det(H_{t|y} + R) - \det(\bar{H}_{t^*})\big(1 + T_{10}^{\det}(h) + T_{01}^{\det}(y) + \tr(\bar{H}_{t^*}^{-1}R)\big)\Big|}{\det(\bar{H}_{t^*})} \leq C\Big(\|h\|^2 + |y - \bar{u}_{t^*}|^2\delta_n^2+ \|R\|^2\delta_n^2\Big).
		\end{aligned}
	\end{equation*}
	Combined with Lemma~\ref{lem:approximation-peak-intensity}, this implies
	\begin{equation}
		\label{pf:palm-distribution-hessian-2}
		\Big|\frac{\det(H_{t|y} + R)}{\E[\det(-\nabla^2 Y_t) \cdot \1(-\nabla^2 Y_t \succ 0)|\nabla Y_t = 0,Y_t = y]} - \big(1 + \tr(\bar{H}_{t^*}^{-1}R)\big)\Big| \leq C\Big(\|h\|^2 + |y - \bar{u}_{t^*}|^2\delta_n^2 + \|R\|^2\delta_n^2 + \delta_n^2\Big).
	\end{equation}
	Combining~\eqref{pf:palm-distribution-hessian-1} and~\eqref{pf:palm-distribution-hessian-2} give the approximation $\bar{p}(R|t,y)$ to the claimed order of accuracy.
	
	\subsection{Proof of Lemma~\ref{lem:approximate-joint-distribution-2}}
	\label{subsec:pf-approximate-joint-distribution-2}
	
	Throughout this proof we denote $r = \vech(R)$ and $t = t^* + h$.
	
	\subsubsection{Joint density of residual Hessian and location}
	Lemma~\ref{lem:approximate-palm-density-hessian} and Theorem~\ref{thm:approximate-joint-distribution} imply that for all $n$ sufficiently large:
	\begin{equation*}
		\frac{\Big|p(R,t|y) - \bar{p}(R|t,y) \bar{p}(t|y)\Big|}{ \bar{p}(R|t,y) \bar{p}(t|y)} \leq C\Big(\Err_p(R|t,y) + \Err_{p}(t|y)\Big).
	\end{equation*}
	The product $\bar{p}(R|t,y) \bar{p}(t|y)$ is 
	\begin{align*}
		\bar{p}(R|t,y) \bar{p}(t|y) 
		& = \Big(1 + T_{10}^{\rho}(h) - \frac{1}{2}T_{30}^{\rho}(h)\Big) \cdot \frac{\sqrt{\det(G_{t^*|y})}}{\sqrt{(2\pi)^d}} \cdot \exp\Big(-\frac{1}{2}h'G_{t^*|y} h\Big) \\
		& \times \Big(1 + T_{10}^{R}(h) + T_{12}^{R}(h,R) + T_{01}^{R}(R)\Big) \cdot \frac{1}{\sqrt{(2\pi)^{d(d + 1)/2} \cdot \det(\Theta_{t^*})}} \exp\Big(-\frac{1}{2}r'\Theta_{t^*}^{-1}r\Big).
	\end{align*}
	Ignoring the product of all first-order terms in the above equation incurs at most a second-order error, meaning precisely:
	\begin{equation*}
		\frac{|\bar{p}(R|t,y)\bar{p}(t|y) - \bar{p}(R,t|y)|}{\bar{p}(R,t|y)} \leq C\Big(\Err_p(R|t,y) + \Err_{p}(h|y)\Big),
	\end{equation*}
	and establishing~\eqref{eqn:approximate-joint-density-hessian-location}. We give an explicit bound on the magnitude of the error term $\Err_{p}(R,h|y)$ in terms of $(\delta_n,h,y - \bar{u}_{t^*},R)$, for future reference:
	\begin{equation}
		\begin{aligned}
			\label{eqn:density-hessian-location-error}
			& \Err_p(R|h,y) + \Err_{p}(h|y) \\
			& \leq C(\Err_{p}(R|h,y) + \Err_{\rho}(h,y) + \Err_{\E[N(\mc{S}_{t^*})]}) \\ 
			& \leq C(\Err_{p}(R|h,y) + \Err_{\det}(h,y) + \Err_{Y}(h,y) + \Err_{\nabla Y}(h) + |y - \bar{u}_{t^*}|^2 \lambda_n^2 \|h\|^4 + \Err_{\E[N(\mc{S}_{t^*})]}) \\
			& \leq C\Big(\|h\|^2 + |y - \bar{u}_{t^*}|^2 \delta_n^2 + \|R\|^2\delta_n^2 + \|R\|^2 \|h\|^2 + \delta_n^2  + \big(|y - \bar{u}_{t^*}| + |\bar{u}_{t^*} - \mu_{t^*}|\big) \lambda_n \|h\|^3 + \lambda_n^2\|h\|^4 +
			|y - \bar{u}_{t^*}|^2 \lambda_n^2 \|h\|^4\Big).
		\end{aligned}
	\end{equation}
	
	\subsubsection{Density of residual Hessian.}
	Let 
	\begin{equation*}
		\tilde{p}(R,t^* + h|y) = \Big(1 + T_{01}^{R}(R)\Big) \cdot \frac{1}{\sqrt{(2\pi)^{d(d + 1)/2}\det(\Theta_{t^*})}} \exp\Big(-\frac{1}{2}\vech(R)'\Theta_{t^*}^{-1}\vech(R)\Big) \cdot \frac{\sqrt{\det(G_{t^*|y})}}{\sqrt{(2\pi)^d}} \cdot \exp\Big(-\frac{1}{2}h'G_{t^*|y} h\Big).
	\end{equation*}
	The difference between $\tilde{p}(R,t^* + h|y)$ and $\bar{p}(R,t^* + h|y)$ is that the former omits first-order terms that are linear in $h$. The conditional density of the residual Hessian is $p(R|y) = \int_{\mc{B}_{t^*}} p(R,t|y) \,dt$. We expand this as the sum of a main term and three error terms:
	\begin{equation*}
		p(R|y) = \int_{\Rd} \tilde{p}(R,t|y) \,dh - \int_{\mc{B}_{t^*}^c} \tilde{p}(R,t|y) \,dt + \int_{\mc{B}_{t^*}} \bar{p}(R,t|y) - \tilde{p}(R,t|y) \,dt + \int_{\mc{B}_{t^*}} {p}(R,t|y) - \bar{p}(R,t|y) \,dt.
	\end{equation*}
	The main term above $\int_{\Rd} \tilde{p}(R,t|y) \,dt = \bar{p}(R|y)$. The second error term $\int_{\mc{B}_{t^*}} \bar{p}(R,t|y) - \tilde{p}(R,t|y) \,dt = 0$, since the integral of any first-order term that is linear in $h$ over $\mc{B}_{t^*}$ vanishes. It remains to bound the two sources of error, due to local expansion and truncation.
	
	\paragraph{Error due to local expansion.} 
	For all $n \in \mathbb{N}$ sufficiently large,
	\begin{equation*}
		\bar{p}(R,t|y) \leq C \cdot \bar{p}(R|y) \cdot \sqrt{\det(G_{t^*|y})} \cdot \exp\Big(-\frac{1}{2}h'G_{t^*|y}h\Big).
	\end{equation*}
	Combined with \eqref{eqn:approximate-joint-density-hessian-location}, this implies that the error due to local expansion is at most
	\begin{equation*}
		\begin{aligned}
			\int_{\mc{B}_{t^*}} \big|{p}(R,t|y) - \bar{p}(R,t|y)\big| \,dt & \leq \int_{\mc{B}_{t^*}} \Err_{p}(R,h|y) \cdot \bar{p}(R,t|y) \,dt \\
			& \leq C \cdot \bar{p}(R|y)  \sqrt{\det(G_{t^*|y})} \int_{B(0,\varepsilon_n)} \Err_{p}(R,h|y) \cdot \exp\Big(-\frac{1}{2}h'G_{t^*|y}h\Big) \,dh.
		\end{aligned}
	\end{equation*}
	Inserting the upper bound on $\Err_{p}(R,h|y)$ from~\eqref{eqn:density-hessian-location-error} into the previous display, and integrating over $h \in B_d(0,\varepsilon_n)$, shows that the error due to local expansion is at most
	\begin{equation*}
		C \cdot \bar{p}(R|y) \Big(|y - \bar{u}_{t^*}|^2 + |\bar{u}_{t^*} - \mu_{t^*}| + \|R\|^2 + 1)\delta_n^2.
	\end{equation*}
	
	\paragraph{Truncation error.} 
	For all $n \in \mathbb{N}$ sufficiently large,
	\begin{equation*}
		\tilde{p}(R,t|y) \leq C \cdot \bar{p}(R|y) \cdot \sqrt{\det(G_{t^*|y})} \cdot \exp\Big(-\frac{1}{2}h'G_{t^*|y}h\Big).
	\end{equation*}
	Moreover, for all $n \in \mathbb{N}$ sufficiently large, $\varepsilon_n \geq \{\lambda_{\min}(G_{t^*|y})\}^{-1} \sqrt{5 \log \lambda_n}$. Thus $\int_{\mc{B}_{t^*}^c} \tilde{p}(R,t|y) \,dt$ is at most
	\begin{equation*}
		C \cdot \bar{p}(R|y) \sqrt{\det(G_{t^*|y})} \int_{\mc{B}_{t^*}^c} \exp\Big(-\frac{1}{2}(t - t^*)'G_{t^*|y}(t - t^*)\Big) \,dt \leq C \cdot \bar{p}(R|y) \int_{\R^d \setminus B_d(0,\sqrt{5 \log \lambda_n})} \exp\Big(-\frac{\|z\|^2}{2}\Big) \,dz \leq C\delta_n^2,
	\end{equation*}
	with the final inequality following by~\eqref{eqn:gaussian-integrals-concentration}.
	
	\subsubsection{Conditional density of location}
	Finally, the local expansion $\bar{p}(t|R,y)$ to the conditional density of $\hat{t}|\{\wh{R} = R,\wh{Y} = y\}$ is obtained by dividing $\bar{p}(t,R|y)$ by $\bar{p}(R|y)$. Proper bookkeeping shows that the ratio of first-order terms can be ignored, at the cost of a second-order error term that is absorbed into $\Err_{p}(t|R,y)$.
	
	\subsection{Proof of Lemma~\ref{lem:approximate-density-studentized-peak}}
	\label{subsec:pf-approximate-density-studentized-peak}
	
	We begin with some calculus preliminaries before deriving the local expansions of Lemma~\ref{lem:approximate-density-studentized-peak}. For ease of reading, we suppress all notational dependence on $y$, writing $F_t(R) = F_t(y,R)$, $Z_t(R) = Z_{t}(y,R)$ and $H_t = H_{t|y}$.
	
	\subsubsection{Calculus preliminaries.}
	
	\paragraph{Change of variables.}
	The Jacobian of the mapping $t \mapsto Z_{t}(R)$ is given by
	\begin{equation*}
			\dot{Z}_t(R) = F_{t}(R) +L_t\big((H_t + R)h\big)+ \Lambda_{t}^{-1/2} \{\dot{H}_t(h)\}
	\end{equation*}
	and we recall that $[L_{t}]_{ijk} = [\tilde{L}_{t}]_{ikj}$, and $\tilde{L}_t$ is the differential of $\Lambda_t^{-1/2}$, meaning it is any array in $\R^{d \times d \times d}$ that satisfies $\{L_t(h)\}'\Lambda_{t}^{-1/2} + \Lambda_t^{-1/2}\{L_t(h)\} = -\Lambda_t^{-1}\{\dot{\Lambda}_t(h)\}\Lambda_t^{-1}$ for all $h \in \Rd$. Note that $D_t$ is not the same as $\dot{F}_t$. We will use the fact that $\sup_{t \in \Rd} \|L_t(v)\| \leq C\|v\|$ for any $v \in \Rd$, as a result of~\eqref{eqn:non-degenerate}.  
	
	For all $n \in \mathbb{N}$ sufficiently large and any $y \in \mc{I}_{t^*}, R \in \mc{R}_{t^*}$, the minimum eigenvalue of $\dot{Z}_t(R)$ is uniformly bounded away from $0$ over all $t \in \mc{B}_{t^*}$: the minimum eigenvalue of $F_t(R)$ is at least
	\begin{equation*}
		\inf_{t \in \mc{B}_{t^*}} \lambda_{\min}(F_t(R)) \geq \inf_{t \in \mc{B}_{t^*}} \frac{1}{\sqrt{\lambda_{\max}(\Lambda_t)}} \lambda_{\min}(H_t + R) \geq c (\lambda_n - \xi_n) \geq c \lambda_n,
	\end{equation*}
	while the norm of $\dot{F}_t(R)(t - t^*)$ is at most
	\begin{equation*}
		\sup_{t \in \mc{B}_{t^*}}\|\dot{F}_t(R)(t - t^*)\| \leq \sup_{t \in \mc{B}_{t^*}} \|A_t(t - t^*)\| + C \|\dot{H}_t(t - t^*)\| \leq C(\lambda_n + \xi_n) \varepsilon_n \leq C \lambda_n \varepsilon_n,
	\end{equation*}
	and therefore for all $n \in \mathbb{N}$ sufficiently large:
	\begin{align*}
		\inf_{t \in \mc{B}_{t^*}}\|\dot{Z}_t(R)\| 
		\geq \inf_{t \in \mc{B}_{t^*}}\|F_t(R)\| - \sup_{t \in \mc{B}_{t^*}}\|\dot{F}_t(R)(t - t^*)\| \geq c \lambda_n(1 - C\varepsilon_n) \geq c\lambda_n.
	\end{align*}
	We conclude that the vector field $t \mapsto Z_t(R)$ is strongly monotonic: for all $s,t \in \mc{B}_{t^*}$,
	$$
	(Z_{s}(R) - Z_{t}(R))'(s - t) \geq c\lambda_n\|s - t\|^2,
	$$
	and hence $t \mapsto Z_t(R)$ is injective over $\mc{B}_{t^*}$. The change of variables formula for one-to-one mappings  therefore gives the density of $\wh{Z}|\wh{R},\wh{Y}$: letting $\mc{Z}_{t^*}(R) := \{Z_t(R):t \in \mc{B}_{t^*}\}$, we have that
	\begin{equation}
		\label{eqn:change-of-variables}
		p(z|R,y) = \frac{p(T_z(R)|R,y)}{|\det(\dot{Z}_{T_z(R)}(R))|},  \quad \textrm{for $z \in \mc{Z}_{t^*}(R)$,}
	\end{equation}
	where $T_z(R)$ is the unique point $t \in \mc{B}_{t^*}$ satisfying $Z_{T_z(R)}(R) = z$.
	
	\paragraph{Taylor approximation of $T_z(R)$.}
	We now provide a local expansion of the inverse $T_z(R)$ around $z = 0, R = 0$ by inverting a local expansion of $Z_t(R)$ around $t = t^*, R = 0$. The first-order Taylor expansion of $F_t(R)$ around $t = t^*, R = 0$ is 
	\begin{equation*}
		F_{t^*} + \dot{F}_{t^*}(t - t^*) + \Lambda_{t^*}^{-1/2} R,  
	\end{equation*}
	where we recall that $F_{t^*} = F_{t^*}(0) = \Lambda_{t^*}^{-1/2}H_{t^*}$ and $\dot{F}_{t^*} = \dot{F}_{t^*}(0)$. Applying~\eqref{eqn:deterministic-hessian-taylor-expansion-first-order} and $\|L_t(v)\| \leq C\|v\|$, we conclude that
	\begin{equation*}
		\Big\|F_t(R) - F_{t^*} - \dot{F}_{t^*}(t - t^*) - \Lambda_{t^*}^{-1/2} R\Big\| \leq C\lambda_n\|t - t^*\|^2.
	\end{equation*}
	Consequently, the first-order local expansion of $Z_t(R)$ around $t = t^*, R = 0$ is 
	\begin{equation*}
		F_{t^*}(t - t^*) + \dot{F}_{t^*}(t - t^*,t - t^*) + \Lambda_{t^*}^{-1/2}R(t - t^*),
	\end{equation*}
	with error
	\begin{equation*}
		\|Z_t(R) - F_{t^*}(t - t^*) - \dot{F}_{t^*}(t - t^*,t - t^*) - \Lambda_{t^*}^{-1/2}R(t - t^*)\| \leq C(\lambda_n \|t - t^*\|^3 + \|R\| \|t - t^*\|).
	\end{equation*}
	Now we ``invert'' this expansion. To begin, evaluating $Z_t(R)$ at $t = T_z(R)$ and applying the above bound:
	\begin{equation}
		\label{pf:studentized-location-2}
		\begin{aligned}
			& \Big\|z - F_{t^*}h_z(R) - \dot{F}_{t^*}(h_z(R),h_z(R)) - \Lambda_{t^*}^{-1/2}R h_z(R) \Big\| \leq C(\lambda_n \|h_z(R)\|^3 + \|R\| \cdot \|h_z(R)\|),
		\end{aligned}
	\end{equation}
	where $h_z(R) = T_z(R) - t^*$. Substituting $F_{t^*}^{-1}z$ for $h_z(R)$ in the first-order terms in~\eqref{pf:studentized-location-2}, and observing that $\|h_z(R)\| \leq C \delta_n \|z\|$, we obtain
	\begin{equation*}
		\begin{aligned}
			\Big\|z - F_{t^*}h_z(R) - \dot{F}_{t^*}(F_{t^*}^{-1}z,F_{t^*}^{-1}z) - \Lambda_{t^*}^{-1/2}R(F_{t^*}^{-1}z)\Big\| 
			& \leq C\Big(\lambda_n \|h_z(R)\|^3 + \|R\| \cdot \|h_z(R)\|^2 + \|R\|^2 \cdot \|h_z(R)\|\Big) \\
			& \leq C\delta_n^2\Big(\|z\|^3 + \|R\| \cdot \|z\|^2 + \|R\|^2 \|z\|\Big).
		\end{aligned}
	\end{equation*}
	The first inequality above follows since $\|F_{t^*}^{-1}z - h_z(R)\| \leq C\|h_z(R)\|^2$ by~\eqref{pf:studentized-location-2}, and the second inequality follows since $\|h_z(R)\| \leq C \delta_n \|z\|$. 
	Finally, application of $F_{t^*}^{-1}$ gives the first-order expansion to $h_z(R)$:
	\begin{equation}
		\label{pf:approximate-density-studentized-peak-1}
		\begin{aligned}
			\Big\|h_z(R) - \bar{h}_z(R)\Big\| & \leq C\delta_n^3\Big(\|z\|^3 + \|R\| \|z\|^2 + \|R\|^2 \|z\|\Big) := \Err_{h}(z,R), \\
			\bar{h}_z(R) & := F_{t^*}^{-1}z - F_{t^*}^{-1}\dot{F}_{t^*}(F_{t^*}^{-1}z,F_{t^*}^{-1}z) - F_{t^*}^{-1}\Lambda_{t^*}^{-1/2}RF_{t^*}^{-1}z
		\end{aligned}
	\end{equation}
	The magnitude of the first-order terms in this expansion are at most
	\begin{equation*}
		\|F_{t^*}^{-1}\dot{F}_{t^*}(F_{t^*}^{-1}z,F_{t^*}^{-1}z)\| \leq C\delta_n^2\|z\|^3, \quad \|F_{t^*}^{-1}\Lambda_{t^*}^{-1/2}RF_{t^*}^{-1}z\| \leq C\delta_n^2\|R\|\|z\|.
	\end{equation*}
	
	\subsubsection{Conditional distribution of studentized peak given residual Hessian.}
	We now derive~\eqref{eqn:density-studentized-peak-given-hessian}, which gives a local expansion of the conditional density $p(z|R,y)$, by separately providing approximations to the density term and determinant term in~\eqref{eqn:change-of-variables}. 
	
	\paragraph{Density term.}
	Lemma~\ref{lem:approximate-joint-distribution-2} implies that the relative error between $p(T_z(R)|R,y)$ and 
	\begin{equation*}
		\bar{p}(T_z(R)|R,y) = \Big(1 + T_{10}^R(h_z(R)) + T_{12}^R(h_z(R),R) + T_{10}^{\rho}(h_z(R)) - \frac{1}{2}T_{30}^{\rho}(h_z(R))\Big) \cdot  \frac{\sqrt{\det(G_{t^*|y})}}{\sqrt{(2\pi)^d}} \cdot \exp\Big(-\frac{1}{2}(h_z(R))'G_{t^*|y} h_z(R)\Big),
	\end{equation*}
	is at most
	\begin{equation*}
		\begin{aligned}
			\frac{|p(T_z(R)|R,y) - \bar{p}(T_z(R)|R,y)|}{\bar{p}(T_z(R)|R,y)} \leq \Err_{p}(h_z(R)|R,y) \leq C \cdot \Err_{p}(F_{t^*}^{-1}z|R,y),
		\end{aligned}
	\end{equation*}
	with the last inequality following because $\|h_z(R)\| \leq C \|F_{t^*}^{-1}z\|$, while $\Err_{p}(h|R,y)$ is increasing in $\|h\|$. Substituting $\bar{h}_z(R)$ for $h_z(R)$ in the above expression incurs error that can be bounded above using~\eqref{pf:approximate-density-studentized-peak-1}; doing this, Taylor expanding $\exp(1 + x)$ around $x = 0$, and disregarding second-order terms, we conclude that the relative error between $\bar{p}(T_z(R)|R,y)$ and 
	\begin{equation*}
		\begin{aligned}
			\tilde{p}(z|R,y) & = \Big(1 + T_{10}^R(F_{t^*}^{-1}z) + T_{12}^R(F_{t^*}^{-1}z,R) + T_{10}^{\rho}(F_{t^*}^{-1}z) - \frac{1}{2}T_{30}^{\rho}(F_{t^*}^{-1}z) + \\
			& (F_{t^*}^{-1}z)'G_{t^*|y} \big(F_{t^*}^{-1}\dot{F}_{t^*}(F_{t^*}^{-1}z,F_{t^*}^{-1}z) + F_{t^*}^{-1}\Lambda_{t^*}^{-1/2}RF_{t^*}^{-1}z\big)\Big) \cdot \frac{\sqrt{\det(G_{t^*|y} )}}{\sqrt{(2\pi)^d}} \cdot \exp\Big(-\frac{1}{2}(F_{t^*}^{-1}z)'G_{t^*|y} F_{t^*}^{-1}z\Big)
		\end{aligned}
	\end{equation*}
	is at most 
	\begin{equation*}
		\frac{|\bar{p}(T_z(R)|R,y) - \tilde{p}(z|R,y)|}{\tilde{p}(z|R,y)} \leq C\Big((1 + \|R\|^2) \cdot \Err_{h}(z,R) + \delta_n^2\big(1 + \|z\|^6\big)\big(1 + \|R\|^2\big)\Big) \leq C\delta_n^2\big(1 + \|z\|^6\big)\big(1 + \|R\|^2\big).
	\end{equation*}
	For future reference, we note that 
	$$
	\Err_{p}(F_{t^*}^{-1}z|R,y) \leq C\delta_n^2\big(1 + \|z\|^4\big)\big(1 + \|R\|^2\big)\big(1 + |y - \bar{u}_{t^*}| + |\bar{u}_{t^*} - \mu_{t^*}|\big),
	$$
	
	\paragraph{Determinant term.}
	A first order Taylor expansion of $\dot{Z}_t(R)$ about $t = t^*, R = 0$ gives
	\begin{equation}
		\Big\|\dot{Z}_t(R) - F_{t^*}(h) - \dot{F}_{t^*}(h) - \Lambda_{t^*}^{-1/2}R - L_{t^*} H_{t^*}(h) - \Lambda_{t^*}^{-1/2}\{\dot{H}_{t^*}(h)\}\big\| \leq C\Big(\lambda_n \|h\|^2 + \|R\| \|h\| \Big) .
	\end{equation}
	As $\dot{F}_{t^*}(h) = \tilde{L}_{t^*}(H_{t^*}h) + \Lambda_{t^*}^{-1/2}\{\dot{H}_{t^*}(h)$, we can just as easily write the first-order Taylor expansion as 
	$$
	F_{t^*} + 2 \Lambda_{t^*}^{-1/2}\{\dot{H}_{t^*}(h)\} + \bar{L}_{t^*}(H_{t^*}h) + \Lambda_{t^*}^{-1/2}R.
	$$
	From this, we see that the determinant term can be locally expanded as
	\begin{equation*}
		\begin{aligned}
		& \Big|\det\big(\dot{Z}_{t}(R)\big) - \det(F_{t^*})\big(1 +2 \cdot \tr(F_{t^*}^{-1}\Lambda_{t^*}^{-1/2}\{\dot{H}_{t^*}(h)\}) + \tr(F_{t^*}^{-1}\{\bar{L}_{t^*}(H_{t^*}h)\}) + \tr(F_{t^*}^{-1}\Lambda_{t^*}^{-1/2}R)\big)\Big| \\
		& \leq C\det(F_{t^*}) (\lambda_n\|h\|^3 + \delta_n\|R\|\cdot\|h\|).
		\end{aligned}
	\end{equation*}
	Evaluating this at $t = T_z(R)$, substituting $\bar{h}_z(R)$ for $h_z(R)$, and disregarding second-order terms:
	\begin{equation*}
		\begin{aligned}
			& \Big|\det(\dot{Z}_{T_z(R)}(R)) - \det(F_{t^*})\big(1 +2 \cdot \tr(F_{t^*}^{-1}\Lambda_{t^*}^{-1/2}\{\dot{H}_{t^*}(F_{t^*}^{-1}z)\}) + \tr(F_{t^*}^{-1}\{\bar{L}_{t^*}(H_{t^*}F_{t^*}^{-1}z)\}) + \tr(F_{t^*}^{-1}\Lambda_{t^*}^{-1/2}R)\big)\Big| \\
			& \leq C \det(F_{t^*})\delta_n^2\Big(\|z\|^3 + \|R\|\cdot\|z\|\Big).
		\end{aligned}
	\end{equation*}
	Notice that this can further be upper bounded by $C\det(F_{t^*})\delta_n^2(1 + \|z\|^6)(1 + \|R\|^2)$. Combined our analyses of the density term and determinant term yield the claimed result~\eqref{eqn:density-studentized-peak-given-hessian}.
	
	\subsubsection{Joint distribution of studentized peak, residual Hessian.}
	The joint density of $\wh{Z},\wh{R}$ given $\wh{Y} = y$ is 
	\begin{equation*}
		p(z,R|y) = p(z|R,y) \cdot p(R|y).
	\end{equation*}
	Equation~\eqref{eqn:density-studentized-peak-given-hessian} gives an approximation to $p(z|R,y)$, while Lemma~\ref{lem:approximate-joint-distribution-2} gives an approximation to $p(R|y)$. Combined, these results imply~\eqref{eqn:density-studentized-peak-and-hessian}. The relative error incurred by these approximations is at most
	\begin{equation*}
		\begin{aligned}
		& \Err_{p}(z,R|y) \\
		& \quad = C\big(\Err_{p}(z|R,y) + \Err_{p}(R|y)\big) \\
		& \quad \leq C\big(\Err_{p}(F_{t^*}^{-1}z|R,y) + \delta_n^2(1 + \|z\|^6)(1 + \|R\|^2) + \delta_n^2(|y - \bar{u}_{t^*}|^2 + |\bar{u}_{t^*} - \mu_{t^*}| + \|R\|^2)\big) \\
		& \quad \leq C\delta_n^2\big((1 + \|z\|^4)(1 + \|R\|^2)(1 + |y - \bar{u}_{t^*}| + |\bar{u}_{t^*} - \mu_{t^*}|) + (1 + \|z\|^6)(1 + \|R\|^2) + |y - \bar{u}_{t^*}|^2 + |\bar{u}_{t^*} - \mu_{t^*}| + \|R\|^2 \big) \\
		& \quad \leq C \delta_n^2\big((1 + \|z\|^6)(1 + \|R\|^2)(1 + |y - \bar{u}_{t^*}|^2 + |\bar{u}_{t^*} - \mu_{t^*}|)\big).
		\end{aligned}
	\end{equation*}
	
	\subsubsection{Validity in $\mc{Z}_{t^*}$.}
	So far we have proven the validity of the local expansions $\bar{p}(z|R,y)$ and $\bar{p}(z|y)$ at points $z \in \mc{Z}_{t^*}(R)$. We now show that for $n$ sufficiently large $\mc{Z}_{t^*} \subseteq \mc{Z}_{t^*}(R)$ for all $R \in \mc{R}_{t^*}$. By definition $z \in \mc{Z}_{t^*}$ means that $h_{z}(0) = F_{t^*}^{-1}z \leq \varepsilon_n/2$. Application of~\eqref{pf:studentized-location-2} then implies that for all $n$ sufficiently large 
	\begin{equation*}
		\|h_z(R)\| \leq \|F_{t^*}^{-1}z\| + \|F_{t^*}^{-1}z - h_z(R)\| \leq \frac{\varepsilon_n}{2} + C\|h_z(R)\|^2 \leq   \frac{\varepsilon_n}{2}+ C\delta_n^2\|z\|^2 \leq \frac{\varepsilon_n}{2}+ C\varepsilon_n^2 \leq \varepsilon_n;
	\end{equation*}
	in the second-to-last inequality we have used the fact that $\mc{Z}_{t^*} \subseteq B(0,\lambda_n \varepsilon_n)$. Thus $T_z(R) \in \mc{B}_{t^*}$, implying that $z \in \mc{Z}_{t^*}(R)$.
	
	\subsubsection{Distribution of studentized peak.}
	Let 
	\begin{equation*}
		\begin{aligned}
			\tilde{p}(z,R|y) 
			& : = \big(1 + T_{01}^{p_{Z|R}}(z) + T_{30}^{p_{Z|R}}(z)+ T_{12}^{p_{Z|R}}(z,R)\big)) \\
			& \times \frac{\sqrt{\det(G_{t^*|y})}}{\sqrt{(2\pi)^d} \det(F_{t^*})} \exp\Big(-\frac{1}{2}(F_{t^*}^{-1}z)'G_{t^*|y}F_{t^*}^{-1}z\Big) \cdot \frac{1}{\sqrt{(2\pi)^{d(d + 1)/2} \cdot \det(\Theta_{t^*})}} \exp\Big(-\frac{1}{2}\vech(R)'\Theta_{t^*}^{-1}\vech(R)\Big).
		\end{aligned}
	\end{equation*}
	The difference between $\tilde{p}(z,R|y)$ and $\bar{p}(z,R|y)$ is that the former omits all first-order terms that are linear in $R$. We decompose the density of $\wh{Z}$ given $\wh{Y} = y$ into the sum of a main term and four error terms:
	\begin{equation*}
		\begin{aligned}
			p(z|y) 
			& = \frac{1}{\Q^{\mc{S}_{t^*}}(\wh{R} \in \mc{R}_{t^*})}\int_{\mc{R}_{t^*}} p(z,R|y) \,dR \\
			& = \int_{\R_{sym}^{d \times d}} \tilde{p}(z,R|y) \,dR - \int_{\mc{R}_{t^*}^c} \tilde{p}(z,R|y) \,dR + \int_{\mc{R}_{t^*}} \bar{p}(z,R|y) - \tilde{p}(z,R|y) \,dR + \int_{\mc{R}_{t^*}} \bar{p}(z,R|y) - p(z,R|y)  \,dR + \\
			& \quad \quad \Big(1 - \frac{1}{\Q^{\mc{S}_{t^*}}(\wh{R} \in \mc{R}_{t^*})}\Big) \int_{\mc{R}_{t^*}} p(z,R|y) \,dR.
		\end{aligned}
	\end{equation*}
	The main term above is $\int_{\R_{sym}^{d \times d}} \tilde{p}(z,R|y) \,dR = \bar{p}(z|y)$. The second error term above is $\int_{\mc{R}_{t^*}} \bar{p}(z,R|y) - \tilde{p}(z,R|y) \,dR = 0$, since $\mc{R}_{t^*}$ is symmetric about the origin and so the integral of any odd function in $R$ vanishes over $\mc{R}_{t^*}$. It remains to bound the three sources of error, due to local expansion, undoing the effects of truncation, and using the wrong normalizing constant. 
	
	\paragraph{Error due to local expansion.}
	
	Applying~\eqref{eqn:density-studentized-peak-and-hessian}, we can upper bound the error due to local expansion:
	\begin{equation*}
		\begin{aligned}
			\int_{\mc{R}_{t^*}} \bar{p}(z,R|y) - p(z,R|y) \,dR 
			& \leq \int_{\mc{R}_{t^*}} \Err_{p}(z,R|y) \cdot \bar{p}(z,R|y) \,dR \\
			& = \int_{\mc{R}_{t^*}} \Err_{p}(z,R|y) \cdot \bar{p}(R|z,y) \cdot \bar{p}(z|y) \,dR \\
			& \leq C\delta_n^2\int_{\mc{R}_{t^*}} (1 + \|z\|^6)(1 + \|R\|^2)(1 + |y - \bar{u}_{t^*}|^2 + |\bar{u}_{t^*} - \mu_{t^*}|) \cdot \bar{p}(R|z,y) \cdot \bar{p}(z|y) \,dR \\ 
			& \leq C\delta_n^2(1 + \|z\|^6)(1 + |y - \bar{u}_{t^*}|^2 + |\bar{u}_{t^*} - \mu_{t^*}|) \bar{p}(z|y) := \Err_{p}(z,\mc{R}_{t^*}|y) \cdot \bar{p}(z|y).
		\end{aligned}
	\end{equation*} 
	
	\paragraph{Error due to truncation.}
	In bounding the truncation error, observe that $\sigma_{22}^2 \geq \lambda_{\min}(\Theta_{t^*})$ and therefore $\xi_n \geq \{\lambda_{\min}(\Theta_{t^*})\}^{1/2}\sqrt{6 \log \lambda_n}$. Therefore,
	\begin{equation}
		\int_{\mc{R}_{t^*}^c} \exp(-\frac{1}{2}\vech(R)'\Theta_{t^*}^{-1}\vech(R)) \,dR \leq \int_{B_{d \times d}(0,\sqrt{6 \log \lambda_n})^c} \exp(-\frac{1}{2}\|\vech(R)\|^2) \,dR \leq C \delta_n^{2.5},
	\end{equation}
	the latter inequality following by~\eqref{eqn:gaussian-integrals-concentration} for all $n \in \mathbb{N}$ sufficiently large. This upper bound, along with an application of H\"{o}lder's inequality with conjugate exponents $q_1 = 5, q_2 = 1.25$, implies
	\begin{equation}
		\begin{aligned}
			& \int_{\mc{R}_{t^*}^c} \|R\|^2 \exp(-\frac{1}{2}\vech(R)'\Theta_{t^*}^{-1}\vech(R)) \,dR \\
			& \leq \Big(\int \|R\|^{10} \exp(-\frac{1}{2}\vech(R)'\Theta_{t^*}^{-1}\vech(R)) \,dR\Big)^{1/5} \cdot \Big(\int_{\mc{R}_{t^*}^c} \exp(-\frac{1}{2}\vech(R)'\Theta_{t^*}^{-1}\vech(R)) \,dR\Big)^{1/1.25} \\
			& \leq C\delta_n^2.
		\end{aligned}
	\end{equation}
	Finally the approximate density 
	$$
	\tilde{p}(z,R|y) \leq C(1 + \delta_n\|R\|^2) \exp(-\frac{1}{2} \vech(R)'\Theta_{t^*}^{-1}\vech(R)) \cdot \bar{p}(z|y).
	$$
	Thus the error due to truncation is at most
	\begin{equation*}
		\begin{aligned}
			\int_{\mc{R}_{t^*}^c} \tilde{p}(z,R|y) \,dR \leq \int_{\mc{R}_{t^*}^c} C (1 + \delta_n\|R\|^2) \exp(-\frac{1}{2} \vech(R)'\Theta_{t^*}^{-1}\vech(R)) \bar{p}(z|y)\,dR  \leq C\delta_n^{2.5} \bar{p}(z|y) := \wb{\Trunc}_p(\mc{R}_{t^*}) \cdot \bar{p}(z|y).
		\end{aligned}
	\end{equation*}
	
	\paragraph{Error due to normalizing constant.}
	Equation~\eqref{eqn:hessian-residual-truncation-error} shows that $\Q^{\mc{S}_{t^*}}(\wh{R} \in \mc{R}_{t^*}) \geq 1 - C\delta_n^2$. On the other hand, the arguments of the previous paragraph imply that $\int_{\mc{R}_{t^*}} p(z,R|y) \,dR \leq C \bar{p}(z|y)$. Therefore, the error due to using the wrong normalizing constant is at most
	\begin{equation*}
		\Big(\frac{1}{\Q^{\mc{S}_{t^*}}(\wh{R} \in \mc{R}_{t^*})} - 1\Big) \int_{\mc{R}_{t^*}} p(z,R|y) \,dR \leq C \delta_n^2 \cdot \bar{p}(z|y) := \Trunc_{p}(\mc{R}_{t^*}) \cdot \bar{p}(z|y),
	\end{equation*}
	for all $n \in \mathbb{N}$ sufficiently large.
	
	\subsection{Proof of Corollary~\ref{cor:conditional-coverage}}
	\label{subsec:pf-conditional-coverage}
	
	Conditional coverage of the height $\mu_{t^*}$ by $\mc{I}_{\hat{t}}$ follows immediately from~\eqref{eqn:height-pivot}. The probability that $C_{\hat{t}}$ covers the true location $t^*$ is 
	\begin{equation*}
		\Q^{\mc{S}_{t^*}}(t^* \in C_{\hat{t}}) = \int_{\mc{I}_{t^*}} \Q^{\mc{S}_{t^*}}(t^* \in C_{\hat{t}}|\wh{Y} = y) p(y) \,dy.
	\end{equation*}
	Theorem~\ref{thm:pivot} gives the approximation
	\begin{equation*}
		\begin{aligned}
		\wb{\Q}^{\mc{S}_{t^*}}(t^* \in C_{\hat{t}}) 
		& :=  \int_{\mc{I}_{t^*}} \big(1 - \alpha - A\cdot\frac{(y - \bar{u}_{t^*}) + (\bar{u}_{t^*} - \mu_{t^*})}{2}\tr(\bar{H}_{t^*}^{-1}\Lambda_{t^*})\big) p(y) \,dy \\
		& = 1 - \alpha - A\cdot \frac{(\E^{\mc{S}_{t^*}}[\wh{Y}]  - \mu_{t^*})}{2}\tr(\bar{H}_{t^*}^{-1}\Lambda_{t^*}).  
		\end{aligned}
	\end{equation*}
	The difference between this approximation and the true coverage probability is at most
	\begin{equation*}
		\begin{aligned}
			\Big|\Q^{\mc{S}_{t^*}}(t^* \in C_{\hat{t}})  - \wb{\Q}^{\mc{S}_{t^*}}(t^* \in C_{\hat{t}}) \Big| 
			& \leq \int_{\mc{I}_{t^*}} \Err_{W}(y) \cdot p(y) \,dy \\
			& \leq C\delta_n^2\int_{\mc{I}_{t^*}} \Big(1 + |y - \bar{u}_{t^*}| ^2+ |\bar{u}_{t^*} - \mu_{t^*}|^2)\Big) p(y) \,dy.
		\end{aligned}
	\end{equation*}
	To bound the integral in the last line above, we first apply Theorem~\ref{thm:approximate-joint-distribution} which implies that 
	\begin{equation*}
		\Big|\int_{\mc{I}_{t^*}} |y - \bar{u}_{t^*}|^2 p(y) \,dy  - \int_{\mc{I}_{t^*}} |y - \bar{u}_{t^*}|^2 \bar{p}(y) \,dy \Big| \leq \int_{\mc{I}_{t^*}} |y - \bar{u}_{t^*}|^2 \Err_{p}(y) \bar{p}(y) \,dy \leq C(|\bar{u}_{t^*} - \mu_{t^*}| + 1)\delta_n^2,
	\end{equation*}
	with the latter inequality following from~\eqref{eqn:normalized-overshoot-moments}. On the other hand, 
	\begin{equation*}
		\int_{\mc{I}_{t^*}} |y - \bar{u}_{t^*}|^2 \bar{p}(y) \,dy \leq C,
	\end{equation*}
	again by~\eqref{eqn:normalized-overshoot-moments}. Thus we conclude that 
	\begin{equation*}
		\Big|\Q^{\mc{S}_{t^*}}(t^* \in C_{\hat{t}}) - \big(1 - \alpha - A\cdot \frac{(\E^{\mc{S}_{t^*}}[\wh{Y}]  - \mu_{t^*})}{2}\tr(\bar{H}_{t^*}^{-1}\Lambda_{t^*})\big)\Big| \leq C\delta_n^2(1+ |\bar{u}_{t^*} - \mu_{t^*}|^2),
	\end{equation*}
	which is the desired claim.
	
	\subsection{Proof of Theorem~\ref{thm:miscoverage}}
	\label{subsec:pf-miscoverage}
	We will explicitly derive the upper bound on $\Miscov_{T^*}$ in terms of $\wb{\Miscov}_n$ as stated in~\eqref{eqn:miscoverage-location}. The upper bound on $\Miscov_{\mu_{T^*}}$ in terms of $\wb{\Miscov}_n$, given in~\eqref{eqn:miscoverage-height}, is derived in a very similar way and we omit the proof. Throughout this proof, we write $v \equiv v_n$ and $u_{\TG} \equiv u_{\TG}(\alpha,v_n)$.
	
	The proof of~\eqref{eqn:miscoverage-location} will proceed similarly to the proof of Theorem~\ref{thm:epsilon-fpr}: we partition the parameter space $\mc{T}$ into regions and consider the expected number of regions produced that fail to cover in each region. We introduce notation to count the number of points at which a region is produced that fails to cover: for a set $\mc{A} = (\mc{B} \times \mc{I}) \subset \mc{T} \times \R$,
	\begin{equation*}
		V(\mc{A}) := N(t \in \wh{T} \cap \mc{B}: Y_t \in \mc{I}, t^*(t) \not\in C_t),
	\end{equation*}
	and abbreviate $V_u(\mc{B}) = V(\mc{B} \times (u,\infty))$. The overall miscoverage rate of our method is 
	\begin{equation*}
		\frac{\E[V_{u_{\TG}}(\mc{T})]}{\E[N_v(\mc{T})]} = \frac{\E[V_{u_{\TG}}(\mc{T}_{\varepsilon_n}^{*}) + V_{u_{\TG}}(\mc{G}_{\varepsilon_n}) + V_{u_{\TG}}(\mc{T}_0)]}{\E[N_v(\mc{T}_{\varepsilon_n}^{*}) + N_v(\mc{G}_{\varepsilon_n}) + N_v(\mc{T}_0)]}.
	\end{equation*}
	As $V_u(\mc{S}) \leq N_u(\mc{S})$, the miscoverage rate is at most
	\begin{equation}
		\label{pf:miscoverage-1}
		\frac{\E[V_{u_{\TG}}(\mc{T}_{\varepsilon_n}^{*}) + N_{u_{\TG}}(\mc{T}_{\varepsilon_n}) + N_{u_{\TG}}(\mc{T}_0)]}{\E[N_v(\mc{T}_{\varepsilon_n}^{*}) + N_v(\mc{T}_0)]}.
	\end{equation}
	Proposition~\ref{prop:null-false-positive-rate} upper bounds the expected number of null false positives $\E[N_{u_{\TG}(\alpha,v)}(\mc{T}_0)]$ in terms of $\E[N_v(\mc{T}_0)]$:
	\begin{equation*}
		\E[N_{u_{\TG}}(\mc{T}_0)] \leq (1 + C/v^2) \cdot \alpha \cdot \E[N_v(\mc{T}_0)].
	\end{equation*} 
	Propositions~\ref{prop:expectation-counting-process} and~\ref{prop:high-gradient-peaks} upper bound the expected number of high-gradient peaks $\E[N_{u_{\TG}}(\mc{G}_{\varepsilon_n})]$ in terms of the expected number of $\varepsilon_n$-consistent estimates $\E[N_v(\mc{T}_{\varepsilon_n}^{*})]$: 
	\begin{equation*}
		\E[N_{u_{\TG}}(\mc{G}_{\varepsilon_n})] \leq C \delta_n^2 \E[N_{u_{\TG}}(\mc{T}_{\varepsilon_n}^{*})] \leq C \delta_n^2 \E[N_{v}(\mc{T}_{\varepsilon_n}^{*})].
	\end{equation*}
	For all $n \in \mathbb{N}$ sufficiently large, $\mc{T}_{\varepsilon_n}^{*} = \cup_{t^* \in T^*} B(t^*,\varepsilon_n)$ is a union of disjoint balls of radius $\varepsilon_n$. The expected number of $\varepsilon_n$-consistent estimates is therefore bounded below by
	\begin{align*}
		\E[N_v(\mc{T}_{\varepsilon_n}^{*})] = \sum_{t^* \in T^*} \E[N_v(\mc{B}_{t^*})] \geq \sum_{t^* \in T^*} \P(N_v(\mc{B}_{t^*}) = 1).
	\end{align*}
	What is left is to upper bound the expected number of $\varepsilon_n$-consistent estimates with corresponding confidence regions that fail to cover the truth. Applying Proposition~\ref{prop:expectation-counting-process} and Proposition~\ref{prop:no-more-than-one-process-peak-per-signal-peak}:  
	\begin{align*}
		\E[V_{u_{\TG}}(\mc{T}_{\varepsilon_n}^{*})] 
		& = \sum_{t^* \in T^*} \E[V_{u_{\TG}}(\mc{B}_{t^*})] \\
		& \leq \sum_{t^* \in T^*} \E[V(\mc{S}_{t^*})] + \E[N_{\mc{I}_{t^*}^{c}}(\mc{B}_{t^*})] \\
		& \leq \sum_{t^* \in T^*} \E[V(\mc{S}_{t^*})] + C\Big((\bar{u}_{t^*} - \mu_{t^*})\delta_n^2 + \delta_n^2\Big) \E[N(\mc{S}_{t^*})] \\
		& \leq \sum_{t^* \in T^*} \E[V(\mc{S}_{t^*})] + C\Big((\bar{u}_{t^*} - \mu_{t^*})\delta_n^2 + \delta_n^2\Big) \P(N(\mc{S}_{t^*}) = 1).
	\end{align*} 
	Finally, we apply Corollary~\ref{cor:conditional-coverage} to upper bound the expected number of $\varepsilon_n$-consistent estimates of a particular $t^* \in T^*$ with corresponding confidence regions that fail to cover $t^*$:
	\begin{equation*}
		\begin{aligned}
			\E[V(\mc{S}_{t^*})] 
			& = \E[V(\mc{S}_{t^*}) \cdot \1(N(\mc{S}_{t^*}) = 1)] +  \E[V(\mc{S}_{t^*}) \cdot \1(N(\mc{S}_{t^*}) \geq 1)] \\
			& = \Q^{\mc{S}_{t^*}}(t^* \not\in C_{\hat{t}}) +  \E[V(\mc{S}_{t^*}) \cdot \1(N(\mc{S}_{t^*}) \geq 1)] \\
			& \leq C\Big(\alpha + A\cdot \frac{(\E^{\mc{S}_{t^*}}[\wh{Y}]  - \mu_{t^*})}{2}\tr(\bar{H}_{t^*}^{-1}\Lambda_{t^*}) + C\delta_n^2\big(1 + (\bar{u}_{t^*} - \mu_{t^*})^2\big) +  \E[N(\mc{S}_{t^*}) \cdot \1(N(\mc{S}_{t^*}) \geq 1)]\Big) \\
			& \leq C\Big(\alpha + A\cdot \frac{(\E^{\mc{S}_{t^*}}[\wh{Y}]  - \mu_{t^*})}{2}\tr(\bar{H}_{t^*}^{-1}\Lambda_{t^*}) + C\delta_n^2\big(1 + (\bar{u}_{t^*} - \mu_{t^*})^2\big)+  C\exp(-c\lambda_n^2)\Big),
		\end{aligned}
	\end{equation*}
	where the last inequality is established in the proof of Proposition~\ref{prop:no-more-than-one-process-peak-per-signal-peak}. Combining all these bounds, basic algebra implies that 
	\begin{equation*}
		\begin{aligned}
			\frac{\E[V_{u_{\TG}}(\mc{T})]}{\E[N_v(\mc{T})]} & \leq \wb{\Miscov}_n + C\Big(\frac{1}{v^2} + \delta_n^2 + A \cdot \sup_{t^* \in T^*} (\E^{\mc{S}_{t^*}}[\wh{Y}]  - \mu_{t^*}) \delta_n + \sup_{t^* \in T^*} (\bar{u}_{t^*} - \mu_{t^*})^2\delta_n^2\Big),
		\end{aligned}
	\end{equation*}
	establishing~\eqref{eqn:miscoverage-location}.
	
	\section{Peak inference with randomized selection: details and derivations}
	\label{sec:randomized-peak-inference-analysis}

	In this section we explain in more detail the data splitting and data carving methods for peak inference after randomized selection discussed in Section~\ref{sec:randomized-peak-inference}, and provide heuristic derivations of the local expansions~\eqref{eqn:density-height-carve} and~\eqref{eqn:density-location-carve}. Recall that the peak detection portion of both methods is identical, and both methods will select the same randomized peaks $\wc{T}_u$. Thus, for both methods we will be interested in the distribution of peaks nearby $t^* \in T^*$, given that $t^*$ is uniquely $\varepsilon_n^{\gamma}$-consistently estimated by some randomized peak $\check{t} \in \wc{T}_u$. (Recall that $\varepsilon_n^{\gamma} = \sigma_{\gamma} \cdot \varepsilon_n$.) Formally speaking, the event we will condition on is $N^{\sel}(\mc{S}_{t^*}^{\sel}) = 1$, where
		\begin{equation}
			N^{\sel}(\mc{A}) = \Big\{(t',y') \in \mc{A}: Y_{t'}^{\sel} = y', Y_{t'}^{\sel} > u, \nabla Y_{t'}^{\sel} = 0, \nabla^2 Y_{t'}^{\sel} \prec 0\Big\},
		\end{equation}
		and 
		\begin{equation*}
			\mc{S}_{t^*}^{\sel} := \Big\{(t',y') \in \R^{d + 1}: t' \in B_d(t^*,\varepsilon_n^{\gamma}), y' \in \mc{I}_{t^*}^{\sel}\}, \quad  \mc{I}_{t^*}^{\sel} := (\bar{u}_{t^*} \pm \Delta_n) \cap (u,\infty).
		\end{equation*}Section~\ref{subsec:split} discusses data splitting, and Sections~\ref{subsec:carve-densities}-\ref{subsec:density-location-carve} deal with data carving.

	\subsection{Peak inference by data splitting}
	\label{subsec:split}
	As discussed in the main text, the general principle of data splitting is that the data used for inference be independent of the data used for selection. In our peak inference context, the data used for selection is $Y^{\sel}$. Thus, our data splitting method conducts inference using the location and height of peaks of the independent field $Y^{\inf}$,
	\begin{equation*}
		\wt{T} = \{t \in \mc{T}: \nabla Y_t^{\inf} = 0,\nabla^2 Y_t^{\inf} \prec 0\}.
	\end{equation*}
	As always, we would like our confidence regions for the location and height of a true peak $t^* \in T^*$ to be valid conditional on $t^*$ being discovered; in the context of randomized selection, this means valid conditional on $N^{\sel}(\mc{S}_{t^*}) = 1$. To construct these confidence regions, we will be interested in the asymptotic distribution of the height and location of a peak $\tilde{t} \in \wt{T}$ nearby $t^*$ conditional on $N^{\sel}(\mc{S}_{t^*}) = 1$; but since $Y^{\sel}$ and $Y^{\inf}$ are independent, this is the same as the unconditional asymptotic distribution of $\tilde{t}$.
	
	The asymptotic distribution of $\tilde{t}$ and $\wt{Y} := Y_{\tilde{t}}$ can be obtained by applying the results of Sections~\ref{sec:peak-estimation} and~\ref{sec:peak-distribution}, with threshold $u = -\infty$, to the field $Y_{t}^{\inf}/\tau_{\gamma}$ where $\tau_{\gamma} = \sqrt{1 + 1/\gamma}$. From this we deduce the following: (1) for each $t^* \in T^*$, there will be a unique $\tilde{t} \in \wt{T}$ that $\tau_{\gamma} \cdot \varepsilon_n$-consistently estimates $t^*$, (2) letting $\wt{H} = -\nabla^2 Y_{\tilde{t}}^{\inf}$ and $\wt{G} = \wt{H}\Lambda_{\tilde{t}}^{-1}\wt{H}$,
	\begin{equation*}
		\Psi\Big(\frac{\wt{Y} - \mu_{t^*} - \frac{\tau_{\gamma}^2}{2}\tr(\wt{H}^{-1}\Lambda_{\tilde{t}})}{\tau_{\gamma}}\Big) \convweak {\rm Unif}(0,1), \quad (\tilde{t} - t^*)'\wt{G}(\tilde{t} - t^*) \convweak \chi_d^2.
	\end{equation*}
	In each case, the rate of convergence is $\wt{O}(\delta_n^2)$, meaning these are asymptotic second-order pivotal.
	Inverting tests based on these asymptotic pivots thus lead to confidence regions that have second-order accurate asymptotic coverage, both marginally and conditionally on $N^{\sel}(\mc{S}_{t^*}^{\sel}) = 1$. Our overall method for post-selection peak inference via data splitting is summarized in Algorithm~\ref{alg:split}. 
	
	\begin{algorithm}[t]
		\caption{Selective peak inference via data splitting}
		\label{alg:split}
		\begin{algorithmic}[1]
			\REQUIRE Field $Y$, gradient covariance $\Cov[\nabla \epsilon_t]$, pre-threshold $v \in \mathbb{R}$, significance level $\alpha \in (0,1)$, randomization level $\gamma > 0$.
			\STATE Select peaks $\wc{T}_v$ by pre-thresholding $Y^{\sel}$.
			\STATE At each peak $t \in \wc{T}_v$, test the null hypothesis $H_{0,t}$ by applying the TG test to $Y^{\sel}$ with significance threshold $\sigma_{\gamma} \cdot u_{\TG}(\alpha,v)$, making discoveries $\wc{T}_{u_{\TG}(\alpha,v)}$.
			\STATE For each discovery $\check{t} \in \wc{T}_{u_{\TG}(\alpha,v)}$, let $\tilde{t} := \argmin_{t \in \wt{T}} \|\check{t} - \tilde{t}\|$ be the nearest peak in $Y^{\inf}$. Return confidence regions
			\begin{equation}
				\label{eqn:confidence-regions-split}
				\begin{aligned}
					I_{\hat{t}}^{{\rm split}} & := \Big\{\mu: \frac{\alpha}{2} \leq \Psi\Big(\frac{\wt{Y} - \mu_{t^*} - \frac{\tau_{\gamma}^2}{2}\tr(\wt{H}^{-1}\Lambda_{\tilde{t}})}{\tau_{\gamma}}\Big) \leq 1 - \frac{\alpha}{2}\Big\} \\
					C_{\hat{t}}^{{\rm split}} & := \Big\{t: (\tilde{t} - t)'\wt{G}(\tilde{t} - t) \leq q_{\chi_d^2}(1 - \alpha) \Big\}. \\
				\end{aligned}
			\end{equation}
		\end{algorithmic}
	\end{algorithm}

	\subsection{Peak inference by data carving}
	\label{subsec:carve-densities}
	We calibrate post-selection peak inference via data carving using the conditional distribution
	\begin{equation}
		\label{eqn:distribution-after-randomized-selection}
		\Q^{\carve}(\mc{A}) := \P\Big((\hat{t},\wh{Y}) \in \mc{A}|N^{\sel}(\mc{S}_{t^*}^{\sel}) = 1, N(\mc{S}_{t^*}^{\gamma}) = 1\Big),
	\end{equation}
	where
	\begin{equation*}
		\mc{S}_{t^*}^{\gamma} := \Big\{(t,y) \in \R^{d + 1}: t \in B_d(t^*,\varepsilon_n), y \in \mc{I}_{t^*}^{\gamma}\Big\}, \quad \mc{I}_{t^*}^{\gamma} := \bar{u}_{t^*}^{\gamma} \pm \Delta_n, \bar{u}_{t^*}^{\gamma} := (1 - \pi) \bar{u}_{t^*} + \pi \mu_{t^*},
	\end{equation*}
	and we recall that $\pi = \gamma /(1 + \gamma)$ corresponds to the fraction of information ``left behind'' for inference. The event $N^{\sel}(\mc{S}_{t^*}^{\sel}) = 1$ corresponds to the event that $t^*$ is $\varepsilon_n^{\gamma}$-consistently discovered by a randomized peak $\check{t}$. The event $N(\mc{S}_{t^*}^{\gamma}) = 1$ occurs with asymptotic probability tending to one, and ensures there is a unique full-data peak nearby $t^*$.

	A formula for the $\Q^{\carve}$-distribution of $(\hat{t},\wh{Y})$  can be derived from the joint distribution of peaks $\wh{T},\wc{T}$. Let $p^{\pair}(t,y,t',y')$ be the probability density of $(\hat{t}, \wh{Y}, \check{t},\wc{Y})$ under the event $N^{\sel}(\mc{S}_{t^*}^{\sel}) = 1, N(\mc{S}_{t^*}^{\gamma}) = 1$: then 
	\begin{equation*}
		\Q^{\carve}(\mc{A}) = \int_{\mc{A}} \int_{\mc{S}_{t^*}^{\sel}} p^{\pair}(t,y,t',y') \,dt' \,dy', \quad \mc{A} \subseteq \mc{S}_{t^*}^{\gamma}
	\end{equation*}
	and thus the probability density of $(\hat{t},\wh{Y})$ under the same event is
	\begin{equation*}
		p^{\carve}(t,y) = \int_{\mc{S}_{t^*}^{\sel}} p^{\pair}(t,y,t',y') \,dt' \,dy'.
	\end{equation*}
	The density of the height, and the conditional density of the location given the height, are correspondingly given by
	\begin{equation*}
		p^{\carve}(y) := \int_{\mc{B}_{t^*}} p^{\carve}(t, y) \,dt, \quad
		p^{\carve}(t|y) := \frac{p^{\carve}(t,y)}{p^{\carve}(y)}.
	\end{equation*}
	
	\paragraph{Roadmap.}
	We now provide a high-level roadmap of the heuristic derivations of our approximations to $p^{\carve}(t,y), p^{\carve}(y)$ and $p^{\carve}(t|y)$. As in our analysis of the non-randomized case, a central role is played by the intensity function of a point process that counts peaks. In this case the relevant point process counts \emph{pairs} of peaks of $Y$ and $Y^{\sel}$. We introduce the vector-valued random field $\Phi_{t,t',y,y'}= (Y_{t'}^{\sel} - y',Y_t - y, \nabla Y_{t'}^{\sel},\nabla Y_t)$, and notice that pairs of peaks correspond to zeros of $\Phi$:
	\begin{equation}
		N^{\pair}(\mc{A},\mc{A}') := N\Big\{(t,y) \in \mc{A}, (t',y') \in \mc{A}':\Phi_{t,t',y,y'} = 0, \nabla^2 Y_t \prec 0, \nabla^2 Y_{t'}^{\sel} \prec 0\Big\}.
	\end{equation}
	The Kac-Rice theorem yields a formula for the intensity $\rho^{\pair}(t,y,t',y')$ of $N^{\pair}(\mc{A},\mc{A'})$. 
	In Section~\ref{subsec:pairs-intensity}, we provide a local expansion of $\rho^{\pair}(t,y,t',y')$ around $t' = t, y' = \bar{u}_{t^*}$. Then in Sections~\ref{subsec:joint-intensity-carve} and \ref{subsec:density-height-carve}  we marginalize this local expansion over $(t',y')$ to obtain approximations $\bar{\rho}^{\carve}(t,y)$ and $\bar{\rho}^{\carve}(y)$ to the ``marginal'' intensity functions 
	\begin{equation*}
		\begin{aligned}
			\rho^{\carve}(t,y) := \int_{\mc{I}_{t^*}^{\gamma}} \int_{\mc{B}_{t^*}^{\gamma}} \rho^{\pair}(t,y,t',y') \,dt' \,dy', \quad \rho^{\carve}(y) := \int_{\mc{B}_{t^*}} \rho^{\carve}(t,y) \,dt.
		\end{aligned}
	\end{equation*}
	Finally, our approximations to the conditional densities of interest will be 
	\begin{equation*}
		\bar{p}^{\carve}(t|y) = \frac{\bar{\rho}^{\carve}(t,y)}{\bar{\rho}^{\carve}(y)}, \quad \bar{p}^{\carve}(y) \propto \bar{\rho}^{\carve}(y).
	\end{equation*}
	yielding the explicit local expansions stated in~\eqref{eqn:density-height-carve} and~\eqref{eqn:density-location-carve}.
	
	Our calculations will be heuristic in two senses. First, we will not provide explicit upper bounds on the relative error due to local expansion and truncation, though we expect that the error terms are ultimately second-order. Second, we are (implicitly) approximating the density $p^{\carve}$ (up to a constant of
	proportionality) by the intensity $\rho^{\pair}$, through
	\begin{equation*}
		p^{\carve}(t,y,t',y') = \lim_{r \to 0} \frac{1}{\nu_r} \frac{\P(N^{\pair}(\mc{A}_r,\mc{A}_{r}') = 1)}{\P(N^{\pair}(\mc{S}_{t^*}^{\gamma},\mc{S}_{t^*}^{\sel}) = 1)}\approx \lim_{r \to 0}\frac{\E[N^{\pair}(\mc{A}_r)]}{\E[N^{\pair}(\mc{S}_{t^*}^{\gamma},\mc{S}_{t^*}^{\sel})]} = \frac{\rho^{\pair}(t,y,t',y')}{\E[N^{\pair}(\mc{S}_{t^*}^{\gamma},\mc{S}_{t^*}^{\sel})]},
	\end{equation*}
	where $\mc{A}_r = B(t,r) \times [y - r,y + r]$ and $\mc{A}_r' = B(t',r) \times [y' - r,y' + r]$ and $\nu_r$ is the (Lebesgue) measure of $\mc{A}_r \times \mc{A}_{r}'$. This approximation ignores the possibility that $N^{\pair}(\cdot,\cdot) > 1$.
	We expect that analysis similar to that used in the proof of Proposition~\ref{prop:no-more-than-one-process-peak-per-signal-peak} should confirm that the error thus incurred is at most $C\exp(-c\lambda_n^2)$ but we do not give an explicit bound to that effect.
	
	\subsection{Local expansion of $\rho^{\pair}$}
	\label{subsec:pairs-intensity}
	Under the smoothness assumptions of Section~\ref{subsec:model}, the counting process $N^{\pair}(\mc{A},\mc{A}')$ satisfies all of the conditions of the Kac-Rice theorem, so long as $\gamma > 0$ to rule out degenerate cases. Thus we have a Kac-Rice formula for its intensity function,	
	\begin{equation}
		\label{eqn:kac-rice-pairs}
		\rho^{\pair}(t,y,t',y') = \E\Big[|\det(\nabla \Phi_{t,t',y,y'})| \cdot \1(\nabla^2 Y_t \prec 0, \nabla^2 Y_{t'}^{\sel} \prec 0)| \Phi_{t,t',y,y'} = 0\Big] \cdot f_{\Phi_{t,t',y,y'}}(0).
	\end{equation}
	Above $\nabla \Phi_{t,t',y,y'} \in \R^{2(d + 1) \times 2(d + 1)} $ is the Jacobian of $\Phi$, while $f_{\Phi_{t,t',y,y'}}(0)$ is the (Gaussian) density of $\Phi_{t,t',y,y'}$ evaluated at $0$.  From this formula we derive a local expansion $\bar{\rho}^{\pair}(t,y,t + h,y')$ around $h = 0,y' = \bar{u}_{t^*}$, as usual by separately computing local expansions of the density and determinant terms.
	
	\subsubsection{Determinant term}
	The determinant term is
	\begin{equation*}
		\E\Big[|\det(\nabla \Phi_{t,t',y,y'})| \cdot \1(\nabla^2 Y_t \prec 0, \nabla^2 Y_{t'}^{\sel} \prec 0)|\Phi_{t,t',y,y'} = 0\Big].
	\end{equation*}
	The Jacobian $\nabla \Phi_{t,t',y,y'}$ is block diagonal and direct calculation shows that
	$\det(\nabla \Phi_{t,t',y,y'}) = \det(\nabla^2 Y_t) \cdot \det(\nabla^2 Y_{t'}^{\sel})$,
	and therefore the determinant term is
	\begin{equation*}
		\E\Big[\det(\nabla^2 Y_t) \cdot \det(\nabla^2 Y_{t'}^{\sel})  \cdot \1(\nabla^2 Y_t \prec 0, \nabla^2 Y_{t'}^{\sel} \prec 0)|\Phi_{t,t',y,y'} = 0\Big].
	\end{equation*}
	Ignoring the negative-definite indicators and swapping determinant and expectation gives
	\begin{equation}
		\label{eqn:pairs-intensity-jacobian-term-1}
		\begin{aligned}
			& \E\Big[\det(\nabla^2 Y_t) \cdot \det(\nabla^2 Y_{t'}^{\sel})  \cdot \1(\nabla^2 Y_t \prec 0, \nabla^2 Y_{t'}^{\sel} \prec 0)|\Phi_{t,t',y,y'} = 0\Big] \\
			& \approx \det\Big(\E[\nabla^2 Y_t|\Phi_{t,t',y,y'} = 0]\Big) \cdot \det\Big(\E[\nabla^2 Y_{t'}^{\sel}|\Phi_{t,t',y,y'} = 0]\Big).
		\end{aligned}
	\end{equation}
	Reasoning similar to that used in the proof of Lemma~\ref{lem:approximation-peak-intensity} should justify that the error incurred in~\eqref{eqn:pairs-intensity-jacobian-term-1} is second-order, but we do not prove this. Instead we move to computing a local expansion of~\eqref{eqn:pairs-intensity-jacobian-term-1} about $t' = t, y' = \bar{u}_{t^*}$. 
	
	\paragraph{Conditional expectation of Hessian.}
	The conditional expectation of $-\nabla^2 Y^{\sel}_{t'}$ given $\Phi_{t,t',y,y'} = 0$ is
	\begin{equation*}
		J_{t'}^{t,t',y,y'} := \Big(-\nabla^2\mu_{t'} + \Cov[\nabla^2 Y_{t'}^{\sel},\Phi_{t,t',y,y'}]\Big(\big\{\Var[\Phi_{t,t',y,y'}]\big\}^{-1}\E[\Phi_{t,t',y,y'}]\Big),
	\end{equation*}
	where the array $\Cov[\nabla^2 Y_{t'}^{\sel},\Phi_{t,t',y,y'}]$ has entries $\Cov[\nabla^2 Y_{t'}^{\sel},\Phi_{t,t',y,y'}]_{ijk} = \Cov[\partial_{ij} Y_{t'}^{\sel}, (\Phi_{t,t',y,y'})_k]$. We will show that at points $t' = t + h, h = \wt{O}(\delta_n)$ and $y' = \bar{u}_{t^*} + \wt{O}(1)$, 
	\begin{equation*}
		J_{t + h}^{t, t + h,y,y'} \approx \bar{H}_t + T_{10}^{J}(h) + T_{01}^{J}(y'), \quad \textrm{where} \quad \bar{H}_{t} := H_{t|\bar{u}_{t^*}}.
	\end{equation*}
	(The first-order terms $T_{10}^{J}(h)$ and $T_{01}^{J}(y')$ are defined below.) To derive this approximation to $J_{t + h}^{t, t + h,y,y'}$, we start by taking a first-order Taylor expansion of each term in the definition of $J_{t'}^{t,t',y,y'}$, which gives the approximations:
	\begin{align*}
		\nabla^2\mu_{t + h} & \approx \nabla^2 \mu_t + \nabla^3 \mu_t(h) \\
		\Var[\Phi_{t,t + h}(y,y')] 
		& = 
		\begin{bmatrix}
			1 + \gamma & K(t + h,t) & 0 & K_{01}(t + h,t) \\
			K(t, t + h) & 1 & K_{01}(t,t + h) & 0 \\
			0 & K_{10}(t + h,t) & (1 + \gamma) \Lambda_{t + h} & K_{11}(t + h,t) \\
			K_{10}(t,t + h) & 0 & K_{11}(t,t + h) & \Lambda_t
		\end{bmatrix} \\
		& \approx
		\begin{bmatrix}
			1 + \gamma & 1 & 0 & 0 \\
			1 & 1 & 0 & 0 \\
			0 & 0 & (1 + \gamma) \Lambda_t  & \Lambda_t\\
			0 & 0 & \Lambda_t & \Lambda_t
		\end{bmatrix} +
		\begin{bmatrix}
			0 & 0 & 0 & \Lambda_t h \\
			0 & 0 & -\Lambda_t h & 0 \\
			0 & -\Lambda_t h & (1 + \gamma) \dot{\Lambda}_t(h) & K_{21}(t)(h) \\
			\Lambda_th & 0 & K_{21}(t)(h) & 0
		\end{bmatrix}
		\\
		& := V_t^{\Phi} + A_t(h)
		\\
		\E[\Phi_{t,t + h,y,y'}] 
		& =
		\begin{bmatrix}
			y' - \mu_{t + h} \\
			y - \mu_t \\
			-\nabla \mu_{t + h} \\
			-\nabla \mu_t
		\end{bmatrix} \\
		& \approx
		\begin{bmatrix}
			y' - \mu_{t}  \\
			y - \mu_t \\
			-\nabla \mu_{t} \\
			-\nabla \mu_t
		\end{bmatrix}
		+
		\begin{bmatrix}
			- \nabla\mu_t'h \\
			0 \\
			- \nabla^2 \mu_t h \\
			0
		\end{bmatrix} \\
		& := \E[\Phi_{t,t,y,y'}]  + B_t(h). 	
	\end{align*}
	Finally, we introduce some notation to compactly write the first-order Taylor expansion of $\Cov[\nabla^2 Y_{t + h}, \Phi_{t,t + h,y,y'}]$. For an array $A$, we write $A = \begin{pmatrix}D_1 \\ \vdots \\ D_m\end{pmatrix}$ to mean that $A_{ijk} = D_k$. In this notation, 
	\begin{align*}
		\Cov[\nabla^2 Y^{\sel}_{t + h},\Phi_{t,t + h,y,y'}] 
		& = 
		\begin{pmatrix}
			-(1 + \gamma) \cdot \Lambda_{t + h} \\
			K_{20}(t + h,t) \\
			(1 + \gamma) \cdot K_{21}(t + h) \\
			K_{21}(t + h,t)\\
		\end{pmatrix} \\
		& \approx
		\begin{pmatrix}
			-(1 + \gamma) \cdot \Lambda_t  \\
			-\Lambda_t \\
			(1 + \gamma) \cdot K_{21}(t) \\
			K_{21}(t) \\
		\end{pmatrix}
		+ 
		\begin{pmatrix}
			(1 + \gamma) \cdot \dot{\Lambda}_t(h) \\
			K_{30}(t)(h) \\
			(1 + \gamma) \cdot K_{31}(t)(h) + (1 + \gamma) \cdot K_{22}(t)(h) \\
			K_{31}(t)(h)
		\end{pmatrix} \\
		& := C_t^{\nabla^2 Y^{\sel}, \Phi} + C_t(h).
	\end{align*}
	Combining these approximations, we conclude that the following approximation of the conditional expectation of the Hessian is accurate up to (nearly) second-order relative error:
	\begin{equation*}
		\begin{aligned}
			J_{t}^{t,t + h,y,y'}
			& \approx -\nabla^2 \mu_t + C_t^{\nabla^2 Y^{\sel}, \Phi} \{V_t^{\Phi}\}^{-1}\E[\Phi_{t,t,y,y'}] \\
			& \;\; - \nabla^3 \mu_t(h) + C_t(h) \{V_t^{\Phi}\}^{-1}\E[\Phi_{t,t,y,y'}]  - C_t^{\nabla^2 Y^{\sel}, \Phi}\{V_{t}^{\Phi}\}^{-1}A_t(h)\{V_{t}^{\Phi}\}^{-1}\E[\Phi_{t,t,y,y'}] + C_t^{\nabla^2 Y^{\sel}, \Phi} \{V_t^{\Phi}\}^{-1}B_t(h).
		\end{aligned}
	\end{equation*}
	This can be further simplified. In the first-order terms above $\E[\Phi_{t,t,y,y'}]$ can be replaced by $\E[\Phi_{t,t,y,\bar{u}_{t^*}}]$, incurring second-order error $\wt{O}(\delta_n)$. Additionally, the leading-order term above simplifies to
	\begin{equation*}
		\begin{aligned}
			-\nabla^2 \mu_t + C_t^{\nabla^2 Y^{\sel}, \Phi} \{V_t^{\Phi}\}^{-1}\Phi_{t,t}(y,y')
			& =
			-\nabla^2 \mu_t +
			\begin{bmatrix}
				-(1 + \gamma) \cdot \Lambda_t  \\
				-\Lambda_t \\
				(1 + \gamma) \cdot K_{21}(t) \\
				K_{21}(t) \\
			\end{bmatrix}
			\Bigg(
			\begin{bmatrix}
				1 + \gamma & 1 & 0 & 0 \\
				1 & 1 & 0 & 0 \\
				0 & 0 & (1 + \gamma) \Lambda_t  & \Lambda_t\\
				0 & 0 & \Lambda_t & \Lambda_t
			\end{bmatrix}^{-1}
			\begin{bmatrix}
				y' - \mu_{t}  \\
				y - \mu_t \\
				-\nabla \mu_{t} \\
				-\nabla \mu_t
			\end{bmatrix}
			\Bigg)
			\\
			& = 
			-\nabla^2 \mu_t +
			\frac{1}{\gamma}
			\begin{bmatrix}
				-(1 + \gamma) \cdot \Lambda_t  \\
				-\Lambda_t \\
				(1 + \gamma) \cdot K_{21}(t) \\
				K_{21}(t) \\
			\end{bmatrix}
			\begin{bmatrix}
				(y' - \mu_t) - (y - \mu_t) \\
				-(y' - \mu_t) + (1 + \gamma)(y - \mu_t) \\
				0 \\
				\gamma \Lambda_{t}^{-1} \nabla \mu_t
			\end{bmatrix} \\
			& = -\nabla^2 \mu_t + (y' - \mu_t) \Lambda_t + \Gamma_t(\nabla \mu_t) \\
			& = \bar{H}_{t} + (y' - \bar{u}_{t^*}) \Lambda_t.
		\end{aligned}
	\end{equation*}
	Combining these yields our approximation to the Hessian of the selection field:
	\begin{equation*}
		\begin{aligned}
			J_t^{t,t+ h,y,y'} & \approx \bar{H}_t + T_{10}^{J}(h) + T_{01}^{J}(y'), \quad \textrm{where}\\
			T_{10}^{J}(h) & = -\nabla^3 \mu_t(h) + C_t(h) \{V_t^{\Phi}\}^{-1}\E[\Phi_{t,t,y,\bar{u}_{t^*}}] - C_t^{\nabla^2 Y^{\sel}, \Phi}\{V_{t}^{\Phi}\}^{-1}A_t(h)\{V_{t}^{\Phi}\}^{-1}\E[\Phi_{t,t,y,\bar{u}_{t^*}}] + C_t^{\nabla^2 Y^{\sel}, \Phi} \{V_t^{\Phi}\}^{-1}B_t(h) \\
			T_{01}^{J}(y') & = (y' - \bar{u}_{t^*}) \Lambda_t.
		\end{aligned}
	\end{equation*}
	Similar analysis implies that 
	$$
	\E[-\nabla^2 Y_t|\Phi_{t,t + h,y,y'} = 0] := H_{t}^{t,t + h}(y,y') \approx H_{t|y} + T_{10}^{H}(h).
	$$ 
	As $T_{10}^{H}(h) = \wt{O}(\delta_n), T_{10}^{J}(h) = \wt{O}(\delta_n), T_{01}^{J}(y') = \wt{O}(\delta_n)$, their product is second-order. Taylor expansion of the function $\det(A  + E)$ around $E = 0$ finally yields our first-order expansion of the overall determinant term:
	\begin{equation}
		\label{eqn:pairs-intensity-approximate-jacobian-term}
		\det(H_{t|y}) \cdot \det(\bar{H}_t) \cdot \Big(1 + \tr\big(\{H_{t|y}\}^{-1}T_{10}^{H}(h)\big) + \tr\big(\{\bar{H}_{t}\}^{-1}T_{10}^J(h)\big) + \tr\big(\{\bar{H}_t\}^{-1}T_{01}^J(y')\big)\Big).
	\end{equation} 
	
	\subsubsection{Density term}
	The density term can be factorized as 
	\begin{equation}
		\label{eqn:pairs-intensity-density-term}
		f_{\Phi_{t,t',y,y'}}(0) = f_{Y_t}(y) \cdot f_{\nabla Y_t}(0) \cdot f_{Y_{t'}^{\sel},\nabla Y_{t'}^{\sel}|Y_t,\nabla Y_t}(y',0|y,0),
	\end{equation}
	where $f_{Y_{t'}^{\sel},\nabla Y_{t'}^{\sel}|Y_t,\nabla Y_t}$ denotes the conditional density of $(Y_{t'}^{\sel},\nabla Y_{t'}^{\sel})$ given $(Y_t,\nabla Y_t)$, and is given by
	\begin{equation*}
		f_{Y_{t'}^{\sel},\nabla Y_{t'}^{\sel}|Y_t,\nabla Y_t}(y',0|y,0) = \frac{1}{\sqrt{(2\pi)^{d + 1} \det(V_{t'|t,y})}} \cdot \exp\Big(-\frac{1}{2}(\mu_{t'|t,y} ~ \nabla \mu_{t'|t,y})'V_{t'|t,y}^{-1}(\mu_{t'|t,y} ~ \nabla \mu_{t'|t,y})\Big).
	\end{equation*}
	Above 
	\begin{align*}
		V_{t'|t,y} := \Var[(Y_{t'}^{\sel}, \nabla Y_{t'}^{\sel})|Y_t,\nabla Y_t],  \; \mu_{t'|t,y} := \E[Y_{t'}^{\sel}|Y_t = y,\nabla Y_t = 0], \;
		\nabla \mu_{t'|t,y} := \E[\nabla Y_{t'}^{\sel}|Y_t = y,\nabla Y_t = 0].
	\end{align*}
	We now give expansions of each of these terms, and then plug back into~\eqref{eqn:pairs-intensity-density-term} to give an overall expansion of the density term.
	
	\paragraph{Conditional expectation.}
	The conditional expectation of $Y_{t + h}^{\sel}|Y_t = y,\nabla Y_t = 0$ is 
	\begin{equation*}
		\mu_{t + h|t,y} = \mu_{t + h} + K(t + h,t) (y - \mu_t) + K_{01}(t + h,t)\Lambda_{t}^{-1}(0 - \nabla\mu_t).
	\end{equation*}
	Taking a second-order Taylor expansion of $\mu_{t + h|t,y}$ around $h = 0$ and disregarding terms that are $\wt{O}(\delta_n^2)$ gives
	\begin{align*}
		\mu_{t + h|t,y} \approx \mu_{t} + \frac{1}{2}h'\nabla^2\mu_t h + (y - \mu_t) - \frac{1}{2}h'\Lambda_t h (y -\mu_t) - h'\nabla\mu_t - K_{21}(t)(h,h)'\Lambda_{t}^{-1}\nabla\mu_t =  y - \frac{1}{2}h'H_{t|y}h.
	\end{align*}
	Similarly, the conditional expectation of $\nabla {Y}_{t + h}^{\sel}|\{Y_t = y,\nabla Y_t = 0\}$ is 
	\begin{equation*}
		\nabla \mu_{t + h|t,y} = \nabla\mu_{t + h} + K_{10}(t + h,t) (y - \mu_t) + K_{11}(t + h,t)\Lambda_{t}^{-1}(0 - \nabla\mu_t).
	\end{equation*}
	Taking a first-order Taylor expansion of $\nabla \mu_{t + h|t,y}$ around $h = 0$ and disregarding terms that are $\wt{O}(\delta_n^2)$ gives
	\begin{align*}
		 \nabla \mu_{t + h|t,y} \approx -H_{t|y}h + \frac{1}{2}\nabla^3\mu_{t}(h,h) + \frac{1}{2}(y - \mu_t)K_{30}(t)(h,h).
	\end{align*}
	
	\paragraph{Conditional variance.}
	The conditional variance is
	\begin{equation*}
		\begin{aligned}
		& V_{t + h|t,y} \\
		& = 
		\begin{bmatrix}
			1 + \gamma - \{K(t + h,t)\}^2 - K_{01}(t + h,t)\Lambda_t^{-1}K_{10}(t,t + h) & - K(t + h,t)K_{01}(t,t + h)' - K_{01}(t + h,t)'\Lambda_{t}^{-1}K_{11}(t,t + h) \\
			- K_{11}(t + h,t)\Lambda_{t}^{-1}K_{01}(t,t + h) - K(t + h,t)K_{01}(t,t + h) & (1 + \gamma)\Lambda_{t} - K_{11}(t + h,t)\Lambda_{t}^{-1}K_{11}(t,t + h) 
		\end{bmatrix}.
		\end{aligned}
	\end{equation*}
	Taking a first-order Taylor expansion of $V_{t + h|t,y}$ around $h = 0$ and disregarding terms that are $\wt{O}(\delta_n^2)$ yields
	\begin{equation*}
		V_{t+h|t,y} \approx 
		\begin{bmatrix}
			\gamma & 0' \\
			0 & \gamma \Lambda_t - \dot{\Lambda}_t(h)
		\end{bmatrix}.
	\end{equation*}
	\paragraph{Local expansion of density term.}
	Plugging in our approximations to the conditional mean and variance, applying Taylor expansion of the matrix inverse and determinant functions, and further disregarding all terms that are $\wt{O}(\delta_n^2)$ yields our local expansion of the density term:
	\begin{equation}
		\label{eqn:pairs-intensity-approximate-density-term}
		\frac{1 + T_{10}^{p_{|t,y}}(h) + T_{30}^{p_{|t,y}}(h) + T_{21}^{p_{t|y}}(h,y') + T_{11}^{p_{t|y}}(h,y')}{\sqrt{(2\pi)^{d + 1}\gamma^{d + 1}\det(\Lambda_t)}} \exp\Big(-\frac{1}{2\gamma}(y' - y)^2 - \frac{1}{2\gamma}h'\bar{H}_t\Lambda_t^{-1}H_{t|y}h\Big),
	\end{equation}
	where 
	\begin{align*}
		T_{10}^{p_{|t,y}}(h) & = \frac{1}{2\gamma}\tr\big(\Lambda_t^{-1}\{\dot{\Lambda}_t(h)\}\big) - \frac{1}{2\gamma} \nabla \mu_t'h \\
		T_{30}^{p_{|t,y}}(h) & =  \frac{1}{\gamma}h'H_{t|y}\Lambda_{t}^{-1}\Big(\frac{1}{2}\nabla^3\mu_{t}(h,h) + \frac{1}{2}(y - \mu_t)K_{30}(t)(h,h)\Big) - \frac{1}{2\gamma}h'H_{t|y} \Lambda_t^{-1}\{\dot\Lambda_t(h)\}H_{t|y}h \\
		T_{21}^{p_{|t,y}}(h,y') & = \frac{-(y' - \bar{u}_{t^*})}{2\gamma}h'H_{t|y}h \\
		T_{11}^{p_{|t,y}}(h,y') & = -\frac{1}{2}(y' - \bar{u}_{t^*}) \nabla \mu_t'h.
	\end{align*} 
	
	\paragraph{Local expansion of pairs intensity.}
	Combining~\eqref{eqn:pairs-intensity-approximate-jacobian-term} and~\eqref{eqn:pairs-intensity-approximate-density-term} gives our local expansion of the pairs intensity around $h = 0,y' = \bar{u}_{t^*}$:
	\begin{equation*}
		\begin{aligned}
			& \bar{\rho}^{\pair}(t,y,t + h,y') \\
			& := \det(H_{t|y}) \cdot \det(\bar{H}_{t}) \cdot \frac{(1 + T_{10}^{\pair}(h) + T_{01}^{\pair}(y') + T_{30}^{\pair}(h) + T_{21}^{\pair}(h,y') + T_{11}^{\pair}(h,y')}{\sqrt{(2\pi)^{d + 1}\gamma^{d + 1}\det(\Lambda_t)}}  \\
			& \quad \times \exp\Big(-\frac{1}{2\gamma}(y' - y)^2 - \frac{1}{2\gamma}h'\bar{H}_t\Lambda_t^{-1}H_{t|y}h\Big) \cdot f_{Y_t,\nabla Y_t}(y,0),
		\end{aligned}
	\end{equation*}
	where
	\begin{equation*}
		\begin{aligned}
			T_{10}^{\pair}(h) 
			& = \tr\big(H_{t|y}^{-1}T_{1}^{H}(h)\big) + \tr\big(\bar{H}_{t}^{-1}T_{1}^{J}(h)\big) + T_{1}^{p_{|t,y}}(h) \\
			T_{30}^{\pair}(h) & = T_{3}^{p_{|t,y}}(h) \\
			T_{01}^{\pair}(y') & = (y' - \bar{u}_{t^*}) \cdot \tr({\bar{H}_{t}}^{-1}\Lambda_t) \\
			T_{21}^{\pair}(h,y') & = T_{21}^{p_{|t,y}}(h,y') = \frac{-(y' - \bar{u}_{t^*})}{2\gamma}h'H_{t|y}h \\
			T_{11}^{\pair}(h,y') & = T_{11}^{p_{|t,y}}(h,y').
		\end{aligned}
	\end{equation*}
	
	\subsection{Intensity of location and height}
	\label{subsec:joint-intensity-carve}
	We now marginalize over $(h,y')$ in $\bar{\rho}^{\pair}(t,y,t + h, y')$ to compute an approximation $\bar{\rho}^{\carve}(t,y)$ to the marginal intensity $\rho^{\carve}(t,y)$. Integrating over $h \in \Rd$ and $y' \in (u,\infty)$, the approximate marginal intensity we obtain is 
	\begin{equation*}
		\begin{aligned}
			& \int_{u}^{\infty} \int_{\Rd} \bar{\rho}^{\pair}(t,y,t + h,y') \,dh \,dy' \\
			& = \sqrt{\frac{\det(H_{t|y}) \cdot \det(\bar{H}_{t})}{2\pi \gamma}} \cdot f_{Y_t,\nabla Y_t}(y,0) \cdot \int_{u}^{\infty}  \Big(1 + \frac{1}{2}(y'  - \bar{u}_{t^*}) \cdot \tr(\bar{H}_{t}^{-1}\Lambda_t)\Big) \cdot \exp\Big(-\frac{1}{2\gamma}(y' - y)^2\Big) \,dy' \\
			& \approx \sqrt{\frac{\det(H_{t|y}) \cdot \det(\bar{H}_{t})}{2\pi \gamma}} \cdot f_{Y_t,\nabla Y_t}(y,0) \cdot \exp\Big(\frac{1}{2}(y - \bar{u}_{t^*})\tr(\bar{H}_t^{-1}\Lambda_t)\Big) \cdot \int_{u}^{\infty} \sqrt{\frac{1}{2\pi\gamma}} \exp\Big(-\frac{1}{2\gamma}(y' - y - \frac{\gamma}{2}\tr(\bar{H}_{t}^{-1}\Lambda_t))^2\Big) \,dy' \\
			& = \sqrt{\det(H_{t|y}) \cdot \det(\bar{H}_{t})} \cdot f_{Y_t,\nabla Y_t}(y,0) \cdot \Psi\Big(\frac{u - y - \frac{\gamma}{2}\tr(\bar{H}_{t}^{-1}\Lambda_t)}{\sqrt{\gamma}}\Big) \cdot \exp\Big(\frac{1}{2}(y - \bar{u}_{t^*})\tr(\bar{H}_t^{-1}\Lambda_t)\Big) \\
			& \approx \sqrt{\det(H_{t|y}) \cdot \det(\bar{H}_{t})} \cdot f_{Y_t,\nabla Y_t}(y,0) \cdot \Psi\Big(\frac{u - y - \frac{\gamma}{2}\tr(\bar{H}_{t^*}^{-1}\Lambda_{t^*})}{\sqrt{\gamma}}\Big) \cdot \exp\Big(\frac{1}{2}(y - \bar{u}_{t^*})\tr(\bar{H}_{t^*}^{-1}\Lambda_{t^*})\Big).
		\end{aligned}
	\end{equation*}
	Taylor expansion of $\det(\bar{H}_{t})$ about $t = t^*$ and $\det(H_{t|y})$ about $t = t^*, y = \bar{u}_{t^*}^{\gamma}$ gives
	\begin{equation*}
		\begin{aligned}
			\det(H_{t^* + h|y}) & \approx \det(\bar{H}_{t^*}^{\gamma})\Big(1 + \tr\big(\{\bar{H}_{t^*}^{\gamma}\}^{-1}\dot{H}_{t^*|\bar{u}_{t^*}^{\gamma}}(h)\big) +  (y - \bar{u}_{t^*}^{\gamma}) \tr\big(\{\bar{H}_{t^*}^{\gamma}\}^{-1}\Lambda_{t^*}\big)\Big) \\
			\det(\bar{H}_{t^* + h}) & \approx \det(\bar{H}_{t^*})\Big(1 + T_{10}^{\det}(h)\Big).
		\end{aligned}
	\end{equation*}
	where $\bar{H}_{t^*}^{\gamma} = H_{t^*|\bar{u}_{t^*}^{\gamma}}$. Calculations similar to those used in the proof of Lemma~\ref{lem:asymptotic-expansion-density} in  Section~\ref{subsec:asymptotic-analysis-density} imply
	\begin{equation*}
		\begin{aligned}
			f_{Y_t,\nabla Y_t}(0,y) 
			& \approx \Big(1 - \frac{T_{30}^{\nabla Y}(h)}{2} + T_{10}^{\nabla Y}(h) + \frac{1}{2}(y - \bar{u}_{t^*}^{\gamma})h'\nabla^2\mu_{t^*}h\Big) \exp\Big(-\frac{1}{2}(y - \mu_{t^*})^2\Big) \\
			& \times \frac{1}{\sqrt{(2\pi)^{d + 1} \det(\Lambda_{t^*})}}\exp\Big(-\frac{1}{2}h'\bar{H}_{t^*}^{\gamma}\Lambda_{t^*}^{-1}(-\nabla^2\mu_{t^*})h\Big).
		\end{aligned}
	\end{equation*}
	Finally, the exponential term in $y$ terms can be written as
	\begin{equation}
	\begin{aligned}
		\exp\Big(-\frac{1}{2}(y - \mu_{t^*})^2 + \frac{1}{2}(y - \bar{u}_{t^*})\tr(\bar{H}_{t^*}^{-1}\Lambda_{t^*})\Big) 
		& \approx \exp\Big(-\frac{1}{2}\big(y - {\mu}_{t^*} - \frac{1}{2}\tr(\bar{H}_{t^*}^{-1}\Lambda_{t^*})\big)^2 + \frac{1}{2}(\mu_{t^*} - \bar{u}_{t^*})\tr(\bar{H}_{t^*}^{-1}\Lambda_{t^*})\Big),
	\end{aligned}
	\end{equation}
	up to relative error ${O}(\delta_n^2)$. Altogether this yields the following first-order expansion of the joint intensity:
	\begin{equation}
		\label{eqn:approximate-intensity-carve}
		\begin{aligned}
			\bar{\rho}^{\carve}(t^* + h,y) & := \sqrt{\frac{\det(\bar{H}_{t^*}) \cdot \det(\bar{H}_{t^*}^{\gamma})}{(2\pi)^{d + 1} \det(\Lambda_{t^*})}}  \cdot \exp\Big(\frac{1}{2}(\mu_{t^*} - \bar{u}_{t^*})\tr(\bar{H}_{t^*}^{-1}\Lambda_{t^*})\Big) \\
			& \; \times  \Big(1 + T_{10}^{\pair}(h) + T_{01}^{\pair}(y) + T_{30}^{\pair}(h) + T_{21}^{\pair}(h,y) \Big) \cdot \Psi\Big(\frac{u - y - \frac{\gamma}{2}\tr(\bar{H}_{t^*}^{-1}\Lambda_{t^*})}{\sqrt{\gamma}}\Big) \\ 
			& \; \times  \exp\Big(-\frac{1}{2}\big(y - \mu_{t^*} - \frac{1}{2}\tr(\bar{H}_{t^*}^{-1}\Lambda_{t^*})\big)^2\Big) \cdot \exp\Big(-\frac{1}{2}h'\bar{H}_{t^*}^{\gamma}\Lambda_{t^*}^{-1}(-\nabla^2\mu_{t^*})h\Big),
		\end{aligned}
	\end{equation}
	where
	\begin{equation*}
		\begin{aligned}
			T_{10}^{\carve}(h) 
			& := \frac{T_{10}^{\det}(h)}{2} + \frac{1}{2}\tr\Big(\{\bar{H}_{t^*}^{\gamma}\}^{-1}\dot{H}_{t^*|\bar{u}_{t^*}^{\gamma}}(h)\Big) + T_{10}^{\nabla Y}(h)\\
			T_{01}^{\carve}(y) & := \frac{1}{2}(y - \bar{u}_{t^*}^{\gamma})\tr(\{\bar{H}_{t^*}^{\gamma}\}^{-1}\Lambda_{t^*})\\
			T_{30}^{\carve}(h) & := -\frac{T_{30}^{\nabla Y}(h)}{2} \\
			T_{21}^{\carve}(h,y) & := \frac{1}{2}(y - \bar{u}_{t^*}^{\gamma}) h'\nabla^2 \mu_{t^*}h.
		\end{aligned}
	\end{equation*}
	
	\subsection{Density of height}
	\label{subsec:density-height-carve}
	We obtain an approximation $\bar{\rho}^{\carve}(y)$ to $\rho^{\carve}(y)$ by marginalizing over $h$ in~\eqref{eqn:approximate-intensity-carve}. In marginalizing over $h$ all of the first order terms cancel: 
	$$\int_{\Rd} T_{10}^{\carve}(h) \cdot \phi_{0,\bar{H}_{t^*}^{\gamma}\Lambda_{t^*}^{-1}(-\nabla^2\mu_{t^*})}(h)\,dh = \int_{\Rd} T_{30}^{\carve}(h) \cdot \phi_{0,\bar{H}_{t^*}^{\gamma}\Lambda_{t^*}^{-1}(-\nabla^2\mu_{t^*})}(h) \,dh  = 0$$ 
	by symmetry, while 
	$$\int_{\Rd} T_{21}^{\carve}(h,y) \phi_{0,\bar{H}_{t^*}^{\gamma}\Lambda_{t^*}^{-1}(-\nabla^2\mu_{t^*})}(h) \,dh = - T_{01}^{\carve}(y).$$ 
	Thus marginalizing over $h$ in~\eqref{eqn:approximate-intensity-carve} yields
	\begin{equation*}
		\begin{aligned}
			 & \sqrt{\frac{\det(\bar{H}_{t^*})}{\det(-\nabla^2 \mu_{t^*})}} \cdot \frac{\exp\Big(\frac{1}{2}(\mu_{t^*} - \bar{u}_{t^*}) \tr(\bar{H}_{t^*}^{-1}\Lambda_{t^*})\Big)}{\sqrt{2\pi}} \cdot \Psi\Big(\frac{u - y - \frac{\gamma}{2}\tr(\bar{H}_{t^*}^{-1}\Lambda_{t^*})}{\gamma}\Big) \cdot \exp\Big(-\frac{1}{2}(y - \mu_{t^*} - \frac{1}{2}\tr(\bar{H}_{t^*}^{-1}\Lambda_{t^*}))^2\Big) \\
			& := \bar{\rho}^{\carve}(y).
		\end{aligned}
	\end{equation*}
	Our approximation to $p^{\carve}(y)$ is in turn
	\begin{equation*}
		\bar{p}^{\carve}(y) :\propto \bar{\rho}^{\carve}(y) \propto \Psi\Big(\frac{u - y - \frac{\gamma}{2}\tr(\bar{H}_{t^*}^{-1}\Lambda_{t^*})}{\gamma}\Big) \cdot \frac{1}{\sqrt{2\pi}} \cdot \exp\Big(-\frac{1}{2}(y - \mu_{t^*} - \frac{1}{2}\tr(\bar{H}_{t^*}^{-1}\Lambda_{t^*}))^2\Big),
	\end{equation*}
	as given in~\eqref{eqn:density-height-carve}. 
	
	\subsection{Density of location given height}
	\label{subsec:density-location-carve}
	We have given one local expansion of the joint intensity of height and location in~\eqref{eqn:approximate-intensity-carve}. Pushing the $T_{21}^{\carve}(h,y)$ term into the exponent incurs $\wt{O}(\delta_n^2)$ error, and yields the alternate local expansion:
	\begin{equation*}
		\begin{aligned}
			\tilde{\rho}^{\carve}(t^* + h,y) & =  \sqrt{\frac{\det(\bar{H}_{t^*}) \cdot \det(\bar{H}_{t^*}^{\gamma})}{(2\pi)^{d + 1} \det(\Lambda_{t^*})}}  \cdot \exp\Big(\frac{1}{2}(\mu_{t^*} - \bar{u}_{t^*})\tr(\bar{H}_{t^*}^{-1}\Lambda_{t^*})\Big) \\
			& \; \times  \Big(1 + T_{10}^{\carve}(h) + T_{01}^{\carve}(y) + T_{30}^{\carve}(h)\Big) \cdot \Psi\Big(\frac{u - y - \frac{\gamma}{2}\tr(\bar{H}_{t^*}^{-1}\Lambda_{t^*})}{\sqrt{\gamma}}\Big) \\ 
			& \; \times  \exp\Big(-\frac{1}{2}\big(y - \mu_{t^*} - \frac{1}{2}\tr(\bar{H}_{t^*}^{-1}\Lambda_{t^*})\big)^2\Big) \cdot \exp\Big(-\frac{1}{2}h'H_{t^*|y}\Lambda_{t^*}^{-1}(-\nabla^2\mu_{t^*})h\Big).
		\end{aligned}
	\end{equation*}
	Taking this and dividing by $\bar{\rho}^{\carve}(y)$, we are left with
	\begin{equation*}
		\begin{aligned}
			& \Big(1 + T_{10}^{\carve}(h) + T_{01}^{\carve}(y) + T_{30}^{\carve}(h)\Big)\cdot \sqrt{\frac{\det(\bar{H}_{t^*}^{\gamma}) \cdot \det(-\nabla^2 \mu_{t^*})}{\det(\Lambda_{t^*}) \cdot (2\pi)^{d}}} \cdot \exp\Big(-\frac{1}{2}h'H_{t^*|y}\Lambda_{t^*}^{-1}(-\nabla^2 \mu_{t^*})h\Big) \\
			& \approx \Big(1 + T_{10}^{\carve}(h) + T_{30}^{\carve}(h)\Big) \cdot \sqrt{\frac{\det(H_{t^*|y}) \cdot \det(-\nabla^2 \mu_{t^*})}{\det(\Lambda_{t^*}) \cdot (2\pi)^d}} \cdot \exp\Big(-\frac{1}{2}h'H_{t^*|y}\Lambda_{t^*}^{-1}(-\nabla^2 \mu_{t^*})h\Big) \\
			& = \Big(1 + T_{10}^{\carve}(h) + T_{30}^{\carve}(h)\Big) \cdot \sqrt{\frac{\det(G_{t^*|y})}{(2\pi)^d}} \cdot \exp\Big(-\frac{1}{2}h'G_{t^*|y}h\Big),
		\end{aligned}
	\end{equation*}
	which is the expression for $\bar{p}^{\carve}(t^* + h|y)$ given in~\eqref{eqn:density-location-carve}.

	\section{Technical results}
	
	\subsection{Gaussian tail behavior: univariate}
	\label{subsec:gaussian-tail-univariate}
	For any $a > 0$,
	\begin{equation}
		\label{eqn:mills-ratio}
		\frac{a}{a^2 + 1} \phi(a)\leq \Psi(a) \leq \frac{\phi(a)}{a}.
	\end{equation}
	Suppose $Z \sim N(0,1)$. Using~\eqref{eqn:mills-ratio} and integration by parts,
	\begin{equation}
		\label{eqn:normalized-overshoot-first-moment}
		\int_{a}^{\infty} (y - a) \phi(y) \,dy \leq \frac{\Psi(a)}{a}.
	\end{equation}
	Arguing recursively, it can be shown that for any $p \in \mathbb{N}, p \geq 1$, there exists a constant $C$ depending only on $p$ such that
	\begin{equation}
		\label{eqn:normalized-overshoot-moments}
		\int_{a}^{\infty} |y - a|^p \phi(y) \,dy \leq C \cdot \Psi(a). 
	\end{equation}
	Finally, the following lemma bounds the relative error in perturbations of the Gaussian survival function. 
	\begin{lemma}
		\label{lem:gaussian-survival-function-perturbation}
		For any sequence $u_n \in \R$ and $\varepsilon_n \to 0$,
		\begin{equation}
			\label{eqn:gaussian-survival-function-perturbation}
			\frac{|\Psi(u_n + \varepsilon_n) - \Psi(u_n) \exp(-u_n\varepsilon_n)|}{\Psi(u_n) \exp(-u_n\varepsilon_n)} = O(\varepsilon_n/u + \varepsilon_n^2).
		\end{equation}
		If additionally $u_n \varepsilon_n \to 0$, then
		\begin{equation}
			\label{eqn:gaussian-survival-function-perturbation-2}
			\frac{|\Psi(u_n + \varepsilon_n) - \Psi(u_n)|}{\Psi(u_n)} = O((u_n \vee 1)\varepsilon_n).
		\end{equation}
	\end{lemma}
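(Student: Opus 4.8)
The plan is to reduce both inequalities to Mills' ratio \eqref{eqn:mills-ratio} and the first-moment bound \eqref{eqn:normalized-overshoot-first-moment} via a single exact integral identity. Starting from $\Psi(u_n+\varepsilon_n)=\int_{u_n+\varepsilon_n}^\infty\phi(t)\,dt$, the substitution $t=y+\varepsilon_n$ together with $\phi(y+\varepsilon_n)=\phi(y)e^{-y\varepsilon_n-\varepsilon_n^2/2}$ gives
\begin{equation*}
\Psi(u_n+\varepsilon_n)=e^{-\varepsilon_n^2/2-u_n\varepsilon_n}\,g(\varepsilon_n),\qquad g(\varepsilon):=\int_{u_n}^\infty\phi(y)\,e^{-(y-u_n)\varepsilon}\,dy,
\end{equation*}
so that $g(0)=\Psi(u_n)$ and $\Psi(u_n+\varepsilon_n)\big/\big(\Psi(u_n)e^{-u_n\varepsilon_n}\big)=e^{-\varepsilon_n^2/2}\,g(\varepsilon_n)/g(0)$. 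Since $e^{-\varepsilon_n^2/2}=1+O(\varepsilon_n^2)$, to prove \eqref{eqn:gaussian-survival-function-perturbation} it suffices to show $g(\varepsilon_n)/g(0)=1+O(\varepsilon_n/u_n)$ in the regime $u_n\to\infty$ in which that bound carries content (for $u_n$ bounded away from $0$ and $\infty$ the estimate $\varepsilon_n/u_n\asymp\varepsilon_n$ follows from a direct mean-value expansion of $\Psi$, which I would note in passing).

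For the core estimate on $g$, differentiate under the integral: $g'(\varepsilon)=-\int_{u_n}^\infty(y-u_n)\phi(y)e^{-(y-u_n)\varepsilon}\,dy$. I would bound $e^{-(y-u_n)\varepsilon}\le e^{(y-u_n)|\varepsilon|}$ and complete the square once more, using $\phi(y)e^{(y-u_n)|\varepsilon|}=e^{\varepsilon^2/2-u_n|\varepsilon|}\phi(y-|\varepsilon|)$, to get $|g'(\varepsilon)|\le e^{\varepsilon^2/2-u_n|\varepsilon|}\int_{u_n-|\varepsilon|}^\infty\big(z-(u_n-|\varepsilon|)\big)\phi(z)\,dz$. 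Applying \eqref{eqn:normalized-overshoot-first-moment} and then \eqref{eqn:mills-ratio} to $\Psi(u_n-|\varepsilon|)$, and using the identity $e^{\varepsilon^2/2-u_n|\varepsilon|}\phi(u_n-|\varepsilon|)=\phi(u_n)$, collapses this to $|g'(\varepsilon)|\le \phi(u_n)/(u_n-|\varepsilon|)^2\le 4\phi(u_n)/u_n^2$ uniformly for $|\varepsilon|\le u_n/2$. Combined with the matching lower bound $g(0)=\Psi(u_n)\ge \tfrac12\phi(u_n)/u_n$ from \eqref{eqn:mills-ratio}, the mean value theorem gives $|g(\varepsilon_n)-g(0)|/g(0)\le 8|\varepsilon_n|/u_n$ once $|\varepsilon_n|\le u_n/2$, which is eventual, establishing \eqref{eqn:gaussian-survival-function-perturbation}.

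For \eqref{eqn:gaussian-survival-function-perturbation-2} I would split on the size of $u_n$. If $u_n\le 2$, monotonicity of $\Psi$ gives $\Psi(u_n)\ge\Psi(2)>0$, while $|\Psi(u_n+\varepsilon_n)-\Psi(u_n)|\le|\varepsilon_n|\sup_t\phi(t)=|\varepsilon_n|/\sqrt{2\pi}$, so the relative error is $O(\varepsilon_n)=O((u_n\vee1)\varepsilon_n)$. If $u_n>2$, write $\Psi(u_n+\varepsilon_n)/\Psi(u_n)=\big[\Psi(u_n+\varepsilon_n)/(\Psi(u_n)e^{-u_n\varepsilon_n})\big]\,e^{-u_n\varepsilon_n}$; the bracket is $1+O(\varepsilon_n/u_n+\varepsilon_n^2)$ by the first part, and $e^{-u_n\varepsilon_n}=1+O(u_n\varepsilon_n)$ because $u_n\varepsilon_n\to0$ by hypothesis, so since $\varepsilon_n/u_n\le\varepsilon_n\le u_n\varepsilon_n$ the product is $1+O(u_n\varepsilon_n)=1+O((u_n\vee1)\varepsilon_n)$.

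I expect the only genuinely delicate point to be the case $\varepsilon_n<0$ in the bound on $g'(\varepsilon)$: there the weight $e^{(y-u_n)|\varepsilon|}$ grows, and the estimate is circular in the sense that controlling $\int\phi(y)e^{(y-u_n)|\varepsilon|}\,dy$ is tantamount to comparing $\Psi(u_n-|\varepsilon|)$ with $\Psi(u_n)$ — the very type of comparison we are trying to prove. The completing-the-square identity is what breaks the circularity: it turns the growing integral into $e^{\varepsilon^2/2-u_n|\varepsilon|}\Psi(u_n-|\varepsilon|)$, after which the two-sided Mills bound \eqref{eqn:mills-ratio} with its explicit constants $\tfrac{a}{a^2+1}$ and $\tfrac1a$ produces exactly the $O(\varepsilon_n/u_n)$ factor. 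That is the step I would write out in full detail, while treating the remaining manipulations as routine.
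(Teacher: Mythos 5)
Your proposal is correct, and at its core it rests on the same identity as the paper's proof: completing the square so that $\Psi(u_n+\varepsilon_n)\exp(u_n\varepsilon_n+\tfrac12\varepsilon_n^2)=\int_{u_n}^\infty\phi(y)\exp(-(y-u_n)\varepsilon_n)\,dy=g(\varepsilon_n)$, with $g(0)=\Psi(u_n)$. Where you diverge is in how $g(\varepsilon_n)$ is compared to $g(0)$: the paper bounds the integrand difference pointwise via the inequality $0\le 1-\exp(-(x-u)\varepsilon)\le(x-u)\varepsilon$ and then invokes \eqref{eqn:normalized-overshoot-first-moment}, whereas you differentiate $g$ and apply the mean-value theorem. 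Arithmetically these are close, but your packaging buys something concrete: the paper's displayed pointwise bound holds only for $\varepsilon\ge 0$ (for $\varepsilon<0$ the lower bound fails and the integrand is $\exp((x-u)|\varepsilon|)-1$, which is not dominated by $(x-u)|\varepsilon|$ on $(u,\infty)$), and the paper does not say how to treat $\varepsilon_n<0$ even though it uses the lemma with signed perturbations, e.g.\ $-E_1(t,y,R)-E_2(t)$ in the proof of Lemma~\ref{lem:height-pivot-2}. Your second completing-the-square step inside the bound on $g'(\varepsilon)$ --- converting $\int_{u_n}^\infty(y-u_n)\phi(y)e^{(y-u_n)|\varepsilon|}\,dy$ into $e^{\varepsilon^2/2-u_n|\varepsilon|}\int_{u_n-|\varepsilon|}^\infty(z-(u_n-|\varepsilon|))\phi(z)\,dz$ and then collapsing $e^{\varepsilon^2/2-u_n|\varepsilon|}\phi(u_n-|\varepsilon|)$ back to $\phi(u_n)$ --- is exactly the move that resolves this, and you correctly isolate it as the only delicate point. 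For \eqref{eqn:gaussian-survival-function-perturbation-2}, your split on the size of $u_n$ and the bootstrap from \eqref{eqn:gaussian-survival-function-perturbation} reproduce the paper's logic; the paper's closing step $|1-\exp(-u\varepsilon)|\le u\varepsilon$ is again sign-restricted, but, as you implicitly note, repairs to a constant multiple of $|u_n\varepsilon_n|$ once $u_n\varepsilon_n\to 0$. In short: same approach, with a more careful execution at the one place care is actually required.
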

	\begin{proof}
		The claims are obviously true if $u_n \not\to \infty$, and we thus focus on the case where $u_n \to \infty$. We write out 
		\begin{align*}
			\exp(\frac{1}{2}\varepsilon^2)\Psi(u + \varepsilon)\exp(u\varepsilon) 
			& = \exp(\frac{1}{2}\varepsilon^2)\int_{u}^{\infty} \phi(x + \varepsilon) \exp(u\varepsilon) \,dx \\
			& = \int_{u}^{\infty} \phi(x) \exp(-(x - u)\varepsilon) \,dx \\
			& = \Psi(u) + \int_{u}^{\infty} (\exp(-(x - u)\varepsilon) - 1) \phi(x) \,dx
		\end{align*}
		For all $x \in (u,\infty)$ we have that $0 \leq 1 - \exp(-(x - u)\varepsilon) \leq (x - u)\varepsilon$, and therefore
		\begin{equation*}
			\Big|\Psi(u) - \exp(\frac{1}{2}\varepsilon^2) \Psi(u + \varepsilon)\exp(u\varepsilon) \Big| \leq \varepsilon \int_{u}^{\infty} (x - u) \phi(x) \,dx \leq \frac{\varepsilon \Psi(u)}{u},
		\end{equation*}
		the latter inequality following from~\eqref{eqn:normalized-overshoot-first-moment}. Consequently, 
		\begin{equation*}
			\frac{\Big|\Psi(u) - \Psi(u + \varepsilon)\exp(u\varepsilon) \Big|}{\Psi(u)} \leq \frac{\varepsilon}{u} + \Big|1 - \exp(\frac{\varepsilon^2}{2})\Big| \frac{\Psi(u + \varepsilon) \exp(u \varepsilon)}{\Psi(u)} \leq \frac{\varepsilon}{u} + C \Big|1 - \exp(\frac{\varepsilon^2}{2})\Big|,
		\end{equation*}
		the latter inequality holding as $\frac{\Psi(u + \varepsilon) \exp(u \varepsilon)}{\Psi(u)} \leq C$ for all $\varepsilon$ sufficiently small. This proves~\eqref{eqn:gaussian-survival-function-perturbation}, with~\eqref{eqn:gaussian-survival-function-perturbation-2} following since $|1 - \exp(-u\varepsilon)| \leq u\varepsilon$. 
	\end{proof}
	
	\subsection{Gaussian tail behavior: multivariate}
	\label{subsec:gaussian-tail-multivariate}
	Consider a multivariate Gaussian $Z \sim N_d(0,G^{-1})$, and let $B(0,r)$ be a ball centered at $0$ of radius $r$. For any vector $x \in \Rd$ and array $A \in \R^{d \times d \times d}$,
	\begin{equation}
		\label{eqn:gaussian-integrals-truncated-moments}
		\E\Big[(\sum x_i Z_i) \cdot \1(Z \in B(0,r))\Big] = 0, \quad \E\Big[(\sum A_{ijk} Z_i Z_jZ_k) \cdot \1(Z \in B(0,r))\Big] = 0.
	\end{equation}
	Now consider $U \sim N_d(0,I)$. The following is a consequence of a standard Chernoff bound: for any $\eta > 0$, letting $r_n = \sqrt{(2 + \eta) \log n}$, for all $n$ large enough so that $r_n \geq d$:
	\begin{equation}
		\begin{aligned}
			\label{eqn:gaussian-integrals-concentration}
			\P\big(\|U\| \geq r_n\big) \leq C \{(2 + \eta) \log n\}^{d/2} n^{-(1 + \eta/2)}.
		\end{aligned}
	\end{equation}
	
	\subsection{Maximum of a Gaussian process}
	We record the \emph{Borell-TIS inequality}~\citep{adler2007random}, slightly restated for our purposes, to bound the asymptotic probability that the supremum of a Gaussian process exceeds some threshold $u_n$. Let $X \sim N(0,C)$ be a mean-zero Gaussian processes over $\mc{T}$, with covariance kernel $C$, and suppose $X$ is bounded a.s. over  $\mc{A} \subseteq \mc{T}$. Let $\sigma_{\mc{A}}^2 := \sup_{t \in \mc{A}}\Var[X_{t}]$ and $m_{\mc{A}} := \E[\sup_{t \in \mc{A}} X_t]$.  For any $\eta > 0$, 
	\begin{equation}
		\label{eqn:borell-tis}
		\P\Big(\sup_{t \in \mc{A}} X_{t} > u\Big) \leq \exp\Big\{\Big(\frac{1}{4\eta}-\frac{1}{2}\Big)\frac{m_{\mathcal{A}}^{2}}{\sigma_{\mathcal A}^{2}}\Big\} \cdot \exp\Big(\eta u^2 - \frac{u^2}{2\sigma_{\mc{A}}^2}\Big).
	\end{equation}
	The following consequence will be useful for our purposes: suppose $K(t,\cdot) \in C^1(\Rd)$ for all $t$, $\sup_{t \in \Rd} \|K(t,\cdot)\|_{C^1(\Rd)} \leq C_0 < \infty$ and $\sup_{t \in \Rd} \Var[X_t] \leq \sigma^2 < \infty$. Then there exists a constant $C$ depending only on $C_0$ such that $\sup_{t \in \Rd} \sup_{\mc{A} \subseteq B(t,1)} m_{\mc{A}} \leq C$. Consequently, there exists a constant $C(\eta)$ depending only on $\eta$ (and $C_0$) such that
	\begin{equation}
		\label{eqn:borell-tis-consequence}
		\sup_{t \in \Rd} \sup_{\mc{A} \subseteq B(t,1)} \P\Big(\sup_{t \in \mc{A}} X_{t} > u\Big) \leq  C(\eta) \exp\Big(\eta u^2 - \frac{u^2}{2\sigma^2}\Big).
	\end{equation}
	
	\subsection{Matrix perturbation}
	The following are standard results in matrix perturbation theory. Consider sequences of matrices $A_n, E_n$ such that $A_n$ is invertible for all $n$, and for which $\|E_n\|/\|A_n\| \to 0$. For all $n$ sufficiently large,
	\begin{equation}
		\label{eqn:matrix-inverse-taylor-expansion}
		\|(A_n + E_n)^{-1} - A_n^{-1}\| \leq 2 \|A_n^{-1}\|^2 \|E_n\|.
	\end{equation}
	A first-order Taylor expansion of $\det(A + E)$ about $E = 0$ yields the following: if $\|E_n\|/\|A_n\| \leq 1/2$, then \begin{equation}
		\label{eqn:matrix-det-taylor-expansion}
		|\det(A_n + E_n) - \det(A_n)| \leq 4 \det(A_n) \frac{\|E_n\|}{\lambda_{\min}(A_n)}, \quad |\det(A_n + E_n) - \det(A_n)(1 + \tr(A_n^{-1}E_n))| \leq 8 \det(A_n) \frac{\|E_n\|^2}{\lambda_{\min}(A_n)^2}.
	\end{equation}
	
	\section{Additional experiments}
	\label{sec:additional-experiments}
	
	\subsection{Validating asymptotic theory}
	
	\paragraph{Effect of $\alpha$.}
	To demonstrate that the conclusions of Section~\ref{subsec:experiment-1} are robust to the choice of $\alpha$, we plot the empirical distribution of all quantities, for $\mu_0 = 3$ (Figure~\ref{fig:experiment-1-ppplots-1}) and $\mu_0 = 11$ (Figure~\ref{fig:experiment-1-ppplots-2}).
	
	\begin{figure}[htbp]
		\centering
		\begin{subfigure}[t]{0.32\linewidth}
			\centering
			\includegraphics[width=\linewidth]{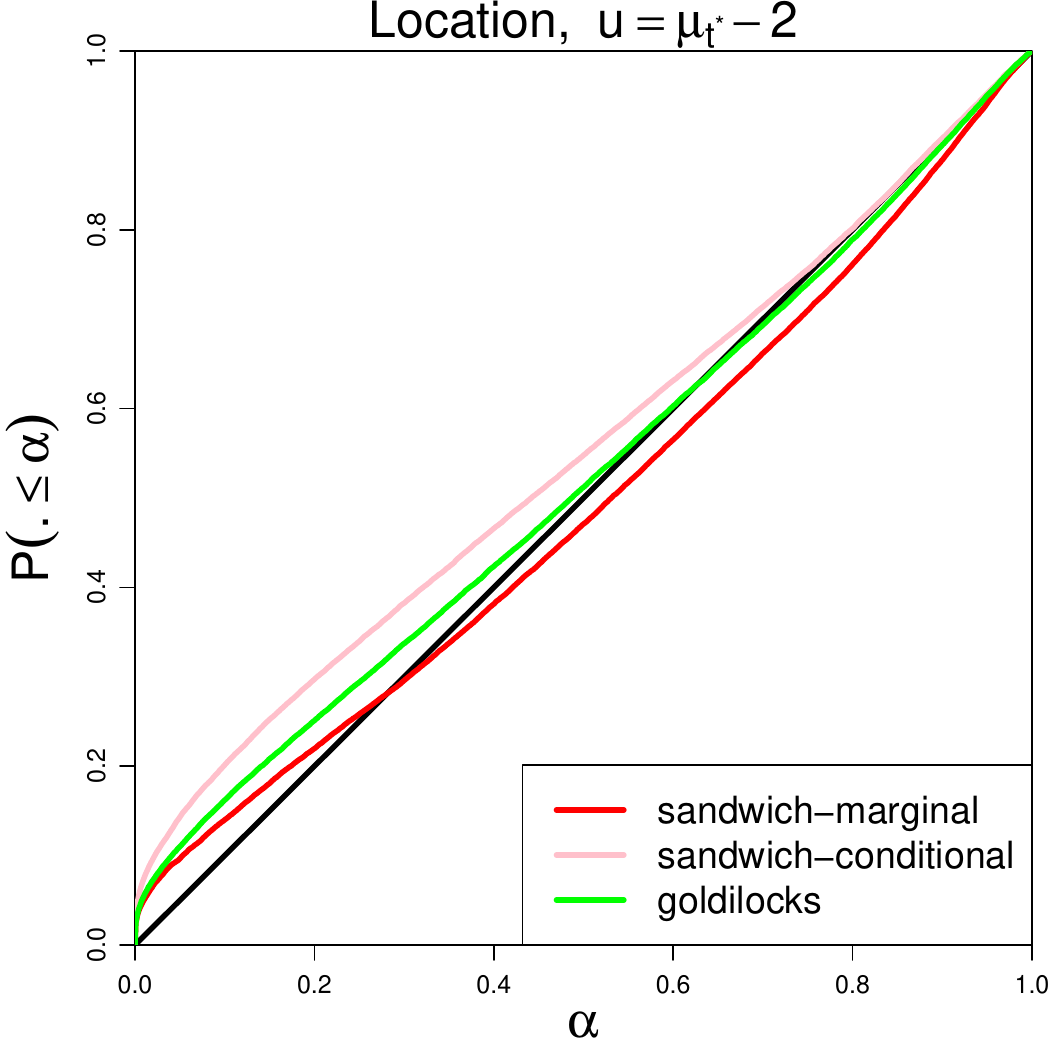}
			\label{fig:experiment-1-4}
		\end{subfigure}\hfill
		\begin{subfigure}[t]{0.32\linewidth}
			\centering
			\includegraphics[width=\linewidth]{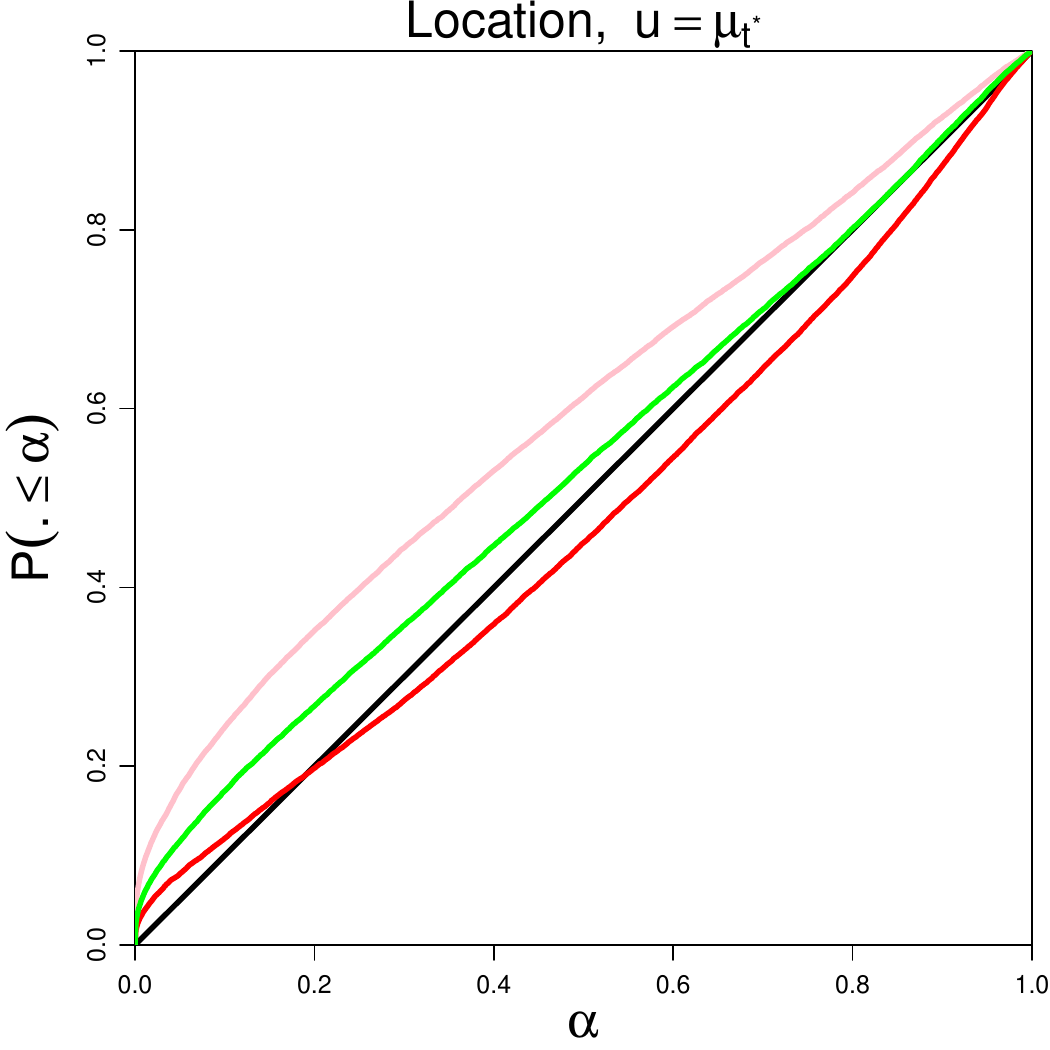}
			\label{fig:experiment-1-5}
		\end{subfigure}\hfill
		\begin{subfigure}[t]{0.32\linewidth}
			\centering
			\includegraphics[width=\linewidth]{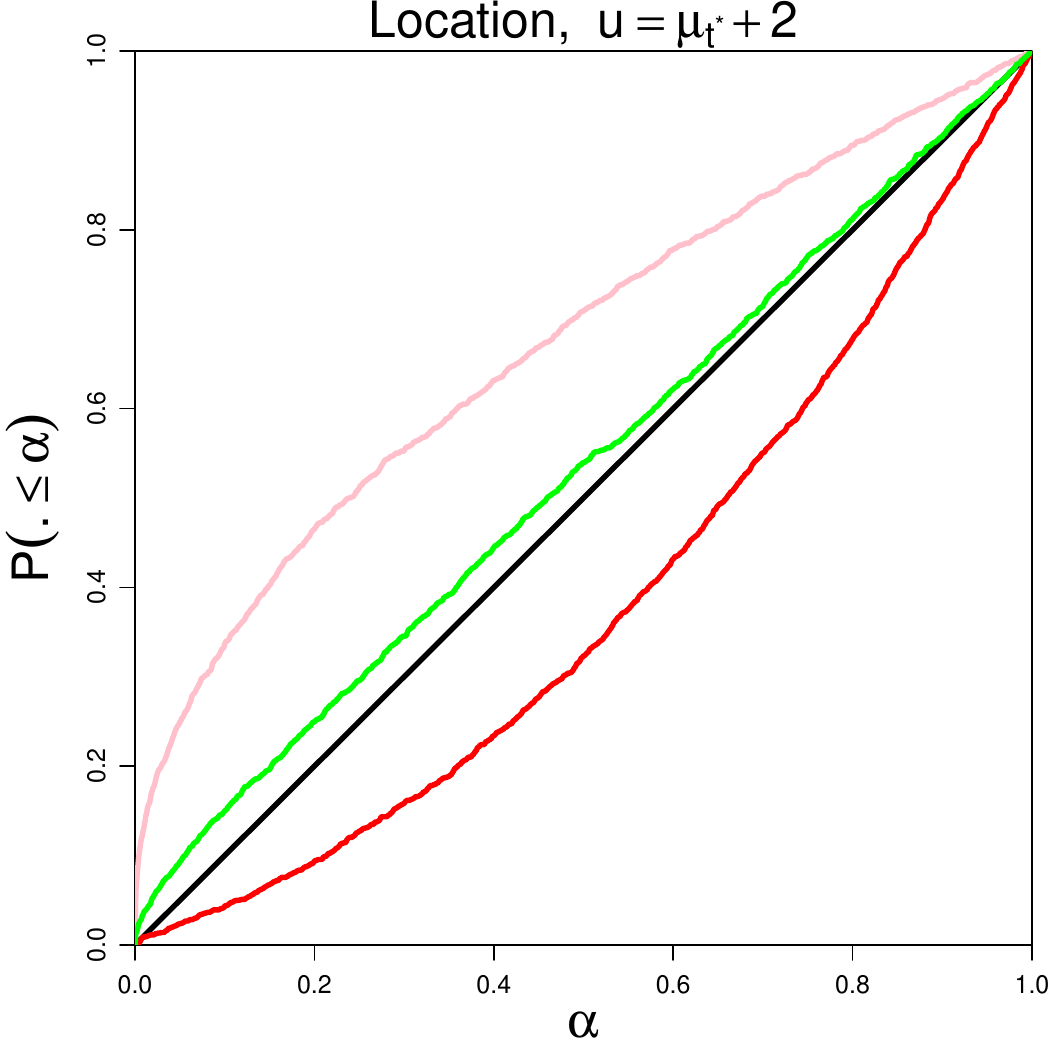}
			\label{fig:experiment-1-6}
		\end{subfigure}\hfill
		
		\begin{subfigure}[t]{0.32\linewidth}
			\centering
			\includegraphics[width=\linewidth]{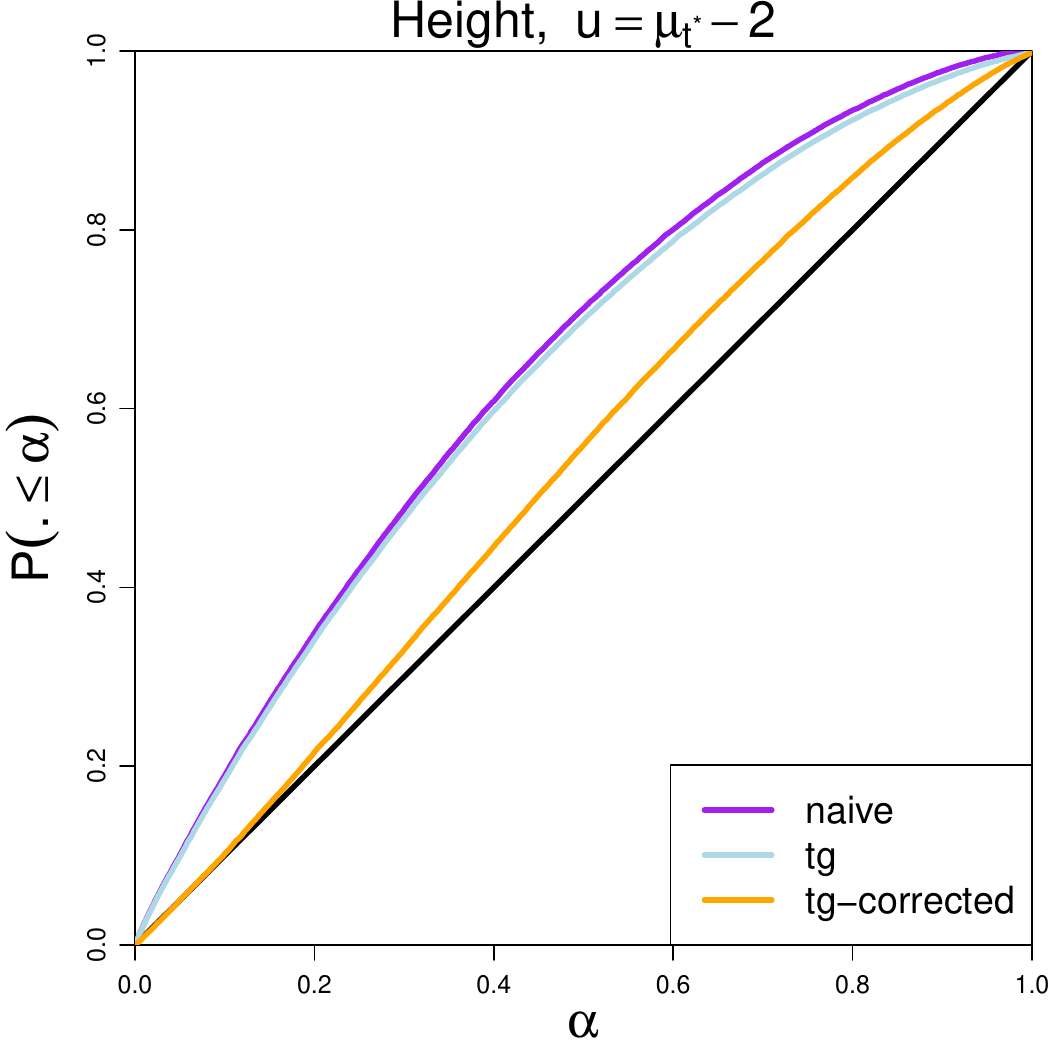}
			\label{fig:experiment-1-1}
		\end{subfigure}\hfill
		\begin{subfigure}[t]{0.32\linewidth}
			\centering
			\includegraphics[width=\linewidth]{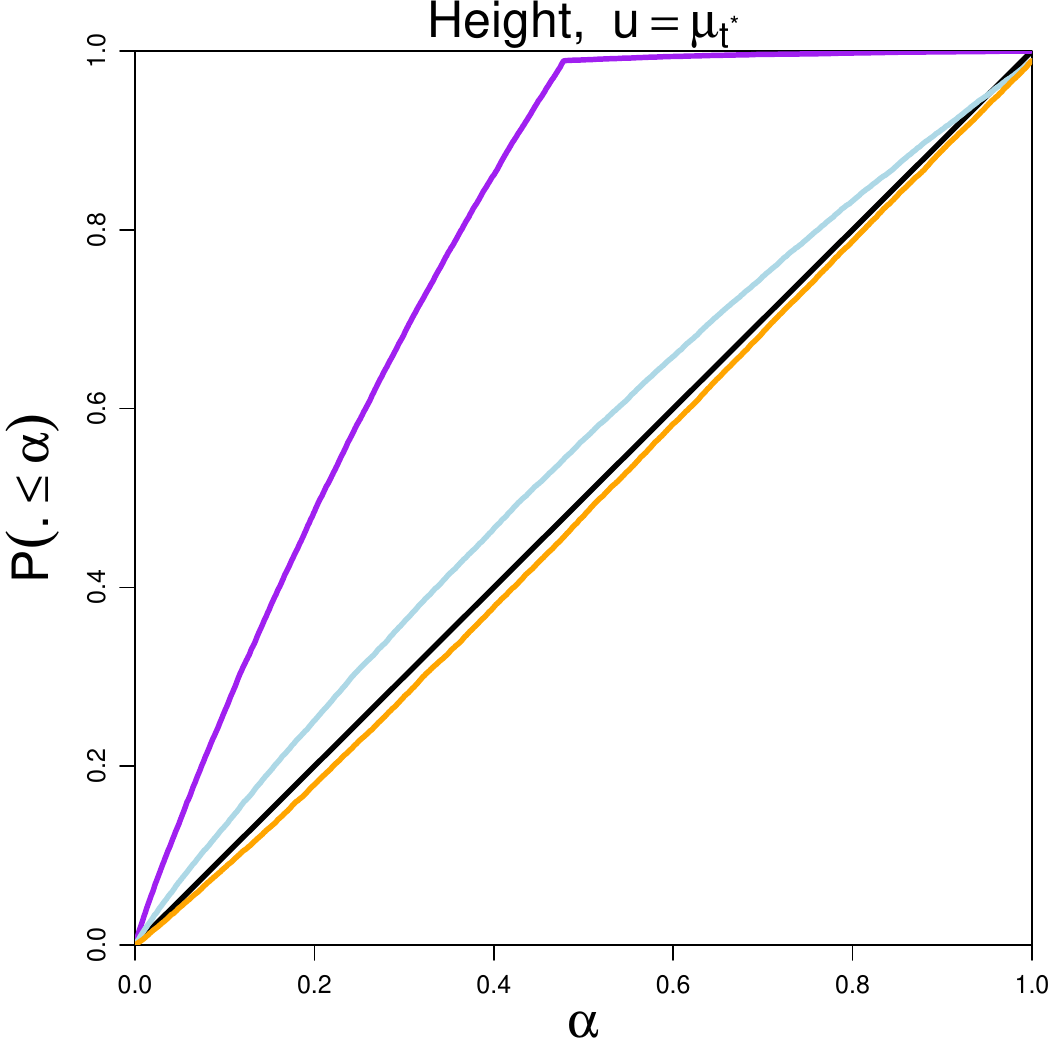}
			\label{fig:experiment-1-2}
		\end{subfigure}\hfill
		\begin{subfigure}[t]{0.32\linewidth}
			\centering
			\includegraphics[width=\linewidth]{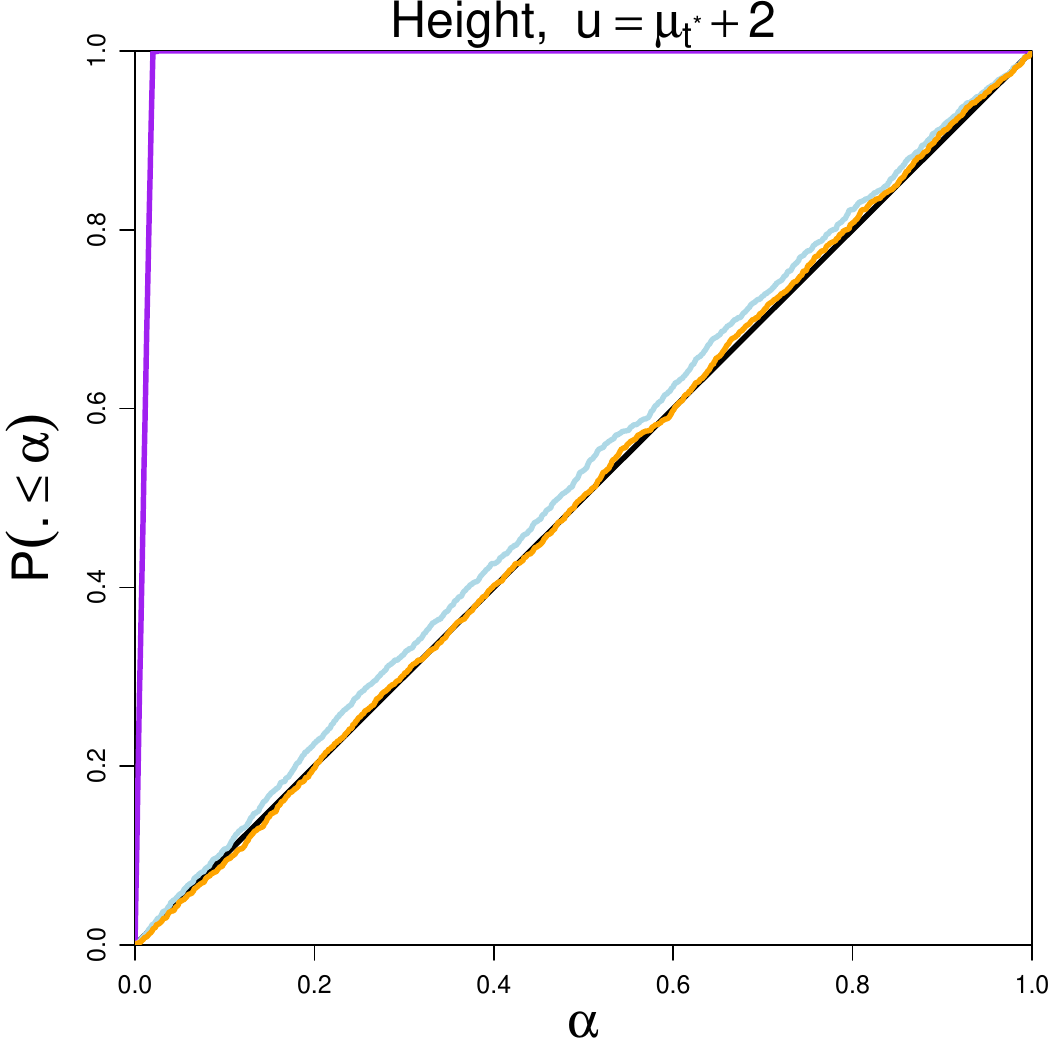}
			\label{fig:experiment-1-3}
		\end{subfigure}\hfill
		
		\caption{Distribution of candidate quantities for location (top row) and height (bottom row), when $\mu_0 = 3$. Different columns corresponds to different thresholds $u$. Details of experimental setup and takeaways are in the main text.}
		\label{fig:experiment-1-ppplots-1}
	\end{figure}
	
	\begin{figure}[htbp]
		\centering
		\begin{subfigure}[t]{0.32\linewidth}
			\centering
			\includegraphics[width=\linewidth]{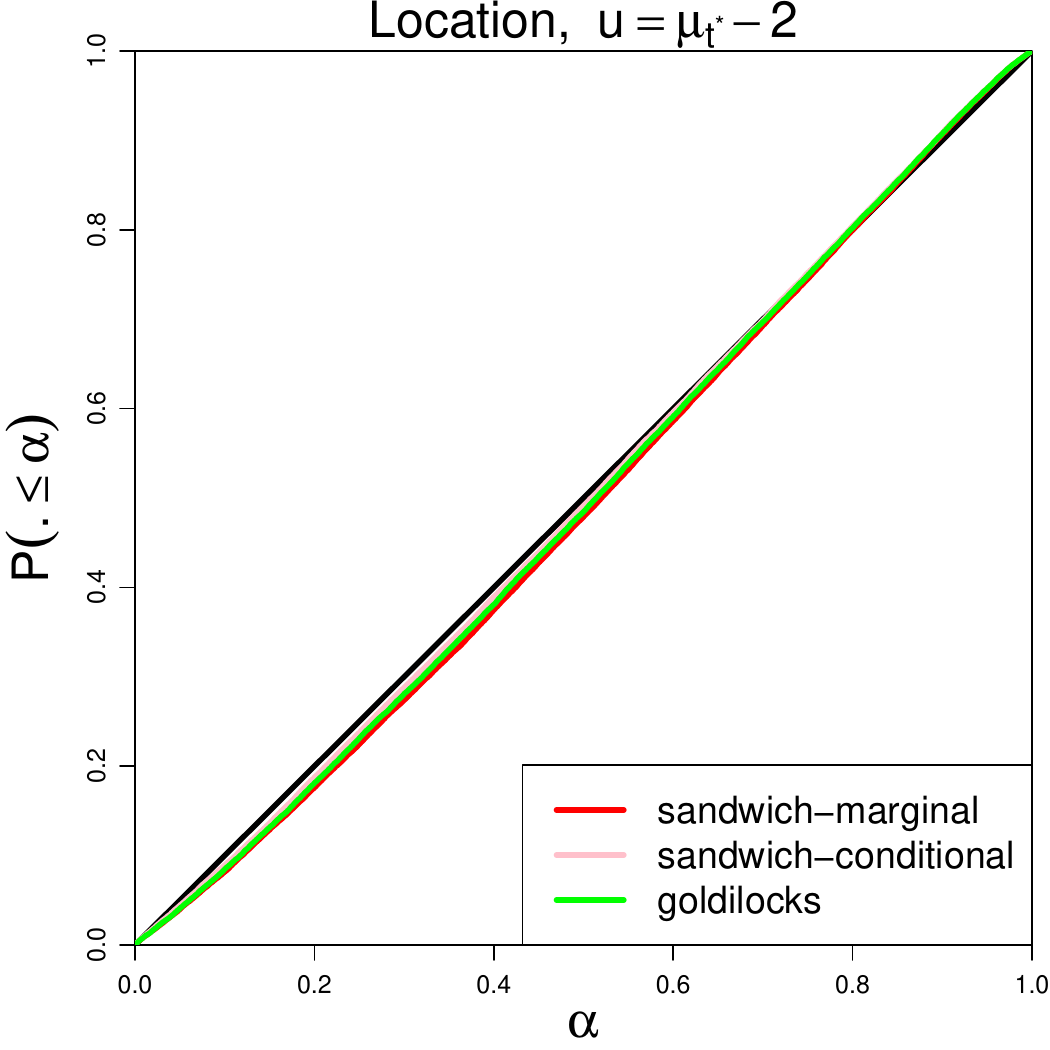}
			\label{fig:experiment-1-4}
		\end{subfigure}\hfill
		\begin{subfigure}[t]{0.32\linewidth}
			\centering
			\includegraphics[width=\linewidth]{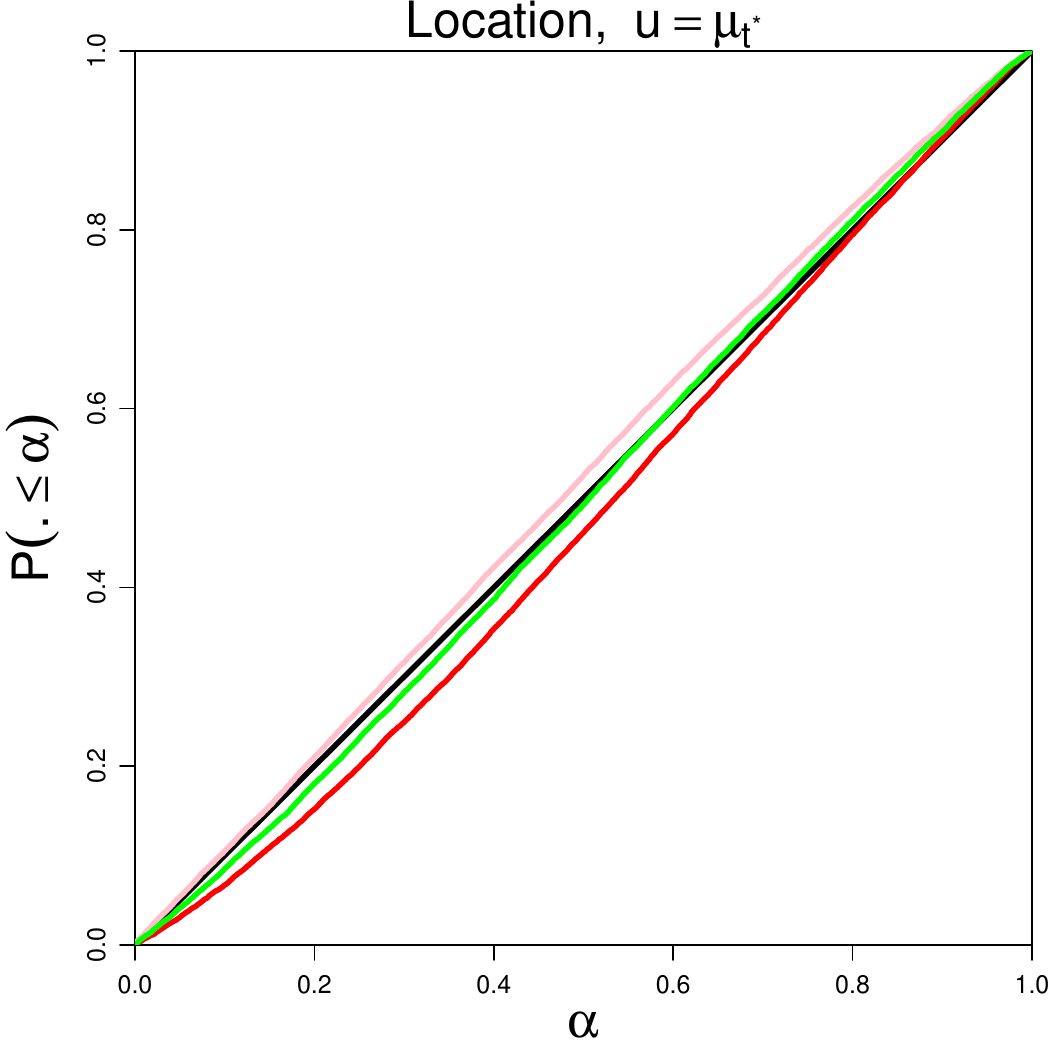}
			\label{fig:experiment-1-5}
		\end{subfigure}\hfill
		\begin{subfigure}[t]{0.32\linewidth}
			\centering
			\includegraphics[width=\linewidth]{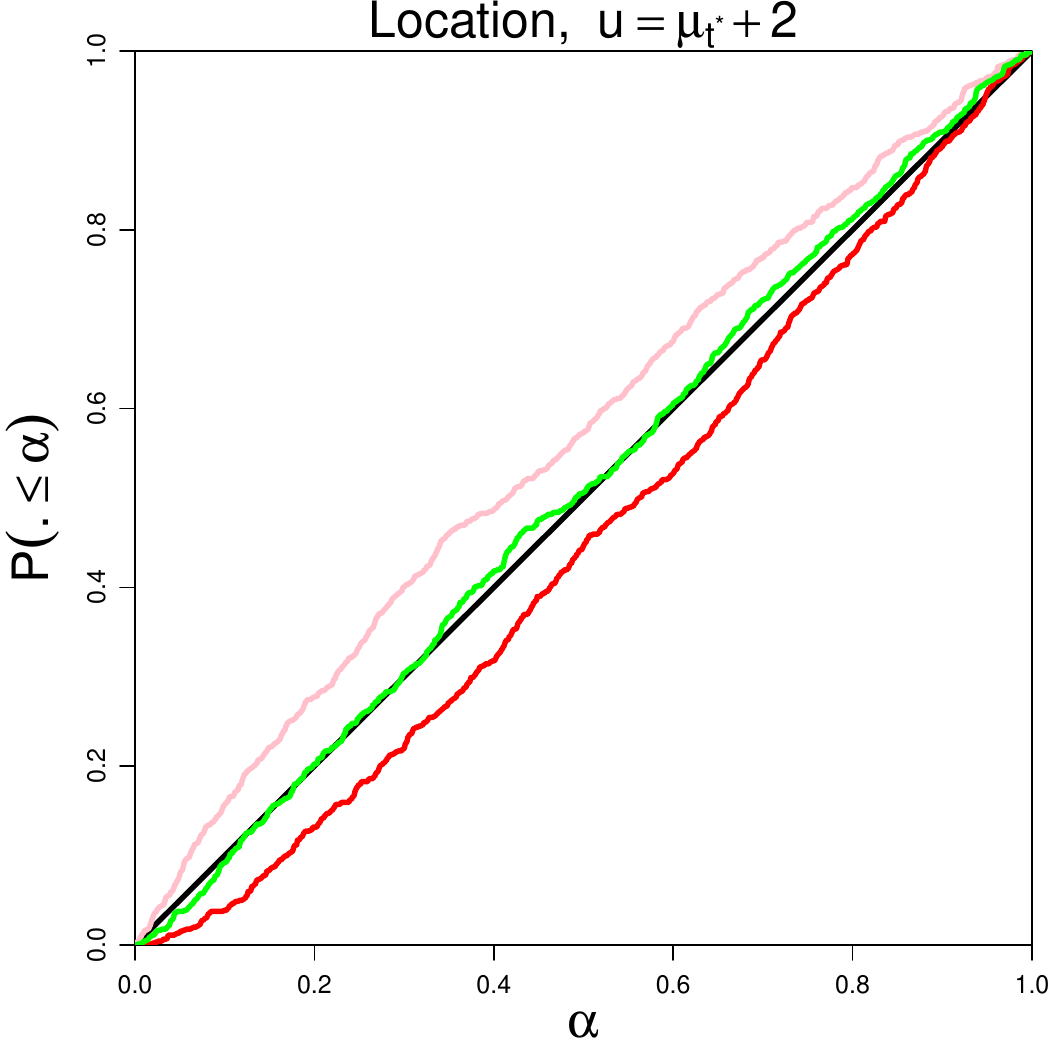}
			\label{fig:experiment-1-6}
		\end{subfigure}\hfill
		
		\begin{subfigure}[t]{0.32\linewidth}
			\centering
			\includegraphics[width=\linewidth]{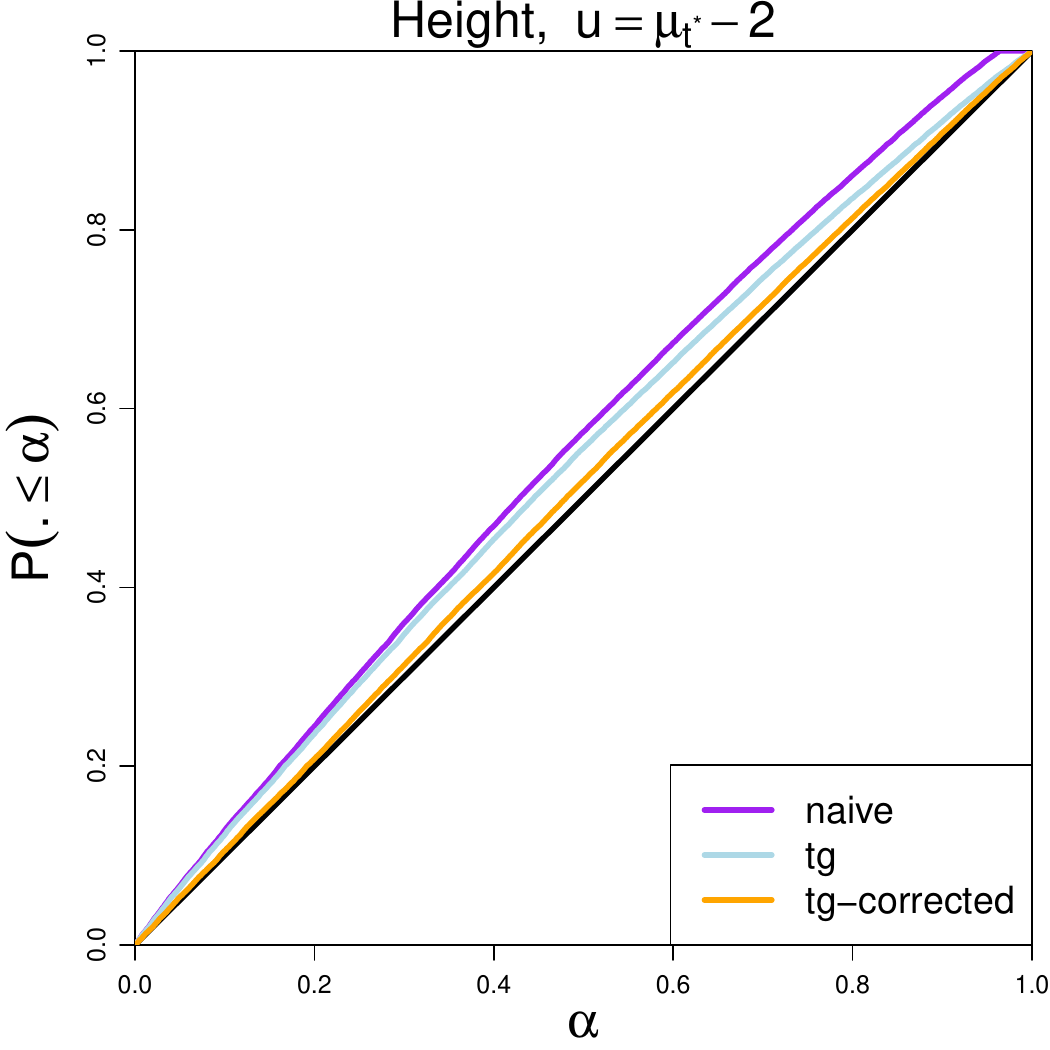}
			\label{fig:experiment-1-1}
		\end{subfigure}\hfill
		\begin{subfigure}[t]{0.32\linewidth}
			\centering
			\includegraphics[width=\linewidth]{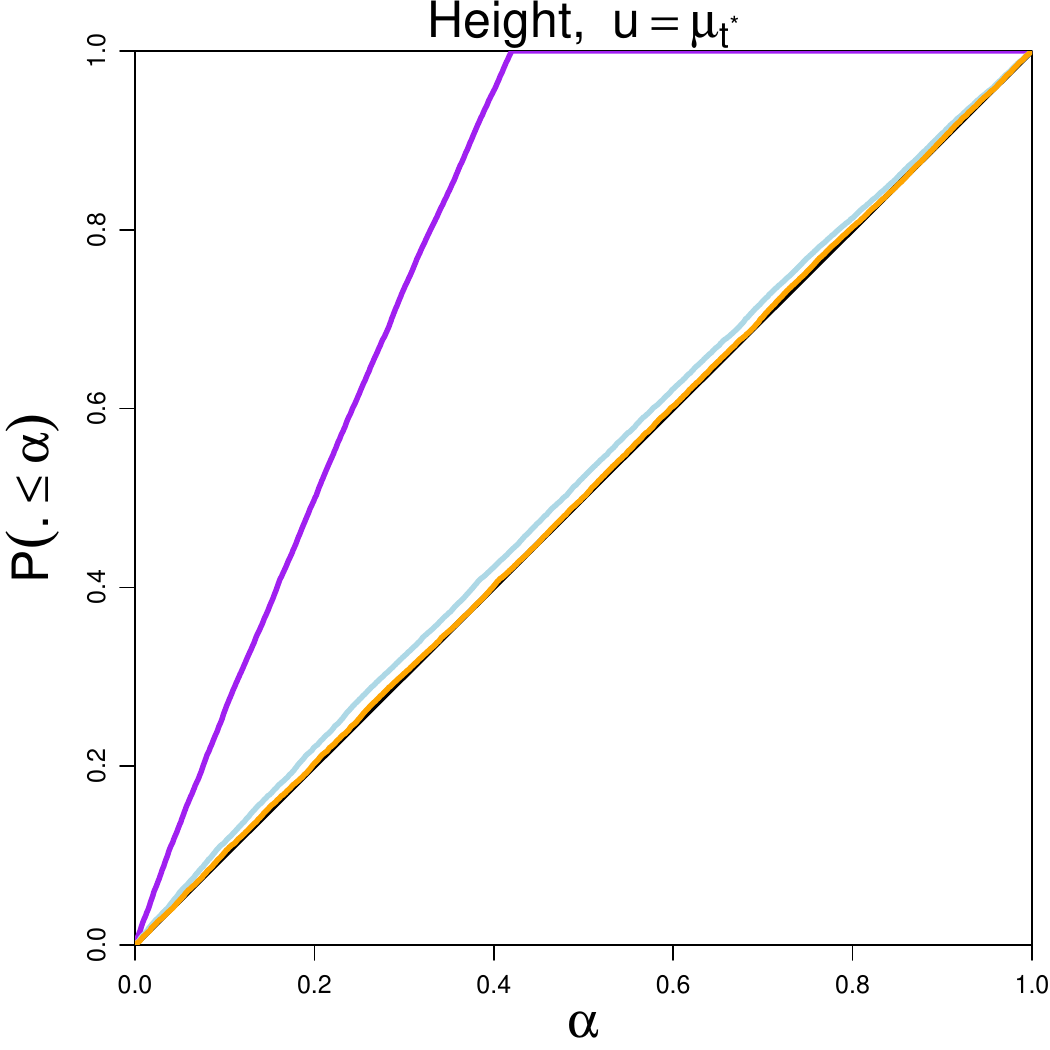}
			\label{fig:experiment-1-2}
		\end{subfigure}\hfill
		\begin{subfigure}[t]{0.32\linewidth}
			\centering
			\includegraphics[width=\linewidth]{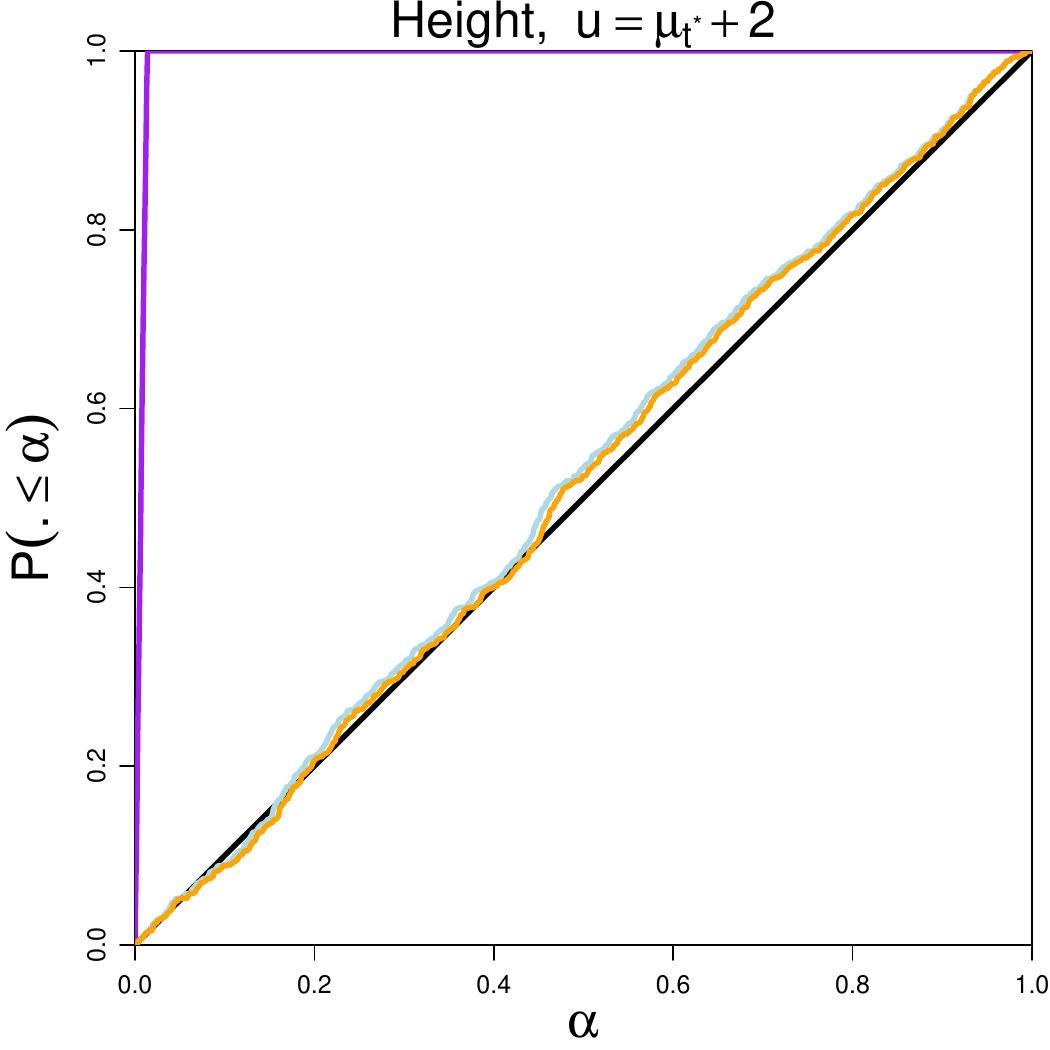}
			\label{fig:experiment-1-3}
		\end{subfigure}\hfill
		
		\caption{Same as Figure~\ref{fig:experiment-1-ppplots-1}, but $\mu_0 = 11$.}
		\label{fig:experiment-1-ppplots-2}
	\end{figure}
	
	\paragraph{One-dimensional experiment.}
	The results of an experiment where $d = 1$ are shown in Figure~\ref{fig:experiment-1-1d}. In this experiment, the signal and covariance kernel are
	\begin{equation}
		\label{eqn:experiment-1d}
	\mu_{t} = \mu_0 \cdot K(0,t), \quad K(s,t) = \exp\Big(-\frac{(s - t)^2}{2 \cdot 0.15^2}\Big).
	\end{equation}
	Again we vary $\mu_0 \in \{3,4,\cdots,11\}$ and consider thresholds $u \in \{\mu_0 - 2,\mu_0, \mu_0 + 2\}$. We draw very similar conclusions as in the two-dimensional experiment presented in Section~\ref{subsec:experiment-1}, except the agreement with asymptotic theory is even closer.
	
	\begin{figure}[htbp]
		\centering
		\begin{subfigure}[t]{0.32\linewidth}
			\centering
			\includegraphics[width=\linewidth]{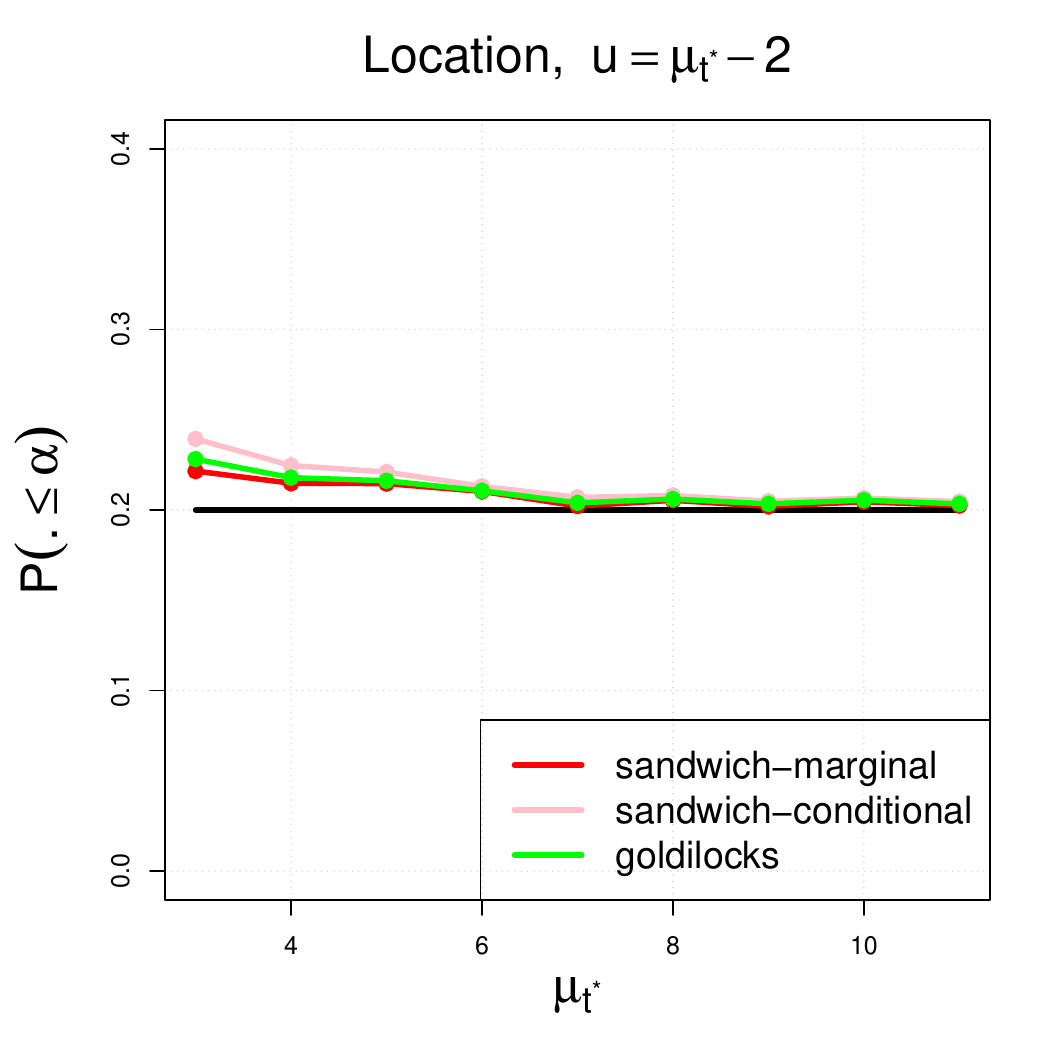}
			\label{fig:experiment-1-4}
		\end{subfigure}\hfill
		\begin{subfigure}[t]{0.32\linewidth}
			\centering
			\includegraphics[width=\linewidth]{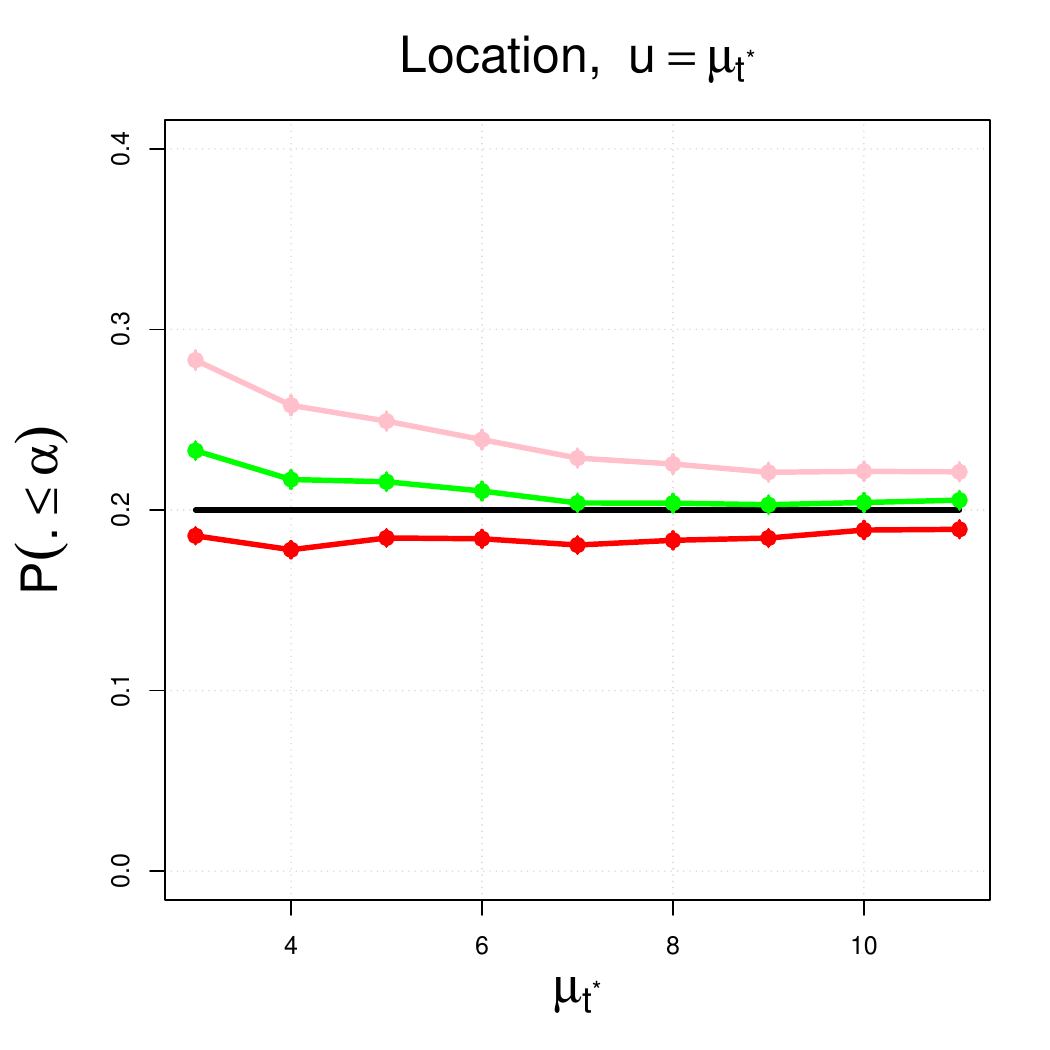}
			\label{fig:experiment-1-5}
		\end{subfigure}\hfill
		\begin{subfigure}[t]{0.32\linewidth}
			\centering
			\includegraphics[width=\linewidth]{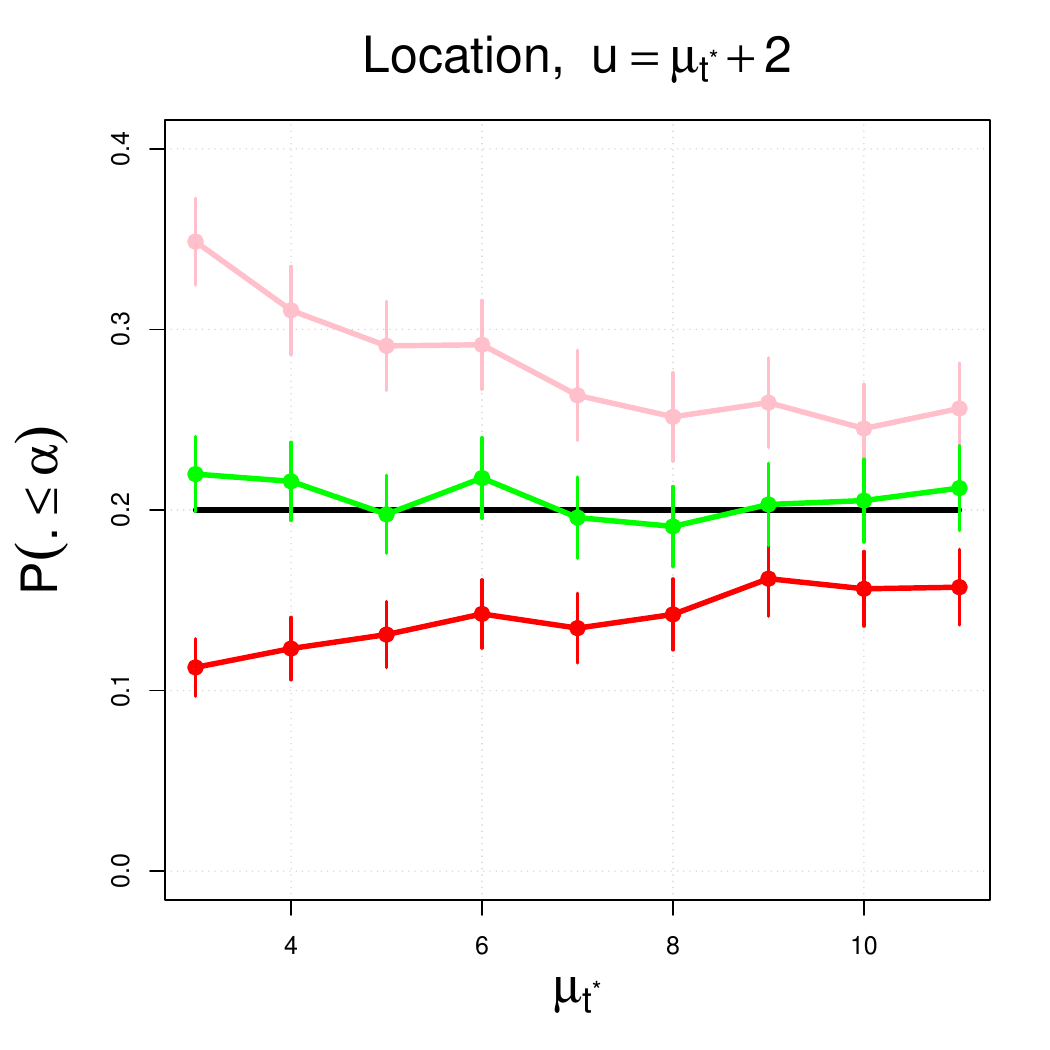}
			\label{fig:experiment-1-6}
		\end{subfigure}\hfill
		
		\begin{subfigure}[t]{0.32\linewidth}
			\centering
			\includegraphics[width=\linewidth]{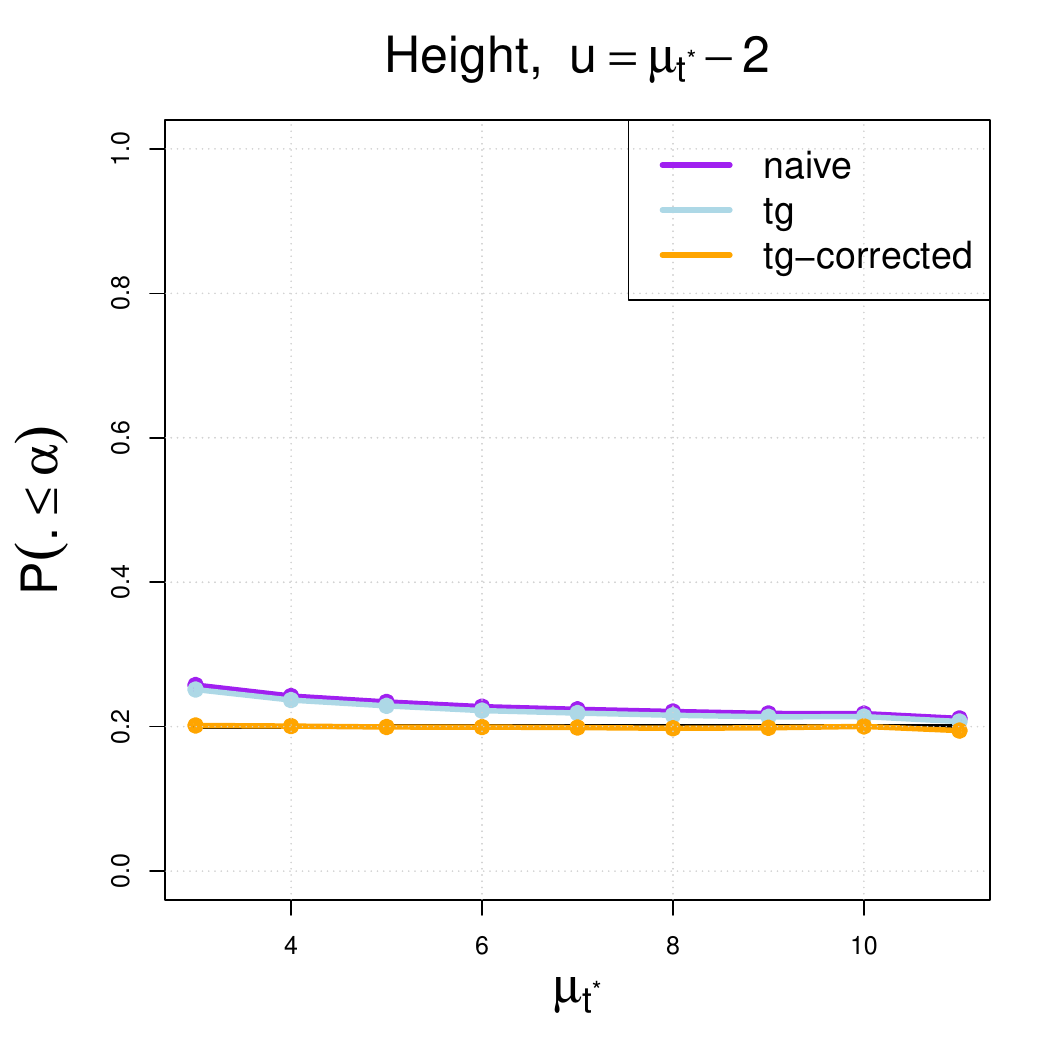}
			\label{fig:experiment-1-1}
		\end{subfigure}\hfill
		\begin{subfigure}[t]{0.32\linewidth}
			\centering
			\includegraphics[width=\linewidth]{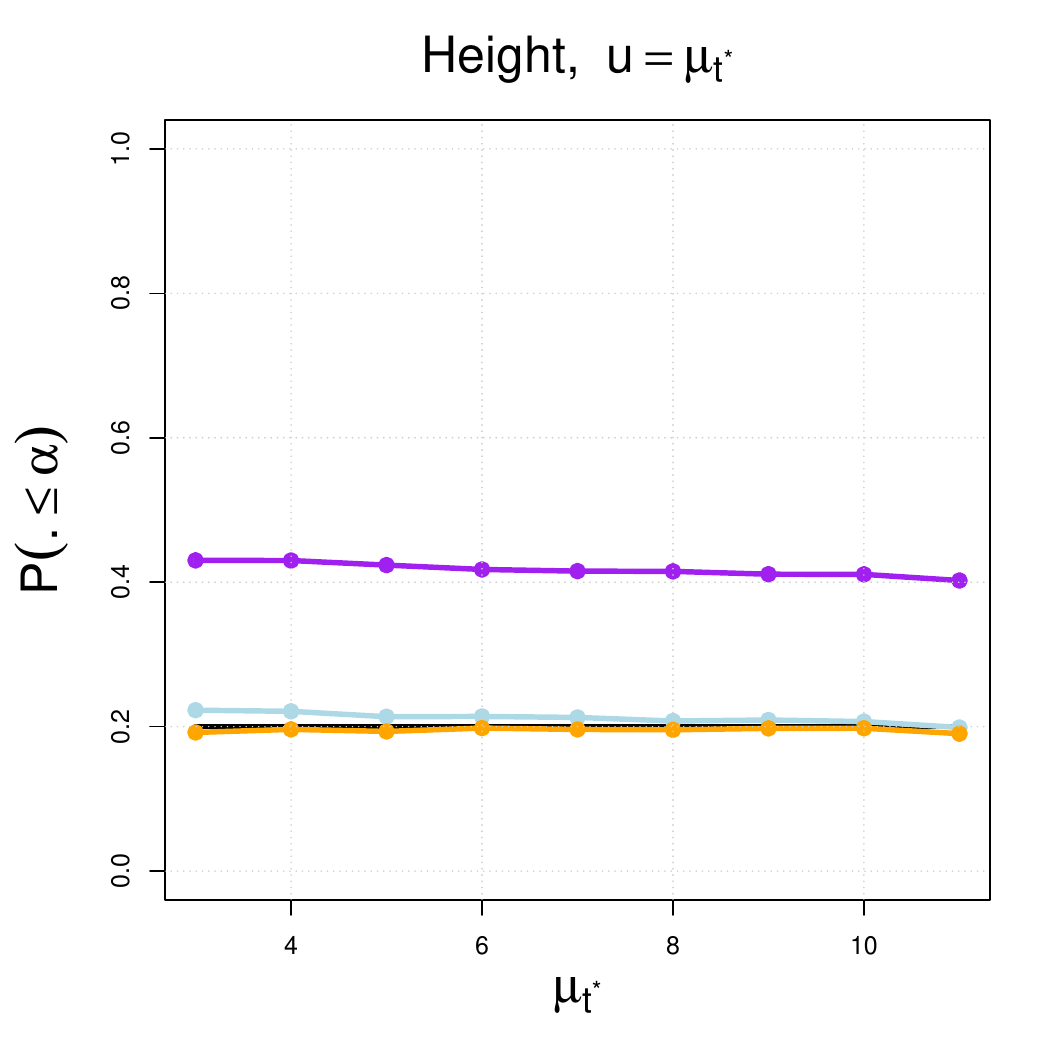}
			\label{fig:experiment-1-2}
		\end{subfigure}\hfill
		\begin{subfigure}[t]{0.32\linewidth}
			\centering
			\includegraphics[width=\linewidth]{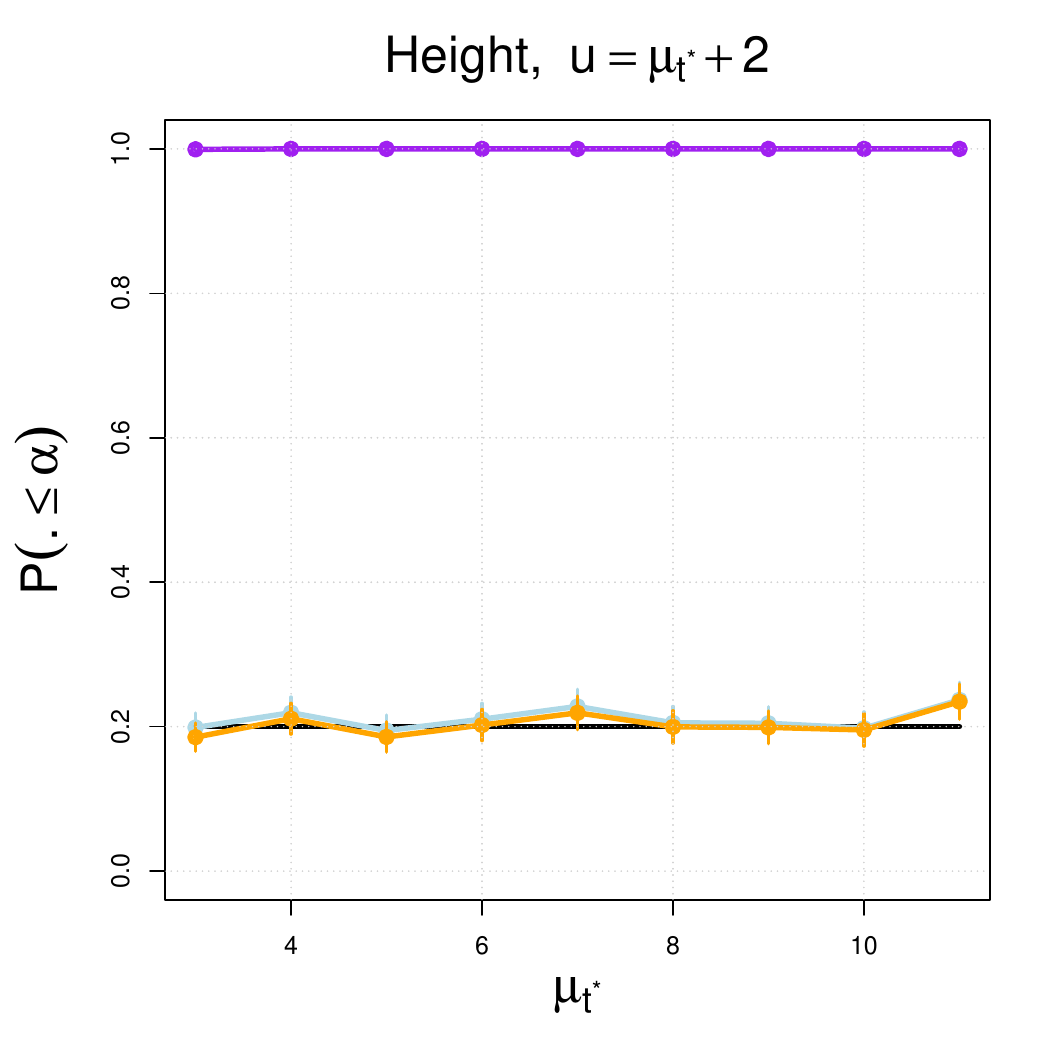}
			\label{fig:experiment-1-3}
		\end{subfigure}\hfill
		
		\caption{Distribution of candidate pivots for height (top row) and location (bottom). Different columns corresponds to different thresholds $u$. Details of experimental setup and takeaways are in the main text.}
		\label{fig:experiment-1-1d}
	\end{figure}
	
	\subsection{Conditional coverage}
	\label{subsec:experiment-2-additional}
	
	\paragraph{Wide signal.}
	Figure~\ref{fig:experiment-2-2dwide} presents the results of an experiment analogous to that of Section~\ref{subsec:experiment-2}, but where the signal is
	\begin{equation*}
	\mu_t = \mu_{0} \cdot \exp\Big(-\frac{\|t - 0\|^2}{2 \cdot 0.25^2}\Big),
	\end{equation*}
	i.e. it is wider than the covariance kernel. The relative performance of methods is qualitatively similar to those given in Section~\ref{subsec:experiment-2}, except the coverage of all methods is much poorer at the smaller signal strengths. This is because the curvature of the signal is much smaller. \citet{davenport2022confidence} discuss a similar phenomenon when conducting inference for location without selection, and offer a solution that calibrates inferences using Monte Carlo rather than asymptotic theory. It would be of interest to see whether this method could be incorporated into the post-selection inference framework considered in this paper, to improve performance in challenging low-curvature settings. 
	
	\paragraph{One-dimensional experiment.}
	Figure~\ref{fig:experiment-2-1d} presents the results of an experiment analogous to that of Section~\ref{subsec:experiment-2}, but with the one-dimensional data generating process defined in~\eqref{eqn:experiment-1d}.  The relative performance of methods is qualitatively similar to those given in Section~\ref{subsec:experiment-2}.
	
	\begin{figure}[htbp]
		\centering
		\begin{subfigure}[t]{0.32\linewidth}
			\centering
			\includegraphics[width=\linewidth]{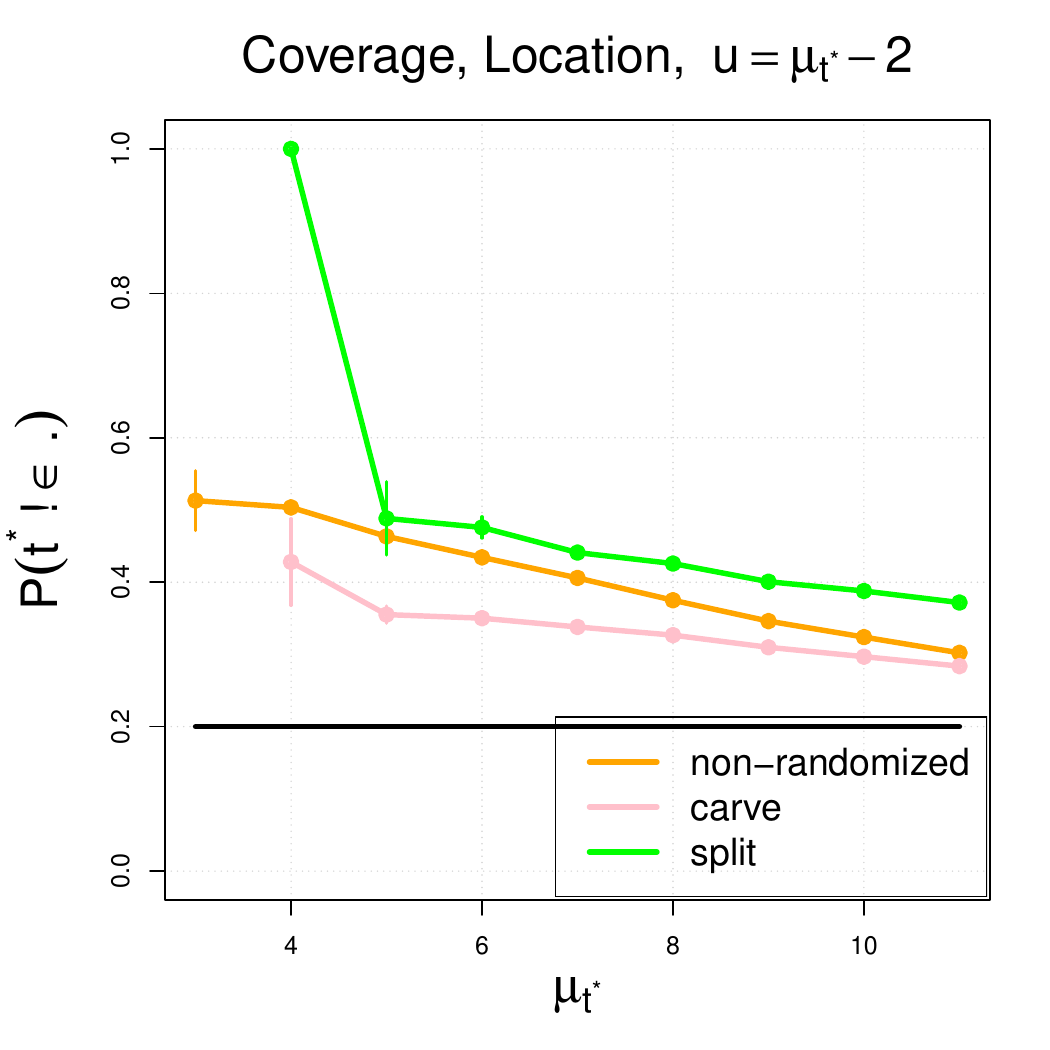}
		\end{subfigure}\hfill
		\begin{subfigure}[t]{0.32\linewidth}
			\centering
			\includegraphics[width=\linewidth]{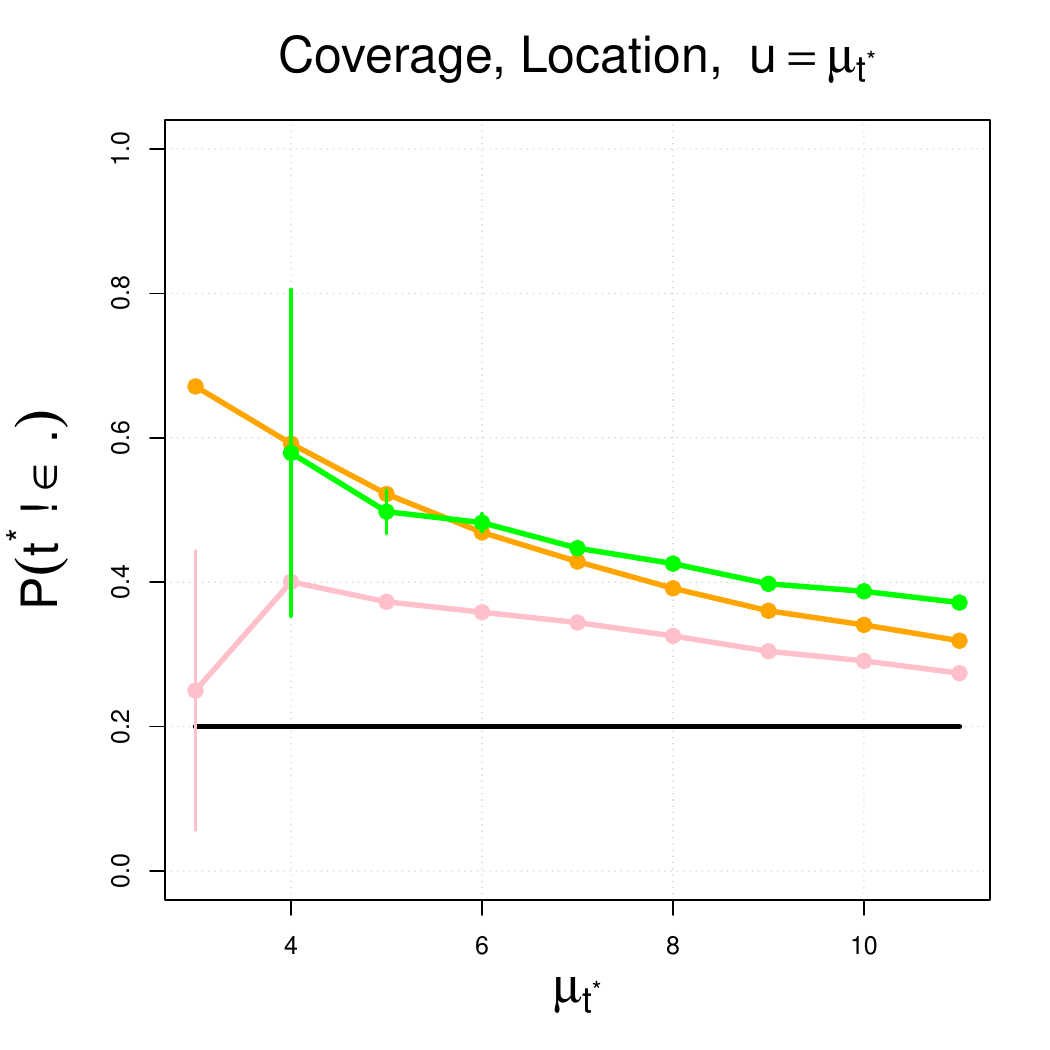}
		\end{subfigure}\hfill
		\begin{subfigure}[t]{0.32\linewidth}
			\centering
			\includegraphics[width=\linewidth]{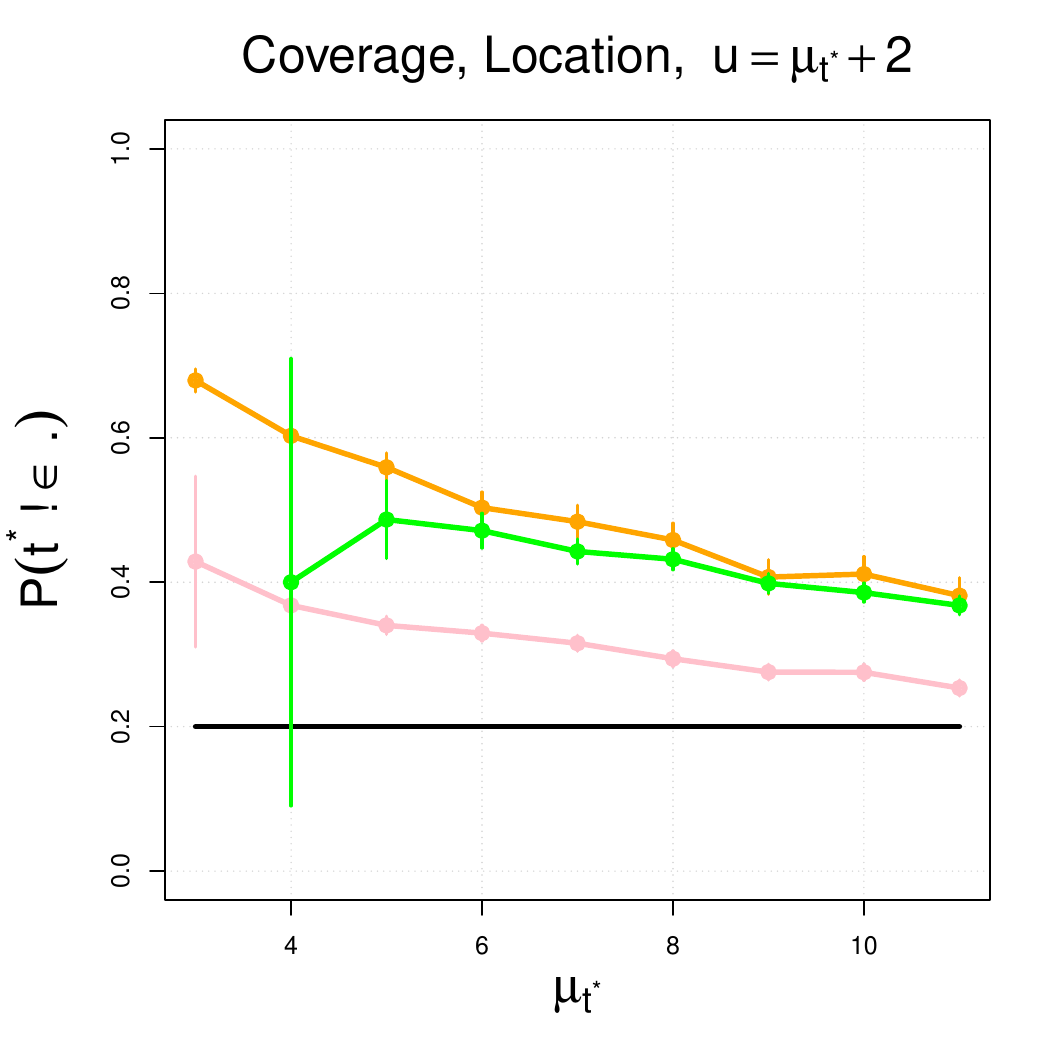}
		\end{subfigure}\hfill
		
		\begin{subfigure}[t]{0.32\linewidth}
			\centering
			\includegraphics[width=\linewidth]{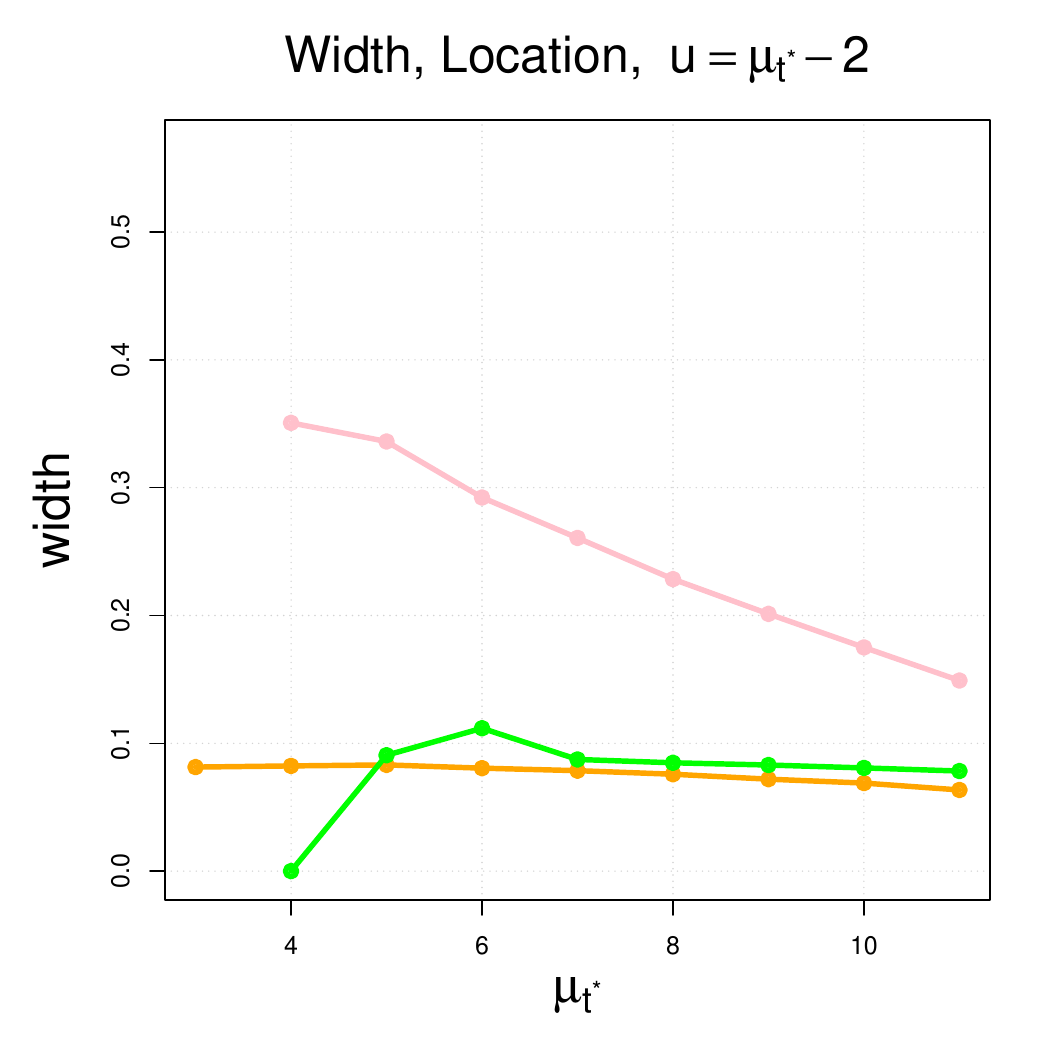}
		\end{subfigure}\hfill
		\begin{subfigure}[t]{0.32\linewidth}
			\centering
			\includegraphics[width=\linewidth]{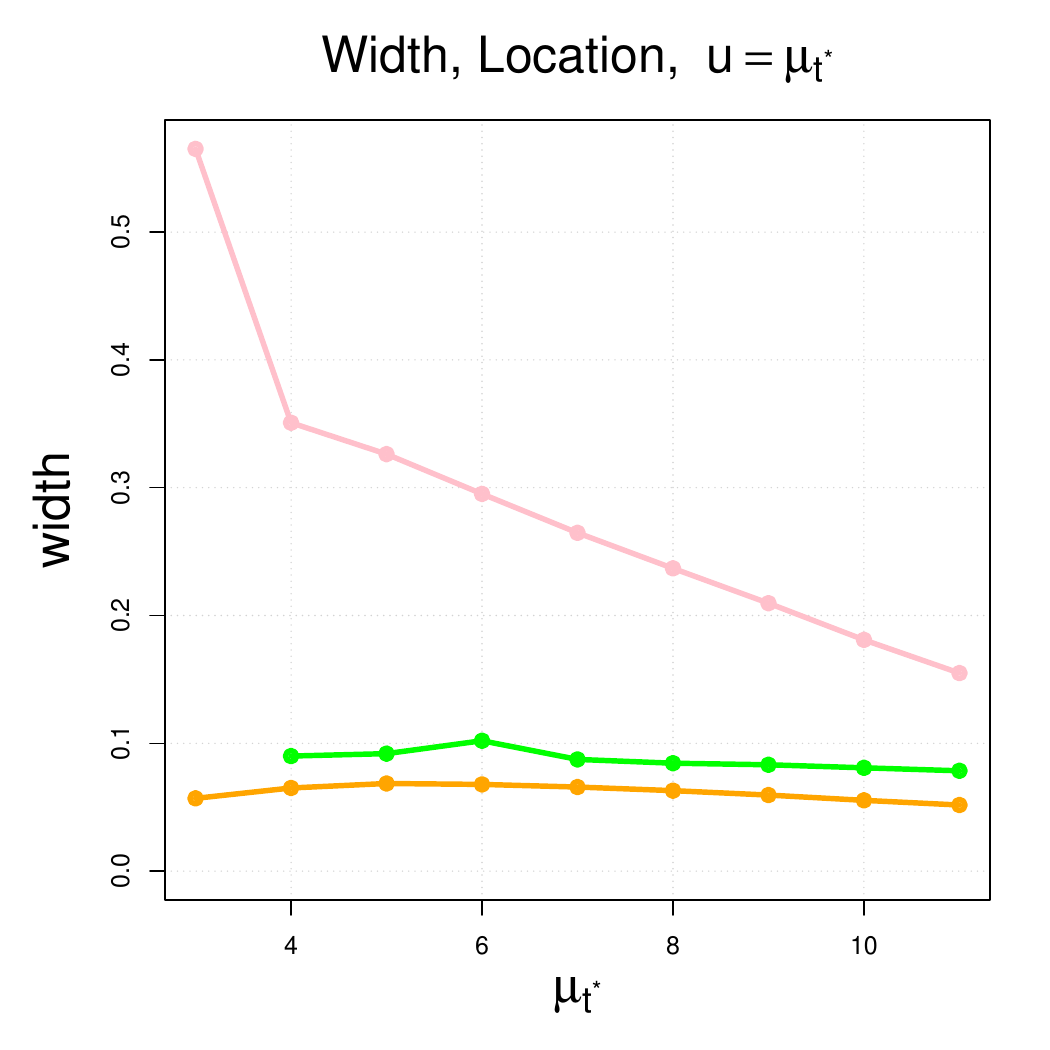}
		\end{subfigure}\hfill
		\begin{subfigure}[t]{0.32\linewidth}
			\centering
			\includegraphics[width=\linewidth]{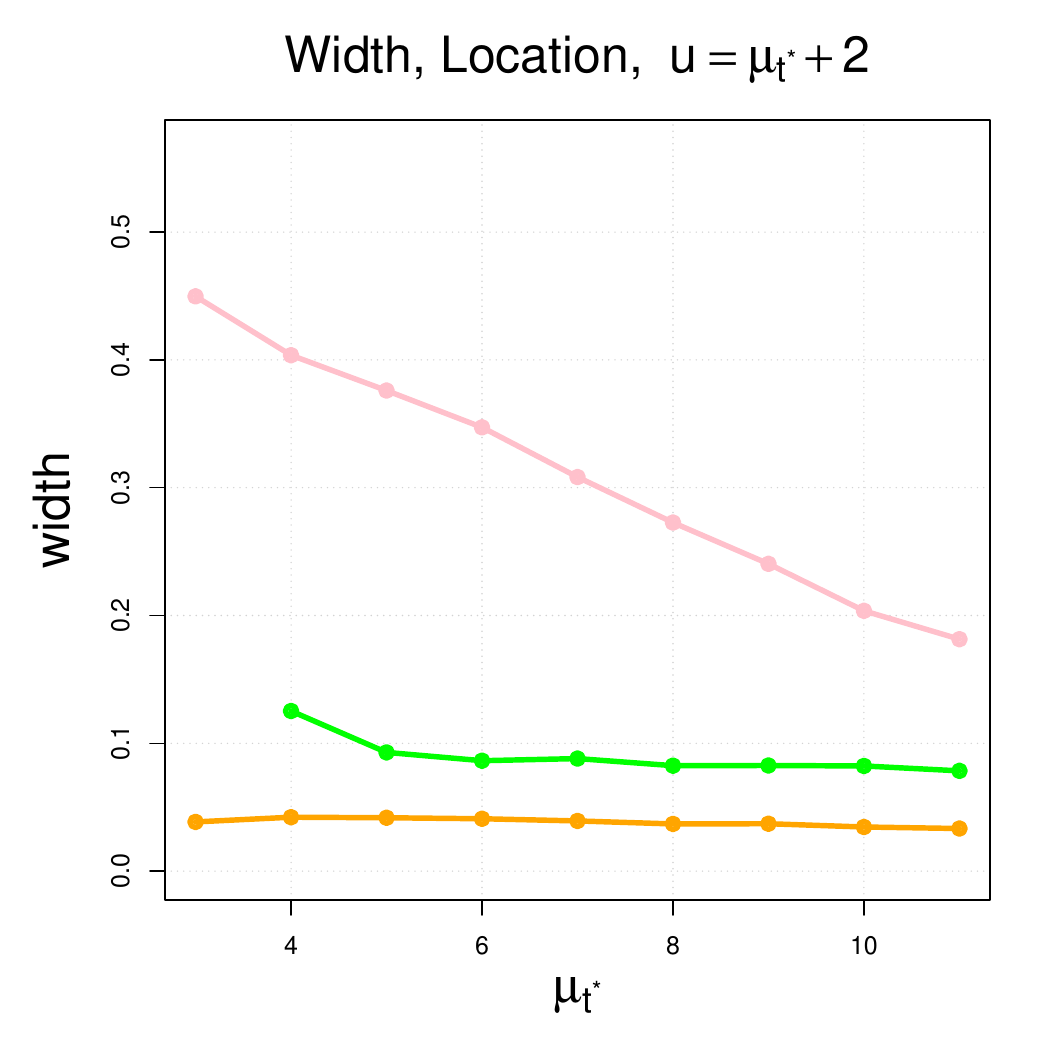}
		\end{subfigure}\hfill
		
		\begin{subfigure}[t]{0.32\linewidth}
			\centering
			\includegraphics[width=\linewidth]{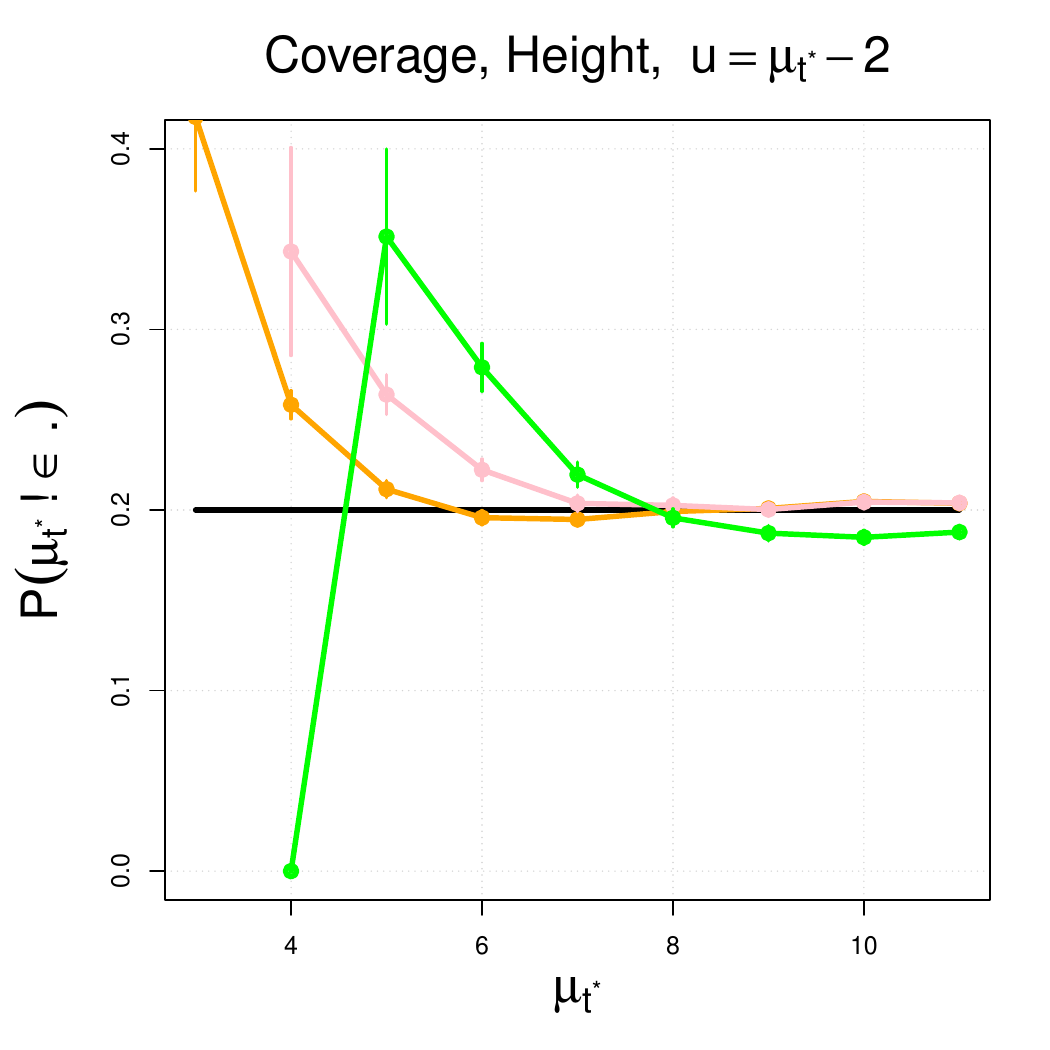}
		\end{subfigure}\hfill
		\begin{subfigure}[t]{0.32\linewidth}
			\centering
			\includegraphics[width=\linewidth]{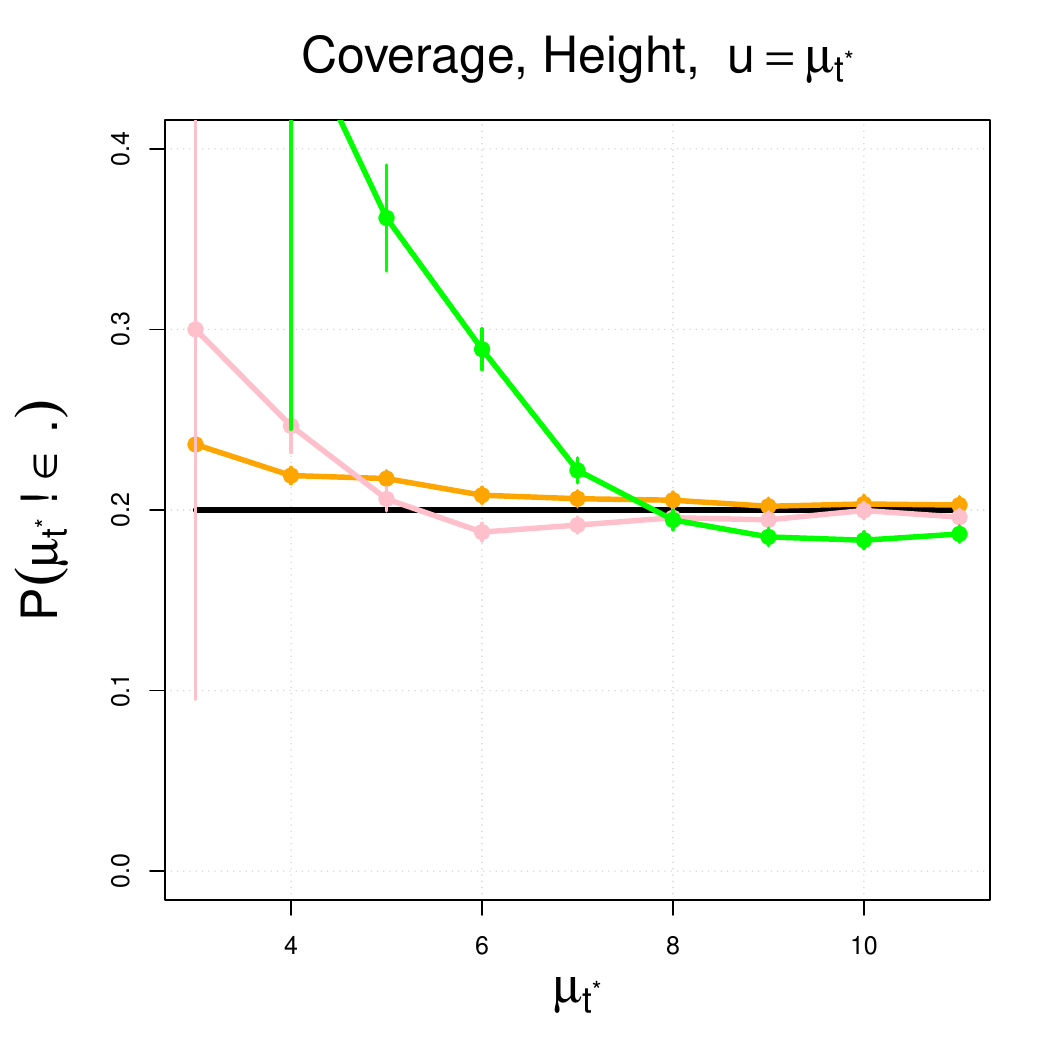}
		\end{subfigure}\hfill
		\begin{subfigure}[t]{0.32\linewidth}
			\centering
			\includegraphics[width=\linewidth]{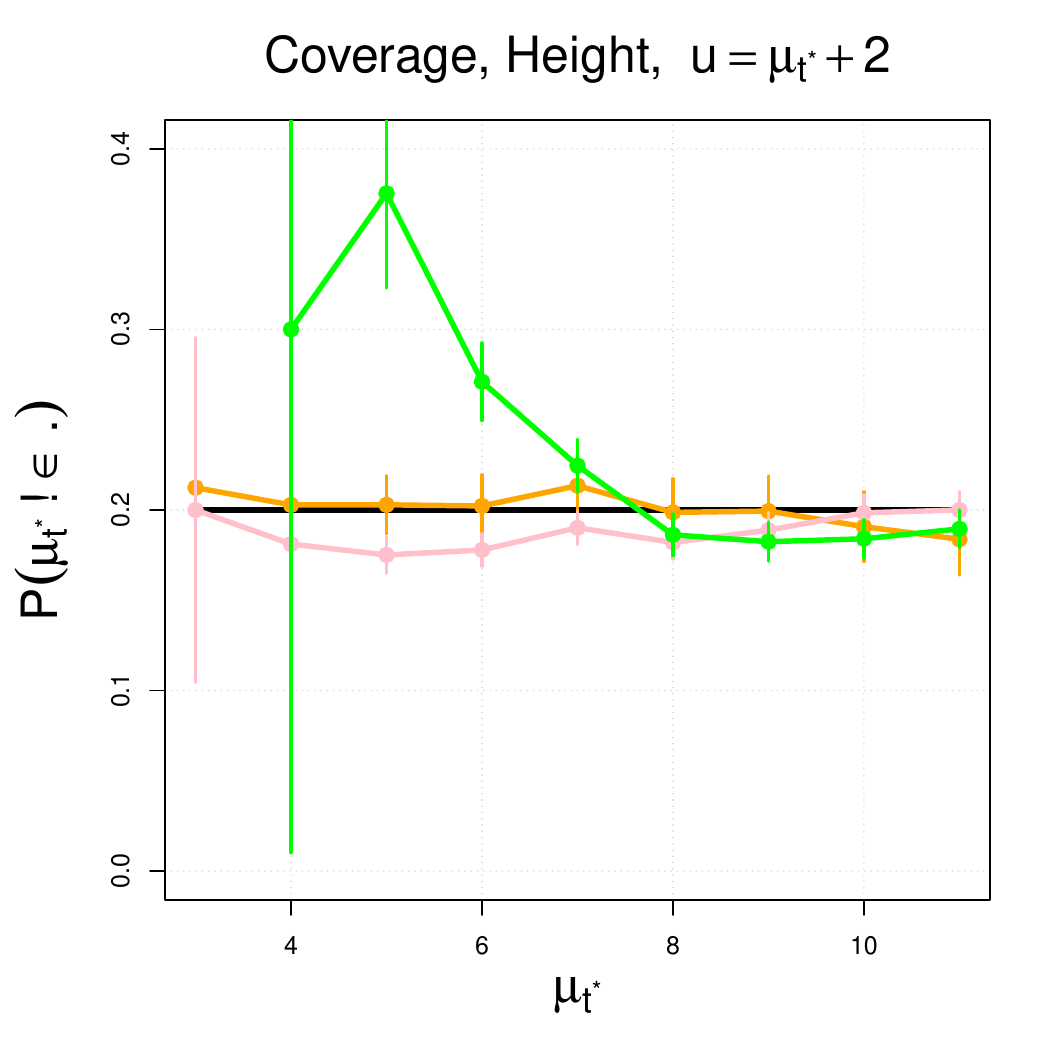}
		\end{subfigure}\hfill
		
		\begin{subfigure}[t]{0.32\linewidth}
			\centering
			\includegraphics[width=\linewidth]{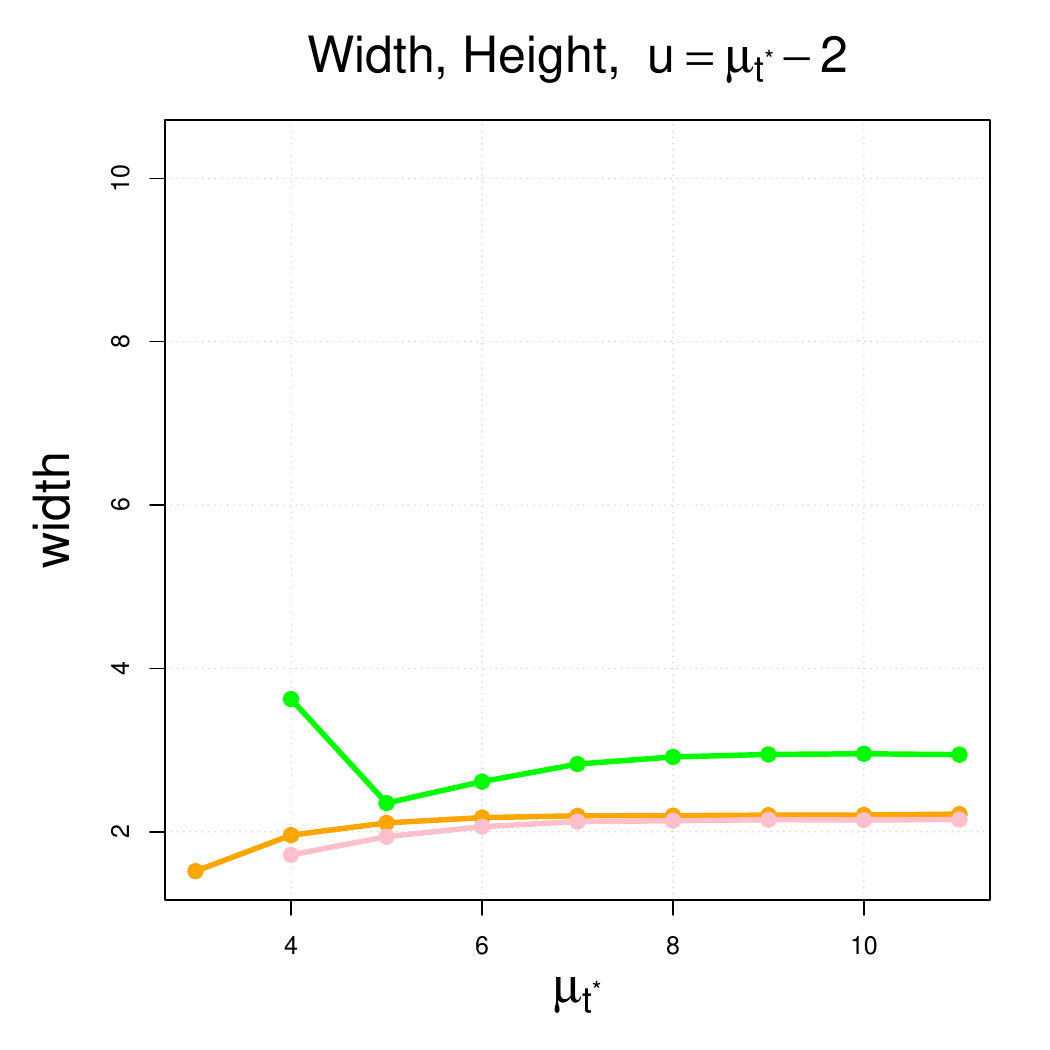}
		\end{subfigure}\hfill
		\begin{subfigure}[t]{0.32\linewidth}
			\centering
			\includegraphics[width=\linewidth]{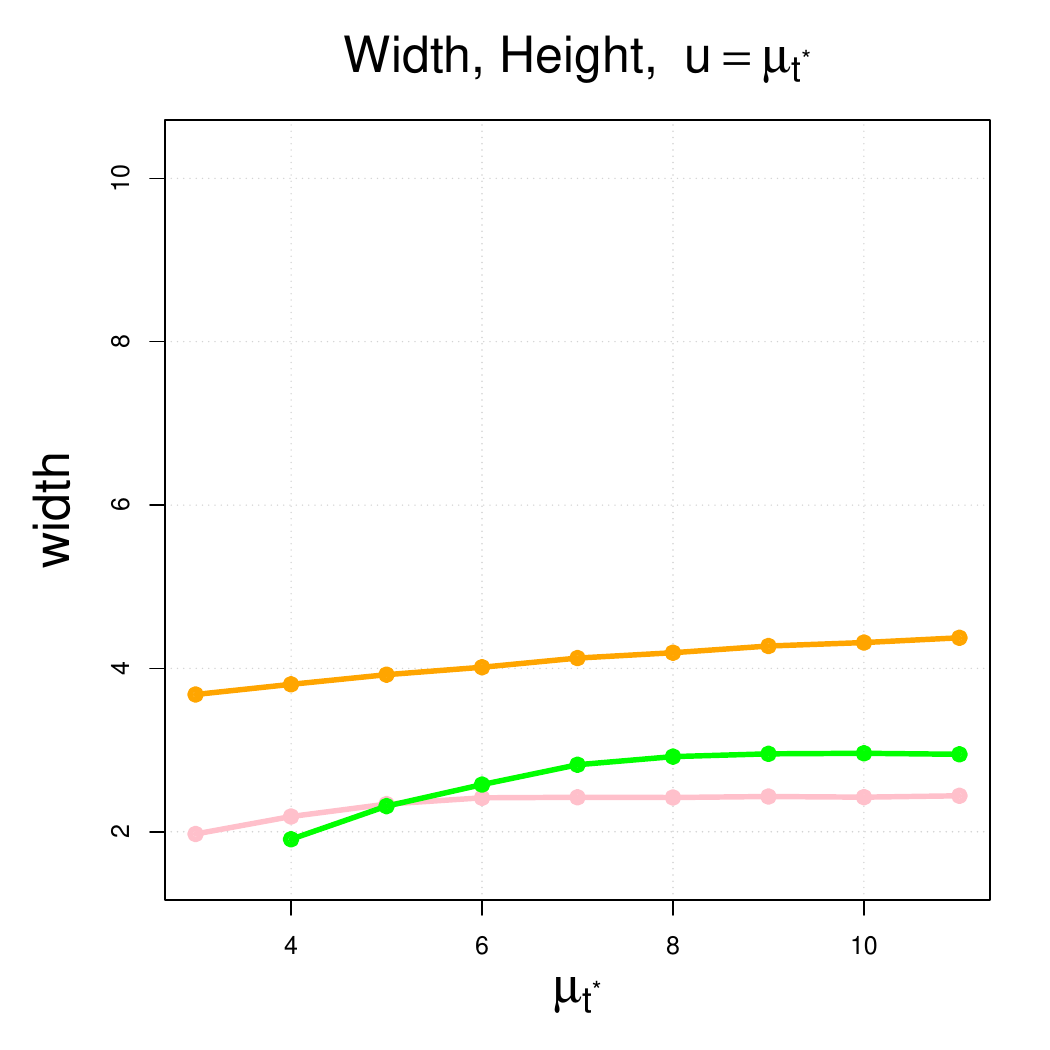}
		\end{subfigure}\hfill
		\begin{subfigure}[t]{0.32\linewidth}
			\centering
			\includegraphics[width=\linewidth]{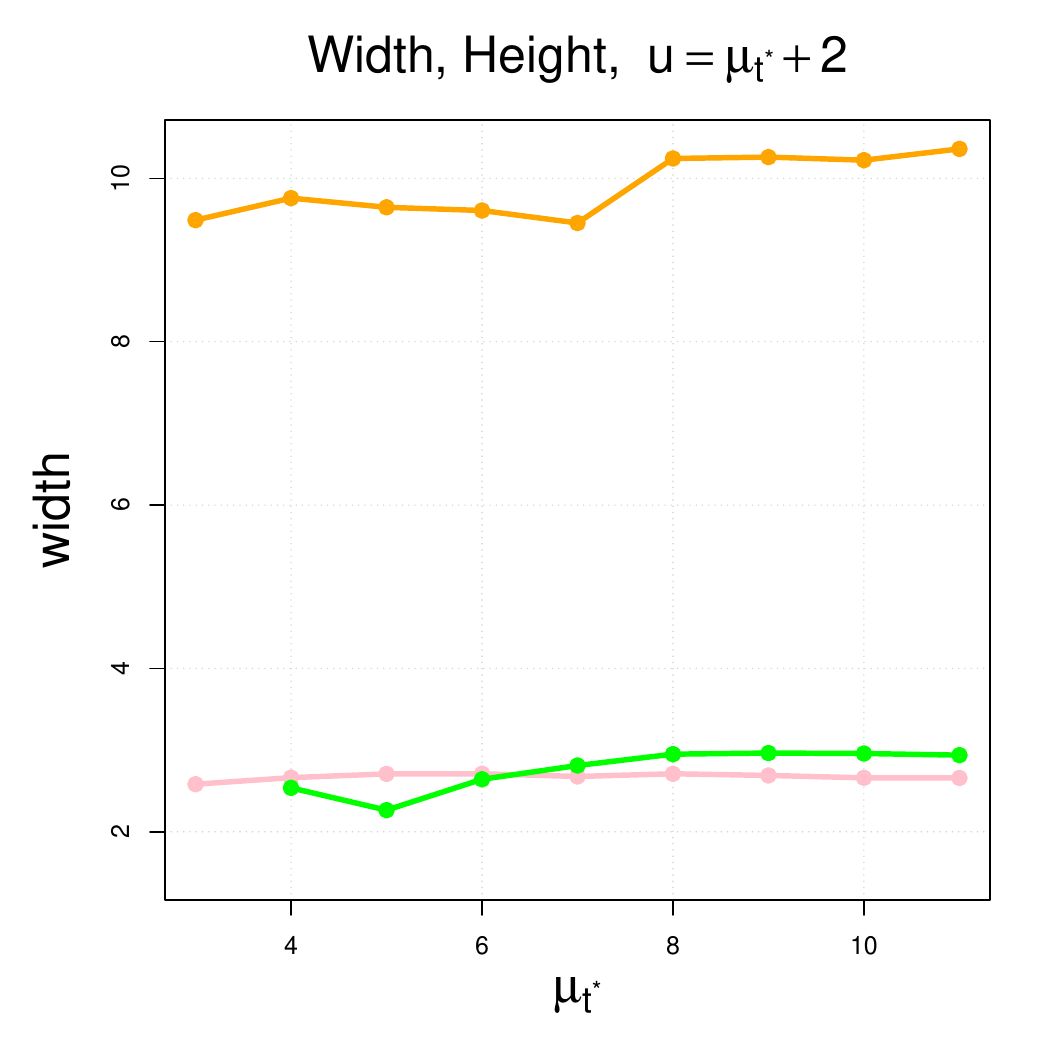}
		\end{subfigure}\hfill
		
		\caption{Comparing miscoverage and width of non-randomized, carve, and split methods for peak inference, in the wide signal experiment described in Section~\ref{subsec:experiment-2-additional}. Top two rows correspond to inference for location, bottom two rows to inference for height.}
		\label{fig:experiment-2-2dwide}
	\end{figure}
	
	\begin{figure}[htbp]
		\centering
		\begin{subfigure}[t]{0.32\linewidth}
			\centering
			\includegraphics[width=\linewidth]{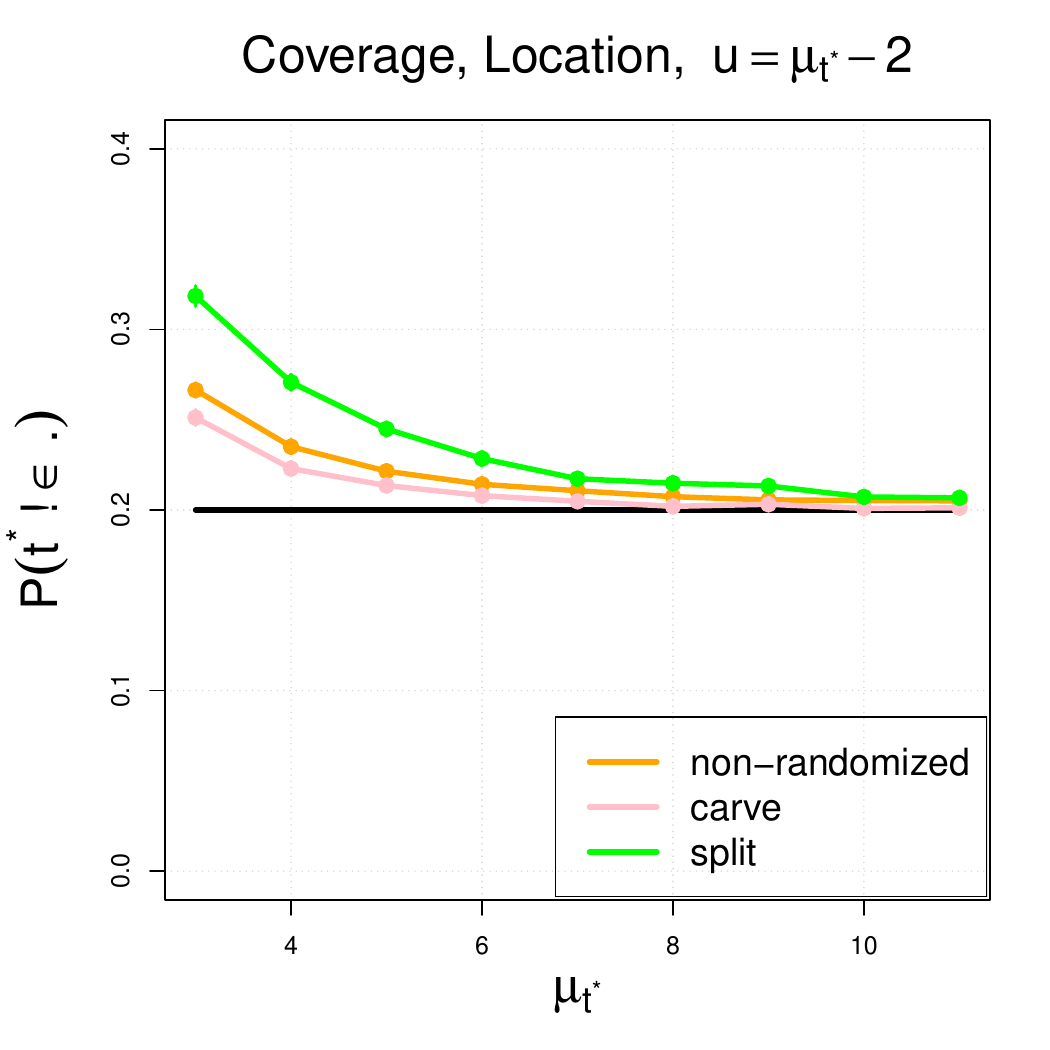}
		\end{subfigure}\hfill
		\begin{subfigure}[t]{0.32\linewidth}
			\centering
			\includegraphics[width=\linewidth]{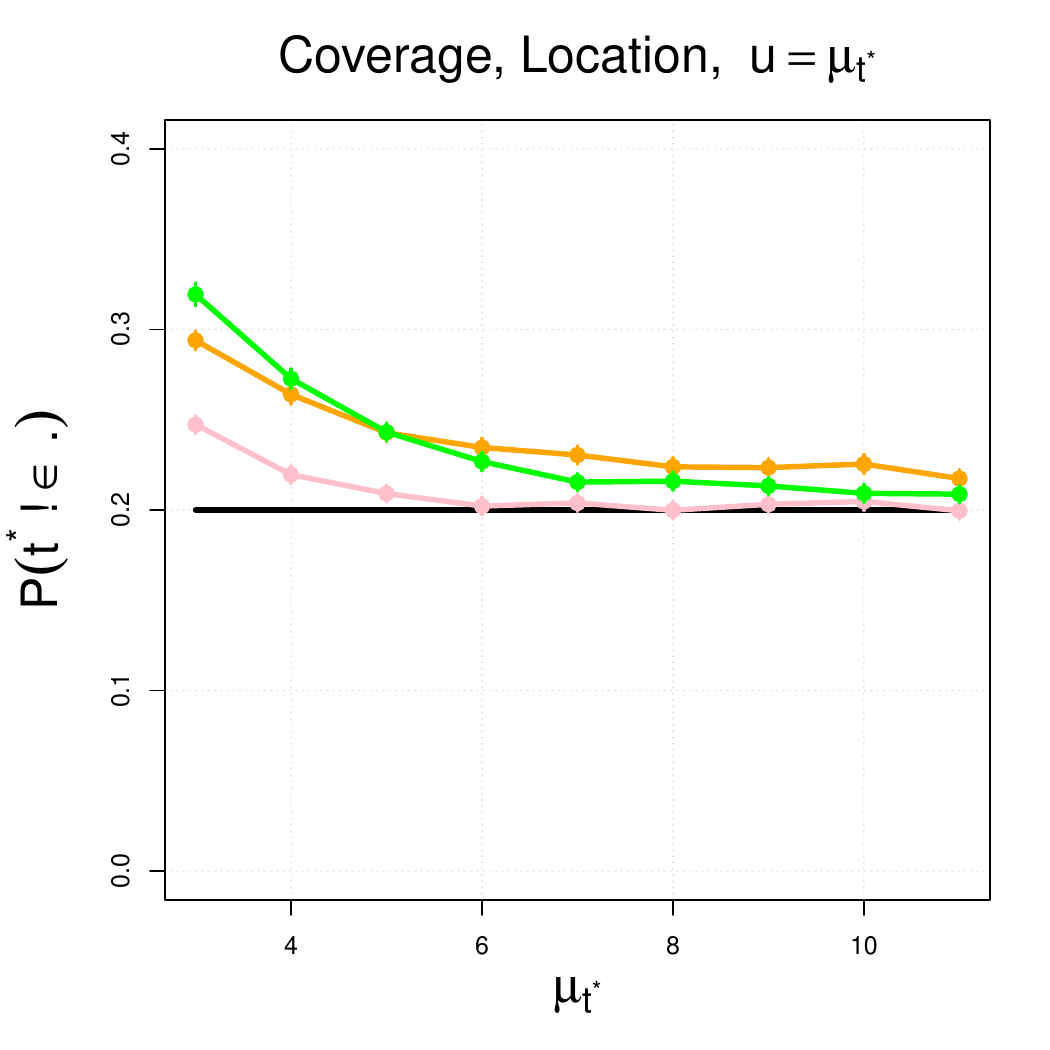}
		\end{subfigure}\hfill
		\begin{subfigure}[t]{0.32\linewidth}
			\centering
			\includegraphics[width=\linewidth]{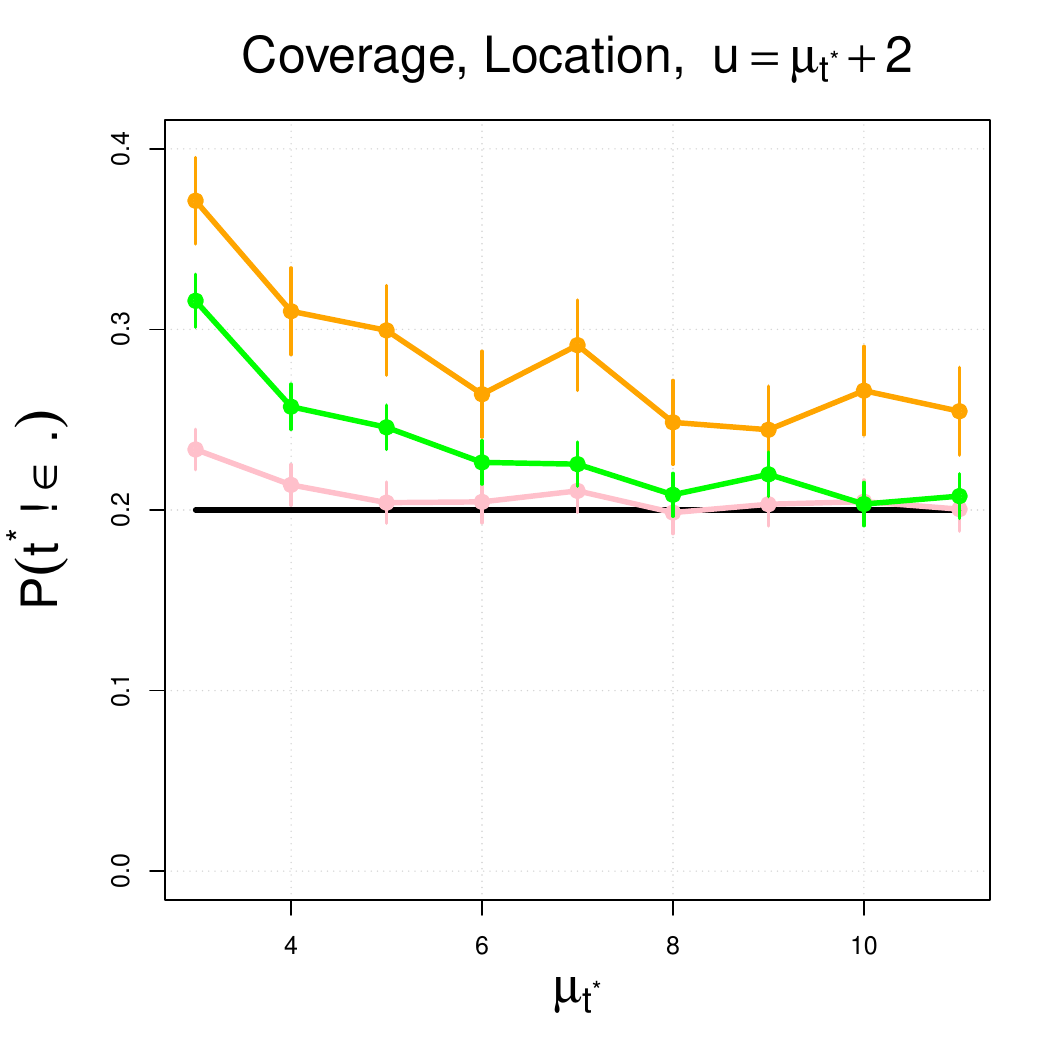}
		\end{subfigure}\hfill
		
		\begin{subfigure}[t]{0.32\linewidth}
			\centering
			\includegraphics[width=\linewidth]{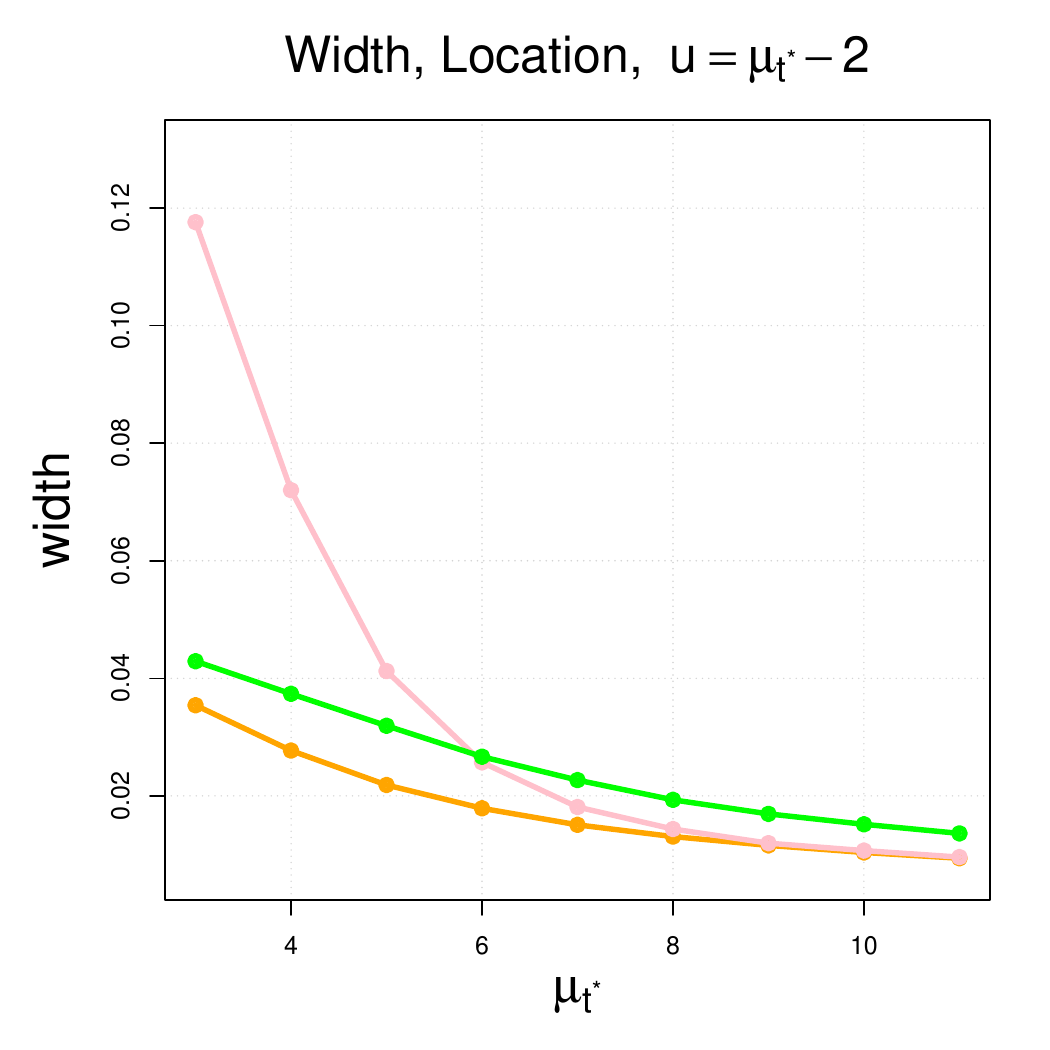}
		\end{subfigure}\hfill
		\begin{subfigure}[t]{0.32\linewidth}
			\centering
			\includegraphics[width=\linewidth]{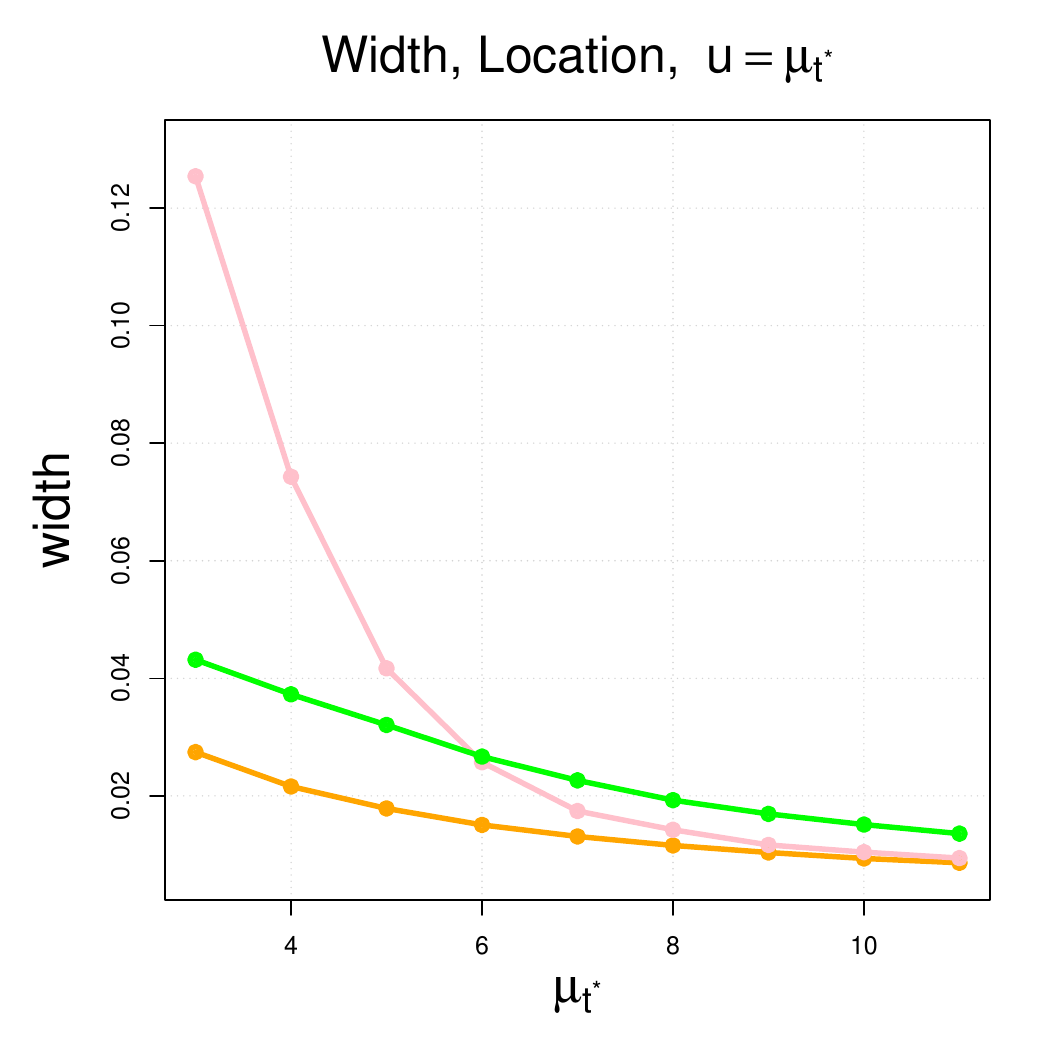}
		\end{subfigure}\hfill
		\begin{subfigure}[t]{0.32\linewidth}
			\centering
			\includegraphics[width=\linewidth]{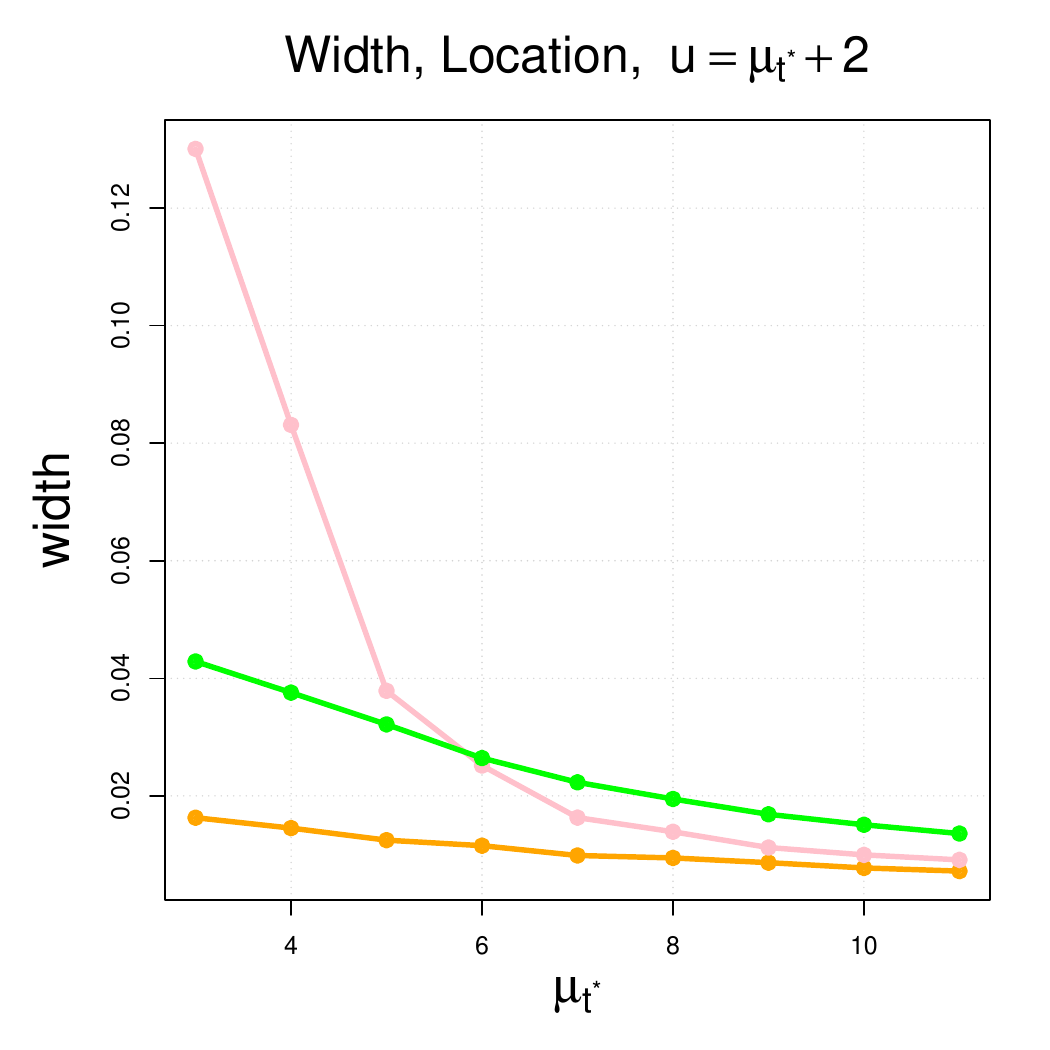}
		\end{subfigure}\hfill
		
		\begin{subfigure}[t]{0.32\linewidth}
			\centering
			\includegraphics[width=\linewidth]{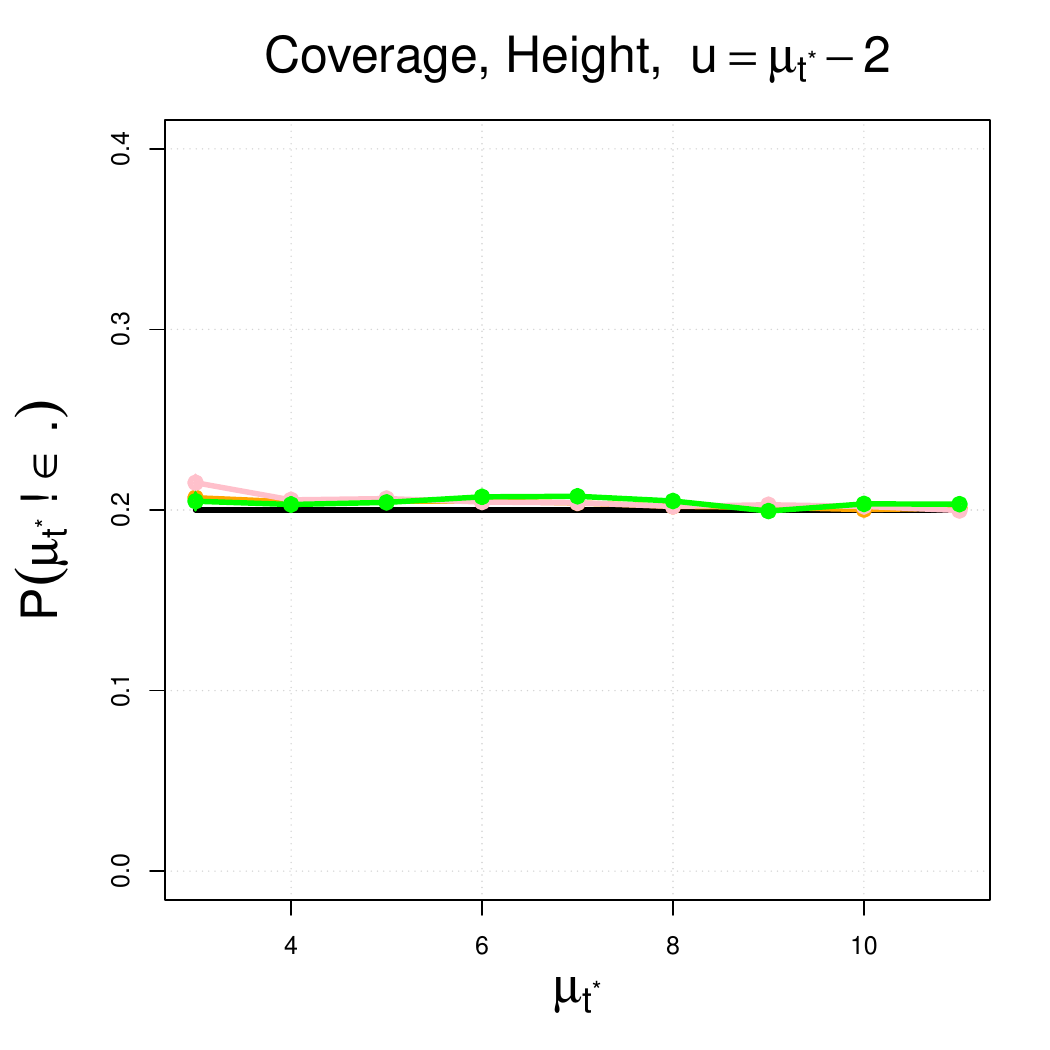}
		\end{subfigure}\hfill
		\begin{subfigure}[t]{0.32\linewidth}
			\centering
			\includegraphics[width=\linewidth]{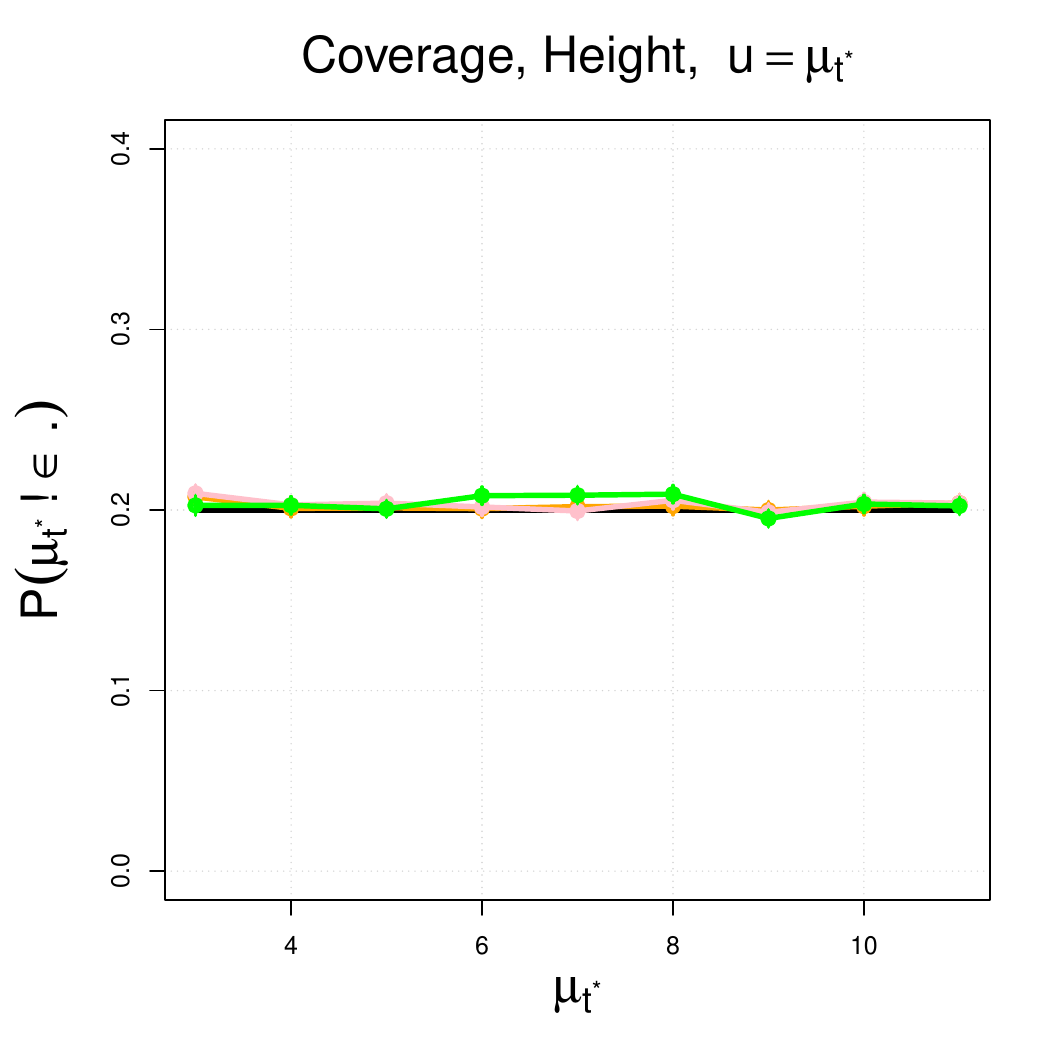}
		\end{subfigure}\hfill
		\begin{subfigure}[t]{0.32\linewidth}
			\centering
			\includegraphics[width=\linewidth]{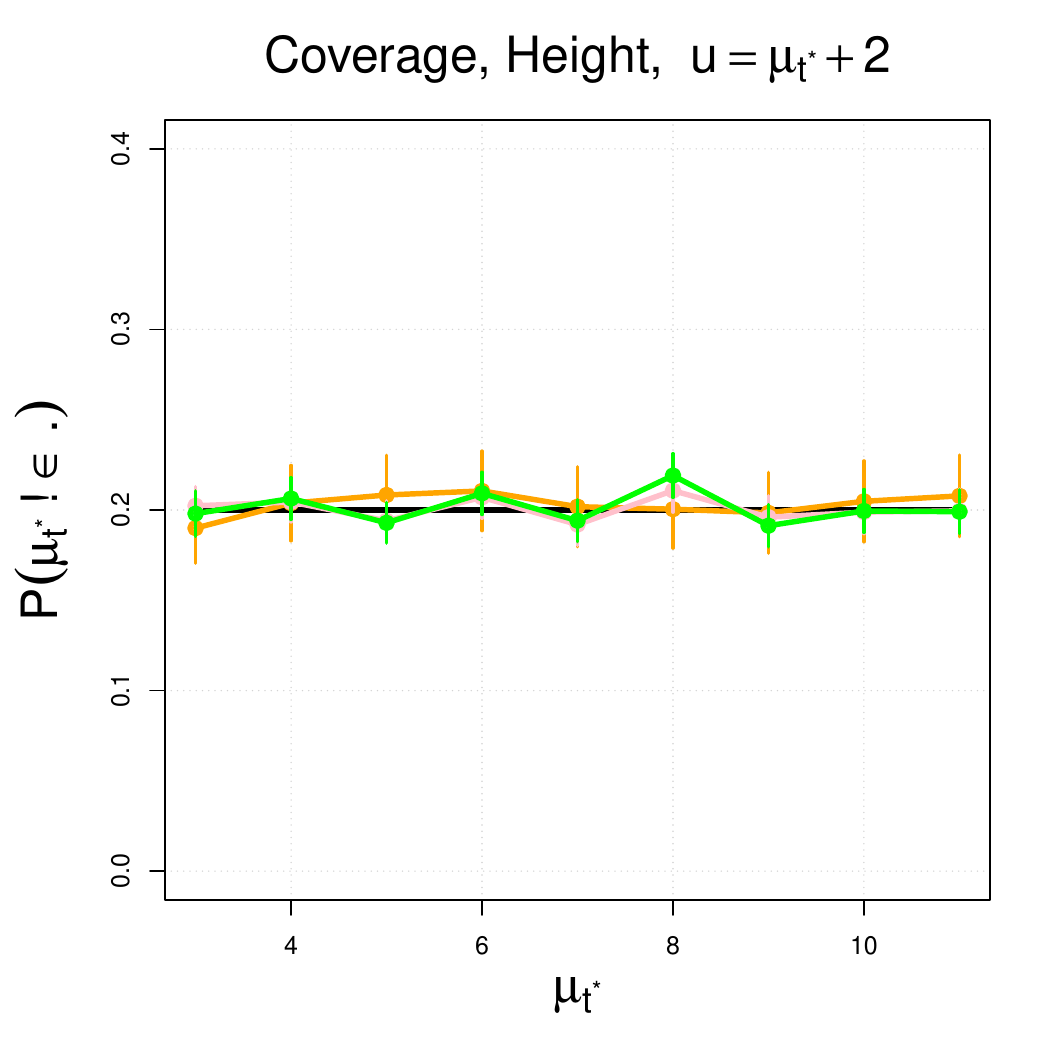}
		\end{subfigure}\hfill
		
		\begin{subfigure}[t]{0.32\linewidth}
			\centering
			\includegraphics[width=\linewidth]{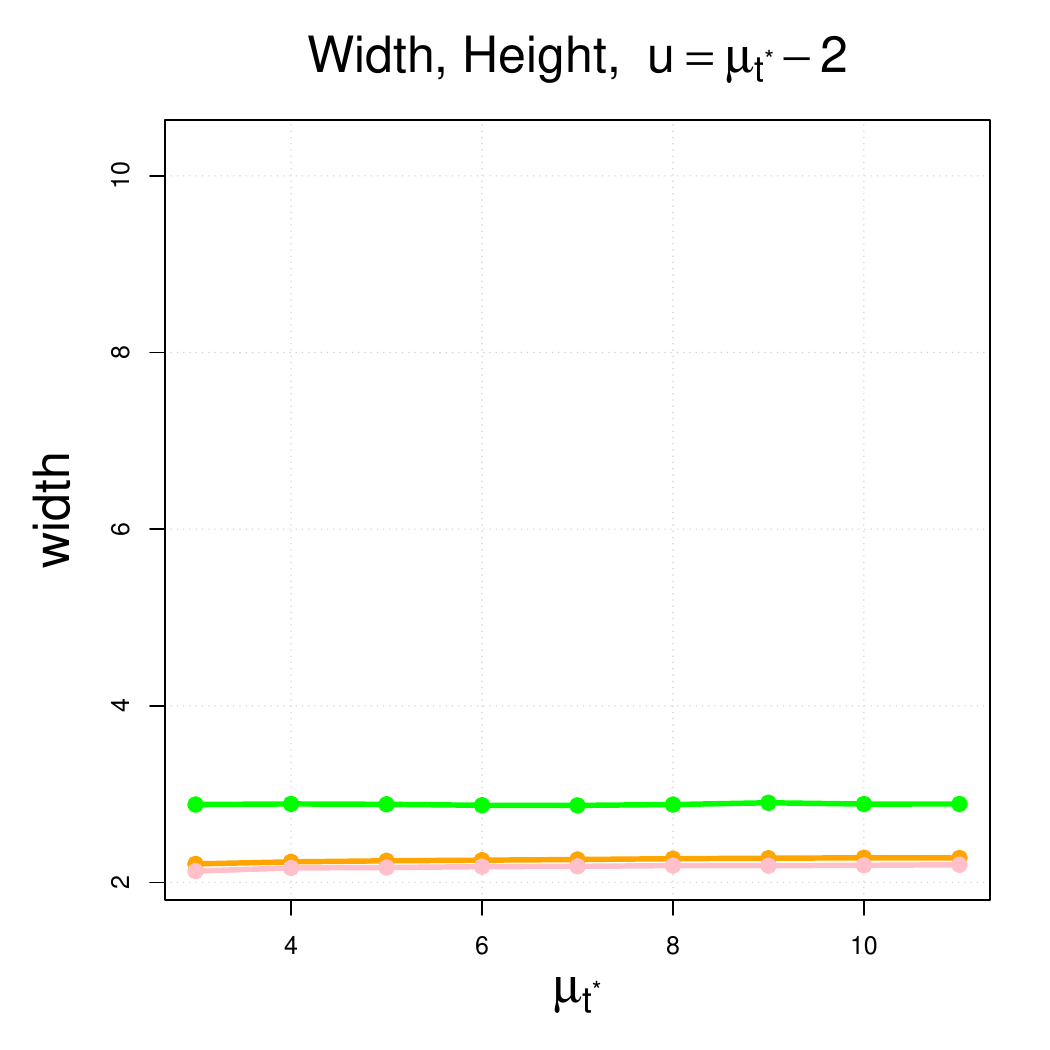}
		\end{subfigure}\hfill
		\begin{subfigure}[t]{0.32\linewidth}
			\centering
			\includegraphics[width=\linewidth]{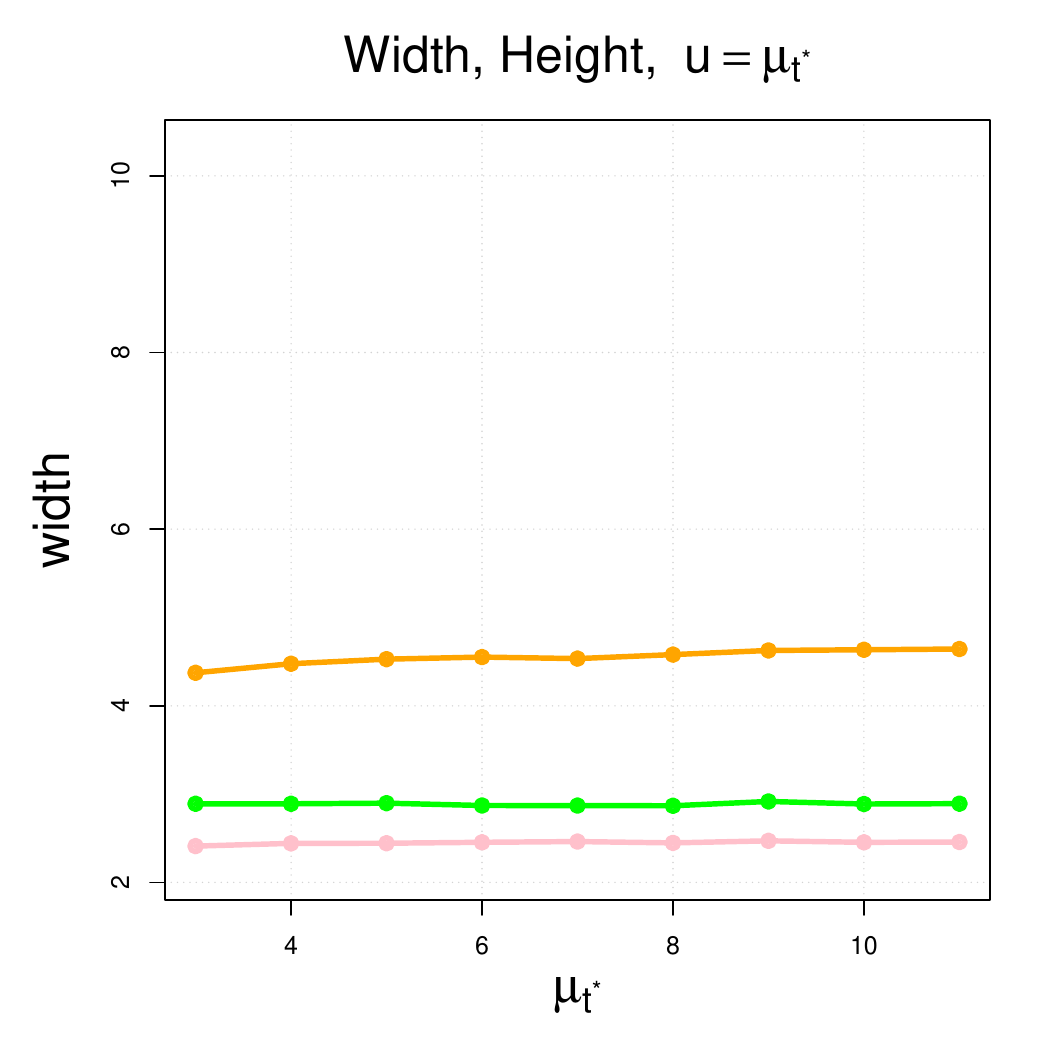}
		\end{subfigure}\hfill
		\begin{subfigure}[t]{0.32\linewidth}
			\centering
			\includegraphics[width=\linewidth]{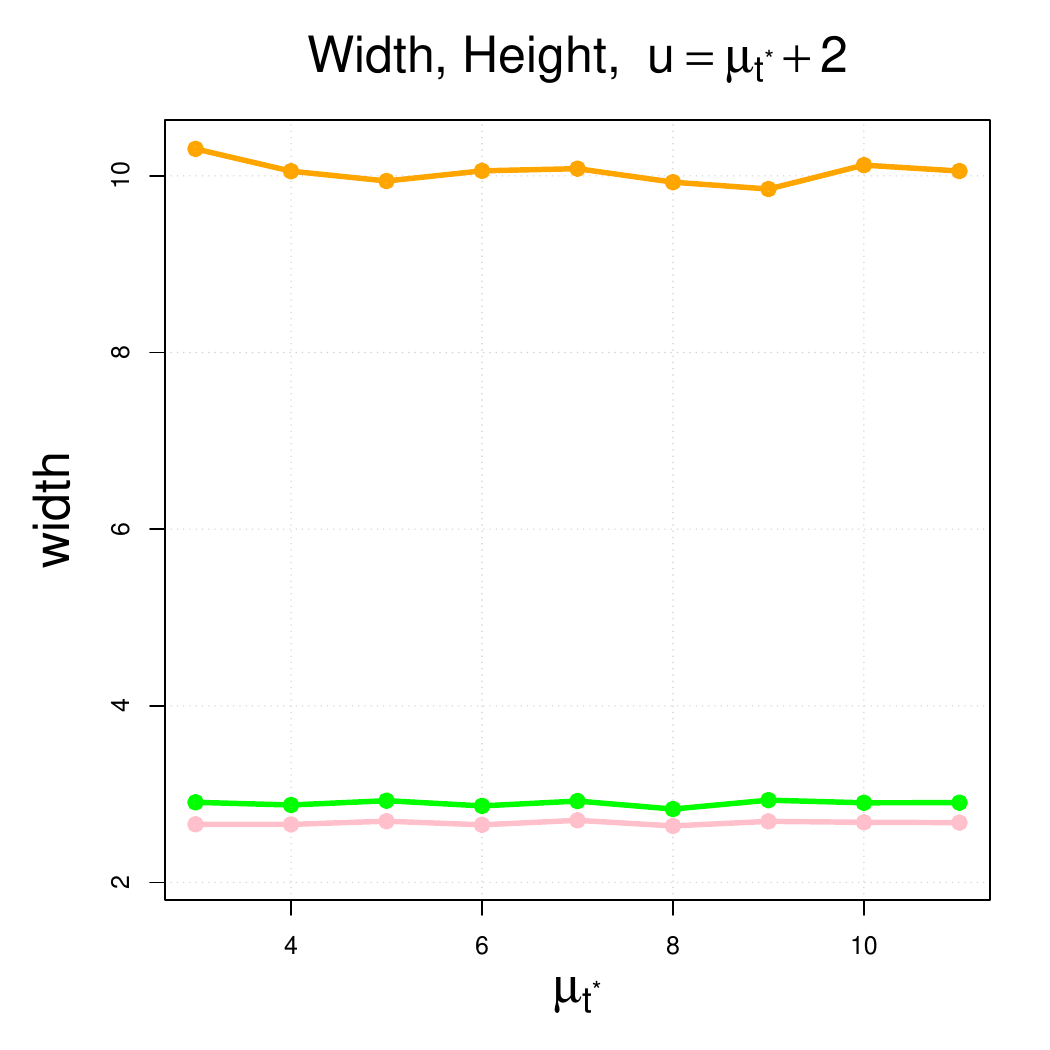}
		\end{subfigure}\hfill
		
		\caption{Comparing miscoverage and width of non-randomized, carve, and split methods for peak inference, in the one-dimensional experiment described in Section~\ref{subsec:experiment-2-additional}. Top two rows correspond to inference for location, bottom two rows to inference for height.}
		\label{fig:experiment-2-1d}
	\end{figure}

\end{document}